\documentclass[11pt]{article}
\ifdefined\pdfminorversion\pdfminorversion=7\fi

\usepackage[margin=1in]{geometry}
\usepackage[T1]{fontenc}
\usepackage[utf8]{inputenc}

\usepackage{amsmath,amssymb,amsthm,mathtools}
\usepackage{newpxtext,newpxmath}
\usepackage{microtype}

\usepackage{booktabs}
\usepackage{array,tabularx}
\usepackage{enumitem}
\usepackage{graphicx}
\usepackage{float}
\usepackage{flafter}
\usepackage{placeins}
\usepackage{caption}
\usepackage{subcaption}
\usepackage{needspace}

\usepackage{xcolor}
\usepackage{soul}
\usepackage{verbatim}

\usepackage[
    backend=biber,
    style=authoryear,
    natbib=true,
    maxcitenames=3,
    maxbibnames=99,
    giveninits=true,
    uniquename=init
]{biblatex}

\usepackage[
    colorlinks=true,
    linkcolor=blue!55!black,
    citecolor=blue!55!black,
    urlcolor=blue!55!black
]{hyperref}

\usepackage[nameinlink,noabbrev]{cleveref}

\allowdisplaybreaks
\numberwithin{equation}{section}

\theoremstyle{plain}
\newtheorem{theorem}{Theorem}
\newtheorem{proposition}{Proposition}
\newtheorem{lemma}{Lemma}
\newtheorem{corollary}{Corollary}

\theoremstyle{definition}
\newtheorem{assumption}{Assumption}

\theoremstyle{remark}

\crefname{assumption}{Assumption}{Assumptions}
\Crefname{assumption}{Assumption}{Assumptions}

\crefname{proposition}{Proposition}{Propositions}
\Crefname{proposition}{Proposition}{Propositions}

\crefname{lemma}{Lemma}{Lemmas}
\Crefname{lemma}{Lemma}{Lemmas}

\newcommand{\R}{\mathbb{R}}
\newcommand{\E}{\mathbb{E}}
\newcommand{\dd}{\mathop{}\!\mathrm{d}}
\newcommand{\cx}{\mathrm{cx}}

\DeclareMathOperator{\CS}{CS}
\DeclareMathOperator{\TS}{TS}
\DeclareMathOperator{\supp}{supp}
\DeclareMathOperator{\cav}{cav}
\DeclareMathOperator*{\argmax}{arg\,max}

\hypersetup{
    pdftitle={Shaping the Menu: Information Design under Screening},
    pdfauthor={Panagiotis Kyriazis},
    pdfsubject={Information design, nonlinear pricing, profit weights, and market segmentation}
}
\title{Shaping the Menu: Information Design under Screening\thanks{
I am grateful to Zeinab Aboutalebi, \"Ozlem Bedre-Defolie, Giacomo Calzolari, Pietro Dall'Ara, Piotr Dworczak, Vasundhara Mallick, Matthew Mitchell, and Alessandro Pavan for valuable comments and suggestions. I also thank the participants of the 4th Italian Junior Workshop in Economic Theory and the 2026 European Economic Theory Conference, as well as seminar audiences at Athens University of Economics and Business and the Micro Working Group at the EUI. During the course of this research, I received financial support from the Deutsche Forschungsgemeinschaft (DFG, German Research Foundation) under project number 504870592, which I gratefully acknowledge.}}
\author{Panagiotis Kyriazis\thanks{European University Institute. Email: \href{mailto:panagiotis.kyriazis@eui.eu}{panagiotis.kyriazis@eui.eu}.}}
\date{\today}

\begin{document}
\maketitle
\begin{center}
\href{https://www.dropbox.com/scl/fi/87lwyddbzlso2rxkezpzt/Shaping_the_Menu__Information_Design_under_Screening.pdf?rlkey=eezc62t5e6r55qg72zknwdpmh&st=9mo4ixxf&raw=1}{\color{blue}{\Large Link to Latest Version}}
\end{center}

\begin{abstract}
How should consumers learn when a seller chooses a menu of product qualities and prices? We study information design when the seller observes the learning policy, but not consumers' private signals, before choosing its menu. Under cost-curvature conditions, we characterize optimal learning when consumer surplus receives greater weight than profit. Even with two valuations, every optimal posterior distribution combines an interval of posterior types with an efficiently served atom at the highest valuation. The induced menu features a continuum of differentiated products. Strikingly, optimal learning can also induce a lower posterior atom, separated from this interval by a support gap. Buyers in this atom either purchase an isolated low-quality product or are excluded, even when trade with every buyer creates gains. Both configurations can be optimal for the same prior and welfare weight. Served buyers with distinct posterior types are never bunched at the same quality. General finite priors can yield several separating regions and trading atoms. We also characterize the optimal joint design of learning
and valuation-dependent segmentation, showing, among other results, that every posterior type with positive gains from trade must be served. Joint design can raise both consumer
surplus and profit relative to buyer-optimal private learning
alone. Even random assignment to segments with different learning policies and separate menus can give consumers more surplus than any single market yielding the same profit. Additional price discrimination can therefore benefit consumers, even when privacy restrictions require segmentation to reveal nothing about valuations to the seller.
\end{abstract}

\medskip
\noindent\textbf{Keywords:} Information design, buyer-optimal learning, nonlinear pricing, screening, price discrimination, market segmentation, consumer welfare.

\smallskip
\noindent\textbf{JEL classification:} D42, D82, D83, D86, L12, L15.

\newpage
\section{Introduction}
\label{sec:introduction}
Consumers can benefit from a product they never buy. A basic
software plan, for example, gives customers with high willingness to pay an alternative
and limits what the supplier can charge them for a premium plan. The supplier chooses
the basic plan's features and price, taking into account both the
customers who buy it and those who might switch to it. Information provided to consumers about their needs changes their
willingness to pay for different tiers and, with it, the profitability
of these offers. While this information can improve consumers'
choices from a given menu, a supplier that anticipates their
learning can respond by adjusting both its products and their
prices. How should information be designed to benefit consumers
when it also changes the alternatives they face?

As we will show, information provision designed to benefit consumers can induce a continuous range of premium plans while either sustaining an isolated basic plan or leaving some buyers unserved. The continuous range extends up to efficient quality for buyers who learn that they value the software most. When a basic plan is offered, a gap in quality separates it from this range. The basic plan limits what the supplier can charge for premium plans, giving the information designer a reason to sustain demand for it. Learning can change buyers' assessments of their needs, but must preserve their average willingness to pay to be feasible. More optimistic beliefs among some buyers must be balanced by more pessimistic beliefs among others. A policy that makes buyers with low willingness to pay even more pessimistic allows more optimism elsewhere, but may lead the supplier to stop serving them. Such exclusion can be necessary to maximize consumer
surplus, even when trade with every buyer creates gains. With one common menu for all buyers, wider access and greater consumer surplus can therefore be competing objectives of
information design.


An intermediary that also controls access to offers can change this trade-off by allowing the seller to price discriminate across different market segments as well as within them. Coordinating this segmentation with consumer learning
can raise both consumer surplus and profit. Software intermediaries
already help buyers evaluate products and provide tools for
controlling access to offers. AppDirect, for example, helps businesses assess
software against their needs, while its marketplace technology
supports product eligibility and prices that differ across
customer segments.\footnote{See AppDirect's
\href{https://helpcenter.appdirect.com/hc/en-us/articles/28251322311831-Start-with-the-Right-Expertise}{``Start with the Right Expertise''},
\href{https://support.appdirect.com/hc/en-us/articles/20852023784211-Can-I-hide-or-make-certain-products-invisible-to-certain-users-or-companies}{documentation on product segmentation},
and \href{https://www.appdirect.com/pricing}{marketplace pricing features}.}
Motivated by these capabilities, we let the intermediary assign buyers to publicly observed segments and choose a private information policy for each. The supplier chooses a separate menu for
each segment, and buyers can purchase only from their own
segment's menu. We show that even random
assignment to segments with different learning policies can
give consumers more surplus than any single-market policy
meeting the same profit requirement. Segment membership
then reveals nothing about underlying valuations, leaving
scope for gains to consumers when privacy restrictions rule out such
disclosure. Allowing assignment to depend on
buyers' underlying needs can go further: coordinating the information revealed to the supplier with private learning can raise both consumer surplus and profit relative to
consumer-optimal learning with one common menu. Segmentation also changes which
buyers can use a basic plan as an alternative to premium
products, and therefore changes the benefits of different
learning policies. Even optimal segmentation of the
distribution of beliefs generated by consumer-optimal
learning under a common menu can fall short of the joint
optimum. Attaining that optimum can require changing what
buyers learn as well as which menus they can access.

We begin with a single market, in which an information designer publicly commits to an information policy before a seller chooses a profit-maximizing menu of qualities and prices.
Buyers are initially uncertain about their valuation of quality and receive private signals before choosing a product or declining trade. The seller observes the policy but not the signals. Production costs are strictly convex and  are assumed to satisfy a curvature condition, which holds for instance when marginal cost is convex. The prior distribution of valuations has finite support. Because utility is linear in valuation, the buyer's posterior mean is her type
in the screening problem. We therefore work directly with the induced distribution of
posterior means. Such a distribution is feasible, formally Bayes plausible, if and only if it is a mean-preserving contraction of the prior, that is, it has the same mean and is no more dispersed than the
prior. The designer's goal is to maximize a weighted sum of consumer
surplus and profit. Our main structural characterizations concern
weights that strictly favor consumers. Our analysis of attainable
payoff combinations also considers weights favoring profit.

Even with only two underlying valuations, every optimum for
weights strictly favoring consumers induces a nondegenerate
interval of posterior means over which quality is strictly
increasing, followed by an atom at the highest valuation.
Buyers in this atom learn their valuation with certainty
and receive efficient quality. Contracts offered to lower
posterior types generate rents for higher types, giving a
consumer-oriented designer a reason to induce this
heterogeneity. Strikingly, below the separating interval, the optimal
posterior distribution may have an excluded atom at the
lowest valuation or a trading atom served with a basic
product of inefficiently low quality. A support gap separates
either lower atom from the separating interval. The two
lower atoms cannot coexist in one optimum, and an optimum
may have neither. Distinct trading posterior types are
never bunched at the same quality. Both lower atom
configurations can be optimal for the same prior and
welfare weight, so the optimal posterior distribution
need not be unique.

Participation need not increase with the prior mean.
Holding fixed two valuations with positive gains from trade,
a higher probability of the high valuation can move the
consumer-optimal outcome from full participation to partial
exclusion. Under quadratic cost, we characterize a switching
prior mean at which both participation regimes maximize
consumer surplus. As the mean increases through this
threshold, participation falls discontinuously and profit
and total surplus jump upward, while maximal consumer
surplus remains continuous.


A binary prior admits an even sharper characterization when its lowest
valuation is \emph{payoff neutral}, meaning that it has no gains
from trade. The designer can use an excluded atom at this valuation
to satisfy the prior-mean constraint without changing the contracts
offered to a given distribution of trading buyers. Under a separate curvature condition, the optimal posterior
distribution is unique for every weight strictly favoring
consumers and has no lower trading atom.
All trading buyers belong
to the separating region and the efficiently served top atom.
In this case, we characterize the
greatest consumer surplus attainable under every feasible
minimum-profit requirement and the unique posterior distribution
attaining it. High prior means yield
a second simplification under our main curvature condition.
For each fixed weight strictly favoring consumers,
sufficiently high means yield a unique optimal posterior
distribution with full participation and no lower atom,
even when the lowest valuation has gains from trade.

With more than two underlying valuations, Bayes plausibility
imposes additional restrictions on the dispersion of posterior
means. The optimal posterior distribution, however, retains a local binary structure.
Within an interval between adjacent underlying valuations,
redistributing posterior means preserves feasibility as long as
the share of buyers in that interval and their average
willingness to pay remain unchanged. The arguments used
for binary priors therefore continue to apply when confined to
such an interval, but need not remain feasible when they redistribute
posterior means across intervals. Even if the overall mean is
preserved, the resulting distribution may be too dispersed to
arise from the prior. We show that optimal learning can induce several separating
regions and trading atoms. A separating region can end at
an atom, with separation resuming above the support gap
that follows it.
Exclusion can also coexist with a lower trading atom.
 As in the binary case,
different trading posterior types are never bunched together and receive different qualities,
and the uppermost separating region ends at an efficiently served
group of buyers who learn that they have the highest valuation.

The consumer surplus and profit pairs attainable through
private learning form the single-market surplus set.
Its Pareto frontier consists of pairs for which neither
payoff can increase without reducing the other.
Under valuation-independent segmentation, every group
faces the same prior, and aggregate payoffs average
single-market payoffs across groups. This public
randomization therefore convexifies the surplus set.
We show that the single-market frontier can be nonconcave,
even with a binary prior, so independent segmentation can
generate new Pareto-efficient combinations. Two segments
suffice to attain every efficient pair available under
independent segmentation. Averaging cannot, however,
improve the maximum of any fixed weighted sum of consumer
surplus and profit. In particular, independent segmentation
cannot raise maximal consumer surplus.


Valuation-dependent segmentation changes the prior within
each public group and can raise the weighted optimum.
For a binary prior whose lowest valuation has gains from
trade, two groups suffice, and every joint optimum serves
everyone under our curvature condition and weights strictly
favoring consumers. When segmentation strictly improves the designer's objective,
the two segments have the same conditional distribution of
posterior means above the lowest valuation and the same
quality schedule for these types. Only one segment also
contains an atom at the lowest valuation, served by a basic low-quality product. Buyers with higher
posterior means in that segment pay less for the same
qualities than their counterparts in the other segment,
because the basic product gives them an additional
alternative. If buyer-optimal private learning has a lower
trading atom strictly above the lowest valuation, a joint
policy with two segments strictly raises both consumer
surplus and profit relative to the private-learning optimum.


For a general finite prior, the common-quality property of
\citet{bergemannheumannwang2026} persists under our curvature
condition and for weights strictly favoring consumers.
Buyers with the same posterior mean receive the same quality
across groups, although payments can differ. Their analysis
takes the aggregate distribution of privately known valuations
as given. Here, learning determines the distribution of
posterior means screened by the seller, subject to information
feasibility. Even if learning is first optimized to maximize consumer surplus in a single market,
optimally segmenting the resulting posterior distribution can
fall short of the joint optimum. Attaining that optimum can
require changing the posterior distribution itself.

Joint optimality also restricts what buyers learn. Every
posterior atom lies at a primitive valuation and fully reveals
it to the buyer, though not necessarily to the seller. These
informed buyers can still receive distorted quality because
their contracts generate rents for higher types. Every
posterior type with positive gains from trade is served.
With $N$ primitive values, some joint optimum uses at most
$N$ segments, and some priors require all $N$ to maximize
consumer surplus even under quadratic cost.

We also identify cases in which segmentation is redundant when private learning is optimal. For binary priors with a payoff-neutral lowest valuation,
the separate curvature condition described above also implies
that both independent and valuation-dependent segmentation
leave the entire single-market Pareto frontier unchanged. Even when the lowest
valuation has gains from trade, sufficiently high prior means
can make segmentation redundant. For each fixed weight strictly
favoring consumers, once the binary prior mean is sufficiently
high, no segmentation can improve the designer's payoff beyond
what optimal private learning achieves with a single market.

Methodologically, the main difficulty in our analysis is that the surplus generated
at one posterior valuation depends on the other types screened
alongside it. Changing the posterior distribution can change
the entire menu, including where the seller irons. We work in
quantile space: buyers are ordered by increasing posterior mean,
and the quantile function captures the posterior mean at each
population rank. The raw revenue curve multiplies the cutoff buyer's willingness
to pay for a product of quality one by the share of buyers
at higher ranks. Its concavification is the smallest
concave curve lying above it. The negative slope of this
concavified curve is the ironed virtual value, which determines
the seller's optimal quality at each rank.
\emph{Regularization} replaces the posterior distribution with
the distribution whose raw revenue curve equals this concavified
curve. It preserves quality at each rank and the seller's profit,
and weakly raises consumer surplus. Whenever it changes the
distribution, however, it raises the mean, violating the constraint
imposed by the prior.

To overcome this difficulty, we characterize
all attainable means and payoffs consistent with a fixed
concavified revenue curve. Starting from the regularized distribution, posterior valuations
can be lowered while preserving this curve only on its linear
portions, where buyers receive a common quality. We identify
the largest reduction in the overall mean that each portion
permits and show that any reductions within these bounds can
be implemented together. Lowering valuations reduces the value
of the qualities supplied, while leaving production costs and
profit unchanged. The loss therefore falls entirely on consumers. On each portion, it equals the reduction in the overall mean
multiplied by the quality supplied there. For a binary prior, holding this curve fixed, the remaining
task is to restore the prior mean at the smallest cost
in consumer surplus. This is a linear allocation problem and its solution is to
lower posterior valuations at lower qualities first, subject
to the permissible reduction on each portion. We then use this characterization to construct
mean-preserving comparisons across different curves, allowing
the seller to reoptimize. These arguments establish the
restrictions on atoms, support gaps, and separation.

We also obtain a Lagrangian characterization of the separating region of the optimal posterior distribution. We vary the upper part of the posterior distribution while holding
its probability fixed and preserving its mean and its lowest
posterior valuation, which governs how it attaches to any lower
group. The mean and lower-support restrictions are linear in
virtual values, allowing compensating variations that
respect monotonicity. These variations yield first-order conditions without assuming
a differentiable optimal profile. Our curvature condition then
establishes smoothness and yields a differential system
characterizing the separating region. 

The joint design problem requires a further step because
both posterior demand and its division across segments
are endogenous. We formulate an aggregate problem over posterior
means and ironed virtual values. Bayes plausibility restricts
the distribution of posterior means, while seller optimality
imposes inequalities for buyers above each posterior threshold.
These necessary restrictions give an upper bound on the
designer's attainable payoff. We prove that this bound is
attained by a feasible joint policy with profit-maximizing menus
in every group. Every joint optimum must therefore also solve
the aggregate program. This allows us to derive the properties
shared by all joint optima.

\subsection{Related literature}
\label{subsec:related-literature}

Our model builds on the screening models of
\citet{mussarosen1978,maskinriley1984} and early work on information
provision under monopoly
\citep{LewisSappington1994,OttavianiPrat2001}.
A central issue in buyer-optimal learning is how information
changes the seller's offers. \citet{roeslerszentes2017}
characterize optimal learning under monopoly pricing, while
\citet{debroesler2024} show that pure bundling can implement a
buyer-optimal outcome under exchangeable multidimensional priors.
The seller's response also includes refund terms when buyers
learn their values after contracting in \citet{vonWangenheim2025}.
We study how learning shapes a vertically differentiated menu
under strictly convex production costs, changing both the
qualities supplied and the information rents consumers retain.

Related work endogenizes demand through advertising and product
design \citep{JohnsonMyatt2006,BarIsaacCaruanaCunat2023}, costly
valuation changes \citep{DaiKoh2025}, or choice of type distribution
before monopoly pricing \citep{CondorelliSzentes2020}.
Information design with pricing or screening is studied by
\citet{WeiGreen2024} and \citet{ShiZhangChen2026}.
\citet*{bergemannheumannmorris2026} jointly choose information
and the menu to maximize profit, obtaining finitely many signals
and qualities. Under our curvature condition, weights strictly
favoring consumers instead require a nondegenerate separating
interval, since differentiation generates the information rents
the designer values.

Our results pertaining to market segmentation and the joint design of learning and segmentation relate to work
by
\citet{bergemannbrooksmorris2015,haghpanahsiegel2022,
haghpanahsiegel2023,bergemannheumannwang2026}. The common-quality
property of \citet{bergemannheumannwang2026} survives under our
curvature condition, but optimally segmenting a buyer-optimal
posterior distribution need not attain the joint optimum.
Related screening models show that partial disclosure of
imperfect group information can maximize total surplus
\citep{HausleEtAl2025Privacy} and derive welfare bounds for
finite-type markets from two-type markets \citep{HausleEtAl2026}.
These analyses hold buyers' information fixed. Work allowing
information on both sides includes
\citet{KartikZhong2026}, who, in a single-object environment, show that seller
information does not enlarge the attainable payoff set when
the buyer is better informed and does not update from prices.
Given a privately informed seller who commits to a mechanism,
\citet{StrackYang2024} show that a buyer-optimal signal can be
chosen independent of the seller's signal, while
\citet{DebPaiRoesler2024} show that seller information can benefit
consumers in an informed-principal setting with interdependent
values. We jointly choose private learning and public
segmentation, making both the posterior types being screened
and their division across menus endogenous.

Other related literature differs in the control of information and the timing of
learning. \citet{CarmonaLaohakunakorn2025} study information
structures determined by the conflicting choices of buyers and
sellers, while \citet{ChopraEly2026Auctions} study bidders who
choose learning noncooperatively before the seller designs a
revenue-maximizing auction. When information is supplied or
acquired after prices or menus are chosen, sellers instead
anticipate how their offers affect subsequent learning
\citep{xuyang2026,menschravidthereze2026}.
Observability also matters: privately chosen learning can sustain
inefficient trade even as information costs vanish
\citep{ravidroeslerszentes2022}.
Here, a single designer jointly chooses learning and its joint design with segmentation,
committing publicly before the seller chooses its menu(s).

Finally, our framework is in the spirit of mean persuasion
\citep{kolotilin2018,dworczakmartini2019}: posterior-mean
distributions must be mean-preserving contractions of the prior.
The extreme-point characterizations of
\citet{KleinerMoldovanuStrack2021,ArieliEtAl2023} simplify
optimization of objectives linear in a distribution or quantile
function. Majorization also links information and mechanism design
\citep{BergemannHeumannMorris2026Integrated}:
\citet{bergemannheumannmorris2026} exploit linearity conditional
on the quality distribution, while
\citet{bergemannheumannwang2026} reduce segmentation to a nonlinear
problem over inverse hazard rates. Our objective incorporates
the seller's optimal menu response and need not be linear or
convex in the posterior distribution.
We build on the variational methods of \citet{kyriazis2026markets}, who studies unrestricted
choice of the type distribution under screening with type-dependent
costs and characterizes the set of feasible surplus outcomes and their limits without information constraints. Imposing a
fixed mean in its private-value specialization gives our binary
learning problem, while general finite priors impose further dispersion
constraints. These restrictions can make lower trading atoms and exclusion part of the optimum
and create scope for gains from segmentation.

\section{Model}
\label{sec:model}


A buyer can learn about her valuation before a monopolist offers a
menu of qualities and prices. A contract specifies quality
$x\in[0,\bar Q]$ and a transfer $t\in\R$ from buyer to seller. The buyer's realized utility
is $Vx-t$, and the seller's profit is $t-c(x)$. Both maximize
expected payoffs, and the buyer can choose the outside option $(x,t)=(0,0)$.

For $N\geq2$, the prior distribution of the buyer's valuation for quality is
\begin{equation}
 H=\sum_{i=1}^{N}p_i\delta_{v_i},
 \qquad 0\le v_1<\cdots<v_N=1,
 \qquad p_i>0,\quad \sum_{i=1}^{N}p_i=1,
 \label{eq:primitive-prior}
\end{equation}
where $\delta_v$ denotes a point mass at $v$, and
$m:=\E_H[V]$. The highest valuation is normalized to one.

\begin{assumption}[Cost]
\label{ass:cost}
The function $c:[0,\bar Q]\to\R_+$ is continuously differentiable on
$[0,\bar Q]$ and twice continuously differentiable on $(0,\bar Q]$,
with $c(0)=c'(0)=0$ and $c''(x)>0$ for $x>0$. There is a
$\bar q\in(0,\bar Q)$ such that $c'(\bar q)=1$.
\end{assumption}

For $z\le1$, define
$q(z):=\argmax_{x\in[0,\bar Q]}\{zx-c(x)\}$ and $\pi(z):=\max_{x\in[0,\bar Q]}\{zx-c(x)\}$.
For a buyer with valuation $y$, $q(y)$ is the efficient quality and
$\pi(y)$ is the associated total surplus, before dividing it between
buyer and seller. The same interpretation applies when $y$ is the
buyer's posterior mean. Under \Cref{ass:cost}, strict convexity gives
$q(z)=(c')^{-1}(z)$ for $z\in(0,1]$ and
$q(z)=\pi(z)=0$ for $z\le0$. Both functions are continuous, and
$\bar q=q(1)$ is the highest efficient quality. The bound $\bar Q$
therefore never constrains efficient provision.

Let $\beta$ denote the weight on seller profit relative to consumer
surplus in the designer's objective. We also use the
following condition for $0\le\beta<1$.

\begin{assumption}[Cost curvature]
\label{ass:curvature}
The cost function is three times continuously differentiable on
$(0,\bar Q]$, and
\[
 (2-\beta)[c''(x)]^2+[1-c'(x)]c'''(x)>0
 \qquad\text{for }0<x<\bar q.
\]
\end{assumption}

The assumption is satisfied, for example, if marginal cost is convex, including
power cost functions $c(x)=\kappa x^p/p$, with $\kappa>0$ and
$p\ge2$. The condition also permits $c'''<0$ when its magnitude is small
enough relative to $c''$.

\paragraph{Gains from trade.}
A type $y$ is \emph{payoff neutral} if $q(y)=\pi(y)=0$. There are no gains from trade at that type. A type is \emph{productive} if $q(y)>0$,
or equivalently $\pi(y)>0$. The prior has a \emph{productive lowest
value} when $v_1$ is productive. Every feasible posterior type then
has positive gains from trade. 
Under the normalization $c'(0)=0$, zero is the only payoff-neutral
value in $[0,1]$, and the lowest value is productive exactly when
$v_1>0$. Later, we also allow
$c'(0)=d\in[0,1)$, retaining the other conditions in \Cref{ass:cost}.
Then $q(y)=\pi(y)=0$ for $y\le d$, while $q(y)>0$ and
$\pi(y)>0$ for $y>d$. A payoff-neutral type is feasible when $v_1\le d$ while the lowest value is productive when
$v_1>d$. Unless stated otherwise, we maintain $c'(0)=0$.

\paragraph{Timing.} Before the seller posts a menu, an information designer publicly
chooses an information structure $\sigma(\dd s\mid v)$, a
conditional distribution of signals for each primitive valuation.
The signal space may be infinite.\footnote{Formally, the signal space
is standard Borel and $\sigma$ is a probability kernel.}
Nature draws $V$ from $H$ and generates the signal $S$ according
to $\sigma$. The buyer observes $S$ but not $V$. The seller
observes the chosen information structure, but neither the valuation nor the signal
realization, and posts a profit-maximizing menu. The buyer then
chooses a contract or the outside option.

\paragraph{Designer's objective. }The designer maximizes $\CS+\beta\Pi$,
anticipating the seller's optimal menu. Consumer-oriented weights put at least
as much weight on consumer surplus as on profit. Our main characterizations
concern $\beta\in[0,1)$; the frontier analysis also allows $\beta>1$.

\subsection{Feasible learning}
\label{subsec:feasible-learning}

After observing $S$, the buyer has posterior mean
$Y:=\E[V\mid S]$. Her expected utility from a contract is
$Yx-t$, so the posterior mean $Y$ is her type in the screening problem. Let $G$
be the distribution of these posterior means. We refer to
$G$ as the \emph{posterior distribution}, and to a realization
$y$ as a \emph{posterior value} or \emph{posterior type}. 
Signals with the same posterior mean induce the same preferences
over contracts. Since cost does not depend on $V$, the seller's
problem and both parties' payoffs depend on information only
through $G$, which we take as the designer's choice. The seller knows $G$, while the buyer's
realized type $Y$ remains private. Uniqueness always refers to
the posterior distribution, not the information structure generating it.

A posterior distribution is feasible if and only if it is dominated
in convex order by $H$, or equivalently, is a mean-preserving
contraction of the prior \citep{gentzkowkamenica2016}.
We write $G\preceq_{\cx}H$ in this case. Formally,
\[
 G\preceq_{\cx}H\quad\Longleftrightarrow\quad
 \int f(y)\,G(\dd y)\le\int f(v)\,H(\dd v)
 \quad\text{for every convex }f:\R\to\R.
\]
The set of feasible posterior distributions is therefore
\[\mathcal G(H):=
 \{G\in\Delta([v_1,1]):G\preceq_{\cx}H\},\]
where $\Delta([v_1,1])$ denotes the Borel probability distributions
on $[v_1,1]$. 

Every feasible posterior distribution has mean $m$. For a binary prior
on $\{a,1\}$, with $0\le a<1$ and $a<m<1$, this and the support
restriction are the only requirements. Therefore,
\begin{equation}
 \mathcal G_{a,m}
 =\left\{G\in\Delta([a,1]):\int y\,G(\dd y)=m\right\}.
 \label{eq:binary-feasible-set}
\end{equation}

This connects our binary prior information design problem to the choice of
market composition studied in \citet{kyriazis2026markets}. In the private-value specialization of that paper, the designer
can choose any type distribution on $[a,1]$. Our binary prior problem
adds the requirement that this distribution have the prescribed mean
$m$. The seller faces the same screening problem for any given
distribution, but the designer must now generate that distribution
through learning from a fixed prior. 

\subsection{Screening and information design}
\label{subsec:screening}

Fix a feasible posterior distribution $G$. A direct mechanism assigns
quality $x(y)$ and transfer $t(y)$ to each posterior type\footnote{We can focus on direct mechanisms without loss since the revelation principle applies.}. Incentive
compatibility and individual rationality require
\begin{align}
 yx(y)-t(y)&\ge yx(\widehat y)-t(\widehat y)
       &&\text{for all }y,\widehat y\in\supp G,
       \label{eq:ic}\\
 yx(y)-t(y)&\ge0
       &&\text{for all }y\in\supp G.
       \label{eq:ir}
\end{align}
The buyer reports truthfully and participates when indifferent.
Individual rationality holds for each realized posterior type. The seller maximizes
$\E_G[t(Y)-c(x(Y))]$ subject to \eqref{eq:ic}--\eqref{eq:ir}.

Incentive compatibility requires quality to be nondecreasing in the
posterior mean. It also limits the seller's ability to extract surplus from
higher types, who can choose contracts intended for lower types. Both quality
and rents depend on the distribution of posterior means. Since feasible posterior distributions need not have densities, we formulate the seller's problem in quantile space.\footnote{The same quantile space formulation also appears in \cite{bergemannheumannmorris2026} and \cite{kyriazis2026markets}.}

Let \[Q(u):=\inf\{y\in[0,1]:G(y)\ge u\},\qquad 0<u<1\] be the lower
quantile of $G$ with $Q(0):=\lim_{u\downarrow0}Q(u)$ and $Q(1):=\lim_{u\uparrow1}Q(u)$, where $G(y)$ also denotes its distribution function.
Higher $u$ corresponds to a higher posterior mean. Let $Q^+$ denote
the right-continuous version. Define the \emph{raw revenue curve}
$\widehat R_Q$ and its \emph{concave envelope} $R_Q$ by\footnote{Precisely,
$Q^+(u)=\inf\{y\in[0,1]:G(y)>u\}$ for $0\le u<1$.
It agrees with $Q$ almost everywhere. At a support gap it selects the
next higher posterior value, making the raw revenue curve upper
semicontinuous. Choosing the lower value at an interior quantile jump would
not change the concave envelope since a finite concave majorant is
continuous there and must also dominate the upper one-sided
limit.}
\begin{equation}
 \widehat R_Q(u):=(1-u)Q^+(u)\quad \text{for} \quad 0\le u<1,
 \qquad \widehat R_Q(1):=0,
 \qquad R_Q:=\cav\widehat R_Q.
 \label{eq:revenue-envelope}
\end{equation}
Here $\cav$ denotes the least concave majorant. We refer to $R_Q$
as the \emph{concavified revenue curve}. Its negative right derivative,
\[
 \phi_Q(u):=-R'_{Q,+}(u),\qquad 0<u<1,
\]
is the \emph{ironed virtual value}. It is nondecreasing and satisfies
$\phi_Q(u)\le Q(u)\le1$ almost everywhere. Concavification yields
marginal revenues that reflect incentive compatibility, including
the rent requirements across support gaps. On the interior of a
linear interval of $R_Q$, the ironed virtual value is constant.
We call a posterior distribution \emph{regular} when
$\widehat R_Q=R_Q$, and refer to $R_Q>\widehat R_Q$ as
\emph{strict ironing}.

These definitions apply to the pointwise raw revenue curve in
\eqref{eq:revenue-envelope}. For a finite posterior distribution,
one can instead first interpolate between the revenue points
associated with successive valuation cutoffs. The negative
slopes of that polygon are the discrete virtual values.
Concavifying our raw curve incorporates this interpolation
and any further correction for nonmonotonicity. Regularity
under our convention is therefore not equivalent to monotonicity
of discrete virtual values, and strict ironing need
not bunch distinct posterior types.\footnote{For
$G=(1-\tau)\delta_b+\tau\delta_1$, with $0<\tau<b<1$, the
discrete virtual values $z_b=(b-\tau)/(1-\tau)$ and $1$ are
positive and strictly ordered. Nevertheless,
$R_Q(u)-\widehat R_Q(u)=u\tau(1-b)/(1-\tau)>0$ for
$0<u<1-\tau$. These ranks all have posterior value $b$.
Under our cost assumptions, the seller assigns the
two types the positive and distinct qualities $q(z_b)$ and
$q(1)$.}

The seller offers quality $q(\phi_Q(u))$ at rank $u$. When
quality is positive, marginal cost equals the ironed virtual
value. Types with nonpositive virtual values receive zero
quality and are excluded.\footnote{When $c'(0)=d>0$, the
same formulas apply, but a type receives zero quality when
its ironed virtual value is at most $d$.
Appendix~\ref{app:neutral-screening-foundation} verifies this
extension and its payoff continuity.}
This quality schedule is monotone, and strict convexity of
the cost makes optimal quality unique for $G$-almost every
posterior type, for every $G\in\Delta([0,1])$. The seller
extracts the lowest type's surplus and leaves the rents
required by incentive compatibility to higher types. All optimal
menus therefore give the same expected profit and consumer surplus
\citep[Proposition~A.2]{kyriazis2026markets}:
\begin{align}
 \Pi(G)&=\int_0^1\pi(\phi_Q(u))\,\dd u,
       \label{eq:seller-profit}\\
 \CS(G)&=\int_0^1[Q(u)-\phi_Q(u)]q(\phi_Q(u))\,\dd u.
       \label{eq:consumer-surplus}
\end{align}

Total surplus is $\TS(G):=\CS(G)+\Pi(G)$.
Information thus changes both quality and the rents the seller
must leave. Reducing virtual values can help consumers by limiting
surplus extraction, but also weakly reduces quality.

The designer chooses the posterior distribution to maximize
\[
 J_\beta(G):=\CS(G)+\beta\Pi(G),\qquad
 W_H(\beta):=\max_{G\in\mathcal G(H)}J_\beta(G),\qquad \beta\ge0.
\]
For a binary prior, write
\begin{equation}
 V_{a,\beta}(m):=\max_{G\in\mathcal G_{a,m}}J_\beta(G),
 \qquad V_a(m):=V_{a,0}(m).
 \label{eq:binary-value}
\end{equation}
The main characterization concerns $0\le\beta<1$, so consumers
receive greater weight than the seller. Buyer-optimal learning is the
case $\beta=0$. For weights between zero and one,
$J_\beta=(1-\beta)\CS+\beta\TS$.
A higher $\beta$ places more weight on efficiency relative
to consumer rent. At $\beta=1$, the designer maximizes total surplus, while above one, it places greater weight on profit than on consumer surplus. Both payoff functions are continuous under weak convergence
\citep[Proposition~A.7]{kyriazis2026markets}. Since $\mathcal G(H)$ is
nonempty and weakly compact, a maximizer exists for every finite
$\beta$. No learning gives $G=\delta_m$, making the posterior type
common knowledge and permitting full surplus extraction. The designer can
create information rents only by giving the buyer private information
and thereby changing the menu the seller finds optimal.


\paragraph{An equivalent formulation of Assumption \ref{ass:curvature}.}
For a posterior mean $y$ and virtual value $z$, define the local
contribution to the designer's objective by
\begin{equation}\label{eq:local_objective_contribution}
F_{y,\beta}(z):=(y-z)q(z)+\beta\pi(z).
\end{equation}
For any fixed $0\le\beta<1$, \Cref{ass:curvature} is equivalent to
\begin{equation}\label{eq:rent-curvature}
(2-\beta)q'(z)-(y-z)q''(z)>0
\qquad\text{for }0<z<y\le1.
\end{equation}
This implies that $F_{y,\beta}''(z)<0$. Holding the posterior mean fixed, the
marginal contribution of a higher virtual value is strictly
decreasing. The condition permits a convex $q$, provided its
curvature is sufficiently small relative to its slope.
In the terminology of Bergemann, Heumann, and Wang (2026), $q$
is the supply function and, writing $h=y-z$, the term $h q(y-h)$
is the local information rent generated by posterior type $y$.
It measures a contribution to aggregate consumer surplus,
rather than the rent received by buyers of posterior type $y$. On a regular separating
region, $h$ is the inverse hazard rate. A larger $h$ represents
a greater mass of higher-posterior buyers relative to the local
density at $y$, expanding the extensive margin of information
rents. At the same time, it lowers the virtual value and hence
the quality supplied at $y$, reducing rents along the intensive
margin. The assumption requires the net marginal return to increasing
$h$, including its effect on profit, to decrease
over the positive quality range $0<h<y$.

\section{Optimal Learning with a Binary Prior}
\label{sec:binary}

With a binary prior on $\{a,1\}$, the designer can choose any
posterior distribution on $[a,1]$ with mean $m$. For
$0\le\beta<1$ and $0\le a<m<1$, we first characterize the
possible structure of optimal distributions. 

\subsection{The optimal posterior distribution}
\label{subsec:binary-structure}

Call a type \emph{excluded} if the seller assigns it zero quality. A
\emph{lower trading atom} is a posterior atom at a valuation strictly less than one, served at
positive quality. 
Because the quality assigned to such an atom lies below the efficient level, we term this configuration \emph{distorted inclusion}. A \emph{separating interval}
is an interval of posterior types on which the optimal quality schedule is strictly increasing. 

\begin{theorem}[Optimal learning with a binary prior]
\label{thm:binary-structure}
Fix $0\le\beta<1$. Under \Cref{ass:cost,ass:curvature}, every
posterior distribution maximizing $J_\beta$ for $0\le a<m<1$
has one of the following lower parts:
\begin{enumerate}
 \setlength{\itemsep}{0pt}
 \setlength{\parsep}{0pt}
 \item[(i)] an excluded atom at $a$, with no lower trading atom;
 \item[(ii)] one lower trading atom, with no excluded types;
 \item[(iii)] neither excluded types nor a lower trading atom.
\end{enumerate}
Above this lower part, the posterior distribution has a nondegenerate
connected separating interval and a positive atom at one. On the
separating interval, posterior means, virtual values, and qualities are continuously and strictly increasing in rank. The posterior distribution
has a positive continuous density on the separating interior. The atom
at one receives efficient quality $\bar q$.
In cases (i) and (ii), a positive support gap separates the
lower atom from the separating interval. In case (ii), the
atom's virtual value is strictly below the limiting virtual
value at the bottom of that interval.
\end{theorem}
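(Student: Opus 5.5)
The plan is to work in quantile space and split the argument in two. First, regularization together with a mean-restoration lemma pins down the coarse structure of any optimum. Second, a variational (Lagrangian) argument on the upper part of the distribution yields smoothness and strict separation.

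\textbf{Step 1: reduce to a fixed concavified curve.} Fix an optimum $G$ with quantile $Q$ and concavified revenue curve $R_Q$. Replacing $G$ by its regularization $\bar G$ (raw revenue curve equal to $R_Q$) leaves qualities and profit unchanged and weakly raises consumer surplus. It also raises the mean above $m$. I would characterize every distribution sharing the curve $R_Q$: starting from $\bar G$, posterior values can be lowered only on linear pieces of $R_Q$, where the virtual value $\phi$ is constant. Each piece has a maximal admissible mean reduction, and any combination of admissible reductions is jointly implementable.

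Lowering values on a piece with common quality $q(\phi)$ costs consumers exactly that quality times the mean reduction, and leaves profit unaffected. Restoring the mean $m$ at minimal cost in $J_\beta$ is therefore a linear allocation problem. Its solution exhausts the pieces with the lowest quality first. Two consequences follow:
\begin{itemize}
\item an excluded piece ($\phi\le0$) is used first at zero cost, and the cheapest way to use it is to collapse it into an atom at $a$;
\item on a trading linear piece, the optimal reduction produces a single atom at the bottom of the piece, followed by a support gap up to the regular part.
\end{itemize}
This already shows that the lower part is an atom or is empty, and that it is separated by a gap when present.

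\textbf{Step 2: rule out the remaining ironed configurations.} Here I would use mean-preserving comparisons across different curves and let the seller reoptimize. Bunching two distinct trading posterior types at one quality should be strictly improvable: splitting an ironed interval slightly reallocates virtual values. Because $F_{y,\beta}$ is strictly concave in $z$ by \eqref{eq:rent-curvature}, such a perturbation raises the objective while a compensating shift elsewhere preserves the mean.

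The same concavity argument should also give the following:
\begin{itemize}
\item an excluded atom and a lower trading atom cannot coexist, since merging the excluded mass into the lower extreme while moving the trading atom is a cheaper way to carry the mean;
\item the top must be an atom at one with $\phi=1$ and quality $\bar q$, since moving mass at the top to $1$ adds rents to all lower types at no efficiency cost;
\item the separating part must be nondegenerate. If it were empty, $G$ would be a two- or three-atom distribution, and a local spreading of the intermediate atom into an interval yields a first-order gain in information rents, which the designer values because $\beta<1$.
\end{itemize}
The strict inequality between the atom's virtual value and the limiting virtual value at the bottom of the interval follows in case (ii): if they were equal, the atom and the interval would share a quality, contradicting the no-bunching result.

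\textbf{Step 3: regularity on the separating interval.} I would perturb the upper distribution while holding fixed its mass, its mean, and its lowest support point. These three constraints are linear in $\phi$, which gives first-order conditions of the form $F'_{Q(u),\beta}(\phi(u))=\lambda_0+\lambda_1(\cdot)$ holding almost everywhere on the region, without assuming differentiability. Strict concavity of $F$ (Assumption~\ref{ass:curvature}) lets me solve these conditions for $\phi$ as a continuous, strictly increasing function of $Q$. Feeding this back into the definition $R=(1-u)Q$ on the regular region gives an ODE system. Its solution has $Q$ and $\phi$ continuous and strictly increasing, which implies a positive continuous density.

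\textbf{Main obstacle.} I expect Step 3 to be the hardest, specifically legitimizing the perturbations. Variations must preserve monotonicity of $\phi$ and concavity of $R$, and the optimum is not known a priori to be smooth. My first attempt would be compensating two-point variations of $\phi$ supported on subintervals where $\phi$ is strictly increasing. I would then bootstrap: continuity first, then strict monotonicity, and finally differentiability via the implicit function theorem applied to the first-order condition.
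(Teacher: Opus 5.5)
Your Step 1 claims more than a fixed-curve analysis can deliver. With $R_Q$ held fixed, the linear program determines only the aggregate reductions $d_I$ on the linear pieces. Three things remain open at a fixed curve. First, a partial reduction $0<d_I<\kappa_I$ leaves distinct types bunched at no extra cost. Second, the bottom support point $R_Q(0)=Q^+(0)$ is fixed, so excluded mass cannot be moved to $a$ this way; the paper instead rescales the active market, which changes the curve (\Cref{lem:binary-exclusion-floor}). Third, and most important, nothing at a fixed curve limits how many linear pieces $R_Q$ has or how they alternate with regular stretches. After your Steps 1--2, therefore, you have not excluded two lower trading atoms, or a trading atom with separating types both below and above it. Yet that is exactly what ``one lower trading atom'' and a \emph{connected} separating interval require, and your Step 2 list never addresses it. The missing idea is a mean-preserving comparison across curves. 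Take a trading atom on a face $I_A$ with virtual value $z_A$. Pool a short block of lower-quality separating types below it at the block's lowest posterior value; this replaces the curve on the block by a chord (\Cref{lem:exact-chord-collapse}). Jensen's inequality for the concave $F_{b,\beta}$, a key use of \Cref{ass:curvature}, keeps the loss strictly below $q(z_A)D$, where $D$ is the mean removed. Then restore $D$ by partially regularizing $I_A$, which gains exactly $q(z_A)D$ at the new curve (\Cref{lem:binary-chord-cost,thm:no-reentry-rigorous}). You also need an accumulation argument (\Cref{lem:binary-successor}) and the elimination of adjacent atoms: regularize part of the upper atom and pool the rest with the lower one. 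That step uses strict concavity, plus a separate chord comparison when the upper virtual value exceeds the lower posterior value (\Cref{lem:binary-adjacent-elimination}).

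Several other steps fail as argued. In case (ii), $z=\bar z$ creates no bunching. Every interior separating rank has $\varphi(u)>\bar z$, so no positive-measure set of other types shares the atom's quality, and concavity of $R_Q$ gives only $z\le\bar z$. The paper first shows $q(z)\le\theta\le q(\bar z)$, so equality forces $\theta=q(z)$. It then moves probability from the atom into the continuation, holding the atom's value $v$ and the continuation's bottom $B$ fixed. The atom's virtual value is lowered to keep contact, and the mean is compensated inside the continuation. With $\rho$ the continuation's share, the tail identity \eqref{eq:reduced-continuation-value} gives the derivative $(1-\rho)(B-z)^2q'(z)>0$ (\Cref{lem:strict-attachment}). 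Your reason for no bunching is also misplaced. Concavity of $F_{y,\beta}$ in $z$ favors \emph{pooling} virtual values. The first-order gain from splitting a face comes from $\partial_y\partial_zF_{y,\beta}=q'(z)>0$, i.e.\ from giving higher virtual values to higher posterior values on the face (\Cref{lem:active-face-purity}, which uses no curvature). In Step 3, raising $\varphi$ near $u$ raises $Q$ at every lower rank. The first-order condition is therefore $A(u)+F'_{Q(u),\beta}(\varphi(u))=0$, with the nonlocal state $A(u)=\xi+\int_0^u[q(\varphi(t))-\theta]/(1-t)\,\dd t$. So $\varphi$ is a function of $(A,Q)$, not of $Q$ alone. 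The efficient top atom comes from this state. Separation cannot reach rank one because $A$ would diverge, which needs the separate bound $q(\varphi)\ge\theta$. Compensated tail perturbations then force the top value and the limiting virtual value to one (\Cref{lem:shadow-quality,lem:binary-top-fit}). Raising top values adds no rents to lower types, whose rents depend only on qualities below them. Finally, your one-line argument against coexistence of exclusion and a trading atom is unsubstantiated. The paper needs adjacent-atom elimination when the excluded virtual value is zero. Otherwise it gets the contradiction only at second order, from the Hessian of a reduced objective in $(v,z)$ (\Cref{thm:notrade-trading-exclusive}).
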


\begin{figure}[p]
 \centering
 \includegraphics[width=\textwidth]{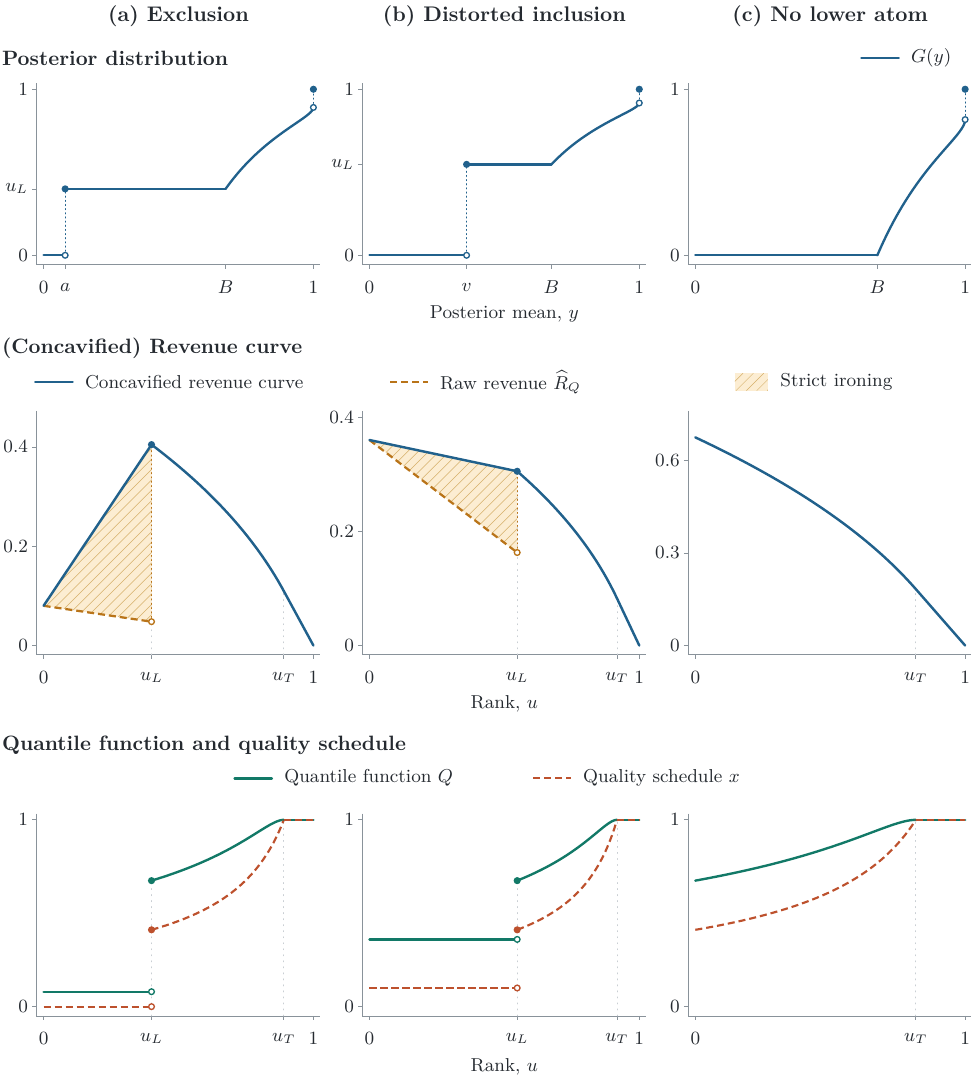}
 \caption{Illustration of the possible structure of optimal posterior distributions in \Cref{thm:binary-structure}. 
 The lower atom has mass $u_L$ and value $a$ in (a) or $v$ in (b).
 $B$ is the bottom of the separating region, and $u_T=G(1-)$. Shading marks strict concavification of the raw revenue curve.  The terminal linear part $R_Q=\widehat R_Q=1-u$
 represents the efficiently served top atom.}
 \label{fig:binary-structures}
\end{figure}
The theorem identifies how information feasibility changes optimal
market composition. A consumer-oriented designer may optimally
induce exclusion or a lower trading atom followed by a support gap.
Neither feature occurs in the corresponding unrestricted
private-value optimum characterized in \citet{kyriazis2026markets}. Above
the lower group, the nondegenerate separating interval and
efficiently served top atom survive. Thus, even with only two
primitive valuations, every optimal information policy induces
a continuum of posterior types and strictly differentiated
qualities.

A posterior atom and bunching have different meanings. An atom
assigns positive probability to one posterior value, so its buyers
have identical preferences over contracts. Bunching assigns the
same quality to distinct posterior values. In either lower-atom
configuration of \Cref{thm:binary-structure}, the raw revenue curve
lies strictly below its concave envelope over the atom's interior
ranks. These ranks are occupied by a single posterior type.
Distinct trading posterior types receive different qualities.

To see how concavification operates at a lower atom, let $v$
denote its posterior value and $u_L$ its mass. Let $B>v$ be
the bottom of the separating region, with $v=a$ in the
exclusion case. For $0<u<u_L$, its raw revenue curve is
$(1-u)v$, whereas its concave envelope joins $(0,v)$ to
$(u_L,(1-u_L)B)$:
\[
 R_Q(u)=v-z_Lu,
 \qquad
 z_L=v-\frac{1-u_L}{u_L}(B-v).
\]
The negative slope $z_L$ is the lower atom's virtual value,
or marginal revenue per buyer. To interpret it, let $x_L$ denote
the quality supplied to the atom. Payments from its buyers are
$u_Lv x_L$. The information rents that this quality generates
for buyers above the gap are $(1-u_L)(B-v)x_L$, reducing their
payments by the same amount. The net coefficient of $x_L$ in
revenue is therefore $u_Lz_L$. Concavification incorporates this cost due to information rents while optimal
learning determines the atom's value and mass.
At the top atom, the raw and concavified revenue curves coincide,
and the seller supplies efficient quality.
Figure~\ref{fig:binary-structures} illustrates these configurations:
a flat quantile corresponds to an atom and a jump to a gap in
posterior support.

The reason for preserving separation is the accumulation of
information rents through the menu, as in \citet{kyriazis2026markets}. For
ordered posterior types $y_1<\cdots<y_K$, the rents left by the
seller satisfy
\[
    U_i=\sum_{j<i}(y_{j+1}-y_j)x_j,
    \qquad U_1=0.
\]
Contracts offered to lower posterior types limit what the seller
can charge higher types. If distinct trading types in an interval receive
the same quality, the designer can change the posterior
distribution while preserving its mean, inducing the seller
to raise quality for the higher types in that interval and
lower it for the lower types. Consumer surplus rises to
first order, while profit is unchanged to first order.
Bunching distinct trading types is therefore suboptimal
even when learning must preserve the prior mean. Buyers
at the top receive efficient quality while enjoying
the information rents generated by the contracts offered to types below them.

When a lower group is present, the pessimistic  beliefs of buyers in it allow the remaining
buyers to have an average posterior mean above the prior
mean. If the lower group trades, its posterior value may be $a$
or strictly higher. In either case, these buyers receive a basic product of
inefficiently low quality, which generates rents for buyers
with higher posterior means. If the lower group is excluded,
optimal learning concentrates its posterior beliefs at $a$.
Both configurations accommodate more optimistic beliefs for higher types, but the lower group contributes to other buyers' information rents only when it trades. The
analysis that follows explains why the optimum permits
at most one such lower group and why either configuration
can be optimal.

\paragraph{Role of the curvature condition.}
\Cref{ass:curvature} controls the effects on the designer's objective of changing the posterior distribution while preserving the prior
mean. The proof constructs changes that lower and pool some
posterior types, together with compensating increases elsewhere.
Strict concavity of $F_{y,\beta}$ bounds the welfare loss from
the pooling step relative to the gain from restoring the mean
at higher-quality types. Together with the seller's optimal
response, these arguments imply that the lower part of an
optimum has at most one atom, either trading or excluded, below
the separating region. The condition is used again later to characterize the separating region. 

\subsection{The cost of satisfying the prior mean}
\label{subsec:mean-adjustment}
The designer must
choose both a concavified revenue curve and a posterior distribution
that induces it while respecting the prior mean. The revenue curve
determines the seller's quality schedule through its negative slope, the ironed virtual value. We
first characterize the distributions compatible with a fixed concavified revenue
curve and their attainable means and payoffs. For a binary prior,
this yields a linear allocation problem. We then use it to compare different curves under mean-preserving
changes in posterior demand. Write $R=R_Q$, suppressing the
subscript when the curve is fixed. We also write
$\operatorname{CS}(Q)$, $\Pi(Q)$, and $J_\beta(Q)$ for the payoffs
of the distribution with quantile $Q$.

Define the \emph{regularization} of $Q$ by
\[
 Q^r(u):=\frac{R(u)}{1-u},\qquad 0\le u<1.
\]
Its raw revenue curve equals $R$ and is already concave. Thus,
$Q^r$ is a regular quantile on $[a,1]$. It has the same ironed
virtual values, quality by rank, and profit as $Q$, while
$Q^r\ge Q$ almost everywhere
\citep[Proposition~A.3]{kyriazis2026markets}. By
\eqref{eq:seller-profit}--\eqref{eq:consumer-surplus},
\[
 \Pi(Q^r)=\Pi(Q),\qquad
 \CS(Q^r)-\CS(Q)
 =\int_0^1[Q^r(u)-Q(u)]q(\phi_Q(u))\dd u\geq0.
\]
This regularization is the step that permits restriction to regular
distributions in the unrestricted private-value problem in \citet{kyriazis2026markets}. At a fixed
prior, however, it is generally infeasible. Whenever $Q^r$ differs
from $Q$, its mean is strictly higher. Improving
consumer surplus while preserving quality by rank and profit can
therefore require posterior beliefs that the prior cannot generate.
We instead characterize how the excess mean can be removed while
preserving $R$. The relevant questions are where reductions are
possible, how large they can be, and how much consumer surplus
they cost.

\paragraph{Where posterior values can be lowered.}
A rank $u$ is a \emph{contact} if $\widehat R_Q(u)=R(u)$. In words, concavification leaves the raw revenue unchanged there. At any contact
below rank one,
\[
 Q^+(u)=\frac{\widehat R_Q(u)}{1-u}
       =\frac{R(u)}{1-u}=Q^r(u).
\]
On each maximal interval where $\widehat R_Q<R$, the least
concave majorant is linear. Consequently, $Q$ and $Q^r$ can differ only in the interiors of
linear intervals of $R$, up to a set of ranks of measure zero.

Consider a maximal linear interval $I=[\alpha_I,\beta_I]$ of
positive length. This means that $R$ is a straight line on $I$, and the interval
cannot be enlarged while retaining this property. Its endpoints
are contacts, as are ranks zero and one
(Appendix~\ref{app:mean-accounting}). There may also
be contacts inside $I$, so linearity need not involve strict ironing
throughout its interior. Write $-z_I$ for its slope. All interior
ranks then have virtual value $z_I$ and receive quality $q(z_I)$.

Let $b_I:=Q^r(\alpha_I)$. Contact at the lower endpoint gives
$Q^+(\alpha_I)=b_I$, so monotonicity prevents posterior values
on $I$ from falling below $b_I$. They cannot exceed $Q^r$,
since the raw revenue curve must remain below $R$. Thus,
$b_I\le Q(u)\le Q^r(u)$ for almost every $u\in I$.
Define
\[
 d_I:=\int_I[Q^r(u)-Q(u)]\,\dd u,
 \qquad
 \kappa_I:=\int_I[Q^r(u)-b_I]\,\dd u.
\]
The quantity $d_I$ captures the reduction in mean on $I$ relative to
$Q^r$. The bound $Q\ge b_I$ implies $0\le d_I\le\kappa_I$.

The upper bound is attainable. Set $Q=b_I$ throughout the interior
of $I$ and leave the quantile unchanged elsewhere. This preserves
monotonicity and both endpoint contacts.\footnote{At $\alpha_I$,
the new right-continuous quantile equals $b_I$. At $\beta_I<1$,
its right limit is unchanged. At rank one, the raw revenue curve is zero by
definition.} The concavified revenue curve is unchanged by
\Cref{lem:fixed-envelope-fillings}.
We call $\kappa_I$ the \emph{mean-reduction capacity} of $I$. It is the most mean that can be removed there while keeping $R$
fixed. If an adjustment on $I$ reduces the overall mean by $d_I$, then the \emph{unused capacity}, $\kappa_I-d_I$, measures how much
further this interval can contribute to reducing the mean.
We say that the adjustment is \emph{at capacity} when $d_I=\kappa_I$,
in which case the quantile equals $b_I$ almost everywhere on $I$.
Conversely, raising posterior values toward $Q^r$ can reverse a previous reduction and increase the mean by up to $d_I$. 

A terminal linear interval, meaning one ending at rank one, has zero
capacity. Since $R(1)=0$, linearity gives $R(u)=z_I(1-u)$
there. Hence $Q^r=z_I=b_I$ on its interior, and $\kappa_I=0$.
The maximal linear intervals have disjoint interiors and are therefore
at most countable. All sums below run over these intervals.

\begin{proposition}[Mean and surplus at a fixed concavified revenue curve]
\label{prop:mean-accounting}
Under \Cref{ass:cost}, a posterior quantile $Q$ on $[a,1]$ and
its regularization $Q^r$ satisfy
\begin{align}
 \int_0^1 Q(u)\,\dd u
   &=\int_0^1 Q^r(u)\,\dd u-\sum_I d_I,
   \label{eq:mean-accounting}\\
 \CS(Q)&=\CS(Q^r)-\sum_I q(z_I)d_I.
   \label{eq:surplus-accounting}
\end{align}
The sums converge absolutely, and $\Pi(Q)=\Pi(Q^r)$.
Conversely, every collection $0\le d_I\le\kappa_I$ can be
implemented by a posterior quantile on $[a,1]$ with concavified
revenue curve $R$ and with the mean and surplus given above. It
is feasible for the binary prior precisely when
\[
 \sum_I d_I=\int_0^1 Q^r(u)\,\dd u-m.
\]
\end{proposition}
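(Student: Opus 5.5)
We prove the two accounting identities first and then the converse. By the discussion preceding the statement, $Q$ and $Q^r$ agree at every contact below rank one. The non-contact set $\{u:\widehat R_Q(u)<R(u)\}$ is open, because $\widehat R_Q$ is upper semicontinuous and $R$ is continuous. Each of its components lies in the interior of some maximal linear interval $I$ of $R$. Hence $Q^r-Q$ vanishes almost everywhere outside $\bigcup_I I$.

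Since $Q^r\ge Q$ almost everywhere \citep[Proposition~A.3]{kyriazis2026markets}, every $d_I\ge0$. Monotone convergence over the countably many disjoint interiors then gives
\[
\int_0^1 (Q^r-Q)\,\dd u=\sum_I d_I .
\]
Both quantiles take values in $[a,1]$, so the left side is finite. This yields \eqref{eq:mean-accounting}, with the sum of nonnegative terms converging absolutely.

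For \eqref{eq:surplus-accounting}, I use that $Q^r$ and $Q$ share $R$, hence share $\phi_Q$, and hence share quality by rank and profit. This gives $\Pi(Q)=\Pi(Q^r)$ by \eqref{eq:seller-profit}. By \eqref{eq:consumer-surplus},
\[
\CS(Q^r)-\CS(Q)=\int_0^1 [Q^r-Q]\,q(\phi_Q)\,\dd u .
\]
On the interior of $I$ we have $\phi_Q=z_I$, and off $\bigcup_I I$ the integrand vanishes. The integral therefore splits as $\sum_I q(z_I)d_I$. Absolute convergence follows because $0\le q(z_I)\le\bar q$.

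For the converse, fix $0\le d_I\le\kappa_I$ for each $I$. On each $I$ I define a quantile interpolating between $Q^r$ and the flat level $b_I$. The natural choice is $Q_\lambda=\max\{b_I,\min\{Q^r,\lambda\}\}$, but that would create a jump down at $\beta_I$ and break monotonicity. I would instead use the convex combination
\[
Q_I=(1-\theta_I)Q^r+\theta_I b_I,\qquad \theta_I=d_I/\kappa_I
\]
(with $\theta_I=0$ when $\kappa_I=0$). This quantile is nondecreasing, lies between $b_I$ and $Q^r$, equals $b_I$ at the lower endpoint, and removes exactly $d_I$ of mean. Outside the linear intervals I keep $Q=Q^r$.

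The glued function is then nondecreasing. At $\beta_I$ its left limit is at most $Q^r(\beta_I)$, and the value to the right is at least $Q^r(\beta_I)$ or $b_{I'}=Q^r(\beta_I)$ for an adjacent interval $I'$. Its raw revenue curve satisfies $\widehat R\le R$ everywhere and equals $R$ at every endpoint of every $I$ and at every rank outside the intervals. It remains to show that $\cav\widehat R=R$.

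This last step is the main obstacle: the contacts must be dense enough to force the least concave majorant back to $R$. I expect it to be exactly \Cref{lem:fixed-envelope-fillings}, applied simultaneously to all intervals. If that lemma is stated only for a single interval, I would argue directly. Any concave majorant $M$ of $\widehat R$ satisfies $M\ge R$ on the contact set, which contains all endpoints and all points off the intervals. On each $I$, $M$ dominates the chord joining its endpoints by concavity, and that chord is $R$ restricted to $I$. Hence $M\ge R$ everywhere, while $R$ itself is a concave majorant, so $\cav\widehat R=R$.

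The mean and surplus of the glued quantile then follow from the identities above, and its support lies in $[Q^r(0),1]\subseteq[a,1]$. Finally, by \eqref{eq:binary-feasible-set}, feasibility for the binary prior means support in $[a,1]$ together with mean $m$. By \eqref{eq:mean-accounting} this is precisely the condition $\sum_I d_I=\int Q^r-m$.
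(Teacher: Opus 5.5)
Your proposal is correct and follows essentially the paper's route. The identities come from $Q^r=Q$ almost everywhere off the face interiors, and the converse fills each face with a nondecreasing function between $b_I$ and $Q^r$ and invokes \Cref{lem:fixed-envelope-fillings} for all faces simultaneously; your convex combination $(1-\theta_I)Q^r+\theta_I b_I$ is a valid substitute for the paper's flat initial portion ending at $\gamma_I$, and it avoids the intermediate-value step. One correction to a remark you discard: the capped filling $\min\{Q^r,\lambda\}$ jumps up, not down, at $\beta_I$, so it is also monotone and admissible, and it is exactly the filling used in \Cref{lem:face-contraction}.
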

The proposition characterizes all attainable means and payoffs
conditional on $R$. Every interior quantile in a linear interval $I$ receives
quality $q(z_I)$, so removing one unit of mean there costs exactly
$q(z_I)$ in total surplus. Profit remains unchanged, and this loss
falls entirely on consumers. Among quantiles preserving $R$, the
cost depends on the amount removed at each quality, rather than
its distribution across ranks within the interval. The converse establishes that every collection of reductions within
the capacities $\kappa_I$ can be implemented simultaneously by a
monotone quantile with the same concavified revenue curve. For a
binary prior, the single aggregate mean condition then determines
information feasibility.

\paragraph{Implementing a reduction in mean.}
Figure~\ref{fig:mean-accounting} illustrates the argument. Start
from $Q^r$. On an interval $I$ with $\kappa_I>0$, lower an
initial portion of the quantile to $b_I$, leaving the rest unchanged.
The area between the quantiles is $d_I$. Lengthening the constant
portion increases this area continuously from zero to $\kappa_I$.
A horizontal portion represents a posterior atom, and the jump back
to $Q^r$ leaves a gap in posterior support. A partial reduction,
$0<d_I<\kappa_I$, leaves distinct posterior types receiving the
same quality. The full reduction, $d_I=\kappa_I$, makes the
entire interval one posterior type. 

\begin{figure}[!t]
 \centering
 \captionsetup{font=small,skip=4pt}
 \includegraphics[width=\textwidth]{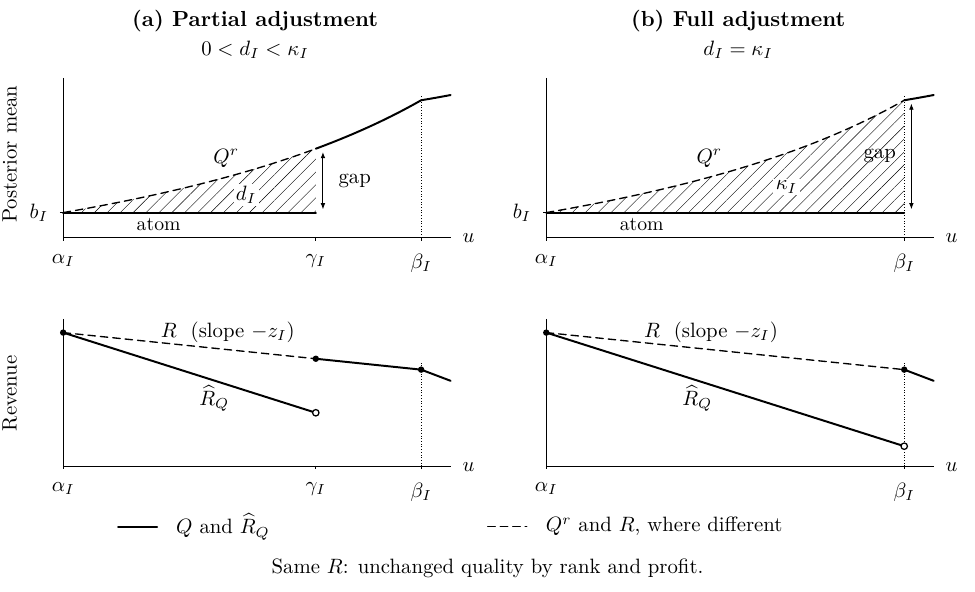}
 \caption{Implementing a reduction in mean at a fixed concavified
 revenue curve. Outside $I$, both quantiles remain equal to $Q^r$.
 In (a), the constant portion ends at $\gamma_I$ and the hatched
 area is $d_I$. In (b), it extends across $I$ and the area is the
 maximum reduction $\kappa_I$. The lower row shows the corresponding
 raw revenue curves. Filled dots mark contacts with $R$ and open
 circles mark left limits at jumps. The unchanged continuation beyond $\beta_I$ makes the upper endpoint contact visible even under full adjustment. 
 Relative to $Q^r$, the two distributions
 remove $d_I$ and $\kappa_I$ units of mean, respectively. They
 have the same quality by rank and profit.}
 \label{fig:mean-accounting}
\end{figure}

\paragraph{The least costly adjustment at a fixed curve.}
Once $R$ is fixed, so are $Q^r$, the qualities $q(z_I)$, and
the bounds $\kappa_I$. The remaining choice is how much mean to
remove on each interval.

A distribution with mean $m$ can induce this curve if and only if
\[
 0\le\int_0^1Q^r(u)\,\dd u-m\le\sum_I\kappa_I.
\]

When these inequalities hold, maximizing $J_\beta$ among distributions
that induce $R$ is equivalent to
\begin{equation}
 \begin{aligned}
 \min_{(d_I)_I}\quad &\sum_I q(z_I)d_I\\
 \text{subject to}\quad
 &0\le d_I\le\kappa_I\quad\text{for every }I,\\
 &\sum_I d_I=\int_0^1Q^r(u)\,\dd u-m.
 \end{aligned}
 \label{eq:conditional-mean-problem}
\end{equation}
The choice variables are the reductions $d_I$. There are finitely many
variables if $R$ has finitely many maximal linear intervals, and
countably many otherwise. Appendix~\ref{app:accounting-completion}
shows that the minimum is attained whenever the problem is feasible.
The converse in \Cref{prop:mean-accounting} ensures that every feasible
collection of reductions can be implemented by a posterior quantile.
Thus, this linear program solves the designer's problem conditional on keeping $R$ unchanged. 

Its solution assigns reductions to lower qualities first. Suppose
$q(z_I)<q(z_J)$, $d_I<\kappa_I$, and $d_J>0$. The designer
can increase the reduction on $I$ by a small $\varepsilon>0$
and decrease it on $J$ by the same amount. Their mean effects cancel, and both preserve $R$ and profit. The net gain in consumer surplus is
$[q(z_J)-q(z_I)]\varepsilon>0$.
Thus an optimal choice has $d_I=\kappa_I$ on every lower-quality
interval whenever a higher-quality interval has a positive reduction.
Intervals assigned the same quality have the same unit cost, so
reallocating reductions among them leaves the designer's objective unchanged.
Figure~\ref{fig:mean-accounting-exchange} illustrates this comparison.
The allocation rule is independent of $\beta$ conditional on $R$,
although the optimal curve generally depends on the profit weight.

\begin{figure}[!t]
 \centering
 \captionsetup{font=small,skip=4pt}
 \includegraphics[width=\textwidth]{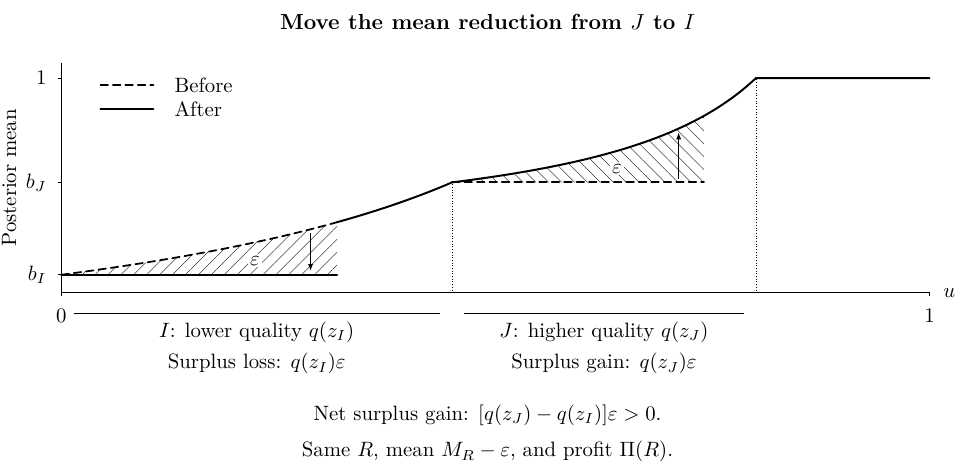}
 \caption{Reallocating a reduction in mean. The dashed quantile
 places the entire reduction $\varepsilon$ on $J$ and the solid
 quantile places it on $I$, where $q(z_I)<q(z_J)$. The two
 quantiles coincide on the portions drawn only as solid curves.
 The hatched areas are equal: moving from dashed to solid removes
 mean on $I$ and restores it on $J$. Both distributions remove
 $\varepsilon$ units of mean from $Q^r$ and induce the same $R$.
 Here $M_R$ is the mean of $Q^r$, and $\Pi(R)$ is their common profit.
 Quality labels show their common allocation by rank.
 Consumer surplus rises by $[q(z_J)-q(z_I)]\varepsilon$.
 The drawing uses the same $R$ as Figure~\ref{fig:mean-accounting},
 with $\varepsilon<\min\{\kappa_I,\kappa_J\}$.}
 \label{fig:mean-accounting-exchange}
\end{figure}

If $q(z_I)=0$, lowering posterior values on $I$ has no surplus
cost. These buyers are already excluded. Holding $R$ fixed also
fixes quality at every rank and hence the probability of exclusion.
Changing participation requires
changing $R$, which can affect qualities, rents, and profit elsewhere.

\paragraph{Choosing the concavified revenue curve.}
The linear program solves the designer's problem conditional on
$R$. Selecting the curve requires a further step. We
construct mean-preserving deviations that change the curve and
evaluate payoffs under the seller's reoptimized menu. The
fixed-curve characterization supplies accompanying adjustments
that restore mean at a known surplus cost. The proof also
identifies or bounds the changes in the concave envelope,
including those outside the perturbed group. Together, these
arguments establish the structural restrictions in the binary
characterization of \Cref{thm:binary-structure}.

First, distinct trading types cannot remain bunched. A
mean-preserving contraction within their constant-quality group
leaves $R$ and payoffs unchanged and makes $Q$ agree with $Q^r$
on an interval of distinct posterior values. A further
mean-preserving change raises quality for higher types and lowers
it for lower types. Consumer surplus rises to first order, while
profit is unchanged to first order. Hence every nonterminal
positive-quality linear interval at an optimum contains one posterior
atom. Its quantile equals $b_I$ and $d_I=\kappa_I$.

Second, a nonterminal trading atom cannot lie above separation.
Pooling a small block of lower separating types at its lowest
posterior value reduces the mean and replaces $R$ on the block by a chord,
averaging its virtual values. Under \Cref{ass:curvature}, the
pooling loss is strictly less than the higher atom's quality times
the mean reduced. Undoing part of the reduction at that atom
gains exactly this latter amount and restores the mean. The terminal
atom has $d_I=\kappa_I=0$, so there is no reduction to undo
and this comparison does not apply to it.

Third, two lower trading atoms admit a related improvement.
Partly regularize the higher atom and assign its remaining ranks
the lower atom's posterior value. The first change raises mean at
fixed $R$, while the second expands the lower atom and changes $R$
on the pooled block. Choose the changes so their mean effects cancel.
Concavity of $F_{y,\beta}$ where $z<y$, together with monotonicity
of $q$, gives a strict gain. 

These comparisons leave at most one lower trading atom below separation.
For $\beta<1$, pooling portions of the top two atoms at their
conditional mean gives a mean-preserving contraction that rules out
a degenerate upper component when a lower group is present. A small
mean-preserving spread separately rules out no learning. Any excluded
group must be concentrated at the lowest prior valuation $a$, since
lowering its posterior values to $a$ otherwise permits a larger
probability on the same conditional active distribution, raising the
objective.

Finally, exclusion and a lower trading atom cannot coexist.
If the excluded atom's virtual value is strictly negative, its
probability can restore the mean after a joint change in the trading
atom's posterior value and virtual value. At a candidate optimum,
a feasible direction has zero first-order effect and a strictly
positive second-order effect. If the excluded virtual value
is zero, the adjacent-atom comparison gives a strict improvement
instead. 

\subsection{The separating interval}
\label{subsec:separating-interval}
We now characterize the regular upper distribution through
variations that preserve its contribution to the prior mean and
its attachment to any lower group. It consists of a separating
region and the atom at one. When no lower group is present, this
is the whole posterior distribution. Otherwise, we condition on
the upper part and normalize its probability to one. Then, $u$ 
denotes rank within that conditional distribution. Its probability
in the full market remains part of the designer's choice. Write
$\varphi$ for its virtual-value profile and $Q$ for its quantile.
Since its raw revenue curve is already concave, each posterior
mean is the average virtual value above its rank:
\begin{equation}
 Q(u)=\frac{1}{1-u}\int_u^1\varphi(t)\,\dd t.
 \label{eq:continuation-quantile}
\end{equation}
Two things govern how this upper distribution fits into the full
posterior distribution. Its lower support point determines the contact with the
lower group and its mean determines its contribution to the prior mean, after multiplying by the probability of the upper part. They are given by
\[
 B:=Q(0)=\int_0^1\varphi(u)\,\dd u,
 \qquad
 M:=\int_0^1Q(u)\,\dd u
   =\int_0^1[-\log(1-u)]\varphi(u)\,\dd u.
\]
We first vary the upper distribution while keeping its probability, lowest support point $B$ and mean $M$ fixed. The perturbations we consider are confined to the separating
interior and leave the endpoint virtual values unchanged, preserving
the attachment to any lower group. We now use $J_\beta$
to denote the weighted objective of the normalized upper distribution.

\paragraph{Two multipliers.}
A change in virtual values generally changes both $B$ and $M$.
Two multipliers account for the payoff effects of offsetting these
changes elsewhere. With constant terms omitted, we use the Lagrangian
$J_\beta-\theta M+\xi B$. For admissible interior perturbations
at an optimum, the first-order payoff change satisfies
\[
 \dd J_\beta=\theta\,\dd M-\xi\,\dd B.
\]
Thus, $\theta$ is the marginal value of changing the mean of the upper part while keeping its bottom contact fixed, whereas $-\xi$ is the
marginal value of changing the bottom contact while keeping the
mean fixed. Both constraints are equalities, so neither multiplier
has a prescribed sign. 

Both $B$ and $M$ are linear in $\varphi$. $B$ integrates
virtual values with unit weight, whereas $M$ weights them
by $-\log(1-u)$. Perturbations at different quantiles
therefore affect $B$ and $M$ differently.
We can choose two interior perturbations with linearly
independent effects on $(B,M)$ and use them to compensate
for any other small interior perturbation, preserving both
the lower support point and the mean. To preserve monotonicity, we make the adjustments through small increasing transformations of virtual values. The combined change is feasible in either direction, so its first-order effect on payoff must be zero at an optimum, which yields the multiplier identity and the interior stationarity condition below.

\paragraph{The cumulative marginal effect.}
Define
\[
 A(u):=\xi+\int_0^u\frac{q(\varphi(t))-\theta}{1-t}\,\dd t.
\]
To understand this term, raise virtual values over a short interval
around rank $u$, before making the compensating changes. By
\eqref{eq:continuation-quantile}, this raises posterior means at
all ranks below that interval, even though their virtual values
and qualities are unchanged. At a lower rank $t$, the increase in $Q(t)$ equals the integrated
increase in $\varphi$ divided by $1-t$. In the payoff formula, each unit increase in
$Q(t)$ contributes $q(\varphi(t))$ to consumer surplus and
nothing to profit. It also increases the mean, which is valued
at $\theta$. As the altered interval shrinks to $u$, these net effects give the
integral in $A(u)$. The same change enters
$B=\int_0^1\varphi$ with coefficient one, giving the remaining
term $\xi$. Thus, $A(u)$ summarizes the indirect marginal effect through lower
posterior means, adjusted for both constraints, in a single state variable.

The private-value first-variation formula in \citet{kyriazis2026markets}
supplies the unadjusted cumulative rent term
$\int_0^u q(\varphi(t))/(1-t)\,dt$. The companion problem also
allows a multiplier on its lower-support constraint. Here the
local argument holds both the bottom contact $B$ and the
upper mean $M$ fixed. Preserving the mean subtracts
$\theta[-\log(1-u)]$ from the cumulative rent term, while
preserving the bottom contact adds $\xi$. Both adjustments are justified by variations that preserve
the lower support point and the mean while respecting
monotonicity and the attachment to the lower group.

\begin{proposition}[The optimal separating distribution]
\label{prop:separating-conditions}
Under the hypotheses of \Cref{thm:binary-structure}, each normalized
optimal upper distribution has unique multipliers $\theta,\xi\in\R$
and a cutoff $b\in(0,1)$ such that, for $0<u<b$,
\begin{equation}
 A(u)+[Q(u)-\varphi(u)]q'(\varphi(u))-(1-\beta)q(\varphi(u))=0.
 \label{eq:continuation-euler}
\end{equation}
On this interval, $Q>\varphi>0$, $q(\varphi)\ge\theta$, and the
three functions are continuously differentiable, with
\begin{align}
 Q'(u)&=\frac{Q(u)-\varphi(u)}{1-u},
 \label{eq:continuation-Q}\\
 \varphi'(u)&=
 \frac{q(\varphi(u))-\theta+[Q(u)-\varphi(u)]q'(\varphi(u))}
 {(1-u)\{(2-\beta)q'(\varphi(u))-[Q(u)-\varphi(u)]q''(\varphi(u))\}}>0.
 \label{eq:continuation-phi}
\end{align}
At $b$, they reach $(A,Q,\varphi)=((1-\beta)\bar q,1,1)$ continuously and
$Q=\varphi=1$ on $[b,1]$. For a given $\theta$, the separating
solution traced backward from this terminal state is unique up to
its stopping point and rank normalization.
\end{proposition}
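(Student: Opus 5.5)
We prove \Cref{prop:separating-conditions} by a first-variation argument on the normalized upper distribution, taking from \Cref{thm:binary-structure} its structure: a nondegenerate separating interval followed by an atom at one with efficient quality. Set $b:=G(1-)$ in the conditional ranks. Then $\varphi=Q=1$ on $[b,1]$, since on the terminal linear piece $R=1-u$. On $(0,b)$ the profile is regular, so \eqref{eq:continuation-quantile} holds. Differentiating it gives \eqref{eq:continuation-Q} wherever $\varphi$ is continuous, which reduces to proving \eqref{eq:continuation-euler} and regularity of $\varphi$.

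\textbf{Step 1 (payoff as a functional of $\varphi$).} Using \eqref{eq:seller-profit}--\eqref{eq:consumer-surplus}, write the objective as
\[
J_\beta=\int_0^1\bigl\{[Q(u)-\varphi(u)]q(\varphi(u))+\beta\pi(\varphi(u))\bigr\}\,\dd u,
\]
with $Q$ given by \eqref{eq:continuation-quantile}. Perturb $\varphi\mapsto\varphi+\varepsilon\eta$ for a bounded $\eta$ supported in a compact subinterval of $(0,b)$. Use Fubini to move the $Q$-dependence onto $\eta$: the perturbation raises $Q(t)$ by $\varepsilon(1-t)^{-1}\int_t^1\eta$. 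The derivative is then $\int\eta(u)\{\int_0^u q(\varphi(t))/(1-t)\,\dd t+\pi'$-type terms$\}$. Since $\pi'(z)=q(z)$, the local terms combine into $-(1-\beta)q(\varphi)+(Q-\varphi)q'(\varphi)$, using envelope-free differentiation where $q'$ exists. The constraint directional derivatives are $\dd B=\int\eta$ and $\dd M=\int(-\log(1-u))\eta$.

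\textbf{Step 2 (multipliers and stationarity).} Choose two bump perturbations $\eta_1,\eta_2$ at distinct ranks in $(0,b)$ whose $(\dd B,\dd M)$ vectors are linearly independent. This is possible because $-\log(1-u)$ is not constant. Then any admissible $\eta$ can be compensated by a combination of $\eta_1,\eta_2$ so that $B$ and $M$ are preserved to first order, and exactly by an implicit-function correction.

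Monotonicity is the delicate part. To keep the profile nondecreasing, implement every perturbation as $\varphi\circ\tau_\varepsilon$, or as $\varphi+\varepsilon\eta$ with $\eta$ supported where $\varphi$ strictly increases. \Cref{thm:binary-structure} guarantees that $\varphi$ is strictly increasing on the separating interval, so these variations are feasible in both directions. The first-order effect must therefore vanish, which yields unique $(\theta,\xi)$ with $\dd J_\beta=\theta\,\dd M-\xi\,\dd B$. Uniqueness follows from the linear independence above. Collecting terms gives the pointwise condition \eqref{eq:continuation-euler} almost everywhere, with $A$ as defined.

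\textbf{Step 3 (regularity and the ODE).} $A$ is absolutely continuous and $Q$ is continuous. By \eqref{eq:rent-curvature}, the map $z\mapsto(Q-z)q'(z)-(1-\beta)q(z)$ has derivative $-[(2-\beta)q'(z)-(Q-z)q''(z)]<0$, so it is strictly decreasing on $0<z<Q$. Equation \eqref{eq:continuation-euler} therefore determines $\varphi$ as a continuous function of $(A,Q)$, which the implicit function theorem upgrades to $C^1$. Differentiating then gives \eqref{eq:continuation-phi}: the numerator comes from $A'=(q-\theta)/(1-u)$ and $Q'=(Q-\varphi)/(1-u)$, and the denominator from the strictly decreasing map.

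For the signs: $Q>\varphi$ holds because strict monotonicity of $\varphi$ gives $Q(u)$ as a strict average of larger values. $\varphi>0$ holds because the separating types trade. To get $q(\varphi)\ge\theta$ and $\varphi'>0$, use that $\varphi$ is nondecreasing, so the numerator is nonnegative. Strictness and $q(\varphi)\ge\theta$ then follow from a backward argument: if $q(\varphi(u_0))<\theta$ at some $u_0$, then $q(\varphi)<\theta$ persists below $u_0$ and contradicts positivity of the numerator near where $Q-\varphi$ is small.

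\textbf{Terminal state and uniqueness.} Continuity at $b$ gives $Q,\varphi\to1$, because the top atom is attached regularly with no gap, by \Cref{thm:binary-structure}. Then \eqref{eq:continuation-euler} forces $A(b-)=(1-\beta)q(1)=(1-\beta)\bar q$. Backward uniqueness for fixed $\theta$ follows from Picard–Lindelöf applied to the $C^1$ system $(A,Q,\varphi)$.

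The terminal point is singular, since $Q-\varphi\to0$ and the denominator behaves well but the numerator tends to $\bar q-\theta$. Handle it by reparametrizing with $s=-\log(1-u)$, and note that the rank normalization is free. This yields uniqueness up to the stopping point.

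The main obstacle is Step 2: constructing two-sided admissible variations preserving monotonicity without assuming differentiability a priori. Handle it by composing $\varphi$ with smooth rank reparametrizations, which is permissible because $\varphi$ is strictly increasing.
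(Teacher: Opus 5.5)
Your first variation, the two-multiplier construction based on the independent effects of $1$ and $-\log(1-u)$ on $(B,M)$, and the $C^1$ regularity step follow the paper's route. The derivation of $q(\varphi)\ge\theta$, however, fails. Nonnegativity of the numerator in \eqref{eq:continuation-phi} only gives $(Q-\varphi)q'(\varphi)\ge\theta-q(\varphi)$. Below a point $u_0$ with $q(\varphi(u_0))<\theta$, the factor $Q-\varphi$ is bounded away from zero, since it vanishes only as $u\uparrow b$, so nothing is contradicted. In fact, the backward trajectory passes through $q(\varphi)=\theta$ with numerator $(Q-\varphi)q'(\varphi)>0$ and continues into $q(\varphi)<\theta$ with $\varphi'>0$. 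No argument using only the Euler equation and monotonicity can therefore yield the bound.

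The Euler equation encodes stationarity only within the class of regular upper distributions, whereas $q(\varphi)\ge\theta$ comes from a deviation that leaves that class. The paper pools a short block $[u_0,u_0+h]$ at $Q(u_0)$, which irons it by \Cref{lem:exact-chord-collapse}. This lowers the mean by $D_h>0$ at a welfare cost of at most $q(\varphi(u_0+h))D_h$, by \Cref{lem:binary-chord-cost}; this is where \Cref{ass:curvature} enters, through Jensen's inequality for the concave $F_{b,\beta}$. It then restores the mean with the compensated direction \eqref{eq:binary-remote-mean}, supported below $u_0$, gaining $\theta D_h+o(D_h)$. Without this inequality you also lose $\varphi'>0$ at every rank, since strict increase of a $C^1$ function does not give it. You also lose the argument that separation stops before rank one.

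The terminal state is imported rather than proved. In the paper, three facts are established in \Cref{lem:binary-top-fit} as part of this very proposition: separation ends at some $b<1$, the terminal atom lies at one, and $\varphi(b-)=1$. The proof of \Cref{thm:binary-structure} then cites the proposition for them. Before this point, \Cref{lem:binary-nonempty-continuation} gives only a possibly empty constant terminal part at some value $p\le1$, so invoking the theorem is circular. Three arguments are needed:
\begin{itemize}
\item $b<1$, because $q(\varphi)-\theta$ bounded away from zero near rank one (which uses the inequality above) forces $A\to\infty$, contradicting $A<(1-\beta)q(\varphi)\le(1-\beta)\bar q$;
\item $p=1$, by raising $\varphi$ on a high tail while compensating $B$ and $M$ with interior directions;
\item $\varphi(b-)=1$, by lowering $\varphi$ on an initial block of the top atom.
\end{itemize}
Your justification of the last point via ``no gap'' fails even if the theorem is granted. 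Absence of a support gap gives only $Q(b-)=1$, which reconstruction already implies once the atom is at one. A regular profile with $\varphi(b-)=p_0<1$ jumping to one at $b$ still has continuous $Q$, with $Q'(b-)=(1-p_0)/(1-b)>0$. Hence $A(b-)=(1-\beta)\bar q$ is not yet available.

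There are also two smaller issues.
\begin{itemize}
\item Monotone variations: $\varphi+\varepsilon\eta$ with $\eta$ supported where $\varphi$ is strictly increasing need not be monotone, and composing in rank requires $\varphi'$. The paper composes in value, $(\mathrm{id}+\varepsilon h)\circ\varphi$, which works for any monotone $\varphi$.
\item Uniqueness: the terminal state is not singular in $u$, since $b<1$. What is needed is that the Euler equation defines $\varphi=Z(A,Q)$ as a $C^1$ function at and near $((1-\beta)\bar q,1)$, extending $q$ beyond one using $\bar q<\bar Q$, so that the planar system in $(A,Q)$ is locally Lipschitz. The three-dimensional system you invoke involves $q''$ and need not be Lipschitz under a $C^3$ cost.
\end{itemize}
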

Equation~\eqref{eq:continuation-euler} balances the indirect effect
$A(u)$ against two local effects of raising virtual value.
The term $[Q(u)-\varphi(u)]q'(\varphi(u))$ is the gain from
the induced increase in quality. At the seller's choice,
$Q(u)-\varphi(u)$ is the marginal total surplus from additional
quality. The term $-(1-\beta)q(\varphi(u))$ captures rent
extraction. Holding quality fixed, an increase in virtual value
reduces the consumer-surplus integrand and raises the profit
integrand by the same amount, while the designer places weight one on
the former and $\beta$ on the latter. Their sum with $A(u)$
must be zero because no interior change, once its moment effects
are offset, can improve the objective.

At the top atom, $\varphi=1$, so perturbations can only
lower its virtual value. In the separating region,
\Cref{ass:curvature} makes the left-hand side of
\eqref{eq:continuation-euler} strictly decreasing in
$\varphi$ for given $(A,Q)$. The equation therefore
determines $\varphi$ locally as a continuously
differentiable function of $(A,Q)$.

The inequality $q(\varphi)\ge\theta$ says that the marginal
loss from lowering posterior means locally is at least as
large as the marginal gain from raising them elsewhere
to preserve the prior mean.
To see why, pool a short interval in the separating region at
its lowest posterior mean. As the interval shrinks, the
consumer surplus loss per unit of mean reduced approaches
the quality supplied there, while the profit change becomes
negligible by comparison. Restoring the mean elsewhere in
the separating region, while keeping its lower support
point fixed, changes the objective by $\theta$ per unit
restored, to first order. If the original quality were
below $\theta$, the combined change would improve the
objective, contradicting optimality. Together with
\Cref{ass:curvature}, this inequality makes the derivative
in \eqref{eq:continuation-phi} strictly positive.

If a lower trading atom is present, with virtual value $z$, and
$\underline\varphi:=\lim_{u\downarrow0}\varphi(u)$, the same
argument implies
\begin{equation}
 q(z)\le\theta\le q(\underline\varphi),
 \qquad z<\underline\varphi.
 \label{eq:lower-quality-comparison}
\end{equation}
Consider raising some posterior means at the lower atom while
keeping the concavified revenue curve fixed. Qualities and profit
remain unchanged, and consumer surplus rises by $q(z)$ per unit
increase in the overall mean. To preserve the prior mean, this
increase must be offset by lowering posterior means in the
separating region, at a marginal cost of $\theta$ to the
objective. Optimality therefore requires $q(z)\le\theta$.

A different change rules out equal qualities at the atom and
the start of separation. If these qualities were equal, moving
a small amount of probability from the atom to the separating
region, and adjusting posterior means there to preserve the
prior mean, would strictly increase the objective. The atom
must therefore receive strictly lower quality. Thus, the gap
in posterior support is accompanied by an upward jump in
quality.

The same conditions explain the top atom. If separation extended to
rank one, $q(\varphi)-\theta$ would eventually be bounded away
from zero. The integral defining $A$ would diverge, but the Euler
equation requires $A<(1-\beta)q(\varphi)\le(1-\beta)\bar q$.
Separation must therefore end before rank one. If the terminal value
were below one, an increase preserving $B,M$ through interior
compensation would raise welfare. Thus the terminal type is one
and receives efficient quality. A downward perturbation at its
start rules out a jump in virtual value; \Cref{lem:binary-top-fit}
proves all three terminal claims.

For a fixed $\theta$, \eqref{eq:continuation-euler} identifies virtual
value as a continuously differentiable function of $A$ and $Q$ on
the positive-quality region. Setting $\tau=-\log(1-u)$, the
two remaining equations are autonomous:
\[
 \frac{\dd A}{\dd\tau}=q(\varphi)-\theta,
 \qquad
 \frac{\dd Q}{\dd\tau}=Q-\varphi.
\]
Uniqueness of this initial-value problem establishes uniqueness of
the separating path for a given $\theta$, up to its length and rank
normalization. This does not establish uniqueness of the full posterior distribution:
the lower attachment, component probabilities, and duration of separation
must still be chosen through feasibility and comparisons of the objective's level.

\subsection{Two special cases}
We now consider two cases in which the lower part has a simpler
description. The first case permits a regular distribution among buyers who
trade, with its probability adjusting to the prior, while the
second yields no lower atom or exclusion.

\subsubsection{A payoff-neutral lowest value}
\label{subsec:neutral-optimum}

The first case we consider allows the designer to satisfy the prior mean
by varying the probability of a regular active market, rather than
introducing a lower trading atom. An \emph{active distribution} is
one under which the seller supplies positive quality to almost every
buyer. The simplification arises when the remaining buyers can be
assigned a posterior value at which even efficient quality is zero.

Allow marginal cost  at zero to be positive, that is, $c'(0)=d\in[0,1)$, and retain the other
conditions in \Cref{ass:cost}. The lowest value is payoff neutral
when $a\le d$. Under the normalization $d=0$, this
requires $a=0$.
Suppose $a\le d$, and let $G_A$ be a regular active distribution
on $[a,1]$, with mean $M_A$. Assigning the remaining probability
to $a$ leaves the contracts chosen by active types unchanged
\citep[Lemma~B.1]{kyriazis2026markets}. For $0<\lambda\le1$,
\[
 \begin{gathered}
 G_\lambda=(1-\lambda)\delta_a+\lambda G_A,
 \qquad m=a+\lambda(M_A-a),\\
 \CS(G_\lambda)=\lambda\CS(G_A),\qquad
 \Pi(G_\lambda)=\lambda\Pi(G_A).
 \end{gathered}
\]
The seller screens $G_{\lambda}$, with type $a$ excluded. Changing $\lambda$
changes the frequency of trade but not the active contracts. 

We can prove that every feasible
posterior distribution is weakly improved by one of this form, and
that every optimum already has this form. Once $G_A$ is chosen,
Bayes plausibility fixes $\lambda=(m-a)/(M_A-a)$. The mixture $G_{\lambda}$ is a valid
probability distribution precisely when $M_A\ge m$, so the designer's problem
reduces to
\begin{equation}
 V_{a,\beta}(m)=(m-a)
 \max_{\substack{G_A\text{ regular and active}\\M_A\ge m}}
              \frac{J_\beta(G_A)}{M_A-a}.
 \label{eq:neutral-ratio-main}
\end{equation}
The choice, therefore, is only about the conditional active distribution. The ratio captures the objective per unit of mean. Above $a$, it determines the payoff attainable
at the given prior, because the prior supplies $m-a$ units
of that mean. When $a<d$, the denominator includes the distance
from the excluded value $a$ to the threshold $d$ at which trade
yields gains. Although values below $d$ all receive zero
quality, they use different amounts of the prior mean.

To prove that it is without loss to restrict attention to regular active distributions and a possible excluded atom at $a$, we do not need any curvature assumption. To characterize the active part of the distribution and prove uniqueness, we need the following condition.
\begin{assumption}[Cost curvature for payoff-neutral learning]
\label{ass:neutral-curvature}
The cost function is three times continuously differentiable on
$(0,\bar Q]$, and
\[
 \frac{x c'''(x)}{c''(x)}>-2\qquad(0<x\le\bar q).
\]
\end{assumption}
This condition holds whenever marginal cost is convex and is satisfied by power
cost functions, for which the ratio equals $p-2$. It is distinct from
\Cref{ass:curvature} and permits subquadratic powers. 
\begin{proposition}[Optimal learning with a payoff-neutral lowest value]
\label{prop:neutral-participation}
Fix $0\le\beta<1$ and suppose $0\le a\le d<1$. Maintain the
technology specified in this subsection and
\Cref{ass:neutral-curvature}. The optimal posterior distribution is unique
at every $a<m<1$. It consists of a regular active distribution and,
possibly, an excluded atom at $a$. The active distribution has a
connected strictly separating interval followed by a positive,
efficiently served atom at one. There is no lower
trading atom, and exclusion may occur only at a payoff-neutral type. Moreover, there are a critical mean $m_{a,\beta}^c\in(d,1)$ and a regular
active distribution $G_{a,\beta}^c$ of that mean such that, for
$a<m\le m_{a,\beta}^c$,
\[
 G_{m,\beta}^*=\left(1-\frac{m-a}{m_{a,\beta}^c-a}\right)\delta_a
                +\frac{m-a}{m_{a,\beta}^c-a}G_{a,\beta}^c.
\]
For $m>m_{a,\beta}^c$, the optimal distribution is regular and every buyer
trades. Finally, $V_{a,\beta}$ is concave in the prior mean.
\end{proposition}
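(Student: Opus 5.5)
Taking as given the reduction stated just before the proposition (every feasible $G$ is weakly improved by a mixture of a regular active distribution and an excluded atom at $a$), the problem is \eqref{eq:neutral-ratio-main}. Define the unconstrained ratio problem
\[
 \rho^*:=\sup_{G_A\text{ regular, active}}\frac{J_\beta(G_A)}{M_A-a},
\]
and let $m^c_{a,\beta}$ be the mean of a maximizer. The plan has five parts. First, show this supremum is attained and the maximizer $G^c_{a,\beta}$ is unique. Second, show every optimum of \eqref{eq:neutral-ratio-main} has the claimed structure and is unique at each $m$. Third, read off the piecewise form of $G^*_{m,\beta}$. Fourth, establish concavity of $V_{a,\beta}$. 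Fifth, check the strict inequalities $d<m^c_{a,\beta}<1$.

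\textbf{Existence.} I would use weak compactness of distributions on $[a,1]$ and continuity of $J_\beta$. I would also show the denominator stays away from zero near optima. As $M_A\downarrow d$, $J_\beta(G_A)\to0$ faster than $M_A-a$ when $a<d$, and at the same rate in the relevant direction when $a=d$. So maximizing sequences cannot degenerate.

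\textbf{Structure of the ratio maximizer.} Fix the mean $M_A$. The ratio maximizer then maximizes $J_\beta$ among regular active distributions of that mean. This is the binary problem with no lower atom, so the variational machinery behind \Cref{prop:separating-conditions} applies, with $\xi$ absent or replaced by the active-participation constraint. It yields a separating interval solving the ODE system and ending at an efficiently served atom at one.

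Where Assumption~\ref{ass:curvature} was used, I would instead use \Cref{ass:neutral-curvature}. Under the change of variables $x=q(z)$, the condition $xc'''/c''>-2$ is equivalent to $(zq(z))''$-type concavity, specifically strict concavity of $z\mapsto (z-d)q(z)$ or of the rent term along the path. That concavity is what makes the Euler equation solvable for $\varphi$ and the path monotone. Stationarity in $M_A$ of the ratio adds the condition $\theta=\rho^*$, the shadow value of mean equal to the average payoff per unit mean. This pins down the terminal $\theta$.

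\textbf{Uniqueness (main obstacle).} Uniqueness of the backward ODE for each $\theta$ leaves a one-parameter family indexed by the stopping point. I expect to need a single-crossing argument showing that the ratio is strictly quasiconcave along this family under \Cref{ass:neutral-curvature}. The approach I would try first is to write $J_\beta$ and $M_A$ as integrals along the path in $\tau=-\log(1-u)$ and show that $\frac{\dd}{\dd\tau}$ of the ratio changes sign once. Two multiplier values that both satisfy $\theta=\rho^*$ would have to coincide, by comparing the paths from the common terminal state $(A,Q,\varphi)=((1-\beta)\bar q,1,1)$.

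\textbf{The piecewise optimum.} For $m\le m^c$ the constraint $M_A\ge m$ is slack, giving the mixture formula with $\lambda=(m-a)/(m^c-a)$. For $m>m^c$, I would show the constraint binds, so $\lambda=1$ and every buyer trades. The argument is that the constrained value $\max_{M_A=M}J_\beta/(M-a)$ is strictly decreasing in $M$ beyond $m^c$, using quasiconcavity again. Uniqueness in this regime follows from the fixed-mean uniqueness above.

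\textbf{Concavity of $V_{a,\beta}$.} On $(a,m^c]$, $V_{a,\beta}$ is linear with slope $\rho^*$. Beyond $m^c$, $V_{a,\beta}(m)=\max\{J_\beta(G):G\text{ regular active},\ M_G=m\}$. Mixing two optima at means $m_1,m_2$ with a public lottery gives an average payoff that a single market can at least match. I would justify this through the ratio formulation, by showing $V_{a,\beta}(m)/(m-a)$ and its slope satisfy the envelope condition. Failing that, I would use the first-order variation in $m$: $V'_{a,\beta}(m)=\theta(m)$, and $\theta$ is nonincreasing in $m$ by a monotone comparative statics argument on the ODE family. Continuity of the slope at $m^c$ is $\theta=\rho^*$.

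\textbf{Bounds on $m^c$.} I would verify $d<m^c<1$ directly. $m^c>d$ is needed for any gains from trade, and $m^c<1$ holds because the separating interval is nondegenerate.
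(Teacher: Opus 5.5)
Your outline has a genuine gap. You work only with fixed-mean (or fixed-ratio) problems, where the multiplier $\theta$ on the mean is part of the solution. Backward uniqueness of the ODE for a \emph{given} $\theta$ therefore does not rule out two optima at the same mean $m>m^c_{a,\beta}$ with different multipliers. Nothing in your plan shows that a fixed-mean optimum is supported by a price at all.

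As a result, three of your claims are unsupported: that uniqueness above $m^c_{a,\beta}$ ``follows from the fixed-mean uniqueness above''; that the constraint $M_A\ge m$ binds (through a quasiconcavity of $M\mapsto\max_{M_A=M}J_\beta/(M-a)$ that you never prove); and that $V_{a,\beta}$ is concave. The lottery argument for concavity is circular. A single market screens a mixture of two posterior distributions with one menu, so its payoff is not the average, and that one market can still match the average is exactly what concavity asserts. ``$\theta(m)$ nonincreasing by monotone comparative statics'' restates the conclusion rather than proving it.

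The missing device is the \emph{priced} problem $\max_\varphi\{J_\beta(\varphi)-\theta M(\varphi)\}$ over all monotone virtual-value profiles, with no mean constraint.
\begin{itemize}
\item Its bottom is free, so $A(0)=0$. Deleting the bottom ranks gives $\theta\le q(\varphi(0))$, hence $\dot A=x-\theta>0$ along separation. So the backward path and its duration are unique, and each price has at most one nonzero maximizer.
\item Compactness then gives continuity of the selected mean in $\theta$. Adding the optimality inequalities at two prices, plus the fact that a common maximizer would have Euler equations differing by $(\theta_2-\theta_1)[-\log(1-u)]$, makes the mean strictly decreasing, tending to one as $\theta\to-\infty$.
\item For $m>m^c_{a,\beta}$, the profile selected at the price $\theta_m$ earns strictly more than $\theta_m$ per unit of mean above $a$. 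Optimality at $\theta_m$ then shows every profile with a larger mean has a strictly lower ratio, so the constraint binds and the optimum is unique.
\item Concavity is immediate: $V_{a,\beta}(m)=\inf_{\theta\le\rho}\{(1-d)\widehat F_\beta(\theta)+\theta(m-d)\}$ is an infimum of affine functions of $m$. Here $\widehat F_\beta$ is the priced value after normalizing $c'(0)$ to zero and $\rho$ is the critical ratio.
\item The same device identifies $G^c_{a,\beta}$ as the unique nonzero maximizer at the critical price, replacing your unproven single-crossing argument along stopping points.
\item The bound $\theta\le q(\varphi(0))$ also keeps the means of maximizers at prices approaching the critical one bounded away from zero. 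That is the attainment your rate argument for $a=d$ leaves vague.
\end{itemize}

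Separately, you cannot import \Cref{prop:separating-conditions}: it rests on \Cref{ass:curvature}, which is not assumed here. Your account of how \Cref{ass:neutral-curvature} substitutes for it is also incorrect. The condition $xc'''/c''>-2$ is not a concavity property of $z\mapsto(z-d)q(z)$; under quadratic cost, which satisfies it, $zq(z)=z^2$ is convex.

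What the condition delivers, after normalizing $c'(0)$ to zero, concerns the Euler equation in quality coordinates, $Q=c'(x)+c''(x)(\alpha x-A)$ with $\alpha=1-\beta$. It is strictly increasing in $x$ because $(1+\alpha)c''(x)+(\alpha x-A)c'''(x)>0$, but only on the region $0\le A\le\alpha x$. Staying in that region requires $A\ge0$, which again comes from $A(0)=0$ and $x>\theta$ in the free-bottom priced problem. The assumption also yields $q(z)/q'(z)\to0$ as $z\downarrow0$, which is what rules out a zero bottom virtual value. Without these steps, you have neither the exclusion of quality jumps nor the implicit-function step behind backward uniqueness.
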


A higher active mean may increase the designer's objective when trade
occurs, but Bayes plausibility forces a lower active probability. For low prior means, the
best policy is to choose the active distribution that delivers the
highest weighted payoff per unit of mean above $a$ and then vary
only its probability. This yields the distribution
$G_{a,\beta}^c$ as the conditional active distribution below the critical mean. At $m_{a,\beta}^c$, the
active probability reaches one. Above this threshold, the optimum has full participation.
Further increases in $m$ are accommodated by changes in
the posterior distribution itself, while all buyers
continue to trade.

This structure drives the shape of the value
function. \Cref{prop:neutral-weighted-value} shows that
$V_{a,\beta}(m)$ is linear on $[a,m_{a,\beta}^c]$ and strictly
concave for $m>m_{a,\beta}^c$. The linear part reflects pure scaling with $m-a$.
Above the critical mean, $M_A\ge m$ binds; strict concavity follows because
the marginal value of additional mean strictly decreases. This marginal value is positive at the
critical mean and negative for means sufficiently close to one.
Thus, for every $0\le\beta<1$, the objective rises to a unique
maximum at a mean strictly between $m_{a,\beta}^c$ and one,
and then declines. A higher prior mean initially permits more
trade without changing the conditional menu. Near one, however,
Bayes plausibility forces posterior values to concentrate near
the highest valuation, limiting the scope for differentiation
that generates information rents. At $m=1$, all posterior values
equal one and the seller extracts the entire surplus. Figure \ref{fig:mean-value-common-optimum} below illustrates.

This also identifies the connection with the unrestricted
market-composition problem in \citet{kyriazis2026markets}.
Holding the cost, type interval, and profit weight fixed, allowing
the designer to also choose the mean gives
\[
 \max_{a<m<1}V_{a,\beta}(m)
 =
 \max_{G\in\Delta([a,1])}J_\beta(G).
\]
Every distribution on $[a,1]$ is feasible in the binary problem
at its own mean. At the unique mean maximizing $V_{a,\beta}$, the optimal posterior
distribution coincides with the optimal composition in \citet{kyriazis2026markets};
every other mean entails a strict loss. This mean differs from $m_{a,\beta}^c$:
the critical distribution maximizes payoff per unit of mean above $a$,
whereas unrestricted composition maximizes payoff itself.

 For
a fixed prior $H$ and cost function, the \emph{surplus set} is
$\mathcal P(H):=
 \{(\CS(G),\Pi(G)):G\in\mathcal G(H)\}$.
Its Pareto frontier consists of the pairs $(C,P)\in\mathcal P(H)$
for which there is no $(C',P')\in\mathcal P(H)$ satisfying
$C'\ge C$ and $P'\ge P$, with at least one inequality strict.
With a payoff-neutral lowest type, we characterize this frontier
and its generating posterior distributions. Fix $a<m<1$.
For $0\le\beta<1$, let $G_{m,\beta}^*$ be the unique optimum
in \Cref{prop:neutral-participation}, and write
$(C_\beta,P_\beta)=(\CS(G_{m,\beta}^*),\Pi(G_{m,\beta}^*))$.
At $\beta=1$, set $G_{m,1}^*=H$, the full-revelation
posterior distribution. It maximizes both total surplus and profit,
and gives
\[
 (C_1,P_1)=\left(0,\frac{m-a}{1-a}\pi(1)\right).
\]
The low type receives zero quality, and the high type receives
efficient quality with full surplus extraction. 

\begin{proposition}[The payoff-neutral Pareto frontier]
\label{thm:neutral-frontier}
\label{cor:neutral-frontier}
Under the hypotheses of \Cref{prop:neutral-participation}, fix
$a<m<1$. A feasible posterior distribution generates a
Pareto-efficient surplus pair if and only if it is $G_{m,\beta}^*$
for some $\beta\in[0,1]$. Each efficient pair is generated by a
unique posterior distribution.
For $\beta<1$, these distributions have the structure in
\Cref{prop:neutral-participation}. The distribution at $\beta=1$ is $H$.
The payoff path $(C_\beta,P_\beta)$ is continuous on $[0,1]$,
with consumer surplus weakly decreasing and profit weakly increasing
in $\beta$. It corresponds to the Pareto frontier of the surplus set, and
every feasible surplus pair is weakly dominated by a point on this path.
\end{proposition}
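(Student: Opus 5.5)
The frontier characterization will come from a scalarization argument built on the uniqueness in \Cref{prop:neutral-participation}, together with a separate treatment of the endpoint $\beta=1$. The feasible set $\mathcal G_{a,m}$ is weakly compact and $\CS,\Pi$ are weakly continuous, so $\mathcal P(H)$ is compact. The proof proceeds in five steps.

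\emph{Step 1: every $G^*_{m,\beta}$ is efficient.} Fix $\beta\in[0,1)$. If some feasible $G$ weakly dominated $G^*_{m,\beta}$, then $J_\beta(G)\ge J_\beta(G^*_{m,\beta})$. Uniqueness of the maximizer forces $G=G^*_{m,\beta}$, so the payoff pair is efficient and is generated by a unique distribution. Any distribution generating the same pair attains the same $J_\beta$, hence equals $G^*_{m,\beta}$.

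For $\beta=1$, $H$ maximizes total surplus and gives $P_1=\frac{m-a}{1-a}\pi(1)$. This is the largest attainable profit, since $\Pi\le\TS\le\TS(H)$. Any $G$ with $\Pi(G)=P_1$ therefore has $\CS(G)=0$ and $\TS(G)=\TS(H)$. To show such a $G$ equals $H$, I would use strict convexity of $\pi$ on $(d,1]$ and $\pi=0$ below $d$. Together with the mean constraint, first-best total surplus $\int\pi\,\dd G$ is maximized uniquely by the extreme split $H$, since any non-extreme mass strictly lowers $\int\pi$ by Jensen with $a\le d$. Full extraction then requires no ironing distortion.

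\emph{Step 2: monotonicity and continuity of the path.} Revealed preference gives the comparisons. Take $\beta<\beta'$ and optimality of each:
\[
\CS_\beta+\beta P_\beta\ge \CS_{\beta'}+\beta P_{\beta'},\qquad \CS_{\beta'}+\beta' P_{\beta'}\ge \CS_\beta+\beta' P_\beta .
\]
Adding these gives $(\beta'-\beta)(P_{\beta'}-P_\beta)\ge0$, and substituting back yields $C_{\beta'}\le C_\beta$. For continuity, uniqueness of the maximizer and Berge's maximum theorem on the compact set $\mathcal G_{a,m}$ with jointly continuous $J_\beta$ imply $\beta\mapsto G^*_{m,\beta}$ is weakly continuous on $[0,1)$, hence so are the payoffs. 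At $\beta\uparrow1$, every weak limit point of $G^*_{m,\beta}$ maximizes $J_1=\TS$, which by Step 1 forces the limit to be $H$.

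\emph{Step 3: every efficient pair lies on the path.} The key is concavity of the frontier. I would show that $f(C):=\max\{\Pi(G):\CS(G)\ge C\}$ is concave on $[0,C_0]$. Valuation-independent averaging is not available inside a single market, so I would argue differently: the path $\beta\mapsto(C_\beta,P_\beta)$ is continuous and monotone from $(C_0,P_0)$ to $(0,P_1)$, so it sweeps every $C\in[0,C_0]$. Given an efficient pair $(C,P)$, choose $\beta$ with $C_\beta=C$. If $P_\beta<P$, then $(C,P)$ gives $J_\beta$ strictly larger than the optimum, which is impossible, so $P\le P_\beta$. Efficiency of $(C,P)$ then forces $P=P_\beta$, and uniqueness from Step 1 identifies the distribution. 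The case $C>C_0$ is impossible because $C_0$ is maximal consumer surplus. The same argument shows every feasible pair is weakly dominated: for $(C,P)$, pick $\beta$ with $C_\beta=C$ and obtain $P\le P_\beta$. The concavity claim is therefore not actually needed; intermediate values plus revealed preference suffice.

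\emph{Main obstacle.} I expect the hardest part to be the $\beta\to1$ endpoint, namely continuity at $\beta=1$ and uniqueness of the total-surplus maximizer. This needs care when $a<d$: types in $[a,d]$ all give zero surplus yet use different amounts of mean, and \Cref{prop:neutral-participation} only covers $\beta<1$. I would handle it with the strict Jensen argument on $\int\pi\,\dd G\le\TS$ described in Step 1.
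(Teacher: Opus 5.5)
Your argument is correct and takes essentially the paper's route. The ingredients are the same: strict domination of $\pi$ by the chord from $(a,0)$ to $(1,\pi(1))$, which makes $H$ the unique total-surplus and profit maximizer at $\beta=1$; revealed-preference monotonicity; Berge continuity from uniqueness; and an intermediate-value sweep of the path. You sweep in consumer surplus ($C_\beta=C$), whereas the paper sweeps in profit ($P_\beta=p$), so its weight is automatically positive when $p>P_0$.

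One small repair is needed. When $C=C_0$, the only available weight may be $\beta=0$. There $J_0$ is just consumer surplus and carries no information about profit, so your step ``$P_\beta<P$ would make $J_\beta$ exceed the optimum'' fails. In that case invoke uniqueness of the buyer optimum instead, to get $G=G^*_{m,0}$ and $P=P_0$.
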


Uniqueness makes the payoff continuous in the profit weight. The
path therefore contains every profit between $P_0$ and $P_1$.
At each such profit, the corresponding weighted objective bounds
consumer surplus for all other feasible distributions. Payoffs with
profit below $P_0$ are weakly dominated by the buyer-optimal payoff.
Thus, the path covers the whole frontier. Uniqueness also identifies each
generating distribution (Appendix~\ref{app:neutral-frontier-proof}), although
the supporting weight need not be unique. 


Figure~\ref{fig:global-fixed-mean-frontiers} compares the fixed-mean
frontiers in \Cref{thm:neutral-frontier} with the unrestricted
market-composition frontier in \citet{kyriazis2026markets}, for
$a=0$ and $c(x)=x^2/2$. Each broken curve fixes the prior mean
and varies the profit weight. The solid curve also optimizes over
the mean. For each $\beta<1$, let $m_\beta^*$ denote the unique
mean maximizing $V_{0,\beta}$. At this mean,
$V_{0,\beta}(m_\beta^*)
=\max_{G\in\Delta([0,1])}J_\beta(G)$,
so the two problems select the same posterior distribution and
their frontiers meet at its surplus pair. The dots show three
such points. Point A is the unrestricted buyer optimum, with
$m_0^*\simeq0.85267$. B and C occur at positive profit weights,
where the two frontiers share a supporting line and are tangent.  Figure~\ref{fig:mean-value-common-optimum} illustrates $V_{0,\beta}(m)$ for $\beta\approx 0.31736$.

\begin{figure}[tbp]
\centering
\captionsetup{font=small,skip=4pt}
\includegraphics[
        width=\linewidth,
        height=0.4\textheight,
        keepaspectratio
    ]{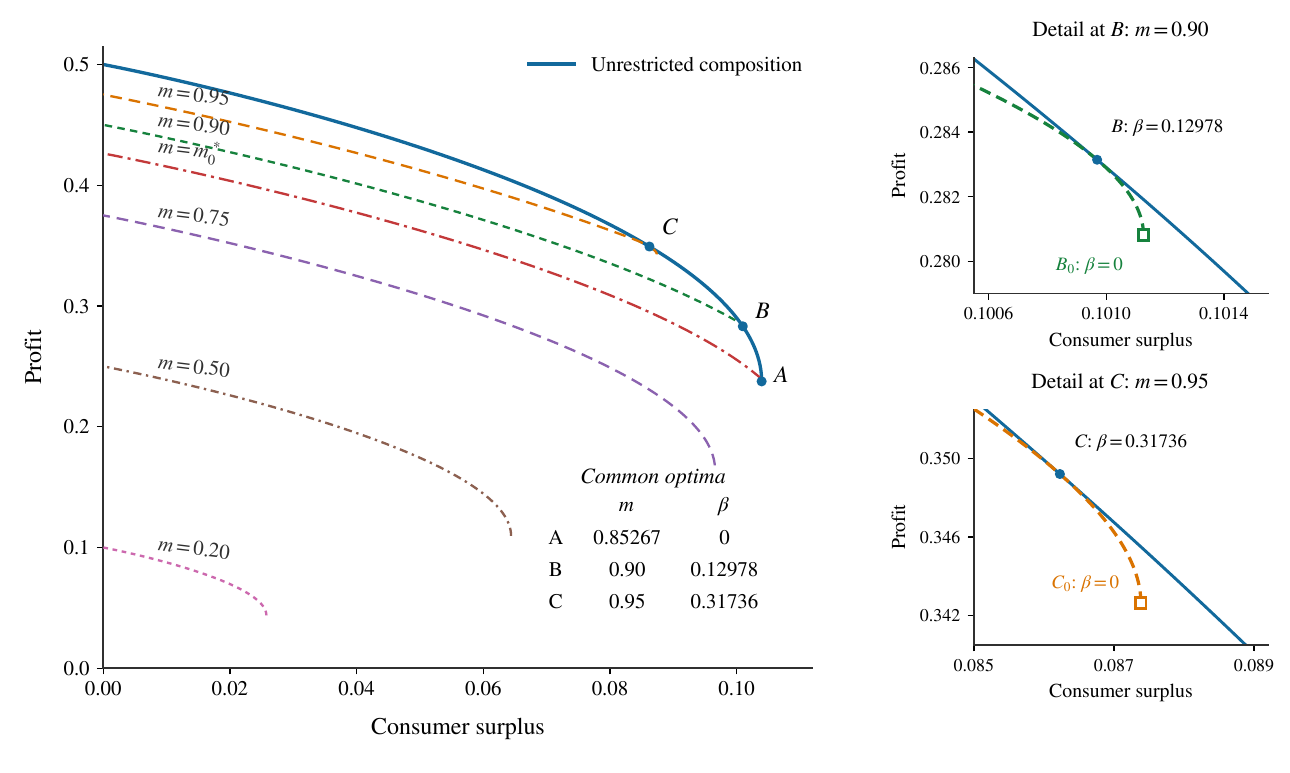}
\caption{Fixed-mean and unrestricted Pareto frontiers under quadratic
cost, with $a=0$. Each broken curve gives the single-market frontier
for the binary prior $(1-m)\delta_0+m\delta_1$ at its labelled mean.
Moving from its left endpoint to its right endpoint lowers $\beta$
from one to zero. The solid curve gives the frontier when the
distribution on $[0,1]$ can be chosen without fixing its mean \citep{kyriazis2026markets}.
Points A, B, and C mark common optimal surplus pairs, with the
corresponding means and weights reported in the table. Point A is
the unrestricted buyer optimum. Points B and C are tangencies at
strictly positive profit weights. The enlarged panels show the
continuations beyond B and C to the distinct buyer-optimal endpoints
for the respective fixed means, $B_0$ and $C_0$, where $\beta=0$.
These endpoints are marked by open squares.}
 \label{fig:global-fixed-mean-frontiers}
\end{figure}

\begin{figure}[htbp]
 \centering
 \captionsetup{font=small,skip=4pt}
 \includegraphics[
        width=\linewidth,
        height=0.3\textheight,
        keepaspectratio]{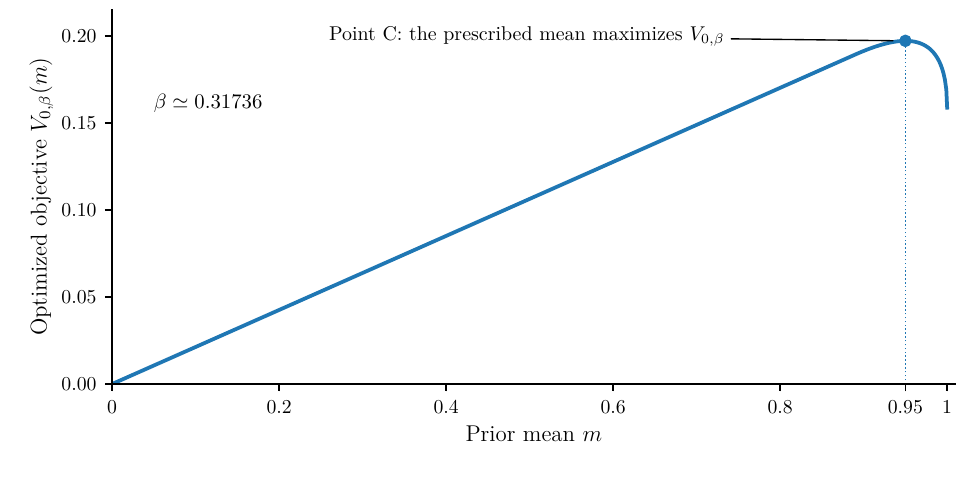}
 \caption{With $a=0$,
 $c(x)=x^2/2$, and $\beta\simeq0.31736$, the value
 $V_{0,\beta}(m)$ is uniquely maximized at $m=0.95$. Choosing
 the posterior distribution at this mean therefore attains the
 unrestricted market-composition value. This corresponds to  point C in
 Figure~\ref{fig:global-fixed-mean-frontiers}.}
 \label{fig:mean-value-common-optimum}
\end{figure}

\subsubsection{High prior means}
\label{subsec:high-mean}

Return to $c'(0)=0$ and the assumptions of
\Cref{thm:binary-structure}. Even with a productive lowest value,
neither exclusion nor a lower trading atom is needed when the prior
mean is sufficiently high, uniformly over any closed interval of profit weights bounded away from one.

\begin{proposition}[Optimal learning with high prior means]
\label{prop:high-mean}
Fix $0\le a<1$ and $0\le\bar\beta<1$. Maintain
\Cref{ass:cost} and suppose \Cref{ass:curvature} holds at weight
$\bar\beta$. There is $\bar m\in(a,1)$ such that, for every
$m\in(\bar m,1)$ and every $\beta\in[0,\bar\beta]$, the
posterior distribution maximizing $J_\beta$ in $\mathcal G_{a,m}$
is unique and regular and every buyer trades. Its lowest posterior value is strictly above
$a$, and it consists of a nondegenerate connected separating
interval followed by a positive, efficiently served atom at one.
This distribution also uniquely maximizes $J_\beta$ in
$\mathcal G_{0,m}$. In particular,
$V_{a,\beta}(m)=V_{0,\beta}(m)$, for $m\in(\bar m,1)$ and $\beta\in[0,\bar\beta]$.
For each such $m$, the optimal distribution is weakly continuous
in $\beta$. Its consumer surplus is weakly decreasing and its
profit weakly increasing in $\beta$.
\end{proposition}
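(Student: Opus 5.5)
We compare the binary problem on $[a,1]$ with the relaxed problem on $[0,1]$ at the same mean, where the lowest prior value $0$ is payoff neutral. Since $\mathcal G_{a,m}\subseteq\mathcal G_{0,m}$, we always have $V_{a,\beta}(m)\le V_{0,\beta}(m)$. Under \Cref{ass:cost,ass:curvature}, the relaxed problem has $a=d=0$. For that problem we would like to invoke \Cref{prop:neutral-participation}, which requires \Cref{ass:neutral-curvature}. That assumption is not among our hypotheses, so we cannot simply cite uniqueness there.

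Instead, we would apply \Cref{thm:binary-structure} to the relaxed problem with $a=0$. This is legitimate because $0\le a<m$ is allowed. Every optimum of the relaxed problem then has one of three lower parts: an excluded atom at $0$, a lower trading atom, or neither. The plan is to show that for $m$ close to one only the third case can occur, and that the resulting optimum has lowest posterior value strictly above $a$. Such a distribution is then feasible in $\mathcal G_{a,m}$ and optimal there. Conversely, any optimum in $\mathcal G_{a,m}$ attains $V_{0,\beta}(m)$, so it is also a relaxed optimum. Both uniqueness statements therefore reduce to uniqueness in the relaxed problem.

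The key step is a uniform concentration estimate. Let $G$ be optimal in $\mathcal G_{0,m}$ with $\beta\in[0,\bar\beta]$. A mass $\ell$ at or below a value $v<1$ uses $\ell(1-v)$ of the slack $1-m$, so $\ell\le(1-m)/(1-v)$. Suppose the lower part is an excluded atom at $0$, or a trading atom at $v\le a$. We would exhibit a mean-preserving competitor that does strictly better. One candidate is an explicit regular distribution on $[a',1]$, with $a'$ near one, consisting of a short separating interval followed by the atom at one. Its consumer surplus scales like $(1-m)$ times a positive constant, because the information rent per unit of dispersion is near $\bar q$ at the top. In contrast, an excluded atom at $0$ wastes mean: it forces a conditional upper mean $M=(m-\ell\cdot 0)/(1-\ell)$ even closer to one. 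Using the Euler system of \Cref{prop:separating-conditions}, and its smooth dependence on $\theta$, we would show the following. As $M\to1$, the separating path shrinks to a neighborhood of $(1,1)$. The payoff per unit of slack, $(1-m)$, then converges to a limit rate $r(\beta)$ attained only by placing all slack near the top. Any atom at distance bounded below from one yields a strictly lower rate, since its quality is bounded away from $\bar q$ by \eqref{eq:lower-quality-comparison} and the accounting of \Cref{prop:mean-accounting}. The comparison must hold uniformly in $\beta\in[0,\bar\beta]$. We would obtain this by compactness, since \Cref{ass:curvature} at $\bar\beta$ implies it at every smaller $\beta$ (the coefficient $2-\beta$ is decreasing). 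We expect this blow-up analysis near $m=1$ to be the main obstacle. The plan is to rescale $y\mapsto(1-y)/(1-m)$ and pass to a limiting problem whose optimum has no lower atom and has bounded rescaled support. For $m$ close to one, this gives lowest support $1-O(1-m)>a$ and rules out both atoms.

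Uniqueness would come from the limiting problem having a unique solution. By the Euler ODE, for each $\theta$ the path is unique, so the remaining freedom is the single parameter $\theta$ together with the top-atom mass. Under \Cref{ass:curvature}, we would show that the value is strictly concave in this parameter near the limit, perturbing uniformly for $m>\bar m$. With uniqueness in hand, the remaining claims follow from standard arguments. Berge's maximum theorem, using weak compactness of $\mathcal G_{a,m}$ and continuity of $J_\beta$, gives weak continuity in $\beta$. The usual revealed-preference argument gives the monotonicity: comparing optima at $\beta<\beta'$ yields $(\beta'-\beta)(\Pi'-\Pi)\ge0$, so profit is weakly increasing. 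Consumer surplus is then weakly decreasing, by the same pair of inequalities.
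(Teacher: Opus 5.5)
Your reduction to $\mathcal G_{0,m}$ and your closing paragraph (Berge plus revealed preference) are fine. The gap is the central step: it rests on an asymptotic picture that is false, so the blow-up cannot be carried out as planned. Near $m=1$ the optimum is not concentrated at scale $1-m$.

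Take $c(x)=x^2/2$ and $\beta=0$. The high-mean optimum is the regular continuation of \Cref{lem:mult-quadratic-smooth-tail} with $\xi=0$. As $t\downarrow0$, its formulas give $\theta\sim-1/(2t)$, $1-M\sim t^2/3$, $1-B\sim t/2$, $\varphi(0)\to1/2$ and $\CS\sim t/3$. Three of your claims therefore fail.
\begin{itemize}
\item Since $1-B\asymp\sqrt{1-m}$, the support rescaled by $y\mapsto(1-y)/(1-m)$ is unbounded, not bounded.
\item The payoff per unit of slack grows like $(1-m)^{-1/2}$ instead of converging to a rate $r(\beta)$. This reflects a marginal value of mean that tends to $-\infty$.
\item The separating path does not shrink to $(1,1)$. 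Only its rank duration vanishes, while virtual values still sweep an interval of length of order one.
\end{itemize}
More generally, at this optimum the bottom is free, so $A(0)=\xi=0$. The bottom Euler equation then gives $B-\varphi(0)=(1-\beta)q(\varphi(0))/q'(\varphi(0))$, which keeps $\varphi(0)$ bounded away from one as $B\to1$. Hence your criterion that a lower atom's quality is bounded away from $\bar q$ cannot separate atoms from the bottom of the separating region. There is also no nondegenerate limit problem at your scale. Your uniqueness step (strict concavity in a single parameter) is likewise only asserted, and it ignores the lower-attachment multiplier $\xi$.

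The missing idea is to free the mean with a negative price instead of rescaling. For $\theta<0$, maximize $J_\beta(G)-\theta\E_G[Y]$ over $\Delta([0,1])$.
\begin{itemize}
\item Regularization changes this objective by $\int_0^1[Q^r-Q][q(\phi_Q)-\theta]\,\dd u$, which is strictly positive unless $Q=Q^r$ almost everywhere. So every maximizer is regular, which removes exclusion and lower trading atoms at once, with no asymptotics.
\item A maximizer solves the fixed-mean problem at its own mean, so \Cref{thm:binary-structure} with $a=0$ and \Cref{prop:separating-conditions} apply. Because the bottom is free, $\xi=0$ and hence $A(0)=0$. Backward uniqueness together with $\dot A=q(\varphi)-\theta>0$ then pins down the duration, giving a unique maximizer $G_\theta$.
\item Integrating $\dot A\ge-\theta$ gives $T_\theta\le(1-\beta)\bar q/(-\theta)$ and $B_\theta\ge e^{-T_\theta}$. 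This is the correct concentration estimate.
\item Uniqueness makes $G_\theta$ weakly continuous in $(\beta,\theta)$. Adding the two optimality inequalities and comparing Euler equations shows that $m_{\beta,\theta}$ is strictly decreasing in $\theta$. Comparison with $\delta_1$ gives $m_{\beta,\theta}\to1$ as $\theta\to-\infty$.
\item Choose $\theta_0$ with $\exp[-\bar q/(-\theta_0)]>a$ and set $\bar m=\max_{\beta\in[0,\bar\beta]}m_{\beta,\theta_0}$; this threshold is uniform in $\beta$. For $m>\bar m$, any maximizer at mean $m$ also maximizes the priced problem at the unique $\theta<\theta_0$ with $m_{\beta,\theta}=m$. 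It therefore equals $G_\theta$, which lies in $\mathcal G_{a,m}$.
\end{itemize}
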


Exclusion concentrates trade among buyers with more favorable beliefs,
but reduces participation.
With an excluded atom at $a$ and active probability
$\lambda<1$, Bayes plausibility gives active mean
$M_A=a+(m-a)/\lambda>m$. Near mean one, Bayes plausibility limits the posterior dispersion
that generates information rents. The proof shows that, for these weights,
the maximal objective is strictly decreasing in mean sufficiently near one.
Exclusion therefore replaces the market at mean $m$ by an active market with
a lower maximal payoff, obtained with probability less than one, and is strictly suboptimal. The same argument rules out a lower trading atom. As Section~\ref{subsec:mean-adjustment} shows,
posterior values on the atom can be raised while preserving virtual values and profit
and increasing consumer surplus. Such an optimum would yield a higher payoff at
a higher mean, contradicting the decline of the maximal objective.
On $[0,1]$, the proof uniquely identifies the remaining regular optimum
and shows that its lowest posterior value approaches one as $m$ approaches one.
Its support eventually lies above any fixed $a<1$, so reimposing the lower
support bound changes neither the distribution nor its payoff.

The curvature condition at $\bar\beta$ implies the condition at
every smaller weight. Continuity of the distributions selected
in the proof, together with compactness of $[0,\bar\beta]$,
gives a common mean threshold for all these weights. We can therefore hold the
prior fixed and vary the profit weight without introducing a
separate lower group.

For a prior satisfying the assumptions of the proposition, write $G_{m,\beta}^*$ for the unique optimum and
$(C_\beta,P_\beta)=(\CS(G_{m,\beta}^*),\Pi(G_{m,\beta}^*))$,
as in the payoff-neutral benchmark. The following result characterizes all efficient pairs with profit between $P_0$ and $P_{\bar\beta}$.

\begin{corollary}[The Pareto frontier at high prior means]
\label{cor:high-mean-frontier}
Under the hypotheses of \Cref{prop:high-mean}, fix
$m\in(\bar m,1)$. The path $(C_\beta,P_\beta)$,
$0\le\beta\le\bar\beta$, is exactly the portion of the
Pareto frontier of the surplus set with profit in $[P_0,P_{\bar\beta}]$.
For every $p$ in this interval, some $\beta\in[0,\bar\beta]$
satisfies $P_\beta=p$, and
\[
 \max_{\substack{G\in\mathcal G_{a,m}\\\Pi(G)\ge p}}\CS(G)
   =C_\beta=V_{a,\beta}(m)-\beta p.
\]
Each pair on this path is generated by a unique posterior
distribution. For any profit $p\le P_0$, maximal
consumer surplus is $C_0$.
\end{corollary}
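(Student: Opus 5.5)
The corollary is a standard Lagrangian duality argument built on \Cref{prop:high-mean}; it does not require reopening the structural analysis. Fix $m\in(\bar m,1)$.

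\textbf{Step 1: the path is continuous and monotone.} By \Cref{prop:high-mean}, for each $\beta\in[0,\bar\beta]$ the maximizer $G_{m,\beta}^*$ of $J_\beta$ on $\mathcal G_{a,m}$ is unique, and it is weakly continuous in $\beta$. Since $\CS$ and $\Pi$ are weakly continuous, the maps $\beta\mapsto C_\beta$ and $\beta\mapsto P_\beta$ are continuous on $[0,\bar\beta]$. By the same proposition, $C_\beta$ is weakly decreasing and $P_\beta$ weakly increasing. The intermediate value theorem then gives, for every $p\in[P_0,P_{\bar\beta}]$, some $\beta\in[0,\bar\beta]$ with $P_\beta=p$.

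\textbf{Step 2: the constrained maximum.} Fix such a $p$ and $\beta$. For any feasible $G$ with $\Pi(G)\ge p$, optimality of $G_{m,\beta}^*$ gives
\[
\CS(G)\le V_{a,\beta}(m)-\beta\Pi(G)\le V_{a,\beta}(m)-\beta p .
\]
The right-hand side equals $C_\beta+\beta P_\beta-\beta p=C_\beta$, and $G_{m,\beta}^*$ attains it. This proves the displayed identity.

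\textbf{Step 3: uniqueness.} Suppose a feasible $G$ with $\Pi(G)\ge p$ attains $\CS(G)=C_\beta$.
\begin{itemize}
\item If $\beta>0$, both inequalities in Step 2 bind, so $J_\beta(G)=V_{a,\beta}(m)$ and $G=G_{m,\beta}^*$ by uniqueness.
\item If $\beta=0$, then $G$ maximizes $J_0$ and uniqueness applies directly.
\end{itemize}
So $G_{m,\beta}^*$ is the unique distribution solving the profit-constrained problem at $p=P_\beta$. In particular, each pair on the path is generated by a unique posterior distribution: any $G$ with $(\CS(G),\Pi(G))=(C_\beta,P_\beta)$ solves that constrained problem.

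\textbf{Step 4: Pareto efficiency of the path.} Suppose a feasible $(C',P')$ weakly dominates $(C_\beta,P_\beta)$. Then $P'\ge P_\beta$ and, by Step 2, $C'\le C_\beta$, so $C'=C_\beta$. Step 3 then forces the generating distribution to be $G_{m,\beta}^*$, so $P'=P_\beta$. Hence there is no strict improvement, and the path lies on the frontier.

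\textbf{Step 5: the path exhausts the frontier in this profit range.} Let $(C,P)$ be a frontier pair with $P\in[P_0,P_{\bar\beta}]$, and pick $\beta$ with $P_\beta=P$. Step 2 gives $C\le C_\beta$, and efficiency of $(C,P)$ rules out $C<C_\beta$. So $(C,P)$ lies on the path.

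\textbf{Step 6: low profit requirements.} For $p\le P_0$, the constraint $\Pi\ge p$ is satisfied by $G_{m,0}^*$, which attains the unconstrained maximum $C_0=V_{a,0}(m)$ of consumer surplus. So the constrained maximum equals $C_0$.

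\textbf{Main obstacle.} The delicate point is Step 3 at $\beta=0$ together with flat segments of the path. When $P_\beta$ is constant on an interval of weights, several $\beta$ correspond to the same profit. The uniqueness claim must then be stated per pair rather than per weight, which is exactly how Step 3 argues. All substantive content (existence, uniqueness and continuity of optima on $[0,\bar\beta]$) is inherited from \Cref{prop:high-mean}.
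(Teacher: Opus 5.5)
Your proposal is correct and follows essentially the same argument as the paper. Both proofs take continuity and monotonicity of the path from \Cref{prop:high-mean}, apply the weighted-optimality bound $\CS(G)\le C_\beta+\beta[p-\Pi(G)]\le C_\beta$ at a weight with $P_\beta=p$, and use uniqueness of $G_{m,\beta}^*$ to pin down the equality cases.

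The only cosmetic difference is how efficiency is shown. You derive Pareto efficiency of every path point, including $\beta=0$, uniformly from uniqueness of the constrained maximizer. The paper instead treats positive weights through the weighted objective and handles the buyer-optimal endpoint separately through uniqueness.
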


The corollary characterizes only
a part of the frontier, since the threshold $\bar m$ may depend on $a$, the cost, and $\bar\beta$. It cannot in general be chosen uniformly over all $\beta<1$. Indeed, at every fixed $0<a<m<1$, $V_{a,\beta}(m)<V_{0,\beta}(m)$ for weights sufficiently close to one. The productive lower bound then entails a strict loss relative to when the lowest value is payoff-neutral. 


\section{Exclusion and Participation}
\label{sec:participation}

Section~\ref{sec:binary} characterizes the possible optimal
posterior distributions for a fixed binary prior. We now ask how participation changes
as the probability of the high valuation rises. The payoff-neutral
case gives a monotone comparison, whereas with a productive lowest
value a switch from full participation to exclusion is permitted. Define participation as
$T(G):=\Pr_G\{x(Y)>0\}$ under the seller's optimal menu. Throughout
this section, hold $a$ and the profit weight $0\le\beta<1$
fixed while increasing $m$. This raises the prior probability
$(m-a)/(1-a)$ of the high valuation, and the designer reoptimizes
the posterior distribution at each prior.

Under the hypotheses of \Cref{prop:neutral-participation}, the
payoff-neutral benchmark gives
\[
 T^*_{a,\beta}(m)=
 \min\left\{\frac{m-a}{m_{a,\beta}^c-a},1\right\}.
\]
When the prior mean is low, increasing it makes a fixed active market
more likely without changing its conditional menu. Once everyone
trades, further increases in the mean change the active distribution.
Participation is continuous and nondecreasing. With a productive
lowest value, however, it need not be monotone, since the designer may switch from
serving everyone to excluding a positive fraction, even though full
participation is again optimal at sufficiently high means.

\subsection{A productive lowest value}
\label{subsec:productive-exclusion}

Return to $c'(0)=0$ and fix $a>0$. With primitive values $a$ and one fixed, a higher prior mean shifts
probability toward the high valuation. Demand is thus more favorable
in the sense of first-order stochastic dominance. Every feasible posterior type now
has positive gains from trade, so exclusion is inefficient. A consumer-oriented designer may nevertheless prefer
it if the gains among the remaining buyers are large enough. The next
result shows that a more favorable prior can induce exactly this choice.

\begin{theorem}[Higher prior mean can reduce participation]
\label{thm:general-participation}
Fix $0\le\beta<1$. For every cost satisfying \Cref{ass:cost,ass:curvature}, there exist
$0<\bar a<m_H<1$ and $\varepsilon>0$ with the following property.
For each $a\in(0,\bar a)$, there is $\eta_a>0$ such that
$a+\eta_a<m_H$. Choose any $m_L\in(a,a+\eta_a)$, and let
$G_L$ and $G_H$ be any distributions maximizing $J_\beta$ at the priors $(a,m_L)$
and $(a,m_H)$, respectively. Then,
\[
 \begin{gathered}
 T(G_L)=1,\qquad T(G_H)\le1-\varepsilon,\qquad
 \Pi(G_H)>\Pi(G_L),\qquad \TS(G_H)>\TS(G_L).
 \end{gathered}
\]
\end{theorem}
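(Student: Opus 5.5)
The plan is a perturbation argument around the payoff-neutral benchmark $a=0$, where \Cref{prop:neutral-participation} gives exact control. Apply that proposition with $a=d=0$ and the fixed $\beta$. It yields a critical mean $m^c:=m^c_{0,\beta}\in(0,1)$. For every $m<m^c$ the unique optimum is $G^*_{m}=(1-m/m^c)\delta_0+(m/m^c)G^c_{0,\beta}$. Set $m_H:=m^c/2$. At $a=0$ the optimum $G^0_H$ then excludes mass $1/2$ and has
\[
\Pi(G^0_H)=\tfrac12\Pi(G^c_{0,\beta})>0,
\qquad
\TS(G^0_H)=\tfrac12\TS(G^c_{0,\beta})>0 .
\]
Positivity holds because the active distribution serves a positive atom at one efficiently. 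I would take $\varepsilon:=1/4$, with $\bar a$ to be chosen below.

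\emph{Step 1: the high-mean prior.} Feasible sets $\mathcal G_{a,m_H}$ vary continuously with $a$ in the weak topology, since a distribution on $[0,1]$ with mean $m_H$ can be pushed onto $[a,1]$ at small cost. $J_\beta$, $\CS$ and $\Pi$ are weakly continuous. By Berge's theorem and uniqueness at $a=0$, any maximizers $G^a_H$ at $(a,m_H)$ converge weakly to $G^0_H$ as $a\downarrow0$. This gives $\Pi(G^a_H)\to\Pi(G^0_H)$ and $\TS(G^a_H)\to\TS(G^0_H)$.

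Participation is not weakly continuous, so I would pass to revenue curves. Weak convergence of quantiles gives uniform convergence of the concave envelopes $R_{Q^a}\to R_{Q^0}$. By concavity, this also gives convergence of right derivatives at continuity points. Under $G^0_H$, every rank $u<1/2$ has strictly negative ironed virtual value: the excluded atom at $0$ has $z_L=-(1-u_L)B/u_L<0$. Hence $\phi_{Q^a}(u)<0$ on $[0,1/2-\delta]$ for small $a$. Those ranks are excluded, so $T(G^a_H)\le 1-\varepsilon$ for all $a<\bar a$.

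\emph{Step 2: the low-mean prior.} Fix $a\in(0,\bar a)$ and $m_L\in(a,a+\eta)$. Upper bounds on payoffs come from Bayes plausibility and Jensen's inequality, because $\pi$ is convex:
\[
\TS(G_L)\le \E_G\pi(Y)\le (1-p)\pi(a)+p\pi(1),
\qquad p=\frac{m_L-a}{1-a}.
\]
This bound is small once $\bar a$ and $\eta$ are small. Shrinking $\bar a$ if necessary, it therefore lies below the limits from Step 1, which gives $\Pi(G_H)>\Pi(G_L)$ and $\TS(G_H)>\TS(G_L)$.

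\emph{Step 3: full participation at the low mean (the main obstacle).} The remaining claim is $T(G_L)=1$ for $\eta=\eta_a$ small. By \Cref{thm:binary-structure}, exclusion can only take form (i): an atom at $a$ of mass $u_L$ with virtual value $z_L=a-(1-u_L)(B-a)/u_L\le0$. Since the upper part has mean at least $B$, this forces
\[
m_L-a\ge(1-u_L)(B-a)\ge a\,u_L ,
\]
so $u_L\le \eta/a$.

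I would then run a blow-up argument, rescaling posteriors as $(y-a)/(m_L-a)$. As $\eta\to0$, every posterior type is near $a$ except a mass-$O(\eta)$ upper tail. Qualities on the bulk are near $q(a)>0$, and the designer's problem becomes approximately linear in the rescaled distribution. In this limit, an excluded atom of mass $u_L=O(\eta/a)$ forgoes $u_L\,[\beta\pi(a)+\text{rents it would generate}]$. It gains only second-order room for the upper tail, because that room is valued at most at $\bar q$ times the freed mean $a u_L$ and must be weighed against the reduced rents.

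The delicate part is showing the trade-off strictly favours inclusion. I would first try the explicit deviation of \Cref{subsec:mean-adjustment}: raise the excluded atom slightly above $a$, which keeps it in trade at quality near $q(a)$. I would compensate the mean within the separating interval at marginal cost $\theta$, which is $O(\eta)$-close to the corresponding rent value, and show that the net first-order change is positive for $\eta<\eta_a$. Finally, I would shrink $\eta_a$ so that $a+\eta_a<m_H$.
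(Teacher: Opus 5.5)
Your architecture matches the paper's. You work near the payoff-neutral floor $a=0$ at a fixed higher mean, bound payoffs at the low mean by the first-best chord (your Step~2 is exactly the paper's bound), and prove full participation near the floor separately.

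The gap is Step~3, which carries the claim $T(G_L)=1$ and is not proved. The inequality $u_L\le\eta/a$ only makes the excluded mass small; it does not rule it out. The deviation you propose does not bring the excluded buyers into trade. The atom's ironed virtual value is $v-(1-u_L)(B-v)/u_L$, so if it is strictly negative, a small increase in $v$ leaves it negative, and quality stays zero rather than near $q(a)$. The blow-up heuristic does not substitute for an argument.

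The missing idea is a lower bound on the gap $B-a$ at \emph{any} optimum with exclusion, depending only on $a$ and the cost. The paper obtains it from the boundary conditions of the separating region (\Cref{lem:strict-optimal-exclusion}):
\begin{enumerate}
\item The excluded virtual value is strictly negative, so $sB>a$. Hence $B$ can be varied in both directions, which gives $\xi=0$.
\item Varying the active share gives $J=\theta(M-a)$.
\item The reduced-value identity \eqref{eq:reduced-continuation-value}, together with $\theta\le q(z)$, then yields $z\ge a$.
\item The bottom Euler equation with $A(0)=0$ gives $B-z=(1-\beta)q(z)c''(q(z))$, hence $B\ge a+(1-\beta)D_c(a)$.
\end{enumerate}
Only then does your chain close: $m-a=s(M-a)\ge a(B-a)/B\ge\eta_{c,\beta}(a)$ (\Cref{lem:full-trade-near-floor}).

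Step~1 has a separate hypothesis problem. \Cref{prop:neutral-participation}, and with it uniqueness at $a=0$ and the critical mean $m^c_{0,\beta}$, requires \Cref{ass:neutral-curvature}. That assumption is not implied by \Cref{ass:cost,ass:curvature}. The paper avoids uniqueness in \Cref{lem:small-neutral-priors}. By \Cref{lem:neutral-price-basics}, which uses only \Cref{ass:cost}, every profile attaining the critical ratio has bottom virtual value at least $z_*=c'(\rho_\beta)>0$. So for $m_H<z_*$, every optimum at $(0,m_H)$ excludes at least $1-m_H/z_*$ and earns profit at least $m_H\pi(z_*)$.

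Since these bounds hold for every limit optimum, your Berge argument works along subsequences. Your virtual-value convergence route to $T(G_H)\le1-\varepsilon$ is then a legitimate alternative to the paper's, which instead rules out full participation via $\Pi(G)\le b\pi(1)$ and then invokes \Cref{thm:binary-structure}. Your route only needs convergence of the concavified revenue curves on $(0,1)$. At $u=0$ the envelope equals the lowest support point, which is not weakly continuous, so the uniform convergence you assert fails there.
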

A higher prior probability of the high valuation can therefore reduce the probability of trade
at every optimum, even when information is chosen in consumers'
interests. Profit and total surplus nevertheless rise. 

\paragraph{Why everyone trades at low means.}
The designer cannot choose participation independently of the
seller's incentives. Let $s$ be the probability of the active group,
$B$ its lowest posterior value, and $M$ its mean. Serving an excluded
atom at $a$ with a small positive quality generates revenue from
that group, but also gives active buyers a new contract to imitate.
Per unit of quality, the direct revenue is $(1-s)a$, while the
additional rents required by active buyers are $s(B-a)$. Since
$c'(0)=0$, the seller gains from extending service if $a-sB>0$.
The proof also rules out equality, so any optimum with exclusion
must satisfy $sB>a$. In addition, the active distribution's
optimality conditions bound $B$ away from $a$. Exclusion thus
requires both a sufficiently large share of trading buyers
and posterior means for those buyers bounded away from $a$.
Since $m-a=s(M-a)$ and $M\ge B$, these requirements impose
a positive lower bound on $m-a$. When the prior mean is
sufficiently close to $a$, every optimum must therefore
have full participation.

\paragraph{Why exclusion becomes optimal at a higher mean.}
The proof first chooses a mean $m_H$ such that, when $a=0$, every optimum concentrates trade in a regular active market and leaves positive mass at the payoff-neutral value. Raising $a$ slightly above zero gives the excluded buyers positive but small gains from trade. Serving them would introduce a lower contract and change information rents throughout the active market. For the seller to find it profitable to serve the lower atom, its posterior value cannot approach zero while the upper distribution's probability and lowest posterior value remain bounded away from zero. An optimum with full participation therefore cannot converge to the payoff-neutral solution simply by letting the lower atom's quality vanish. Exclusion persists for all optima when $a$ is small and positive.
Figure~\ref{fig:exclusion-incentive} illustrates the seller's marginal
comparison, holding the conditional upper distribution fixed.

\begin{figure}[htbp]
 \centering
 \captionsetup{font=small,skip=4pt}
 \includegraphics[width=\textwidth]{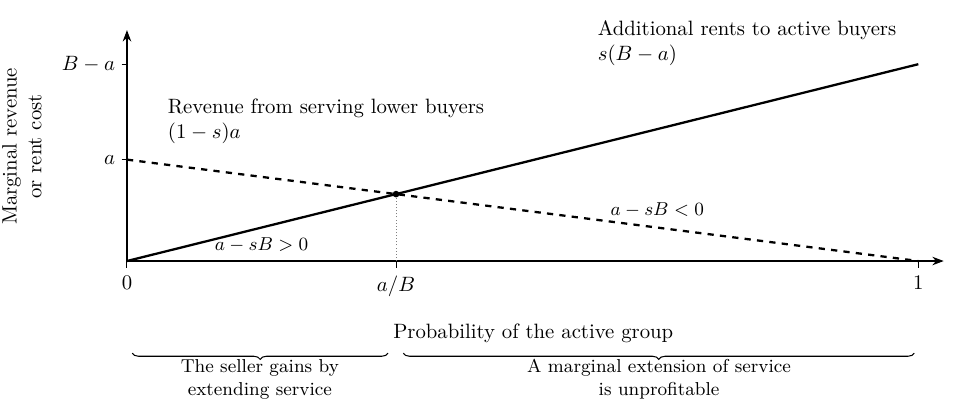}
 \caption{The seller's incentive to serve the lower group. Consider
 a regular active distribution with lowest posterior value $B>a$
 and probability $s$, with the remaining buyers at $a$. Starting
 from a menu that excludes the lower group, a small quality for
 these buyers generates marginal revenue $(1-s)a$ but requires
 additional rents $s(B-a)$ for active buyers. Initial marginal
 production cost is zero. To the left of $s=a/B$, extending
 service is profitable, so exclusion cannot be seller optimal.
 To the right, a marginal extension is unprofitable.}
 \label{fig:exclusion-incentive}
\end{figure}

Note that the payoff comparisons above concern different priors. 
When both $a$ and $m_L-a$ are
small, even first-best surplus is small, bounding profit and total
surplus under every feasible posterior distribution. At $m_H$, the
optimal active market instead generates profit bounded away from
zero. It therefore yields higher profit and total surplus despite
less frequent trade.

Finally, \Cref{prop:high-mean} gives full participation again when
$m$ is sufficiently close to one, holding the same $a$ and $\beta$
fixed. Participation is therefore one near $a$, strictly below one
at $m_H$, and one again near one. 
The same result applies with positive initial marginal cost $d$ to
productive lower values sufficiently close to $d$.\footnote{For
$c'(0)=d>0$ and $a>d$, use
$\widehat y=(y-d)/(1-d)$,
$\widehat t=(t-dx)/(1-d)$, and
$\widehat c(x)=[c(x)-dx]/(1-d)$. Original utilities and profits are
$1-d$ times their transformed values. Information feasibility and
participation are preserved. The cost-curvature expression is scaled
by $1/(1-d)^2$.}

The general results do not count or locate switches between exclusion and
distorted inclusion, or determine whether participation changes
continuously. The next subsection gives a local example with quadratic cost
in which participation changes discontinuously under buyer-optimal learning.

\subsection{A discontinuous switch under quadratic cost}
\label{subsec:participation-comparative-statics}

Set $\beta=0$ and $c(x)=x^2/2$. Near the switch point constructed below,
only two candidates compete. The distorted inclusion candidate serves
everyone and gives consumer surplus $C_I(m)$. Its lowest posterior value
lies strictly above $a$, so its distribution is unaffected by a small change
in the primitive lower value. The exclusion candidate instead scales
an active distribution with conditional mean $M_E(a)$. If
$\rho_E(a)$ is its consumer surplus per unit of mean above $a$,
its trading probability and value are
\[
 s_E(a,m)=\frac{m-a}{M_E(a)-a},
 \qquad C_E(a,m)=\rho_E(a)(m-a).
\]
In the neighborhood under study, serving the lower group remains
strictly unprofitable, so this scaling preserves exclusion.

\begin{proposition}[A switch from inclusion to exclusion]
\label{prop:participation-switch}
For $\beta=0$ and $c(x)=x^2/2$, there are open intervals $\mathcal A$ of
positive lower valuations and $\mathcal M$ of prior means, and a
continuously differentiable, strictly increasing curve
$m^\dagger:\mathcal A\to\mathcal M$, such that
\begin{equation}
 V_a(m)=\max\{C_I(m),C_E(a,m)\}
          \quad\text{on }\mathcal A\times\mathcal M.
 \label{eq:participation-two-values}
\end{equation}
Below the curve, the unique optimal posterior distribution features distorted inclusion, and all buyers trade. Above it, the unique optimal distribution features exclusion, and $0<s_E<1$. On the curve, only these two distributions
are optimal. Consumer surplus is continuous at the switch, while
participation falls and seller profit and total surplus rise in the
one-sided comparison as the mean increases.
\end{proposition}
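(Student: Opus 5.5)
The plan is to reduce the comparison near the switch to two smooth candidate values and then apply the implicit function theorem to their difference. With $\beta=0$ and $c(x)=x^2/2$ we have $q(z)=z^+$ and $\pi(z)=(z^+)^2/2$. \Cref{ass:curvature} holds trivially, since $c'''=0$. By \Cref{thm:binary-structure}, every optimum is either of the exclusion type (i) or one of the full-participation types (ii)--(iii). I would start at the payoff-neutral benchmark $a=0$, where \Cref{prop:neutral-participation} gives a unique optimum of the form $(1-\lambda)\delta_0+\lambda G^c_{0,0}$ for $m<m^c_{0,0}$.

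The first step is to build the exclusion candidate for small $a>0$. I would take the ratio problem \eqref{eq:neutral-ratio-main}, with denominator $M_A-a$, and the active distribution solving the Euler system of \Cref{prop:separating-conditions}. Using the explicit quadratic form of \eqref{eq:continuation-phi}, I would show by the implicit function theorem in $(\theta,\text{duration})$ that the maximizer $G_E(a)$ and $\rho_E(a)$ are $C^1$ in $a$ near some $a_0>0$. A nondegeneracy check of the Jacobian is needed here.

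The second step is to show that scaling this active market preserves exclusion throughout a neighborhood. This means verifying the seller condition $sB>a$ from the discussion after \Cref{thm:general-participation}, with $s=s_E(a,m)$ and $B$ the bottom of $G_E(a)$. The restriction it imposes, $m>a+a(M_E-a)/B$, defines the lower edge of $\mathcal M$.

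The third step handles the inclusion candidate. On the full-participation branch I would use the fixed-curve linear program \eqref{eq:conditional-mean-problem} together with \eqref{eq:lower-quality-comparison}. These parametrize the optimum by the lower atom's value $v>a$ and mass, and by $\theta$ on the separating path. The resulting $C_I(m)$ is $C^1$. Because $v>a$ strictly, $C_I(m)$ does not depend on $a$.

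The key step is the crossing. I expect to find a point $(a_0,m_0)$ with $C_I(m_0)=C_E(a_0,m_0)$, at which both candidates are the only optima (that is, every other local configuration is strictly worse) and
\[
 C_I'(m_0)<\partial_m C_E(a_0,m_0)=\rho_E(a_0).
\]
I would try to obtain this from two facts. First, $V_a$ is continuous in $m$. Second, full participation is forced near $m=a$, by the argument behind \Cref{thm:general-participation}, while exclusion is strictly optimal at $m_H$. Given these, the smallest $m$ at which exclusion becomes optimal is a crossing. I would then argue that the crossing is transversal because $C_E$ is linear in $m$ while $C_I$ is strictly concave. The strict concavity is the delicate part. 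I would attempt it through an envelope argument: $C_I'(m)$ equals the multiplier $\theta$ on the mean, and $\theta$ decreases strictly along the inclusion branch. If that fails, I would fall back on explicit computation under quadratic cost.

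With transversality established, the implicit function theorem applied to $D(a,m):=C_E(a,m)-C_I(m)=0$ gives a $C^1$ curve $m^\dagger(a)$. Its slope is
\[
 \frac{\dd m^\dagger}{\dd a}=-\frac{\partial_aD}{\partial_mD}.
\]
Since $\partial_mD>0$, the curve is strictly increasing provided $\partial_aD<0$. I expect $\partial_aD=\partial_aC_E<0$, because raising $a$ uses up more of the prior mean per unit of excluded mass; by the envelope theorem this derivative equals $-\rho_E+(m-a)\rho_E'(a)$, and I would show it is negative.

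The representation \eqref{eq:participation-two-values} requires excluding all other configurations locally. I would do this by compactness and upper hemicontinuity of the optimal set, combined with the uniqueness of each branch's optimum, which comes from uniqueness of the separating path for given $\theta$ in \Cref{prop:separating-conditions}.

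Finally, the comparisons at the switch. Consumer surplus is continuous because both candidates give equal values on the curve. Participation drops from $1$ to $s_E<1$. For profit and total surplus I would compute $\Pi$ and $\TS$ of both candidates at the switch point and show that the exclusion market has the higher profit. The reason is that its active part has higher virtual values: it is nearly the $a=0$ solution, which extracts more from high types than the distorted-inclusion menu does. I expect this sign to require explicit evaluation, possibly numerical verification at $(a_0,m_0)$, followed by continuity to extend it to the open intervals.

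I expect the main obstacle to be the transversality and uniqueness at the crossing, namely that $C_I'<\rho_E$ there and that no third configuration is optimal. Both would be established most reliably by explicit integration of the quadratic-cost ODE system.
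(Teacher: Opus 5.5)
Your architecture matches the paper's: two smooth candidate values, a tie, the implicit function theorem, and $\partial_a C_E=-\rho_E(1-s_E)<0$ from the envelope identity $\rho_E'(a)=\rho_E/(M_E-a)$. However, the transversality step rests on a false claim. Along the inclusion branch the mean multiplier \emph{rises}. In the paper's parametrization by the phase $t$ of the quadratic continuation, $\vartheta_I(t)$ and $m_I(t)$ are both strictly increasing and $C_I'(m)=\vartheta_I(t_I(m))$. Hence $C_I''=\vartheta_I'/m_I'>0$, and $C_I$ is strictly convex (\Cref{cor:inclusion-convexity}). ``Linear versus concave'' therefore cannot give $C_I'(m^\dagger)<\rho_E(a)$. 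Strict convexity together with a verified sign change of $C_I-C_E$ would force a strict crossing, but establishing that sign change on the right branches is exactly the second gap. The paper instead compares the two multipliers directly at the tie, $\theta_E(u_t)-\vartheta_I(t)=4k/[(1+k)^2t^2]+O(t^{-3})>0$, and this requires explicit formulas.

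Second, your way of locating the crossing does not produce the configuration in the statement. Taking the smallest mean at which exclusion is optimal, in the small-$a$ regime of \Cref{thm:general-participation}, yields some participation switch. Nothing in the general results says that the full-participation optimum there is distorted inclusion with its atom strictly above $a$, rather than a regular law or a trading atom at $a$. An atom at $a$ has a value that depends on $a$, which invalidates both $C_I=C_I(m)$ and your slope formula for $m^\dagger$. The paper explicitly remarks that the general results do not locate such switches.

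Nor does uniqueness of the separating path for fixed $\theta$ give uniqueness within a branch, since $\theta$, the duration, and the lower attachment remain free. Upper hemicontinuity alone also cannot rule out a third configuration nearby.

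The paper closes these gaps by integrating the quadratic system in closed form (\Cref{lem:mult-quadratic-smooth-tail}):
\begin{itemize}
\item the continuation is indexed by $(\theta,t)$;
\item an interior lower atom forces $r=(t-2)/(t+2)$, $z=\theta-h$, $w=\theta+h(t-3)$ and $\theta=\vartheta_I(t)$;
\item exclusion forces $\theta=\theta_E(t)$ and $a=a_E(t)$.
\end{itemize}
Strict monotonicity of $a_E,m_I,\vartheta_I$ then leaves at most one candidate of each kind. For $m<m_0$, regular full participation is impossible, and an atom at $a$ is ruled out once the interior-atom candidate is feasible (\Cref{lem:quadratic-global-comparison}).

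The tie is found not at small $a$ but asymptotically as $t\to\infty$, near $(a_0,m_0)$ with $a_0=k/(1+k)$ and $k=e^{-\pi/2}$. The paper expands $\Delta(t,u)=\theta_E(u)[m_I(t)-a_E(u)]-\mathcal C_I(t)$ and applies the intermediate value theorem. The same expansions give the profit ranking $P_E-P_I=a_0^2(4+\pi-2k)/((k+2)t)+O(t^{-2})>0$. Your heuristic that the exclusion market is ``nearly the $a=0$ solution'' does not apply in that regime.
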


\begin{figure}[htbp]
 \centering
 \captionsetup{font=small,skip=4pt}
 \begin{minipage}[t]{0.48\textwidth}
  \centering
  \includegraphics[width=\linewidth]{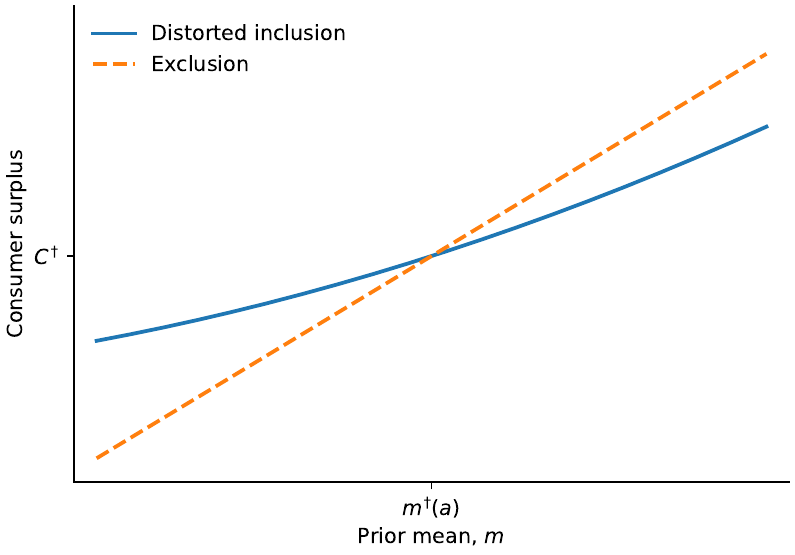}
 \end{minipage}\hfill
 \begin{minipage}[t]{0.48\textwidth}
  \centering
  \includegraphics[width=\linewidth]{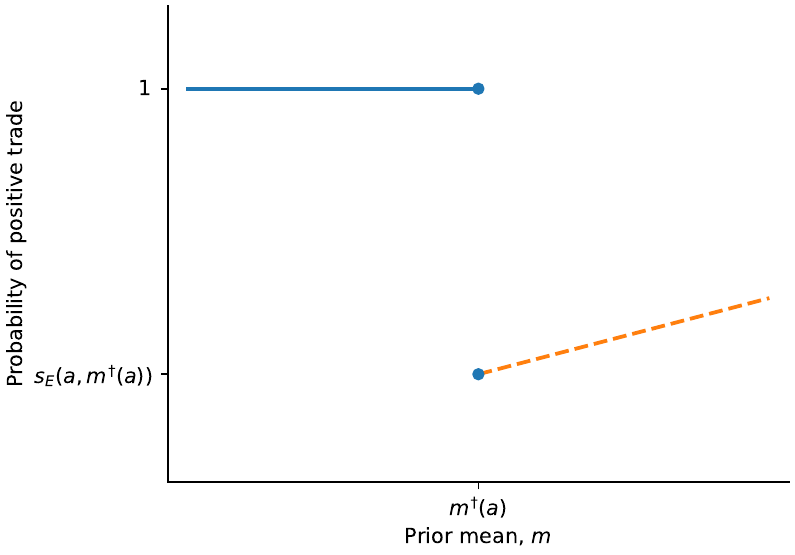}
 \end{minipage}
 \caption{A switch to exclusion. The left panel compares consumer
 surplus under distorted inclusion and exclusion.
 The right panel shows participation. Both points at the switching mean are optimal, but participation rates differ.}
 \label{fig:participation-switch}
\end{figure}


The direction of the switch follows from the marginal values of mean, since $C_I'(m^\dagger(a))<\rho_E(a)$.
Under exclusion, additional mean raises the probability of a fixed
active market. Near the crossing, this produces more consumer surplus
at the margin than adjusting the fully active inclusion distribution. A higher
mean therefore favors exclusion. Participation rises with mean within
the exclusion regime, but falls discretely when that regime first
becomes optimal.

Increasing the lowest primitive value has the opposite effect when
the mean is fixed. The inclusion distribution remains feasible and unchanged,
whereas the optimized exclusion value satisfies
\[
 \frac{\partial C_E(a,m)}{\partial a}
       =-\rho_E(a)[1-s_E(a,m)]<0.
\]
 A higher
excluded value uses more of the prior mean, leaving less for buyers
who trade. Exclusion therefore becomes optimal only at a higher mean:
\begin{equation}
 (m^\dagger)'(a)=
 \frac{\rho_E(a)[1-s_E(a,m^\dagger(a))]}
      {\rho_E(a)-C_I'(m^\dagger(a))}>0.
 \label{eq:participation-boundary-slope}
\end{equation}

\begin{corollary}[The value of mean under distorted inclusion]
\label{cor:inclusion-convexity}
On the inclusion side of the neighborhood in
\Cref{prop:participation-switch}, $V_a(m)=C_I(m)$ is twice
continuously differentiable and $C_I''(m)>0$.
\end{corollary}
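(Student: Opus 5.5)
The plan is to obtain $C_I$ as the value of a smooth one-parameter family and then read off convexity from an envelope identity. On the inclusion side of $\mathcal A\times\mathcal M$, \Cref{prop:participation-switch} gives a unique optimum with one lower trading atom at some value $v>a$ and mass $u_L$, separated by a gap from a separating interval, followed by the top atom. Because $v>a$, the support bound is slack, so $C_I$ does not depend on $a$; equivalently, $C_I(m)$ is the local maximum of $\CS$ over distributions on $[0,1]$ with mean $m$ in this configuration.

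With $c(x)=x^2/2$ we have $q(z)=z$. I would parametrize the candidate by the multiplier $\theta$ of \Cref{prop:separating-conditions}, the atom data $(v,u_L)$, and the length of separation. For fixed $\theta$ the backward ODE in $(A,Q)$, with $\varphi$ given by the Euler equation \eqref{eq:continuation-euler}, is smooth and has a unique solution. The attachment conditions form a finite system: the atom virtual value $z_L=v-\tfrac{1-u_L}{u_L}(B-v)$, the first-order conditions in $v$ and $u_L$ (for instance $q(z_L)=\theta$ when the interior condition binds, using \eqref{eq:lower-quality-comparison} together with the fixed-curve exchange argument), and the mean constraint. These equations are $C^1$ in all unknowns. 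I would verify nondegeneracy of the Jacobian using the strict concavity from \Cref{ass:curvature}. The implicit function theorem then gives a $C^1$ solution path $m\mapsto(\theta(m),\dots)$, and uniqueness of the optimum identifies this path with the maximizer.

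Next comes the envelope step. Since the mean constraint enters linearly, $C_I'(m)$ equals the shadow value of mean. I expect this to be $\theta(m)$, up to the normalization by the upper probability, possibly combined with $q(z_L)$. The Lagrangian is smooth in $(m,\text{choice})$ and the maximizer is $C^1$, so $C_I'$ is $C^1$ and $C_I$ is $C^2$. Convexity then reduces to showing that this shadow value is strictly increasing in $m$.

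I expect proving that monotonicity to be the main obstacle. I would attack it by linearizing the attachment system in $m$ and signing $d\theta/dm$ via Cramer's rule. Under quadratic cost, $q''=0$, so \eqref{eq:continuation-phi} simplifies substantially and makes the coefficients explicit.

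As a fallback, I would argue by contradiction from the structure of \Cref{prop:participation-switch}. The inclusion regime is the one that loses as $m$ rises, and $C_E$ is linear in $m$. If instead $C_I''\le0$ somewhere, I would build a mean-preserving mixture of nearby inclusion optima at two means, using the fixed-curve accounting of \Cref{prop:mean-accounting} to adjust the atom. That mixture would need to be shown to strictly beat $C_I$ at the midpoint, contradicting optimality. I am less sure this route closes, so the direct sign computation is my primary plan.
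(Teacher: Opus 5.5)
Your reduction is the right one: $C_I'(m)$ is the shadow value of mean, and $C_I''>0$ amounts to that shadow value strictly increasing in $m$ along the inclusion branch. That monotonicity, however, is the entire content of the corollary, and your proposal does not supply it. Saying you would sign $\dd\theta/\dd m$ ``via Cramer's rule'' on a system whose Jacobian you have neither computed nor shown nondegenerate is a plan, not an argument. Nothing in \Cref{ass:curvature} or in the general separating conditions delivers the sign.

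The system you would linearize is also wrong. $q(z_L)=\theta$ is not an optimality condition. The inequality $q(z)\le\theta$ in \eqref{eq:lower-quality-comparison} comes from a one-sided deviation: the atom's face is at capacity, so its posterior values can only be raised at fixed $R$. At the inclusion optimum this inequality is strict. The correct attachment conditions instead come from two-sided variations of three quantities:
\begin{itemize}
\item the atom's virtual value $z$;
\item the continuation's bottom $B$;
\item the atom's value $w>a$.
\end{itemize}

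Your fallback cannot close the gap either. In a single market a mixture of two posterior distributions is screened with one menu, so its payoffs are not averages of the components' payoffs. Even where averaging is available, it yields concavity-type inequalities (value at the midpoint at least the average). If $C_I''\le0$, such a mixture cannot strictly beat $C_I$ at the midpoint, so no contradiction arises.

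The missing idea is to use quadratic cost to solve the separating system $\dot Q=Q-\varphi$, $2\dot\varphi=Q-\theta$ in closed form. The solution is indexed by a phase $t$, with separation of duration $2\arctan t$. The attachment conditions then give
\begin{itemize}
\item $r=(t-2)/(t+2)$;
\item $z=\theta-h$, so $z<\theta$ strictly;
\item $w=\theta+h(t-3)$;
\end{itemize}
where $h=(1-\theta)\gamma(t)>0$. The $w$-variation then pins down $\theta=\vartheta_I(t)=j(t)/(1+j(t))$.

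The inclusion branch is thus a one-parameter curve $t\mapsto(m_I(t),\mathcal C_I(t))$ with $\mathcal C_I'(t)=\vartheta_I(t)m_I'(t)$. The $\dd\theta$ terms cancel because $\mathcal A_\theta=-(m-w)$. Explicit differentiation of $w_0$, $j$ and $p$, together with $w_0+j+p<1$, gives $\vartheta_I'>0$ and $m_I'>0$. Hence $C_I'(m)=\vartheta_I(t_I(m))$ and $C_I''(m)=\vartheta_I'(t_I(m))/m_I'(t_I(m))>0$. The $C^2$ regularity comes from the smooth inverse $t_I$ of $m_I$, not from an unverified implicit-function argument.
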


Thus, increasing the prior mean has increasing marginal value for the designer even before
the switch to exclusion. 



\section{Market Segmentation}
\label{sec:segmentation}
We now allow the designer to assign buyers to public groups and
choose a private information policy for each. The seller then offers separate menus for each group,
combining second-degree price discrimination within groups with
third-degree discrimination across them. With gains from
trade at every valuation, even valuation-independent assignment to
two groups can generate efficient surplus pairs unavailable in
a single market. It cannot raise maximal consumer surplus, but
can yield strictly more consumer surplus than any single-market
policy meeting the same profit requirement. Valuation-dependent
assignment also discloses information about consumer valuation to the seller. Coordinating this disclosure
with private learning can raise both consumer surplus and profit
relative to buyer-optimal single-market learning.
This contrasts with \citet{kyriazis2026markets}, where segmentation
leaves the Pareto frontier unchanged under unrestricted choice of market
composition. Here, the same conclusion holds for binary priors
with a payoff-neutral lowest valuation.

\subsection{Observable groups and attainable payoffs}
\label{subsec:segmentation-payoffs}

Fix the primitive prior $H$. In a \emph{state-independent segmentation},
the designer assigns the buyer to group $k$ with probability $\lambda_k$
independently of $V$, and commits to a private information structure
$\sigma_k$ for that group. There are finitely many groups, with
$\lambda_k\ge0$ and $\sum_k\lambda_k=1$. Both parties observe the
group label. The buyer privately observes her signal, while the seller knows
$\sigma_k$ and the resulting posterior distribution $G_k$, but not
the signal or posterior realization. It chooses a profit-maximizing
menu for each group. Buyers can choose only from their own group's menu. Costs are additive, so production in one
group does not constrain production in another.\footnote{A general public group signal independent of $V$ gives
the same payoff set and finitely many groups suffice. Additionally, public assignment can equivalently be viewed as a public lottery selecting an information policy before the seller chooses its menu. The buyer then receives the selected policy’s private signal and faces the associated menu, which depends on the public draw. Under this interpretation, the different menus need not
be offered simultaneously.}

Independence makes $H$ the prior in every group, so
$G_k\in\mathcal G(H)$. Their posterior distributions can differ, although
their underlying valuation distributions coincide. The label reveals how
the buyer was informed, but conveys no information about her primitive valuation.
The designer can equivalently choose
$\{(\lambda_k,G_k)\}_{k=1}^J$. Because each group is screened
separately, aggregate payoffs are
\begin{equation}
 (C,P)=\sum_{k=1}^J\lambda_k\bigl(\CS(G_k),\Pi(G_k)\bigr).
 \label{eq:seg-payoff-average}
\end{equation}

With valuation-independent assignment, every segment has the same prior $H$ over primitive valuations. Private learning can nevertheless induce different posterior distributions across segments. The public label tells the seller which distribution it faces, allowing a separate menu to be used for each segment. If only the buyer observed the label, the seller would instead screen the mixture distribution $\sum_k\lambda_kG_k$ with a common menu. In this case, all menu options are available to buyers from all segments. The seller may therefore change qualities and prices, so consumer surplus and profit need not equal their averages under separate menus. Public segmentation can therefore
improve the attainable division of surplus even though no information about primitive valuation is revealed to the seller.

\paragraph{The attainable division of surplus.}
Recall that
$\mathcal P(H)=\{(\CS(G),\Pi(G)):G\in\mathcal G(H)\}$
is the single-market surplus set, and write
$\mathcal P^{\mathrm{seg}}(H)$ for its segmented counterpart.
Equation~\eqref{eq:seg-payoff-average} implies
\begin{equation}
 \mathcal P^{\mathrm{seg}}(H)=\operatorname{co}\mathcal P(H),
 \label{eq:seg-payoff-hull}
\end{equation}
where $\operatorname{co}$ denotes the convex hull. Both sets are compact, and every Pareto-efficient segmented
pair can be attained with at most two groups
(Appendix~\ref{app:seg-geometry}).

Segmentation can expand the available payoff combinations, but cannot
raise the maximum of a linear welfare objective. For every
$\beta\ge0$,
\begin{equation}
 \max_{(C,P)\in\mathcal P^{\mathrm{seg}}(H)}(C+\beta P)
 =W_H(\beta).
 \label{eq:seg-support-values}
\end{equation}
Each group's posterior distribution is feasible from $H$, so neither its
weighted payoff nor their average can exceed $W_H(\beta)$. In particular, state-independent segmentation cannot improve
on buyer-optimal private learning when consumer surplus is the sole
objective. Its potential benefit is to leave more surplus to consumers
while maintaining a given level of seller profit.

For a binary prior, write $\mathcal P^a_m$ and
$\mathcal P^{a,\mathrm{seg}}_m$ for the two surplus sets. Let
$\mathcal I^a_m$ be the interval of profits attainable in a single
market, and define
\[
 \widehat\Gamma^a_m(p):=
 \max\{\CS(G):G\in\mathcal G_{a,m},\ \Pi(G)\ge p\},
 \qquad p\in\mathcal I^a_m.
\]
This is the most consumer surplus attainable while leaving the seller
profit at least $p$. The maximum is attained. Thus, a segmented pair
$(C,p)$ with $C>\widehat\Gamma^a_m(p)$ cannot be attained or
weakly dominated by any single-market pair. Additionally, for a fixed prior, increasing the profit weight weakly raises optimal
profit and lowers consumer surplus. If $(C_i,P_i)$ is any optimal
payoff at $\beta_i$ and $0\le\beta_1<\beta_2$, optimality gives
\begin{equation}
 P_2\ge P_1,\qquad
 \beta_1(P_2-P_1)\le C_1-C_2\le\beta_2(P_2-P_1).
 \label{eq:weight-payoff-monotonicity}
\end{equation}
Adding the two optimality inequalities gives $P_2\ge P_1$ and
rearranging them gives the bounds on $C_1-C_2$. Total surplus also
weakly increases with the profit weight on $[0,1]$. These
comparisons do not, however, imply that the optimal payoff varies
continuously with the weight.

\paragraph{When separate menus add efficient payoffs.}
Fix $\beta>0$. The equation $C+\beta P=W_H(\beta)$ defines
a \emph{supporting line}. Every feasible pair lies on or below it,
and a distribution maximizes $J_\beta$ exactly when its pair lies
on the line. Suppose two objective-maximizing distributions yield profits
$P_L<P_H$. The lower-profit distribution gives higher consumer
surplus. Assigning the two distributions to separate groups combines
these different divisions of surplus without reducing the weighted
objective. Giving the low-profit group probability
$(P_H-p)/(P_H-P_L)$ and the remaining probability to the other
group attains
$\bigl(W_H(\beta)-\beta p,\ p\bigr)$, $p\in[P_L,P_H]$.
Every such pair is Pareto efficient since improving either payoff without
lowering the other would exceed the maximum.

If there is no optimal single-market distribution that delivers the intermediate
profit $p$, the corresponding pair requires segmentation. That profit
is feasible in one market, but only below the supporting line. Multiple
optimizers alone are therefore insufficient for new efficient pairs: if
single-market maximizers attain every profit between $P_L$ and $P_H$,
segmentation does not add anything.

Appendix~\ref{app:seg-support-gaps} characterizes supported payoffs and the minimum-profit boundary
under state-independent segmentation. At any supporting weight, every positive-probability
group must use a single-market optimal information policy at that
same weight. The segmented minimum-profit boundary is the least concave nonincreasing
majorant of the single-market boundary. The next result shows that this
majorant is strictly above the original boundary on a robust set of
priors with a productive lowest valuation.

\paragraph{A payoff-neutral benchmark.}
Under the hypotheses of \Cref{thm:neutral-frontier}, the continuous
single-market payoff path already contains every efficient division
of surplus. Neither independent nor valuation-dependent segmentation
improves this frontier, as \Cref{cor:seg-neutral} below establishes.

\subsection{New efficient payoffs}
\label{subsec:seg-gains}
With a binary prior, a productive lowest value permits gains from segmentation through
two distinct comparisons. The first is between distorted inclusion and
exclusion. The second is between an informative policy and no
learning, even when every optimum serves all buyers.

\begin{theorem}[New efficient payoffs from independent segmentation]
\label{thm:segmentation-gain}
Suppose $c(x)=\kappa x^\nu/\nu$, with $\kappa>0$, $\nu\ge2$,
and $\bar Q>\kappa^{-1/(\nu-1)}$. There is a nonempty open set
$\mathcal U\subset\{(a,m):0<a<m<1\}$ such that, at each prior
in $\mathcal U$, there are a weight $\beta>0$, two distributions
$G^L,G^H$ maximizing $J_\beta$ in $\mathcal G_{a,m}$, and profits
\[
 P_L<p_-<p_+<P_H,\quad\text{where}\quad P_i:=\Pi(G^i),
\]
with the following property. For every $p\in[p_-,p_+]$, valuation-independent
segmentation between these two distributions attains the Pareto-efficient
pair $\bigl(V_{a,\beta}(m)-\beta p,p\bigr)$, and
\begin{equation}
 V_{a,\beta}(m)-\beta p>\widehat\Gamma^a_m(p).
 \label{eq:seg-strict-gain}
\end{equation}
Only the group probabilities change with $p$. The single-market
boundary $\widehat\Gamma^a_m$ is nonconcave at these priors.

For every such cost, $\mathcal U$ can be chosen with $1<\beta<2$
and with full participation under every optimal distribution at every
finite nonnegative profit weight. Under quadratic cost, there is also a nonempty
open set with $0<\beta<1$. For $c(x)=x^2/2$, this latter conclusion
holds at all means sufficiently close to and below $m^\dagger(a)$,
for each $a\in\mathcal A$ from \Cref{prop:participation-switch}.
\end{theorem}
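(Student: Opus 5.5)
The plan is to reduce the strict gain \eqref{eq:seg-strict-gain} to one structural fact: at some weight $\beta$, the profits of $J_\beta$-maximizers omit an interval. I would prove the reduction first.

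Fix $\beta>0$ and suppose $G^L,G^H$ maximize $J_\beta$ in $\mathcal G_{a,m}$ with $P_L<P_H$. By the supporting-line argument of Section~\ref{subsec:segmentation-payoffs}, mixing them with group probabilities $(P_H-p)/(P_H-P_L)$ and $(p-P_L)/(P_H-P_L)$ attains the Pareto-efficient pair $(V_{a,\beta}(m)-\beta p,p)$ for every $p\in[P_L,P_H]$. Only the probabilities change with $p$.

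For the strict inequality, fix $p$ and let $G$ attain $\widehat\Gamma^a_m(p)$, so $\Pi(G)\ge p$. Then
\[
 \widehat\Gamma^a_m(p)=\CS(G)\le V_{a,\beta}(m)-\beta\Pi(G)\le V_{a,\beta}(m)-\beta p .
\]
Equality throughout forces $\Pi(G)=p$ and forces $G$ to be a $J_\beta$-maximizer. Hence \eqref{eq:seg-strict-gain} holds at every $p$ outside the compact set $\mathcal K_\beta$ of profits of $J_\beta$-maximizers. Compactness follows from weak compactness and continuity of the payoffs. It then suffices to find an open interval $(p_-',p_+')\subset(P_L,P_H)$ disjoint from $\mathcal K_\beta$ and to take $[p_-,p_+]$ inside it. Nonconcavity of $\widehat\Gamma^a_m$ follows at once, because the chord between $(P_L,C_L)$ and $(P_H,C_H)$ lies strictly above the boundary there.

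To produce the omitted interval robustly, I would use a two-candidate comparison like the one in \Cref{prop:participation-switch}. For $1<\beta<2$ with power cost, one candidate is no learning, $\delta_m$, with payoff $\beta\pi(m)$ and zero consumer surplus. The other is the best informative distribution. Near the relevant parameters, the characterization in \Cref{thm:binary-structure} and \Cref{prop:separating-conditions} would make this a locally unique full-participation distribution with positive consumer surplus and strictly lower profit. I would check that \Cref{ass:curvature} still holds for $\beta<2$ with $\nu\ge2$: $c''\ge0$ and $c'''\ge0$ on $(0,\bar q)$, so the expression is positive. The weighted values $W_{\rm info}(\beta;a,m)$ and $\beta\pi(m)$ are continuous. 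A crossing weight $\beta^\dagger(a,m)$ would be obtained by the implicit function theorem, using a nonzero derivative in $\beta$ equal to the profit difference.

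At $\beta^\dagger$, a continuity and compactness argument would show that every maximizer lies in a small weak neighbourhood of one of the two candidates. Each neighbourhood contains exactly one maximizer, by local uniqueness. So $\mathcal K_{\beta^\dagger}=\{P_L,P_H\}$ and the gap is automatic. Openness of $\mathcal U$ follows by perturbing $(a,m)$ and solving for $\beta^\dagger$ again. Full participation at every weight would come from choosing $m$ close to $a$, where, as in the proof idea for \Cref{thm:general-participation}, every optimum serves everyone.

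For quadratic cost with $0<\beta<1$, I would start from the switching curve of \Cref{prop:participation-switch} at $\beta=0$. There, inclusion and exclusion are the only optima and have distinct profits. I would extend the two-candidate representation to small $\beta>0$ by continuity. For a mean slightly below $m^\dagger(a)$, the inclusion candidate is strictly better at $\beta=0$. The exclusion candidate has higher profit, by the one-sided comparison in that proposition, so raising $\beta$ lets it catch up at some small $\beta>0$. The implicit function theorem again gives a crossing weight depending smoothly on $(a,m)$, and this covers all means sufficiently close to and below $m^\dagger(a)$.

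The main obstacle is the uniqueness step: every $J_\beta$-maximizer at the crossing weight must be one of the two candidates. This requires uniform local uniqueness of each branch, not just the crossing. I would first try to obtain it from the uniqueness of the separating path for given $\theta$ in \Cref{prop:separating-conditions}, together with the finite-dimensional parametrization of the lower part (atom mass and value, $\theta$, stopping rank). I would then argue nondegeneracy of that finite-dimensional optimum near the base point, numerically anchored for quadratic cost and by a scaling argument in $\kappa$ for general power cost.
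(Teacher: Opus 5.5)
Your reduction is correct and matches \Cref{prop:independent-segmentation-characterization}(ii)--(iii). Once some weight $\beta>0$ has maximizers with profits $P_L<P_H$ and no maximizer has profit in $[p_-,p_+]$, the strict gain, efficiency, and nonconcavity all follow. The gaps lie in how you produce that profit gap.

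\textbf{The branch with $1<\beta<2$.} You describe the ``best informative distribution'' using \Cref{thm:binary-structure} and \Cref{prop:separating-conditions}. Both are proved only for $0\le\beta<1$, and the proofs genuinely use $1-\beta>0$:
\begin{itemize}
\item no learning is ruled out by the gain $(1-\beta)\epsilon q(m)$ in \Cref{lem:binary-nonempty-continuation};
\item the top-atom contraction in \Cref{lem:top-atom-refinement} needs $(1-\beta)q(p)>0$.
\end{itemize}
Checking \Cref{ass:curvature} at $\beta<2$ does not repair these steps. In fact, at the relevant weights $\delta_m$ is itself optimal. The informative optimizer is full revelation $H$, which has two atoms and no separating interval.

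So you have no prior and weight at which all $J_\beta$-maximizers are known, and neither your compactness step nor your local-uniqueness step can start. The missing ingredient is the global comparison in \Cref{prop:seg-power-full-participation}:
\begin{itemize}
\item For a fully participating $G$ with lowest posterior $b$, the identity $\int_0^1\phi_Q=b$ gives $J_\beta(G)\le\lambda b+\E_G[g(Y)]$ with $g$ strictly convex.
\item Hence $J_\beta(G)\le J_\beta(F_b)$, where $F_b$ is the two-point law on $\{b,1\}$ with mean $m$, with equality only at $F_b$.
\item When $1/\nu<a<2/(\nu+1)$ and $m$ is near $a$, the map $b\mapsto J_\beta(F_b)$ is strictly convex on $[a,m]$.
\item At the weight equating its endpoint values, only $H$ and $\delta_m$ are therefore optimal.
\end{itemize}

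\textbf{Full participation at every weight.} Your route here also fails. \Cref{lem:full-trade-near-floor} covers only $\beta<1$, and its threshold $\eta_{c,\beta}(a)$ vanishes as $\beta\uparrow1$. It therefore cannot deliver a single prior with full participation at every finite weight. The paper instead requires $a\ge1/\nu$ and argues in two cases (\Cref{lem:seg-always-full-participation}):
\begin{itemize}
\item for $\beta<1$, the boundary conditions of \Cref{lem:strict-optimal-exclusion} would force $a<B/\nu<1/\nu$ under exclusion;
\item for $\beta\ge1$, a garbling that pools excluded signals with trading signals strictly raises both profit and total surplus.
\end{itemize}

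\textbf{The quadratic branch with $0<\beta<1$.} The ``main obstacle'' you identify is unnecessary, and numerical nondegeneracy would not be a proof in any case. You need neither that the maximizing profits form exactly $\{P_L,P_H\}$ nor a smooth crossing from the implicit function theorem. The paper's argument (\Cref{lem:seg-separated-supports}) runs as follows.
\begin{itemize}
\item At $(a,m^\dagger(a),0)$ the only optima are the inclusion and exclusion laws. At that mean, every optimizer with $\beta>0$ earns at least the exclusion profit.
\item Upper hemicontinuity of the optimizer correspondence (\Cref{lem:seg-payoff-continuity}) then gives an interval $[p_-,p_+]$ avoided by all optimal profits, for nearby means and all $\beta\in[0,\bar\beta]$. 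Optimal profits lie below $p_-$ at $\beta=0$ for means just below $m^\dagger(a)$, and above $p_+$ at $\bar\beta$.
\item Let $w_-(\beta)$ and $w_+(\beta)$ be the maxima of $J_\beta$ over $\{\Pi\le p_-\}$ and $\{\Pi\ge p_+\}$. Both are Lipschitz in $\beta$ and reverse their order between $0$ and $\bar\beta$.
\item The intermediate value theorem gives a weight at which both equal $V_{a,\beta}(m)$.
\end{itemize}
The same continuity argument gives openness in $(a,m)$ without any smooth dependence. Your step placing every maximizer near one of the two candidates is already this argument. Drop the uniqueness requirement and replace the implicit function theorem by these restricted maxima.
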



The strict comparison follows from the gap in profits attained by global
maximizers. Write $h=V_{a,\beta}(m)$. For each $p\in[p_-,p_+]$,
the group probabilities give $(h-\beta p,p)$. Any feasible single-market
distribution $G$ meeting the profit requirement satisfies
\[
 \CS(G)\le h-\beta\Pi(G)\le h-\beta p.
\]
Equality throughout would require both $J_\beta(G)=h$ and
$\Pi(G)=p$, since $\beta>0$. But no maximizing distribution
has profit in $[p_-,p_+]$. The maximum defining
$\widehat\Gamma^a_m(p)$ is attained, so
$\widehat\Gamma^a_m(p)<h-\beta p$.
Figure~\ref{fig:segmentation-frontier} illustrates the gap between these efficient averages and the single-market boundary.

\begin{figure}[ht]
 \centering
 \captionsetup{font=small,skip=4pt}
\includegraphics[width=0.7\textwidth]{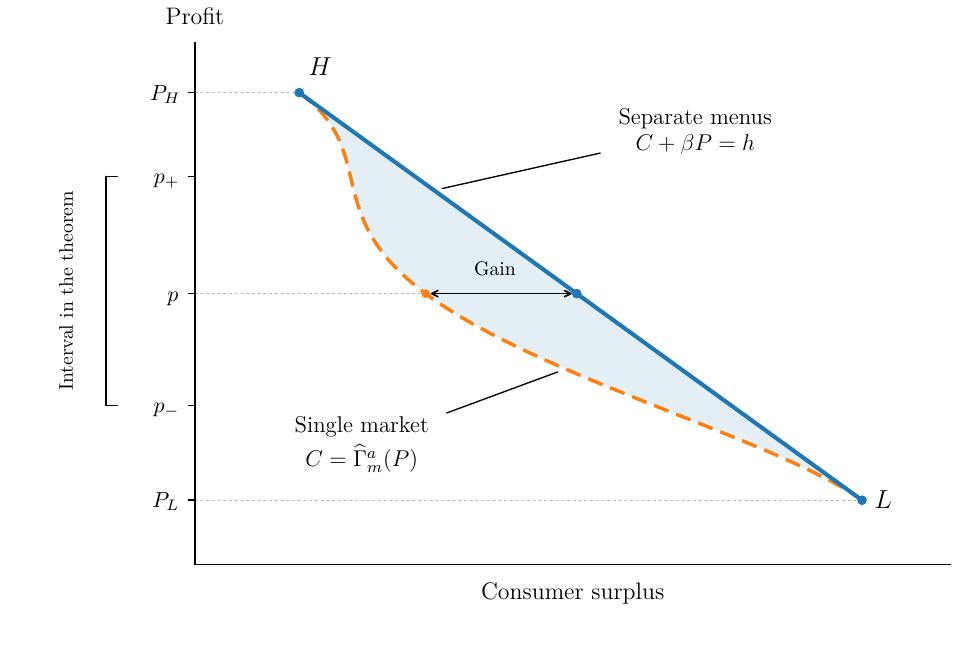}
 \caption{Valuation-independent segmentation and the efficient frontier.
The single-market surplus pairs $H$ and $L$ maximize $C+\beta P$
at the same prior and weight $\beta>0$, with common objective
value $h$. Separate menus
 attain the solid line between them. The dashed curve is
 $C=\widehat\Gamma^a_m(P)$, the highest consumer surplus in one
 market subject to profit at least $P$. The horizontal arrow measures
 $h-\beta p-\widehat\Gamma^a_m(p)$.
 Shading shows the gap in attainable surplus outcomes. The bracket
 marks the theorem's closed interval $[p_-,p_+]$, strictly inside
 that gap. Gains remain positive at both $p_-$ and $p_+$, but vanish
 at $P_L$ and $P_H$.}
 \label{fig:segmentation-frontier}
\end{figure}

\paragraph{Inclusion and exclusion.}
The first construction starts from the buyer-optimal participation
switch in \Cref{prop:participation-switch}. At $m^\dagger(a)$, inclusion and
exclusion give the same consumer surplus, while exclusion yields
higher profit. Assigning these distributions to separate groups
would therefore leave consumer surplus unchanged and reduce profit
relative to exclusion alone. Every strict average is dominated.

Just below $m^\dagger(a)$, however, full participation yields
slightly more consumer surplus, while exclusion retains its
profit advantage. A small positive weight on profit can
therefore offset the consumer-surplus loss from exclusion.
Keeping this lower prior mean fixed and reoptimizing over
all feasible posterior distributions, the proof identifies
a weight $\beta\in(0,1)$ at which both participation regimes
are globally optimal.
At this weight, maximizers with lower profits serve everyone
through distorted inclusion, while those with higher profits
exclude a positive fraction of buyers. An interval of profits
between the two regimes is attained by no optimal 
single-market distribution.

Independent segmentation fills this gap by assigning the two
policies to separate groups. Varying the group probabilities
changes aggregate profit while preserving their common value
of $J_\beta$. At any profit in the gap, a single-market policy
must yield a strictly lower value of $J_\beta$. With profit
held fixed, this difference is entirely a gain in consumer
surplus. Since $\beta>0$, a single-market policy earning
still more profit cannot match this consumer surplus either.
Thus, segmentation gives consumers strictly more surplus
while meeting the same profit requirement, as illustrated
in Figure~\ref{fig:segmentation-frontier}.

\paragraph{Learning and no learning.}
The second construction shows that segmentation can help even when
exclusion is never optimal. The competing policies differ not in
whether buyers trade, but in whether trade leaves them information
rents or permits full surplus extraction. For the power cost in the
theorem, fix $1/\nu<a<2/(\nu+1)$ and take $m>a$ sufficiently
close to $a$. At a weight $\beta\in(1,2)$, full revelation and
no learning are exactly the two global optima. Full revelation
generates positive consumer surplus but lower profit. With no
learning, every buyer has posterior mean $m$, so the seller offers
efficient quality and extracts the entire surplus, earning $\pi(m)$.
For that specific weight, the information rents due to revelation offset the loss of profit in the objective.

In one market, full revelation would yield a contract that would give buyers with
posterior mean $m$ a better alternative to full extraction.
The fact that consumers can choose only from their group-specific menu prevents this deviation.
Thus, consumers in one group can face a menu that leaves them rents while the rest face full extraction.


The proof must also rule out every alternative single-market policy.
Appendix~\ref{app:seg-full-participation} rules out optimal exclusion and
bounds the payoff of arbitrary fully participating distributions. Its
equality conditions identify exactly two optimizing distributions, so every
strict average of their payoffs is efficient under segmentation and
unattainable in one market.
A sufficient condition extending this mechanism beyond power costs is developed in the Additional Material.

For an example, suppose $c(x)=x^2/2$ and take
$H=(4/5)\delta_{1/2}+(1/5)\delta_1$, so $m=3/5$.
Full revelation and no learning yield, respectively,
\[
 (\CS,\Pi)=\left(\frac{3}{80},\frac{5}{32}\right),
 \qquad (\CS,\Pi)=\left(0,\frac{9}{50}\right).
\]
They are the only two global maximizers at $\beta_*=30/19$. The details are in the Additional Material.  Both policies serve
everyone. An equally weighted public lottery between them gives the
efficient pair $(3/160,269/1600)$, with
$3/160>\widehat\Gamma^{1/2}_{3/5}(269/1600)$.
The strict payoff gain in this quadratic example also persists under
sufficiently small changes in its prior and cost, at nearby profit
requirements. Such robustness results are developed in the Additional Material.

\subsection{State-dependent segmentation}
\label{subsec:state-dependent-segmentation}

Now allow assignment to depend on the primitive valuation. The
public label $K$ induces a conditional prior $H_k$ in group $k$ and
the buyer's private signal then induces a posterior distribution
$G_k$. The designer chooses both, subject to
\begin{equation}
 \sum_k\lambda_kH_k=H,\qquad G_k\preceq_\cx H_k,
 \qquad \lambda_k=\Pr(K=k).
 \label{eq:sd-prior-feasibility}
\end{equation}
The first condition preserves the population's valuation composition;
the second imposes feasible learning within groups. Both parties observe
the group label, but only the buyer observes her private signal. Buyers still
choose only from their group's menu, so payoffs average as in
\eqref{eq:seg-payoff-average}. Conditional priors can now differ across groups.
We write
\[
 W_H^{\mathrm{SD}}(\beta)
 :=\sup_{\{(\lambda_k,H_k,G_k)\}}
       \sum_k\lambda_k J_\beta(G_k),
\]
where the supremum is over finite collections of segment probabilities, conditional priors, and posterior distributions satisfying \eqref{eq:sd-prior-feasibility}. This is the maximal level of the objective attainable by jointly choosing private learning and state-dependent segmentation.

\paragraph{Efficiency and full surplus extraction.}
When profit receives at least as much weight as consumer surplus,
joint design has a simple solution attaining full efficiency and yielding full surplus extraction. Let $Y$ denote the buyer's
posterior mean after observing both the public label and her private
signal.

\begin{corollary}[Full-surplus extraction under joint design]
\label{prop:sd-full-extraction}
Under \Cref{ass:cost}, for every finite prior $H$ and every
$\beta\ge1$, every optimal joint policy yields
$\CS=0$ and $\Pi=\E_H[\pi(V)]$.
A feasible joint policy is optimal if and only if
\begin{equation}
 Y=V\quad\text{almost surely},
 \qquad
 \bigl|\supp(H_k)\cap(0,1]\bigr|\le1
 \quad\text{for every positive-probability group }k.
 \label{eq:sd-profit-oriented-equality}
\end{equation}
If all primitive valuations are positive, each public group
contains a single valuation, so the group label reveals $V$.
If zero is a primitive valuation, a public group may contain
both zero-valued buyers and buyers of one positive valuation.
In any such mixed group, the private signal must reveal
which valuation the buyer has.
\end{corollary}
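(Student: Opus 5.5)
The plan is to bound the objective of any joint policy by first-best total surplus, and then characterize when that bound is attained.

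\emph{Upper bound.} Fix any feasible joint policy $\{(\lambda_k,H_k,G_k)\}$. In group $k$, the payoff is $J_\beta(G_k)=\CS(G_k)+\beta\Pi(G_k)$. For $\beta\ge1$ and $\CS\ge0$, this is at most $\beta\,\TS(G_k)$, with equality iff $\beta=1$ or $\CS(G_k)=0$.

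Total surplus in group $k$ satisfies
\[
\TS(G_k)=\int_0^1[Q_k(u)x_k(u)-c(x_k(u))]\dd u\le\int\pi(y)\,G_k(\dd y),
\]
with equality iff $G_k$-almost every type receives its efficient quality $q(y)$. Since $\pi$ is convex and $G_k\preceq_\cx H_k$, we have $\int\pi\,\dd G_k\le\int\pi\,\dd H_k$. Averaging over groups and using $\sum_k\lambda_kH_k=H$ gives $W^{\mathrm{SD}}_H(\beta)\le\beta\,\E_H[\pi(V)]$.

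\emph{Attainment.} Assign each primitive valuation to its own public group and give no further learning. Each group is then degenerate at $v_i$. The seller offers $q(v_i)$ and extracts $\pi(v_i)$, so $\CS=0$ and $\Pi=\E_H[\pi(V)]$, which attains the bound. Hence every optimum achieves equality throughout the chain.

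\emph{Equality conditions.} This step determines both \eqref{eq:sd-profit-oriented-equality} and the payoff claims.

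\begin{itemize}
\item Equality in the convexity step: $\pi$ is strictly convex on $[0,1]$ only where it is positive, since $\pi(y)=0$ for $y\le0$ and $\pi'=q$ is strictly increasing on $(0,1]$. Because all types lie in $[0,1]$ and $\pi$ is strictly convex there (with $c'(0)=0$, $q$ is strictly increasing on $[0,1]$), equality in Jensen for the mean-preserving contraction forces $G_k=H_k$. Equivalently, $Y=V$ almost surely. The key point is that $\int\pi\,\dd G=\int\pi\,\dd H$ with $G\preceq_\cx H$ and $\pi$ strictly convex implies $G=H$; I would prove this via the martingale representation and strict conditional Jensen.
\item Equality in the efficiency step: given $G_k=H_k$, the finite distribution $H_k$ must receive efficient quality at every type. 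Suppose $H_k$ has two positive support points $y_1<y_2$. Incentive compatibility then gives the lower type's virtual value $\phi<y_1$ at its ranks, as in the concavified revenue formula (cf.\ the footnote example in Section~\ref{subsec:screening}). The lower type therefore gets $q(\phi)<q(y_1)$, a strict loss. A type at $0$ is efficiently excluded regardless, which explains the allowed mix of zero with one positive value.
\item With one positive support point $v$ plus possibly a mass at $0$: the seller sells $q(v)$ at price $v\,q(v)$ to the $v$-types. The zero types take nothing, and rents vanish.
\end{itemize}

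The same conditions also give $\CS=0$. For $\beta>1$ this is forced by the first inequality. For $\beta=1$ it follows directly, because any efficient group with at most one positive type has a zero rent (the lowest active type is extracted). Hence every optimum has $\CS=0$ and $\Pi=\E_H[\pi(V)]$. Conversely, any policy satisfying \eqref{eq:sd-profit-oriented-equality} attains the bound and is therefore optimal.

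The final remarks follow by reading off the conditions:
\begin{itemize}
\item If all $v_i>0$, each group contains a single valuation, so the label reveals $V$.
\item If $v_1=0$, a group may be $\{0,v\}$. Then $Y=V$ requires the private signal to reveal the valuation.
\end{itemize}

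The main obstacle I expect is the strictness of the efficiency step. I need to show rigorously, from the quantile formulas \eqref{eq:seller-profit}--\eqref{eq:consumer-surplus}, that any two positive types force a strict total-surplus loss. I would do this by computing $R_Q$ on the first atom, obtaining $\phi=y_1-\frac{1-u_1}{u_1}(y_2-y_1)<y_1$, and then using strict monotonicity of $q$ on the positive range.
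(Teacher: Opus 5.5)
Your proposal is correct and follows the paper's proof essentially step for step. It uses the same chain $\CS+\beta\Pi\le\beta\TS\le\beta\E[\pi(Y)]\le\beta\E_H[\pi(V)]$, attains it by full public revelation, and uses strict Jensen to force $Y=V$. It also excludes groups with two positive valuations because efficient quality is not seller-optimal there, and obtains sufficiency from the full-extraction menu $(0,0),(q(v),vq(v))$.

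The only difference is in that exclusion step. The paper uses a direct deviation: cutting the lowest positive type's quality by $\epsilon$ lowers higher types' rents at first order while costing only $o(\epsilon)$ surplus. Your chord formula, as written, holds only for a two-point group. For a general $H_k$, with a zero atom or further positive atoms, you need the weaker, easily checked fact that the ironed virtual value lies strictly below $y_1$ on a positive-measure set of that atom's ranks. The paper's perturbation avoids this computation.
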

Full public revelation attains the upper bound
$\CS+\beta\Pi
 \le \beta\TS
 \le \beta\E[\pi(Y)]
 \le \beta\E_H[\pi(V)]$.
Strict convexity of $\pi$ and the pointwise surplus bound
imply that every optimum must reveal the primitive valuation
to the buyer and supply efficient quality.
The public label must therefore provide enough information for the seller to optimally offer these efficient qualities.
 If a group contained two distinct positive valuations,
a small reduction in the efficient quality of its lowest
positive type would lower the information rents of buyers
with higher valuations at first order, while reducing total
surplus only at second order. The seller would therefore
profitably depart from efficient provision. A group containing
zero and one positive valuation permits efficient provision
with full extraction. The zero type receives zero quality,
which creates no rent for the positive type. Thus, full
extraction is necessary even at $\beta=1$, when the designer
maximizes total surplus.
From now on, therefore, we consider $0\le\beta<1$.

For a binary prior on $\{a,1\}$, a conditional prior is determined
by its mean $m_k$. Public feasibility reduces to
$\sum_k\lambda_km_k=m$, and optimal private learning within each
group gives
\begin{equation}
 V^{\mathrm{SD}}_{a,\beta}(m)
 :=\max_{\substack{\lambda_k\ge0,\ \sum_k\lambda_k=1\\
                   m_k\in[a,1],\ \sum_k\lambda_km_k=m}}
       \sum_k\lambda_kV_{a,\beta}(m_k)
 =\cav V_{a,\beta}(m).
 \label{eq:sd-concavification}
\end{equation}
Two groups suffice. This is concavification in the \emph{prior mean},
after private learning and the seller's response have already been
optimized \citep[Corollary~2]{kamenicagentzkow2011}.\footnote{At the
degenerate conditional priors the seller extracts the entire surplus,
so $V_{a,\beta}(a)=\beta\pi(a)$ and
$V_{a,\beta}(1)=\beta\pi(1)$. With more than two primitive values,
the conditional distributions must reproduce the entire prior;
averaging their means is not sufficient. Appendix~\ref{app:sd-implementation}
gives the binary assignment probabilities.}
The group policies need not be individually feasible under the prior. Public assignment can therefore improve a weighted
welfare maximum, rather than merely fill gaps between its maximizing
payoff pairs.

To state the characterization, recall that the integrand in the objective can be written as
$F_{y,\beta}(z)=(y-z)q(z)+\beta\pi(z)$, where $y$ is the posterior value and $z$ is the virtual value (see equation~\eqref{eq:local_objective_contribution}). For $y>0$,
define
\begin{equation}
 \mathcal R_\beta(y):=\max_{0\le z\le y}F_{y,\beta}(z),\qquad
 z_{y,\beta}^{\dagger}:=\argmax_{0\le z\le y}F_{y,\beta}(z).
 \label{eq:sd-rent-function}
\end{equation}
Under \Cref{ass:curvature}, the maximizer is unique and interior:
\begin{equation}
 (y-z_{y,\beta}^{\dagger})q'(z_{y,\beta}^{\dagger})
 =(1-\beta)q(z_{y,\beta}^{\dagger}),
 \qquad 0<z_{y,\beta}^{\dagger}<y.
 \label{eq:sd-basic-quality}
\end{equation}
Set $\mathcal R_\beta(0)=z_{0,\beta}^{\dagger}=0$.
At $\beta=0$, $\mathcal R_0(y)$ is the largest rent contribution
$(y-z)q(z)$.

\begin{theorem}[Joint learning and segmentation with a binary prior]
\label{thm:joint-consumer-segmentation}
Fix $a\in(0,1)$ and $0\le\beta<1$, and maintain
\Cref{ass:cost,ass:curvature} at that weight.
\begin{enumerate}
 \setlength{\itemsep}{3pt}
 \setlength{\parsep}{0pt}
 \item[(i)] At every $m\in(a,1)$, a joint optimum exists and two
 public groups suffice. Every positive-probability group in an
 optimum has full participation. Its posterior distribution consists
 of a nondegenerate separating interval and an efficiently served
 atom at one, possibly preceded by a trading atom at $a$.

 \item[(ii)] Either $V^{\mathrm{SD}}_{a,\beta}=V_{a,\beta}$ at
 every prior mean, or there is a single interval
 $(m_-,m_+)\subset(a,1)$ on which
 $V^{\mathrm{SD}}_{a,\beta}>V_{a,\beta}$, with equality elsewhere.

 \item[(iii)] If the interval $(m_-,m_+)$ is nonempty, there are a
 unique regular upper distribution $G_A$, with bottom posterior value
 $B>a$ and mean $M$, and a unique mean price $\theta_{a,\beta}>0$
 such that the two optimal group distributions are
 \begin{equation}
  G_L=(1-r)\delta_a+rG_A,\qquad G_H=G_A,\qquad
  r=\frac{a-z_{a,\beta}^{\dagger}}
          {B-z_{a,\beta}^{\dagger}}.
  \label{eq:sd-joint-markets}
 \end{equation}
 Their means are $m_-=a+r(M-a)$ and $m_+=M$. For
 $m\in[m_-,m_+]$, assign probability
 $\lambda(m)=(M-m)/(M-m_-)$ to $G_L$ and the remainder to $G_H$.
 Then
 \begin{equation}
  V^{\mathrm{SD}}_{a,\beta}(m)
  =\theta_{a,\beta}(m-a)+\mathcal R_\beta(a).
  \label{eq:sd-joint-value}
 \end{equation}
 At every $m\in(m_-,m_+)$, these are the only optimal group
 distributions, up to relabeling or replication of identical groups.
 The pair, the price $\theta_{a,\beta}$, and the interval may depend
 on both $a$ and $\beta$.
\end{enumerate}
If $a=0$, state-dependent segmentation does not increase the designer's objective:
$V^{\mathrm{SD}}_{0,\beta}(m)=V_{0,\beta}(m)$, $0<m<1$, under the same assumptions at the specified weight.
\end{theorem}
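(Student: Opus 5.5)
The plan is to work entirely through the concavification identity \eqref{eq:sd-concavification}, $V^{\mathrm{SD}}_{a,\beta}=\cav V_{a,\beta}$ on $[a,1]$, combined with the single-market structure from \Cref{thm:binary-structure}, \Cref{prop:mean-accounting} and \Cref{prop:separating-conditions}. Existence and the two-group claim in (i) follow from Carathéodory in one dimension, using continuity of $V_{a,\beta}$ in $m$ (weak continuity of payoffs plus upper hemicontinuity of $\mathcal G_{a,m}$). Every group mean $m_k$ used in an optimum must lie where $\cav V_{a,\beta}=V_{a,\beta}$, and the group's distribution must be a single-market optimum at $m_k$. Next I would show that, at any such contact point, the single-market optimum has no exclusion and any lower trading atom sits exactly at $a$. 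For exclusion, I would use the seller-side inequality $sB>a$ from the proof of \Cref{thm:general-participation}. A group with an excluded atom at $a$ can be split publicly into a degenerate group at $a$ (value $\beta\pi(a)$, so $V_{a,\beta}(a)=\beta\pi(a)$) and an active group at mean $M$. Since $\pi(a)>0$ while the excluded atom contributes nothing, this split strictly raises the average, so the original point is not a contact. For a lower trading atom at $v>a$, I would split off mass at $a$ and raise the atom's posterior value in the manner of \Cref{prop:mean-accounting}: this raises consumer surplus at rate $q(z)>0$ per unit of mean, while the public split pays for the extra mean. The obstacle is to show the net effect is strictly positive. I would compare the atom's marginal value $q(z)\le\theta$ from \eqref{eq:lower-quality-comparison} with the slope of the chord to $(a,\beta\pi(a))$.

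For (ii) and (iii), I would parametrize regular upper distributions $G_A$ with bottom $B$ and mean $M$, and consider markets of the form $(1-r)\delta_a+rG_A$. When the atom at $a$ trades, the contribution of rank mass $1-r$ at posterior $a$ with virtual value $z_L=a-\frac{r}{1-r}(B-a)$ is $(1-r)F_{a,\beta}(z_L)$. Adding a degenerate group at $a$ makes the joint problem linear in the amount of $a$-mass. The optimal division therefore sets $z_L=z^\dagger_{a,\beta}$, which gives $r=(a-z^\dagger)/(B-z^\dagger)$ as in \eqref{eq:sd-joint-markets}. I would then show that $\cav V_{a,\beta}$ has at most one affine piece strictly above $V_{a,\beta}$. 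Its slope $\theta_{a,\beta}$ is the maximal ratio $[J_\beta(G_L)-\mathcal R_\beta(a)]/(m_L-a)$-type ``payoff per unit of mean above $a$'' over pairs of the form above, and its intercept at $a$ equals $\mathcal R_\beta(a)$. This gives \eqref{eq:sd-joint-value}. The line must pass through $(a,\mathcal R_\beta(a))$ rather than $(a,\beta\pi(a))$ because the degenerate $a$-mass can be merged into $G_L$'s atom. The endpoints $m_-=a+r(M-a)$ and $m_+=M$ are then the tangency points at $G_L$ and $G_H=G_A$.

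Uniqueness would use the uniqueness of the separating path for given $\theta$ from \Cref{prop:separating-conditions}. It also needs strict concavity of $F_{a,\beta}$ to pin down $z^\dagger$, and strictness of the comparisons in step one to exclude any other contact configurations. The single-interval statement in (ii) follows because any two affine pieces would have to share the intercept $\mathcal R_\beta(a)$ at $a$ and the maximal slope, so they coincide.

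The case $a=0$ follows because $z^\dagger_{0,\beta}=0$ and $\mathcal R_\beta(0)=0$. The candidate line is then $\theta(m)$ through the origin, which already coincides with the linear part of the single-market value function in the payoff-neutral analysis of \Cref{prop:neutral-participation}. Concavity of $V_{0,\beta}$ from that proposition then gives $\cav V_{0,\beta}=V_{0,\beta}$ directly.

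I expect the main obstacle to be identifying the intercept $\mathcal R_\beta(a)$ and excluding contact points whose lower atom sits above $a$. Both require the perturbation bookkeeping of \Cref{prop:mean-accounting} to be done jointly with changes in the concavified revenue curve.
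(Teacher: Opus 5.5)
\textbf{First gap: exclusion at $\beta=0$.} Your exclusion step fails at $\beta=0$, the buyer-optimal case the theorem must cover. If you isolate the excluded buyers in a degenerate group at $a$, the seller extracts all surplus there, so that group is worth $J_\beta(\delta_a)=\beta\pi(a)$. Since $J_\beta(G)=sJ_\beta(F)$ at an exclusion optimum, your split raises the objective by exactly $(1-s)\beta\pi(a)$, which is zero when $\beta=0$.

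The missing idea is to make the excluded buyers trade by pooling them with a share of the active buyers. Take $G_L=(1-r)\delta_a+rF$ with $r=(a-z)/(B-z)$ for some $0<z<a$, and $G_H=F$. Because $r<a/B<s$ (this is where $sB>a$ actually enters), weight $\lambda=(1-s)/(1-r)\in(0,1)$ on $G_L$ reproduces the original aggregate posterior law. Since $z<a\le\bar z$, the atom at $a$ is served at virtual value $z$. The resulting changes are $\Delta\CS=(1-s)(a-z)q(z)>0$ and $\Delta\Pi=(1-s)\pi(z)>0$ for every $\beta$.

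\textbf{Second gap: a lower trading atom at $v>a$.} This step cannot succeed, not merely because the sign is hard to check. Suppose you split off a degenerate group of mass $\epsilon$ at $a$ and absorb the released mean by raising the atom at rate $q(z)$. The objective then changes by exactly $\epsilon\bigl[\beta\pi(a)-V_{a,\beta}(m_k)+(m_k-a)q(z)\bigr]$. Now let $m_k$ be a contact point with supporting slope $\theta_s$. Evaluating the supporting line at $a$ gives $\beta\pi(a)\le V_{a,\beta}(m_k)-\theta_s(m_k-a)$. Feasibility of raising the atom gives $q(z)\le\theta_s$. Hence the bracket is nonpositive; the bound $q(z)\le\theta$ points the wrong way for you.

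To show that a mean with an interior atom is not a contact point, you need two groups with conditional means on both sides of $m_k$ whose average beats $V_{a,\beta}(m_k)$ at second order. \Cref{lem:sd-interior-atom-split} constructs them. It fixes the upper component, up to an interior mean correction, and varies the atom's share $r$. The priced objective $J_\beta-\theta\,\E_G[Y]$ is quadratic in $r$ with second derivative $2(B-z)[\theta-q(z)]$, so a symmetric spread in $r$ gains whenever $\theta>q(z)$. When $\theta=q(z)$ this second derivative vanishes, but the cross-partial with the atom's virtual value is negative. One therefore lowers the virtual value as the share rises.

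\textbf{Two further points.} For $a=0$, the concavity of $V_{0,\beta}$ in \Cref{prop:neutral-participation} relies on \Cref{ass:neutral-curvature}, which is not among the hypotheses. The paper instead shows that every maximizer of the priced problem is $(1-s)\delta_0+sF$ with a single regular $F$, so averaging the shares reproduces any segmentation in one market. Separately, \Cref{prop:separating-conditions} fixes the separating path only up to its stopping point. To pin down $G_A$ you need the priced problem's bottom condition $A_B=0$ together with strict monotonicity of $A_B$ in $B$. You also need to show that the upper part of $G_L$ is itself a maximizer, so that it coincides with $G_H$.
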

Consider an optimal private learning policy yielding a single
market that excludes a lower group. Serving this group would
generate additional information rents for every buyer with
a higher posterior mean, making it unprofitable to offer a basic product.
Joint design can preserve the aggregate posterior distribution
while placing the excluded buyers with a suitable share of
the previously served buyers and assigning the rest to a
separate market. This reduces the cost in terms of information rents of serving the
lower group. Keeping the qualities of previously served
buyers unchanged, this generates new payments
that more than cover the production cost of offering the basic product and the revenue loss from the price reductions
required for those sharing the menu that includes the basic product.
Both consumer surplus and profit therefore rise. The same
improvement within a public market rules out exclusion
at a joint optimum.

When segmentation strictly improves the designer's objective,
the two optimal markets share the conditional distribution
$G_A$ of posterior means above $a$ and the quality schedule
for these types. Only $G_L$ also contains the atom at $a$,
served with the isolated quality $q(z_{a,\beta}^{\dagger})$.
For each posterior type in $G_A$, the corresponding contract
in $G_L$ is cheaper than in $G_H$ by
$(B-a)q(z_{a,\beta}^{\dagger})$, the additional rent created
by access to the basic product. For fixed $a$ and $\beta$,
as the prior mean varies within $(m_-,m_+)$, the two group
designs remain fixed and only their population shares change.

The basic quality balances the rents it generates against
its net contribution to profit. For a candidate virtual value
$z$, the relation $a=(1-r)z+rB$ implies
$r(B-a)=(1-r)(a-z)$. Thus, per buyer at the lower atom,
the product generates additional rents of $(a-z)q(z)$ for
other buyers and additional profit of $\pi(z)$, after
deducting the associated revenue loss from price reductions. When both markets
are used, their composition and population shares can be
adjusted to vary $z$ locally while keeping $G_A$ and the
aggregate mass at $a$ fixed. The designer therefore chooses
$z$ to maximize $F_{a,\beta}(z)$, giving
$z=z_{a,\beta}^{\dagger}$ and \eqref{eq:sd-basic-quality}. Buyers at the lower atom learn $V=a$ with certainty and
receive zero rent themselves.

\paragraph{The order of optimization.}
Equation~\eqref{eq:sd-concavification} optimizes private
learning for each possible conditional prior and then
chooses the segments. This differs from first optimizing
learning for the original prior and then segmenting the
resulting posterior distribution.
For example, under quadratic cost, consider a prior mean
just below $m^\dagger(a)$ in
\Cref{prop:participation-switch}. The unique buyer-optimal
single-market distribution has a lower trading atom at
some $v\in(a,1)$. Any segmentation preserving this aggregate
posterior distribution retains the atom at $v$. Yet
\Cref{thm:joint-consumer-segmentation}(i) permits posterior
atoms only at $a$ and $1$ in a joint optimum. Attaining
the joint optimum therefore requires changing the
posterior distribution itself.

At $\beta=0$, \Cref{thm:joint-consumer-segmentation} implies
that joint design can raise consumer surplus relative to
a single-market optimum with such a lower trading atom.
The following proposition shows that a two-group policy
can increase profit at the same time, yielding a strict
gain for both consumers and the seller.

\begin{proposition}[A Pareto-improving state-dependent split]
\label{prop:state-dependent-gain}
For a binary prior on $\{a,1\}$, maintain
\Cref{ass:cost,ass:curvature} at $\beta=0$.
Suppose a buyer-optimal single-market solution has
a lower trading atom at a posterior value strictly
above $a$. Then there
is a two-group state-dependent policy that strictly raises
both consumer surplus and profit relative to that solution.
The two conditional means can be chosen arbitrarily close
to the original prior mean, and both groups retain full
participation.
\end{proposition}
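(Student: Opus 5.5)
The plan is to keep the aggregate posterior distribution fixed and change only how it is divided into two public groups. Let $G^*$ be the buyer-optimal single-market solution. By \Cref{thm:binary-structure}(ii), it has the form
\[
 G^*=u_L\delta_v+(1-u_L)G_U,\qquad a<v<B,
\]
where $G_U$ is the regular upper distribution. Its bottom posterior value is $B$ and its mean is $M_U$. It consists of the separating interval and the top atom. The lower atom's virtual value is $z=v-\frac{1-u_L}{u_L}(B-v)$, and $0<z<\underline\varphi$ by \eqref{eq:lower-quality-comparison}. For a share $\lambda\in(0,1]$ with $\lambda\ge u_L$, define
\[
 G_L=\frac{u_L}{\lambda}\delta_v+\Bigl(1-\frac{u_L}{\lambda}\Bigr)G_U
 \quad\text{(probability }\lambda\text{)},\qquad
 G_H=G_U\quad\text{(probability }1-\lambda\text{)}.
\]
The mixture of the two groups is exactly $G^*$. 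With a binary prior, group $k$ is feasible as soon as $G_k$ is supported in $[a,1]$, and its conditional prior is the one with mean $m_k=\int y\,G_k(\dd y)$. These means average to $m$, so \eqref{eq:sd-prior-feasibility} holds. As $\lambda\uparrow1$, both conditional means tend to $m$: $m_L\to m$, and $m_H=M_U$ carries vanishing weight. To make both conditional means close to $m$ as the statement requires, I would instead perturb only slightly, taking $G_H=G_U$ with probability $1-\lambda$ and $\lambda$ close to one. The cleanest version lets both groups contain the atom, with atom shares on either side of $u_L$. I would parametrize this and check that the means stay within any prescribed distance of $m$.

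In each group, \Cref{thm:binary-structure} and the regularity of $G_U$ mean the seller's response is an atom at $v$ served at the virtual value $z_k=v-\frac{1-s_k}{s_k}(B-v)$, where $s_k$ is the atom share in group $k$. The upper types keep virtual values, qualities, and payments equal to those under $G_U$, net of the common rent shift. I would verify this by computing the concavified revenue curve, as in the description of a lower atom after \Cref{thm:binary-structure}. Aggregating, with $\mu_k$ the group probabilities,
\[
 \CS=\sum_k \mu_k s_k\,(v-z_k)q(z_k)+(1-u_L)\CS(G_U),\qquad
 \Pi=\sum_k \mu_k s_k\,\pi(z_k)+(1-u_L)\Pi(G_U).
\]
This uses the identity $s_k(v-z_k)=(1-s_k)(B-v)$: the rents the basic product creates for upper types equal $(v-z)q(z)$ per atom buyer. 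So the comparison reduces to how atom buyers are spread across groups, with total atom mass $\sum_k\mu_ks_k=u_L$ fixed. In the single market, $z$ is the only value. A split raises $z_k$ in the group with a smaller atom share and lowers it in the other. Choosing the shares so that one group's $z_k$ rises slightly while the other group is $G_U$ alone (no atom there) gives a first-order change proportional to $F_{v,0}'(z)$ in consumer surplus and to $\pi'(z)=q(z)>0$ in profit. Full participation is retained because $z_k>0$ for small changes.

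The main obstacle is establishing $F_{v,0}'(z)>0$, that is, $(v-z)q'(z)>q(z)$ at the single-market optimum. Equivalently, $z<z^\dagger_{v,0}$: the single market serves the atom at a quality below what maximizes its rent contribution. I would derive this from the optimality of $G^*$ against a single-market variation that changes the atom mass $u_L$ while restoring the prior mean. The restoration would go through the upper distribution at marginal value $\theta$, following \Cref{prop:separating-conditions} with $B$ held fixed, so that the upper part and $\xi$ enter only through $\theta$. Adding the atom-mass first-order condition to the inequality $q(z)\le\theta$ and to the strict jump $z<\underline\varphi$ from \eqref{eq:lower-quality-comparison} should give the strict sign. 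If it does not, the fallback is to show directly that $F'_{v,0}(z)\le0$ would let a single-market change (raising $v$ toward $B$ at fixed $z$, with compensation elsewhere) improve $J_0$, contradicting optimality. Once $F_{v,0}'(z)>0$ is established, taking the perturbation small enough keeps both groups' means near $m$ and makes both aggregate payoffs strictly increase.
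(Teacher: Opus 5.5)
Your split accounting is correct: for groups that mix $\delta_v$ and $G_U$, aggregate payoffs are $\sum_k\mu_ks_kF_{v,0}(z_k)$ and $\sum_k\mu_ks_k\pi(z_k)$ plus constants. Note, though, that $z_k$ \emph{increases} with the atom share $s_k$, not decreases. The argument built on this accounting has two genuine gaps.

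\textbf{First, $F_{v,0}'(z)>0$ does not follow from single-market optimality.} Write $\ell=B-z$, so that $v-z=(1-u_L)\ell$. Vary the atom's virtual value while holding $G_U$ and the atom share fixed; the atom's posterior value $u_Lz+(1-u_L)B$ moves with it. This variation is feasible in either direction in one market once the mean is restored through the upper component at price $\theta$. Stationarity of $L=J_0-\theta\mu$ then gives $(1-u_L)[\ell q'(z)-q(z)]=\theta u_L$, hence
\[
 F_{v,0}'(z)=(1-u_L)\ell q'(z)-q(z)=u_L[\theta-q(z)].
\]
So your inequality is equivalent to $\theta>q(z)$. However, \eqref{eq:lower-quality-comparison} gives only $q(z)\le\theta$. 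Neither the mass condition nor the strict jump $z<\underline\varphi$ excludes equality, and the paper treats $\theta=q(z)$ as a separate case.

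In that case $z=z^\dagger_{v,0}$. Strict concavity of $F_{v,0}$ then means no reallocation of atom buyers at the fixed posterior value $v$ can raise consumer surplus. Your fallback deviation also fails. Raising $v$ at fixed $z$ means lowering the atom share, and $L$ is quadratic in that share with second derivative $2\ell[\theta-q(z)]=0$. So $L$ is flat along the deviation and no contradiction arises.

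\textbf{Second, the conditional means cannot be brought close to $m$ while the aggregate posterior stays at $G^*$.} A pure $G_U$ group has mean $M_U$, which is bounded away from $m$ however small its weight, so your claim that $m_H$ tends to $m$ is false. If instead both conditional means are near $m$, both groups contain a trading atom at $v$ with share near $u_L$. The contact identities $s_k(v-z_k)=(1-s_k)(B-v)$ then average to $\sum_k\mu_ks_kz_k=u_Lz$. Strict concavity of $F_{v,0}$ therefore makes consumer surplus fall strictly unless the groups coincide, so your ``cleanest version'' with atom shares on either side of $u_L$ lowers consumer surplus. When $\theta>q(z)$, your pure-$G_U$ construction does yield a strict first-order Pareto gain without changing learning, but that proves a weaker statement.

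\textbf{The missing idea is to let learning differ across groups.} The paper writes each group as $r\delta_{r\zeta+(1-r)B}+(1-r)G_U$, so the atom's posterior value moves with $(r,\zeta)$. Single-market optimality gives $L_r=L_\zeta=0$ at $(u_L,z)$. The paper then mixes two such laws whose means straddle $m$:
\begin{itemize}
\item if $\theta>q(z)$, the laws at $r=u_L\pm\epsilon$ with $\zeta=z$;
\item if $\theta=q(z)$, the laws along $(u_L+t,\,z-ht)$ for small $h>0$.
\end{itemize}
The curvature of $L$ along these directions is $2\ell[\theta-q(z)]$ in the first case and $2hu_L\ell q'(z)+h^2L_{\zeta\zeta}>0$ in the second. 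This yields the consumer-surplus gain, which one market cannot capture because it must restore the mean internally. Profit rises because, to second order, the mixture puts less aggregate weight on the atom, which earns $\pi(z)<\Pi(G_U)$.
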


\paragraph{Quadratic cost.}
For $c(x)=x^2/2$, the isolated quality in
\eqref{eq:sd-joint-markets} is $a/(2-\beta)$. At $\beta=0$, the
strict-gain interval in \Cref{thm:joint-consumer-segmentation} is
nonempty if and only if
\[
 0<a<a^{\mathrm{SD}}:=\frac{2}{1+e^{\pi/2}}\simeq0.3442058.
\]
For each such $a$, maximal consumer surplus rises exactly when
$m\in(m_-(a),m_+(a))$, with equality at the endpoints and outside
this interval. For $a=0$ or $a\ge a^{\mathrm{SD}}$, it is unchanged
at every mean. 

The following table illustrates the order of optimization at $a=0.1$,
$m=0.4$, and $\beta=0$, where buyer-optimal learning in a
single market induces exclusion. The middle column maximizes
consumer surplus through segmentation while preserving the
aggregate posterior distribution from the first column.
The final column optimizes learning and segmentation jointly.

\begin{center}
\small
\begin{tabularx}{\textwidth}{
    l
    >{\centering\arraybackslash}X
    >{\centering\arraybackslash}X
    >{\centering\arraybackslash}X
}
\toprule
& Consumer-optimal learning
& Optimal segmentation of consumer-optimal posterior distribution
& Optimal joint design \\
\midrule
Consumer surplus
& $0.0448751$ & $0.0461147$ & $0.0461260$ \\
Profit
& $0.0702773$ & $0.0708971$ & $0.0722426$ \\
Participation
& $0.5041634$ & $1$ & $1$ \\
\bottomrule
\end{tabularx}
\end{center}

Segmentation of the consumer-optimal posterior distribution already assigns each posterior type the same quality across segments,
restores full participation and raises both payoffs. Allowing
learning to adjust, however, further increases consumer surplus, by
about $0.025\%$ relative to the middle column. Thus,
optimal segmentation of the consumer-optimal posterior
distribution induced by private learning need not attain the joint optimum.
Relative to the original single-market outcome, joint design
raises consumer surplus by about $2.79\%$ and profit by
$2.80\%$, while participation rises by $49.58$ percentage
points. In the joint optimum, buyers at the lower atom
receive quality $0.05$, half their efficient quality of
$0.1$, and zero rent. The Additional Material provides
the calculations.

\paragraph{A payoff-neutral lowest value.}
The neutral benchmark yields a stronger conclusion about the entire
frontier. Under the hypotheses of \Cref{thm:neutral-frontier},
$V_{a,\beta}$ is concave in mean for every $\beta<1$ by
\Cref{prop:neutral-weighted-value}. Public assignment therefore
cannot raise its value. At $\beta=1$, the full-revelation surplus
bound is linear in mean. These bounds and the continuous
single-market frontier give the joint redundancy result.

\begin{corollary}[Segmentation with a payoff-neutral lowest value]
\label{cor:seg-neutral}
Under the hypotheses of \Cref{thm:neutral-frontier}, neither
state-independent nor state-dependent segmentation changes the
single-market Pareto frontier. Every segmented pair is weakly
dominated by a single-market pair on $(C_\beta,P_\beta)$,
$0\le\beta\le1$. For every profit requirement feasible in one market,
maximal consumer surplus is the same under either assignment rule and
in one market.
\end{corollary}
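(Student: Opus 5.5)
The plan combines three ingredients: concavity of the single-market value in the prior mean, which handles weighted objectives with $\beta<1$; linearity of the full-revelation bound, which handles $\beta\ge1$; and the continuity and dominance properties of the single-market path from \Cref{thm:neutral-frontier}. State-independent segmentation is a special case of state-dependent segmentation, since it is the case with all $H_k=H$. It therefore suffices to show that every state-dependent segmented pair $(C,P)$ is weakly dominated by some pair $(C_\beta,P_\beta)$ on the single-market path.

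First, for each $\beta\in[0,1)$ I bound weighted objectives of segmented policies. By \eqref{eq:sd-concavification}, a state-dependent policy with group means $m_k$ yields
\[
C+\beta P\le\sum_k\lambda_kV_{a,\beta}(m_k)\le\cav V_{a,\beta}(m)=V_{a,\beta}(m).
\]
The last equality holds because $V_{a,\beta}$ is concave in mean under the hypotheses of \Cref{prop:neutral-participation}. That proposition gives concavity for $a<m<1$. I would check that it extends to the degenerate endpoints with $V_{a,\beta}(a)=\beta\pi(a)=0$ and $V_{a,\beta}(1)=\beta\pi(1)$, so that the concavification over $[a,1]$ is unchanged. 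The point $m=a$ is harmless: it is the limit of the linear part, whose value tends to $0$. At $m=1$, upper semicontinuity of the concave function suffices, because $V_{a,\beta}(m)\to\beta\pi(1)$ by weak continuity of payoffs. This endpoint check is the main technical point to get right.

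For $\beta=1$, group $k$ satisfies $C_k+P_k\le\E_{H_k}[\pi(V)]$. Since $\pi(a)=0$, this equals $\frac{m_k-a}{1-a}\pi(1)$, which is linear in $m_k$. Averaging gives $C+P\le\frac{m-a}{1-a}\pi(1)=P_1=C_1+P_1$.

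Second, I convert these weighted bounds into dominance. Take a segmented pair $(C,P)$. Profit satisfies $P\le P_1$: by the $\beta=1$ bound $P\le C+P\le P_1$, using $C\ge0$. If $P\le P_0$, the $\beta=0$ bound gives $C\le V_{a,0}(m)=C_0$, so $(C_0,P_0)$ weakly dominates. If $P_0<P\le P_1$, continuity of the path in \Cref{thm:neutral-frontier} gives some $\beta\in[0,1]$ with $P_\beta=P$. Then $C\le V_{a,\beta}(m)-\beta P=C_\beta+\beta P_\beta-\beta P=C_\beta$, where the $\beta=1$ case uses the linear bound. Hence $(C_\beta,P_\beta)$ weakly dominates.

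Finally, the single-market path is attainable in every regime, so it remains efficient among segmented pairs, because any segmented pair is itself dominated by a point on the path. The frontiers therefore coincide. The equality of maximal consumer surplus under every feasible profit requirement follows from the same two-case argument. The maximum over segmented policies is at most $C_\beta$ (or $C_0$), and the single market attains this value.
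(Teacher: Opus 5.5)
Your proposal is correct and takes essentially the paper's route. Concavity of $V_{a,\beta}$ in the mean (extended to $[a,1]$) bounds $C+\beta P$ for $\beta<1$, and the linear full-revelation bound handles $\beta=1$. Continuity of the path then converts these supporting inequalities into weak dominance and equal constrained maxima.

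Your efficiency step, via transitivity of dominance and single-market efficiency of path points, is a slightly shorter substitute for the paper's separate treatment of positive-weight points and the buyer-optimal endpoint. At $m=1$ the condition you need is $V_{a,\beta}(1)\le\lim_{m\uparrow1}V_{a,\beta}(m)$, which is a lower rather than upper semicontinuity requirement. Full revelation already delivers it through $V_{a,\beta}(m)\ge\beta\frac{m-a}{1-a}\pi(1)$.
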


\paragraph{High prior means.}
The high-mean result gives a similar conclusion on a portion of the
frontier. 

\begin{corollary}[Segmentation at high prior means]
\label{cor:seg-high-mean}
Under the hypotheses of \Cref{prop:high-mean}, its threshold
$\bar m$ can be chosen so that, for every $m\in(\bar m,1)$,
\begin{equation}
 V^{\mathrm{SD}}_{a,\beta}(m)=V_{a,\beta}(m),
 \qquad 0\le\beta\le\bar\beta.
 \label{eq:seg-high-mean-value}
\end{equation}
Under either assignment rule, any segmentation attaining this value
uses $G_{m,\beta}^*$ in every positive-probability group. Neither rule
increases maximal consumer surplus subject to a profit requirement in
$[P_0,P_{\bar\beta}]$. The path $(C_\beta,P_\beta)$,
$0\le\beta\le\bar\beta$, is therefore exactly that portion of either
segmented Pareto frontier. For profit requirements $p\le P_0$, maximal
consumer surplus remains $C_0$.
\end{corollary}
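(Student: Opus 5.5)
The plan is to reduce everything to the value-function identity \eqref{eq:seg-high-mean-value} and then reuse \Cref{prop:high-mean} and \Cref{cor:high-mean-frontier}. By \eqref{eq:sd-concavification}, $V^{\mathrm{SD}}_{a,\beta}=\cav V_{a,\beta}$. So it suffices to show that, after possibly enlarging $\bar m$, the function $V_{a,\beta}$ coincides with its concave envelope at every $m\in(\bar m,1)$, uniformly in $\beta\in[0,\bar\beta]$.

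The first step uses the proof of \Cref{prop:high-mean}. It shows that the maximal objective is strictly decreasing in the mean sufficiently near one, and that $V_{a,\beta}=V_{0,\beta}$ there. For $V_{0,\beta}$ I would write $V_{0,\beta}(m)=\max_{G\in\mathcal G_{0,m}}J_\beta(G)$. The union over $m$ of the feasible sets $\mathcal G_{0,m}$ is $\Delta([0,1])$, and $J_\beta$ does not depend on $m$ once $G$ is fixed. Hence $V_{0,\beta}$ is the upper envelope of $J_\beta$ restricted to the mean slices of a convex set.

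To get concavity near one, I would show that $\cav V_{a,\beta}$ touches $V_{a,\beta}$ at high $m$. Take any splitting $m=\sum_k\lambda_km_k$. It suffices to find a supporting line at $m$: an affine $\ell$ with $\ell(m)=V_{a,\beta}(m)$ and $\ell\ge V_{a,\beta}$ on $[a,1]$. The candidate slope is the multiplier $\theta$ of \Cref{prop:separating-conditions} for the unique optimum $G^*_{m,\beta}$, as suggested by the mean-price interpretation in \Cref{thm:joint-consumer-segmentation}(iii). I would then argue as follows.

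First, $\theta<0$ near $m=1$, consistent with $V$ decreasing. Second, for $m'>m$, the envelope-type bound $V(m')\le V(m)+\theta(m'-m)$ follows from a Lagrangian duality argument in the quantile formulation. Third, for $m'<m$, including values of $m'$ well below $\bar m$ where $V$ may be convex, the decrease near one gives $V(m')\le\max V$. The key requirement is that the line $V(m)+\theta(m'-m)$ lies above $\max_{[a,1]}V$ at the relevant $m'$.

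The main obstacle is this global comparison for $m'$ far from $m$. I would try to avoid it by a cleaner route. Since $V_{a,\beta}$ is continuous on $[a,1]$ with $V_{a,\beta}(1)=\beta\pi(1)$, and is strictly decreasing near one, the concave envelope is determined near one by chords from the right endpoint $(1,\beta\pi(1))$ and by the restriction of $V$ to $[\hat m,1]$, where $\hat m$ is a maximizer of $\cav V$. The goal is to show that $V$ is concave on some $[\hat m',1]$ containing all the relevant tangency points. That reduces to local strict concavity of $V_{0,\beta}$ near one. I would establish this via \Cref{prop:separating-conditions}: as $m\to1$ the separating path is traced from the terminal state, and differentiating in $m$ gives $V'=\theta(m)$ with $\theta$ strictly decreasing in $m$. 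Compactness of $[0,\bar\beta]$, together with the continuity of the selected optima, makes the threshold uniform.

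Once \eqref{eq:seg-high-mean-value} holds, the remaining claims follow. Any segmentation attaining the value has every positive-probability group at some conditional mean $m_k$ with $\sum\lambda_kV(m_k)=V(m)$. Strict concavity then forces $m_k=m$. By uniqueness in \Cref{prop:high-mean}, each group uses $G^*_{m,\beta}$, and the state-independent case is the special case $m_k=m$.

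For the frontier, fix $p\in[P_0,P_{\bar\beta}]$ and pick $\beta$ with $P_\beta=p$, which exists by \Cref{cor:high-mean-frontier}. Any segmented pair $(C,P)$ with $P\ge p$ satisfies $C\le V^{\mathrm{SD}}_{a,\beta}(m)-\beta P\le V_{a,\beta}(m)-\beta p=C_\beta$. For $p\le P_0$, the same bound at $\beta=0$ gives $C_0$.
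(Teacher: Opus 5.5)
Your target, a line through $(m,V_{a,\beta}(m))$ with slope $\theta$ lying above $V_{a,\beta}$ on all of $[a,1]$, is the right one. However, the proposal never establishes it, and the ``cleaner route'' does not either.

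\begin{itemize}
\item Concavity of $V_{a,\beta}$ on some $[\hat m',1]$, even strict, does not give $V_{a,\beta}=\cav V_{a,\beta}$ there. A chord from a distant point $(m',V_{a,\beta}(m'))$ with $m'<\hat m'$ to $(1,\beta\pi(1))$ can lie above the concave piece. Such $m'$ might lie in a region where $V_{a,\beta}$ is convex, as in \Cref{cor:inclusion-convexity}. This is ruled out only if the tangent at $m$ already dominates $V_{a,\beta}$ on $[a,\hat m']$. Requiring the interval to ``contain all the relevant tangency points'' just restates that global comparison.
\item Your uniqueness step has the same defect. Group means can lie anywhere in $[a,1]$, so forcing $m_k=m$ needs $V_{a,\beta}(m')<V_{a,\beta}(m)+\theta(m'-m)$ for every $m'\ne m$. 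Local strict concavity near one does not deliver this.
\item The multiplier from \Cref{prop:separating-conditions}, via the compensated direction in \eqref{eq:binary-remote-mean}, only yields $V_{a,\beta}(m')\ge V_{a,\beta}(m)+\theta(m'-m)+o(|m'-m|)$. That is the wrong inequality for a supporting line. It also does not by itself make $V_{a,\beta}$ differentiable with $V'=\theta$.
\end{itemize}

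The missing idea is that this global comparison is already built into the proof of \Cref{prop:high-mean}. There, $G^*_{m,\beta}$ is obtained as the \emph{unique} maximizer of $J_\beta(G)-\theta\E_G[Y]$ over all of $\Delta([0,1])$, for some $\theta<0$. So the Lagrangian bound holds on both sides of $m$ at once, not only for $m'>m$. Since $\Delta([a,1])\subset\Delta([0,1])$, every $F\in\Delta([a,1])$ satisfies $J_\beta(F)\le V_{a,\beta}(m)+\theta(\E_F[Y]-m)$, with equality only at $F=G^*_{m,\beta}$.

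Averaging this over groups whose means average to $m$ gives \eqref{eq:seg-high-mean-value} with the original threshold. No concavification, no enlargement of $\bar m$, and no differentiability of the value is needed. The equality case forces every positive-probability group to use $G^*_{m,\beta}$.

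Your frontier bounds then go through. To conclude that the path is exactly the segmented frontier, also note one further point. A segmented pair with $C=C_0$ attains equality in the $\beta=0$ bound, so every group uses $G^*_{m,0}$ and profit equals $P_0$. This keeps the buyer-optimal endpoint efficient.
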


Since the prior is binary, all groups attaining a high-mean supporting
value have identical conditional priors. The public label then conveys
no information about valuation.

\section{General Finite Priors}
\label{sec:finite-priors}

We now return to a general finite prior and let $a=v_1$.
With a binary prior, support and mean are the only restrictions
on posterior demand. With more primitive values, a
mean-preserving redistribution within the support can generate
too much dispersion to arise from the prior. The additional
information constraints can therefore block improvements
available in the binary problem. As a result, optimal posterior distributions can 
combine several separating regions and atoms, and
exclusion can coexist with a trading atom. Thus, the optimal menu may need to combine a continuum of products differentiated by quality with several isolated items. 

\subsection{Bayes plausibility with general discrete priors}
\label{subsec:finite-constraints}

To describe the additional restrictions Bayes plausibility imposes, consider
\[\E_G[(Y-t)_+]=\int_t^{1}[1-G(y)]\dd y,\] where $(y-t)_+:=\max\{y-t,0\}$. This
quantity measures expected excess above $t$: how often posterior
values exceed the threshold and by how much. Since
$Y=\E[V\mid S]$, Jensen's inequality gives
$\E_G[(Y-t)_+]\le \E_H[(V-t)_+]$.
The intuition is that pooling primitive values on opposite sides of
$t$ offsets some of the excess above the threshold with values
below it. Learning can therefore reduce expected excess relative
to the prior, but cannot increase it. The function
$t\mapsto\E_G[(Y-t)_+]$ is the stop-loss transform of $G$.
Together with equality of means, these inequalities characterize
convex order and hence feasible learning 
\citep{shaked2007stochastic}.

For a finite prior, it is enough to check the inequalities at the
interior primitive values. Define
\[
 D_j(G):=\E_G[(Y-v_j)_+],\qquad
 \kappa_j:=\E_H[(V-v_j)_+]
          =\sum_{k>j}p_k(v_k-v_j),\qquad 2\le j<N.
\]
Then, for $G\in\Delta([a,1])$,
\begin{equation}
 G\in\mathcal G(H)
 \quad\Longleftrightarrow\quad
 \quad \E_G[Y]=m,\quad \text{and}\quad
 D_j(G)\le\kappa_j,\ \ 2\le j<N.
 \label{eq:finite-feasibility}
\end{equation}
If the stop-loss inequality holds at two adjacent primitive
valuations, it also holds at every threshold between them.
Over this interval, the prior's stop-loss transform is affine.
The posterior's transform is convex, so it lies below the
line joining its endpoint values. Since the inequality holds
at both endpoints, this line also lies below the prior's
transform throughout the interval.


The key observation is that some mean-preserving changes leave
every $D_j$ unchanged. Within any interval $[v_i,v_{i+1}]$ between adjacent
primitive values, each excess function is affine since $(y-v_j)_+= y-v_j$ for $j\le i$ and equal to zero when $j\ge i+1$.
Its expected contribution from this interval therefore depends only on
its probability mass and conditional mean.
Redistributing a fixed amount of mass inside
$[v_i,v_{i+1}]$, while preserving its mean and leaving the rest
of the distribution unchanged, preserves every $D_j$, irrespective of whether some or all constraints bind.

By contrast, a mean-preserving change across a primitive value
can violate feasibility. For example, consider splitting probability $r$
at $b$ equally between $b-\epsilon$ and $b+\epsilon$, for
sufficiently small $\epsilon>0$. This preserves the mean but raises $\E_G[(Y-b)_+]$ by
$r\epsilon/2$. 
If the
constraint at $b$ binds, this change is infeasible. Thus, mean-preserving changes remain feasible within each
primitive interval, but may fail across primitive values. This generates the piecewise-binary structure below.

\subsection{The piecewise-binary structure}
\label{subsec:finite-structure}

We say that a support gap $(w,y^+)$ following a trading atom
at $w$ \emph{crosses} a primitive value $v_j$ if
$w<v_j<y^+$. A support gap ending at a primitive value does not
cross it.

\begin{theorem}[Optimal learning with finitely many primitive values]
\label{thm:finite-structure}
Fix $0\le\beta<1$. Under \Cref{ass:cost,ass:curvature}, every
maximizer of $J_\beta$ in $\mathcal G(H)$ has the following properties.
\begin{enumerate}
 \setlength{\itemsep}{0pt}
 \setlength{\parsep}{0pt}
 \item[(i)] Excluded ranks, when present, form one lower interval.
 Distinct trading posterior values receive distinct qualities.
 Each maximal nonterminal positive-quality linear interval of the
 concavified revenue curve corresponds to one posterior atom,
 followed by a nonempty support gap.
 \item[(ii)] Each $[v_i,v_{i+1}]$ contains at most one trading
 atom whose following support gap is also contained in that interval.
 Such an atom lies below every separating portion in
 $(v_i,v_{i+1})$. There is at most one separating portion in each
 open primitive interval.
 \item[(iii)] Gaps crossing primitive values have disjoint open
 spans, and at most one starts in each $[v_i,v_{i+1})$.
 A gap ending at a primitive value may be followed by an atom there.
 \item[(iv)] A nondegenerate separating region reaches a positive
 atom at one. Virtual value converges to one along that region,
 and the top atom receives efficient quality $q(1)$.
\end{enumerate}
There are at most $N-1$ connected separating regions and $2N-3$
trading atoms below one. On each separating region, posterior means
are continuous and strictly increasing in rank, and qualities are
strictly increasing. On a separating region within each open primitive interval, the
posterior and virtual-value profiles are continuously differentiable,
with a positive continuous posterior density.
\end{theorem}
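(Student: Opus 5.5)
The plan is to localize the binary arguments of Section~\ref{sec:binary} to primitive intervals, using the observation in Section~\ref{subsec:finite-constraints}. Any change that preserves the mass and conditional mean inside a single $[v_i,v_{i+1}]$ preserves every $D_j$. By \eqref{eq:finite-feasibility}, such a change is therefore feasible whenever the original distribution is.

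Fix a maximizer $G$ with quantile $Q$ and concavified curve $R$. I would first revisit each perturbation behind \Cref{thm:binary-structure} and check whether it redistributes mean only within one primitive interval. These perturbations are: the mean-reduction accounting of \Cref{prop:mean-accounting}, unbunching distinct trading types, pooling a separating block against a higher nonterminal atom, merging two lower trading atoms, and removing the coexistence of an excluded and a trading atom. When all affected posterior values and their compensating adjustments lie in a common $[v_i,v_{i+1}]$, the binary conclusion transfers verbatim. When they do not, the conclusion may fail, and this is exactly the source of the extra structure in (ii)--(iii).

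For (i), the unbunching argument is local. A mean-preserving contraction within a constant-quality group, followed by a small reallocation of quality between nearby types, can be confined to an arbitrarily short rank interval. If the bunched types straddle a primitive value, I would perform the reallocation on a subinterval lying inside one primitive interval. The same first-order gain then shows that every nonterminal positive-quality linear interval is a single atom at $b_I$ with $d_I=\kappa_I$, and that this atom is followed by a gap. Excluded ranks form a lower interval because quality $q(\phi_Q)$ is monotone.

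For (ii), I would apply the pooling-against-a-higher-atom and two-lower-atoms comparisons inside a single $[v_i,v_{i+1}]$. The compensating mean restoration is made at the higher atom, through partial regularization at fixed $R$, and is chosen to stay in the same interval when the atom's following gap does too. This yields at most one such atom per interval, lying below any separating portion there. To show there is at most one separating portion, I would argue that two separating portions in the same open interval must be separated by an atom-plus-gap contained in that interval, and the binary comparison between a separating block and the higher structure excludes this.

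For (iii), two gaps crossing primitive values with overlapping spans, or two such gaps starting in the same $[v_i,v_{i+1})$, would let me shift mass between the corresponding atoms while keeping both in one primitive interval. The two-atom comparison then applies again.

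For (iv), I would reuse \Cref{lem:binary-top-fit} and the divergence of $A$ from \Cref{prop:separating-conditions}. The Lagrangian variations can be taken within the last primitive interval, where only the mean constraint and the bottom contact matter, so the same ODE analysis applies. The regularity claims on each separating region follow from the local Euler equation \eqref{eq:continuation-euler} restricted to one open primitive interval. That restriction is needed because compensating variations must stay within it, and this is why smoothness is claimed only there.

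The counts then follow. There are $N-1$ primitive intervals, each with at most one separating region. There is at most one contained atom per interval, which gives $N-1$. Crossing gaps start in distinct intervals, and atoms before them that are not already counted can be charged to distinct intervals below the top one, adding $N-2$. This gives at most $2N-3$ trading atoms below one.

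I expect the main obstacle to be the bookkeeping when a perturbation's natural compensation lies across a binding $D_j$ constraint. The hardest step is deciding, for each configuration, whether a compensating adjustment can be found inside one primitive interval. The fallback is to accept the configuration as possibly optimal, which is what produces the crossing gaps permitted in (iii).
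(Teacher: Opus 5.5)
Your overall route is the paper's. You rely on invariance of every $D_j$ under mean-preserving changes confined to one $[v_i,v_{i+1}]$ (\Cref{lem:finite-interval-feasibility}), and your treatment of (i), (ii), the local regularity and the counting matches \Cref{lem:active-face-purity,lem:finite-band-order,lem:finite-separating}. One caveat on (i): the face contraction is feasible because it is a convex-order contraction, not because it is local, which matters when a bunched face straddles a primitive value.

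The genuine gap is in (iv). You assert that the terminal variations ``can be taken within the last primitive interval,'' but that presupposes the conclusion. A priori the top of the support, $p=\sup\supp G$, may sit at an interior primitive value $v_j<1$. The upward variation used to prove $p=1$ then pushes the top atom above $v_j$ and raises $D_j$, so invariance gives nothing. The missing idea is slackness. Every constraint at a threshold $v_j\ge p$ satisfies $D_j(G)=0<\kappa_j$, so it survives a small perturbation. Thresholds below $p$ are unaffected if the compensating directions lie in the separating part above a contact $B$ exceeding every primitive value below $p$. Only once $p=1$ is known can the downward variation at the start of the top atom be compensated inside $(v_{N-1},1)$.

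Before either step, you must also show that the upper regular tail contains a nondegenerate separating interval. \Cref{lem:binary-nonempty-continuation} does not transfer: it uses that there is at most one lower trading atom, and it uses \Cref{lem:binary-exclusion-floor}, whose shift of excluded mass down to $a$ is not a contraction and need not be feasible here. The paper instead does three things, all contractions and hence feasible under any prior:
\begin{itemize}
\item it beats $\delta_m$ by a small spread, which is feasible since $D_j(\delta_m)<\kappa_j$ for all interior $j$;
\item it applies \Cref{lem:top-atom-refinement} when the top two faces are adjacent atoms;
\item when the top atom is the only trading type, it first pools the excluded types at their conditional mean.
\end{itemize}

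Your argument for (iii) is misdirected. Disjointness of crossing gaps, and at most one starting in each $[v_i,v_{i+1})$, are support facts, not welfare comparisons. A gap carries no posterior mass, and a crossing gap starting at $w_1\in[v_i,v_{i+1})$ extends strictly above $v_{i+1}$. A second starting atom $w_2\in(w_1,v_{i+1})$ would therefore lie inside it. The welfare comparison you propose would not be available in general anyway. The two-atom elimination restores mean by raising the higher atom into its following gap, and when that gap crosses a primitive value this can move a binding $D_j$, which is exactly the obstruction you flag at the end.

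With these facts your count is correct, including that no crossing gap can start in $[v_{N-1},1)$. Note, though, that the bound of $2N-3$ trading atoms below one also depends on (iv). Until the top atom is placed at one, it would be a further trading atom below one that is not followed by a gap.
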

The additional constraints alter how separation and isolated
products can be arranged, while preserving two conclusions of
the binary characterization. Distinct trading posterior values
receive distinct qualities, and a nondegenerate upper separating
region ends at a positive, efficiently served atom at one.
Each nonterminal positive-quality linear interval of the
concavified revenue curve still corresponds to one posterior
atom followed by a nonempty support gap. Thus, additional isolated products correspond to strict
concavification over ranks occupied by individual posterior types,
with separation across distinct trading types.  After a payoff-preserving
contraction, the improving deviation that rules out bunching
can be confined to one primitive interval, preserving every
$D_j$. The top-atom comparisons use mean-preserving contractions
or compensating changes that also respect the information
constraints (see \Cref{lem:active-face-purity,lem:finite-terminal}).

We call this structure piecewise binary because of the local ordering of
atoms and separation and the equations governing separation. The
relevant object is an atom together with the support gap that follows it.
Reversing a reduction in mean at an atom $w$ followed by a support gap $(w,y^+)$ can introduce posterior values throughout
$[w,y^+]$. If both the original change and the adjustment that restores the mean stay within the same primitive interval, every $D_j$ is unchanged. We can then apply the binary argument. In particular, the gain from restoring the mean at a higher-quality atom exceeds any loss by pooling a lower-quality separating block. The argument for adjacent atoms similarly rules out two atoms
whose adjustment ranges lie in the interval. Thus, at most one trading atom can have both its posterior value and the entire support gap that follows it in a given primitive interval. Any separating portion there must lie above that atom.

Across a primitive value, however, the same argument need not
be feasible. Restoring mean above $v_j$ can raise $D_j$, whereas
an equal reduction below $v_j$ does not offset that increase.
A binding constraint can therefore block an improvement available
in the binary problem. This does not require a
single support gap to cross the threshold. The support gap following one atom may end at
$v_j$, while an atom at $v_j$ has its own support gap in the next
interval. Comparing the two atoms then requires changes on both
sides of the threshold. Several lower trading atoms can
consequently be optimal, separation can be interrupted by an atom
and resume above its gap, and exclusion can coexist with a lower
trading atom. Excluded ranks still form a single lower interval,
but the binary classification of the lower market no longer
applies.

On separating portions, the same local arguments as in the binary
case give the conditions in
\Cref{prop:separating-conditions}. With $u$ denoting rank in the
full posterior distribution and $\varphi=\phi_Q$, a portion with
posterior means in $(v_i,v_{i+1})$ satisfies
\begin{equation}
 A_i+(Q-\varphi)q'(\varphi)-(1-\beta)q(\varphi)=0,\qquad
 A_i'=\frac{q(\varphi)-\theta_i}{1-u},\qquad
 Q'=\frac{Q-\varphi}{1-u}.
 \label{eq:finite-local-equations}
\end{equation}
Here $\theta_i$ is the effective local multiplier on the mean.
It may change at a primitive value because raising posterior
means above that value also tightens the corresponding
information constraint. This change does not, however, create
a jump in quality when separation continues through the
threshold. We can show that,
if $Q$ crosses the primitive value continuously, without an
atom or support gap, virtual value must also be continuous.
A jump in virtual value would create a quality gap between buyers with nearly identical posterior means. Under \Cref{ass:curvature}, pooling a short block across the threshold would raise welfare by an amount proportional to the block's size. Because these buyers' posterior means are close, pooling changes the overall mean by only a second-order amount. An adjustment below the threshold restores the mean without violating the information constraints and has only a second-order payoff effect. It therefore cannot offset the gain from pooling.
Together,
these changes respect every information constraint and
strictly improve the objective, contradicting optimality.
Continuity of virtual value implies continuity of quality
and, by \eqref{eq:finite-local-equations}, of $Q'$ and the
cumulative term $A_i$. The derivatives of virtual value and
quality may nevertheless differ across the threshold as
$\theta_i$ changes. Thus, a primitive value may change how
quickly quality rises without interrupting separation.

At most $N-1$ lower trading atoms have the support gaps following them within individual primitive intervals. Every other lower trading atom
has a gap strictly crossing an interior primitive value. Disjoint gaps
cannot cross the same value, so there are at most $N-2$ additional atoms. Each open primitive
interval also contains at most one separating region, while
regions on opposite sides of a primitive value may belong to
the same connected region. For three primitive values, therefore, there are at most two separating regions and three trading
atoms below one. 


Figures~\ref{fig:three-point-structures} and
\ref{fig:three-point-structures-def} show six optima under different
three-point priors. Examples (a)--(c) show binary configurations that may continue to hold: full participation
with a separating region through $b$ in (a); exclusion of payoff-neutral types in  (b); and a lower trading atom without exclusion, followed by a support gap that crosses
$b$, in (c). Examples (d)--(f) show the additional possibilities. In (d), two
lower trading atoms coexist. The first support gap ends at an atom at $b$,
which has its own support gap following it. Neither gap crosses $b$, but preserving their
combined mean need not preserve the constraint at $b$. In (e),
separation reaches an atom at $b$ from below and resumes above
the gap that follows it, attaining the theorem's bound of two connected separating
regions for three primitive values. In (f), an excluded atom at
zero coexists with a trading atom at $b$. A payoff-neutral lowest
value alone no longer rules out a lower trading atom.
Every example features the uppermost separating region followed by an efficiently
served atom at one. Calculations and details are in the Additional Material.

\begin{figure}[htbp]
 \centering
 \captionsetup{font={small,stretch=1}}
 \includegraphics[width=\textwidth]{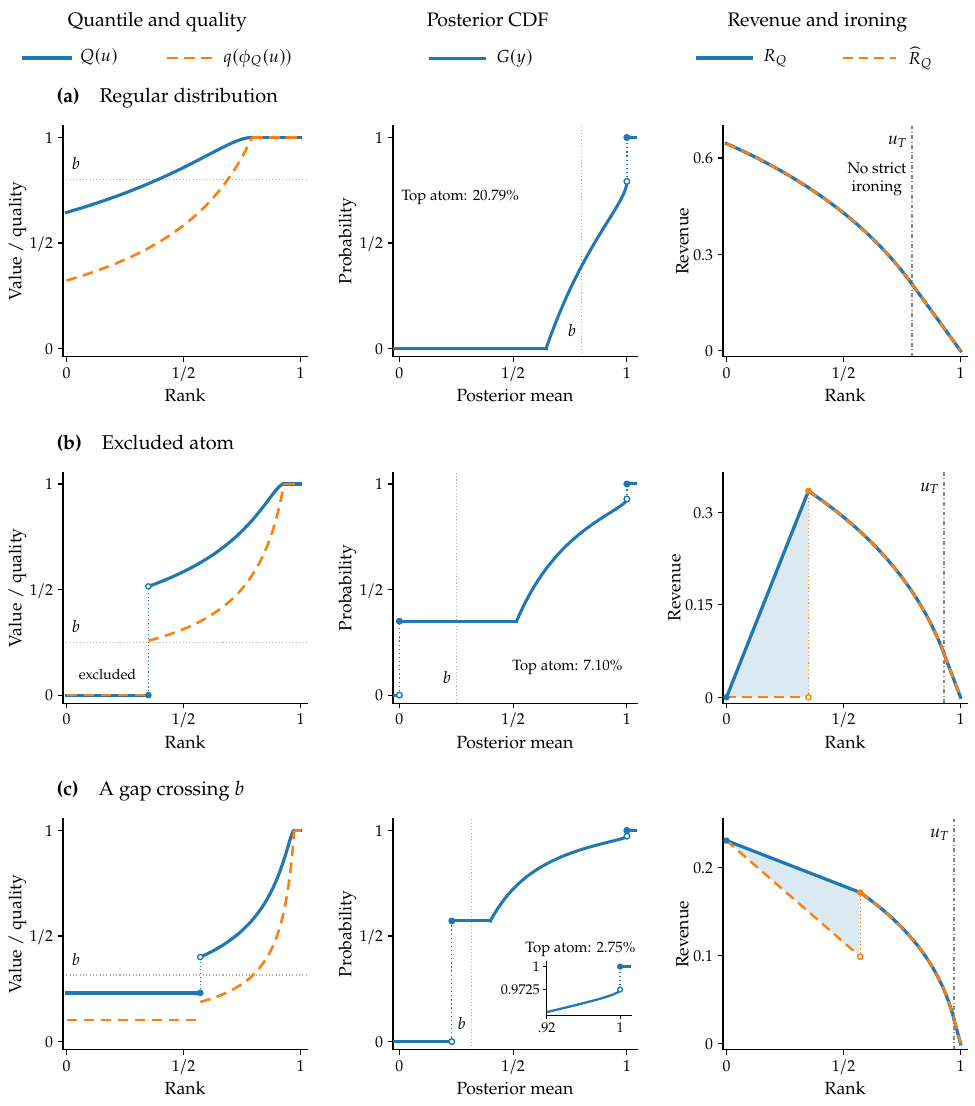}
 \caption{Illustrations of different optimal structures identified in Theorem \ref{thm:finite-structure}, with $c(x)=x^2/2$ and $\beta=0$. Each row uses a different prior on $\{a,b,1\}$ and
 shows the quantile and assigned quality, the CDF, and the raw and concavified revenue curves. Shading marks strict ironing,
 $R_Q>\widehat R_Q$.
 The gray reference marks prior value $b$ in the first two columns. In the revenue curve
 panels, the dash--dotted vertical marks $u_T=G(1^-)$, where the
 top atom starts. The inset
 enlarges the top CDF jump in (c) and percentages indicate atom masses.}
 \label{fig:three-point-structures}
\end{figure}

\begin{figure}[p]
 \centering
 \captionsetup{font={small,stretch=1}}
 \includegraphics[width=\textwidth]{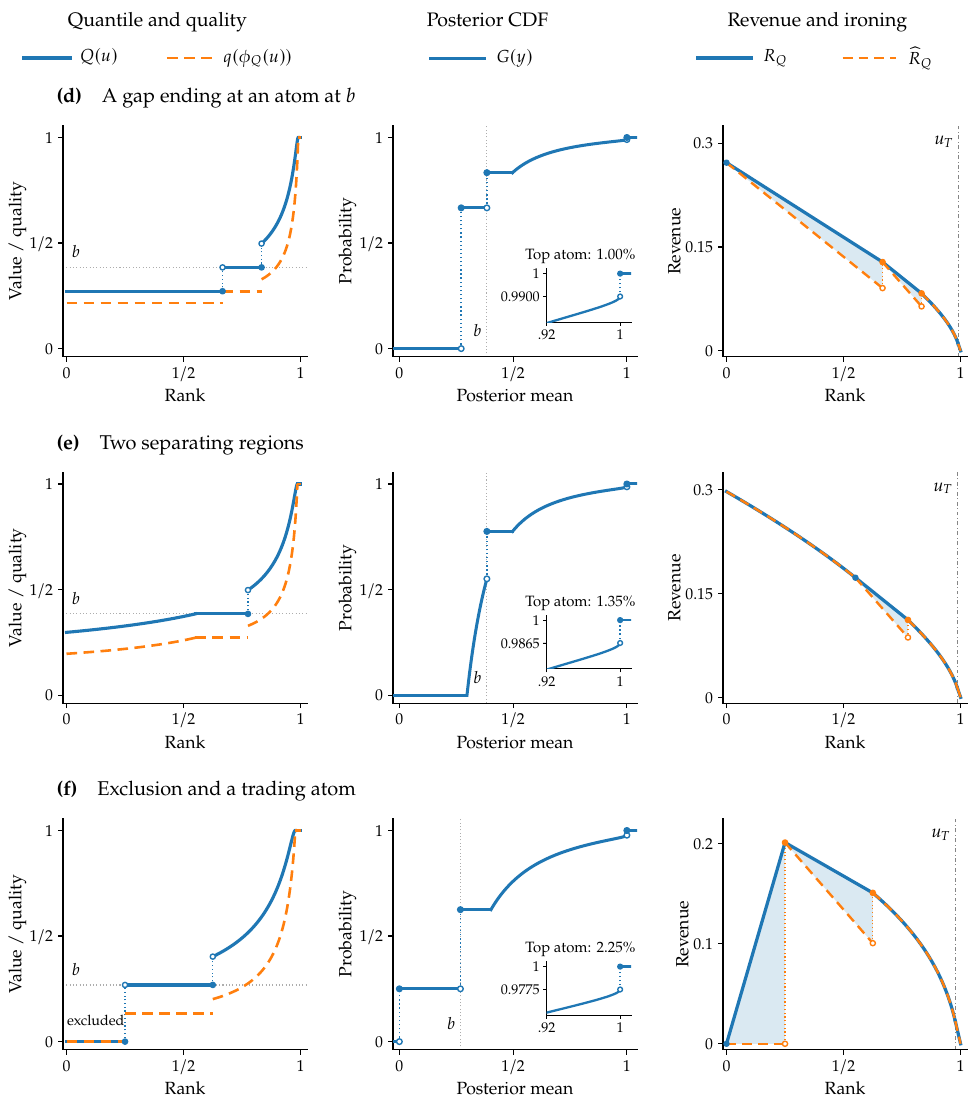}
 \caption{Illustrations of different optimal structures identified in Theorem \ref{thm:finite-structure}, continuing
 Figure~\ref{fig:three-point-structures}. Panel (d) has two lower
 trading atoms; (e) has two connected separating regions; and (f)
 combines exclusion with a lower trading atom. The insets enlarge
 the top CDF jumps. There is always an efficiently served atom
 at one. On the terminal rank interval,
 $R_Q=\widehat R_Q=1-u$, so this does not correspond to ironing.}
 \label{fig:three-point-structures-def}
\end{figure}

\paragraph{When the binary solution applies.}
A mean-preserving spread of the prior expands the set of feasible
posterior distributions without changing the seller's response to
any given $G$. It therefore weakly raises $W_H(\beta)$. The binary
prior with endpoints $a$ and one and mean $m$ imposes only the
support and mean restrictions, so
$W_H(\beta)\le V_{a,\beta}(m)$. Under \Cref{ass:cost}, for every
finite $\beta\ge0$,
$W_H(\beta)=V_{a,\beta}(m)$ holds if and only if some binary-optimal distribution $G^B$ satisfies $D_j(G^B)\le\kappa_j$, $2\le j<N$.
A feasible binary optimum attains the bound. Conversely,
equality makes any optimum under $H$ binary-optimal. 
If the binary optimum is unique and feasible, it remains unique
under $H$.

\paragraph{What binding constraints reveal.}
Although $G$ is the distribution of posterior means, a binding upper-tail constraint
identifies which valuations cannot remain pooled in any implementation of $G$. To see this, recall that
$Y=\E[V\mid S]$. Jensen's inequality gives
$\E[(V-v_j)_+\mid S]\ge(Y-v_j)_+$.
The expected difference between the two sides is
$\kappa_j-D_j(G)$. If the constraint binds, equality must hold almost surely.
Each posterior belief must therefore place all its probability
on valuations weakly below $v_j$ or weakly above it. Assigning
positive probability to valuations strictly on both sides
would make Jensen's inequality strict.


If all interior constraints bind, almost every posterior belief
is supported on at most two adjacent primitive values, since any
nonadjacent pair would cross a binding threshold. The posterior
mean then determines the entire belief, including the probabilities
on those values. This still permits a continuum of posterior means
between the same two primitive values.

\paragraph{Independent segmentation.}
Now, a payoff-neutral lowest value need not make independent segmentation
redundant. Under quadratic cost, start from a binary prior for which independent segmentation generates new
efficient surplus pairs. Add a sufficiently small probability of valuation zero. Continuity of
the payoff sets in the prior preserves the strict advantage
of separate menus.  

\subsection{Joint learning and state-dependent segmentation}
\label{subsec:finite-joint-design}

We now choose private learning and valuation-dependent segmentation jointly for a general finite prior.
A finite collection $\{(\lambda_k,G_k)\}$ of segment
probabilities and posterior-mean distributions can be induced
by public assignment and private learning if and only if
\begin{equation}
 \overline G:=\sum_k\lambda_kG_k\preceq_\cx H.
 \label{eq:finite-joint-aggregate}
\end{equation}
Thus, the information constraints in
\eqref{eq:finite-feasibility} restrict aggregate posterior
demand, while public assignment determines which buyers
face the same menu.

To describe the seller's response, let $\mu$ be
the joint distribution, across segments, of posterior means $y$ and ironed virtual values $z$,
with negative virtual values replaced by zero.  This replacement
leaves quality and payoffs unchanged. Write
$\mathcal T_a=\{(y,z):a\le y\le1,\ 0\le z\le y\}$, and let
$\mu_Y$ denote the marginal distribution of posterior means under $\mu$.

\begin{theorem}[Value and structure of joint learning and segmentation]
\label{thm:finite-joint-design}
Fix $0\le\beta<1$ and maintain
\Cref{ass:cost,ass:curvature} at that weight.
A joint optimum exists, and its value is the attained maximum
\begin{equation}
\begin{aligned}
 W_H^{\mathrm{SD}}(\beta)
 &=\max_{\mu\in\Delta(\mathcal T_a)}
       \int F_{y,\beta}(z)\,\mu(\dd y,\dd z)\\
 &\text{subject to}\quad \mu_Y\preceq_\cx H,\\
 &\hspace{13mm}
   \int_{\{y\ge t\}}(y-z)\,\mu(\dd y,\dd z)
   \le
   \int(y-t)_+\,\mu_Y(\dd y)
   \quad\text{for every }t\in[a,1].
\end{aligned}
\label{eq:joint-value-reduction}
\end{equation}
Every optimal joint policy induces a maximizer of
\eqref{eq:joint-value-reduction}. Moreover:
\begin{enumerate}
 \setlength{\itemsep}{4pt}
 \setlength{\parsep}{0pt}

 \item[(i)] \emph{Common quality and participation.}
 In every joint optimum, there is a common quality rule $x$
 such that, in every positive-probability segment $k$,
 $x_k(y)=x(y)$ for $G_k$-almost every $y$.
 The rule satisfies
 $x(y)<x(y')$ for
 $\overline G\otimes\overline G$-almost every pair $y<y'$.
 Moreover,
 $q(z_{y,\beta}^{\dagger})\le x(y)\le q(y)$ for $\overline G$-almost every $y>0$.
 Thus, only posterior-zero buyers can be excluded.
 Payments may differ across segments.

 \item[(ii)] \emph{Information at posterior atoms.}
 In every joint optimum, each atom of the aggregate
 posterior distribution lies at a primitive valuation
 and fully reveals that valuation to the buyer:
 \begin{equation}
  \overline G(\{w\})>0
  \quad\Longrightarrow\quad
  w\in\{v_1,\ldots,v_N\}
  \quad\text{and}\quad
  \Pr(V=w\mid Y=w)=1.
 \end{equation}
 If an atom at an interior primitive value $v_j$ receives
 quality strictly below $q(v_j)$, the corresponding
 information constraint binds:
 $D_j(\overline G)=\kappa_j$.
 Each segment has at most $N-1$ nonterminal
 trading atoms.

 \item[(iii)] \emph{The number of public markets.}
 Some joint optimum can be implemented with at most $N$
 segments. This bound is sharp: for every $N\ge2$,
 there are priors with $N$ primitive values for which,
 at $\beta=0$ and under quadratic cost, every joint optimum requires at least
 $N$ distinct segments.
\end{enumerate}
\end{theorem}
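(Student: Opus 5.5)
The plan has two parts. First establish the value characterization \eqref{eq:joint-value-reduction} by a relaxation-and-attainment argument; then read off (i)--(iii) from the structure of maximizers of the relaxed program. For the upper bound, fix any feasible joint policy $\{(\lambda_k,G_k)\}$ and let $\mu$ be the aggregate law of $(Y,\max\{\phi_{Q_k},0\})$ across segments. By \eqref{eq:seller-profit}--\eqref{eq:consumer-surplus}, the objective equals $\int F_{y,\beta}(z)\,\mu$. Feasibility \eqref{eq:finite-joint-aggregate} gives $\mu_Y\preceq_\cx H$.

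The tail inequality is segmentwise seller optimality. In each segment $R_{Q_k}\ge\widehat R_{Q_k}$, and $R_{Q_k}(u)=\int_u^1\phi_{Q_k}$. At the rank $u_t=G_k(t-)$ this gives
\[
\int_{\{y\ge t\}}(y-z)\,G_k\le\int(y-t)_+\,G_k,
\]
because $(1-u_t)t\le R_{Q_k}(u_t)$. Truncating negative $z$ at zero only strengthens the inequality. Averaging over $k$ yields the constraint. Existence of a maximizer of the relaxed program follows from weak compactness of $\Delta(\mathcal T_a)$, closedness of the convex-order and tail constraints, and continuity of $F$.

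Attainment is the main obstacle. Given a maximizer $\mu$, one has to split it into finitely many segments, each a screening problem whose seller-optimal ironed virtual values reproduce $z$ and whose aggregate posterior is $\mu_Y$. My first attempt is a variational one. I will show that an optimal $\mu$ can be taken with $z$ a nondecreasing function of $y$ (a common quality rule). This comes from a rearrangement/exchange argument using strict concavity of $F_{y,\beta}$ (\Cref{ass:curvature}) together with supermodularity of $-zq(z)+yq(z)$ in $(y,z)$. The rearrangement is mean-neutral and leaves the tail constraints intact.

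I will then adapt the segmentation construction of \citet{bergemannheumannwang2026}. I order the constraints $t$ at which the tail inequality binds and build nested segments: each segment places a lower atom (the potentially slack part) under a regular upper piece. The decomposition is chosen so that in each segment $\widehat R=R$ exactly where the constraint binds and strict ironing occurs only over atoms. Choosing these segments via a Carathéodory argument on the finitely many primitive intervals gives at most $N$ segments. Each segment's seller then selects the prescribed $z$, since concavification is unique. This shows the bound is attained, and hence every joint optimum induces a maximizer.

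For (i), common quality follows from the rearrangement step: any optimum with crossing qualities admits a strict improvement of the relaxed program. Strict monotonicity of $x$ follows from the binary anti-bunching comparison of \Cref{thm:binary-structure}, applied within a primitive interval so that every $D_j$ is preserved. The bounds $z^\dagger_{y,\beta}\le z\le y$ come from pointwise maximization: if $z<z^\dagger_{y,\beta}$, raising $z$ toward $z^\dagger_{y,\beta}$ increases $F$ and only relaxes the tail constraints, since $y-z$ decreases. As $z^\dagger_{y,\beta}>0$ for $y>0$, every buyer with positive posterior trades.

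For (ii), an atom $w$ not at a primitive value, or one that pools distinct valuations, could be split by a small mean-preserving spread inside $[v_i,v_{i+1}]$. This preserves every $D_j$ and strictly raises the objective by the separating first-order argument (\Cref{prop:separating-conditions}). The Jensen equality discussion then forces full revelation. If quality at $v_j$ is below $q(v_j)$ and $D_j<\kappa_j$, a small spread across $v_j$ is feasible and improving, so the constraint must bind. The count of $N-1$ atoms per segment follows as in \Cref{thm:finite-structure}.

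For (iii), sharpness is shown by an explicit quadratic-cost family with $\beta=0$. The priors are chosen so that each interior constraint binds and each primitive value needs its own atom with a distinct lower-contact pattern. I will then argue that merging any two segments violates the tail equalities, which forces at least $N$ distinct segments.
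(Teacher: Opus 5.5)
Your upper bound is sound. Deriving the tail inequality from $R_{Q_k}(u_t)\ge(1-u_t)t$ at $u_t=G_k(t-)$ is a valid alternative to the paper's quality-upgrade argument. The genuine gap is attainment, which you call the main obstacle but do not carry out.

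You assert that a maximizer $\mu$ splits into finitely many ``nested'' segments, each a lower atom under a regular upper piece. This is not justified, and it does not match the possible structure: a segment can contain several lower atoms, as in the paper's sharpness construction. Uniqueness of the concave envelope also does not make a segment's seller choose the prescribed $z$. You must \emph{choose the composition} of each segment so that its ironed virtual values equal $z$, and show that the leftover population remains screening-feasible.

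The paper supplies exactly this, in three steps.
\begin{itemize}
\item For a finite-support $G$, it optimizes the wedges $d=y-z$ and shows that virtual values strictly increase. It then repeatedly builds a market whose top is the lowest remaining zero-wedge type, with masses satisfying $b_id_i=(y_j-y_i)\sum_{\ell\ge j}b_\ell$. These masses make the discrete virtual values equal $z_i$ and make every occupied tail bind. It extracts the largest feasible multiple of this market and verifies the residual slack.
\item General $G$ are handled by approximation, and this contains a step you would also face. Rounding posterior means onto a grid containing the $v_j$ preserves Bayes plausibility but pushes wedge mass above thresholds. The paper therefore first removes wedges near zero and near $\max\supp G$ and scales the rest by $1-\varepsilon$.
\item Attainment then follows from existence of an optimal policy (compactness of the moment--payoff set), not from decomposing an arbitrary maximizer.
\end{itemize}
The $N$-segment bound likewise comes from the state balance $\sum_k\lambda_kh_k=p$ with $h_k\in\Delta_N$. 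Optimality makes the payoff constant along any kernel direction, so weights can be shifted until one vanishes. A ``Carath\'eodory argument on primitive intervals'' does not give this.

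Your arguments for (i)--(iii) also miss key mechanisms.
\begin{itemize}
\item \emph{Distorted atoms in (ii).} A symmetric spread of such an atom has zero first-order effect. Single-market optima can also have non-primitive trading atoms, for example the inclusion optimum in \Cref{prop:participation-switch}. So \Cref{prop:separating-conditions} cannot deliver the improvement; it must come from reallocating wedge in the aggregate program. Write $u_\beta(y,d)=F_{y,\beta}(y-d)$. With the wedge held fixed, the second-order term is $(u_\beta)_{yy}=dq''(z)+\beta q'(z)$, which is negative, for example, when $\beta=0$ and $q$ is concave. The paper moves the wedge with the posterior at rate $\tau=-(u_\beta)_{yd}/(u_\beta)_{dd}$ and uses $\det D^2u_\beta=-q'(z)^2$. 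This yields a second-order gain proportional to $q'(z)^2/[(2-\beta)q'(z)-dq''(z)]>0$. It then checks thresholds in $(w,w+\varepsilon]$ using the slack jump $rd$ at $w$.
\item \emph{Efficient atoms and the count in (ii).} Zero-wedge atoms are not reached by this spread. The paper treats them segment by segment via \Cref{thm:finite-structure}: the only efficient atom sits at the segment's highest primitive value. The bound of $N-1$ atoms per segment follows from atoms being primitive, not from \Cref{thm:finite-structure}, which allows $2N-3$.
\item \emph{Strict monotonicity in (i).} Within-segment anti-bunching leaves ties between posterior values in different segments. The paper rules these out by moving wedge from $y'$ to $y<y'$, using that $g_{y,\beta}(z)=(1-\beta)q(z)-(y-z)q'(z)$ is increasing in $z$ and decreasing in $y$.
\item \emph{Sharpness in (iii).} Showing that merging two particular segments breaks the tail equalities does not show that \emph{every} optimum needs $N$ segments, since an optimum could use entirely different segments. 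The paper builds a dual certificate $h\ge K+r_{k,0}$ whose equality conditions identify all optimal segments as the finitely many $G_S$, with conditional priors $H_S$. It then picks $H$ in the relative interior of the simplex spanned by the affinely independent $H_F,H_1,\dots,H_n$, but outside the finitely many convex hulls of at most $N-1$ of the $H_S$.
\end{itemize}
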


The first constraint in \eqref{eq:joint-value-reduction}
imposes Bayes plausibility on the aggregate posterior
distribution. The screening inequalities rule out profitable
quality upgrades. At any threshold $t$, the seller could add
a small amount of quality to every contract intended for buyers
with $y\ge t$, charging $t$ per additional unit. Buyers below
$t$ would find these contracts less attractive, while each
buyer above $t$ would gain equally from all upgraded contracts.
This preserves incentive compatibility constraints. Since virtual
value equals marginal cost at the chosen quality, the upgrade
would raise profit if these buyers' average virtual value were
below $t$. The inequalities exclude this possibility in each
segment and remain valid after averaging across segments.

Every feasible combination of learning and segmentation therefore
generates a distribution of posterior means and virtual values
that satisfies the aggregate constraints and yields the same
objective value. The right-hand side of \eqref{eq:joint-value-reduction} thus provides an upper bound.
To attain it, the proof first fixes a feasible aggregate posterior
distribution with finite support and optimizes its virtual values.
It then chooses the composition of each segment to make the
prescribed qualities profit maximizing. Raising the quality
bought by a lower type increases what the seller can charge
that type, but also raises the rents needed to keep higher
types on their own contracts. The seller's choice therefore
depends on the relative numbers of these buyers sharing a menu.
The construction chooses these proportions to make the prescribed
qualities profit maximizing. Each segment has an efficiently
served highest type, with lower types added in proportions that
reproduce their optimized virtual values. Removing a segment
leaves the screening inequalities satisfied for the remaining
buyers, so the process continues until everyone has been assigned.
It preserves the aggregate posterior distribution and both
payoff coordinates. Public assignment gives the segments
different conditional priors under which their private learning
is feasible; Bayes plausibility relative to $H$ restricts only
the aggregate posterior distribution.

An approximation argument extends equality of optimal values
to unrestricted posterior distributions. Existence of an optimal
joint policy then ensures attainment of the aggregate bound.
Every optimal joint policy therefore induces a maximizer of
the aggregate program, allowing us to derive properties shared
by all joint optima.


\paragraph{Common quality and participation.}
At a fixed posterior mean $y$, replacing segment-specific virtual
values by their conditional average preserves the posterior
distribution and all aggregate screening constraints. Under
\Cref{ass:curvature}, strict concavity of $F_{y,\beta}$ weakly
raises the objective, with equality only when virtual values
coincide conditional on the posterior mean. Since the aggregate
program and joint design have the same maximal value, every
optimal joint policy must assign a common quality to buyers
with the same posterior mean across public groups, almost
surely. This parallels the uniform-quality result of
\citet{bergemannheumannwang2026}.\footnote{
Their common-quality result is obtained with a continuous distribution of privately known valuations. In an earlier working paper version of \citet{bergemannheumannwang2026} with discrete values, common quality need not hold in general and is established under a sufficient cost condition. Here, \Cref{ass:curvature} makes the designer's local objective strictly concave in the nonnegative ironed virtual value at each positive posterior mean. Combined with  attainment of the aggregate program's upper bound, this rules out quality differences across groups at a given posterior mean, including
at posterior atoms.}
Here, however, learning also determines the posterior
distribution screened by the seller, which may contain both
atoms and separating regions. The curvature condition also yields a positive lower bound
on quality. For $y>0$ and $0\le\beta<1$, the maximizer
$z^\dagger_{y,\beta}$ of $F_{y,\beta}$ on $[0,y]$ is unique
and lies strictly between zero and $y$. In the aggregate program, raising any virtual value below
$z^\dagger_{y,\beta}$ to that level increases the objective
and relaxes the screening constraints. Such a virtual value therefore cannot occur
with positive probability in a joint optimum. Consequently,
$x(y)\ge q(z^\dagger_{y,\beta})>0$ for $\overline G$-almost every $y>0$. Prices and rents can still differ across groups. Different
contracts for lower posterior types provide different
deviation options and therefore change the rents the seller
must leave to higher types. On any connected interval over
which two groups both separate, the envelope theorem gives
$U_k'(y)=U_\ell'(y)=x(y)$ almost everywhere.
Hence $U_k-U_\ell$ is constant on that interval, and the
corresponding payment schedules differ by the opposite
constant.

\paragraph{Isolated products and informed buyers.}
Joint design eliminates exclusion at positive posterior means,
but can still induce lower trading atoms and support gaps.
Every atom, however, consists of buyers who have
learned their valuation. To see this,
consider a distorted atom strictly between adjacent primitive
valuations. Splitting it into nearby lower and higher posterior
means, while preserving the aggregate mean, respects Bayes
plausibility because the split stays within the same primitive
interval. Suitable adjustments to virtual values also preserve
the aggregate screening constraints and strictly raise the
designer's objective. Such an atom, therefore, cannot be optimal.

A distorted atom at an interior primitive valuation $v_j$
can survive only if the aggregate information constraint at
$v_j$ binds; otherwise, the same improvement remains feasible.
This rules out posterior beliefs that assign positive
probability both below and above $v_j$, since such pooling
would make the constraint slack. A buyer with posterior mean
$v_j$ must therefore know that $V=v_j$. At the lowest primitive
valuation, revelation follows directly because no lower
valuation is possible. Efficiently served terminal atoms likewise
reveal the highest primitive valuation in their group's
conditional prior. Thus, every posterior atom fully reveals
a primitive valuation to the buyer.

A buyer who knows her valuation can still receive an inefficiently
low-quality isolated product because the seller must also screen the
segment's other posterior types. The quality distortion
is nevertheless bounded for every positive posterior type, whether fully or partially informed.  Under quadratic cost,
$x(y)\ge y/(2-\beta)$.
In particular, at the consumer-surplus optimum, every positive
posterior type receives at least half of her first-best quality.

\paragraph{The joint Pareto frontier.}
Segmentation makes the set of joint payoffs convex. No weight above one is needed as $\beta=1$ already
selects the full-extraction outcome,
and every $\beta>1$ selects the same payoff pair. Thus, the problems with
$0\le\beta\le1$ exhaust the joint Pareto frontier. If
\Cref{ass:curvature} holds at every $\beta\in[0,1)$, then
\Cref{thm:finite-joint-design} describes the structure of all
supporting optimizers away from the full-extraction endpoint, while
\Cref{prop:sd-full-extraction} describes that endpoint. At
$\beta=0$, the Pareto-efficient endpoint is the
consumer-surplus-maximizing payoff with the greatest profit.

\section{Conclusion}
\label{sec:conclusion}

What consumers learn about their needs affects the products sellers offer, the prices they charge, and the division of gains from trade. Under our curvature condition and for weights strictly favoring consumers, optimal private learning sustains a continuum of differentiated products extending to efficient quality at the top. An isolated basic product, when offered, generates rents for buyers of higher qualities while its buyers’ relatively pessimistic beliefs also permit more optimistic beliefs elsewhere. Making this lower group still more pessimistic allows greater optimism among the remaining buyers, but can lead the seller to withdraw the basic product. Inclusion and exclusion therefore offer competing ways to benefit consumers. Inclusion preserves a contract that generates rents, while exclusion accommodates more favorable beliefs in the active market. Maximizing consumer surplus can thus require exclusion even when trade with every buyer creates gains.

Segmentation changes this trade-off by changing which buyers share a menu. Even assignment independent of valuations can increase consumer surplus at the same profit without revealing valuations through segment labels, although it cannot raise maximal consumer surplus. Valuation-dependent segmentation can raise that maximum and benefit both consumers and the seller relative to buyer-optimal private learning. Under the same conditions, every joint optimum serves all posterior types with gains from trade. Achieving these gains can require redesigning learning. Even optimal segmentation of the posterior distribution generated by buyer-optimal learning for a common menu can fall short of the joint optimum. A basic product constrains premium prices only for buyers who can access it. Offering it in selected segments can bring excluded buyers into trade without constraining premium prices throughout the market. For intermediaries that help buyers assess their needs and control access to offers, coordinating these decisions can improve consumers’ terms of trade while expanding participation.

\clearpage

\printbibliography

\newpage

\section*{\LARGE Appendix}
\appendix

\section{Mean and surplus at a fixed concavified revenue curve}
\label{app:accounting-proof}

Fix a posterior quantile $Q$ on $[a,1]$, write $R=R_Q$, and
let $Q^r=R/(1-u)$. Using the notation of
Section~\ref{subsec:mean-adjustment}, we characterize the quantiles
inducing $R$ and prove \Cref{prop:mean-accounting}. The
least-concave-majorant properties used below follow from
\citet[Lemma~A.1 and Proposition~A.3]{kyriazis2026markets}.

\subsection{Quantiles with a common concavified revenue curve}
\label{app:mean-accounting}

The raw revenue curve $\widehat R_Q$ is upper semicontinuous.
Its least concave majorant $R$ is continuous, agrees with
$\widehat R_Q$ at ranks zero and one, and satisfies
$a(1-u)\le R(u)\le1-u$. Consequently,
$Q^r(u)=R(u)/(1-u)$ is continuous on $[0,1)$ and takes
values in $[a,1]$. Concavity of $R$, together with $R(1)=0$,
also implies that $Q^r$ is nondecreasing.

On each connected component of $\{R>\widehat R_Q\}$,
the majorant $R$ is linear. Every endpoint of a maximal
linear interval must therefore be a contact. Otherwise,
$R$ would be linear on a neighborhood of that endpoint,
allowing the interval to be extended. The maximal linear
intervals have disjoint interiors, so those of positive
length are at most countable.

For each such maximal linear interval $I=[\alpha_I,\beta_I]$, let $-z_I$
be its slope and set $b_I=Q^r(\alpha_I)$.
Contact at $\alpha_I$ gives $Q^+(\alpha_I)=b_I$.
Monotonicity then implies $Q\ge b_I$ almost everywhere
on $I$, while $\widehat R_Q\le R$ gives $Q\le Q^r$
almost everywhere. Every rank outside the interval
interiors is a contact, so the two quantiles agree
almost everywhere there. Thus,
\begin{equation}
 \begin{gathered}
 Q^+(\alpha_I)=b_I,\qquad
 b_I\le Q\le Q^r\quad\text{a.e. on }I,\\
 Q=Q^r\quad\text{a.e. outside the interiors of these intervals}.
 \end{gathered}
 \label{eq:face-value-bounds}
\end{equation}

We now show that these bounds are also sufficient. For each
maximal linear interval $I=[\alpha_I,\beta_I]$, choose a
nondecreasing function $\widetilde Q_I$ on the open interval
$(\alpha_I,\beta_I)$ satisfying
$b_I\le\widetilde Q_I\le Q^r$ almost everywhere.
Define $\widetilde Q$ on $[0,1)$ by using $\widetilde Q_I$
on each interval interior and $Q^r$ elsewhere. The following
lemma shows that these choices can be made independently
across intervals, including when there are countably many.

\begin{lemma}[Preserving the concavified revenue curve]
\label{lem:fixed-envelope-fillings}
The function $\widetilde Q$ has a nondecreasing quantile
representative with values in $[a,1]$ whose concavified
revenue curve is $R$. Conversely, every quantile inducing
$R$ agrees almost everywhere with a function constructed
in this way.
\end{lemma}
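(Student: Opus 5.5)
We prove the two directions separately, working throughout with the raw revenue curve of the assembled function and comparing it with $R$ interval by interval.

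\textbf{Monotone representative.} First we check that $\widetilde Q$ is nondecreasing almost everywhere, so that a nondecreasing representative with values in $[a,1]$ exists. We take the right-continuous modification, which fixes the value of $\widetilde Q^+$ at each point.
\begin{itemize}
\item Outside the interval interiors $\widetilde Q=Q^r$, and $Q^r$ is continuous and nondecreasing.
\item Inside each $I$, $\widetilde Q_I$ is nondecreasing with $b_I\le\widetilde Q_I\le Q^r$.
\item At the left endpoint, $\widetilde Q_I\ge b_I=Q^r(\alpha_I)$, so there is no downward jump entering $I$.
\item At the right endpoint, $\widetilde Q_I\le Q^r\le Q^r(\beta_I)$, and $Q^r$ resumes there, so there is no downward jump leaving $I$.
\end{itemize}
These checks involve only pairs of ranks lying in the same interval or in the contact set, so they need no ordering of the countably many intervals.

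\textbf{Raw revenue curve lies below $R$ and touches it where needed.} Write $\widehat R$ for the raw revenue curve of $\widetilde Q$. At ranks outside the interval interiors, $\widetilde Q^+=Q^r$ by continuity of $Q^r$ and the endpoint behaviour just described. Hence $\widehat R=(1-u)Q^r=R$ there; in particular every endpoint $\alpha_I,\beta_I$, together with ranks $0$ and $1$, is a contact. Inside $I$, $\widetilde Q^+\le Q^r$ gives $\widehat R\le R$. The one-sided limits do not break this bound, because $Q^r$ is continuous. So $\widehat R\le R$ everywhere, and therefore $\cav\widehat R\le R$.

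\textbf{The envelope is exactly $R$.} For the reverse inequality we use the contact set $C$, which contains all endpoints and the complement of the interval interiors. Any concave majorant $S$ of $\widehat R$ satisfies $S\ge R$ on $C$. On each linear interval $I$, concavity of $S$ and $S\ge R$ at the two endpoints force $S\ge R$ on $I$, since $R$ is linear there. Hence $\cav\widehat R=R$. This uses the fact, proved before the lemma, that endpoints of maximal linear intervals are contacts. We must also confirm that ranks $0$ and $1$ behave correctly: at rank $1$ we have $\widehat R(1)=0=R(1)$, and at rank $0$ either $0$ lies outside every interval interior or it is a left endpoint, where $\widetilde Q^+(0)=b_I$.

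\textbf{Converse.} Let $Q'$ be any quantile with $R_{Q'}=R$. Its regularization is then $Q^r$, and it has the same maximal linear intervals and values $b_I$. By \eqref{eq:face-value-bounds} applied to $Q'$:
\begin{itemize}
\item $Q'=Q^r$ almost everywhere outside the interval interiors;
\item $b_I\le Q'\le Q^r$ almost everywhere on each $I$.
\end{itemize}
Setting $\widetilde Q_I=Q'|_{(\alpha_I,\beta_I)}$ therefore exhibits $Q'$, up to a null set, as one of the constructed functions.

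\textbf{Main obstacle.} The delicate point is the boundary bookkeeping with right-continuous versions at interval endpoints. When countably many intervals accumulate, an endpoint may be a limit of other intervals. There we must verify that $\widetilde Q^+$ still equals $Q^r$; this follows because $b_J\le\widetilde Q_J\le Q^r$ on each nearby interval $J$ and $Q^r$ is continuous, so the right limits squeeze to $Q^r$.
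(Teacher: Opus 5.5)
Your proof is correct and follows essentially the same route as the paper. The bounds $b_I\le\widetilde Q_I\le Q^r$ together with monotonicity of $Q^r$ give a monotone representative, squeezing right limits gives contact outside the interval interiors, and the chord argument on each linear interval identifies $R$ as the least concave majorant; the converse is read off from the face-value bounds. The one place to tighten is the cross-interval monotonicity step: rather than describing it as local ``no downward jump'' checks, write it as the chain $\widetilde Q(u)\le Q^r(u)\le Q^r(\alpha_J)=b_J\le\widetilde Q(v)$ for $u<v$ with $v$ in a different interval $J$, which is exactly how the paper argues.
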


\begin{proof}
Choose right-continuous versions of the functions
$\widetilde Q_I$. Since there are at most countably many
intervals, this changes $\widetilde Q$ only on a null set.
Continuity of $Q^r$ ensures that the bounds
$b_I\le\widetilde Q_I\le Q^r$ then hold at every rank
in $(\alpha_I,\beta_I)$. We first check monotonicity across intervals. Suppose
$u<v$ do not lie in the same interval interior.
If $v\in (\alpha_I,\beta_I)$, then $u\le\alpha_I$, so
$\widetilde Q(u)\le Q^r(u)\le Q^r(\alpha_I)
 =b_I\le\widetilde Q(v)$.
If $v$ lies outside all interval interiors, then
$\widetilde Q(u)\le Q^r(u)\le Q^r(v)=\widetilde Q(v)$.
Within each interval, monotonicity holds by construction.
Thus $\widetilde Q$ is nondecreasing on $[0,1)$ and takes
values in $[a,1]$, so it agrees almost everywhere with
the quantile of a distribution on $[a,1]$.

The raw revenue curve uses the right-continuous quantile,
so we must also verify the right limits at ranks outside
the interval interiors. Fix such a rank $x<1$.
If $t>x$ lies in an interval $(\alpha_I,\beta_I)$, that interval
must start at or above $x$. Hence
$\widetilde Q(t)\ge b_I=Q^r(\alpha_I)\ge Q^r(x)$.
If $t$ lies outside the interiors, the same lower bound
follows from $\widetilde Q(t)=Q^r(t)$.
In either case,
$Q^r(x)\le\widetilde Q(t)\le Q^r(t)$.
Letting $t\downarrow x$ and using continuity gives
$\widetilde Q^+(x)=Q^r(x)$. This argument also applies
at accumulation points of the intervals.

The bound $\widetilde Q\le Q^r$ likewise passes to right
limits. The new raw revenue curve therefore lies below
$R$ everywhere and agrees with it outside the interval
interiors. Both curves vanish at rank one. Thus $R$
is a concave majorant of the new raw curve.
Conversely, any concave majorant of that curve must
dominate $R$ outside the interiors, including at both
endpoints of every $I$. On $I$, concavity then requires
it to lie above the chord joining those endpoint values.
This chord is $R$ itself because $R$ is linear on $I$.
Hence $R$ is the least concave majorant.
The converse follows from \eqref{eq:face-value-bounds}
and monotonicity of any quantile inducing $R$.
\end{proof}

\subsection{Proof of \Cref{prop:mean-accounting}}
\label{app:accounting-completion}

\begin{proof}
By \eqref{eq:face-value-bounds},
$0\le Q^r-Q\le Q^r-b_I$ almost everywhere on each
maximal linear interval $I$. Integrating gives
$0\le d_I\le\kappa_I$. Since these intervals have disjoint
interiors and $b_I\ge a$,
\begin{equation}
 \sum_I\kappa_I
 =\sum_I\int_I[Q^r(u)-b_I]\,\dd u
 \le\int_0^1[Q^r(u)-a]\,\dd u
 \le1-a.
 \label{eq:summable-capacities}
\end{equation}
Moreover, $Q=Q^r$ almost everywhere outside the interval
interiors. Thus
$\int_0^1[Q^r(u)-Q(u)]\,\dd u=\sum_I d_I$,
which gives the mean identity \eqref{eq:mean-accounting}.

Both quantiles induce the same concavified revenue curve
$R$, so they have the same ironed virtual values and
quality at almost every rank. Their profits therefore
coincide by \eqref{eq:seller-profit}. On the interior of
each $I$, the common virtual value is $z_I$ and the
common quality is $q(z_I)$. Hence
\eqref{eq:consumer-surplus} gives
\[
 \CS(Q^r)-\CS(Q)
 =\int_0^1[Q^r(u)-Q(u)]q(\phi_Q(u))\,\dd u
 =\sum_I q(z_I)d_I,
\]
which is \eqref{eq:surplus-accounting}.
Since $0\le q(z_I)\le\bar q$, the two sums are bounded
by $1-a$ and $\bar q(1-a)$, respectively. Both therefore
converge absolutely, including when there are countably
many intervals.

Conversely, fix any collection of prescribed reductions
satisfying $0\le d_I\le\kappa_I$. Starting from $Q^r$,
we construct a quantile that achieves each reduction
while preserving $R$. On an interval
$I=[\alpha_I,\beta_I]$ with $\kappa_I>0$, lower the
quantile to $b_I$ on an initial portion of the interval
and leave it unchanged afterward. Extending this initial
portion increases the reduction in mean continuously
from zero to $\kappa_I$. We can therefore choose
$\gamma_I\in[\alpha_I,\beta_I]$ such that
\[
 \int_{\alpha_I}^{\gamma_I}[Q^r(u)-b_I]\,\dd u=d_I.
\]
On the open interval $(\alpha_I,\beta_I)$, define
\[
 \widetilde Q(u)=
 \begin{cases}
  b_I, & \alpha_I<u<\gamma_I,\\
  Q^r(u), & \gamma_I\le u<\beta_I.
 \end{cases}
\]
Because $Q^r$ is nondecreasing and $Q^r\ge b_I$ on
$I$, this choice is nondecreasing and satisfies
$b_I\le\widetilde Q\le Q^r$.

If $\kappa_I=0$, then $d_I=0$, and we retain $Q^r$
on the interval interior. This includes every interval
ending at rank one: linearity and $R(1)=0$ give
$R(u)=b_I(1-u)$ there, so $Q^r=b_I$ and the capacity
is zero. At all remaining ranks below one, set
$\widetilde Q=Q^r$.

These choices satisfy
\Cref{lem:fixed-envelope-fillings}. They therefore
define a single quantile with values in $[a,1]$ and
concavified revenue curve $R$, even when adjustments
are made on countably many intervals. By construction,
$\int_I[Q^r(u)-\widetilde Q(u)]\,\dd u=d_I$ on every
$I$. The identities already proved give the claimed
mean and consumer surplus, while profit remains unchanged.

Finally, \eqref{eq:binary-feasible-set} states that a
distribution supported on $[a,1]$ is feasible for the
binary prior if and only if its mean is $m$. For the
constructed quantile, this is equivalent to
$\sum_I d_I=\int_0^1Q^r(u)\,\dd u-m$.
\end{proof}

\paragraph{The conditional linear program.}
The proposition shows that \eqref{eq:conditional-mean-problem}
gives the smallest loss in consumer surplus among posterior
distributions with mean $m$ and concavified revenue curve $R$.
Since $R$ fixes profit, this also maximizes $J_\beta$.

Let $D=\int_0^1Q^r(u)\,\dd u-m$ be the required reduction
in mean, and let $K=\sum_I\kappa_I$ be the total capacity
for that reduction. Feasibility requires $0\le D\le K$,
because each interval contributes between zero and
$\kappa_I$. Conversely, if $0\le D\le K$ and $K>0$,
the choice $d_I=(D/K)\kappa_I$ satisfies every interval's
bounds and gives $\sum_I d_I=D$. If $K=0$, feasibility
requires $D=0$, and every reduction is zero.
A minimum is attained whenever the problem is feasible.
To see this, consider the compact product
$\prod_I[0,\kappa_I]$ with the topology of coordinatewise
convergence. Each finite partial sum of $\sum_I d_I$
or $\sum_Iq(z_I)d_I$ is continuous in this topology.
The remaining terms are bounded uniformly over all
admissible reductions by the corresponding tails of
$\sum_I\kappa_I$ and $\bar q\sum_I\kappa_I$, respectively.
These bounds tend to zero by
\eqref{eq:summable-capacities}, so both full sums
are continuous. The mean constraint $\sum_I d_I=D$
therefore defines a closed subset of the compact product.
When $0\le D\le K$, this subset is nonempty, and the
continuous objective attains its minimum on it.

\section{Binary structure}
\label{app:binary-proofs}

Using Appendix~\ref{app:accounting-proof}, we identify the lower
groups, derive the separating conditions, and prove
\Cref{thm:binary-structure}. 

\subsection{Curvature and the allocation of mean}
\label{app:binary-preliminaries}

Unless stated otherwise, fix $0\le\beta<1$ and $0\le a<m<1$ and maintain \Cref{ass:cost,ass:curvature}.

\subsubsection{Cost curvature in virtual-value coordinates}
\label{app:cost-curvature}

At $x=q(z)$, inverse differentiation gives $q'(z)=1/c''(x)$ and
$q''(z)=-c'''(x)/[c''(x)]^3$ for $0<z<1$. Thus,
\[
 (2-\beta)q'(z)-(y-z)q''(z)
 =\frac{(2-\beta)[c''(x)]^2+[y-c'(x)]c'''(x)}{[c''(x)]^3}.
\]
For $z<y\le1$, the numerator is positive if $c'''(x)\ge0$. Otherwise it is bounded below by its positive value at $y=1$
under \Cref{ass:curvature}. Setting $y=1$ proves the converse,
and hence equivalence with \eqref{eq:rent-curvature}.

\subsubsection{Upper-tail ranks and linear parts}
\label{app:binary-reconstruction}

Reverse ranks by writing $s=1-u$, so smaller $s$ means higher
posterior values, and set
\[
 v(s)=Q(1-s),\qquad r(s)=R(1-s),\qquad
 U_r(s)=\frac{r(s)}s=Q^r(1-s),\quad 0<s\le1.
\]
The quantile $v$ is nonincreasing and right-continuous, with
$v(1)=v(1-)$. Its raw revenue curve is $f_v(s)=sv(s-)$ for $s>0$,
with $f_v(0)=0$, and $r=\cav f_v$. Write $z(s)=r'(s)$ for
the ironed virtual value almost everywhere. The functions $v$ and $Q$ describe the same posterior
distribution and therefore yield the same payoffs.

A \emph{face} is a maximal interval of positive length on
which $r$ is affine. It corresponds to a linear interval
of $R$ in Section~\ref{subsec:mean-adjustment}, with ranks
reversed. Write $I=[\ell_I,h_I]$, where
$\ell_I=1-\beta_I$ and $h_I=1-\alpha_I$. In these coordinates,
\[
 r(s)=A_I+z_Is,\qquad b_I=U_r(h_I),\qquad
 d_I=\int_I[U_r(s)-v(s)]\,\dd s,\qquad
 \kappa_I=\int_I[U_r(s)-b_I]\,\dd s.
\]
Here $A_I=h_I(b_I-z_I)\ge0$, and $A_I>0$ if and only if
$\kappa_I>0$. If $A_I=0$, maximality implies $\ell_I=0$,
so the face corresponds to a terminal linear interval
ending at $u=1$ in the original ranks.
We call a rank $s$ a \emph{contact} if $f_v(s)=r(s)$.

A \emph{filling} of a face $I$ specifies the posterior
values $v(s)$ assigned to ranks in $(\ell_I,h_I)$.
By \Cref{lem:fixed-envelope-fillings}, the fillings that
preserve $r$ are precisely the nonincreasing functions
satisfying $b_I\le v\le U_r$ almost everywhere on $I$.
We call the face \emph{pure} if $v=b_I$ almost everywhere
there. Its buyers then share a single posterior value,
and the reduction in mean reaches capacity:
$d_I=\kappa_I$.

For a binary prior, suppose two faces satisfy
$q(z_I)<q(z_J)$, but $I$ has unused capacity
($d_I<\kappa_I$) while $J$ has a positive reduction
($d_J>0$). For sufficiently small $\varepsilon>0$,
increase $d_I$ by $\varepsilon$ and decrease $d_J$
by the same amount. Both reductions remain within their
bounds, and the total reduction in mean is unchanged,
so the resulting posterior distribution remains feasible.
Profit is unchanged because $r$ is unchanged, while
consumer surplus rises by
$\varepsilon[q(z_J)-q(z_I)]>0$.
This contradicts optimality with $r$ fixed.
In particular, whenever a face with positive quality
has $d_J>0$, every face with zero quality must be
at capacity.
We next consider changes to $r$, beginning with an
argument that applies to every finite prior.

\subsection{The lower groups and the regular upper component}
\label{app:binary-lower-structure}

\subsubsection{No bunching among positively served types}
\label{app:binary-purity}

\label{app:face-contraction}

Within a face, posterior values can change without changing
the seller's quality schedule. The next lemma chooses the
least dispersed filling, in convex order, among those with
the same mean. It assigns a common posterior value wherever
the upper bound $U_r$ permits, and uses $U_r$ elsewhere.

\begin{lemma}[A contraction within a constant-quality group]
\label{lem:face-contraction}
Fix a face $I=[\ell,h]$ with intercept $A>0$, and let
$b=U_r(h)$. There is a unique $k_I\in[b,U_r(\ell)]$ such that
\begin{equation}
 \int_I\min\{U_r(s),k_I\}\,\dd s=\int_Iv(s)\,\dd s.
 \label{eq:finite-cap-mean}
\end{equation}
Set $w(s)=\min\{U_r(s),k_I\}$ on $I$.
Among all admissible fillings with the same average
posterior value on $I$, the distribution induced by $w$
is least in convex order.
Replacing $v$ by this filling, on one face or simultaneously
on any countable collection of faces with positive intercept,
preserves $r$, the mean, quality at almost every rank, and
both payoffs. The resulting posterior distribution is a
mean-preserving contraction of the original, so the
replacement also preserves Bayes plausibility.
\end{lemma}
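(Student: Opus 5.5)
The proof proceeds in three steps: existence and uniqueness of the cap, convex-order minimality, and preservation of $r$, payoffs and feasibility.

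\textbf{The cap $k_I$.} Define $\Psi(k):=\int_I\min\{U_r(s),k\}\,\dd s$ for $k\in[b,U_r(\ell)]$. Since $U_r$ is continuous, nonincreasing in $s$, and equals $b$ at $h$, $\Psi$ is continuous and nondecreasing. It is strictly increasing on this range. For $k<U_r(\ell)$, continuity gives a set of positive measure near $\ell$ where $U_r>k$. At the endpoints, $\Psi(b)=b\,|I|$ and $\Psi(U_r(\ell))=\int_IU_r$. Any admissible filling satisfies $b\le v\le U_r$, so $\int_Iv$ lies in $[\Psi(b),\Psi(U_r(\ell))]$. The intermediate value theorem then gives a unique $k_I$. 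The function $w=\min\{U_r,k_I\}$ is nonincreasing and satisfies $b\le w\le U_r$, so it is an admissible filling by \Cref{lem:fixed-envelope-fillings}.

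\textbf{Convex-order minimality.} Let $\tilde v$ be any admissible filling with the same integral on $I$. It suffices to compare the stop-loss transforms $\int_I(\tilde v-t)_+\,\dd s$ and $\int_I(w-t)_+\,\dd s$ for all $t$. Equivalently, we compare partial integrals over the top ranks: both functions are nonincreasing in $s$, so $\int_\ell^{\ell+\sigma}\tilde v\ge\int_\ell^{\ell+\sigma}w$ should hold for every $\sigma$.
\begin{itemize}
\item On $\{U_r\le k_I\}$, which is a final segment of $I$, we have $w=U_r\ge\tilde v$ pointwise.
\item The partial-integral difference $\int_{s}^{h}(w-\tilde v)$ is therefore nonnegative and nondecreasing as $s$ decreases through that final segment.
\item On the initial segment, $w=k_I$ is constant, so the partial integral of $\tilde v-w$ from $\ell$ is concave in $\sigma$. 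It vanishes at $\sigma=0$ and is nonnegative at the boundary of the segment, because the total difference is zero and the tail part is $\le0$. Concavity then gives nonnegativity on the whole initial segment.
\end{itemize}
This yields majorization, and hence $w\preceq_{\cx}\tilde v$ for the conditional distributions on $I$.

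\textbf{Preservation and feasibility.} Replacing $v$ by $w$ on one face or on countably many faces with $A>0$ leaves every filling admissible. By \Cref{lem:fixed-envelope-fillings}, which allows independent choices across countably many faces, $r$ is preserved. Virtual values, quality by rank and profit are therefore unchanged. The mean on each face is unchanged by construction, so by \Cref{prop:mean-accounting} the reductions $d_I$ are unchanged and consumer surplus is unchanged. The new distribution is obtained by replacing the conditional distribution on each face by a convex-order smaller one with the same mass and mean. Integrating a convex $f$ face by face gives a mean-preserving contraction of the original. Transitivity of $\preceq_{\cx}$ then keeps it dominated by $H$, so Bayes plausibility holds for any finite prior.

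\textbf{Main obstacle.} The main obstacle is making the majorization step rigorous with $U_r$ only continuous, including possibly flat pieces. Strict monotonicity of $\Psi$ needs care there. Passing from top-rank partial integrals to the stop-loss order for nonincreasing quantile functions is the standard Hardy–Littlewood–Pólya argument, and I will invoke it.
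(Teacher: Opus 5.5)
Your proof is correct. The cap construction and the preservation step match the paper's, but you reach convex-order minimality by a different route. The paper argues pointwise. It fixes a convex $f$ and a supporting slope $p$ at $k_I$, and shows $f(\tilde v)-f(w)\ge p(\tilde v-w)$ at every rank. Where $w=k_I$ this is the supporting line. Where $w=U_r<k_I$, it follows because $\tilde v\le w$ and secant slopes of $f$ below $k_I$ are at most $p$. Integrating over $I$ and using equal integrals finishes the argument. That route is self-contained, needs no appeal to Hardy--Littlewood--P\'olya or the stop-loss characterization, and uses only $\tilde v\le U_r$ and the mean constraint, not monotonicity of $\tilde v$. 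Your route instead uses that admissible fillings are nonincreasing, which gives concavity of the partial integral on the capped segment. You then invoke the partial-sum characterization of convex order. This is equally valid and makes the single-crossing structure of $\tilde v-w$ explicit. Two small remarks follow. First, the obstacle you flag about flat pieces does not arise. On a face with $A>0$ one has $\ell>0$ (since $r(0)=0$), so $U_r(\ell)$ is finite and $U_r(s)=z+A/s$ is strictly decreasing on $I$; the paper uses this to get strict monotonicity of the cap map, though your continuity-near-$\ell$ argument also works. Second, for countably many faces, summing the face-by-face inequalities is justified because $f$ is bounded on $[a,1]$ and the faces have disjoint interiors of total length at most one.
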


\begin{proof}
Since $r(0)=0$ and $A>0$, we have $\ell>0$.
On $I$, the function $U_r(s)=z+A/s$ is continuous and
strictly decreasing, with minimum $b=U_r(h)$.
The map $k\mapsto\int_I\min\{U_r(s),k\}\,\dd s$ is therefore
continuous and strictly increasing on $[b,U_r(\ell)]$.
Its endpoint values are $b(h-\ell)$ and
$\int_IU_r(s)\,\dd s$. The bounds $b\le v\le U_r$
place $\int_Iv(s)\,\dd s$ between these values, proving
existence and uniqueness of $k_I$.

The new filling $w$ is nonincreasing and satisfies
$b\le w\le U_r$. By \Cref{lem:fixed-envelope-fillings},
it preserves the concavified revenue curve $r$, and hence
the seller's quality schedule and profit.
Equation~\eqref{eq:finite-cap-mean} preserves the mean
within the face and leaves $d_I$ unchanged.
Consumer surplus therefore also remains unchanged by
\Cref{prop:mean-accounting}.

To establish the convex-order comparison, let $\widetilde v$
be any admissible filling with
$\int_I\widetilde v(s)\,\dd s=\int_Iw(s)\,\dd s$.
Fix a finite convex function $f$ on $\mathbb R$ and
choose a supporting slope $p$ at $k_I$.
Where $w=k_I$, the supporting-line inequality gives
$f(\widetilde v)-f(w)\ge p(\widetilde v-w)$.
Where $w<k_I$, we have $w=U_r$ and
$\widetilde v\le w$. Convexity then gives
$f(w)-f(\widetilde v)\le p(w-\widetilde v)$:
the secant slope between two values below $k_I$ cannot
exceed a supporting slope at $k_I$.
Rearranging yields the same inequality as in the first case.
Integrating over $I$ therefore gives
\[
 \int_I f(\widetilde v(s))\,\dd s
 -\int_I f(w(s))\,\dd s
 \ge p\int_I[\widetilde v(s)-w(s)]\,\dd s=0.
\]
Thus $w$ is least in convex order among admissible fillings
with the prescribed mean.

For a countable collection of faces, apply
\Cref{lem:fixed-envelope-fillings} to all the chosen
fillings simultaneously. The curve $r$ remains unchanged,
as does every $d_I$, so the mean and payoff identities
continue to hold by \eqref{eq:summable-capacities}.
The convex-order inequalities can also be summed:
the faces have disjoint interiors, their total length
is at most one, and $f$ is bounded on $[a,1]$.
Outside the chosen faces, posterior values are unchanged.
The resulting distribution is therefore below the original
in convex order and has the same mean.
Transitivity of convex order then preserves Bayes
plausibility under the original prior.
\end{proof}

\begin{lemma}[Concavification after raising a point on a face]
\label{lem:binary-one-point-lift}
Let $[\ell,h]$, with $0<\ell<h\le1$, be a maximal face
of $r$ with slope $z$, and fix $c\in(\ell,h)$.
For $\epsilon>0$, let $R_\epsilon$ be the least concave
function above $r$ satisfying
$R_\epsilon(c)\ge r(c)+\epsilon$.

There are $a_\epsilon\in[0,\ell]$ and
$b_\epsilon\in[h,1]$ such that $R_\epsilon=r$ outside
$(a_\epsilon,b_\epsilon)$,
$R_\epsilon(c)=r(c)+\epsilon$, and $R_\epsilon$ is
affine on $[a_\epsilon,c]$ and $[c,b_\epsilon]$.
As $\epsilon\downarrow0$,
$a_\epsilon\to\ell$, $b_\epsilon\to h$, and
\begin{equation}
 \int_0^1
 |R_\epsilon'(s)-\widehat z_\epsilon(s)|\,\dd s
 =o(\epsilon),
 \label{eq:binary-lift-slope-estimate}
\end{equation}
where $\widehat z_\epsilon$ equals
$z+\epsilon/(c-\ell)$ on $(\ell,c)$,
$z-\epsilon/(h-c)$ on $(c,h)$,
and $r'$ outside $[\ell,h]$.
\end{lemma}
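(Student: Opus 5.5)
The plan is to describe $R_\epsilon$ as the concave hull of $r$ and the single point $(c,r(c)+\epsilon)$. Concretely, $R_\epsilon=\max\{r,\ T_\epsilon\}$, where $T_\epsilon$ is the "tent" formed by the two tangent lines from the lifted point $P_\epsilon=(c,r(c)+\epsilon)$ to the graph of $r$. Because $r$ is concave on $[0,1]$ and $P_\epsilon$ lies above its graph, the following rotation argument works. Rotate a line about $P_\epsilon$ until it first touches the graph of $r$ to the left of $c$. Let $a_\epsilon$ be the largest touching point on $[0,c]$; if no line through $P_\epsilon$ can be made to lie above $r$ while touching it on that side, take the endpoint $0$. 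Define $b_\epsilon$ on $[c,1]$ in the same way. The function equal to the left chord on $[a_\epsilon,c]$, the right chord on $[c,b_\epsilon]$, and $r$ elsewhere has three properties: it is concave, since the chord slopes lie between the one-sided derivatives of $r$ at the touching points; it dominates $r$; and it satisfies the constraint at $c$. Any concave function above $r$ with value at least $r(c)+\epsilon$ at $c$ must dominate the chords joining $P_\epsilon$ to $(a_\epsilon,r(a_\epsilon))$ and $(b_\epsilon,r(b_\epsilon))$. So this function is the least one, and it equals $r$ outside $(a_\epsilon,b_\epsilon)$.

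Next I would locate the touching points. Since $r$ is affine with slope $z$ on $[\ell,h]$, any line through $P_\epsilon$ touching $r$ at a point of $(\ell,c)$ would have slope $z$. That is impossible, because $P_\epsilon$ lies strictly above the line extending the face. Hence $a_\epsilon\le\ell$, and symmetrically $b_\epsilon\ge h$. For convergence, fix $\eta>0$. Maximality of the face gives that $r$ lies strictly below the extension of the face line at $\ell-\eta$ (when $\ell-\eta\ge0$) and at $h+\eta$, by some margin $\delta>0$. The chords from $P_\epsilon$ have slopes $z+O(\epsilon)$, so for small $\epsilon$ the left chord stays above $r$ on $[0,\ell-\eta]$. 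The touching point therefore lies in $(\ell-\eta,\ell]$. This gives $a_\epsilon\to\ell$, and likewise $b_\epsilon\to h$. The case $\ell=0$ cannot occur because $\ell>0$ by hypothesis.

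Finally comes the slope estimate, which I expect to be the only delicate point, because $r$ may have kinks accumulating at $\ell$ or $h$. First, on $(\ell,c)$ the slope of $R_\epsilon$ is $\sigma_\epsilon=(r(c)+\epsilon-r(a_\epsilon))/(c-a_\epsilon)$. Compare it with $z+\epsilon/(c-\ell)$ by writing $r(c)-r(a_\epsilon)=z(c-\ell)+r(\ell)-r(a_\epsilon)$. Since $a_\epsilon\to\ell$, the difference is $o(\epsilon)$ times a bounded length plus a term controlled as below. Second, on $(a_\epsilon,\ell)$ the integrand is $|\sigma_\epsilon-r'(s)|$. Here $\sigma_\epsilon\ge r'(s)\ge z$, so the integral is at most
\[
\sigma_\epsilon(\ell-a_\epsilon)-[r(\ell)-r(a_\epsilon)].
\]
Tangency of the chord at $a_\epsilon$, together with linearity of the chord, gives $\sigma_\epsilon(\ell-a_\epsilon)-[r(\ell)-r(a_\epsilon)]=R_\epsilon(\ell)-r(\ell)$. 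This gap is at most $\epsilon(\ell-a_\epsilon)/(c-a_\epsilon)$, which is $o(\epsilon)$ because $a_\epsilon\to\ell$. The same monotonicity bound handles $(h,b_\epsilon)$. Combining these pieces, and using that the chord slopes on $(\ell,c)$ and $(c,h)$ differ from the stated values of $\widehat z_\epsilon$ only through the $o(\epsilon)$ gaps $R_\epsilon(\ell)-r(\ell)$ and $R_\epsilon(h)-r(h)$, yields \eqref{eq:binary-lift-slope-estimate}.
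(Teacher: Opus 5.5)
Your proposal is correct and follows essentially the same route as the paper: tangent lines from the lifted point (the paper's extremal difference quotients $\lambda_\epsilon$ and $\mu_\epsilon$), touching points forced outside the open face, maximality of the face to get $a_\epsilon\to\ell$ and $b_\epsilon\to h$, and a four-piece slope accounting in which each side of $c$ contributes twice the endpoint gap $R_\epsilon(\ell)-r(\ell)$ or $R_\epsilon(h)-r(h)$. Two small points to tidy up. First, concavity at the kink $c$ requires the left chord slope to exceed the right one, which you should state explicitly as $\lambda_\epsilon>z>\mu_\epsilon$; this is immediate because the extended face line lies above $r$. Second, your fallback to the endpoint $0$ is never needed: $(r(c)+\epsilon-r(s))/(c-s)\to+\infty$ as $s\uparrow c$, so its minimum over $[0,c)$ is always attained.
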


\begin{proof}
We first locate the points where the new line segments
meet $r$. Write $y_\epsilon=r(c)+\epsilon$ and define
\[
 \lambda_\epsilon
 =\min_{0\le s<c}
   \frac{y_\epsilon-r(s)}{c-s},
 \qquad
 \mu_\epsilon
 =\max_{c<s\le1}
   \frac{r(s)-y_\epsilon}{s-c}.
\]
The first quotient tends to $+\infty$ as $s\uparrow c$,
and the second tends to $-\infty$ as $s\downarrow c$.
Continuity of $r$ therefore ensures that the extrema
are attained. Choose attaining points $a_\epsilon$
and $b_\epsilon$, respectively.

On the original face, the two quotients are
$z+\epsilon/(c-s)$ and $z-\epsilon/(s-c)$.
The first is minimized at $\ell$ on $[\ell,c)$,
and the second is maximized at $h$ on $(c,h]$.
Hence $a_\epsilon\le\ell$ and $b_\epsilon\ge h$.
The line with slope $z$ along the original face lies
weakly above $r$ everywhere. Using this bound, and
evaluating the quotients at $\ell$ and $h$, gives
\[
 z<\lambda_\epsilon
 \le z+\frac{\epsilon}{c-\ell},
 \qquad
 z-\frac{\epsilon}{h-c}
 \le\mu_\epsilon<z.
\]

By definition of $\lambda_\epsilon$, the line through
$(c,y_\epsilon)$ with slope $\lambda_\epsilon$ lies
above $r$ to the left of $c$ and touches it at
$a_\epsilon$. The corresponding line with slope
$\mu_\epsilon$ lies above $r$ to the right of $c$
and touches it at $b_\epsilon$.
Use these lines between the contact ranks and retain
$r$ elsewhere. The supporting contacts ensure that
the slopes do not increase at either join with $r$,
while $\lambda_\epsilon>\mu_\epsilon$ ensures
concavity at $c$. The resulting curve is therefore
a concave majorant. Any concave majorant of $r$
and the raised point must dominate both line segments,
because it must dominate their endpoints.
Thus, the constructed curve is $R_\epsilon$.

Next, $r+\epsilon$ is itself a concave majorant of
$r$ and the raised point. Consequently,
$0\le R_\epsilon-r\le\epsilon$, so $R_\epsilon$
converges uniformly to $r$.
If a subsequence of $a_\epsilon$ converged to some
$a<\ell$, affinity of $R_\epsilon$ on
$[a_\epsilon,c]$ would make its limit $r$ affine
on $[a,c]$. This would extend the original face
to the left, contradicting maximality.
Hence $a_\epsilon\to\ell$. The same argument gives
$b_\epsilon\to h$.

It remains to compare slopes. Let
$H_L=R_\epsilon(\ell)-r(\ell)$ and
$H_R=R_\epsilon(h)-r(h)$ be the increases at the
original face endpoints.
The new chord supports $r$ at $a_\epsilon$, so concavity gives $r'\le\lambda_\epsilon$ to its right. Before the old face, the slope remains at least $z$. Thus $z\le r'\le\lambda_\epsilon$ almost everywhere on $(a_\epsilon,\ell)$.
Similarly, after the old face the slope is at most $z$. Since the new chord supports $r$ at $b_\epsilon$, the slope to the left of that contact is at least $\mu_\epsilon$. Thus $\mu_\epsilon\le r'\le z$ almost everywhere on $(h,b_\epsilon)$.
Since the curves agree at $a_\epsilon$ and
$b_\epsilon$, these bounds imply
$H_L=\int_{a_\epsilon}^{\ell}
          [\lambda_\epsilon-r'(s)]\,\dd s$ and $H_R=\int_h^{b_\epsilon}
          [r'(s)-\mu_\epsilon]\,\dd s$.
The slope bounds established above therefore yield
\[
 0\le H_L
 \le\frac{\epsilon(\ell-a_\epsilon)}{c-\ell}
 =o(\epsilon),
 \qquad
 0\le H_R
 \le\frac{\epsilon(b_\epsilon-h)}{h-c}
 =o(\epsilon).
\]

The two extension intervals contribute $H_L$ and
$H_R$ to the integral in
\eqref{eq:binary-lift-slope-estimate}, because
$\widehat z_\epsilon=r'$ there.
On $[\ell,c]$, the actual slope is
$\lambda_\epsilon
 =z+(\epsilon-H_L)/(c-\ell)$,
whereas the comparison slope is
$z+\epsilon/(c-\ell)$.
Their integrated absolute difference is therefore
$H_L$. Similarly, on $[c,h]$, the actual slope is
$\mu_\epsilon
 =z-(\epsilon-H_R)/(h-c)$,
so the integrated difference from
$z-\epsilon/(h-c)$ is $H_R$.
Outside $[a_\epsilon,b_\epsilon]$, the slopes agree.
Adding the four contributions gives
\[
 \int_0^1
 |R_\epsilon'(s)-\widehat z_\epsilon(s)|\,\dd s
 =2(H_L+H_R)=o(\epsilon),
\]
as required.
\end{proof}

\begin{lemma}[Absence of bunching with a finite prior]
\label{lem:active-face-purity}
Fix a finite primitive prior and a finite weight $\beta\ge0$,
and maintain \Cref{ass:cost}. At every maximizer of $J_\beta$,
each face $I=[\ell,h]$ with $r(s)=A+zs$ and $A,z>0$
satisfies $v(s)=b:=U_r(h)$ almost everywhere on $I$.
Thus all buyers on the face share one posterior value,
and $d_I=\kappa_I$.
\end{lemma}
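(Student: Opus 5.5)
We argue by contradiction: suppose a maximizer has a face $I=[\ell,h]$ with $A,z>0$ on which $v$ is not almost everywhere equal to $b$. The plan is first to replace the filling by a cleaner one at no cost, and then to exhibit a mean-preserving deviation inside the face that raises $J_\beta$ to first order. Note that $A>0$ forces $\ell>0$, so the face is nonterminal.

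\emph{Step 1 (contraction).} Apply \Cref{lem:face-contraction} on $I$ and replace $v$ by $w=\min\{U_r,k_I\}$. This preserves $r$, both payoffs and the mean, and yields a mean-preserving contraction, so the new distribution is still a maximizer under the finite prior. If $v\neq b$ on a set of positive measure, then $\int_I v>b(h-\ell)$ and hence $k_I>b$. Therefore there is a nondegenerate subinterval $(c_0,h)$ on which $w=U_r$. On it the posterior values $U_r(s)=z+A/s$ are strictly decreasing in $s$, and these distinct types share quality $q(z)>0$. Choose the subinterval so that its posterior values lie in a single open primitive interval $(v_i,v_{i+1})$. This is possible because $U_r$ is continuous and strictly monotone, so we can shrink toward a point whose value is not a primitive value. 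With this choice, mean-preserving changes confined to the subinterval leave every $D_j$ unchanged, as shown in Section~\ref{subsec:finite-constraints}.

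\emph{Step 2 (deviation).} Pick $c\in(c_0,h)$ and use \Cref{lem:binary-one-point-lift} to raise $r$ at $c$ by $\epsilon$. By \eqref{eq:binary-lift-slope-estimate}, up to $o(\epsilon)$ in $L^1$, virtual values become $z+\epsilon/(c-\ell)$ on $(\ell,c)$ and $z-\epsilon/(h-c)$ on $(c,h)$. This perturbation is not confined to the subinterval, so I would instead take the face restricted to $[c_0,h]$ and carry out the lift there, setting the new quantile to $R_\epsilon/s$ near the contact region. The profit change is $\int[\pi(\phi_\epsilon)-\pi(z)]$. Since $\pi'(z)=q(z)$, its first-order term is $q(z)$ times the integral of the slope perturbation, which vanishes because $R_\epsilon=r$ at the endpoints; profit is therefore unchanged to first order. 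The consumer-surplus change at fixed $Q$ has first-order term $\int(Q-z)q'(z)\,\delta\phi-\int q(z)\,\delta\phi$. The second integral again vanishes. The first is positive, because $\delta\phi>0$ on lower ranks of $s$ (higher posterior values, where $Q-z$ is larger) and $\delta\phi<0$ on higher ranks, while $Q-z=A/s$ is strictly decreasing. This is the sense in which quality rises for higher types and falls for lower ones. The mean also changes, since some posterior values move up to keep the raw curve at $R_\epsilon$ near $c$. I would restore the mean by lowering values on the part of the subinterval where $w=U_r$ but away from $c$, at fixed envelope, using \Cref{prop:mean-accounting}. That adjustment costs $q(z)$ per unit of mean. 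I need to check that it is of order $\epsilon$ times something dominated by the positive term, or else arrange the raised region to be an $o(1)$-measure set so that the mean change is $o(\epsilon)$.

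\emph{Main obstacle.} The hard step is making the mean compensation and the first-order gain compatible while staying inside one primitive interval. I would first try to parametrize the deviation so that the new quantile equals $R_\epsilon/s$ only on a shrinking neighborhood of $c$ and equals $w$ elsewhere. The mean change is then $O(\epsilon\cdot\text{width})$, and the slope perturbation still produces the strict gain $\epsilon\,\bigl[\frac{1}{c-c_0}\int_{c_0}^{c}-\frac{1}{h-c}\int_c^h\bigr](A/s)\,q'(z)\,\dd s>0$. The result is a strict improvement after an $o(\epsilon)$ correction, which contradicts maximality; hence $v=b$ almost everywhere and $d_I=\kappa_I$.
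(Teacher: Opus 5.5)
Your strategy is the paper's: contract the face with \Cref{lem:face-contraction}, localize to a stretch $K$ of the contact interval whose values lie in one primitive interval, raise raw revenue at a point of the face, and compensate the mean inside $K$. But Step~2, where the proof actually happens, does not go through as written.

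\emph{The envelope cannot be lifted on $[c_0,h]$ alone.} The curve $r$ is affine on all of $[\ell,h]$. Any concave majorant that exceeds $r$ at $c$ therefore exceeds it on all of $(\ell,c)$, so the seller's new envelope is necessarily the curve $R_\epsilon$ of \Cref{lem:binary-one-point-lift} for the whole face. This does not affect feasibility, which depends only on posterior values. It does change the computation: the first-order gain is $\epsilon q'(z)\bigl[\frac{1}{c-\ell}\int_\ell^c w\,\dd s-\frac{1}{h-c}\int_c^h w\,\dd s\bigr]$. Here $\ell$ replaces your $c_0$, and the integrand is $w$ rather than $U_r=z+A/s$; the constant $z$ cancels between the two averages, and $w=k_I<U_r$ on part of $(\ell,c)$. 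The bracket is positive because $w$ is nonincreasing on $I$ and strictly decreasing near $c$.

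\emph{Your deviation is not a quantile.} You propose a quantile equal to $R_\epsilon/s$ on a (possibly shrinking) neighborhood of $c$ and to $w$ elsewhere. At the neighborhood's left endpoint $s_1\in(\ell,h)$, the posterior value jumps up from $r(s_1)/s_1$ to $R_\epsilon(s_1)/s_1$. This violates monotonicity in upper-tail rank.

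The missing idea is that the raw curve need touch $R_\epsilon$ only at the single point $c$, and that compensating the mean within the face costs nothing to first order. Let $T_c$ be the piecewise-linear function on $[\ell,h]$ that vanishes at $\ell$ and $h$ and equals one at $c$. Set $w_\epsilon=w+\epsilon\eta/s$ on $K$, with $\eta=\eta_+-\lambda\eta_-$, where:
\begin{itemize}
\item $\eta_+$ is a Lipschitz tent supported in $K$ with $\eta_+(c)=1$ and $\eta_+\le T_c$;
\item $\eta_-\ge0$ is a Lipschitz bump on a disjoint part of $K$;
\item $\lambda>0$ is chosen so that $\int_K\eta(s)/s\,\dd s=0$.
\end{itemize}
On $K$ the new raw curve is $r+\epsilon\eta\le r+\epsilon T_c\le R_\epsilon$. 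The raised point and the unchanged contacts force every concave majorant above $R_\epsilon$, so the envelope is exactly $R_\epsilon$. Monotonicity survives for small $\epsilon$ because $w'=-A/s^2$ is bounded away from zero on $K$. The mean is preserved exactly, and every $D_j$ is unchanged because all altered values stay in one primitive interval.

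What your accounting omits is that raising posterior values on the face gains $q(z)$ per unit of mean. This exactly offsets the $q(z)$ cost of lowering them elsewhere on the same face, so the direct effect is $q(z)\epsilon\int_K\eta/s\,\dd s=0$, and only the slope redistribution remains. There is therefore no need to shrink the raised region. An $o(\epsilon)$ mean change would not help in any case, since Bayes plausibility requires the mean to be restored exactly, and the same monotonicity issue would reappear.
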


\begin{proof}
Suppose an optimum has a face $I=[\ell,h]$ with
$r(s)=A+zs$, where $A,z>0$, but $v$ is not equal to
$b=U_r(h)$ almost everywhere on $I$.
By \Cref{lem:face-contraction}, we can replace its filling
by $w=\min\{U_r,k\}$ while preserving feasibility and both
payoffs. Retain the original filling elsewhere.
The resulting distribution is therefore also optimal.
Since the original filling was not constant at $b$,
its average posterior value on $I$ exceeds $b$, so $k>b$.
Consequently, on an interval immediately to the left of
$h$, we have $w=U_r=z+A/s$. The raw revenue curve
equals $r$ there, and posterior values strictly decrease
with $s$.

Because the prior has finitely many primitive values,
we can choose a closed interval $K$ of positive length
within this contact interval such that
$U_r(K)\subset(v_i,v_{i+1})$ for two adjacent primitive
values. Fix $x$ in the interior of $K$.
We will raise raw revenue at $x$ and make a compensating
change elsewhere in $K$ to preserve the mean.

Define the continuous piecewise linear function
\[
 T_x(s)=
 \begin{cases}
  \dfrac{s-\ell}{x-\ell}, & \ell\le s\le x,\\[1mm]
  \dfrac{h-s}{h-x}, & x<s\le h.
 \end{cases}
\]
It is zero at the face endpoints and one at $x$.
Since $r$ is affine on $I$, the curve $r+\epsilon T_x$
consists of the two line segments joining
$(\ell,r(\ell))$, $(x,r(x)+\epsilon)$, and $(h,r(h))$.

To construct the perturbation, choose $\delta>0$
sufficiently small that
$\eta_+(s)=(1-|s-x|/\delta)_+$ is supported in the
interior of $K$. This function satisfies
$\eta_+(x)=1$ and $0\le\eta_+\le T_x$.
Choose a nonzero, nonnegative Lipschitz function
$\eta_-$ supported on a disjoint subinterval of the
interior of $K$, and set
\[
 \eta(s)=\eta_+(s)
 -\frac{\int_K\eta_+(t)/t\,\dd t}
        {\int_K\eta_-(t)/t\,\dd t}\,\eta_-(s).
\]
Then $\eta$ is Lipschitz, is supported inside $K$, and
satisfies $\eta(x)=1$, $\eta\le T_x$, and
$\int_K\eta(s)/s\,\dd s=0$.

Set $w_\epsilon=w+\epsilon\eta/s$ on $K$ and leave
$w$ unchanged elsewhere. On $K$, the derivative
$w'=-A/s^2$ is bounded away from zero, while $\eta/s$
is Lipschitz. Thus, for sufficiently small $\epsilon>0$,
the perturbed quantile remains nonincreasing.
It also remains within $(v_i,v_{i+1})$ on $K$.
The mean is unchanged by construction.
Moreover, each function $(y-v_j)_+$ is either zero
or $y-v_j$ throughout this primitive-value interval.
Its expectation is therefore unchanged as well.
By \eqref{eq:finite-feasibility}, the perturbation
preserves Bayes plausibility, including every binding
information constraint.

We next identify the seller's response. Let $R_\epsilon$
be the concave majorant in
\Cref{lem:binary-one-point-lift}, with the raised point
at $x$. On $K$, the new raw revenue curve is
$r+\epsilon\eta$, and elsewhere it is unchanged.
Concavity requires $R_\epsilon$ to dominate the two
chords described by $r+\epsilon T_x$ on $I$.
Since $\eta\le T_x$, it therefore dominates the new
raw revenue curve.

Conversely, the original face endpoints remain contacts.
Any concave majorant of the new raw curve must therefore
dominate their chord, which is $r$ on $I$.
Outside $I$, raw revenue is unchanged. Such a majorant
thus also dominates the entire old raw curve, and hence
its concave envelope $r$. It must additionally dominate
the raised point $(x,r(x)+\epsilon)$.
It therefore lies above $R_\epsilon$, proving that
$R_\epsilon$ is the new concavified revenue curve.

It remains to show that the perturbation raises the
objective. Since $A,z>0$, we have $0<z<b\le1$.
The supporting-line bounds in
\Cref{lem:binary-one-point-lift} place the affected
actual and comparison slopes between
$z-\epsilon/(h-x)$ and $z+\epsilon/(x-\ell)$.
For small $\epsilon$, these lie in a fixed compact
subinterval of $(0,1)$. On this interval,
$F_{y,\beta}$ is uniformly Lipschitz in its second
argument for $y\in[a,1]$.
Thus \eqref{eq:binary-lift-slope-estimate} gives
\[
 J_\beta(w_\epsilon)
 =\int_0^1
 F_{w_\epsilon(s),\beta}(\widehat z_\epsilon(s))\,\dd s
 +o(\epsilon).
\]
Outside $I$, this comparison integrand agrees with
the original one.

On $I$, the direct change in posterior values contributes
$q(z)\epsilon\int_K\eta(s)/s\,\dd s=0$ to first order.
The derivative with respect to virtual value is
$\partial_zF_{w(s),\beta}(z)
 =q'(z)w(s)-zq'(z)-(1-\beta)q(z)$.
The last two terms are constant across the face.
Their contributions cancel because the two comparison
slope changes have integrals $\epsilon$ and $-\epsilon$.
A first-order expansion therefore yields
\[
 J_\beta(w_\epsilon)-J_\beta(w)
 =\epsilon q'(z)\left[
 \frac{1}{x-\ell}\int_\ell^x w(s)\,\dd s
 -\frac{1}{h-x}\int_x^h w(s)\,\dd s
 \right]+o(\epsilon).
\]
The expression in brackets is strictly positive:
$w$ is nonincreasing throughout $I$ and strictly
decreasing near $x$, so its average to the left of
$x$ exceeds its average to the right.
Since $q'(z)>0$, the feasible perturbation strictly
raises the objective for sufficiently small $\epsilon$,
contradicting optimality.
\end{proof}

\subsubsection{Ordering and eliminating trading atoms}
\label{app:binary-ordering}

We pool lower posterior types and offset the resulting
fall in the mean by raising values on a higher trading
face $I$ with positive intercept. By \Cref{lem:active-face-purity}, buyers on this face
share the posterior value $b_I$. Thus $v=b_I$ almost
everywhere and $d_I=\kappa_I>0$.
Raising its posterior values toward $U_r$ preserves $r$
and profit, while increasing consumer surplus by $q(z_I)$
per unit increase in the overall mean.

\begin{lemma}[The seller's response to pooling a block of types]
\label{lem:exact-chord-collapse}
Let $0<\alpha<\gamma\le1$ be contact ranks, and set
$b=U_r(\gamma)$ and
$\bar z=[r(\gamma)-r(\alpha)]/(\gamma-\alpha)$.
Pooling the posterior values on $(\alpha,\gamma)$ at $b$
preserves monotonicity. The resulting concavified revenue
curve equals the chord joining the two endpoint contacts
on $[\alpha,\gamma]$ and equals $r$ elsewhere.
The overall mean falls by
$D=\int_\alpha^\gamma(v-b)\,\dd s\ge0$.
Buyers in the pooled block receive quality $q(\bar z)$,
and the block contributes
$(\gamma-\alpha)F_{b,\beta}(\bar z)$ to the designer's
objective, where
$F_{b,\beta}(z)=(b-z)q(z)+\beta\pi(z)$ and
$q(z)=\pi(z)=0$ for $z\le0$.
\end{lemma}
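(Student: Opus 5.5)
The proof works in the reversed-rank coordinates of Appendix~\ref{app:binary-reconstruction}: $v$ is nonincreasing and right-continuous, $f_v(s)=sv(s-)$, and $r=\cav f_v$. Pooling changes the quantile only on $(\alpha,\gamma)$, where it is set equal to $b=U_r(\gamma)$. Since $\gamma$ is a contact, $v(\gamma)$ (or the relevant one-sided value) equals $U_r(\gamma)=b$. Monotonicity of $v$ gives $v\ge b$ on $(\alpha,\gamma)$, and $v(s)\le b$ for $s>\gamma$, so the new function is still nonincreasing. Thus $D=\int_\alpha^\gamma(v-b)\,\dd s\ge0$. I would check the right-continuity conventions at both endpoints: at $\alpha$ the left limit of the new quantile is $v(\alpha-)$, which is unchanged.

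\textbf{Step 1: the new envelope.} Let $\widetilde r$ equal the chord $\ell$ on $[\alpha,\gamma]$ and $r$ elsewhere.
\begin{itemize}
\item \emph{$\widetilde r$ is concave.} Because $r$ is concave, the chord lies below $r$ on $[\alpha,\gamma]$. At $\alpha$ and $\gamma$ the slopes of $r$ to the left are $\ge\bar z$ and to the right are $\le\bar z$, which is the standard fact that replacing a concave function on an interval by its chord preserves concavity.
\item \emph{$\widetilde r$ majorizes the new raw curve.} On $(\alpha,\gamma]$ the new raw curve is $sb$. The chord passes through $(\gamma,\gamma b)$, since contact at $\gamma$ gives $r(\gamma)=\gamma U_r(\gamma)$. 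It also satisfies $\ell(\alpha)=r(\alpha)\ge\alpha b$, because $U_r(\alpha)\ge U_r(\gamma)$: $U_r=r/s$ is nonincreasing, as $r$ is concave with $r(0)=0$. So $\ell$ and $sb$ are lines agreeing at $\gamma$, with $\ell\ge sb$ at $\alpha$, hence $\ell\ge sb$ throughout. Outside the block the raw curve is unchanged and below $r$.
\item \emph{$\widetilde r$ is least.} Any concave majorant of the new raw curve dominates it at the contacts $\alpha$ and $\gamma$; the raw value there is unchanged, using the left-limit convention at $\alpha$. Outside $[\alpha,\gamma]$ it dominates the old raw curve. I would argue that it dominates $r$ outside the block exactly as in the proof of \Cref{lem:fixed-envelope-fillings}: take the pointwise maximum with the chord-fill, or note that the old raw curve on $(\alpha,\gamma)$ lies below $r$, which is itself dominated by a concave function dominating $r$ at the endpoints and outside. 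On $[\alpha,\gamma]$, concavity then forces it to lie above the chord.
\end{itemize}
This majorization-outside-the-block step is the delicate point. The cleanest route is to show that the function $g$, equal to the new majorant outside $(\alpha,\gamma)$ and $\max(g,r)$ inside, behaves correctly, or more simply to observe that $g$ dominates the old raw curve on $(\alpha,\gamma)$ as well. The latter holds because the old raw curve there is at most $r$, and $r\le$ the line through $g$'s values at $\alpha,\gamma$ only when $g$ is large. That last claim is not automatic, so I would instead prove directly that $\widetilde r$ is the envelope by exhibiting contacts. $\widetilde r$ is concave, majorizes the new raw curve, and every point where it is strictly above the raw curve lies in a linear piece with contact endpoints: either in the chord piece (endpoints $\alpha,\gamma$) or in an old face of $r$ (whose endpoint contacts are untouched, or are $\alpha,\gamma$). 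A concave majorant touching the raw curve at the endpoints of each of its linear pieces, and equal to it elsewhere, is the least concave majorant.

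\textbf{Step 2: payoffs.} The ironed virtual value on $(\alpha,\gamma)$ is $\bar z$, the slope magnitude in the paper's sign convention, so quality there is $q(\bar z)$. This equals zero when $\bar z\le0$, by the stated convention. Substituting into \eqref{eq:seller-profit}--\eqref{eq:consumer-surplus} with posterior value $b$ gives the block integrand $F_{b,\beta}(\bar z)$ over a set of length $\gamma-\alpha$.
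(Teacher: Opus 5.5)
Your proof is correct and, once you settle on the ``exhibit contacts'' argument for minimality, it is essentially the paper's: the paper also notes that $\alpha$ and $\gamma$ are contacts, so no component of $\{r>f_v\}$ crosses the block. It further notes that raw values at all remaining contacts are unchanged, including the left limit at $\alpha$ and $\gamma b=r(\gamma)$, so any concave majorant dominates $r$ outside the block and the chord on it. The only tightening worth making is to describe the outside pieces as connected components of $\{r>f_v\}$ rather than faces of $r$, which avoids truncating a face at $\alpha$ or $\gamma$, and to drop the abandoned first attempt at minimality.
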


\begin{proof}
Contact at $\gamma$ gives $b=v(\gamma-)$.
Since $v$ is nonincreasing, posterior values before
$\gamma$ are at least $b$, while those after it are
at most $b$. Replacing the values on $(\alpha,\gamma)$
by $b$ therefore preserves monotonicity and reduces
the mean by $D$.

The new raw revenue curve is $sb$ on the pooled block.
Both endpoint contacts survive since the left limit at
$\alpha$ is unchanged, and the new raw revenue curve at
$\gamma$ is $\gamma b=r(\gamma)$.
Let $L$ be the chord joining these contacts.
Since $r(\alpha)=\alpha v(\alpha-)\ge\alpha b$,
its slope satisfies $\bar z\le b$, and
$L(s)-sb=(\gamma-s)(b-\bar z)\ge0$ for $s\in[\alpha,\gamma]$.
Thus the chord lies above the new raw curve.
Concavity of $r$ places the chord's slope between
the slopes on either side of the block. Replacing
$r$ by $L$ on this block therefore gives a concave
majorant of the new raw curve.

To establish that it is the least concave majorant,
consider any concave majorant of the new raw curve.
The preserved endpoint contacts force it to dominate
$L$ on $[\alpha,\gamma]$. Outside the block, all
original contacts remain unchanged. Each connected
component of $\{r>f_v\}$ has contact endpoints, and
$r$ is the chord between them. Since $\alpha$ and $\gamma$ are contacts, no noncontact interval crosses either endpoint of the block. Outside the block, the raw revenue values at both ends of each such interval remain unchanged. Any concave majorant of the new raw revenue curve must therefore lie above the chord joining those endpoints. Hence every new concave
majorant lies above $L$ on the block and above $r$
elsewhere, proving the claimed formula.

On the pooled block, the posterior value is $b$
and the virtual value is $\bar z$ almost everywhere.
The seller therefore supplies quality $q(\bar z)$,
and integrating $F_{b,\beta}(\bar z)$ over the block
gives the stated objective contribution. This also
covers $\bar z\le0$, when the pooled buyers are excluded.
\end{proof}

\begin{lemma}[The cost of pooling lower types]
\label{lem:binary-chord-cost}
Maintain \Cref{ass:curvature}. In the setting of
\Cref{lem:exact-chord-collapse}, write $z(s)=r'(s)$
for the original virtual values and let $\widetilde v$
be the pooled profile.
Suppose $0<z(s)<b$ and $z(s)\le z_A$ almost everywhere
on $(\alpha,\gamma)$. If $v(s)>b$ and $z(s)<z_A$ hold
together on a set of positive measure, then
$J_\beta(v)-J_\beta(\widetilde v)<q(z_A)D$.
If instead the original virtual values are nonpositive
almost everywhere on the block, its contributions to
consumer surplus and profit are zero both before and
after pooling.
\end{lemma}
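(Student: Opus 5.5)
The plan is to compare the block's contribution to $J_\beta$ before and after pooling, since \Cref{lem:exact-chord-collapse} shows nothing else changes. By that lemma, the concavified revenue curve after pooling equals $r$ outside $[\alpha,\gamma]$. Hence virtual values and posterior values coincide almost everywhere outside the block, and so do the integrands $F_{v(s),\beta}(z(s))$. Writing both objectives in the form $\int_0^1 F_{v(s),\beta}(z(s))\,\dd s$ (from \eqref{eq:seller-profit}--\eqref{eq:consumer-surplus} in reversed ranks), we get
\[
 J_\beta(v)-J_\beta(\widetilde v)
 =\int_\alpha^\gamma F_{v(s),\beta}(z(s))\,\dd s-(\gamma-\alpha)F_{b,\beta}(\bar z).
\]
Here $\bar z$ is the chord slope, which equals the average $\frac{1}{\gamma-\alpha}\int_\alpha^\gamma z(s)\,\dd s$ because $r$ is absolutely continuous.

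I would split this difference into a posterior-value term and a virtual-value term:
\[
 \int_\alpha^\gamma\bigl[F_{v(s),\beta}(z(s))-F_{b,\beta}(z(s))\bigr]\dd s
 \;+\;\int_\alpha^\gamma\bigl[F_{b,\beta}(z(s))-F_{b,\beta}(\bar z)\bigr]\dd s .
\]
For the first term, note $F_{y,\beta}(z)-F_{b,\beta}(z)=(y-b)q(z)$. Since $v\ge b$ on the block and $q$ is nondecreasing with $z\le z_A$, this term is at most $q(z_A)\int_\alpha^\gamma(v-b)\,\dd s=q(z_A)D$. The inequality is strict, because on the positive-measure set where $v>b$ and $z<z_A$ we have $0<z<z_A\le1$, and $q=(c')^{-1}$ is strictly increasing there. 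This gives $(v-b)q(z)<(v-b)q(z_A)$ on that set.

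For the second term, I would use the equivalent form \eqref{eq:rent-curvature} of \Cref{ass:curvature}. It gives $F_{b,\beta}''(z)<0$ for $0<z<b$, so $F_{b,\beta}$ is concave on $[0,b]$, using continuity at the endpoints. All $z(s)$ lie in $(0,b)$, and hence so does their average $\bar z$. Jensen's inequality then gives $\int_\alpha^\gamma F_{b,\beta}(z(s))\,\dd s\le(\gamma-\alpha)F_{b,\beta}(\bar z)$, so this term is nonpositive. Adding the two terms yields $J_\beta(v)-J_\beta(\widetilde v)<q(z_A)D$.

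For the final claim, suppose $z\le0$ almost everywhere on the block. Then $\bar z\le0$ as well. Since $q(z)=\pi(z)=0$ for $z\le0$, both the original and the pooled integrands vanish, for consumer surplus and for profit separately.

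The only delicate point is justifying that $\bar z$ is exactly the average of the original slopes and that the concavity in \eqref{eq:rent-curvature} applies on the closed range. I expect this to be routine: it needs only absolute continuity of $r$ and continuity of $F_{b,\beta}$ at $z=0$, which holds because $q$ is continuous with $q(0)=0$.
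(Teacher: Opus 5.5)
Your proposal is correct and follows essentially the same route as the paper. The paper makes the same split: lowering posterior values to $b$ at fixed virtual values costs $\int_\alpha^\gamma(v-b)q(z)\,\dd s<q(z_A)D$, with strictness from $q$ strictly increasing on the positive-measure set. Jensen's inequality for $F_{b,\beta}$, concave on $(0,b)$, shows that replacing the slopes by their average $\bar z$ can only offset this loss. The nonpositive case is handled identically.
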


\begin{proof}
Absolute continuity of $r$ gives
$\bar z=(\gamma-\alpha)^{-1}
\int_\alpha^\gamma z(s)\,\dd s$.
Thus pooling replaces the original virtual values by
their average. Since $F_{b,\beta}$ is concave on $(0,b)$,
Jensen's inequality gives
$\int_\alpha^\gamma F_{b,\beta}(z(s))\,\dd s
 \le
 (\gamma-\alpha)F_{b,\beta}(\bar z)$.
By \Cref{lem:exact-chord-collapse}, the right-hand side
is the pooled block's contribution to the objective.

At each rank, lowering the posterior value from $v(s)$
to $b$ while holding the virtual value fixed reduces
the objective integrand by $(v(s)-b)q(z(s))$.
The inequality above shows that replacing the original
virtual values by their average weakly offsets this
reduction. Since posterior values and virtual values
outside the block are unchanged, we obtain
$J_\beta(v)-J_\beta(\widetilde v)
 \le
 \int_\alpha^\gamma (v(s)-b)q(z(s))\,\dd s$.
On the block, $v(s)\ge b$ and $q(z(s))\le q(z_A)$.
The product inequality is strict wherever $v(s)>b$
and $z(s)<z_A$, because $q$ is strictly increasing
over the relevant positive virtual values.
The assumed positive measure of this set therefore gives
\[
 \int_\alpha^\gamma (v(s)-b)q(z(s))\,\dd s
 <
 q(z_A)\int_\alpha^\gamma(v(s)-b)\,\dd s
 =q(z_A)D.
\]

If the original virtual values are instead nonpositive,
their average $\bar z$ is also nonpositive. Quality
and profit are then zero throughout the block both
before and after pooling, so consumer surplus is
also zero.
\end{proof}

\begin{lemma}[No separation below a nonterminal trading atom]
\label{thm:no-reentry-rigorous}
At a binary optimum, let $I_A=[\ell_A,h_A]$ be a
positive-intercept trading face with slope $z_A>0$.
Then the set
$\{s\in(h_A,1):v(s)=U_r(s)\}$ has measure zero.
Thus, no separating region of positive measure can lie
below the posterior atom on $I_A$.
\end{lemma}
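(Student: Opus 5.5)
Argue by contradiction. Suppose the contact set $E=\{s\in(h_A,1):v(s)=U_r(s)\}$ has positive measure. Recall that in reversed ranks $s=1-u$, larger $s$ means lower posterior values. So $E$ is a set of separating buyers below the atom $b_A=U_r(h_A)$ on the face $I_A$. By \Cref{lem:active-face-purity}, $v=b_A$ almost everywhere on $I_A$, so $d_{I_A}=\kappa_{I_A}>0$. The face therefore carries positive reversible capacity: raising its posterior values toward $U_r$ preserves $r$ and profit, and raises consumer surplus at rate exactly $q(z_A)$ per unit of mean (\Cref{prop:mean-accounting}). The plan is to pool a small block of lower separating types, which lowers the mean by some $D>0$, and to restore the mean on $I_A$. \Cref{lem:binary-chord-cost} then gives a net strict gain.

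First I select the block. By concavity of $r$, the slopes satisfy $z(s)\le z_A$ for $s>h_A$. Suppose $z(s)=z_A$ on a set of positive measure beyond $h_A$. Then $r$ would be affine with slope $z_A$ there, and this piece would be adjacent to, or part of, the maximal face $I_A$, a contradiction. Hence $z<z_A$ almost everywhere on $(h_A,1)$. On $E$ (up to null sets), $U_r$ is strictly decreasing wherever $r$ is strictly concave, since $U_r(s)=r(s)/s$ has derivative $(z(s)-U_r(s))/s$. I pick a density point $s_0$ of $E$ where $z(s_0)>0$, which is a trading type. I then choose contact ranks $\alpha<\gamma$ near $s_0$, both in $E$, such that $v>b:=U_r(\gamma)$ on a positive-measure subset of $(\alpha,\gamma)$ and $0<z<b$ there.

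The condition $z<b$ holds because separating types with positive quality have $z(s)<U_r(s)$. The one case where $z=U_r$ is a terminal face with intercept zero, and that face lies at $s$ near $0$, so it is not below $h_A$. The case in which all types below $h_A$ have $z\le0$ is handled separately: there the buyers are excluded, and the statement concerns only separation of positive measure. For that case I would use the second part of \Cref{lem:binary-chord-cost}. Pooling costs nothing, and restoring the mean at $q(z_A)>0$ gives a strict gain. So any block with positive $D$ works, including a block of excluded separating types.

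Pooling $(\alpha,\gamma)$ at $b$ (\Cref{lem:exact-chord-collapse}) changes $r$ only on that block, where it becomes a chord. Thus the face $I_A$ and its slope $z_A$ are untouched, because the block lies strictly beyond $h_A$, and $I_A$ still has unused capacity $D'\ge D$ once we restore. I take $D$ small: the block is short, so $D\le\kappa_{I_A}$. I then raise values on $I_A$ by exactly $D$, which is feasible by the converse in \Cref{prop:mean-accounting} applied to the new curve. The prior mean is restored, so the deviation lies in $\mathcal G_{a,m}$. The payoff change is $q(z_A)D-[J_\beta(v)-J_\beta(\widetilde v)]>0$ by \Cref{lem:binary-chord-cost}, which contradicts optimality.

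The main obstacle is the geometric step: guaranteeing that the block endpoints can be chosen as contacts with $\alpha>h_A$, that the strict set $\{v>b,\ z<z_A\}$ has positive measure, and that $z_A$ remains the slope of a face with available capacity after pooling. I would handle the first two using density points of $E$ together with strict monotonicity of $U_r$ on contact regions. For the third, I would note that pooling modifies $r$ only on $[\alpha,\gamma]$, which is disjoint from $I_A$.
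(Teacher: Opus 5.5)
Your proposal is correct and is essentially the paper's proof: you use purity of $I_A$ (so $d_{I_A}=\kappa_{I_A}>0$), pool a short contact block of separating types below it at $b=U_r(\gamma)$ via \Cref{lem:exact-chord-collapse}, undo $D<\kappa_{I_A}$ of the reduction on $I_A$ at gain $q(z_A)D$, and bound the pooling loss with \Cref{lem:binary-chord-cost}. The one point to tighten is the case split. It should turn on whether $E\cap\{z>0\}$ has positive measure, and the block should be chosen near a differentiability point so that $z$ has a single sign on it; if $E\cap\{z=0\}$ has positive measure, the block goes inside the slope-zero face. This matters because \Cref{lem:binary-chord-cost} does not cover blocks that straddle $z=0$. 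The paper sidesteps the issue by first showing that every lower face is at capacity, so face interiors and $z=0$ ranks can be discarded from $E$.
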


\begin{proof}
A zero-intercept face must extend to $s=0$. Its affine
extension is a supporting line through the origin;
concavity and $r(0)=0$ also place $r$ above this line
between zero and the face. Hence $r$ coincides with
the line throughout that interval. Every face below
$I_A$ therefore has positive intercept.
By \Cref{lem:active-face-purity}, every such face with
positive slope satisfies $v=b_I$ almost everywhere,
as does $I_A$. In particular,
$d_{I_A}=\kappa_{I_A}>0$.
A lower face with nonpositive slope must also satisfy
$d_I=\kappa_I$. Otherwise, increasing $d_I$ and
decreasing $d_{I_A}$ by the same small amount would
preserve the mean and profit while increasing consumer
surplus: posterior values would fall where quality is
zero and rise where quality is $q(z_A)>0$.
Consequently, on the interior of every lower face,
$v=b_I<U_r$ almost everywhere.

Suppose now that the set in the statement has positive
measure. Removing face interiors, discontinuities of
$v$, and nondifferentiability points of $r$ leaves a
positive-measure set $E$ of contacts. We may also
remove ranks with $z=0$: a positive-measure zero level
of the nonincreasing slope lies in an affine interval,
whose interior has already been removed.
Choose a left density point $\gamma\in E$, so every
sufficiently short interval immediately to its left
contains a positive-measure part of $E$.

Set $b=U_r(\gamma)$. We have $z(\gamma)<z_A$:
equality would make $r$ linear between $I_A$ and
$\gamma$, contradicting the maximality of $I_A$.
Also $b>z(\gamma)$. Equality would make the supporting
line at $\gamma$ pass through the origin and force
$r$ to be linear on $[0,\gamma]$, contradicting
the positive intercept of $I_A$.

Since $r$ is concave and differentiable at $\gamma$,
its slopes approach $z(\gamma)$ from the left.
If $z(\gamma)>0$, a sufficiently short left
neighborhood therefore satisfies
$0<z(s)<\min\{z_A,b\}$ almost everywhere.
If $z(\gamma)<0$, the slopes in that neighborhood
are negative. In either case, after shortening the
neighborhood if necessary, $U_r(s)-z(s)$ is bounded
away from zero. Absolute continuity and
$U_r'(s)=[z(s)-U_r(s)]/s$ then imply that $U_r$
is strictly decreasing there.

Choose another contact $\alpha\in E$ in this
neighborhood, sufficiently close to $\gamma$ that
$\gamma-\alpha<\kappa_{I_A}$.
The density-point property gives
$|E\cap(\alpha,\gamma)|>0$. On this set,
$v(s)=U_r(s)>U_r(\gamma)=b$.
Pooling the block at $b$ therefore reduces the mean by
$0<D=\int_\alpha^\gamma(v(s)-b)\,\dd s
 \le\gamma-\alpha<\kappa_{I_A}$.

By \Cref{lem:exact-chord-collapse}, pooling leaves the
concavified revenue curve unchanged on $I_A$.
We can therefore raise posterior values on $I_A$
while preserving the new curve, reducing $d_{I_A}$
from $\kappa_{I_A}$ to $\kappa_{I_A}-D$.
By \Cref{lem:fixed-envelope-fillings,prop:mean-accounting},
this restores the original mean and increases consumer
surplus by $q(z_A)D$, without changing profit.

If the pooled block has positive slopes,
\Cref{lem:binary-chord-cost} bounds its objective loss
strictly below $q(z_A)D$. If its slopes are negative,
pooling leaves its contribution to the objective at
zero. The combined change therefore strictly raises
$J_\beta$. Both changes preserve the support bounds
$[a,1]$, and together they preserve the prior mean.
They are thus feasible, contradicting
optimality.
\end{proof}

\begin{lemma}[Immediate successors and shared contacts]
\label{lem:binary-successor}
Let $I_A=[\ell_A,h_A]$ be a positive-intercept trading
face at a binary optimum. If a trading face lies below
$I_A$ in posterior value, then there is a trading face
$I_B=[h_A,h_B]$ with $h_B>h_A$.
Thus, $I_A$ has an immediate lower trading successor,
and the two faces meet at the contact $h_A$.
\end{lemma}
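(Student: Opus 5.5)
The plan is to work in the reversed ranks $s=1-u$ of Appendix~\ref{app:binary-reconstruction} and to combine \Cref{thm:no-reentry-rigorous} with the pooling comparison used in its proof. Let $I=[\ell_I,h_I]$ be a trading face lying below $I_A$, so that $\ell_I\ge h_A$ and its slope satisfies $z_I>0$. By \Cref{thm:no-reentry-rigorous}, the set of contacts in $(h_A,1)$ with $v=U_r$ has measure zero. Ranks outside face interiors are contacts where $v$ agrees with $U_r$ almost everywhere, by \eqref{eq:face-value-bounds}. Hence almost every rank in $(h_A,1)$ lies in the interior of some face. There are at most countably many faces, and they have disjoint interiors. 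So for a right neighbourhood of $h_A$ there are only two possibilities:
\begin{enumerate}
 \item[(a)] a single face $[h_A,h_B]$ with $h_B>h_A$ starts at $h_A$;
 \item[(b)] infinitely many faces accumulate at $h_A$ from the right.
\end{enumerate}

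\emph{Case (a).} Here I only need to check that the face is trading. Concavity makes $r'$ nonincreasing in $s$. Since $I$ lies weakly to the right of $[h_A,h_B]$, the slope on $[h_A,h_B]$ is at least $z_I>0$, so that face receives positive quality. Maximality of $I_A$ also gives that slope strictly below $z_A$, which is consistent with the statement.

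\emph{Case (b).} I would rule this out with the argument of \Cref{thm:no-reentry-rigorous}, applied to a block made of whole faces rather than contacts. Each face in the accumulating family lies below $I_A$, hence has positive intercept. If its slope is positive, it is pure by \Cref{lem:active-face-purity}. If its slope is nonpositive, it is at capacity by the reallocation argument in the proof of \Cref{thm:no-reentry-rigorous}. Either way $v=b_J$ on each such face $J$. Choose $\alpha=h_A$ and a face endpoint $\gamma>h_A$ close to $h_A$; both are contacts. Near $h_A$ the slopes lie strictly below $z_A$ (maximality of $I_A$) and are bounded away from $U_r$ (positive intercept of $I_A$), exactly as in the earlier proof. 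Then $U_r$ is strictly decreasing on $(h_A,\gamma)$. Because distinct faces in the block have distinct values $b_J=U_r(h_J)>U_r(\gamma)=b$, the set where $v>b$ has positive measure. Pooling the block at $b$ via \Cref{lem:exact-chord-collapse} therefore lowers the mean by some $D\in(0,\kappa_{I_A})$, once $\gamma$ is close enough to $h_A$. Pooling also leaves $r$ unchanged on $I_A$.

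\Cref{lem:binary-chord-cost} then bounds the loss strictly below $q(z_A)D$. It applies whether the block's slopes are positive (with $z<z_A$ and $v>b$ on a positive-measure set) or nonpositive (zero contribution before and after). Raising posterior values on the pure face $I_A$ by $D$, using \Cref{lem:fixed-envelope-fillings,prop:mean-accounting}, restores the mean and gains exactly $q(z_A)D$ with profit unchanged. This is a strict feasible improvement, contradicting optimality.

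The main obstacle is the accumulation case. One must verify that all faces in the block are pure, so that the pooling loss is controlled only through Jensen's inequality in \Cref{lem:binary-chord-cost}. One must also confirm the strictness conditions: $D>0$, and the simultaneous positive-measure conditions $v>b$ and $z<z_A$. I would secure these through strict monotonicity of $U_r$ near $h_A$, as established in the proof of \Cref{thm:no-reentry-rigorous}.
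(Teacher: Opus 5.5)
Your proposal is correct and follows essentially the paper's argument. The no-reentry lemma leaves two possibilities: a face beginning at $h_A$, which concavity makes trading because its slope is at least that of the given lower trading face, or trading faces accumulating at $h_A$. You rule out accumulation as the paper does, by pooling a short block $(h_A,\gamma)$ that ends at a face endpoint and restoring the lost mean on the pure face $I_A$ at unit gain $q(z_A)$.

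One simplification: purity of the block's faces is not needed for the strict bound. Any face $J$ with $h_J<\gamma$ already satisfies $v\ge U_r(h_J)>U_r(\gamma)$ on its interior, and near $h_A$ every slope is positive because it is at least the lower trading face's slope.
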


\begin{proof}
Suppose a trading face lies below $I_A$, but none
begins at $h_A$. Fix one of these lower trading faces.
Concavity of $r$ implies that slopes are positive
between $I_A$ and this face. Every affine face in
this region therefore trades, and has positive
intercept because a zero-intercept face must extend
to $s=0$. If the left endpoints of these trading faces were
bounded away from $h_A$, the intervening interval
would contain no face interiors. Almost every rank
there would then satisfy $v=U_r$, contradicting
\Cref{thm:no-reentry-rigorous}.
Trading faces must therefore accumulate at $h_A$
from the right.

The positive intercept of $I_A$ gives
$U_r(h_A)>z_A$ and $\kappa_{I_A}>0$.
By continuity, choose $\delta>0$ so that
$[h_A,h_A+\delta]$ lies before the fixed lower face,
$\delta<\kappa_{I_A}$, and $U_r>z_A$ throughout
this interval. Here $0<z(s)\le z_A$ almost everywhere.
Since $U_r$ is absolutely continuous and
$U_r'(s)=[z(s)-U_r(s)]/s<0$ almost everywhere,
it is strictly decreasing on this interval.

The accumulating faces have disjoint interiors,
so we can choose two of them,
$J_1=[\ell_1,h_1]$ and $J_2=[\ell_2,h_2]$,
whose endpoints satisfy
$h_A<\ell_1<h_1\le\ell_2<h_2<h_A+\delta$.
Both $h_A$ and $h_2$ are contacts. Pool the block
$(h_A,h_2)$ at $b=U_r(h_2)>z_A$.
Its original virtual values satisfy
$0<z(s)\le z_A<b$ almost everywhere.

By \Cref{lem:active-face-purity}, buyers on $J_1$
share the posterior value $U_r(h_1)$, which is
strictly greater than $b$ because $U_r$ is strictly
decreasing. The slope on $J_1$ is also strictly
below $z_A$: two distinct maximal faces of a
concave curve cannot have the same slope.
Thus the conditions for the strict bound in
\Cref{lem:binary-chord-cost} hold on $J_1$.
The mean reduction satisfies
\[
 0<D=\int_{h_A}^{h_2}(v(s)-b)\,\dd s
 \le h_2-h_A<\delta<\kappa_{I_A},
\]
and the pooling loss is strictly less than
$q(z_A)D$.

The pooled block's slope is the average of its
original slopes and is strictly below $z_A$.
By \Cref{lem:exact-chord-collapse}, $I_A$ therefore
remains a face of the new revenue curve, with its
slope and capacity unchanged.
Moreover, \Cref{lem:active-face-purity} gives
$d_{I_A}=\kappa_{I_A}$.
We can consequently raise posterior values on $I_A$
by reducing $d_{I_A}$ to $\kappa_{I_A}-D$.
By \Cref{lem:fixed-envelope-fillings,prop:mean-accounting},
this preserves the new curve, restores the original
mean, and increases consumer surplus by $q(z_A)D$
without changing profit.

This gain exceeds the loss from pooling. Both changes
stay within $[a,1]$ and together preserve the prior
mean, so they give a feasible strict improvement, contradicting optimality.
\end{proof}

\begin{lemma}[Eliminating two adjacent posterior atoms]
\label{lem:binary-adjacent-elimination}
Suppose $r$ has adjacent positive-intercept faces
$[\ell,h]$ and $[h,k]$, with slopes $z>Z\ge0$.
The posterior value is constant on each face, equal
to $v$ and $w$, respectively, with $v>w$.
There is a modification confined to $[\ell,k]$
that preserves the mean and the support bounds $[a,1]$,
strictly increases $J_\beta$, and leaves the concavified
revenue curve unchanged outside this interval.
\end{lemma}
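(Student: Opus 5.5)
The plan is to carry out the modification sketched in Section~\ref{subsec:mean-adjustment} in reversed ranks: partly regularize the upper atom and pool the rest of it with the lower atom, choosing the split so that the mean is unchanged. Fix a cut $\gamma\in[\ell,h]$ and define $v_\gamma$ on $[\ell,k]$ by
\[
 v_\gamma(s)=U_r(s)\ \text{on }(\ell,\gamma),\qquad v_\gamma(s)=w\ \text{on }(\gamma,k),
\]
leaving the profile unchanged outside $[\ell,k]$. Since $U_r$ is nonincreasing and $U_r\ge U_r(h)=v>w$ on $[\ell,h]$, the profile $v_\gamma$ is nonincreasing. It takes values in $[a,1]$ because $w\ge a$ and $U_r\le1$.

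The first step is to identify the new concavified revenue curve. The ranks $\ell$, $\gamma$ and $k$ are contacts: $\ell$ and $k$ are endpoints of the old faces, and at $\gamma$ the raw revenue is $\gamma U_r(\gamma)=r(\gamma)$. By \Cref{lem:fixed-envelope-fillings}, the curve is therefore unchanged on $[\ell,\gamma]$. On $(\gamma,k)$ the situation is exactly that of \Cref{lem:exact-chord-collapse}, pooling at $b=U_r(k)=w$. Hence the new curve is the chord on $[\gamma,k]$ and equals $r$ elsewhere, in particular outside $[\ell,k]$. The chord's slope is
\[
 \bar z_\gamma=\frac{(h-\gamma)z+(k-h)Z}{k-\gamma},
\]
and $Z\le\bar z_\gamma\le z$. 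Moreover $\bar z_\gamma<w$, because $r(\gamma)>\gamma w$ while $r(k)=kw$.

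The second step is to choose $\gamma$ so that the mean is preserved. The change in the integral of posterior values is
\[
 \Delta(\gamma)=\int_\ell^\gamma[U_r(s)-v]\,\dd s-(h-\gamma)(v-w).
\]
This function is continuous, with $\Delta(\ell)=-(h-\ell)(v-w)<0$ and $\Delta(h)=\kappa_{[\ell,h]}>0$, the latter by positive intercept. Pick $\gamma^*\in(\ell,h)$ with $\Delta(\gamma^*)=0$, so that the modification preserves the mean and is feasible for the binary prior.

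The third step is the payoff comparison. By \Cref{prop:mean-accounting}, regularizing on $(\ell,\gamma^*)$ raises $J_\beta$ by $q(z)\int_\ell^{\gamma^*}(U_r-v)\,\dd s=q(z)(h-\gamma^*)(v-w)$. On $(\gamma^*,h)$, the identity $F_{v,\beta}(z)=F_{w,\beta}(z)+(v-w)q(z)$ shows that lowering the values from $v$ to $w$, holding the virtual value at $z$, costs exactly $(h-\gamma^*)(v-w)q(z)$. These two terms cancel, so the net change reduces to the pooling effect
\[
 (k-\gamma^*)F_{w,\beta}(\bar z_{\gamma^*})-(h-\gamma^*)F_{w,\beta}(z)-(k-h)F_{w,\beta}(Z).
\]
Since $\bar z_{\gamma^*}$ is the corresponding weighted average of $z$ and $Z$, this is a Jensen gap of $F_{w,\beta}$. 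It is strictly positive when $F_{w,\beta}$ is strictly concave on $[Z,z]$, which holds on $(0,w)$ by \eqref{eq:rent-curvature}, and $Z<z$. If $Z=0$, use that $F_{w,\beta}$ vanishes on $z\le0$ and is strictly concave and continuous on $[0,w]$.

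The main obstacle is the case $z\ge w$, where the slope of the upper face may lie outside $(0,w)$ and \Cref{ass:curvature} gives no direct concavity of $F_{w,\beta}$ there. I would first try to rule out this case. Contact at $h$ gives $U_r(h)=v$, so $z<v$; I would then try to show $z<w$ from the shape of $r$ on $[h,k]$. If that fails, I would replace the comparison at $z$ by the chord-cost argument of \Cref{lem:binary-chord-cost}. That argument bounds the pooling loss by $\int(v_\gamma-w)q(z(s))\,\dd s$ using only concavity of $F_{w,\beta}$ below $w$, applied to a pooled block whose average slope $\bar z_\gamma$ lies below $w$. Strictness would then come from $Z<z$ together with strict monotonicity of $q$ on positive values.
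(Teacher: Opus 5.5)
Your construction, the choice of $\gamma^*$, and the reduction of the payoff change to the Jensen gap
\[
(k-\gamma^*)\bigl[F_{w,\beta}(\bar z_{\gamma^*})-\rho F_{w,\beta}(z)-(1-\rho)F_{w,\beta}(Z)\bigr],\qquad \rho=\frac{h-\gamma^*}{k-\gamma^*},
\]
are exactly the paper's. You also correctly show $Z<\bar z_{\gamma^*}<w$. The gap is the case $z>w$, which you leave open, and it cannot be ruled out. Let $A$ be the intercept of the upper face. Then $w=r(k)/k=[A+zh+Z(k-h)]/k$, so $z-w=[(z-Z)(k-h)-A]/k$. This is positive whenever $A<(z-Z)(k-h)$. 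Since $A=r(\ell)-z\ell\le\ell$, that happens, for instance, when little mass lies above the atom at $v$.

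On $(w,z]$, the second derivative $F_{w,\beta}''(t)=(w-t)q''(t)-(2-\beta)q'(t)$ is not controlled by \Cref{ass:curvature}. That assumption restricts only virtual values below the posterior value, and it holds automatically when $c'''\ge0$. The expression can be positive when $q''<0$ is large relative to $q'$, so no Jensen argument over $[Z,z]$ is available.

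Your fallback does not repair this. \Cref{lem:binary-chord-cost} assumes $z(s)<b=w$ throughout the pooled block, which fails on $(\gamma^*,h)$. Its source of strictness is the set of ranks with $v(s)>b$ and $z(s)<z_A$, and that set is empty here. The only pooled ranks with $v(s)>w$ are those on $(\gamma^*,h)$, whose slope is $z=z_A$ itself. Its bound would therefore equal the regularization gain $q(z)(h-\gamma^*)(v-w)$ exactly. Monotonicity of $q$ contributes nothing on $(h,k)$, where $v(s)=w$.

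The missing step is a chord comparison that uses concavity only below $w$. For $w<t\le z$, monotonicity of $q$ gives $F_{w,\beta}'(t)=(w-t)q'(t)-(1-\beta)q(t)<-(1-\beta)q(w)=F_{w,\beta}'(w)$. Hence the chord of $F_{w,\beta}$ over $[w,z]$ has slope strictly below $F_{w,\beta}'(w)$. Strict concavity on $[Z,w]$ makes the chord over $[Z,w]$ strictly steeper than $F_{w,\beta}'(w)$. The slope of the chord over $[Z,z]$ is a weighted average of these two slopes, with positive weights since $Z<w<z$. Both chords start at $(Z,F_{w,\beta}(Z))$, so the $[Z,z]$ chord lies strictly below the $[Z,w]$ chord to the right of $Z$. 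Since $\bar z_{\gamma^*}\in(Z,w)$, strict concavity places $F_{w,\beta}(\bar z_{\gamma^*})$ above the $[Z,w]$ chord, and hence above the $[Z,z]$ chord. That is precisely positivity of your Jensen gap. With this step added, and the case $z\le w$ handled as you do, your argument coincides with the paper's proof.
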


\begin{proof}
We raise posterior values on part of the first face
and pool the remaining part with the second atom.
The cutoff between these changes is chosen to preserve
the mean.

On the first face, $U_r(s)=z+h(v-z)/s$.
For $\alpha\in(\ell,h)$, replace $v$ by $U_r$
on $(\ell,\alpha)$, leaving other posterior values
unchanged. By \Cref{lem:fixed-envelope-fillings},
this preserves $r$. It also makes $\alpha$ a contact,
since the modified raw revenue there is
$\alpha U_r(\alpha)=r(\alpha)$.
The mean increases by
$X(\alpha)=\int_\ell^\alpha[U_r(s)-v]\,\dd s$.
Quality on these ranks remains $q(z)$ and profit
is unchanged, so the objective increases by
$q(z)X(\alpha)$.

Next, pool $(\alpha,k)$ at $w$.
Only the remaining ranks of the first atom change
posterior value, so this reduces the mean by
$(h-\alpha)(v-w)$.
Since the first face has positive intercept,
$U_r(s)>v$ for $\ell<s<h$.
Thus $X$ increases continuously and strictly from
$X(\ell)=0$ to $X(h)>0$, while
$(h-\alpha)(v-w)$ decreases strictly to zero.
There is therefore a unique $\alpha_*\in(\ell,h)$
satisfying
$X(\alpha_*)=(h-\alpha_*)(v-w)$.
The two changes together preserve the mean.

Write $a_1=h-\alpha_*$, $a_0=k-h$, and
$\rho=a_1/(a_1+a_0)$.
By \Cref{lem:exact-chord-collapse}, the pooled block
has constant virtual value
$\widehat z=\rho z+(1-\rho)Z$.
Here $Z<\widehat z$ because $0<\rho<1$ and $z>Z$.
Also $\widehat z<w$: the chord ends at $r(k)=kw$
and starts at $r(\alpha_*)>\alpha_*w$.

The first change gains $q(z)X(\alpha_*)$, while
pooling replaces the contributions of the remaining
two blocks by $(a_1+a_0)F_{w,\beta}(\widehat z)$.
Using
$F_{v,\beta}(z)=F_{w,\beta}(z)+(v-w)q(z)$
and $X(\alpha_*)=a_1(v-w)$, we obtain
\[
\begin{aligned}
 \Delta J_\beta
 &=q(z)X(\alpha_*)
   +(a_1+a_0)F_{w,\beta}(\widehat z)
 -a_1F_{v,\beta}(z)-a_0F_{w,\beta}(Z)\\
 &=(a_1+a_0)
 \bigl\{
 F_{w,\beta}(\widehat z)
 -\rho F_{w,\beta}(z)
 -(1-\rho)F_{w,\beta}(Z)
 \bigr\}.
\end{aligned}
\]

To show that this gain is positive, write
$F=F_{w,\beta}$.
If $z\le w$, \Cref{ass:curvature} and continuity
at the endpoints make $F$ strictly concave on
$[0,w]$. Since $z>Z$ and $0<\rho<1$,
strict Jensen's inequality gives $\Delta J_\beta>0$,
including when $Z=0$ or $z=w$.

If $z>w$, compare the chords on either side of $w$.
Strict concavity below $w$ makes the chord from
$Z$ to $w$ steeper than $F'(w)$.
For $w<t\le z$, however,
\[F'(t)
 =(w-t)q'(t)-(1-\beta)q(t)
 <-(1-\beta)q(w)=F'(w).\]
Integrating over $(w,z)$ shows that the chord from
$w$ to $z$ has slope strictly below $F'(w)$.
The slope of the chord from $Z$ to $z$ is a weighted
average of these two chord slopes, and is therefore
smaller than the slope from $Z$ to $w$.
Since both chords start at $(Z,F(Z))$, the former
lies below the latter on $(Z,w]$.
At $\widehat z\in(Z,w)$, strict concavity places
$F(\widehat z)$ above the chord from $Z$ to $w$,
and hence above the chord from $Z$ to $z$.
Thus $\Delta J_\beta>0$ in this case as well,
without requiring concavity above $w$.

Both modifications preserve monotonicity and the
bounds $[a,1]$, and the choice of $\alpha_*$
preserves the mean. The first modification leaves
$r$ unchanged everywhere; the pooling lemma changes
it only on $[\alpha_*,k]\subset[\ell,k]$.
This proves the claim.
\end{proof}

\subsubsection{Exclusion and a nondegenerate regular upper component}
\label{app:binary-continuation-existence}

The optimal value is strictly positive. Choose
$0<\epsilon<\min\{m-a,1-m,m/3\}$ and put equal
probabilities on $m-\epsilon$ and $m+\epsilon$.
This distribution has mean $m$ and support in $[a,1]$,
so it is feasible. Its lower virtual value is
$m-3\epsilon>0$. The higher type receives rent
$2\epsilon q(m-3\epsilon)$, giving consumer surplus
$\epsilon q(m-3\epsilon)>0$. Since profit is
nonnegative, this policy gives $J_\beta>0$.

\begin{lemma}[The posterior value of the excluded group]
\label{lem:binary-exclusion-floor}
Under \Cref{ass:cost}, suppose an optimal posterior
distribution for a binary prior excludes a positive
mass of buyers. Their posterior means equal $a$
almost surely. Moreover, the bottom $B$ of the trading
support satisfies $B>a$, so the excluded atom is
separated from the trading support by a positive gap.
\end{lemma}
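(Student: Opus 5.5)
The plan is to argue in reversed ranks $s=1-u$, as in Appendix~\ref{app:binary-reconstruction}. Let $\sigma\in(0,1)$ be the mass of trading buyers, so ranks $s\in(0,\sigma)$ trade and ranks $s\in(\sigma,1)$ are excluded. This uses monotonicity of quality; $\sigma>0$ because the optimal value is strictly positive, as shown just above. Let $B=v(\sigma-)$ denote the bottom of the trading support. The first step is to show that $\sigma$ is a contact and that $r$ is nonincreasing on $[\sigma,1]$.

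\emph{Contact and slopes at $\sigma$.} Trading ranks have $r'>0$, and excluded ranks have $r'\le0$. So no face with nonzero slope straddles $\sigma$. A face of slope zero straddling $\sigma$ would give its interior ranks zero quality, so they would belong to the excluded set. Hence $\sigma$ is a contact, $r(\sigma)=\sigma B$, and $r$ is nonincreasing on $[\sigma,1]$. In particular the raw curve satisfies $f_v(1)=v(1-)\le r(\sigma)$. Writing $e:=v(1-)\ge a$ for the lowest excluded value, this reads $e\le\sigma B$.

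\emph{Scaling.} I will show that the active payoffs are insensitive to the excluded tail as long as exclusion persists. On $[0,\sigma]$, $r$ is the concave envelope of $f_v$ restricted to $[0,\sigma]$: any chord reaching beyond $\sigma$ is dominated because of the contact at $\sigma$ and the nonincreasing tail. Let $G_A$ be the conditional trading distribution, with quantile $v_A(t)=v(\sigma t)$. Then $f_v(\sigma t)=\sigma f_{v_A}(t)$, so $r(\sigma t)=\sigma r_A(t)$ and virtual values coincide by rank. Formulas \eqref{eq:seller-profit}--\eqref{eq:consumer-surplus} then give $J_\beta(G)=\sigma J_\beta(G_A)$, and $J_\beta(G_A)>0$.

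\emph{First claim.} Suppose the excluded values exceed $a$ on a set of positive measure. Then their average $\bar e$ satisfies $\bar e>a$. Let $M$ be the mean of $G_A$, and set $\sigma'$ by $\sigma'(M-a)=m-a$, so that $G'=(1-\sigma')\delta_a+\sigma' G_A$ is Bayes plausible. Mean feasibility of $G$ gives $\sigma M+(1-\sigma)\bar e=m$, and comparing this with the definition of $\sigma'$ (using $\bar e>a$) gives $\sigma'>\sigma$.

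It remains to check that the seller still excludes the lower atom under $G'$. Its raw curve on $(\sigma',1]$ is $t a\le a\le e\le\sigma B<\sigma' B=r'(\sigma')$, so by the argument above $r'$ is nonincreasing past $\sigma'$ and the contact at $\sigma'$ holds. Hence $J_\beta(G')=\sigma' J_\beta(G_A)>\sigma J_\beta(G_A)=J_\beta(G)$, contradicting optimality.

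\emph{Gap.} Now $e=a$, so the exclusion condition reads $a\le\sigma B$. If $B=a$ and $a>0$, this forces $\sigma\ge1$, which is impossible since a positive mass is excluded. If $B=a=0$, then $r(\sigma)=0$, and concavity with $r(0)=0$ and $r\ge0$ gives $r\equiv0$ on $[0,\sigma]$. Then no rank has a positive virtual value, contradicting trade. Hence $B>a$, and since excluded buyers all sit at $a$, there is a positive gap.

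The step I expect to be delicate is the scaling identity: verifying that the envelope on $[0,\sigma]$ is unaffected by the tail requires care at a possible zero-slope face ending at $\sigma$. I would first try to handle it via the contact argument of \Cref{lem:exact-chord-collapse}.
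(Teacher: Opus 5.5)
Your argument is correct and follows essentially the paper's route: contact at the trading cutoff, rescaling so that $J_\beta(G)=\sigma J_\beta(G_A)$, moving the excluded mass to $a$ with the larger share $\sigma'=(m-a)/(M-a)$, and using $a\le\sigma B<\sigma'B$ both to keep the lower atom excluded and to obtain $B>a$. The one step to tighten is the seller's response to $G'$, where you assert $r'(\sigma')=\sigma'B$ rather than prove it. As the paper does, exhibit the envelope directly: it is the rescaled active curve on $[0,\sigma']$, which has positive slopes, followed by the chord from $(\sigma',\sigma'B)$ to $(1,a)$, which has negative slope because $\sigma'B>a$, and it is forced by the contacts at $\sigma'$ and $1$. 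You should also record that $\sigma'\le1$, which holds because $\bar e\le B\le M$.
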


\begin{proof}
Positive optimal value rules out complete exclusion.
By monotonicity of quality, the trading ranks are
therefore $(0,s)$ for some $0<s<1$, up to null sets.

The cutoff $s$ must be a contact.
If $r(s)>f_v(s)$, the revenue curve would be affine
on a neighborhood of $s$. Virtual value and quality
would then be constant across the cutoff, which
cannot separate trade from exclusion.
Thus $B:=r(s)/s=v(s-)$ is the bottom of the trading
support. Virtual values are positive almost everywhere
on $(0,s)$, so $r(0)=0$ implies $r(s)>0$ and $B>0$.
They are nonpositive after $s$, giving
$a\le r(1)\le r(s)=sB<B$.
In particular, $B>a$.

Let $G_A$ denote the posterior distribution conditional
on trade, with mean $M$ and objective value
$J=J_\beta(G_A)$. Because $s$ is a contact, its
concavified revenue curve is
$r_A(t)=r(st)/s$ for $0\le t\le1$.
Its slopes are $r'(st)$, so normalization preserves
virtual values and payoffs per trading buyer.
Excluded buyers contribute zero, and hence the
original objective is $sJ>0$.

Write $\bar a$ for the mean posterior value among
excluded buyers. Then
$m=(1-s)\bar a+sM$, with
$a\le\bar a\le B\le M$.
In fact, $\bar a<M$. Equality would force both
conditional distributions to be concentrated at
$B>0$, in which case the seller would supply
$q(B)>0$ to everyone. Thus $m<M$.

Suppose $\bar a>a$. Replace the excluded distribution
by an atom at $a$ and increase the trading share:
\[
 G'=(1-s')\delta_a+s'G_A,
 \qquad
 s'=\frac{m-a}{M-a}\in(s,1).
\]
Here $s'>s$ because
$m-a=s(M-a)+(1-s)(\bar a-a)$, while $s'<1$
because $m<M$. The new distribution has mean $m$
and support in $[a,1]$, so it is feasible.

To determine the seller's response, rescale the
active curve to $s'r_A(t/s')$ on $[0,s']$ and
join its endpoint $(s',s'B)$ to $(1,a)$ by a chord.
Since $s'B>sB\ge a$, this chord has strictly negative
slope. It lies above the lower raw curve $at$ on
$(s',1)$ and joins the positive active slopes
concavely. The resulting curve is therefore a
concave majorant of the new raw revenue curve.

It is also the least majorant. On the active part,
any concave majorant must dominate the rescaled
$r_A$, by the definition of $r_A$.
On the remaining part, the raw revenue values
$s'B$ at $s'$ and $a$ at $1$ force every concave
majorant to dominate the joining chord.
This establishes the seller's response to $G'$.

The atom at $a$ is excluded, and payoffs per trading
buyer are unchanged. The new objective is therefore
$s'J>sJ$, contradicting optimality.
Hence $\bar a=a$. Since excluded posterior values
are bounded below by $a$, they must equal $a$
almost surely. Trading posterior values are at least
$B>a$, so $(a,B)$ is a nonempty gap in posterior support. On the excluded ranks, the concavified revenue curve
is the chord joining $(s,sB)$ to $(1,a)$.
This chord dominates the raw revenue curve $at$,
joins the trading curve concavely, and is forced
by the two endpoint contacts.
\end{proof}

\begin{lemma}[A contraction of the top two atoms that raises $J_\beta$]
\label{lem:top-atom-refinement}
Under \Cref{ass:cost}, fix $0\le\beta<1$.
Suppose the top two faces are adjacent and pure, with
posterior values $v<p$ and respective virtual values
$z<p$ and $p$. There is a modification of these atoms
that preserves their total probability and mean,
contracts the posterior distribution in convex order,
and strictly increases $J_\beta$.
The concavified revenue curve below the two atoms
is unchanged. The lower virtual value $z$ may be
zero or negative.
\end{lemma}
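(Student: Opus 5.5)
The plan is to work in the reversed-rank coordinates of Appendix~\ref{app:binary-reconstruction}. Write the top (terminal) face as $[0,t_1]$ with $r(s)=ps$. Write the second face as $[t_1,t_2]$ with $r(s)=A+zs$, $A>0$, and constant posterior value $v$, so that $r(t_1)=t_1p$ and $r(t_2)=t_2v$. The modification I would try is to pool a thin slice of the lower atom, the ranks $(t_1,t_1+\delta)$, together with the whole top atom at their common conditional mean
\[
\bar p_\delta=\frac{t_1p+\delta v}{t_1+\delta},
\]
leaving everything below these ranks untouched. This change is a mean-preserving contraction of the two atoms, so convex order and Bayes plausibility are immediate (via transitivity, as in \Cref{lem:face-contraction}). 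It preserves their total probability and mean. It also leaves the contact at $t_2$, and hence $r$ on $[t_2,1]$, unchanged.

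First I would identify the seller's new curve, exactly as in \Cref{lem:exact-chord-collapse}. The top block becomes a terminal face of slope $\bar p_\delta$ on $[0,t_1+\delta]$. The remaining part of the old atom becomes a face on $[t_1+\delta,t_2]$ with slope
\[
z_\delta=\frac{t_2v-t_1p-\delta v}{t_2-t_1-\delta}=z+\frac{\delta(z-v)}{t_2-t_1}+O(\delta^2).
\]
Concavity requires $z_\delta<\bar p_\delta$. This holds for small $\delta$ because $z<p$, and I would also verify that the two new segments majorize the raw curve, so that no further ironing is triggered. The remaining ranks keep posterior value $v$.

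Second, I would expand $J_\beta$ to first order in $\delta$ using $F_{y,\beta}$. The top block contributes $(t_1+\delta)\beta\pi(\bar p_\delta)$ and the lower face contributes $(t_2-t_1-\delta)F_{v,\beta}(z_\delta)$. If this derivative is not strictly positive, I would enlarge the family in one of two ways. One option is to pool the slice at a level strictly between $v$ and $\bar p_\delta$, which creates a short intermediate face. The other is to let the top atom's value fall while a larger slice rises, using the fact that the top atom is served efficiently. At the terminal face $F_{p,\beta}'(p)=-(1-\beta)q(p)<0$. Any perturbation that leaves the top posterior value in place while pushing the virtual value of the upper block slightly below it therefore gains $(1-\beta)q(p)$ per unit to first order, and this is where $\beta<1$ enters. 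The intended construction is a two-parameter contraction in which the slice is merged into a non-terminal face just above the new top atom, so that the merged block carries a virtual value below its posterior value. The mean-preserving parameter would be chosen by the intermediate value theorem, as in \Cref{lem:binary-adjacent-elimination}.

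The case $z\le0$ has to be handled separately: the lower atom is then excluded and $F_{v,\beta}(z_\delta)=0$ for small $\delta$. Here the gain should come entirely from the top-block comparison, and efficiency of the top quality together with $\beta<1$ should give its sign directly.

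The main obstacle is the sign of the first-order term. For the plain sliver pooling above, the expansion mixes a $\beta$-weighted difference of $\pi$-tangent expressions with a negative term $-(v-z)^2q'(z)$, and I do not expect it to be signed without further structure. Before committing to the final perturbation, I would therefore first compute the derivative of the two-parameter family at the point where the top block's virtual value starts to separate from its posterior value. I expect the positive rent term $(1-\beta)q(p)$ to dominate there, since all other terms vanish or are of second order by the envelope property at efficient quality.
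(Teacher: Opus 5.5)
\textbf{Verdict: genuine gap.} Your sliver pooling does not merely leave the sign open; it moves $J_\beta$ the wrong way, and your fallback is never actually constructed.

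\textbf{Why the sliver pooling fails.} Normalize the pair to mass one, with $\tau$ at $p$ and $\rho$ at $v$, so that $v-z=\tau(p-v)/\rho$. Pooling a sliver of the lower atom into the top atom changes $J_\beta$ at rate $\beta[\pi(p)-\pi(z)-(p-v)q(p)-(v-z)q(z)]-(v-z)^2q'(z)$. At $\beta=0$ this equals $-(v-z)^2q'(z)<0$. When $z\le0$ the rate is $\beta[\pi(p)-(p-v)q(p)]$. That is zero at $\beta=0$ and can be negative otherwise, for example when $v$ is small, since $\pi(p)<pq(p)$. The reason is structural. Merging buyers into the efficiently served top atom creates no new rent-generating contract. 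It only shrinks the lower atom and pushes its virtual value down ($z_\delta<z$). Lowering that slope also means concave attachment at $t_2$ is no longer automatic, so the envelope below the pair need not stay unchanged.

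\textbf{What the fallback is missing.} Your intuition that the gain must come from $F'_{p,\beta}(p)=-(1-\beta)q(p)$ is right. But the construction is unspecified, and the accounting you anticipate is wrong.

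The missing step is the following. Fix $\max\{z,0\}<\zeta<p$ and set $\lambda=\tau/(p-\zeta)$. Create a third atom at $p-h$ of mass $\lambda h$. Draw $\lambda h^2/(p-v)$ of it from the atom at $v$ and the rest from the atom at $p$, so the pooled buyers have mean $p-h$.

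Scaling the new atom's mass in proportion to its distance from $p$ keeps its virtual value at $\zeta-h^2/(p-v)$, bounded away from $p$. To first order it then generates rent $\tau h\,q(\zeta)$ for the top buyers, at profit cost $\lambda h[\pi(p)-\pi(\zeta)]$. Meanwhile the top atom stays at $p$ with virtual value $p$.

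The bottom of the upper group falls from $p$ to $p-h$. So the lower atom's virtual value rises at first order, $\partial_h z_{v,h}|_{h=0}=\tau/\rho$, contributing $\tau[(v-z)q'(z)-(1-\beta)q(z)]$. This term is first order, not negligible as you expect, and it may be negative. The right derivative divided by $\tau$ is
\[
(v-z)q'(z)+q(\zeta)-(1-\beta)q(z)-\frac{\beta}{p-\zeta}\int_\zeta^p q(t)\,\dd t .
\]
As $\zeta\uparrow p$ this tends to $(v-z)q'(z)+(1-\beta)[q(p)-q(z)]>0$. Positivity uses $v>z$ and $q(z)<q(p)$, not an envelope argument.

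For $z\le0$, the lower atom's new contribution is nonnegative and can be dropped, leaving $(1-\beta)q(p)$ in the limit.

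Finally, because $z<\zeta<p$, the three chord slopes $z_{v,h}$, $\zeta-h^2/(p-v)$ and $p$ are strictly ordered for small $h$. The raised lowest slope $z_{v,h}>z$ joins the unchanged curve below concavely, which gives the lemma's claim that the envelope below the pair is unchanged.
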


\begin{proof}
Normalize the combined probability of the two atoms
to one, and denote their masses by $\rho$ at $v$
and $\tau=1-\rho$ at $p$. Then
$z=(v-\tau p)/\rho$.
Fix $\max\{z,0\}<\zeta<p$, to be chosen close to $p$,
and set $\lambda=\tau/(p-\zeta)$.
For small $h>0$, create an atom of mass $\lambda h$
at $p-h$. Take a mass $\lambda h^2/(p-v)$ from
the atom at $v$ and the remaining mass
$\lambda h[1-h/(p-v)]$ from the atom at $p$.
The fraction drawn from $v$ is $h/(p-v)$, so
the mean of the pooled buyers is $p-h$.
Leaving all other buyers at their original posterior
values gives probabilities
\[
 \rho_h=\rho-\frac{\lambda h^2}{p-v},\qquad
 \eta_h=\lambda h,\qquad
 \tau_h=\tau-\lambda h+\frac{\lambda h^2}{p-v}
\]
at $v,p-h,p$, respectively.
For sufficiently small $h$, these probabilities
are positive and $v<p-h<p$.
Because the new atom pools buyers at their mean,
the modification preserves probability and mean
and contracts the posterior distribution in convex
order. It therefore preserves feasibility under
any prior for which the original distribution
was feasible.

The three revenue chords have slopes
\[
 z_{v,h}=\frac{v-(1-\rho_h)(p-h)}{\rho_h},\qquad
 z_{p-h,h}=\zeta-\frac{h^2}{p-v},\qquad
 z_{p,h}=p.
\]
As $h\downarrow0$, these converge to $z,\zeta,p$,
which are strictly ordered. Hence the chords form
a concave curve in upper-tail ranks for small $h$.
Each chord joins raw-revenue values at consecutive
valuation cutoffs and lies above the raw curve
between them.

The pair's total probability and lowest value $v$
are unchanged, so its lower contact remains fixed.
Moreover,
$\left.\partial_h z_{v,h}\right|_{h=0}=\tau/\rho>0$.
Thus the new lowest slope exceeds $z$ for small
$h$ and joins the unchanged curve below the pair
concavely. The chord endpoints and the retained
contacts below the pair force every concave majorant
to dominate this curve. It is therefore the new
concavified revenue curve.

The contribution of all other ranks to the objective
integral is unchanged. In the normalized calculation,
the part that varies is
\[
 J_{\beta,h}
 =\rho_h F_{v,\beta}(z_{v,h})
  +\eta_h F_{p-h,\beta}(z_{p-h,h})
  +\tau_h\beta\pi(p).
\]
If $z>0$, this expression is differentiable at
$h=0$ from the right.
Using $\rho'_0=0$, $\eta'_0=\lambda$,
$\tau'_0=-\lambda$, and $\pi'=q$ gives
\[
 \frac1\tau
 \left.\frac{\dd J_{\beta,h}}{\dd h}\right|_{h=0+}
 =(v-z)q'(z)+q(\zeta)-(1-\beta)q(z)
 -\frac{\beta}{p-\zeta}\int_\zeta^p q(t)\,\dd t.
\]
As $\zeta\uparrow p$, the right-hand side converges to
$(v-z)q'(z)+(1-\beta)[q(p)-q(z)]>0$.
Here $v-z=\tau(p-v)/\rho>0$, and $q$ is strictly
increasing at positive virtual values.

If $z\le0$, the original contribution of the atom
at $v$ is zero. Its new contribution is nonnegative:
$z_{v,h}<v$, while $q$ and $\pi$ are nonnegative.
Dropping this contribution therefore gives
\[
 \liminf_{h\downarrow0}
 \frac{J_{\beta,h}-J_{\beta,0}}{\tau h}
 \ge
 q(\zeta)
 -\frac{\beta}{p-\zeta}\int_\zeta^p q(t)\,\dd t.
\]
The right-hand side converges to
$(1-\beta)q(p)>0$ as $\zeta\uparrow p$.
This argument includes $z=0$ without requiring
differentiability of $q$ at zero.

In either case, choose $\zeta$ sufficiently close
to $p$ and then $h>0$ sufficiently small.
The objective strictly increases.
Multiplying the normalized gain by the pair's
original probability preserves its sign, proving
the claim for the full posterior distribution.
\end{proof}

\begin{lemma}[A nondegenerate regular upper component]
\label{lem:binary-nonempty-continuation}
Every binary optimum has a regular upper component
of positive probability. After normalizing its
probability to one, its virtual-value profile
$\varphi$ in lower rank is positive and strictly
increasing on $(0,b_0)$ for some $b_0\in(0,1]$.
If $b_0<1$, the profile is constant on $[b_0,1]$.
The corresponding posterior quantile is continuous
and strictly increasing on $[0,b_0)$.
\end{lemma}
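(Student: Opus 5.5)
Fix a binary optimum, in upper-tail ranks $s$ with concavified curve $r$. First I classify all ranks using the earlier lemmas. By \Cref{lem:binary-exclusion-floor}, excluded buyers form at most one atom at $a$, occupying the lowest ranks. By \Cref{lem:active-face-purity}, every positive-slope face with positive intercept is pure, so it is a trading atom. Contacts off face interiors form separating ranks, where $v=U_r$ and the virtual value is strictly monotone. The only zero-intercept face is the terminal one, at $s\in[0,\ell_0]$, with $v=U_r=z$ constant there.

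The first claim is that there is a regular upper component of positive mass. I argue by contradiction: suppose the set of separating contacts has measure zero. Then, apart from exclusion, the posterior distribution consists of pure trading atoms and possibly a terminal atom. There are three cases.
\begin{enumerate}
\item[(a)] If there are at least two nonterminal trading atoms, then \Cref{lem:binary-successor} makes the top two adjacent, and \Cref{lem:binary-adjacent-elimination} gives a strict improvement.
\item[(b)] If there is exactly one nonterminal trading atom together with a terminal atom, these are the top two adjacent pure faces. \Cref{lem:top-atom-refinement} then applies, taking $p$ as the terminal value, which equals its own virtual value. It gives a strict improvement that contracts in convex order, so feasibility is preserved.
\item[(c)] If there is no terminal atom but there is a nonterminal top atom, I need to check that a face reaching $s=0$ has zero intercept, so the top face is terminal. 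If the only trading type is a single terminal atom with exclusion below, then a two-atom structure arises with the excluded atom at $a$, and \Cref{lem:top-atom-refinement} again applies, since it allows $z\le0$. If there is no exclusion and no atom, then $G=\delta_m$, which is ruled out by the strictly positive value shown before \Cref{lem:binary-exclusion-floor}.
\end{enumerate}
Hence the separating set $E$ has positive measure.

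The second claim is that separation forms a single connected interval sitting directly below the terminal face. By \Cref{thm:no-reentry-rigorous}, no separating ranks lie below (that is, at larger $s$ than) any positive-intercept trading atom. So every nonterminal trading atom lies below all of $E$, and exclusion lies below that. I define $b_0$ in lower rank, after normalizing the upper component to probability one, as the start of the terminal face. On the upper component there are then no face interiors other than the terminal face, because any such face would be a pure trading atom lying above separation, which is forbidden. Therefore $v=U_r$ almost everywhere there, the component is regular, and $\varphi=-R'$ is constant on $[b_0,1]$.

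Strict monotonicity of $\varphi$ on $(0,b_0)$ follows because a positive-length interval with constant slope would be a face, which has been excluded. Positivity holds because $\varphi>0$ at trading ranks, and separation below zero virtual value would be excluded ranks, which by \Cref{lem:binary-exclusion-floor} sit at the atom $a$. Continuity and strict increase of $Q=U_r$ follow from $U_r'(s)=[z(s)-U_r(s)]/s$. I still need $U_r>z$ off the terminal face, which holds because equality means a supporting line through the origin, and that would force the rank to be terminal.

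The main obstacle is the case analysis in the first claim. The delicate points are: case (c) when there is no terminal face; the subcase where the top component is itself a nonterminal atom with nothing above; and showing that countably many atoms accumulating cannot evade the adjacency required by \Cref{lem:binary-successor}. For the no-terminal-atom case I would argue that the top face always reaches $s=0$, since $r(0)=0$, and is therefore terminal.
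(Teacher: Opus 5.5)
Your route is essentially the paper's. You use the same chain of lemmas: \Cref{lem:active-face-purity}, then \Cref{lem:binary-successor} with \Cref{lem:binary-adjacent-elimination} to leave at most one positive-intercept trading face, then \Cref{thm:no-reentry-rigorous} to place separation above it, and \Cref{lem:top-atom-refinement} to rule out a degenerate top. The only difference is organization: you argue by contradiction on the measure of separating contacts, whereas the paper first isolates the component and then rules out $b_0=0$. Your resulting case split is the same as the paper's. One step, however, fails as written.

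\textbf{The gap.} In the subcase where nobody is excluded and the only trading type is a single atom, $G=\delta_m$. You rule this out by appeal to the strictly positive optimal value, but that only works at $\beta=0$. Under no learning the seller extracts all surplus, so $J_\beta(\delta_m)=\beta\pi(m)$, which is strictly positive for every $\beta\in(0,1)$. Positivity of the optimum therefore does not distinguish it from $\delta_m$. You need to compare the two-point spread directly with $\delta_m$:
\[
 \epsilon q(m-3\epsilon)+\frac{\beta}{2}\{\pi(m-3\epsilon)+\pi(m+\epsilon)-2\pi(m)\}
 =(1-\beta)\epsilon q(m)+o(\epsilon).
\]
This uses $\pi'=q$ and continuity of $q$. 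It is strictly positive for small $\epsilon$ because $m>0$ and $\beta<1$. Without this comparison, your contradiction is incomplete for every positive profit weight, and this is exactly where $\beta<1$ enters this case.

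\textbf{Two smaller points.}
\begin{itemize}
\item Your worry about accumulating atoms in case (a) is unnecessary. Apply \Cref{lem:binary-successor} to any trading face that has another trading face below it. No ``top'' face is needed, and that lemma itself rules out accumulation at $h_A$.
\item In case (c), $r(0)=0$ only shows that a face starting at $s=0$ is terminal. What forces the top face to start there is your measure-zero hypothesis together with case (a). A lone nonterminal face $[\ell,h]$ has $\ell>0$, so without a terminal face the ranks in $(0,\ell)$ would be separating contacts of positive measure. The ``no terminal atom'' subcase is therefore vacuous under your hypothesis, and the case analysis is otherwise complete.
\end{itemize}
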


\begin{proof}
By \Cref{lem:active-face-purity}, every positive-intercept
trading face contains a single posterior atom.
There can be at most one such face. Otherwise,
\Cref{lem:binary-successor} gives an adjacent pair.
Their slopes and posterior values strictly decrease
along upper-tail ranks, so
\Cref{lem:binary-adjacent-elimination} gives a strict
improvement, contradicting optimality.

Let $(0,s)$ be the trading ranks. The positive optimal
value established above implies $s>0$.
If a positive-intercept trading face
$I=[\ell,h]$ exists, then $\ell>0$, since a face
starting at zero has zero intercept.
Moreover, $h=s$. If $h<s$, there could be no face
interior in the trading interval $(h,s)$:
another positive-intercept trading face has been
ruled out, and a zero-intercept face cannot follow
$I$. Almost every rank in this interval would
therefore satisfy $v=U_r$, contradicting
\Cref{thm:no-reentry-rigorous}.

Removing the excluded ranks and this possible
trading atom leaves $(0,\ell)$ if the atom exists,
and $(0,s)$ otherwise. In either case, the remaining
component has positive probability and contains
no positive-intercept face.
Any remaining face has zero intercept and zero
capacity, so \eqref{eq:face-value-bounds} gives
$v=U_r$ almost everywhere on the component.
Its lower boundary is a contact. Conditioning on
this component and normalizing its probability
therefore gives a regular posterior distribution.

Any affine part of its revenue curve must belong
to a zero-intercept face beginning at upper-tail
rank zero. In normalized lower ranks, this becomes
a terminal constant part of $\varphi$.
Let $b_0\in[0,1]$ be its starting rank, setting
$b_0=1$ if there is no such part.
On $(0,b_0)$, equality of virtual values at two
distinct ranks would, by monotonicity, force an
affine interval between them. Since no such interval
exists, $\varphi$ is strictly increasing there.
Its monotone representative is positive at every
interior rank because the component consists
entirely of trading buyers.

It remains to rule out $b_0=0$, which would make
the whole regular component a single posterior atom.
If this component contains all buyers, the posterior
distribution is $\delta_m$.
The feasible two-point perturbation considered above
changes the objective by
\[
 \epsilon q(m-3\epsilon)
 +\frac{\beta}{2}
 \{\pi(m-3\epsilon)+\pi(m+\epsilon)-2\pi(m)\}
 =(1-\beta)\epsilon q(m)+o(\epsilon).
\]
The expansion follows from $\pi'=q$ and continuity
of $q$. It is strictly positive for sufficiently
small $\epsilon$, since $m>0$ and $\beta<1$.

If there are buyers outside the component, its
single top atom has another atom immediately below
it: the lower trading atom if one exists, or the
excluded atom at $a$ identified by
\Cref{lem:binary-exclusion-floor}.
In either case, \Cref{lem:top-atom-refinement}
gives a feasible strict improvement.
Thus $b_0>0$.

Finally, \eqref{eq:continuation-quantile} expresses
$Q(u)$ as the average of $\varphi$ over $(u,1)$.
For $u<b_0$, strict increase of $\varphi$ makes
this average strictly greater than $\varphi(u)$.
Consequently,
$(1-u)Q'(u)=Q(u)-\varphi(u)>0$ for almost every $u\in(0,b_0)$.
The same reconstruction formula makes $Q$ absolutely
continuous on every interval $[0,c]$ with $c<1$.
This proves its continuity and strict increase
on $[0,b_0)$.
\end{proof}
\subsection{The separating distribution}
\label{app:binary-separation-proof}

We begin with variations of the upper component that
preserve its probability, lower support point, mean,
and endpoint virtual values. This way, we can keep
its connection to any lower group unchanged.
We first derive the Euler equation, then establish
smoothness and the terminal conditions.

\subsubsection{The two constraints and their multipliers}
\label{app:binary-multipliers}

Normalize the component to mass one and let
$w(u)=-\log(1-u)$. Its lower support point and mean
are linear functions of $\varphi$:
\[
 B(\varphi)=Q_\varphi(0)
 =\int_0^1\varphi(u)\,\dd u,
 \qquad
 M(\varphi)=\int_0^1w(u)\varphi(u)\,\dd u.
\]
The expression for $M$ follows by integrating
\eqref{eq:continuation-quantile} and reversing the
order of integration. With the component's probability
fixed, preserving $B$ keeps the lower revenue contact
unchanged, while preserving $M$ keeps the overall
posterior mean unchanged.
Write $J(\varphi)$ for the weighted objective of
this normalized upper distribution.
For bounded perturbations supported where $\varphi$
is bounded away from zero, the first variation and
remainder estimate are given by
\eqref{eq:binary-cited-first-variation}.
Their derivation uses only \Cref{ass:cost} and does
not require differentiability of the optimal profile.

\begin{lemma}[The two multipliers and interior compensation]
\label{lem:binary-two-multipliers}
For the upper component in
\Cref{lem:binary-nonempty-continuation}, there are unique
$\theta,\xi\in\R$ such that
\begin{equation}
 H_\beta(u)-\theta w(u)+\xi=0
 \quad\text{a.e. on }(0,b_0).
 \label{eq:binary-Euler-ae}
\end{equation}
Every nonempty open interval $I\subset(0,b_0)$ admits
two bounded directions supported in $I$ whose effects
on $(B,M)$ are linearly independent. Small steps of
either sign along any fixed linear combination preserve
monotonicity and the virtual-value bounds $[0,1]$.
In particular, there is a direction $\eta$ supported
in $I$ satisfying
\begin{equation}
 DB[\eta]=0,\qquad DM[\eta]=1,\qquad DJ[\eta]=\theta.
 \label{eq:binary-remote-mean}
\end{equation}
Along this direction,
$B(\varphi+\epsilon\eta)=B(\varphi)$ and
$M(\varphi+\epsilon\eta)=M(\varphi)+\epsilon$
for every sufficiently small $\epsilon$ of either sign.
\end{lemma}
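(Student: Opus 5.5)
The approach uses perturbations of the form $\varphi\mapsto\varphi+\epsilon\,\psi\circ\varphi$. Because $B$ and $M$ are linear in $\varphi$, compensating moves preserve both constraints exactly, so only the first variation of $J$ matters. I take as given the first-variation formula \eqref{eq:binary-cited-first-variation}: for bounded $\eta$ supported where $\varphi$ is bounded away from zero,
\[
 J(\varphi+\epsilon\eta)=J(\varphi)+\epsilon\int_0^1H_\beta(u)\eta(u)\,\dd u+o(\epsilon),
\]
with $H_\beta$ locally integrable on $(0,b_0)$ and the remainder uniform along a fixed direction. Here $DJ[\eta]=\int H_\beta\eta$, $DB[\eta]=\int\eta$ and $DM[\eta]=\int w\eta$.

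\emph{Admissible directions.} Fix a compact interval $[c,d]\subset(0,b_0)$. By \Cref{lem:binary-nonempty-continuation}, $\varphi$ is strictly increasing on $(0,b_0)$ and constant on $[b_0,1]$, so
\[
 0<\varphi(c)\le\varphi\le\varphi(d)<\varphi(b_0)\le 1 \quad\text{on } [c,d].
\]
For any Lipschitz $\psi$ supported in $(\varphi(c),\varphi(d))$, set $\eta=\psi\circ\varphi$. This $\eta$ is bounded and vanishes outside $[c,d]$, so it is supported where $\varphi$ is bounded away from zero.

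For $|\epsilon|\,\mathrm{Lip}(\psi)<1$, the map $t\mapsto t+\epsilon\psi(t)$ is strictly increasing and fixes every $t$ outside $\operatorname{supp}\psi$. Hence $\varphi+\epsilon\eta$ is again nondecreasing, agrees with $\varphi$ near the endpoints, and stays inside $[\varphi(c),\varphi(d)]\subset(0,1)$ on $[c,d]$. Finite linear combinations of such directions are again of this form, so steps of either sign are feasible. This gives the monotonicity and $[0,1]$ claims.

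For two such directions $\eta_1,\eta_2\ge0$ with disjoint supports inside a given open interval $I$, the ratio $DM[\eta_i]/DB[\eta_i]$ is a weighted average of $w$ over $\operatorname{supp}\eta_i$. Since $w$ is strictly increasing, these ratios differ, so the vectors $(DB[\eta_i],DM[\eta_i])$ are linearly independent.

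\emph{Multipliers.} Fix such a pair $\eta_1,\eta_2$. Because the $2\times2$ system is nonsingular, there are unique $(\theta,\xi)$ solving
\[
 DJ[\eta_i]=\theta\,DM[\eta_i]-\xi\,DB[\eta_i],\qquad i=1,2.
\]
Now take any admissible $\eta$, and choose $c_1,c_2$ so that $\tilde\eta=\eta+c_1\eta_1+c_2\eta_2$ has $DB[\tilde\eta]=DM[\tilde\eta]=0$. By linearity, $\varphi+\epsilon\tilde\eta$ has exactly the same $B$ and $M$ for every small $\epsilon$ of either sign. It also preserves the endpoint virtual values, hence the attachment to any lower group and Bayes plausibility. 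Optimality then forces $DJ[\tilde\eta]=0$, which gives
\[
 DJ[\eta]=\theta\,DM[\eta]-\xi\,DB[\eta]
\]
for all admissible $\eta$. Applying this identity with the roles of two different pairs exchanged shows that $(\theta,\xi)$ does not depend on the pair chosen, including pairs from different compact subintervals.

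Consequently $\int(H_\beta-\theta w+\xi)\,\psi\circ\varphi=0$ for every admissible $\psi$. Take $u_1<u_2$ in $(c,d)$ and let $\psi$ approximate the indicator of $(\varphi(u_1),\varphi(u_2)]$. Since $\varphi$ is strictly increasing, $\psi\circ\varphi$ then converges boundedly a.e.\ to $\mathbf 1_{(u_1,u_2]}$. Dominated convergence gives that the integral of $H_\beta-\theta w+\xi$ vanishes over every subinterval, and Lebesgue differentiation yields \eqref{eq:binary-Euler-ae}. Uniqueness follows because $w$ is nonconstant on every interval, so two distinct pairs $(\theta,\xi)$ cannot both satisfy the a.e.\ identity.

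Finally, for \eqref{eq:binary-remote-mean}, pick independent $\eta_1,\eta_2$ inside the given $I$ and solve for the combination $\eta=c_1\eta_1+c_2\eta_2$ with $DB[\eta]=0$ and $DM[\eta]=1$. The identity above then gives $DJ[\eta]=\theta$, and linearity makes $B(\varphi+\epsilon\eta)=B(\varphi)$ and $M(\varphi+\epsilon\eta)=M(\varphi)+\epsilon$ exact.

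The main obstacle is the remainder in the first variation. The ironed virtual value of the perturbed profile must coincide with $\varphi+\epsilon\eta$ itself, which requires regularity to be preserved. I would argue this from \eqref{eq:continuation-quantile}: a nondecreasing $\varphi$ always generates a concave revenue curve. The $o(\epsilon)$ estimate must then hold uniformly over the compact range of $\varphi$ on $[c,d]$, which is where boundedness away from zero enters. If $q$ is not differentiable there, I would fall back on the difference-quotient form of the cited estimate.
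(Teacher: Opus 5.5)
Your proposal is correct and follows the paper's proof essentially step by step. The shared steps are: directions $\psi\circ\varphi$ made admissible through increasing self-maps $t\mapsto t+\epsilon\psi(t)$; two disjointly supported nonnegative directions whose $M/B$ ratios are distinct weighted averages of $w$; compensation to obtain $DJ=\theta\,DM-\xi\,DB$ for every admissible direction; indicator approximation through strict monotonicity of $\varphi$; and the explicit combination defining $\eta$.

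The one detail to tighten is that $DB[\eta_i]>0$, which your weighted-average argument needs and which could fail if $\varphi$ jumps over the set where $\psi_i>0$. The paper handles this by centering the $h_i$ at continuity points of $\varphi$. Alternatively, take $\psi_i>0$ on all of $(\varphi(c_i),\varphi(d_i))$; strict increase then gives $\eta_i>0$ on $(c_i,d_i)$.
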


\begin{proof}
Following the monotone-transformation argument in
\citet[Lemma~A.15]{kyriazis2026markets}, consider
directions $\eta_h=h\circ\varphi$, where
$h\in C_c^1((0,1))$ and $\eta_h$ has compact support
in $(0,b_0)$. For sufficiently small $\epsilon$ of
either sign, $z\mapsto z+\epsilon h(z)$ is an
increasing map of $[0,1]$ onto itself.
Thus $\varphi+\epsilon\eta_h$ remains nondecreasing,
respects the bounds, and leaves the endpoint virtual
values unchanged.

If $DB[\eta_h]=DM[\eta_h]=0$, linearity of $B$ and
$M$ preserves both constraints along the whole path.
The reconstructed quantile retains its lower support
point and mean, and its attachment to any lower group
is unchanged. The resulting posterior distribution is
therefore feasible for either sign of $\epsilon$.
Optimality implies $DJ[\eta_h]=0$.

We next construct two directions with independent
effects on the constraints. Choose two continuity
points of $\varphi$ in the prescribed interval $I$.
Take nonnegative smooth functions $h_1,h_2$, positive
at the corresponding virtual values and supported
in sufficiently small neighborhoods of them.
Strict increase and continuity at the chosen ranks
ensure that $\eta_i=h_i\circ\varphi$ have positive
integrals and disjoint, ordered compact supports in
$I$. Write
\[
 b_i=DB[\eta_i]=\int_0^1\eta_i(u)\,\dd u>0,
 \qquad
 m_i=DM[\eta_i]=\int_0^1w(u)\eta_i(u)\,\dd u.
\]
The ratio $m_i/b_i$ is a weighted average of $w$ over
the support of $\eta_i$. Since these supports are
ordered and $w$ is strictly increasing,
$m_1/b_1<m_2/b_2$. Hence $(b_1,m_1)$ and $(b_2,m_2)$
are linearly independent. Every linear combination
of these directions is still of the form
$h\circ\varphi$, so it preserves monotonicity and
bounds for small steps of either sign.

Fix one such pair and define $\theta,\xi$ by
$DJ[\eta_i]=\theta m_i-\xi b_i$, $i=1,2$.
For any direction $\eta_h$, there is a unique linear
combination of $\eta_1,\eta_2$ with the same effects
on $B$ and $M$. Subtracting that combination leaves
a direction that preserves both constraints and
therefore has zero first variation. Consequently,
$DJ[\eta_h]=\theta DM[\eta_h]-\xi DB[\eta_h]$.
Using the first-variation formula, this says that
$H_\beta-\theta w+\xi$ integrates to zero against
every admissible $h\circ\varphi$.

For $0<s<t<b_0$, choose smooth functions
$0\le h_n\le1$, compactly supported in
$(\varphi(s),\varphi(t))$, that converge to one
throughout that interval. Strict increase gives
$h_n(\varphi(u))\to1_{(s,t)}(u)$.
The residual is locally integrable, so dominated
convergence yields
$\int_s^t[H_\beta(u)-\theta w(u)+\xi]\,\dd u=0$.
Since $s,t$ are arbitrary, this proves
\eqref{eq:binary-Euler-ae}. Strict increase of $w$
also makes the two multipliers unique.

Finally, construct the pair within any prescribed
$I$ and set
$\eta=(b_1\eta_2-b_2\eta_1)/(b_1m_2-b_2m_1)$.
Then $DB[\eta]=0$ and $DM[\eta]=1$, so the multiplier
identity gives $DJ[\eta]=\theta$.
Linearity of $B$ and $M$ gives the stated changes
along $\varphi+\epsilon\eta$.
\end{proof}

\subsubsection{Regularity, the value of mean, and the top atom}
\label{app:binary-terminal}

\begin{lemma}[Interior smoothness and dynamics]
\label{lem:binary-interior-dynamics}
Let $Q=Q_\varphi$ and
$A(u)=\xi+\int_0^u[q(\varphi(t))-\theta]/(1-t)\,\dd t$.
On $(0,b_0)$, the functions $Q,A,\varphi$ are continuously
differentiable, with $Q>\varphi>0$. They satisfy
\eqref{eq:continuation-euler} and \eqref{eq:continuation-Q},
together with $A'=[q(\varphi)-\theta]/(1-u)$.
The expression for $\varphi'$ in
\eqref{eq:continuation-phi} has a strictly positive
denominator. The next lemma establishes $\varphi'>0$.
\end{lemma}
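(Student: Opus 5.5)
The plan is to rewrite the almost-everywhere stationarity condition \eqref{eq:binary-Euler-ae} of \Cref{lem:binary-two-multipliers} pointwise in terms of $(A,Q,\varphi)$. I would then use \Cref{ass:curvature}, in its equivalent form \eqref{eq:rent-curvature}, to solve it for $\varphi$ as a $C^1$ function of $(A,Q)$.

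\emph{Identify the Euler equation.} The first variation density $H_\beta$ from \eqref{eq:binary-cited-first-variation} has two parts. The first is the indirect effect through lower posterior means, $\int_0^u q(\varphi(t))/(1-t)\,\dd t$, which comes from \eqref{eq:continuation-quantile}. The second is the local derivative $\partial_zF_{Q(u),\beta}(\varphi(u))=(Q-\varphi)q'(\varphi)-(1-\beta)q(\varphi)$. Since $w(u)=\int_0^u\dd t/(1-t)$, the combination $H_\beta-\theta w+\xi$ equals
\[
\Phi(\varphi(u);A(u),Q(u)),\qquad
\Phi(z;A,Q):=A+(Q-z)q'(z)-(1-\beta)q(z).
\]
Thus \eqref{eq:continuation-euler} holds for almost every $u\in(0,b_0)$.

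\emph{Establish the a priori properties.} On each $[0,c]$ with $c<b_0$, the integrand of $A$ is bounded, so $A$ is absolutely continuous and hence continuous. The quantile $Q$ is absolutely continuous by \eqref{eq:continuation-quantile}, with $(1-u)Q'=Q-\varphi$ almost everywhere, and is strictly increasing by \Cref{lem:binary-nonempty-continuation}. That lemma also gives $\varphi>0$ on $(0,b_0)$. Because $\varphi$ is strictly increasing there and nondecreasing afterwards, the average in \eqref{eq:continuation-quantile} gives $Q(u)>\varphi(u+)$ at every $u<b_0$. For $0<z<Q\le1$, \eqref{eq:rent-curvature} gives
\[
\partial_z\Phi(z;A,Q)=-\bigl[(2-\beta)q'(z)-(Q-z)q''(z)\bigr]<0 .
\]
Since $c$ is $C^3$ with $c''>0$, the function $q=(c')^{-1}$ is $C^2$ on $(0,1]$. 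Hence $\Phi$ is $C^1$ in all three arguments on the region $\{0<z<Q\}$ and strictly decreasing in $z$.

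\emph{Prove continuity of $\varphi$; this is the main step.} Suppose $\varphi$ jumps at some $u_0\in(0,b_0)$. Taking limits of the a.e. equation from the left and from the right gives $\Phi(\varphi(u_0-);A(u_0),Q(u_0))=0$ and $\Phi(\varphi(u_0+);A(u_0),Q(u_0))=0$. This uses continuity of $A$, $Q$, $q$ and $q'$. Both limits lie in $(0,Q(u_0))$, and $\Phi$ is strictly decreasing in $z$ there. Therefore the two limits coincide, which is a contradiction. So $\varphi$ is continuous, and then the Euler equation holds at every rank in $(0,b_0)$, not just almost everywhere. The delicate point is ensuring that the one-sided limits stay strictly inside the region where \eqref{eq:rent-curvature} applies, and I would check this carefully. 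The bound $\varphi(u_0+)<Q(u_0)$ handles the upper end. Positivity of the monotone representative away from rank zero handles the lower end.

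\emph{Bootstrap to $C^1$ and obtain the dynamics.} With $\varphi$ continuous, the derivatives $Q'=(Q-\varphi)/(1-u)$ and $A'=[q(\varphi)-\theta]/(1-u)$ are continuous. Hence $A$ and $Q$ are $C^1$ and satisfy the stated ODEs everywhere. The implicit function theorem applied to $\Phi=0$, with $\partial_z\Phi\neq0$, gives $\varphi$ as a $C^1$ function of $(A,Q)$ locally. By uniqueness of the root, this local solution coincides with the continuous $\varphi$, so $\varphi$ is $C^1$. Differentiating $\Phi(\varphi;A,Q)=0$ and substituting $A'$ and $Q'$ gives
\[
\varphi'=\frac{A'+Q'q'(\varphi)}{-\partial_z\Phi}.
\]
This is exactly \eqref{eq:continuation-phi}, and its denominator $(1-u)\{(2-\beta)q'-(Q-\varphi)q''\}$ is strictly positive by the curvature inequality. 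The sign of the numerator, which requires $q(\varphi)\ge\theta$, is left to the next lemma, as the statement indicates.
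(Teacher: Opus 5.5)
Your proposal is correct and follows the paper's proof. You rewrite \eqref{eq:binary-Euler-ae} as the Euler equation in $(A,Q,\varphi)$, use \Cref{ass:curvature} to make its left-hand side strictly decreasing in $z$ on $0<z<Q$, rule out jumps by comparing one-sided limits, and bootstrap to $C^1$ via the identities for $A'$ and $Q'$ and the implicit-function theorem. The only difference is that you write out the continuity and implicit-function step, including the check that $\varphi(u_0\pm)\in(0,Q(u_0))$, which the paper delegates to the proof of Proposition~A.17 in the companion paper.
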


\begin{proof}
The tail-average formula gives $Q>\varphi>0$, and
$A,Q$ are locally absolutely continuous.
By \eqref{eq:binary-Euler-ae}, the Euler equation
holds almost everywhere. At fixed $(A,Q)$, its
left-hand side is strictly decreasing in virtual
value $z$, since \Cref{ass:curvature} gives
$(Q-z)q''(z)-(2-\beta)q'(z)<0$ for $0<z<Q$.
The continuity and implicit-function argument in
\citet[proof of Proposition~A.17]{kyriazis2026markets}
therefore applies locally. It rules out jumps in
$\varphi$ and makes the Euler equation hold everywhere,
with $\varphi$ locally a $C^1$ function of $(A,Q)$.

The identities $A'=[q(\varphi)-\theta]/(1-u)$ and
$Q'=(Q-\varphi)/(1-u)$ now have continuous
right-hand sides. Hence $A,Q,\varphi$ are $C^1$.
Differentiating the Euler equation gives the
expression in \eqref{eq:continuation-phi},
whose denominator is positive by the inequality above.
\end{proof}

\begin{lemma}[Quality and the multiplier on mean]
\label{lem:shadow-quality}
For every $u\in(0,b_0)$, $q(\varphi(u))\ge\theta$.
Consequently, $\varphi'(u)>0$ throughout that interval.
\end{lemma}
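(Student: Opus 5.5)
The plan is to argue by contradiction through a two-part, mean-preserving perturbation. First pool a short block of separating types at its lowest posterior mean. Then restore the lost mean along the direction $\eta$ of \Cref{lem:binary-two-multipliers}. Pooling costs strictly less than $q(\varphi)$ per unit of mean removed, while restoring gains $\theta$ per unit of mean restored. Hence $q(\varphi)<\theta$ somewhere would give a strict improvement.

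Once the inequality is known, the claim $\varphi'>0$ follows immediately from \eqref{eq:continuation-phi}. By \Cref{lem:binary-interior-dynamics}, the denominator is positive. In the numerator, $q(\varphi)-\theta\ge0$, and $(Q-\varphi)q'(\varphi)>0$ because $Q>\varphi>0$ and $q'>0$ at positive virtual values.

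\textbf{Step 1: a block with low quality.} Suppose $q(\varphi(u_0))<\theta$ for some $u_0\in(0,b_0)$. Since $\varphi$ is continuous, there are $\delta>0$ and a closed interval $[\alpha,\gamma]\subset(0,b_0)$ around $u_0$ on which $q(\varphi)\le\theta-\delta$. By shrinking the interval we can also keep it away from rank zero. The component is regular, so every rank is a contact.

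\textbf{Step 2: pooling.} Pool $(\alpha,\gamma)$ at its lowest posterior value $Q(\alpha)$. This is \Cref{lem:exact-chord-collapse} in reversed ranks. The revenue curve is replaced by the chord over the block and is unchanged elsewhere. The mean falls by $D=\int_\alpha^\gamma[Q(u)-Q(\alpha)]\,\dd u>0$, since $Q$ is strictly increasing. The lower support point $B=Q(0)$ and the endpoint virtual values are unchanged. Since $\varphi\in(0,Q(\alpha))$ on the block, \Cref{lem:binary-chord-cost} applies with $z_A=\max_{[\alpha,\gamma]}\varphi$. It bounds the exact, non-asymptotic loss strictly below $q(z_A)D\le(\theta-\delta)D$.

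\textbf{Step 3: restoring the mean.} Apply \Cref{lem:binary-two-multipliers} on an interval $I\subset(0,\alpha)$ strictly below the block. It supplies $\eta$ with $DB[\eta]=0$, $DM[\eta]=1$ and $DJ[\eta]=\theta$. Take a step of size $\epsilon=D$. This restores the overall mean, preserves $B$, and preserves monotonicity, so the combined distribution stays in $\mathcal G_{a,m}$ with the same attachment to any lower group.

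Because $I$ lies below the block, the step changes virtual values only on $I$ and posterior means only at ranks below $\sup I$. It therefore does not interact with the pooled block. The first-variation formula \eqref{eq:binary-cited-first-variation} then evaluates the step as it would at the original profile. The gain is $\theta D+o(D)$, and the net change in $J_\beta$ is at least $\delta D+o(D)>0$ for a short enough block. This contradicts optimality.

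\textbf{Main obstacle.} The delicate point is the interaction between the two perturbations. After pooling, the component is no longer regular, so the representation \eqref{eq:continuation-quantile} and the linear formulas for $B$ and $M$ no longer hold globally. I would handle this by writing the mean-restoring step directly as an additive change to $Q$ on $[0,\sup I]$ and to $\varphi$ on $I$, both explicit and linear in $\epsilon\eta$. Below the block the profile is still regular and agrees with the original, so the remainder estimate applies unchanged. The other thing to confirm is the order of the error terms. $D$ is only of order $(\gamma-\alpha)^2$, but the $\eta$-step has size exactly $D$, so its remainder is $o(D)$. The pooling bound is exact rather than asymptotic. The margin $\delta D$ is therefore not swamped.
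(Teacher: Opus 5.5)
Your proposal is correct and is essentially the paper's proof. You pool a short separating block at its lowest posterior mean, bound the loss strictly below $q(\varphi)$ per unit of mean via \Cref{lem:binary-chord-cost}, restore the mean along a fixed interior direction $\eta$ from \Cref{lem:binary-two-multipliers} at gain $\theta D+o(D)$, and read $\varphi'>0$ off \eqref{eq:continuation-phi}. The only difference is the order of the two steps: the paper applies the $\eta$-step first, while the component is still regular, and then pools the untouched block, which disposes of the interaction you flag as the main obstacle; your pool-first order works equally well because the two changes act on disjoint rank sets, provided you fix the support of $\eta$ before shrinking the block so the remainder is uniformly $o(D)$.
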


\begin{proof}
Fix $u_0\in(0,b_0)$ and choose $h>0$ small enough that
$u_0+h<b_0$ and $\varphi(u_0+h)<Q(u_0)$.
The regular upper distribution has contacts at
both endpoints of this block. After reversing ranks,
\Cref{lem:exact-chord-collapse} therefore applies:
pooling $[u_0,u_0+h]$ at $Q(u_0)$ changes the
concavified revenue curve only on that block and
reduces the mean by
$D_h=\int_{u_0}^{u_0+h}[Q(u)-Q(u_0)]\,\dd u>0$.
Since $0<\varphi(u)\le\varphi(u_0+h)<Q(u_0)$ on
the block, \Cref{lem:binary-chord-cost} bounds the
loss in the weighted objective by
$q(\varphi(u_0+h))D_h$.

To restore the mean, choose a direction $\eta$ from
\eqref{eq:binary-remote-mean}, supported in a fixed
interval contained in $(0,u_0)$. First replace
$\varphi$ by $\varphi+D_h\eta$. This preserves $B$,
raises the mean by $D_h$, and changes the objective
by $\theta D_h+o(D_h)$. Since the adjustment affects only ranks below $u_0$, reconstruction leaves the quantile and revenue curve unchanged on the block to be pooled.

We then pool that block. Together, the two changes preserve the prior mean, monotonicity, and support bounds. The upper component's bottom contact and endpoint slopes remain unchanged, so it still joins any lower group as before. Optimality therefore requires
$0\ge
 [\theta-q(\varphi(u_0+h))]D_h+o(D_h)$.
Dividing by $D_h>0$ and letting $h\downarrow0$ gives
$\theta\le q(\varphi(u_0))$.

The numerator in \eqref{eq:continuation-phi} is
therefore strictly positive, since $Q-\varphi>0$
and $q'>0$. Its denominator is positive by
\Cref{ass:curvature}, so $\varphi'>0$.
\end{proof}

\begin{lemma}[The efficiently served top atom]
\label{lem:binary-top-fit}
The regular upper component has $b_0=b\in(0,1)$
and $Q=\varphi=1$ on $[b,1]$. Moreover,
$\lim_{u\uparrow b}(A(u),Q(u),\varphi(u))
 =((1-\beta)\bar q,1,1)$.
\end{lemma}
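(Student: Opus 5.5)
The proof has three steps: rule out $b_0=1$, show the terminal constant value is one, and rule out a jump in virtual value at $b$. Each step uses the Euler equation \eqref{eq:continuation-euler}, the monotonicity from \Cref{lem:shadow-quality}, and the compensating directions of \Cref{lem:binary-two-multipliers}. Throughout, I use the residual
\[
 H_\beta=A+(Q-\varphi)q'(\varphi)-(1-\beta)q(\varphi),
\]
with $A$ extended beyond $b_0$ by the same integral formula. The first variation along any admissible direction is $\int H_\beta\eta$. After compensation through \eqref{eq:binary-remote-mean}, the multipliers enter only through $A$, so the net first-order effect is $\int(H_\beta-\theta w+\xi)\eta$, and on the interior this equals $\int H_\beta\eta$ computed with the adjusted $A$.

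\emph{Step 1: $b_0<1$.} Suppose $b_0=1$. By \Cref{lem:shadow-quality}, $\varphi'>0$ and $q(\varphi)\ge\theta$ on $(0,1)$. Fix $u_1$. For $u>u_1$ we then have $q(\varphi(u))-\theta\ge q(\varphi(u_1+\delta))-\theta>0$ for some fixed $\delta>0$, so $q(\varphi)-\theta$ is bounded below by a positive constant $\kappa$ near rank one. Hence $A(u)\ge A(u_1)+\kappa[-\log(1-u)]+O(1)\to\infty$. On the other hand, \eqref{eq:continuation-euler} together with $Q>\varphi$ gives $A=(1-\beta)q(\varphi)-(Q-\varphi)q'(\varphi)\le(1-\beta)\bar q$. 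This contradiction shows $b:=b_0<1$. By \Cref{lem:binary-nonempty-continuation}, $\varphi$ is then constant, equal to some $c$, on $[b,1]$, and \eqref{eq:continuation-quantile} gives $Q=c$ there as well. Continuity of $Q$ yields $Q(b-)=c$.

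\emph{Step 2: no jump at $b$.} Let $\varphi_-=\lim_{u\uparrow b}\varphi(u)\le c$. This limit exists because $\varphi$ is increasing, and it is positive. Letting $u\uparrow b$ in the Euler equation gives
\[
 A(b)+(c-\varphi_-)q'(\varphi_-)-(1-\beta)q(\varphi_-)=0 .
\]
At fixed $(A,Q)=(A(b),c)$, \Cref{ass:curvature} makes the left-hand side strictly decreasing in $z$ on $(0,c]$. Therefore, if $\varphi_-<c$, evaluating at $z=c$ gives
\[
 H_\beta(b+)=A(b)-(1-\beta)q(c)<0 .
\]
Now lower the virtual value on a short initial slice $[b,b+h]$ of the terminal block toward $\varphi_-$. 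This is done by an increasing transformation that fixes the value $1$, so the endpoint virtual values and monotonicity are preserved. Compensate $B$ and $M$ with two interior directions supported in $(0,b)$. The net first-order change is approximately $-\epsilon\int_b^{b+h}H_\beta>0$, contradicting optimality. Hence $\varphi_-=c$ and $A(b)=(1-\beta)q(c)$.

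\emph{Step 3: $c=1$.} Suppose $c<1$, and raise the terminal value $c$ to $c+\epsilon$ on $[b,1]$ via the direction $\eta=1_{[b,1]}$, again compensating $B$ and $M$ in the interior. On $[b,1]$ we have $Q=\varphi$, so the compensated derivative is
\[
 \int_b^1\bigl[A(u)-(1-\beta)q(c)\bigr]\dd u .
\]
Here $A(b)=(1-\beta)q(c)$ by Step 2, and $A'(u)=[q(c)-\theta]/(1-u)>0$ on $(b,1)$. The strict inequality $q(c)>\theta$ holds because $q(\varphi)\ge\theta$ on $(0,b)$ and $\varphi$ strictly increases up to $c$. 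So the integral is strictly positive and raising $c$ is a strict improvement, a contradiction. Thus $c=1$, and the limits $(A,Q,\varphi)\to((1-\beta)\bar q,1,1)$ as $u\uparrow b$ follow from Steps 1 and 2.

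The main obstacle is justifying the first-variation formula and the remainder estimate for perturbations that touch the terminal atom. These directions are not compactly supported in $(0,b_0)$, and the divergence of $w=-\log(1-u)$ near rank one needs care. I would handle this with the same monotone-transformation remainder bound \eqref{eq:binary-cited-first-variation}, applied on $[b,1]$ where $\varphi$ is bounded away from zero. There $M$ changes by $\epsilon\int_b^1w<\infty$, so the compensation through \eqref{eq:binary-remote-mean} stays finite.
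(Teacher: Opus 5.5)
Your proof is correct. It uses the same three ingredients as the paper: divergence of $A$ to rule out $b_0=1$, a compensated upward variation on the terminal atom, and a compensated downward variation at the start of that atom. The difference is the order of the last two steps.

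The paper first shows that the terminal value is one. It raises $\varphi$ only on a far tail $[t,1)$, where $A$ has already grown past $(1-\beta)q(p)$ because $A'=[q(p)-\theta]/(1-u)$ with $q(p)>\theta$. That step needs no information about the fit at $b$, and continuous fit is then proved afterwards using $Q(b)=1$. You instead prove continuous fit first, for an arbitrary terminal value $c$. This gives $A(b)=(1-\beta)q(c)$ and hence a sign for $A-(1-\beta)q(c)$ on all of $(b,1)$, so you can raise the whole block at once. Both orders are equally rigorous.

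Two small corrections to your Step 2:
\begin{itemize}
\item The inequality $A(b)<(1-\beta)q(c)$ does not need \Cref{ass:curvature}. The Euler limit gives directly $A(b)=(1-\beta)q(\varphi_-)-(c-\varphi_-)q'(\varphi_-)<(1-\beta)q(\varphi_-)\le(1-\beta)q(c)$.
\item The slice perturbation cannot be implemented as an \emph{increasing transformation that fixes the value $1$}. Every rank in the terminal atom carries the same value $c$, so a map $z\mapsto z+\epsilon k(z)$ moves the entire atom. There the integrand $A-(1-\beta)q(c)$ eventually becomes positive, since $A$ diverges. Use instead the rank-localized change $\varphi=c-\epsilon$ on $[b,b+h)$, with the same interior compensation. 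It remains monotone as long as $c-\epsilon\ge\varphi_-$.
\end{itemize}
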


\begin{proof}
Strict increase of quality and
\Cref{lem:shadow-quality} imply that
$q(\varphi)-\theta$ is bounded below by a positive
constant near rank one. Thus $A(u)\to+\infty$ as
$u\uparrow1$. If $b_0=1$, this contradicts the Euler
bound $A<(1-\beta)q(\varphi)\le(1-\beta)\bar q$.
Hence $b:=b_0\in(0,1)$, and reconstruction gives
$Q=\varphi=p$ on $[b,1]$ for some $p\in(0,1]$.

Suppose $p<1$. Choose $t>b$ so that
$A-(1-\beta)q(p)>0$ on $[t,1)$.
As in \citet[proof of Lemma~A.11]{kyriazis2026markets},
raise $\varphi$ by $\epsilon$ on this tail.
Use two fixed interior directions from
\Cref{lem:binary-two-multipliers} to offset the
changes in $B$ and $M$. Their coefficients are
$O(\epsilon)$, and their supports lie strictly
inside $(0,b)$. The compensated change therefore
preserves monotonicity, bounds, and the lower
attachment for sufficiently small $\epsilon$.
The first variation and interior Euler equation
give the gain
\[
 \epsilon\int_t^1[A(u)-(1-\beta)q(p)]\,\dd u
 +o(\epsilon)>0.
\]
The integral is finite because $A$ grows at most
logarithmically. This contradiction proves $p=1$.

Finally, let $p_0=\varphi(b-)>0$. If $p_0<1$,
continuity of $Q$ and the separating Euler equation
give
$A(b)=(1-\beta)q(p_0)-(1-p_0)q'(p_0)
      <(1-\beta)\bar q$.
Thus $A-(1-\beta)\bar q<0$ on a short initial
interval of the top atom. Lowering $\varphi$ there,
with the same interior compensation, is feasible
because $p_0<1$ and has a strictly positive
first-order gain. This is the continuous-fit
argument in
\citet[proof of Proposition~A.17]{kyriazis2026markets}.
Hence $p_0=1$. Taking the limit in the Euler equation
now gives $A(b)=(1-\beta)\bar q$.
\end{proof}

\subsubsection{Proof of \Cref{prop:separating-conditions}}
\label{app:separating-completion}

\begin{proof}
The preceding lemmas establish the multipliers,
smoothness, quality bound, and terminal conditions.
It remains to prove uniqueness of the separating
path for a fixed $\theta$.

Set $\tau=-\log(1-u)$, with dots denoting derivatives
with respect to $\tau$. The Euler equation determines
virtual value locally as a $C^1$ function $z=Z(A,Q)$.
This also holds at the terminal state
$((1-\beta)\bar q,1,1)$, where the derivative of
the Euler equation's left-hand side with respect
to $z$ is $-(2-\beta)q'(1)<0$.
Since $c$ is $C^3$ near $\bar q<\bar Q$, the inverse
$q=(c')^{-1}$ extends to a neighborhood of one,
allowing the implicit-function theorem to apply
there.

The state equations become
\[
 \dot A=q(Z(A,Q))-\theta,\qquad
 \dot Q=Q-Z(A,Q).
\]
Their right-hand side is locally Lipschitz along
the separating path and near the terminal state.
The backward-uniqueness argument in
\citet[proof of Proposition~A.18]{kyriazis2026markets}
therefore applies: two solutions with the same
$\theta$, traced backward from
$((1-\beta)\bar q,1)$, coincide for as long as both
are defined. The Euler equation then recovers the
same virtual values.  Since we only use virtual values that are positive, there is no need for a smooth extension at zero.
Different lower attachments may select different
stopping points on this common path. Its duration
$T=-\log(1-b)$ determines the top atom's probability
within the normalized upper component:
$1-b=e^{-T}$.
\end{proof}

\subsection{Lower attachment and completion of the proof of \Cref{thm:binary-structure}}
\label{app:binary-completion}

\subsubsection{The continuation's value and its lower attachment}
\label{app:binary-attachment}

\begin{lemma}[The upper distribution and the lower quality gap]
\label{lem:strict-attachment}
Consider the regular upper component of an optimum,
normalized to mass one. Let $B,M,J$ denote its lower
support point, mean, and weighted payoff when screened
on its own. Write $\bar z=\varphi(0)$, and let $\theta$
be its mean multiplier. If $\bar z>0$,
\begin{equation}
 J-\theta(M-B)=(B-\bar z)^2q'(\bar z)
             +\beta(B-\bar z)q(\bar z)+\beta\pi(\bar z).
 \label{eq:reduced-continuation-value}
\end{equation}
If a trading atom with virtual value $z$ precedes
this component, then
$q(z)\le\theta\le q(\bar z)$ and $z<\bar z$.
\end{lemma}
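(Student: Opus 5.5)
The plan is to derive \eqref{eq:reduced-continuation-value} by direct integration by parts along the separating path, using the interior dynamics of \Cref{lem:binary-interior-dynamics} and the terminal conditions of \Cref{lem:binary-top-fit}. The comparison inequalities then follow from the variational arguments already available: the fixed-curve accounting of \Cref{prop:mean-accounting} and the remote mean direction \eqref{eq:binary-remote-mean}. Strictness will come from a mass-shifting deviation evaluated with the reduced value.

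\emph{Step 1: the identity.} For the normalized regular upper component, $J=\int_0^1[(Q-\varphi)q(\varphi)+\beta\pi(\varphi)]\,\dd u$. Two facts from the definitions give the mean adjustment. First, $\int_0^1Q=M$ and $\int_0^1\varphi=B$. Second, $Q-\varphi=(1-u)Q'$ and $q(\varphi)-\theta=(1-u)A'$ on $(0,b)$. Together these give
\[
J-\theta(M-B)=\int_0^1(Q-\varphi)\bigl(q(\varphi)-\theta\bigr)\,\dd u+\beta\int_0^1\pi(\varphi)\,\dd u .
\]
On the top atom $[b,1]$ we have $Q=\varphi=1$, so the first integrand vanishes there. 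On $(0,b)$ I would write it as $(1-u)(Q-\varphi)A'$ and integrate by parts. The key observation is that $\frac{\dd}{\dd u}[(1-u)(Q-\varphi)]=-(1-u)\varphi'$. The boundary term at $b$ vanishes, and the boundary term at $0$ is $-(B-\bar z)A(0)=-(B-\bar z)\xi$, which leaves $\int(1-u)\varphi'A\,\dd u$. Next I substitute $A=(1-\beta)q(\varphi)-(Q-\varphi)q'(\varphi)$ from \eqref{eq:continuation-euler}. Then I integrate $\beta\int\pi(\varphi)$ by parts, using $\pi'=q$, which produces $\beta\pi(\bar z)$ plus $\beta\int(1-u)q(\varphi)\varphi'$. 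The remaining combination $\int(1-u)\varphi'[q(\varphi)-(Q-\varphi)q'(\varphi)]$ is an exact derivative of $-(1-u)(Q-\varphi)q(\varphi)$, so it collapses to the boundary value $(B-\bar z)q(\bar z)$. Finally I evaluate the Euler equation at $u=0$ to express $\xi$ through $\bar z$ and $B$: $\xi=(1-\beta)q(\bar z)-(B-\bar z)q'(\bar z)$, or the same with the sign convention fixed by the multiplier identity. Collecting terms should yield the right-hand side of \eqref{eq:reduced-continuation-value}.

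I expect the main obstacle to be the sign bookkeeping among $\xi$, the boundary term at $0$, and the definition of $A$. A first rough pass gave the correct structure but the wrong sign on the $(B-\bar z)^2q'$ term. I would therefore recheck this against a quadratic-cost example before trusting the final form. Justifying the boundary limits requires continuity of $(A,Q,\varphi)$ at $0$ and $b$, and $\bar z>0$; I take these from the lemmas cited above and from the existence of the limit $\varphi(0+)$.

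\emph{Step 2: $q(z)\le\theta\le q(\bar z)$.} The upper bound follows by letting $u\downarrow0$ in \Cref{lem:shadow-quality}, using continuity of $q\circ\varphi$. For the lower bound, suppose $q(z)>\theta$. By \Cref{lem:active-face-purity}, the lower atom's face is at capacity with $d=\kappa>0$. Reducing $d$ by $\epsilon$ raises posterior values on that face at a fixed curve and gains $q(z)\epsilon$ by \Cref{prop:mean-accounting}. I then apply $-\epsilon\eta$ from \eqref{eq:binary-remote-mean} inside the separating interior. This lowers the mean by $\epsilon$ while preserving $B$, so the attachment is unchanged, and it costs $\theta\epsilon+o(\epsilon)$. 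The net change is strictly positive, which contradicts optimality.

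\emph{Step 3: $z<\bar z$.} Since $z\le\bar z$ already holds by concavity of $r$, it remains to exclude equality. If $z=\bar z$, then Step 2 forces $q(z)=\theta=q(\bar z)$. In that case I would move a small mass $\delta$ from the lower atom into the upper component. The upper component's mean must be lowered to keep the overall mean fixed, and the atom's posterior value must be adjusted so that the chord slope stays $z$. Using the reduced value \eqref{eq:reduced-continuation-value} as the per-unit payoff of the upper component, at price $\theta$ per unit of mean, I would compute the first-order effect. Because $B-\bar z>0$, the $(B-\bar z)^2q'$ term should make this effect strictly positive when the two qualities coincide. This contradicts optimality, so the qualities must jump at the gap.
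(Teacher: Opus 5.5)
Your identity derivation and your bound $q(z)\le\theta\le q(\bar z)$ are sound. The sign worry in the identity is unfounded. The Euler equation at $u=0$ is exactly $\xi+(B-\bar z)q'(\bar z)-(1-\beta)q(\bar z)=0$. The boundary term $-(B-\bar z)\xi$ therefore equals $(B-\bar z)^2q'(\bar z)-(1-\beta)(B-\bar z)q(\bar z)$. Adding the collapsed term $(B-\bar z)q(\bar z)$ and $\beta\pi(\bar z)$ gives precisely the right-hand side of \eqref{eq:reduced-continuation-value}.

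This integration by parts from the bottom is a valid alternative to the paper's route. The paper instead takes the tail above rank $u$ and sets $K=J_u-\theta(M_u-Q)$ and $F=W^2q'(\varphi)+\beta Wq(\varphi)+\beta\pi(\varphi)$. It shows that $K-F$ solves $\dot X=X$ in log rank and vanishes at the top atom. That version gives the identity on every tail at once, which the paper reuses later; yours is more direct. Your proof of $q(z)\le\theta\le q(\bar z)$ is the paper's argument. If the component has probability $s$, the remote step should be $-\epsilon/s$.

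The gap is in your strictness step: the deviation you propose does not give a strict gain. Normalize the atom and the component to mass one, with component share $\rho$, so $v=(1-\rho)z+\rho B$. Raise the share to $\rho+t$ while holding the chord slope $z$ fixed; this forces $v_t=v+t(B-z)$. Restore the mean through $\eta$ at marginal value $\theta=q(z)$. Using the identity and then the bottom Euler equation, the first-order effect is
\[
J-\theta(M-B)-\theta(B-z)-\beta\pi(z)=(B-z)^2q'(z)-(1-\beta)(B-z)q(z)=-(B-z)\xi .
\]
The $(B-z)^2q'$ term is offset by the rent-extraction term, and what remains has no definite sign. Moreover, whenever this direction can be taken in both signs (say $v>a$ with nothing excluded below), optimality only forces $\xi=0$, so no contradiction can come from it.

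The missing idea is to move the atom's virtual value rather than its posterior value. Hold $v$ and $B$ fixed and let $z_t=[v-(\rho+t)B]/(1-\rho-t)$ fall, with $z_0'=-(B-z)/(1-\rho)$. Restore the mean with $d(t)=-t(M-v)/(\rho+t)$ along $\eta$. Using $DJ[\eta]=\theta=q(z)$ and \eqref{eq:reduced-continuation-value}, the derivative at $t=0$ is
\[
J-\theta(M-B)-\rho(B-z)^2q'(z)-\beta\bigl[(B-z)q(z)+\pi(z)\bigr]=(1-\rho)(B-z)^2q'(z)>0 .
\]
Lowering $z$ keeps it strictly below $\bar z$, so the new envelope is the chord followed by the unchanged component. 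Because $v$ does not move, the atom's lower revenue contact is untouched. The deviation is therefore feasible even if $v=a$ or an excluded group lies below. This matters because the later exclusivity lemma relies on this one.
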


\begin{proof}
For the normalized tail above rank $u$, write
\[
 M_u=\frac{\int_u^1Q(t)\,\dd t}{1-u},\qquad
 J_u=\frac{\int_u^1\{[Q(t)-\varphi(t)]q(\varphi(t))
                       +\beta\pi(\varphi(t))\}\,\dd t}{1-u}.
\]
Set $W=Q-\varphi$, $K=J_u-\theta(M_u-Q)$, and
$F=W^2q'(\varphi)+\beta Wq(\varphi)+\beta\pi(\varphi)$.
In log rank, the tail averages satisfy
$\dot M_u=M_u-Q$ and
$\dot J_u=J_u-Wq(\varphi)-\beta\pi(\varphi)$.
Together with $\dot Q=W$ and
\eqref{eq:continuation-phi}, these identities give
$\dot K-K=\dot F-F
 =-W[q(\varphi)-\theta]-\beta\pi(\varphi)$.
Thus $K-F$ solves $\dot K-\dot F=K-F$.
At the top atom, $K=F=\beta\pi(1)$, so the difference
vanishes throughout separation. Evaluating at the
bottom, using continuity when $\bar z>0$, proves
\eqref{eq:reduced-continuation-value}.

Now suppose a trading atom precedes the continuation.
By \Cref{prop:mean-accounting}, partial regularization
of its face can raise the overall mean by a small
$\epsilon>0$ and the objective by $q(z)\epsilon$,
without changing profit. If the continuation has probability $s$, apply an
interior direction from \eqref{eq:binary-remote-mean}
with step $-\epsilon/s$. This direction preserves
$B$ and the endpoint virtual values, leaving the
trading face and its attachment to the continuation
unchanged. It restores the overall mean and changes
the objective by $-\theta\epsilon+o(\epsilon)$.
Optimality therefore gives $q(z)\le\theta$, hence $\bar z\ge z>0$.

To prove strictness, suppose $z=\bar z$, so that
$\theta=q(z)$. Normalize the atom and continuation
together to mass one, keeping their combined
probability fixed. Let $v$ be the atom's value
and $\rho$ the continuation's share. The revenue
contact gives $v=(1-\rho)z+\rho B$, with $B>v>z$.

Keep $v,B$ fixed and increase the continuation's
share to $\rho+t$. To preserve the contact and mean,
set
\[
 z_t=\frac{v-(\rho+t)B}{1-\rho-t},
 \qquad
 d(t)=-\frac{t(M-v)}{\rho+t},
\]
and replace $\varphi$ by $\varphi+d(t)\eta$, using
an interior direction from
\eqref{eq:binary-remote-mean}.
The new mean is
$(1-\rho-t)v+(\rho+t)[M+d(t)]=(1-\rho)v+\rho M$.
The continuation retains its bottom value and
endpoint virtual values, while
$z'_0=-(B-z)/(1-\rho)<0$.
Thus, for small $t>0$, the atom's virtual value
remains positive and lies strictly below the
continuation's bottom virtual value.
The lower revenue contact is unchanged, so the
modification is feasible even if $v=a$ or an
excluded group lies below the atom.

The resulting conditional objective is
\[
 (\rho+t)J(\varphi+d(t)\eta)
 +(\rho+t)(B-v)q(z_t)
 +\beta(1-\rho-t)\pi(z_t).
\]
The second term is the additional rent generated
by the atom's contract. Using $DJ[\eta]=\theta=q(z)$,
the derivative at $t=0$ is
$J-\theta(M-B)-\rho(B-z)^2q'(z)
 -\beta\{(B-z)q(z)+\pi(z)\}$.
By \eqref{eq:reduced-continuation-value}, this equals
$(1-\rho)(B-z)^2q'(z)>0$, contradicting optimality.
\end{proof}

\subsubsection{Exclusion and a lower trading atom}
\label{app:binary-exclusive}

\begin{lemma}[Exclusion and distorted inclusion]
\label{thm:notrade-trading-exclusive}
An optimal posterior distribution for a binary prior cannot
contain both a positive-mass excluded group and a lower
trading atom.
\end{lemma}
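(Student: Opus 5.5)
By \Cref{lem:binary-exclusion-floor}, any excluded group is an atom at $a$ separated by a gap from the trading support. By \Cref{lem:binary-nonempty-continuation} and the argument in its proof, a lower trading atom at some $v$ sits directly above it, on the single positive-intercept trading face $[\ell,s]$ (upper-tail ranks), and is followed by the gap and the regular upper component. Suppose both are present at an optimum. The plan is to write the configuration in finitely many parameters and split into cases according to the virtual value $z_0$ of the excluded face. On that face the concavified revenue curve is the chord from $(s,sv)$ to $(1,a)$, so $z_0=(a-sv)/(1-s)\le0$, and $z_0=0$ is the borderline case of the seller's incentive comparison in Figure~\ref{fig:exclusion-incentive}.

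\emph{Case $z_0=0$.} The excluded face and the trading face are adjacent, with slopes $z>Z=0$. Both carry constant posterior values: $v$ on the trading face by \Cref{lem:active-face-purity}, and $a<v$ on the excluded face. I would check that the excluded face has positive intercept. If $a=0$, slope zero together with $r(1)=0$ would force $r(s)=0$, contradicting $r(s)=sB'>0$ for the trading bottom value $B'$. So $a>0$ and the intercept is $a>0$. \Cref{lem:binary-adjacent-elimination} then applies with $Z=0$, which it explicitly allows, and gives a mean-preserving strict improvement. This contradicts optimality.

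\emph{Case $z_0<0$.} Normalize the full market and fix the regular upper component, with probability $\sigma$, bottom $B$, mean $M$ and stand-alone payoff $J$. Let the trading atom have mass $\rho$ and value $v$; the excluded mass is $1-\sigma-\rho$. Contact at the atom's lower endpoint gives
\[
\rho z=(\sigma+\rho)v-\sigma B .
\]
The objective is $\sigma J+\sigma(B-v)q(z)+\rho\beta\pi(z)$, and the mean constraint is $\sigma M+\rho v+(1-\sigma-\rho)a=m$, that is, $\rho(v-a)=k$ for a constant $k$. Because $z_0<0$ strictly and $0<z<\bar z$ by \Cref{lem:strict-attachment}, small changes in $(\rho,v)$ along this constraint keep the excluded group excluded and keep the face structure. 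By \Cref{lem:exact-chord-collapse}-type reasoning, the concavified revenue curve is therefore given by the same three-piece formula. This yields a one-parameter feasible family $v\mapsto\rho(v)=k/(v-a)$ with
\[
z(v)=v-\frac{\sigma(B-v)(v-a)}{k}.
\]
I could also let $\sigma$ vary, compensating through the interior direction $\eta$ of \eqref{eq:binary-remote-mean} at shadow price $\theta$, which gives a second direction if needed.

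At the optimum the first derivative along the family must vanish, because the direction is feasible in both signs. The main step, and the one I expect to be hardest, is to show that the second derivative is strictly positive there, so the point is a strict local minimum along the path. My plan for it is as follows. Differentiate $g(v)=\sigma(B-v)q(z(v))+\beta k\pi(z(v))/(v-a)$ twice. Use the first-order condition to eliminate $q'(z)$-terms. Then sign the remainder using \eqref{eq:rent-curvature}, $(2-\beta)q'-(y-z)q''>0$, together with the inequalities $a<z_0$-type relations $v>z$ and $B>v$, and $q(z)\le\theta$ from \eqref{eq:lower-quality-comparison}. If the one-parameter family does not produce a clean sign, I would instead use the two-dimensional variation $(v,\sigma)$ with the $\theta$-compensation. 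I would then exhibit a combined direction with zero first-order effect, and take its second-order term via \eqref{eq:reduced-continuation-value}, which expresses $J-\theta(M-B)$ in closed form. Either way the strict second-order gain contradicts optimality, proving the lemma.
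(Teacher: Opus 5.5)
Your borderline case $z_0=0$ is handled as in the paper: the two lower faces are adjacent, pure, and have positive intercepts, so \Cref{lem:binary-adjacent-elimination} applies with $Z=0$.

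The gap is in the case $z_0<0$, at the step you flag as hardest. Along your path, the continuation's full-market mass $\sigma$ is fixed and the excluded mass absorbs the mean. Contact gives $\rho(v-z)=\sigma(B-v)$, so the objective is $\sigma J+\rho F_{v,\beta}(z)$. Write $\widehat M,\widehat J$ for the mean and weighted payoff conditional on trade. Take a candidate satisfying all first-order conditions:
\begin{itemize}
\item stationarity in the continuation's mean gives $\widehat J=\theta(\widehat M-a)$;
\item stationarity in $(v,z)$, with the excluded mass adjusting, gives $\partial_zF_{v,\beta}(z)=\delta(B-v)/(B-z)$, where $\delta=\theta-q(z)\ge0$.
\end{itemize}
At such a candidate, the second derivative along your path is $\rho$ times
\[
 2q'(z)z'+F_{v,\beta}''(z)(z')^2+\frac{2\delta(v-z)}{(B-z)(v-a)},
 \qquad
 z'=1-\frac{v-z}{v-a}+\frac{v-z}{B-v}.
\]
The first two terms form a concave quadratic in $z'$ that vanishes at $z'=0$. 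The expression is therefore negative whenever $z'$ is large (a support gap $B-v$ small relative to $v-z$) and $\delta$ is small. Nothing in the optimality conditions rules this out.

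For instance, take $c(x)=x^2/2$ and $\beta=0$. Use the continuation and matched atom of \Cref{lem:mult-quadratic-smooth-tail} at $t=10$ with $\theta=\vartheta_I(10)$, and choose $a$ so that $\widehat J=\theta(\widehat M-a)$. Every first-order condition then holds, with $z_0<0$ for a nonempty range of $m$. Yet $z'\approx1.92$, and the bracket is about $-3.3$. Along your curve this candidate is a strict local maximum, so no contradiction follows.

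Your fallback does not repair this as stated. Compensation through the direction $\eta$ of \eqref{eq:binary-remote-mean} is controlled only to first order ($DJ[\eta]=\theta$ with an $o(\epsilon)$ remainder), so it cannot deliver a second-order term. Likewise, \eqref{eq:reduced-continuation-value} is a level identity, not a second-order expansion.

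The missing idea is to hold the normalized continuation fixed and use both exact degrees of freedom $(v,z)$, letting the trade probability restore the mean exactly, which $z_0<0$ permits.
\begin{itemize}
\item The objective is then exactly $(m-a)\widehat J/(\widehat M-a)$.
\item Freeze $\lambda=\widehat J/(\widehat M-a)$ and set $\mathcal A=\widehat J-\lambda(\widehat M-v)$. Optimality is equivalent to $\mathcal A(v,z)-\lambda(v-a)$ having a local maximum of value zero at the candidate, so $D^2\mathcal A$ must be negative semidefinite.
\item Partial regularization of the atom's face gives $\delta=\lambda-q(z)\ge0$.
\item Since $\mathcal A_{vv}=2\delta/(B-z)$, negative semidefiniteness forces $\delta=0$.
\item At $\delta=0$, $\mathcal A_{vz}=(B-v)q'(z)/(B-z)>0$, so $\det D^2\mathcal A=-\mathcal A_{vz}^2<0$, a contradiction.
\end{itemize}
Your one-parameter path is a single curve in this two-dimensional family. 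Whether its tangent lies in the improving cone of this indefinite Hessian depends on the configuration.
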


\begin{proof}
Suppose both groups exist. The preceding structural results
place the trading atom between the excluded atom at $a$
and the regular continuation. Let $v>a$ and $z>0$ be the
trading atom's posterior value and virtual value, and let
$B>v$ be the continuation's lower support point.
Write $\sigma\in(0,1)$ for the probability of trade and
$\rho$ for the share of trading buyers in the continuation.
The revenue contacts give
$\rho=(v-z)/(B-z)\in(0,1)$ and $Z=(a-\sigma v)/(1-\sigma)\le0$,
where $Z$ is the excluded group's virtual value.
If $Z=0$, the two lower faces are adjacent and pure,
with positive intercepts, so
\Cref{lem:binary-adjacent-elimination} gives a strict
improvement. Hence $Z<0$.

Fix the normalized continuation, with mean $M$ and
weighted payoff $J$ when screened on its own.
The mean and weighted payoff conditional on trade are
$\widehat M=(1-\rho)v+\rho M$ and $\widehat J=\rho J+\rho(B-v)q(z)+\beta(1-\rho)\pi(z)$.
Preserving the prior mean requires
$\sigma=(m-a)/(\widehat M-a)$, so the full objective is
$(m-a)\widehat J/(\widehat M-a)$.
Define
$\lambda=\widehat J/(\widehat M-a)>0$,
and hold this value of $\lambda$ fixed below.

We first show that $\lambda\ge q(z)$.
By \Cref{prop:mean-accounting}, partial regularization
of the trading face raises $\widehat M$ by a small
$t>0$ and $\widehat J$ by $q(z)t$.
Choose the filling to retain positive mass at $v$,
so the conditional lower support point is unchanged.
Adjusting the trading probability to
$\sigma(t)=(m-a)/(\widehat M+t-a)$ restores the prior
mean. The excluded virtual value is then
$(a-\sigma(t)v)/(1-\sigma(t))$, which remains negative
for small $t$ because $Z<0$.
The derivative of the full objective is
$\sigma[q(z)-\lambda]$, which cannot be positive.
Thus $\delta:=\lambda-q(z)\ge0$.

Now vary $v,z$, keeping the continuation fixed and
adjusting $\sigma$ to preserve the mean.
The inequalities $a<v<B$ and $Z<0<z$ are strict,
as is $z<\varphi(0)$ by \Cref{lem:strict-attachment}.
Consequently, all sufficiently small changes in
$(v,z)$ preserve the posterior ordering, the revenue
contacts, and the ordering of virtual values.
Both signs of these changes are therefore feasible.

Put $K=J-\lambda(M-B)$ and define
\begin{equation} 
 \mathcal A(v,z)
 =\frac{v-z}{B-z}
       [K+(B-v)(q(z)-\lambda)]
   +\beta\frac{B-v}{B-z}\pi(z)=\widehat J-\lambda(\widehat M-v).
 \label{eq:binary-coexistence-reduced}
\end{equation}
For each nearby $(v,z)$, the full objective satisfies
$\sigma\widehat J-\lambda(m-a)
 =\sigma[\mathcal A(v,z)-\lambda(v-a)]$.
The bracket vanishes at the original distribution.
Since $\sigma>0$, optimality requires it to have a
local maximum there. Hence
$\mathcal A_v=\lambda$, $\mathcal A_z=0$, and
$D^2\mathcal A$ must be negative semidefinite.

Differentiating with $\lambda$ fixed gives
$\mathcal A_{vv}=2\delta/(B-z)$.
Since $\delta\ge0$, the necessary condition
$\mathcal A_{vv}\le0$ forces $\delta=0$.
At this point, using $\mathcal A_z=0$, the mixed
derivative is
$\mathcal A_{vz}=(B-v)q'(z)/(B-z)>0$.
But $\mathcal A_{vv}=0$ then makes the Hessian's
determinant equal to $-\mathcal A_{vz}^{\,2}<0$.
This contradicts negative semidefiniteness and
rules out coexistence.
\end{proof}

\subsection{Completing the proof of \Cref{thm:binary-structure}}
\label{app:main-topology-proof}

\begin{proof}
Existence follows from Section~\ref{sec:model}.
Fix an optimal posterior distribution.
By \Cref{lem:active-face-purity}, every face with
positive slope and positive intercept contains a
single posterior atom.
\Cref{thm:no-reentry-rigorous,lem:binary-successor,lem:binary-adjacent-elimination}
show that at most one such face can occur and that
it lies below the regular upper component. These
arguments also cover countably many faces.
By \Cref{lem:binary-exclusion-floor}, any excluded
mass is concentrated at $a$, with a positive gap
between $a$ and the trading support.
\Cref{lem:binary-nonempty-continuation} establishes
a regular upper component of positive probability
with a nonempty separating interval.

By \Cref{prop:separating-conditions}, posterior
means, virtual values, and qualities are continuous
and strictly increasing on this separating interval.
It ends at a positive atom at one, served at efficient
quality $q(1)=\bar q$. The normalized quantile $Q$
is $C^1$ with $Q'>0$ on the separating interior.
If the upper component has probability $s>0$, the
posterior density there is
$s/Q'(Q^{-1}(y))$, which is positive and continuous.

By \Cref{thm:notrade-trading-exclusive}, exclusion
and a lower trading atom cannot coexist. The lower
part therefore has one of the three stated forms.
For a lower trading face $[\ell,h]$, the positive
intercept makes $U_r$ strictly decreasing along
the face. Its atom has value $U_r(h)<U_r(\ell)$,
where $U_r(\ell)$ is the bottom of the separating
support. This gives the required support gap.
Finally, \Cref{lem:strict-attachment} places the
atom's virtual value strictly below the limiting
virtual value at the start of separation, giving
the strict quality gap.
\end{proof}

\section{Payoff-neutral learning and high prior means}\label{app:payoff-neutral-high-mean}

\subsection{Payoff-neutral learning}
\label{app:neutral-proofs}

\paragraph{Screening with positive initial marginal cost.}
\label{app:neutral-screening-foundation}
We can allow $c'(0)=d\in[0,1)$, retaining the other conditions in
\Cref{ass:cost}. A posterior type has positive gains from
trade precisely when $y>d$, while the seller supplies
positive quality precisely when its ironed virtual value
exceeds $d$. The payoff formulas
\eqref{eq:seller-profit}--\eqref{eq:consumer-surplus}
remain valid, with $q(z)=\pi(z)=0$ for $z\le d$.
For a fixed posterior distribution, an appropriate shift
and rescaling reduces the screening problem to one with
zero initial marginal cost, treating unproductive types
as zero-valued buyers. This also gives weak
continuity of both payoffs from
\citet[Proposition~A.7]{kyriazis2026markets}.

\subsubsection{Reduction to a regular active distribution}
\label{app:neutral-regularization}

After concavification increases the mean, we restore its original value by adjusting the mass on the payoff-neutral type and the active distribution.

\begin{lemma}[A regular active market with a fixed mean]
\label{lem:neutral-regularization}
\label{lem:neutral-original-law}
Let $c^0$ satisfy the baseline technology conditions except that
$(c^0)'(0)=d\in[0,1)$, and let $0\le a\le d$ and $a<m<1$.
For $0\le\beta<1$, every feasible distribution is weakly improved for
$\CS+\beta\Pi$ by a feasible distribution
$(1-\lambda)\delta_a+\lambda G_A$, where $G_A$ is regular
and has positive trade almost surely. If its mean is $M_A$, then
$M_A\ge m$ and $\lambda=(m-a)/(M_A-a)$. Its payoffs are
$\lambda\bigl(\CS(G_A),\Pi(G_A)\bigr)$. Every original optimum
already has this form.
\end{lemma}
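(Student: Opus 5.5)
Fix a feasible $G$ with quantile $Q$ and concavified revenue curve $R$. The seller gives zero quality exactly where $\phi_Q\le d$, and by monotonicity of $\phi_Q$ these excluded ranks form a lower interval $[0,u_0)$. Every posterior type is at least $a\le d$. The plan is to replace $G$ in three steps.

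\emph{Step 1: move the excluded buyers to $a$.} Replace all posterior values on the excluded ranks by $a$. I expect this to leave the trading contracts unchanged, as in \citet[Lemma~B.1]{kyriazis2026markets}. The reason is that the concave envelope above $u_0$ is determined by the active ranks and the contact at $u_0$, while the new lower part is a chord with slope at most $d$. In the neutral screening foundation of Appendix~\ref{app:neutral-screening-foundation}, excluded buyers contribute nothing to either payoff whatever their posterior value. This step weakly lowers the mean by some $\Delta_0\ge0$.

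\emph{Step 2: regularize the active part.} Condition on the trading ranks to get an active distribution $G_1$ with mean $M_1$. Replace $G_1$ by its regularization $G_A$, the distribution whose quantile is $Q^r=R_1/(1-u)$. By \Cref{prop:mean-accounting} and the regularization identity, this preserves virtual values by rank and profit, and weakly raises consumer surplus by $\sum_I q(z_I)d_I$. Its mean rises to $M_A=M_1+\sum_I d_I\ge M_1$.

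Two points need checking. First, $G_A$ trades almost surely: the virtual values are unchanged and exceed $d$ on the active ranks. Second, the combination $(1-\lambda)\delta_a+\lambda G_A$ again leaves the active menu intact, since $a\le d$ and the cited lemma applies. The payoffs therefore scale by $\lambda$, so
\[
J_\beta\bigl((1-\lambda)\delta_a+\lambda G_A\bigr)=\lambda J_\beta(G_A).
\]

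\emph{Step 3: restore the mean.} The original mean satisfies $m=(1-s)\bar a+sM_1$, where $s$ is the original trading share and $\bar a\ge a$ is the original mean of the excluded group. With $\lambda=(m-a)/(M_A-a)$ the new distribution has mean $m$. It is a probability mixture because
\[
m-a=(1-s)(\bar a-a)+s(M_1-a)\le M_A-a
\]
would give $\lambda\le1$. This inequality is not automatic, and it is the main obstacle, because regularization can raise the active mean enough that $\lambda$ might exceed $s$ or the mixture might be improper.

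I would handle it by comparing $\lambda$ with $s$. If $\lambda\ge s$, then $\lambda J_\beta(G_A)\ge sJ_\beta(G_1)=J_\beta(G)$, since $J_\beta(G_A)\ge J_\beta(G_1)\ge0$. If instead $\lambda<s$, the loss of trading mass must be offset by the surplus gain from regularization. The available inequality is $J_\beta(G_A)\ge J_\beta(G_1)+\sum_I q(z_I)d_I$, and this alone does not dominate the loss $(s-\lambda)J_\beta(G_A)$. So I would avoid full regularization and use the linear program \eqref{eq:conditional-mean-problem}. Keep the active curve fixed and regularize only part of each face, in the order of lowest quality first, removing just enough mean to keep the active probability equal to $s$. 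The required mean removal is within capacity because $a\le d$ sits below every active value.

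This yields a weakly better $G$ whose faces are only partly filled. To reach regularity, I would then apply the argument of \Cref{lem:active-face-purity} together with the comparison of marginal value $q(z_I)$ against the ratio $J_\beta(G_A)/(M_A-a)$. The point is that raising the mean on a face gains $q(z_I)$ per unit, while shifting that mean into extra active mass gains the ratio per unit. Showing that the ratio dominates, so that full regularization plus a rescaled $\lambda$ wins, is the delicate point. I would try the bound
\[
J_\beta(G_A)\ge\int (Q^r-\phi)\,q(\phi)\ge q(z_I)(b_I-a)
\]
on the face portions.

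\emph{Every optimum already has this form.} At an optimum each replacement must hold with equality. Suppose the excluded values were not all at $a$. Then Step 1 lowers the mean strictly, which frees mass for the active component, and as in \Cref{lem:binary-exclusion-floor} the objective rises by the factor $s'/s>1$ whenever $J_\beta(G_A)>0$. Suppose instead a face were strictly ironed with $d_I>0$. Then the regularization gain is strict unless $q(z_I)=0$, and on the active ranks $q(z_I)>0$. Either case contradicts optimality.
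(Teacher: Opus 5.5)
Your Step 1 is fine. The passage to a \emph{regular} active part, however, has a genuine gap, and the route you chose cannot close it. After Step 1, the excluded mass sits at $a$ and $m-a=s(M_1-a)$. Fully regularizing the active law and then rescaling gives $\lambda J_\beta(G_A)\ge sJ_\beta(G_1)$ if and only if
\[
 (M_1-a)\sum_I q(z_I)d_I\ \ge\ J_\beta(G_1)\sum_I d_I ,
\]
that is, only if the average quality on the faces you fill exceeds the average objective per unit of mean. This fails when a low-quality face carries most of the capacity.

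For example, take $a=d=0$, $\beta=0$, $c(x)=x^2/2$, and $G=u_L\delta_{0.3}+(1-u_L)F$, where $F$ is uniform on $[3/4,1]$ and $u_L=45/74$. The atom then has virtual value $z=0.01$ and everyone trades. We get $m\approx0.5253$ and $\CS(G)\approx0.03442$, and the atom's face has capacity $\kappa\approx0.0953$. Full regularization plus rescaling yields only $\frac{m}{m+\kappa}\bigl(\CS(G)+z\kappa\bigr)\approx0.02994$.

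Your fallbacks do not rescue this case. Here $s=1$ and $\Delta_0=0$, so partial regularization at fixed active share leaves $G$ unchanged and never produces a regular active law. For full regularization plus rescaling to win, you need $q(z_I)\ge J_\beta/(M-a)$ on the filled faces. A lower bound on $J_\beta$ such as $J_\beta(G_A)\ge q(z_I)(b_I-a)$ points the wrong way, and in the example the needed inequality is false ($0.01$ against roughly $0.066$). Your equality argument for faces fails for the same reason: a strict regularization gain contradicts optimality only if the extra mean can be removed without an offsetting loss. (Separately, $\lambda\le1$ is automatic, since $\bar a\le M_1\le M_A$.)

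The missing idea is where to remove the excess mean. Rescaling removes it proportionally from all ranks, at the average rate $J_\beta(G_A)/(M_A-a)$. Leaving it on a face costs $q(z_I)$ per unit. The paper instead removes it from the \emph{lowest} ranks, where quality is smallest.

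In normalized units $X=(Q-a)/(1-a)$ and $\mu=(m-a)/(1-a)$, the paper regularizes the whole quantile. It then sets $X^*=0$ on $[0,\ell)$ and $X^*=X^r$ above, with $\int_\ell^1X^r=\mu$. The concavified curve becomes the chord from $(0,0)$ to $(\ell,R(\ell))$ followed by $R$. The bottom ranks are therefore excluded, and virtual values above $\ell$ are unchanged.

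Let $T(v)=\int_v^1(X^*-X)\,\dd u\ge0$ and let $w$ be the original quality by rank, which is nondecreasing. Then $E=\int_0^1(X^*-X)w\,\dd u=\int T\,\dd w\ge0$. Let $h\ge\ell$ be the new exclusion cutoff. Excluding the ranks below $h$ adds $B_0=\int_0^h\psi q^e(\psi)\,\dd u$ to consumer surplus and removes $D_0=\int_0^h\pi^e(\psi)\,\dd u\le B_0$ from profit. Hence $J_\beta$ rises by at least $E+(1-\beta)B_0\ge0$.

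Moving the remaining excluded values to $a$ and raising the active share then gives the stated form. The equality case $E=B_0=D_0=0$, which uses $\beta<1$, yields both the regularity of the original active part and the concentration of excluded mass at $a$. In the example above, this construction raises consumer surplus slightly instead of lowering it.
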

\begin{proof}
Normalize the lowest value to zero:
\[
 X=\frac{Q-a}{1-a},\qquad
 c^e(x)=\frac{c^0(x)-ax}{1-a},\qquad
 \mu=\frac{m-a}{1-a}.
\]
Since $c^0(x)\ge dx\ge ax$, the transformed cost is nonnegative,
strictly convex, and has initial marginal cost
$\delta=(d-a)/(1-a)\ge0$. The transfer change
$t=ax+(1-a)t^e$ multiplies utility and profit by $1-a$,
preserving incentives and the seller's choice so $q^e=\pi^e=0$
on $(-\infty,\delta]$.

\paragraph{Regularization with the same mean.}
Let $R$ be the concavified revenue of $X$, and set
$X^r=R/(1-u)$ and $\psi=-R'_+$. Concavity and $R(1)=0$
make $X^r$ nondecreasing. The raw revenue curve lies weakly below
its concavification, which satisfies $R\le1-u$. Hence
$X\le X^r\le1$. Choose $\ell$ with $\int_\ell^1X^r=\mu$,
which exists by continuity of the tail integral. The quantile $X^*$
equal to zero below $\ell$ and $X^r$ above has mean $\mu$.
Its concavified revenue is the chord from $(0,0)$ to
$(\ell,R(\ell))$, followed by $R$ (we can use $R$ alone if $\ell=0$).
Indeed, the chord's slope is at least $R'_+(\ell)$ because
$R(0)\ge0$, and every majorant must dominate both the chord
and the unchanged upper raw curve.

Let $h=\max\{\ell,\sup\{u:\psi(u)\le\delta\}\}$,
taking the supremum of the empty set to be zero.
The new menu supplies zero quality below $h$
and keeps the old virtual values above it. Define
\[
 T(v)=\int_v^1(X^*-X)\,\dd u\ge0,
 \qquad T(0)=T(1)=0.
\]
The sign follows from $X^r\ge X$ for $v\ge\ell$, and from
equal means and $X\ge0$ otherwise. For the old quality schedule
$w=q^e(\psi)$, integration by parts gives
\[
 E:=\int_0^1(X^*-X)w\,\dd u=\int_{[0,1]}T\,\dd w\ge0.
\]
Here $w$ is bounded and nondecreasing. Continuity of $T$ permits
jumps in $w$. The payoff formulas now give
\begin{equation}
 \Delta\CS^e=E+B_0,\qquad \Delta\Pi^e=-D_0,
 \quad
 B_0=\int_0^h\psi q^e(\psi)\,\dd u,
 \quad D_0=\int_0^h\pi^e(\psi)\,\dd u.
 \label{eq:neutral-equality-accounting}
\end{equation}
Since $\pi^e(z)=zq^e(z)-c^e(q^e(z))$ and $c^e\ge0$,
$0\le D_0\le B_0$, so weighted welfare rises by at least
$E+(1-\beta)B_0\ge0$.

\paragraph{Concentrating the residual.}
In original units, a virtual-value profile equal to some $v\in(d,1)$
on a short initial interval and one thereafter has positive consumer
surplus and mean above $m$. Mixing it with $a$ gives a feasible
positive objective. Thus, if the constructed distribution has zero
objective, this alternative proves the claim.

Otherwise normalize its positive-trade tail, with mean $M_e$,
payoffs $(C_e,P_e)$, and share $s=1-h$. Excluded values
lie between zero and the active values, so $sM_e\le\mu\le M_e$.
Replacing them by zero and raising the active share to
$\lambda=\mu/M_e\ge s$ preserves the mean. If $\lambda<1$, the atom at zero has virtual value
$-\lambda B_e/(1-\lambda)\le0$, where $B_e$ is the lowest
posterior value in the active distribution, and remains
excluded. Scaling the active market's probability by
$\lambda$ scales both its revenue and the length of its
rank interval by the same factor, leaving its virtual
values unchanged. Its conditional payoffs therefore
remain $(C_e,P_e)$, so aggregate payoffs are
$\lambda(C_e,P_e)$. Since the active share has weakly
increased, the designer's objective weakly increases.
If $\lambda=1$, only the active distribution remains.
Returning to the original units gives
$M_A=a+(1-a)M_e$ and $\lambda=(m-a)/(M_A-a)$.

\paragraph{Equality identifies every optimum.}
An optimum has positive objective, and both improvements must be
equalities. Since $\beta<1$,
\eqref{eq:neutral-equality-accounting} forces $E=B_0=D_0=0$.
Positive quality below $h$ would give $\psi q^e(\psi)>0$,
so the original quality was already zero there. Above $h$ it is
positive and unchanged. Hence,
\[
 0=E=\int_h^1(X^r-X)w\,\dd u
 \quad\Longrightarrow\quad X=X^r\quad\text{a.e. on }(h,1).
\]
The original active distribution was therefore regular.
Since its conditional objective is strictly positive,
$\lambda>s$ would strictly improve welfare. Optimality
thus requires $\lambda=s$, giving $sM_e=\mu$. The excluded
buyers must therefore contribute zero to the overall mean.
Because their posterior values are nonnegative, those
values must equal zero almost surely, or $a$ in the
original units.
\end{proof}

\subsubsection{Selecting the active distribution}
\label{app:neutral-active-analysis}
\label{app:neutral-mean-selection}

Maintain \Cref{ass:cost}, including $c'(0)=0$. We use
\Cref{ass:neutral-curvature} only for uniqueness. Let $\Phi$ be
the set of nondecreasing, right-continuous $[0,1]$-valued profiles,
identified almost everywhere. A profile determines a regular distribution by
\[
 Q_\varphi(u)=\frac1{1-u}\int_u^1\varphi(v)\,\dd v,
 \qquad M(\varphi)=\int_0^1Q_\varphi(u)\,\dd u.
\]
We refer to the first equation as \emph{reconstruction} because it reconstructs the quantile function, and hence the distribution, that a virtual value profile $\varphi$ induces.
Write $J_\beta(\varphi)$ for its weighted payoff from
\eqref{eq:seller-profit}--\eqref{eq:consumer-surplus}, with
$0\le\beta<1$. Helly's theorem gives sequential compactness of
$\Phi$ in $L^1$ and dominated convergence in reconstruction and
payoffs gives continuity of $M$ and $J_\beta$.

Introduce a price on the mean, represented by a multiplier $\theta\in\R$, and let
\begin{equation}
 F_\beta(\theta)=\max_{\varphi\in\Phi}
       L_{\beta,\theta}(\varphi),\qquad
 L_{\beta,\theta}(\varphi)=J_\beta(\varphi)-\theta M(\varphi).
 \label{eq:neutral-weighted-price}
\end{equation}
The zero and unit profiles give point masses at zero and one,
respectively. A maximizing profile also maximizes welfare at its
own mean. The next lemma bounds its lowest quality and ensures
that the highest welfare per unit of mean is attained away from zero.

\begin{lemma}[Mean prices and the critical ratio]
\label{lem:neutral-price-basics}
Under \Cref{ass:cost}, fix $0\le\beta<1$ and set
\[
 \rho_\beta=\sup_{M(\varphi)>0}
                         \frac{J_\beta(\varphi)}{M(\varphi)}.
\]
Then $\beta\pi(1)<\rho_\beta\le\pi(1)$. The function $F_\beta$
is finite, convex, nonincreasing, and $1$-Lipschitz. It is positive
exactly when $\theta<\rho_\beta$ and equals zero otherwise.
Every nonzero maximizing profile at a price $\theta$ satisfies
\begin{equation}
 \theta\le
 \frac{[Q_0-\varphi_0]q(\varphi_0)+\beta\pi(\varphi_0)}{Q_0}
 \le q(\varphi_0),
 \qquad Q_0=Q_\varphi(0),\quad\varphi_0=\varphi(0).
 \label{eq:neutral-lowest-quality-price}
\end{equation}
The ratio $\rho_\beta$ is attained. Every profile attaining it
has bottom virtual value and mean at least $c'(\rho_\beta)>0$.
The unit profile is not optimal at any finite price.
\end{lemma}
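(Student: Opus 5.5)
The plan is to prove everything from three ingredients: a pointwise surplus bound, the affine-in-$\theta$ structure of $L_{\beta,\theta}$, and one \emph{truncation} variation that yields \eqref{eq:neutral-lowest-quality-price}. The attainment of $\rho_\beta$ and the bound $c'(\rho_\beta)$ will then follow from \eqref{eq:neutral-lowest-quality-price} together with compactness of $\Phi$.

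\emph{Bounds on $\rho_\beta$ and properties of $F_\beta$.}
\begin{itemize}
\item For the upper bound I will use $\beta<1$ to write $J_\beta\le\TS\le\int_0^1\pi(Q_\varphi(u))\,\dd u$, where the last step is the pointwise first-best bound. Convexity of $\pi$ and $\pi(0)=0$ give $\pi(y)\le y\pi(1)$, hence $J_\beta\le\pi(1)M$.
\item The unit profile gives $J_\beta=\beta\pi(1)$ with $M=1$, so $\rho_\beta\ge\beta\pi(1)$.
\item For strictness I will perturb the unit profile to $\varphi_\varepsilon=v\mathbf 1_{[0,\varepsilon)}+\mathbf 1_{[\varepsilon,1]}$ with $v<1$ close to one. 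By reconstruction, $M$ falls by $O(\varepsilon)$. To first order, $J_\beta$ gains $\varepsilon[(1-v)q(v)-\beta(\pi(1)-\pi(v))]$ plus the rents generated for higher ranks.
\item Comparing $J_\beta-\beta\pi(1)M$ to first order, the leading term is of order $(1-\beta)(1-v)q(1)\varepsilon>0$. This gives $\rho_\beta>\beta\pi(1)$. The same computation, for any fixed $\theta$, shows that $L_{\beta,\theta}$ strictly increases, so the unit profile is never a maximizer.
\item $F_\beta$ is a supremum of affine functions of $\theta$ with slopes $-M(\varphi)\in[-1,0]$ and bounded intercepts. Hence it is finite, convex, nonincreasing and $1$-Lipschitz; the maximum is attained by compactness and continuity.
\item The zero profile gives $F_\beta\ge0$. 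Since $M(\varphi)=0$ forces $\varphi=0$ a.e., we get $F_\beta(\theta)>0$ if and only if some $\varphi$ has $J_\beta>\theta M$, that is, if and only if $\theta<\rho_\beta$.
\end{itemize}

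\emph{The key variation for \eqref{eq:neutral-lowest-quality-price}.} Let $\varphi$ be a nonzero maximizer at price $\theta$. For small $u$, the conditional distribution above rank $u$ is regular with profile $t\mapsto\varphi(u+(1-u)t)$, because tails of regular distributions stay regular. Write $g(t)=[Q(t)-\varphi(t)]q(\varphi(t))+\beta\pi(\varphi(t))$. The tail's value is then
\[
L_u=\frac{\int_u^1[g(t)-\theta Q(t)]\,\dd t}{1-u}.
\]
Optimality gives $L_u\le L_0=F_\beta(\theta)$. Differentiating at $u=0^+$, using right-continuity of $\varphi$ and continuity of $Q$, gives $L_0-[g(0)-\theta Q_0]\le0$. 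Hence $g(0)-\theta Q_0\ge F_\beta(\theta)\ge0$, which is the first inequality of \eqref{eq:neutral-lowest-quality-price} because $Q_0\ge\varphi_0>0$ for a nonzero profile. The second inequality follows from $\pi(z)\le zq(z)$ and $\beta\le1$, which give $g(0)\le Q_0q(\varphi_0)$. The main technical point is justifying the one-sided derivative when $\varphi$ is only right-continuous. If that is delicate, I will use difference quotients and the inequality $L_u\le L_0$ directly, which needs only the limit as $u\downarrow0$.

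\emph{Attainment of $\rho_\beta$.} Take prices $\theta_n\uparrow\rho_\beta$. Since $F_\beta(\theta_n)>0$, the maximizers $\varphi_n$ are nonzero and satisfy $J_\beta(\varphi_n)/M(\varphi_n)>\theta_n$. By \eqref{eq:neutral-lowest-quality-price}, $q(\varphi_n(0))\ge\theta_n$, so $M(\varphi_n)\ge Q_{\varphi_n}(0)\ge\varphi_n(0)\ge c'(\theta_n)$. These bounds are uniform away from zero. Helly compactness then gives a limit profile with positive mean, and continuity of $M$ and $J_\beta$ shows that its ratio equals $\rho_\beta$.

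Conversely, any profile attaining $\rho_\beta$ has $L_{\beta,\rho_\beta}=0=F_\beta(\rho_\beta)$, so it is a nonzero maximizer at price $\rho_\beta$. Then \eqref{eq:neutral-lowest-quality-price} gives $q(\varphi_0)\ge\rho_\beta$, that is, $\varphi_0\ge c'(\rho_\beta)$. Also $M\ge Q_0\ge\varphi_0$, so both the bottom virtual value and the mean are at least $c'(\rho_\beta)>0$.
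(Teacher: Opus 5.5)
Your proposal is correct and follows the paper's proof essentially step for step: the same first-best and chord bound, the same lowering of the unit profile on a short initial interval (you fix the depth and shrink the length, the paper does the reverse), the same truncate-and-renormalize variation for \eqref{eq:neutral-lowest-quality-price}, and the same limit along $\theta_n\uparrow\rho_\beta$. Two small repairs. First, justify dividing by $Q_0$ through $Q_0=\int_0^1\varphi(u)\,\dd u>0$, not through $\varphi_0>0$, which is not known a priori for a nonzero profile. Second, note that your perturbation lowers the mean by only $(1-v)\int_0^\varepsilon[-\log(1-u)]\,\dd u=O(\varepsilon^2)$; this is what lets the first-order gain $\varepsilon[(1-v)q(v)-\beta(\pi(1)-\pi(v))]$ outweigh the loss $|\theta|\,|\Delta M|$ when $\theta<0$ (that gain already includes the rents, so nothing should be added to it).
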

\begin{proof}
Compactness and continuity imply that a maximizer exists. Interim participation,
first-best surplus, and the convex chord bound give
\[
 0\le J_\beta(\varphi)\le\CS+\Pi
 \le\int_0^1\pi(Q_\varphi(u))\,\dd u
 \le\pi(1)M(\varphi).
\]
Thus, $\rho_\beta\le\pi(1)$. The value function $F_\beta$ is the pointwise maximum of
affine functions of $\theta$ with slopes in $[-1,0]$.
It is therefore convex, nonincreasing, and $1$-Lipschitz.

Lower the unit profile by $\varepsilon$ on $[0,\ell)$ only.
Writing $\Lambda(\ell)=\int_0^\ell[-\log(1-u)]\,\dd u=O(\ell^2)$,
the derivative of $L_{\beta,\theta}$ at zero is
$(1-\beta)\bar q\,\ell+(\theta-\bar q)\Lambda(\ell)>0$
for small $\ell$. This feasible monotone change excludes the unit
profile at every finite price. At $\theta=\beta\pi(1)$ it improves
on value zero, proving $\rho_\beta>\beta\pi(1)$. The sign of
$F_\beta$ follows from the ratio definition and the zero profile.

For a nonzero maximizer, delete its bottom $\varepsilon$ ranks and
renormalize. The resulting regular quantile is
$v\mapsto Q_\varphi(\varepsilon+(1-\varepsilon)v)$. Optimality gives
$\int_0^\varepsilon
 \{(Q_\varphi-\varphi)q(\varphi)+\beta\pi(\varphi)
                  -\theta Q_\varphi\}\,\dd u
          \ge\varepsilon F_\beta(\theta)$.
Divide by $\varepsilon$ and pass to right limits. Using
$Q_0=\int_0^1\varphi>0$, $F_\beta\ge0$, and
$\pi(z)\le zq(z)$ gives \eqref{eq:neutral-lowest-quality-price}.

Finally, take positive $\theta_n\uparrow\rho_\beta$ and maximizers
$\varphi_n$. Their values are positive, and the bottom-quality bound gives
$M(\varphi_n)\ge\varphi_n(0)\ge c'(\theta_n)$. These marginal
costs are well-defined since $\rho_\beta\le\pi(1)<\bar q$.
Compactness and continuity yield a limit with mean at least
$c'(\rho_\beta)>0$ and $J_\beta-\rho_\beta M=0$, attaining the
ratio. Every ratio maximizer solves the problem at
$\rho_\beta$, so the same bound applies to its bottom virtual
value and mean.
\end{proof}

\subsubsection{Uniqueness at a price on mean}
\label{app:neutral-price}

Put $\alpha=1-\beta>0$. For feasible bounded changes $h$
supported where $\varphi$ is bounded away from zero,
\Cref{ass:cost} alone gives first variation $\int H_\beta h$
for $J_\beta$, where on those ranks
\begin{equation}
 H_\beta(u)=\int_0^u\frac{q(\varphi(v))}{1-v}\,\dd v
       +(Q_\varphi(u)-\varphi(u))q'(\varphi(u))
       -\alpha q(\varphi(u)).
 \label{eq:binary-cited-first-variation}
\end{equation}
The first-variation formula is the private-value specialization
of \citet[Lemma~A.10]{kyriazis2026markets}. The corresponding
quantile response follows from
\citet[Lemma~A.9]{kyriazis2026markets}. Here reconstruction
is linear, so
$\Delta Q(u)=(1-u)^{-1}\int_u^1h(v)\,\dd v$ and
$\|\Delta Q\|_\infty\le\|h\|_\infty$.
The mean is also linear in $\varphi$. Thus the first
variation of $L_{\beta,\theta}$ is $\int_0^1 H(u)h(u)\,\dd u$,
where $H(u)=H_\beta(u)-\theta[-\log(1-u)]$.

In this private-value setting, the remainder is
$o(\|h\|_1)$ as $\|h\|_\infty\to0$.
To see this, keep virtual values on the perturbed ranks
uniformly bounded away from zero. Uniform continuity of
$q'$ and $\pi'=q$ then bounds the integrated Taylor
remainders by $o(1)\|h\|_1$. Products of quantile and
quality changes contribute at most
$O(\|h\|_\infty\|h\|_1)$. Outside those ranks, quality
is unchanged and the payoff change is linear in
$\Delta Q$ while the mean term is linear everywhere.
For the uniqueness analysis that follows, we additionally
impose \Cref{ass:neutral-curvature}.

\begin{lemma}[Uniqueness at a price on mean]
\label{lem:neutral-price-unique}
For every finite $\theta$, \eqref{eq:neutral-weighted-price} has at
most one nonzero maximizing profile. If one exists, it is positive
and strictly increasing on $[0,b)$ for some $b\in(0,1)$ and equals
one on $[b,1]$. Its posterior quantile and quality are continuous,
strictly increasing on the separating interval, and continuously
differentiable in its interior.

Set $\tau=-\log(1-u)$, $T=-\log(1-b)$,
$x=q(\varphi)$, and
$A(\tau)=\int_0^\tau[x(v)-\theta]\,\dd v$.
On the separating interval,
\begin{align}\label{eq:neutral-static-control}
 Q=c'(x)+c''(x)(\alpha x-A),\qquad
 \dot A=x-\theta,\qquad \dot Q=Q-c'(x), \end{align}
with $A(0)=0$, $0\le A\le\alpha x$, and
\begin{equation}
 (A(T),Q(T),x(T))=(\alpha\bar q,1,\bar q),
 \qquad x(\tau)>\theta,\quad\tau>0.
 \label{eq:neutral-boundaries}
\end{equation}
\end{lemma}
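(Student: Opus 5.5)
Fix a nonzero maximizer $\varphi$ of $L_{\beta,\theta}$ in \eqref{eq:neutral-weighted-price}. The plan has three stages: derive its structure from first-order conditions, rewrite them as the system \eqref{eq:neutral-static-control}--\eqref{eq:neutral-boundaries}, and obtain uniqueness from a monotonicity (shooting) argument on the terminal-value problem in $\tau$.

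\emph{Structure.} The structure comes from first-order conditions, mirroring the binary separating analysis but without the multiplier $\xi$. There is no lower-support constraint here, because the mean is priced through $\theta$ and the bottom contact is free. The variations $h\circ\varphi$ from \Cref{lem:binary-two-multipliers} then need no compensation. On any interval where $\varphi$ is strictly increasing and positive, they give $H(u)=H_\beta(u)-\theta\tau(u)=0$ almost everywhere. Writing $A(\tau)=\int_0^\tau[x-\theta]$ and using $(Q-\varphi)q'(\varphi)=(Q-c'(x))/c''(x)$, this is exactly the static control equation $Q=c'(x)+c''(x)(\alpha x-A)$.

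Several earlier arguments carry over to the remaining structural claims:
\begin{itemize}
\item Positivity of the bottom quality follows from \eqref{eq:neutral-lowest-quality-price}, which gives $q(\varphi_0)\ge\theta$.
\item Absence of flat interior pieces (bunching) uses the perturbation of \Cref{lem:active-face-purity}. With $\theta$ priced, no mean compensation is required, so the argument is simpler.
\item The inequality $x>\theta$ follows from the pooling comparison of \Cref{lem:shadow-quality}, with $\theta$ now the explicit price.
\item Existence of the terminal atom at one with $A(T)=\alpha\bar q$ and continuous fit is \Cref{lem:binary-top-fit} verbatim. In particular, $A\to\infty$ would contradict $A\le\alpha x$.
\end{itemize}
The bound $A\ge0$ follows from $A(0)=0$ and $\dot A=x-\theta>0$. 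The bound $A\le\alpha x$ is equivalent to $Q\ge c'(x)=\varphi$. Smoothness follows from the implicit-function argument of \Cref{lem:binary-interior-dynamics}. That argument needs the map $x\mapsto c'(x)+c''(x)(\alpha x-A)$ to be strictly increasing. Its derivative is $c''(x)(1+\alpha)+c'''(x)(\alpha x-A)$, which, since $0\le\alpha x-A\le \alpha x$, is at least $c''(x)[1+\alpha+\alpha\, xc'''/c'']$ when $c'''<0$. This is positive under \Cref{ass:neutral-curvature}, because $xc'''/c''>-2$ and $\alpha\le1$ give $1+\alpha-2\alpha\ge 0$, with strictness from the strict inequality. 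I expect this to replace \Cref{ass:curvature} in the argument.

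\emph{Uniqueness.} I would parametrize candidate solutions by their bottom state. Since $A(0)=0$, the equation at $\tau=0$ determines $Q(0)$ from $x(0)=x_0$, so the free parameter is the single scalar $x_0$, with $x_0>\theta$. Integrating forward yields a one-parameter family. Stopping requires hitting $(A,Q,x)=(\alpha\bar q,1,\bar q)$ exactly at the same time $T$ at which $x$ reaches $\bar q$.

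An equivalent and easier route is backward from the terminal state, as in \Cref{prop:separating-conditions}. For fixed $\theta$, the backward solution from $(\alpha\bar q,1)$ is unique, so profiles differ only in the stopping time $T$, which is pinned down by $A(0)=0$. It therefore suffices to show that $A$ hits zero at most once along the backward path. Traced backward from $T$, $A$ decreases strictly because $\dot A=x-\theta>0$, so it crosses zero at most once. That crossing determines $T$, and with it the whole profile. Hence the nonzero maximizer is unique.

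\emph{Main obstacle.} The delicate point is justifying that \emph{every} nonzero maximizer lies on this backward path. This requires ruling out, globally, interior flat pieces and jumps of $\varphi$ using \Cref{ass:neutral-curvature} rather than \Cref{ass:curvature}, since the Jensen-type pooling bounds earlier relied on concavity of $F_{y,\beta}$. To handle this, I would first check whether the monotone-map property established above already suffices for the pooling comparisons. If it does not, I would redo the strict-concavity estimate of the local integrand along the admissible region $0\le \alpha x-A$.
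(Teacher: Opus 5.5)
Your proposal has a genuine gap: it does not establish positivity of the profile when $\theta\le0$, although the lemma is stated for every finite $\theta$. The lemma is also used at negative prices, for instance as $\theta\to-\infty$ in \Cref{lem:neutral-price-means}. Your step ``positivity of the bottom quality follows from \eqref{eq:neutral-lowest-quality-price}'' only yields $q(\varphi(0))\ge\theta$, which is vacuous for $\theta\le0$. Two arguments are then missing.

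First, nothing excludes an initial interval on which $\varphi=0$. The first variation \eqref{eq:binary-cited-first-variation} is valid only where $\varphi$ is bounded away from zero, so on such an interval the Euler equation, $x>\theta$ and $A\ge0$ are all unavailable. The paper rules this out by raising the profile to $\max\{\varphi,\varepsilon\}$. On the formerly excluded ranks, $Q_\varepsilon(u)\ge K_\varepsilon/(1-u)$ with $K_\varepsilon\to\int_0^1\varphi>0$, so the new rents are $q(\varepsilon)$ times a positive coefficient. The mean rises, which is harmless when $\theta\le0$, and the losses are $q(\varepsilon)$ times $\varepsilon d+(d_\varepsilon-d)\to0$.

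Second, even with $\varphi>0$ at every $u>0$, you still need $\varphi(0)=\lim_{u\downarrow0}\varphi(u)>0$, and this requires a use of \Cref{ass:neutral-curvature} that your plan lacks. Integrating $(\log c'')'>-\zeta/x$ over $[x,2x]$ gives $2^{-\zeta}xc''(x)\le c'(2x)-c'(x)\to0$, that is, $q(z)/q'(z)\to0$ as $z\downarrow0$. The Euler equation with $A\ge0$ then gives $Q-\varphi\le\alpha q(\varphi)/q'(\varphi)$. If $\varphi(0)=0$, this forces $Q(0)=0$, contradicting $Q(0)=\int_0^1\varphi>0$. Without this step, ``positive on $[0,b)$'' is unproved, and the implicit control need not extend to $\tau=0$.

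The remaining issues need rework rather than new ideas. You cite \Cref{lem:shadow-quality} and \Cref{lem:binary-top-fit} verbatim, but the first rests on \Cref{lem:binary-chord-cost}, that is, on concavity of $F_{y,\beta}$. \Cref{ass:neutral-curvature} does not deliver this: it admits $c(x)=x^p/p$ with $1<p<2$, for which $F_{y,\beta}$ is convex near zero. The second uses the first and is built around the $(B,M)$ compensations of the constrained problem. Here you can argue directly. The bound $q(\varphi)\ge\theta$ follows from \eqref{eq:neutral-lowest-quality-price} and monotonicity, and strict increase then gives $x>\theta$ for $\tau>0$. The top atom follows from uncompensated profile variations, once the constant profile with $q(v)=\theta$ is discarded because its value $\beta\pi(v)-vq(v)$ is negative.

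The obstacle you flag is also misplaced. Flat pieces need no curvature: on $\varphi=\gamma$ one has $(1-u)H'=q(\gamma)-\theta+(Q-\gamma)q'(\gamma)>0$, so lowering the first half and raising the second half, with clipping, gains at first order while staying in $\Phi$. By contrast, the raw-revenue lift of \Cref{lem:active-face-purity} leaves the regular class on which \eqref{eq:neutral-weighted-price} is posed. You would then have to regularize and control the sign of $\int(Q^r-Q)[q(\phi_Q)-\theta]$.

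Your rewriting of the Euler equation as the static control, the bound $D_\alpha>\beta c''$ that excludes jumps, and the backward-uniqueness step with $A$ crossing zero once all match the paper's proof.
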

\begin{proof}
Fix a nonzero maximizer and write $Q=Q_\varphi$.

\paragraph{Positive trade and the top atom.}
By \Cref{lem:neutral-price-basics}, $\theta\le q(\varphi(0))$
and the unit profile is not optimal. If $\theta>0$, bottom
quality is positive. For $\theta\le0$, suppose $\varphi=0$
on an initial interval of length $d>0$. Raise it to
$\varphi_\varepsilon=\max\{\varphi,\varepsilon\}$, and put
$d_\varepsilon=\inf\{u:\varphi(u)\ge\varepsilon\}$ and
$K_\varepsilon=\int_{d_\varepsilon}^1\varphi$.
Then $d_\varepsilon\downarrow d<1$ and $K_\varepsilon\to K_0>0$.
On the old zero ranks, $Q_\varepsilon(u)\ge K_\varepsilon/(1-u)$.
On $[d,d_\varepsilon)$, the old welfare integrand is at most
$(1+\beta)q(\varepsilon)$. The mean increase is weakly beneficial
and nothing changes above $d_\varepsilon$, so
\[
 L_{\beta,\theta}(\varphi_\varepsilon)-L_{\beta,\theta}(\varphi)
 \ge q(\varepsilon)\{K_\varepsilon[-\log(1-d)]-d\varepsilon
                  -(1+\beta)(d_\varepsilon-d)\}>0
\]
for small $\varepsilon$. Thus $\varphi(u)>0$ for every $u>0$.

We adapt the top-atom argument of
\citet[Lemma~A.11]{kyriazis2026markets} to the objective
$L_{\beta,\theta}$.
Let $v=\varphi(1-)>0$. If $q(v)\le\theta$, the bottom-quality
bound forces a constant profile with $\theta=q(v)$ and value
$\beta\pi(v)-vq(v)<0$, below that of the zero profile.
Hence $q(v)>\theta$, so $H\to+\infty$ near one: its integral
term diverges while the remaining terms stay bounded.
On a fixed high tail with $H\ge1$, raise $\varphi$ to
$\min\{\varphi+\varepsilon,1\}$. This is monotone and, unless
the tail already equals one, has first-order gain at least the
change's positive $L^1$ norm. The remainder is smaller, giving a
contradiction. Thus $\varphi=1$ on some $[b,1]$, with
$0<b<1$ because the unit profile is not optimal. Reconstruction
gives continuous $Q>\varphi$ before $b$ and
$(1-u)Q'=Q-\varphi>0$ a.e., so $Q$ is strictly increasing there
\citep[Lemma~A.12]{kyriazis2026markets}.

\paragraph{Separation and the Euler equation.}
On a positive flat interval $\varphi=\gamma<1$,
\[
 (1-u)H'=q(\gamma)-\theta+(Q-\gamma)q'(\gamma)>0,
\]
since $q(\gamma)\ge\theta$ and $Q>\gamma$.
The following perturbation adapts the splitting and clipping
argument of \citet[Lemma~A.13]{kyriazis2026markets}.
On a maximal such interval, lower the first half by $\varepsilon$
and raise the second by $\varepsilon$. Their equal lengths and
the strict increase of $H$ give a first-order gain of order
$\varepsilon$ with a positive coefficient. Preserve monotonicity
by clipping adjacent lower values at $\gamma-\varepsilon$ and
upper values at $\gamma+\varepsilon$. The exterior clipping sets
have vanishing measure by maximality, so their $L^1$ change is
$o(\varepsilon)$. On these sets $H$ is uniformly bounded:
virtual values stay near $\gamma>0$ and ranks stay below $b<1$.
Their contribution and the Taylor remainder are $o(\varepsilon)$,
leaving a strict gain. Thus $\varphi$ is strictly increasing below $b$.

To obtain the Euler equation, we use the monotone variations
and interval approximation in
\citet[Lemma~A.15]{kyriazis2026markets}.
For smooth $h$ with compact support in $(0,1)$,
$z\mapsto z+\varepsilon h(z)$ preserves order and bounds for
sufficiently small $\varepsilon$ of either sign. Optimality therefore gives $\int Hh(\varphi)=0$.
For continuity ranks $0<s<t<b$, approximate
$1_{(\varphi(s),\varphi(t))}$ by bounded smooth functions with
supports in a fixed compact subinterval of $(0,1)$.
Strict increase makes their compositions converge a.e. to
$1_{(s,t)}$. Dominated convergence gives $\int_s^t H=0$.
Continuity ranks are dense, so $H=0$ a.e. on $(0,b)$.

\paragraph{Regularity and endpoint fit.}
We follow the curvature, continuity, and endpoint arguments of
\citet[Lemma~A.16 and Proposition~A.17]{kyriazis2026markets},
with the state variable $A$ now incorporating the mean term.
In quality coordinates, $H=0$ is \eqref{eq:neutral-static-control}.
The bottom-quality bound and strict increase give $x(\tau)>\theta$
for $\tau>0$, hence $A\ge0$. The Euler equation gives $A\le\alpha x$.
Its derivative with respect to quality is
$D_\alpha(x,A)=(1+\alpha)c''(x)+(\alpha x-A)c'''(x)>0$, $0\le A\le\alpha x$.
For $c'''\ge0$ this is immediate. Otherwise
\Cref{ass:neutral-curvature} gives
$D_\alpha>(1-\alpha)c''=\beta c''\ge0$, strictly even at $\beta=0$.
At an interior jump, both quality limits would solve the static
equation at the same continuous $(A,Q)$. Every intermediate quality
satisfies $A\le\alpha x$, so the positive derivative excludes two
roots. Thus quality is continuous and Euler holds everywhere.
The implicit-function theorem now gives a $C^1$ control $x(A,Q)$.
The state equations make $A,Q,x$ continuously differentiable.
Differentiation yields
\[
 \dot x=\frac{c''(x)\{(1+\alpha)x-A-\theta\}}
                    {D_\alpha(x,A)}>0,
\]
using $A\le\alpha x$ and $x>\theta$. Also $\dot Q=Q-c'(x)>0$.

If $p=\varphi(b-)<1$, lower the initial top block $[b,b+\delta)$
to $1-\varepsilon>p$. This is feasible, and its one-sided
variation gives $\int_b^{b+\delta}H\ge0$.
On the top atom $H=A-\alpha\bar q$, so $\delta\downarrow0$
gives $A(T)\ge\alpha\bar q$. The separating Euler limit instead
gives $A(T)=\alpha q(p)-(1-p)q'(p)<\alpha\bar q$.
Hence $p=1$ and \eqref{eq:neutral-boundaries} holds.

At the lower endpoint, the curvature bound also implies
$q(z)/q'(z)\to0$ as $z\downarrow0$, even if $c''$ is unbounded.
For any $\zeta>2$, integrating
$(\log c'')'>-\zeta/x$ on $[x,s]\subset[x,2x]$ gives
$c''(s)\ge2^{-\zeta}c''(x)$, and hence
\[
 0\le2^{-\zeta}xc''(x)\le c'(2x)-c'(x)\longrightarrow0.
\]
Now $q(z)/q'(z)=xc''(x)$ at $z=c'(x)$.
If $\varphi(0)=0$, Euler and $A\ge0$ would give
$Q-\varphi\le\alpha q(\varphi)/q'(\varphi)\to0$, contradicting
$Q(0)=\int_0^1\varphi>0$. Thus bottom quality is positive,
and the control extends continuously to both endpoints.

\paragraph{Uniqueness.}
We use the backward-trajectory argument of
\citet[Proposition~A.18]{kyriazis2026markets} for the state
system obtained here.
The $C^1$ implicit control makes the autonomous state system locally
Lipschitz. It extends through the top state because $\bar q<\bar Q$.
Thus its backward trajectory from $(\alpha\bar q,1)$ is unique.
Two maximizing paths of lengths $T_1<T_2$ would satisfy
$A_2(T_2-T_1)=A_1(0)=0$, contradicting
$\dot A=x-\theta>0$ at positive ranks. Their durations and profiles
therefore coincide. These necessary conditions identify maximizers
whose existence was established by compactness.
\end{proof}
\begin{lemma}[The mean selected by a price]
\label{lem:neutral-price-means}
Let $\rho_\beta$ be as in \Cref{lem:neutral-price-basics}.
For $\theta<\rho_\beta$ the maximizer
$\varphi_{\beta,\theta}$ is unique and nonzero. Its mean is
continuous and strictly decreasing in $\theta$, with limits
$\lim_{\theta\to-\infty}M(\varphi_{\beta,\theta})=1$ and $\lim_{\theta\uparrow\rho_\beta}M(\varphi_{\beta,\theta})=m^c_\beta\in(0,1)$.
At $\theta=\rho_\beta$, there is exactly one nonzero maximizer
$\varphi^c_\beta$, of mean $m^c_\beta$, in addition to the zero
profile.
\end{lemma}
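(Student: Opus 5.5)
The argument combines three ingredients: the uniqueness result \Cref{lem:neutral-price-unique}, the properties of $F_\beta$ in \Cref{lem:neutral-price-basics}, and standard convex-duality facts about the parametric problem \eqref{eq:neutral-weighted-price}.

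\emph{Uniqueness and nonzero maximizers.} Fix $\theta<\rho_\beta$. \Cref{lem:neutral-price-basics} gives $F_\beta(\theta)>0$. The zero profile has value zero, so every maximizer is nonzero. By \Cref{lem:neutral-price-unique} there is at most one nonzero maximizer, and compactness of $\Phi$ gives existence. Hence $\varphi_{\beta,\theta}$ is unique.

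\emph{Continuity of the mean.} Continuity of $\theta\mapsto\varphi_{\beta,\theta}$ in $L^1$ follows from Berge's maximum theorem. The objective $L_{\beta,\theta}$ is jointly continuous on the compact set $\Phi$, so the argmax correspondence is upper hemicontinuous. Because it is single-valued, it is continuous. Continuity of $M$ then makes $\theta\mapsto M(\varphi_{\beta,\theta})$ continuous.

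\emph{Monotonicity of the mean.} The standard revealed-preference comparison shows that the mean is weakly decreasing. For $\theta_1<\theta_2$, adding the two optimality inequalities gives
\[
(\theta_2-\theta_1)\bigl[M(\varphi_1)-M(\varphi_2)\bigr]\ge0 .
\]
For strictness, suppose $M(\varphi_1)=M(\varphi_2)$. Then each inequality holds with equality, so $\varphi_2$ also maximizes $L_{\beta,\theta_1}$. Uniqueness forces $\varphi_1=\varphi_2=:\varphi$, a single profile that is optimal at two different prices. This is ruled out by the Euler system \eqref{eq:neutral-static-control}. The state $A(\tau)=\int_0^\tau(x-\theta)$ satisfies $A(0)=0$ and the terminal condition $A(T)=\alpha\bar q$ at both prices. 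The quality path $x$ and duration $T$ belong to the common profile and so do not depend on $\theta$. Therefore $\theta_1T=\theta_2T$, which contradicts $T>0$. I expect this strictness step to be the main subtlety: it has to use that the same profile fixes both $x$ and $T$.

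\emph{Limit as $\theta\to-\infty$.} Optimality against the unit profile, which has mean $1$, gives
\[
J_\beta(\varphi_\theta)-\theta M(\varphi_\theta)\ge\beta\pi(1)-\theta .
\]
Since $J_\beta$ is bounded on $\Phi$, this rearranges to $|\theta|\bigl(1-M(\varphi_\theta)\bigr)\le C$. Hence $M(\varphi_\theta)\to1$.

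\emph{Limit as $\theta\uparrow\rho_\beta$.} Monotonicity implies that the limit exists; call it $m^c_\beta$. Take $\theta_n\uparrow\rho_\beta$ and pass, by compactness, to a subsequence with $\varphi_{\theta_n}\to\bar\varphi$. Continuity of $L$ and of $F_\beta$ (which is Lipschitz) shows that $\bar\varphi$ attains $F_\beta(\rho_\beta)=0$.

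The limit $\bar\varphi$ is nonzero. The bottom-quality bound \eqref{eq:neutral-lowest-quality-price} gives $M(\varphi_{\theta_n})\ge\varphi_{\theta_n}(0)\ge c'(\theta_n)$ once $\theta_n>0$. Since $\rho_\beta>\beta\pi(1)\ge0$, this lower bound stays positive in the limit, so $M(\bar\varphi)\ge c'(\rho_\beta)>0$.

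By \Cref{lem:neutral-price-unique}, $\bar\varphi$ is the unique nonzero maximizer at $\rho_\beta$; call it $\varphi^c_\beta$. Uniqueness makes the limit independent of the subsequence, so $m^c_\beta=M(\varphi^c_\beta)>0$. Finally, $m^c_\beta<1$ because the unit profile is not optimal at any finite price, by \Cref{lem:neutral-price-basics}.

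Since $F_\beta(\rho_\beta)=0$, the zero profile is also a maximizer at $\rho_\beta$. Any other maximizer is nonzero and hence equals $\varphi^c_\beta$. This gives exactly the zero profile and $\varphi^c_\beta$ at $\theta=\rho_\beta$, completing the statement.
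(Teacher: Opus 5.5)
Your proposal is correct and follows essentially the paper's route. The steps match: uniqueness comes from \Cref{lem:neutral-price-unique}, continuity from compactness, strict monotonicity from revealed preference together with the fact that one profile cannot satisfy the Euler system at two prices, the $\theta\to-\infty$ limit from comparison with the unit profile, and the critical limit from the bottom-quality bound plus uniqueness at $\rho_\beta$. There are two small differences, neither affecting correctness: your terminal-value version of strictness, $\theta_1T=\theta_2T$, is equivalent to the paper's observation that the two Euler equations would differ by $(\theta_2-\theta_1)[-\log(1-u)]$; and your bound $m^c_\beta<1$ tacitly uses that $M(\varphi)=1$ forces $\varphi=1$ almost everywhere, whereas the paper obtains it directly from strict decrease.
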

\begin{proof}
Lemmas~\ref{lem:neutral-price-basics}--\ref{lem:neutral-price-unique}
give the asserted maximizers. Compactness and continuity give
continuous dependence on prices below $\rho_\beta$.
For $\theta_1<\theta_2<\rho_\beta$, write
$M_i=M(\varphi_{\beta,\theta_i})$. Adding the two optimality
inequalities gives $(\theta_2-\theta_1)(M_1-M_2)\ge0$.
Equal means would make the same profile optimal at both prices,
by uniqueness. Its Euler equations would differ by
$(\theta_2-\theta_1)[-\log(1-u)]$ on a nonempty separating
interval, a contradiction. Thus, the mean strictly decreases.
Comparison with the unit profile yields
$(-\theta)\{1-M(\varphi_{\beta,\theta})\}
 \le J_\beta(\varphi_{\beta,\theta})-J_\beta(1)$, $\theta<0$.
Bounded payoffs give the limit one as $\theta\to-\infty$.
Strict decrease bounds the critical mean limit above by
$M(\varphi_{\beta,\theta_0})<1$ for any fixed
$\theta_0<\rho_\beta$, and
\eqref{eq:neutral-lowest-quality-price} bounds it below by
$c'(\rho_\beta)>0$. Every subsequential profile limit is therefore
a nonzero maximizer at $\rho_\beta$. The profile $\varphi^c_\beta$ is therefore uniquely determined,
as is its mean $m^c_\beta$.
\end{proof}

\subsubsection{Proof of \Cref{prop:neutral-participation}}
\label{app:neutral-completion}

\begin{proof}
Fix $\beta<1$ and put
\[
 \widehat c(x)=\frac{c(x)-dx}{1-d},\qquad
 \kappa=\frac{d-a}{1-d},\qquad y=\frac{m-a}{1-d}.
\]
The normalized cost satisfies the same curvature condition.
A profile $\varphi\in\Phi$ induces original values
$d+(1-d)Q_\varphi$, mean $M_A=d+(1-d)M(\varphi)$, and
payoff $(1-d)\widehat J_\beta(\varphi)$.
\Cref{lem:neutral-original-law} reduces the problem to
\begin{equation}
 (m-a)\max_{\kappa+M(\varphi)\ge y}
       \frac{\widehat J_\beta(\varphi)}{\kappa+M(\varphi)}.
 \label{eq:neutral-weighted-ratio}
\end{equation}
Allowing inactive profiles in $\Phi$ leaves the value unchanged. The same scaling argument turns each feasible profile into a feasible mixture with $\delta_a$, and the lemma shows that every original optimum has this representation. If $\kappa=0$, the constraint rules out the zero profile. If $\kappa>0$, that profile gives zero payoff, below the positive optimal value.

\paragraph{The critical ratio.}
Let $\widehat F_\beta$ and $\rho_\beta$ denote the price value
and critical price for $\widehat c$.
Write $B_\kappa(\theta)=\widehat F_\beta(\theta)-\kappa\theta$.
If $\kappa>0$, this function is continuous and strictly decreasing,
positive at zero and negative at $\rho_\beta$.
Let $\rho$ be its unique zero. For $\kappa=0$, set
$\rho=\rho_\beta$. Then
$\widehat J_\beta\le\rho(\kappa+M)$, with equality at exactly
one nonzero profile $\varphi^c$.
Put $m_a^c=d+(1-d)M(\varphi^c)\in(d,1)$.
For $m\le m_a^c$, this profile uniquely maximizes the feasible
ratio, with active share $(m-a)/(m_a^c-a)$.

\paragraph{Above the critical mean.}
For $m>m_a^c$, \Cref{lem:neutral-price-means} selects a
unique price $\theta_m<\rho$ and profile of mean
$y-\kappa=(m-d)/(1-d)$. Its ratio $r$ satisfies
$r-\theta_m=B_\kappa(\theta_m)/y>0$.
For any $\kappa+M\ge y$, optimality at this price gives
$r(\kappa+M)-\widehat J_\beta(\varphi)
 \ge(r-\theta_m)(\kappa+M-y)\ge0$.
Equality holds only at the unique maximizer of the priced
problem, with $M=y-\kappa$. The active share must therefore
equal one. \Cref{lem:neutral-price-unique}
gives the active structure, and equality in
\Cref{lem:neutral-original-law} identifies every original
optimum. The distribution induced by $\varphi^c$ is
$G_{a,\beta}^c$, with $m_a^c=m_{a,\beta}^c$.
The following lemma establishes the claimed value-function properties.
\end{proof}

\begin{lemma}[The value of a payoff-neutral prior]
\label{prop:neutral-weighted-value}
Under the hypotheses of \Cref{prop:neutral-participation}, let
$\rho_{a,\beta}:=J_\beta(G_{a,\beta}^c)/(m_{a,\beta}^c-a)>0$.
The value $V_{a,\beta}$ is continuously differentiable on $(a,1)$,
equals $\rho_{a,\beta}(m-a)$ for $m\le m_{a,\beta}^c$, and is
strictly concave for $m>m_{a,\beta}^c$.
\end{lemma}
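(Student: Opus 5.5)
The plan is to work in the normalized units used in the proof of \Cref{prop:neutral-participation}, with cost $\widehat c$, shift $\kappa=(d-a)/(1-d)$ and normalized mean $\mu=(m-d)/(1-d)$, and to represent $V_{a,\beta}$ through the priced problems $\widehat F_\beta(\theta)$ of \eqref{eq:neutral-weighted-price}. Every payoff in original units is $(1-d)$ times its normalized counterpart, and $\dd\mu/\dd m=1/(1-d)$. Slopes in $\theta$ therefore translate directly into slopes of $V_{a,\beta}$ in $m$.

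First, the linear part. By the proof of \Cref{prop:neutral-participation}, for $m\le m^c_{a,\beta}$ the unique optimum mixes $\delta_a$ with $G^c_{a,\beta}$ with active share $(m-a)/(m^c_{a,\beta}-a)$. Its payoff is therefore $\rho_{a,\beta}(m-a)$, and in normalized terms the slope is the price $\rho$ defined there as the zero of $B_\kappa$ (or $\rho_\beta$ if $\kappa=0$).

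Second, the part above the critical mean. For $m>m^c_{a,\beta}$, the optimum is the full-participation profile $\varphi_{\beta,\theta_m}$, whose mean equals $\mu$. I will show that
\[
\widehat V(\mu):=\max_{M(\varphi)=\mu}\widehat J_\beta(\varphi)=\inf_{\theta}\bigl[\widehat F_\beta(\theta)+\theta\mu\bigr].
\]
The inequality $\le$ is immediate. Equality holds at $\theta=\theta_m$ because $\varphi_{\beta,\theta_m}$ maximizes the priced problem and has mean exactly $\mu$. As an infimum of affine functions of $\mu$, $\widehat V$ is concave, and each $\theta_m$ is a supergradient at $\mu$.

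Strict concavity will follow from \Cref{lem:neutral-price-means}. Suppose $\widehat V$ were affine on some interval. A supporting line at an interior point would then support the whole segment. The corresponding price would make maximizers with two different means optimal, contradicting uniqueness of the nonzero maximizer below $\rho_\beta$ together with the strictly decreasing map $\theta\mapsto M(\varphi_{\beta,\theta})$.

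Differentiability uses the same map. It is a continuous, strictly decreasing bijection from prices onto the relevant means, so each $\mu$ has exactly one supporting price. Hence the superdifferential is a singleton, $\widehat V'(\mu)=\theta_m$, and continuity of the inverse map makes this derivative continuous.

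The step I expect to be the main obstacle is $C^1$ fit at $m^c_{a,\beta}$. I must show $\theta_m\to\rho$ as $m\downarrow m^c_{a,\beta}$, so that the right derivative equals the slope of the linear part. The first ingredient is that $\varphi^c$ maximizes the priced problem at $\rho$: indeed $\widehat J_\beta(\varphi^c)-\rho M(\varphi^c)=\rho\kappa=\widehat F_\beta(\rho)$. Next, take any limit point $\bar\theta$ of $\theta_m$. Compactness of $\Phi$ and continuity of $M$ and $\widehat J_\beta$ make the limiting profile a maximizer at price $\bar\theta$ with mean $M(\varphi^c)$. The inequality $r-\theta_m=B_\kappa(\theta_m)/y>0$ from the proof of \Cref{prop:neutral-participation} passes to the limit, giving $B_\kappa(\bar\theta)\ge0$ and hence $\bar\theta\le\rho$. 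Finally, strict monotonicity of the mean in price, together with uniqueness, excludes $\bar\theta<\rho$: a price strictly below $\rho$ selects a mean strictly above that attained at $\rho$. Therefore $\bar\theta=\rho$, and one-sided derivatives match.

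Combining the three pieces, $V_{a,\beta}$ is linear on $(a,m^c_{a,\beta}]$, strictly concave beyond it, and continuously differentiable throughout $(a,1)$.
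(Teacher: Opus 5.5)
Your proposal is correct and follows essentially the paper's route. Both arguments represent the value through the priced problems $\widehat F_\beta$ with supporting slopes $\theta_m$, and both use continuity and strict monotonicity of the price--mean map to obtain $V_{a,\beta}'(m)=\theta_m$, strict concavity, and the limit $\rho$ at $m^c_{a,\beta}$. The only difference is packaging: the paper writes the single formula $V_{a,\beta}(m)=\inf_{\theta\le\rho}\{(1-d)\widehat F_\beta(\theta)+\theta(m-d)\}$ on all of $(a,1)$, using $B_\kappa(\theta)\ge0$ to absorb the exclusion branch, whereas you treat the two branches separately and join them with a compactness argument for $\theta_m\to\rho$.
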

\begin{proof}
Use the preceding notation, with $\rho=\rho_{a,\beta}$.
For $\theta\le\rho$, we have $B_\kappa(\theta)\ge0$ and
$\widehat J_\beta\le\theta(\kappa+M)+B_\kappa(\theta)$.
Using \eqref{eq:neutral-weighted-ratio} and $y/(\kappa+M)\le1$ gives
$V_{a,\beta}(m)\le(1-d)\{\theta y+B_\kappa(\theta)\}
=(1-d)\widehat F_\beta(\theta)+\theta(m-d)$.
The critical price attains this bound below $m_a^c$, and
$\theta_m$ attains it above. Hence
\[
 V_{a,\beta}(m)=\inf_{\theta\le\rho}
 \{(1-d)\widehat F_\beta(\theta)+\theta(m-d)\}.
\]
Above $m_a^c$, the supporting lines at $m$ and $m+\varepsilon$
bound the difference quotient between their slopes
$\theta_m$ and $\theta_{m+\varepsilon}$. Continuity of the inverse map $m\mapsto\theta_m$ then gives
$V_{a,\beta}'(m)=\theta_m$.
Its strict decrease proves strict concavity and its limit $\rho$ at
$m_a^c$ matches the slope of the linear branch $\rho(m-a)$.
\end{proof}

\subsubsection{Proof of \Cref{thm:neutral-frontier}: the single-market frontier}
\label{app:neutral-frontier-proof}

\begin{proof}
Fix $a\le d=c'(0)$ and $a<m<1$ under the hypotheses of
\Cref{prop:neutral-participation}.

The first-best bound and convexity of $\pi$, with $\pi(a)=0$, give
\begin{equation}
 \CS(G)+\Pi(G)\le\E_G[\pi(Y)]
        \le\frac{m-a}{1-a}\pi(1)=P_1.
 \label{eq:seg-neutral-surplus-bound}
\end{equation}
The endpoint chord is strictly above $\pi$ on $(a,1)$, since
$\pi$ vanishes on $[a,d]$ and is strictly convex above $d<1$.
Equality forces $G=H$, which attains $(0,P_1)$: type $a$ receives
zero quality, while type one buys quality $q(1)$ at transfer $q(1)$.
Thus, $H$ uniquely maximizes total surplus and profit.

Next, on a fixed compact feasible set, joint continuity of the objective
makes every limit of maximizers optimal at the limiting parameters.
Uniqueness therefore gives optimizer continuity. Here, uniqueness for every
$\beta\in[0,1]$ follows from \Cref{prop:neutral-participation} and the
endpoint argument. Payoff continuity and
\eqref{eq:weight-payoff-monotonicity} then give a continuous path
with consumer surplus nonincreasing and profit nondecreasing.

Every feasible pair satisfies
$C+\beta P\le C_\beta+\beta P_\beta$ for $0\le\beta\le1$.
For $p\in(P_0,P_1]$, continuity of the profit path gives a
positive weight with $P_\beta=p$. A pair meeting $P\ge p$ then
satisfies
$C\le C_\beta+\beta(p-P)\le C_\beta.$
The path point attains the bound. Equality requires $P=p$ and
optimality at that weight. For $p\le P_0$, the buyer-optimal
pair attains maximal surplus $C_0$. On a general
weight interval $[0,\bar\beta]$, it covers $[P_0,P_{\bar\beta}]$.

Since $P\le P_1$, every feasible pair is weakly dominated by a
path point. Positive-weight path points are efficient because a
Pareto improvement would raise their objective. Uniqueness gives
efficiency at weight zero. Thus the path is the entire frontier.
Any distribution attaining a path point maximizes its weighted
objective, so uniqueness identifies it as $G_{m,\beta}^*$, with the
structure in \Cref{prop:neutral-participation} for $\beta<1$ and
$G_{m,1}^*=H$.
\end{proof}

\subsection{High prior means and the efficient frontier}
\label{app:high-mean-proof}

\subsubsection{Proof of \Cref{prop:high-mean}}
\label{app:high-mean-characterization-proof}

\begin{proof}
For $\beta\le\bar\beta$, the curvature expression equals
its value at $\bar\beta$ plus
$(\bar\beta-\beta)[c''(x)]^2$. Thus
\Cref{ass:curvature} holds throughout $[0,\bar\beta]$.
We first study an auxiliary problem without a fixed mean,
then use its solutions to obtain a common high-mean range.

\paragraph{The auxiliary problem.}
Fix $\beta\in[0,\bar\beta]$. For $\theta<0$, consider
\begin{equation}
 \max_{G\in\Delta([0,1])}
 \{J_\beta(G)-\theta\E_G[Y]\}.
 \label{eq:high-mean-price}
\end{equation}
The negative multiplier rewards a higher mean.
Weak compactness and payoff continuity give a maximizer.
Regularization preserves profit and changes the objective by
\[
 \int_0^1[Q^r(u)-Q(u)]
             [q(\phi_Q(u))-\theta]\,\dd u.
\]
Since $Q^r\ge Q$ and $q(\phi_Q)-\theta>0$, this change
is strictly positive unless $Q=Q^r$ almost everywhere.
Every maximizer is therefore regular.

Its mean lies strictly between zero and one.
Indeed, $\delta_1$ gives $\beta\pi(1)-\theta>0$,
whereas $\delta_0$ gives zero. The perturbation of
the unit profile in \Cref{lem:neutral-price-basics}
also strictly improves on $\delta_1$ for every finite
$\theta$. That argument uses only \Cref{ass:cost}
and $\beta<1$.

A maximizer of \eqref{eq:high-mean-price} also
maximizes $J_\beta$ among distributions with its mean.
We may therefore apply \Cref{thm:binary-structure}
with $a=0$. Regularity rules out both lower-atom
configurations, leaving a nondegenerate separating
interval and the efficiently served top atom.
The virtual-value profile $\varphi$ is positive
at interior ranks, and its lower support point is
$B=\int_0^1\varphi(u)\,\dd u>0$.

\paragraph{Uniqueness for each multiplier.}
Neither the mean nor the lower support point is
fixed in \eqref{eq:high-mean-price}. The interior
variations in \Cref{lem:binary-two-multipliers}
are therefore feasible without compensation.
Writing $w(u)=-\log(1-u)$, their first variation is
$\int[H_\beta(u)-\theta w(u)]h(u)\,\dd u$.
The same rank-interval argument gives the Euler
equation with the imposed multiplier $\theta$
and $\xi=0$.

 \Cref{prop:separating-conditions} gives
$A(0)=0$, $A(T)=(1-\beta)\bar q$ and $\dot A=q(\varphi)-\theta>0$.
For this fixed $\theta$, the state trajectory
traced backward from $((1-\beta)\bar q,1)$ is unique,
and the Euler equation determines $\varphi$ along it.
Its duration is also unique. If two maximizing
paths had durations $T_1<T_2$, backward uniqueness
would give $A_2(T_2-T_1)=A_1(0)=0$, contradicting
the strict increase of $A_2$ from zero.
The path and its duration determine the full profile,
including the terminal mass $e^{-T}$.
Thus \eqref{eq:high-mean-price} has a unique maximizer
$G_\theta$, with mean $m_\theta$.
We suppress its dependence on $\beta$ for now.

\paragraph{Mean and lower support as the multiplier varies.}
The objective in \eqref{eq:high-mean-price} is jointly
continuous in the distribution and the parameters,
and its feasible set is fixed and weakly compact.
Every limit of maximizers therefore maximizes the
limiting objective. Uniqueness implies that the
optimizer is weakly continuous in $(\beta,\theta)$
on $[0,\bar\beta]\times(-\infty,0)$.
In particular, its mean is continuous.

For fixed $\beta$ and $\theta_1<\theta_2<0$,
adding the two optimality inequalities gives
$(\theta_2-\theta_1)(m_{\theta_1}-m_{\theta_2})\ge0$.
If the means were equal, each distribution would
maximize both objectives, so uniqueness would make
them the same distribution. Its two Euler equations,
both with $\xi=0$, would then imply
$(\theta_2-\theta_1)w(u)=0$ on a nonempty separating
interval, a contradiction.
Thus, $m_\theta$ is strictly decreasing in $\theta$.

Comparison with $\delta_1$ gives
$(-\theta)(1-m_\theta)
 \le J_\beta(G_\theta)-\beta\pi(1)$.
Bounded payoffs therefore imply
$m_\theta\to1$ as $\theta\to-\infty$.
The lower support point also approaches one.
Integrating $\dot A\ge-\theta$ over the separating
path gives
$T_\theta\le(1-\beta)\bar q/(-\theta)$.
Since $\varphi$ equals one on the top atom and
is nonnegative elsewhere,
\begin{equation}
 B_\theta=\int_0^1\varphi(u)\,\dd u
 \ge e^{-T_\theta}
 \ge\exp\!\left(-\frac{(1-\beta)\bar q}{-\theta}\right)
 \longrightarrow1.
 \label{eq:high-mean-bottom-bound}
\end{equation}

\paragraph{A threshold common to all weights.}
Choose $\theta_0<0$ sufficiently negative that
$\exp[-\bar q/(-\theta_0)]>a$.
By \eqref{eq:high-mean-bottom-bound}, the lower
support point then exceeds $a$ for every
$\beta\in[0,\bar\beta]$ and $\theta\le\theta_0$.
Restoring the weight in the notation, continuity
and $a<m_{\beta,\theta_0}<1$ give
$\bar m:=\max_{\beta\in[0,\bar\beta]}m_{\beta,\theta_0}\in(a,1)$.

Fix $m\in(\bar m,1)$ and $\beta\in[0,\bar\beta]$.
Continuity, strict monotonicity, and
$m_{\beta,\theta}\to1$ as $\theta\to-\infty$
give a unique $\theta<\theta_0$ with
$m_{\beta,\theta}=m$.
The selected distribution has lower support
strictly above $a$. It uniquely maximizes
$J_\beta$ among all distributions on $[0,1]$
with mean $m$: any other maximizer at that mean
would also solve \eqref{eq:high-mean-price}.
It is feasible on $[a,1]$ and therefore uniquely
solves that problem as well. This proves the
claimed structure and
$V_{a,\beta}(m)=V_{0,\beta}(m)$.

For fixed $m$, the feasible set $\mathcal G_{a,m}$
is weakly compact and independent of $\beta$.
The same continuity argument gives weak continuity
of the unique optimum in $\beta$.
Payoff continuity and
\eqref{eq:weight-payoff-monotonicity} then give
continuous payoffs, with consumer surplus weakly
decreasing and profit weakly increasing.
\end{proof}

\subsubsection{Proof of Corollary~\ref{cor:high-mean-frontier}}
\label{app:high-mean-frontier-proof}

\begin{proof}
By \Cref{prop:high-mean}, the optimal payoff path
$(C_\beta,P_\beta)$ is continuous, with profit
weakly increasing in $\beta$.
For every $p\in(P_0,P_{\bar\beta}]$, there is
therefore a positive weight $\beta$ with
$P_\beta=p$. For any feasible distribution $G$ with
$\Pi(G)\ge p$, optimality gives $\CS(G)
 \le C_\beta+\beta[p-\Pi(G)]
 \le C_\beta$.
The distribution $G_{m,\beta}^*$ attains this bound.
Equality holds only if $\Pi(G)=p$ and $G$ is optimal at weight $\beta$. Uniqueness then gives $G=G_{m,\beta}^*$. For $p\le P_0$, the unique buyer-optimal distribution meets the profit requirement and attains $C_0$.

Every positive-weight point on the path is Pareto efficient since raising either payoff without lowering the other would increase the weighted objective. The buyer-optimal endpoint is also efficient, since uniqueness rules out another distribution with consumer surplus $C_0$ and higher profit.
Conversely, any efficient pair with profit in
$[P_0,P_{\bar\beta}]$ must coincide with the path
point at that profit, which offers at least as
much consumer surplus.
This proves coverage of the stated frontier
portion and unique generation of each pair.
The argument also covers $\bar\beta=0$, when
the path consists only of the buyer optimum.
\end{proof}

\paragraph{Profit weight approaching one.}
When $a>0$, the value equality in
\Cref{prop:high-mean} cannot hold above a common
mean threshold for every $\beta<1$.
To see this, fix $0<a<m<1$ and put
\[
 D:=\frac{1-m}{1-a}\{a\pi(1)-\pi(a)\}>0.
\]
The strict inequality follows from strict convexity
of $\pi$ and $\pi(0)=0$.
For $0\le\beta\le1$ and every
$G\in\mathcal G_{a,m}$, the first-best surplus
bound and the chord joining $\pi(a)$ and $\pi(1)$
give
\[
 J_\beta(G) \le\TS(G) \le\E_G[\pi(Y)]\le\frac{1-m}{1-a}\pi(a)
      +\frac{m-a}{1-a}\pi(1)=m\pi(1)-D.\]
For the binary prior on $\{0,1\}$ with mean $m$,
full revelation yields
$(\CS,\Pi)=(0,m\pi(1))$.
Hence $V_{0,\beta}(m)\ge\beta m\pi(1)$, and
$V_{a,\beta}(m)<V_{0,\beta}(m)$ whenever $(1-\beta)m\pi(1)<D$.
For each fixed $m\in(a,1)$, this condition holds once $\beta$ is close enough to one, so the value equality in \Cref{prop:high-mean} fails at that mean and no single mean threshold below one can make the equality hold for all $\beta<1$ when $a>0$.

\section{Participation with a productive lowest value}
\label{app:general-participation-proof}

Fix $0\le\beta<1$ and maintain \Cref{ass:cost,ass:curvature}.
We prove full participation near the productive lower value, then
persistent exclusion at a higher mean, completing
\Cref{thm:general-participation}.

\subsection{Full participation at low means}
\label{app:general-exclusion}

\subsubsection{Boundary conditions for an excluded market}
\label{app:exclusion-boundary-preliminaries}

\begin{lemma}[Strict exclusion]
\label{lem:strict-optimal-exclusion}
Suppose an optimal posterior distribution at $0<a<m<1$
excludes a positive mass of buyers. The excluded atom
at $a$ has strictly negative virtual value.
For the regular continuation, normalized to probability
one, let $B,M,J$ denote its lower support point, mean,
and weighted payoff. Let
$z=\lim_{u\downarrow0}\varphi(u)$ be its bottom virtual
value, and let $\theta,\xi$ be its multipliers.
Then $z>0$ and
\[
 \xi=0,\qquad J=\theta(M-a)>0,\qquad
 B-z=(1-\beta)\frac{q(z)}{q'(z)},\qquad
 z-a\ge\beta\frac{\pi(z)}{q(z)}\ge0.
\]
\end{lemma}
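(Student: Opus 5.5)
The plan starts from the structure already established. By \Cref{lem:binary-exclusion-floor} the excluded mass sits at $a$ with a gap below the trading bottom $B>a$. By \Cref{thm:notrade-trading-exclusive} there is no lower trading atom, so the trading part is exactly the regular continuation of \Cref{lem:binary-nonempty-continuation}. Let $s\in(0,1)$ be its probability. The excluded virtual value is then $Z=(a-sB)/(1-s)\le0$. If $Z=0$, I would derive a contradiction from the seller's marginal comparison: a small quality $x$ for the atom yields revenue $(1-s)ax$, rent cost $s(B-a)x$ and production cost $c(x)=o(x)$. At $Z=0$ the first-order term vanishes, so I would instead use the designer-side argument. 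I would raise the atom's virtual value slightly above zero by moving a little mass, and treat the atom together with the continuation as in the strictness part of \Cref{lem:strict-attachment}, with $v=a$ and $z=0$. Since $q'(0)=1/c''(0)>0$ is finite or infinite, the derivative $(1-\rho)(B-z)^2q'(z)$ is positive, giving an improvement. Hence $Z<0$.

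With $Z<0$, both $s$ and $B$ can be perturbed in either sign while exclusion is preserved. The objective equals $sJ$ with $s=(m-a)/(M-a)$, so the full value is $(m-a)J/(M-a)$. Optimality over all continuations with free $B$ and $M$ then gives three conditions. First, any interior direction of \Cref{lem:binary-two-multipliers} that changes $B$ with $M$ fixed is feasible, because the lower contact only needs $Z<0$. It therefore has zero first-order effect, which forces $\xi=0$. Second, the direction \eqref{eq:binary-remote-mean} changing $M$ gives $\theta=\partial J/\partial M$, and stationarity of $J/(M-a)$ gives $\theta=J/(M-a)$. This is positive because the optimal value is positive.

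The bottom condition comes from the Euler equation \eqref{eq:continuation-euler} at $u\downarrow0$. There $A(0)=\xi=0$, so $(B-z)q'(z)=(1-\beta)q(z)$, provided $z>0$. I expect proving $z>0$ to be the main obstacle, since $z=0$ would make this limit degenerate. I would handle it with the argument in the proof of \Cref{lem:neutral-price-unique}: raise $\varphi$ to $\max\{\varphi,\varepsilon\}$ and show the mean and rent gain dominates, using $\xi=0$ and $\theta>0$. Alternatively, I would use the bottom-quality bound \eqref{eq:neutral-lowest-quality-price} adapted with the price $\theta$ and the floor $a$. That route also gives the last inequality.

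For the last inequality I would delete the bottom $\varepsilon$ ranks of the continuation and renormalize, which is the deletion argument of \Cref{lem:neutral-price-basics}. Optimality of $J-\theta(M-a)=0$ then yields $(B-z)q(z)+\beta\pi(z)-\theta(B-a)\le0$. Combining this with \eqref{eq:reduced-continuation-value} at $\xi=0$, that is $J-\theta(M-B)=(B-z)^2q'(z)+\beta(B-z)q(z)+\beta\pi(z)$ together with $J=\theta(M-a)$, and substituting the bottom relation, I would rearrange to obtain $\theta(B-a)\ge\cdots$. From there I would deduce $q(z)(z-a)\ge\beta\pi(z)$, which gives $z-a\ge\beta\pi(z)/q(z)\ge0$.
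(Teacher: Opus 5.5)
Your steps for $\xi=0$, $J=\theta(M-a)>0$ and the bottom relation match the paper, but two steps fail.

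\textbf{The case $Z=0$.} The derivative $(1-\rho)(B-z)^2q'(z)$ in the strictness part of \Cref{lem:strict-attachment} is obtained only after substituting $DJ[\eta]=\theta=q(z)$ and then \eqref{eq:reduced-continuation-value}, with the continuation's bottom virtual value equal to the atom's. Neither condition holds here.
\begin{itemize}
\item The atom's virtual value is $0$, and nothing gives $\theta=q(0)=0$.
\item The continuation's bottom virtual value $\bar z$ is not yet known to be positive. If $\bar z=0$, raising the atom's virtual value breaks the concave attachment, and \eqref{eq:reduced-continuation-value} is unavailable.
\item The share increase used in that lemma lowers the atom's virtual value, which is the opposite of what you want. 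Also, $q'(0)$ need not exist.
\end{itemize}
More fundamentally, a one-sided increase need not help, because the smooth part of the objective can fall in that direction.

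The paper proceeds in three steps. It first raises $B$ at fixed $s,M$, which keeps the atom excluded, to get $\xi\ge0$. It then uses the curvature bound $q'(t)\ge k>0$, $q(t)\ge kt$ near zero, which follows from $\frac{\dd}{\dd t}[(1-t)^{2-\beta}q'(t)]<0$, in the bottom Euler equation to get $\bar z>0$. Only then does it set the atom's virtual value to $\epsilon$ of either sign, with $s_\epsilon=(a-\epsilon)/(B-\epsilon)$ and the mean restored by an interior direction. The rent term $s_\epsilon(B-a)q(\epsilon)\ge s_\epsilon(B-a)k\epsilon$ is present only for $\epsilon>0$. So the objective has a convex kink at zero, and one of the two directions strictly improves.

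\textbf{The last inequality.} From \eqref{eq:reduced-continuation-value}, the bottom relation and $J=\theta(M-a)$, you get the equality $\theta(B-a)=(B-z)q(z)+\beta\pi(z)$. Hence $(z-a)q(z)-\beta\pi(z)=(B-a)[q(z)-\theta]$, so the claim is exactly $\theta\le q(z)$. Your deletion argument cannot supply this:
\begin{itemize}
\item Deleting bottom ranks must not raise the ratio $J/(M-a)$. This gives $\theta(B-a)\le(B-z)q(z)+\beta\pi(z)$; you have the sign reversed. This inequality is already implied by the equality.
\item The second inequality in \eqref{eq:neutral-lowest-quality-price} relied on the floor being zero. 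With floor $a$, it is equivalent to the statement you are trying to prove.
\end{itemize}
The missing ingredient is \Cref{lem:shadow-quality}. Pooling a short separating block and restoring the mean with an interior direction worth $\theta$ per unit shows $q(\varphi)\ge\theta$ along separation. Its bottom limit gives $q(z)\ge\theta$. Combined with $\theta>0$, this also gives $z>0$ immediately, which removes what you identified as the main obstacle. The correctly signed deletion bound would also rule out $z=0$, since its right side vanishes there while $\theta(B-a)>0$.
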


\begin{proof}
By \Cref{thm:binary-structure}, the optimum consists
of an atom at $a$ and a regular continuation with
probability $s\in(0,1)$. Its lower support point
satisfies $B>a$, and the excluded virtual value is
$\nu=(a-sB)/(1-s)\le0$.
The full objective is $sJ$.
By \Cref{lem:binary-two-multipliers}, interior
variations can change $B$ and $M$ independently
while keeping the endpoint virtual values fixed.
Their first-order effect on $J$ is
$\theta\,\dd M-\xi\,\dd B$.

We first obtain a bound that will allow us to
perturb a zero virtual value without assuming
that $q'(0)$ exists.
In virtual-value coordinates, \Cref{ass:curvature}
gives $(2-\beta)q'(t)-(1-t)q''(t)>0$, and hence
\[
 \frac{\dd}{\dd t}
 \bigl[(1-t)^{2-\beta}q'(t)\bigr]<0.
\]
Fix $t_0\in(0,1)$ and set
$k=(1-t_0)^{2-\beta}q'(t_0)>0$.
Then $q'(t)\ge k$ for $0<t\le t_0$.
Integrating and using continuity of $q$ at zero, with $q(0)=0$,
also gives $q(t)\ge kt$ on this interval.
These bounds remain valid if $q'$ is unbounded
near zero.

\paragraph{The excluded virtual value is strictly negative.}
Suppose $\nu=0$. Raising $B$ while keeping $s$
and $M$ fixed makes $\nu$ negative and preserves
the attachment to the continuation.
The first-order change in the full objective is
$-s\xi\,\dd B$, so optimality requires $\xi\ge0$.

We must then have $z>0$.
Otherwise, as $u\downarrow0$, we would have
$Q(u)\to B>0$, $\varphi(u)\to0$, and $A(u)\to\xi\ge0$.
The Euler equation gives
$(Q-\varphi)q'(\varphi)
 =(1-\beta)q(\varphi)-A$.
The left-hand side is bounded away from zero
by the preceding bound on $q'$, whereas the
right-hand side tends to $-\xi\le0$.
This is a contradiction.

Now keep $a,B$ fixed and give the lower atom
virtual value $\epsilon$, allowing either sign.
Set
\[
 s_\epsilon=\frac{a-\epsilon}{B-\epsilon},
 \qquad
 M_\epsilon=a+\frac{m-a}{s_\epsilon}.
\]
These choices preserve the revenue contact and
the prior mean. Use an interior direction from
\eqref{eq:binary-remote-mean} to change the
continuation's mean to $M_\epsilon$, keeping
$B$ and its endpoint virtual values fixed.
The support gap $B-a>0$ and the positive bottom
virtual value $z$ ensure feasible attachment
for both signs of sufficiently small $\epsilon$.

Let $J_\epsilon$ be the resulting conditional
payoff. The full objective is
\[
 S(\epsilon)
 =s_\epsilon J_\epsilon
  +s_\epsilon(B-a)q(\epsilon)
  +\beta(1-s_\epsilon)\pi(\epsilon).
\]
The second term is the additional rent that
the lower contract gives buyers in the continuation.

By the first-variation formula,
$s_\epsilon J_\epsilon$ is differentiable at zero.
If its derivative is negative, a small negative
$\epsilon$ raises the objective, since the other
two terms vanish for $\epsilon\le0$.
If its derivative is nonnegative, take $\epsilon>0$.
The bound $q(\epsilon)\ge k\epsilon$ and
nonnegativity of profit give
\[
 S(\epsilon)-S(0)
 \ge s_\epsilon(B-a)k\epsilon+o(\epsilon)>0
\]
for sufficiently small $\epsilon$, because
$s_\epsilon\to s>0$.
Both cases contradict optimality.
Thus $\nu<0$.

\paragraph{The boundary conditions.}
Since $\nu<0$, small changes of either sign in
$B$, holding $s,M$ fixed, preserve exclusion
and the attachment. Optimality therefore gives
$\xi=0$.

Next, change the continuation's probability to
$s+t$ and its mean to
$a+(m-a)/(s+t)$, keeping $B$ fixed.
This preserves the prior mean, and exclusion
remains strict for both signs of small $t$.
The derivative of the full objective at $t=0$
is $J-\theta(M-a)$, so
$J=\theta(M-a)$.
A feasible two-point spread around $m$ has
positive consumer surplus and nonnegative profit.
Hence the optimal objective $sJ$ is positive,
which gives $J>0$ and $\theta>0$.

Taking the bottom limit in \Cref{lem:shadow-quality} gives
$\theta\le q(z)$. Since $\theta>0$ and $q$ is continuous
with $q(0)=0$, we have $z>0$.
Since $A(0)=\xi=0$, the bottom Euler equation
yields
$(B-z)q'(z)=(1-\beta)q(z)$.
Substituting this identity and
$J=\theta(M-a)$ into
\eqref{eq:reduced-continuation-value} gives
$\theta(B-a)=(B-z)q(z)+\beta\pi(z)$.
Finally, $\theta\le q(z)$ and $B>a$ imply
$(z-a)q(z)\ge\beta\pi(z)$.
Dividing by $q(z)>0$ proves the last inequality.
\end{proof}

\subsubsection{The low-mean participation comparison}
\label{app:low-mean-comparison}
For the next result, fix $0<a<1$ and define
\begin{equation}
 D_c(a):=\min_{q(a)\le x\le\bar q}x c''(x)>0,
 \qquad
 \eta_{c,\beta}(a):=
 \frac{a(1-\beta)D_c(a)}{a+(1-\beta)D_c(a)}>0.
 \label{eq:near-floor-bound}
\end{equation}
\begin{lemma}[Full participation near a productive lower value]
\label{lem:full-trade-near-floor}
Every optimal posterior distribution has full participation
whenever $a<m<\min\{a+\eta_{c,\beta}(a),1\}$.
\end{lemma}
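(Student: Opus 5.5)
The plan is to argue by contradiction. Suppose an optimal posterior distribution at some $m$ with $a<m<\min\{a+\eta_{c,\beta}(a),1\}$ excludes a positive mass of buyers. By \Cref{thm:binary-structure} and \Cref{lem:binary-exclusion-floor}, the excluded group is an atom at $a$ and the rest is a regular continuation with probability $s\in(0,1)$, lower support point $B>a$ and mean $M\ge B$. Bayes plausibility gives $m-a=s(M-a)\ge s(B-a)$. The goal is to show $s(B-a)>\eta_{c,\beta}(a)$, which contradicts $m<a+\eta_{c,\beta}(a)$.

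First, I bound $s$ from below using the seller's incentive not to serve the atom. \Cref{lem:strict-optimal-exclusion} says the excluded virtual value $\nu=(a-sB)/(1-s)$ is strictly negative. Hence $sB>a$, that is, $s>a/B$, and so
\[
 s(B-a)>a\Bigl(1-\frac aB\Bigr).
\]
This lower bound is increasing in $B$. It therefore suffices to show $B\ge a+(1-\beta)D_c(a)$.

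Second, I bound the support gap using the boundary conditions in \Cref{lem:strict-optimal-exclusion}. These give $z>0$, $z-a\ge\beta\pi(z)/q(z)\ge0$ (so $z\ge a$), and $B-z=(1-\beta)q(z)/q'(z)$. Set $x=q(z)$. Inverse differentiation gives $q'(z)=1/c''(x)$, so
\[
 \frac{q(z)}{q'(z)}=xc''(x).
\]
Since $a\le z\le1$, we have $q(a)\le x\le\bar q$, and therefore $xc''(x)\ge D_c(a)$. Here $D_c(a)>0$ because it is the minimum of a continuous positive function on a compact interval with $q(a)>0$. Combining, $B\ge z+(1-\beta)D_c(a)\ge a+(1-\beta)D_c(a)$. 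Substituting this into the first bound gives
\[
 s(B-a)>\frac{a(1-\beta)D_c(a)}{a+(1-\beta)D_c(a)}=\eta_{c,\beta}(a),
\]
hence $m-a>\eta_{c,\beta}(a)$, the desired contradiction.

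The main work sits in \Cref{lem:strict-optimal-exclusion}, which is already available, in particular its conclusions $\xi=0$ and $\nu<0$. The remaining step needing care is that $x$ really lies in $[q(a),\bar q]$. This requires $z\ge a$, which follows from the last inequality of that lemma, and $z\le1$, which holds because virtual values are bounded by posterior means. One should also confirm that $D_c(a)$ is a positive minimum even if $c''$ blows up near zero. That is harmless here, since $q(a)>0$ keeps $x$ away from zero.
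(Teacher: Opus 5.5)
Your proof is correct and follows the paper's argument essentially step for step: \Cref{lem:strict-optimal-exclusion} gives $z\ge a$ and $B-z=(1-\beta)q(z)c''(q(z))$, hence $B\ge a+(1-\beta)D_c(a)$; strict exclusion $sB>a$, the mean identity $m-a=s(M-a)\ge s(B-a)$, and monotonicity of $a(B-a)/B$ in $B$ then yield $m-a>\eta_{c,\beta}(a)$. The only cosmetic difference is that you pass to $s(B-a)$ before applying $s>a/B$, whereas the paper first bounds $s(M-a)>a(M-a)/B$.
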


\begin{proof}
Since $q(a)>0$ and $c''$ is continuous and positive
on $[q(a),\bar q]$, the minimum defining $D_c(a)$
is strictly positive. Suppose an optimum excludes a positive mass of buyers.
Let $s\in(0,1)$ be its trading probability, and let
$B,M,z$ denote the lower support point, mean, and
bottom virtual value of its normalized trading
distribution.
By \Cref{lem:strict-optimal-exclusion}, $z\ge a$.
Thus $q(z)\in[q(a),\bar q]$, and the boundary
condition in that lemma gives
\[
 B=z+(1-\beta)\frac{q(z)}{q'(z)}
  =z+(1-\beta)q(z)c''(q(z))
  \ge a+(1-\beta)D_c(a).
\]
Exclusion therefore requires a positive gap between $a$ and the lowest posterior value among buyers who trade.

Strict exclusion also gives $sB>a$, so the trading
probability must exceed $a/B$.
Using the mean constraint and $M\ge B>a$, we obtain
\[
 m-a=s(M-a)
 >\frac{a(M-a)}{B}
 \ge\frac{a(B-a)}{B}.
\]
The last expression is increasing in $B$.
Substituting the lower bound on $B$ therefore gives
\[
 m-a>
 \frac{a(1-\beta)D_c(a)}
      {a+(1-\beta)D_c(a)}
 =\eta_{c,\beta}(a).
\]
This contradicts the stated restriction on $m$.
Hence every optimum has full participation.
\end{proof}

\subsection{Exclusion at a higher mean}
\label{app:general-reversal}

We first choose a mean at which every optimum with
$a=0$ excludes a positive mass of buyers. We then
show that exclusion persists when $a$ becomes
slightly positive. 

\begin{lemma}[Small means with a neutral lowest value]
\label{lem:small-neutral-priors}
Under \Cref{ass:cost}, there are $m_H\in(0,1)$
and constants $e_0,p_0>0$ such that every optimal
distribution at $(a,m)=(0,m_H)$ has the form
$G=(1-s)\delta_0+sG_A$, where $G_A$ is regular,
$1-s\ge e_0$, and $\Pi(G)\ge p_0$.
\end{lemma}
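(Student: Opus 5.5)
We will work in the payoff-neutral framework of \Cref{prop:neutral-participation} with $a=d=0$. Under \Cref{ass:cost} alone, \Cref{lem:neutral-regularization} gives the representation of every optimum. At $(a,m)=(0,m)$ it is a mixture $G=(1-s)\delta_0+sG_A$ with $G_A$ regular and active, mean $M_A\ge m$, and $s=m/M_A$. The ratio $\rho_\beta$ in \Cref{lem:neutral-price-basics} is attained, and every profile attaining it has mean at least $c'(\rho_\beta)>0$. By \Cref{lem:neutral-price-means}, the critical mean $m^c_\beta\in(0,1)$ is defined and depends only on the price problem. We will take $m_H:=m^c_\beta/2$, or any fixed mean strictly below $m^c_\beta$.

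The first step is to show that at every $m<m^c_\beta$, every optimum puts the active market at a profile attaining $\rho_\beta$. For this we will use the reduction \eqref{eq:neutral-weighted-ratio} with $\kappa=0$ and the argument in the proof of \Cref{prop:neutral-participation} labelled ``the critical ratio.'' That step used \Cref{ass:neutral-curvature} only to obtain uniqueness. Without it, we argue as follows: any optimum has value $m\cdot\max J_\beta/M$ over feasible profiles with $M\ge m$. Some ratio maximizer has mean $m^c_\beta>m$, so it is feasible, and every optimum must therefore attain the ratio $\rho_\beta$ exactly. Since all ratio maximizers are also maximizers of $L_{\beta,\rho_\beta}$, their means lie in $[c'(\rho_\beta),1)$.

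Next we bound $s$ away from one. We need a uniform upper bound $\bar M<1$ on the means of ratio maximizers. The set of ratio maximizers is compact in $L^1$, because it is the set of nonzero maximizers of a continuous functional at price $\rho_\beta$, and the bound $M\ge c'(\rho_\beta)$ keeps it away from the zero profile. By \Cref{lem:neutral-price-basics}, the unit profile is not a maximizer. Since $M$ is continuous and equals one only at the unit profile, we get $\bar M:=\max M<1$ on this set. Without uniqueness we cannot simply quote the single value $m^c_\beta$, so we instead choose $m_H<c'(\rho_\beta)$, which is below every ratio maximizer's mean and hence always feasible. Then $s=m_H/M_A\le m_H/c'(\rho_\beta)$, and we set $e_0:=1-m_H/c'(\rho_\beta)>0$.

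For the profit bound, $\Pi(G)=s\,\Pi(G_A)\ge (m_H/\bar M)\Pi(G_A)$. It therefore suffices to bound $\Pi(G_A)$ below uniformly over ratio maximizers. Every such profile satisfies $\varphi\ge\varphi(0)\ge c'(\rho_\beta)$ by \eqref{eq:neutral-lowest-quality-price}. Hence $\Pi(G_A)=\int\pi(\varphi)\ge\pi(c'(\rho_\beta))>0$, and we set $p_0:=m_H\pi(c'(\rho_\beta))$, using $\bar M\le1$.

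The main obstacle is the first step. We must ensure that the argument from \Cref{prop:neutral-participation}, namely that the ratio is attained and every optimum is a ratio maximizer scaled by $s$, goes through using only \Cref{ass:cost}, with no appeal to uniqueness. We also need the exclusion atom to stay excluded, which holds because its virtual value $-sB/(1-s)$ is negative.
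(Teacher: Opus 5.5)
Your final argument is correct and is essentially the paper's proof: choose $m_H<c'(\rho_\beta)$, use \Cref{lem:neutral-original-law} to show that the active component of every optimum attains $\rho_\beta$, and read off $1-s\ge 1-m_H/c'(\rho_\beta)$ and $\Pi(G)\ge m_H\pi(c'(\rho_\beta))$. You can drop every reference to $m^c_\beta$, because \Cref{lem:neutral-price-means} rests on the uniqueness result that needs \Cref{ass:neutral-curvature}; feasibility at $m_H$ already follows from ratio maximizers having mean at least $c'(\rho_\beta)>m_H$, and the compactness argument for $\bar M<1$ is also unnecessary, since the bound $M_A\le 1$ gives $s\ge m_H$.
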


\begin{proof}
By \Cref{lem:neutral-price-basics}, the maximal
ratio $\rho=\rho_\beta$ is attained.
Every profile attaining it satisfies
$\varphi(0)\ge z_*:=c'(\rho)>0$ and
$M(\varphi)\ge z_*$.
Choose any $m_H\in(0,z_*)$.

By \Cref{lem:neutral-original-law}, every optimum
has a regular trading component and all remaining
mass at zero. Writing $M_A$ for the component's
mean, we have $sM_A=m_H$ and
\[
 J_\beta(G)
 =sJ_\beta(G_A)
 =m_H\frac{J_\beta(G_A)}{M_A}
 \le\rho m_H.
\]
Conversely, take any regular law attaining the
ratio $\rho$ and mix it with an atom at zero,
using trading probability $m_H/M_A$.
This is feasible and attains $\rho m_H$.
Thus, the active component of every optimum must
itself attain the ratio $\rho$.
Its virtual values are therefore at least $z_*$,
so $M_A\ge z_*$ and $\Pi(G_A)\ge\pi(z_*)$.
Since also $M_A\le1$, we obtain
$1-s\ge1-m_H/z_*>0$ and $\Pi(G)=s\Pi(G_A)\ge m_H\pi(z_*)>0$.
Take $e_0=1-m_H/z_*$ and $p_0=m_H\pi(z_*)$.
The argument uses only \Cref{ass:cost} and does
not require uniqueness of the ratio-maximizing
profile.
\end{proof}

\begin{lemma}[A small productive lower value]
\label{lem:small-productive-floor}
Maintain \Cref{ass:cost,ass:curvature}, and choose
$m_H$ as in \Cref{lem:small-neutral-priors}.
There are $a_0\in(0,m_H)$ and constants $e_1,p_1>0$
such that, for every $a\in(0,a_0)$, every optimum
at $(a,m_H)$ has an excluded atom of probability
at least $e_1$ and profit at least $p_1$.
\end{lemma}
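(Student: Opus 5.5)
We argue by contradiction and compactness as $a\downarrow0$. Suppose the claim fails. Then there are sequences $a_n\downarrow0$ and optima $G_n$ at $(a_n,m_H)$ for which either the excluded mass tends to zero or profit tends to zero. Along a subsequence, $G_n\to G$ weakly. Both feasible sets are compact, and any distribution on $[0,1]$ with mean $m_H$ can be moved into $[a_n,1]$ at vanishing cost: apply the affine map $y\mapsto a_n+(1-a_n)(y-0)$ and rebalance the mean with a small mixture. Payoffs are weakly continuous, so $V_{a_n,\beta}(m_H)\to V_{0,\beta}(m_H)$ and the limit $G$ is optimal at $(0,m_H)$. By \Cref{lem:small-neutral-priors}, $G=(1-s)\delta_0+sG_A$ with $1-s\ge e_0$ and $\Pi(G)\ge p_0$. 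Profit is weakly continuous, so $\Pi(G_n)\to\Pi(G)\ge p_0$. This already yields $p_1=p_0/2$ for small $a$. The remaining work is to show that the excluded mass is bounded below.

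By \Cref{thm:binary-structure}, each $G_n$ has one of three lower parts. Case (i) is exclusion at $a_n$. In this case weak convergence alone does not bound the excluded mass, so we use \Cref{lem:strict-optimal-exclusion}. It gives $s_nB_n>a_n$ and the boundary relation $B_n-z_n=(1-\beta)q(z_n)/q'(z_n)$. The mass near zero in the limit is at least $e_0$, and the trading support of $G_n$ lies above $B_n$. Once we show that $B_n$ stays bounded away from zero, the atom at $a_n$ must carry mass approaching at least $e_0$; any mass of $G_n$ in $(a_n,B_n)$ is excluded from the support.

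To bound $B_n$ we use $z_n\ge\theta_n>0$ together with $J=\theta(M-a)$. A limiting argument then shows that the $\theta_n$ converge to the ratio $\rho_\beta$, so that $z_n\ge c'(\theta_n)$ is eventually near $z_*>0$. If necessary, a bound $B_n\ge z_n$ suffices.

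The cases to rule out are (ii) and (iii), which have full participation. Here the limit has an atom of mass $\ge e_0$ at $0$, so $G_n$ must place mass approximately $e_0$ near $0$. In case (iii) there is no lower atom, so the separating interval would have to put almost all of that mass near $a_n$. In case (ii) the lower trading atom sits at some $v_n\to0$ with mass approximately $e_0$, followed by a separating interval whose probability $\rho_n$ and bottom $B_n$ stay bounded away from zero. These bounds come from the limit's active part, whose virtual values are at least $z_*$. The seller's revenue contact gives the atom's virtual value as
\[
 z_{L,n}=v_n-\frac{1-u_{L,n}}{u_{L,n}}(B_n-v_n),
\]
which converges to $-(1-e_0)B/e_0$ or less, hence is negative for large $n$. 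This contradicts the requirement that the atom trade, $z_{L,n}>0$. Case (iii) is handled by the same contact computation applied to the lowest ranks of the separating interval: the chord from rank zero over the mass near $a_n$ would have negative slope, so those types would be excluded, contradicting full participation.

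The main obstacle is making the case (iii) argument rigorous, together with the lower bound on $B_n$ in case (i). In case (iii) the mass near zero does not sit at a single atom, so we must bound the concavified revenue slope at the bottom via $R(u)\le$ chord values. We will use $R(u_0)\ge(1-u_0)B$ for $u_0$ just past the near-zero mass, with $R(0)=Q(0)\approx a_n$, so that concavity forces nonpositive virtual values on an initial interval. Taking $a_0$ small enough that all three cases give contradictions or bounds uniformly then yields $e_1=e_0/2$ and $p_1=p_0/2$.
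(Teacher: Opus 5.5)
\textbf{The gap is in case (ii).} There you assert that the separating part of $G_n$ keeps probability and bottom $B_n$ bounded away from zero, "because the limit's active part has virtual values at least $z_*$". Weak convergence gives no such control: mass of the separating part of $G_n$ can drift toward zero and help form the limiting atom at $0$. In fact your own contact formula points the other way. The condition $z_{L,n}>0$ is equivalent to $B_n<v_n/(1-u_{L,n})$. The mean constraint gives $1-u_{L,n}\ge m_H-v_n$. Since $v_n\to0$, trade at the lower atom \emph{forces} $B_n\to0$. The scenario you must exclude is exactly the one you assumed away, so the argument is circular and yields no contradiction. Case (iii) is only sketched. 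Also, $R_n(0)=\min\supp G_n$ tends to zero there but need not be close to $a_n$.

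\textbf{The missing idea} is a bound, valid for every fully participating distribution, that ties its lowest posterior value to something bounded away from zero. The paper's bound works as follows. If everyone trades, then $\phi_Q\ge0$ a.e. Endpoint contact gives $\int_0^1\phi_Q\,\dd u=R_Q(0)-R_Q(1)=b:=\min\supp G$. Since $\pi(z)\le z\pi(1)$, this yields $\Pi(G)\le b\pi(1)$. Combined with the profit bound $p_1$ you already have, every fully participating optimum has support in $[p_1/\pi(1),1]$. That contradicts the limiting atom at $0$, so no split between (ii) and (iii) is needed.

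Your case (iii) idea is the pointwise form of the same fact, and stated generally it also closes the gap. Nonnegative virtual values make $R_n$ nonincreasing, so $(1-u)Q_n^+(u)\le R_n(u)\le R_n(0)=\min\supp G_n$ for all $u$. This cannot hold if $\min\supp G_n\to0$ while $Q_n^+$ stays bounded below at ranks just past the limiting atom. Apply it in case (ii) instead of the contact formula.

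\textbf{Case (i) is fine.} Your route is valid: $\theta_n=V_{a_n,\beta}(m_H)/(m_H-a_n)\to\rho_\beta$, using $V_{0,\beta}(m_H)=\rho_\beta m_H$ from the proof of \Cref{lem:small-neutral-priors}, together with $z_n\ge c'(\theta_n)$ and $B_n\ge z_n$. Note that the lemma gives $q(z_n)\ge\theta_n$, not $z_n\ge\theta_n$. Still, applying the same profit bound to the active part, $p_1\le s_n\Pi(G_{A,n})\le B_n\pi(1)$, gives $B_n\ge p_1/\pi(1)$ more directly, as in the paper.
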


\begin{proof}
For any sequence $a_n\downarrow0$ and corresponding
optimizers $G_n$ at $(a_n,m_H)$, compactness and
\Cref{lem:seg-payoff-continuity} give a subsequence
converging weakly to an optimum $G_0$ at $(0,m_H)$.
By \Cref{lem:small-neutral-priors},
$G_0(\{0\})\ge e_0$ and $\Pi(G_0)\ge p_0$.

Payoff continuity implies that every optimum
for sufficiently small $a>0$ has profit at
least $p_1:=p_0/2$.
Otherwise, a sequence violating this bound would
have an optimal limit at $(0,m_H)$ with profit
at most $p_0/2$, contradicting the preceding lemma.

We now rule out full participation.
Consider any posterior distribution $G$ under
which everyone trades, and let $b=\min\supp G$.
Its ironed virtual values are nonnegative almost
everywhere. Endpoint contact gives
$\int_0^1\phi_Q(u)\,\dd u=R_Q(0)-R_Q(1)=b$.
Convexity of $\pi$ and $\pi(0)=0$ imply
$\pi(z)\le z\pi(1)$ for $0\le z\le1$, so
\[
 \Pi(G)
 =\int_0^1\pi(\phi_Q(u))\,\dd u
 \le\pi(1)\int_0^1\phi_Q(u)\,\dd u
 =b\pi(1).
\]

Suppose fully participating optima existed along
a sequence $a_n\downarrow0$.
The uniform profit bound would place their entire
posterior support in $[b_*,1]$, where
$b_*:=p_1/\pi(1)>0$.
Every weak limit would have the same support
restriction. This contradicts the positive atom
at zero required of every limiting optimum.
Hence every optimum for sufficiently small
$a>0$ has an excluded atom.

By \Cref{thm:binary-structure}, these optima have
the form
$G=(1-s)\delta_a+sG_A$, where $G_A$ is a regular
trading distribution. Let $B$ be its lower
support point. Applying the preceding profit
bound to $G_A$ gives
$p_1\le\Pi(G)=s\Pi(G_A) \le sB\pi(1)\le B\pi(1)$.
Thus, $B\ge b_*$ so the trading support remains
uniformly separated from zero.

Now take any sequence of these optima with
$a_n\downarrow0$. Along a subsequence,
$s_n\to s$ and $G_{A,n}$ converges weakly to a
distribution $G_A$ supported on $[b_*,1]$.
The full limit is
$G_0=(1-s)\delta_0+sG_A$ and is optimal at
$(0,m_H)$. Because $G_A$ has no mass at zero,
$1-s=G_0(\{0\})\ge e_0$.
Thus every optimizer excludes a share of at least $e_1:=e_0/2$ for all sufficiently small $a>0$. A sequence violating this bound as $a\downarrow0$ would contradict the limit just established. Reducing $a_0$ if necessary ensures that both bounds hold for every optimizer whenever $0<a<a_0$.
\end{proof}

\subsection{Proof of \Cref{thm:general-participation}}
\label{app:general-completion}

\begin{proof}
By \Cref{lem:small-productive-floor}, there are
$m_H\in(0,1)$ and constants $a_0,e_1,p_1>0$
such that, whenever $0<a<a_0$, every optimum
at $(a,m_H)$ has participation at most $1-e_1$
and profit at least $p_1$.
We will choose a lower mean close enough to $a$
that every optimum serves everyone and total surplus is below $p_1$.

For any feasible distribution $G$ on $[a,1]$
with mean $m$, individual rationality gives
$\Pi(G)\le\TS(G)$.
Total surplus at posterior value $y$ is at most
$\pi(y)$, and convexity bounds $\pi$ by the chord
joining its values at $a$ and $1$. Thus,
\begin{equation}
 \Pi(G)\le\TS(G)\le\E_G[\pi(Y)]
 \le\pi(a)+\frac{m-a}{1-a}[\pi(1)-\pi(a)].
 \label{eq:general-small-mean-payoff-bound}
\end{equation}

Since $\pi$ is continuous and $\pi(0)=0$, choose
$0<\bar a<\min\{a_0,m_H\}$ such that
$\pi(a)<p_1$ for every $a\in(0,\bar a)$.
Fix such an $a$ and choose
\[
 0<\eta_a<
 \min\!\left\{
 \eta_{c,\beta}(a),\;
 m_H-a,\;
 \frac{(1-a)[p_1-\pi(a)]}{\pi(1)-\pi(a)}
 \right\}.
\]
All three bounds are positive.

For any $m_L\in(a,a+\eta_a)$, the first bound
and \Cref{lem:full-trade-near-floor} ensure
full participation in every optimum.
The second gives $m_L<m_H$.
The third, together with
\eqref{eq:general-small-mean-payoff-bound},
makes total surplus strictly smaller than $p_1$
for every feasible distribution at $(a,m_L)$.

Set $\varepsilon=e_1$.
For every pair of optimal distributions $G_L$
and $G_H$ at the two means,
$T(G_L)=1$ and $T(G_H)\le1-\varepsilon$, while
$\Pi(G_L)\le\TS(G_L)
 <p_1\le\Pi(G_H)\le\TS(G_H)$.
Thus, participation falls by at least $\varepsilon$,
while both profit and total surplus strictly increase.
\end{proof}

\section{The quadratic switch from inclusion to exclusion}
\label{app:quadratic-proofs}

Throughout, $\beta=0$, $c(x)=x^2/2$, and $\bar Q>1$, so positive
virtual values equal quality. We identify the inclusion and
exclusion candidates, then compare their surpluses near a
prior where they tie.

\subsection{The candidate distributions}
\label{app:quadratic-preliminaries}

For $t\in(0,\infty]$, put
$T=2\chi=2\arctan t$, $b=1-e^{-T}$,
$\gamma(t)=e^{-\arctan t}/\sqrt{1+t^2}$, and
$h=(1-\theta)\gamma(t)$.
Expressions at $t=\infty$ are interpreted by limits.

\begin{lemma}[Quadratic continuations and lower atoms]
\label{lem:mult-quadratic-smooth-tail}
\label{lem:mult-quadratic-atom-matching}
Every optimal normalized continuation has $\theta<1$
and some $t\in(0,\infty]$. Writing
$\eta(u)=\arctan t+\tfrac12\log(1-u)$, its profiles are
\begin{align}
 \varphi(u)&=\theta+(1-\theta)e^{-\eta(u)}\cos\eta(u),
 \label{eq:mult-smooth-phi}\\
 Q(u)&=\theta+(1-\theta)e^{-\eta(u)}
                 [\cos\eta(u)+\sin\eta(u)]
 \quad(0\le u<b),
 \label{eq:mult-smooth-Q}
\end{align}
with $Q=\varphi=1$ on $[b,1]$.
Conversely, these profiles give a feasible regular
continuation whenever $\theta<1$ and $\varphi(0)\ge0$.
Its bottom value, mean, initial Euler state, and surplus satisfy
\[
 \begin{gathered}
 \varphi(0)=\theta+h,\qquad B=\theta+h(1+t),\qquad
 \xi=\theta+h(1-t),\\
 M=\theta+2ht+(1-\theta)e^{-T},\qquad
 K:=C-\theta(M-B)=h^2t^2,
 \end{gathered}
\]
and $\dd C=\theta\,\dd M-\xi\,\dd B$.

If an optimum has a lower trading atom with value $w$,
virtual value $z$, and continuation probability $r$, then
$t$ is finite, $t>2$, and
\begin{equation}
 r=\frac{t-2}{t+2},\qquad
 z=\theta-h,\qquad w=\theta+h(t-3).
 \label{eq:mult-atom-matching}
\end{equation}
Put $j(t)=\gamma(t)(t-2)(t+3)/(t+2)$ and
$\vartheta_I(t)=j(t)/(1+j(t))$.
Then $\theta=\vartheta_I(t)$ if $w>a$, whereas
$\theta\ge\vartheta_I(t)$ if $w=a$.
Along the feasible family with $\theta=\vartheta_I(t)$,
full consumer surplus $\mathcal C$ and mean $m$ satisfy $\dd\mathcal C=\theta\,\dd m$.
\end{lemma}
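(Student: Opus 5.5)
The plan is to exploit that with $\beta=0$ and $c(x)=x^2/2$ we have $q(z)=z$, $q'=1$, $q''=0$, so that \Cref{prop:separating-conditions} becomes a linear constant-coefficient system. The Euler equation \eqref{eq:continuation-euler} reads $A+(Q-\varphi)-\varphi=0$, that is, $\varphi=(A+Q)/2$. In log rank $\tau=-\log(1-u)$ the state equations become
\[
\dot A=\tfrac12(A+Q)-\theta,\qquad \dot Q=\tfrac12(Q-A).
\]
The rest point is $A=Q=\theta$, and the deviations $(A-\theta,Q-\theta)$ rotate with eigenvalues $\tfrac12\pm\tfrac i2$.

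Solving backward from the terminal state $(A,Q,\varphi)=(1,1,1)$ at $\tau=T$ given by \Cref{lem:binary-top-fit} yields
\[
Q-\theta=(1-\theta)e^{-\eta}(\cos\eta+\sin\eta),\qquad A-\theta=(1-\theta)e^{-\eta}(\cos\eta-\sin\eta),
\]
where $\eta=(T-\tau)/2$. Averaging gives \eqref{eq:mult-smooth-phi}. Writing $T=2\arctan t$ reproduces $\eta(u)$.

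The bound $\theta<1$ comes from \Cref{lem:shadow-quality}: $\theta\le q(\varphi)=\varphi$, and $\varphi<1$ on the separating interior. The constraint $t\in(0,\infty]$ comes from requiring $\varphi'>0$, i.e.\ $\dot\varphi<0$ in $\eta$, on $\eta\in[0,T/2]$. Since $\tfrac{\dd}{\dd\eta}[e^{-\eta}\cos\eta]=-\sqrt2\,e^{-\eta}\cos(\eta-\pi/4)$, this needs $T/2<3\pi/4$, and positivity of $A'$ near the top restricts further to $\chi\le\pi/2$. For the converse, I will check directly that these profiles are nondecreasing and in $[0,1]$ whenever $\varphi(0)\ge0$, and that $Q$ is their tail average, so \eqref{eq:continuation-quantile} holds.

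Evaluating at $u=0$ (so $\eta=\chi$, $\cos\chi=1/\sqrt{1+t^2}$, $\sin\chi=t/\sqrt{1+t^2}$) gives $\varphi(0)=\theta+h$, $B=Q(0)=\theta+h(1+t)$ and $\xi=A(0)=\theta+h(1-t)$. The mean $M=\int_0^1[-\log(1-u)]\varphi\,\dd u$ is an elementary integral of $e^{-\tau}e^{-\eta}\cos\eta$, which I expect to give $\theta+2ht+(1-\theta)e^{-T}$. The identity $K=h^2t^2$ follows from \eqref{eq:reduced-continuation-value} with $\beta=0$: $K=(B-\varphi(0))^2=(ht)^2$. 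The differential $\dd C=\theta\,\dd M-\xi\,\dd B$ follows from the multiplier identity of \Cref{lem:binary-two-multipliers} along the two-parameter family $(\theta,t)$. Alternatively, I would verify it by differentiating $C=K+\theta(M-B)$ with the explicit formulas.

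For the atom, I will combine the optimality conditions of the joint lower structure.
\begin{itemize}
\item The revenue contact gives $w=(1-r)z+rB$.
\item The share derivative computed in the proof of \Cref{lem:strict-attachment}, set to zero, gives $K=r(B-z)^2$ (with $\beta=0$). Here I use that at an interior optimum the share variation is two-sided.
\item The attachment variation in $B$ at fixed $w$, offset by the interior directions, gives a relation between $\xi$, $r$ and $z$ (roughly $\xi=r\,q(z)$-type balance).
\end{itemize}
Solving these three relations in the quadratic case should give $z=\theta-h$, $r=(t-2)/(t+2)$ and $w=\theta+h(t-3)$. The constraint $r>0$ then forces $t>2$ and finite.

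Finally, the mean-raising variation at the atom gives $q(z)\le\theta$, with the reverse inequality available when $w>a$ because $w$ can also be lowered. This is a variation in $w$ with the contact maintained. Written out, the stationarity condition in $w$ becomes $\theta(1+j)=j$, i.e.\ $\theta=\vartheta_I(t)$, and it becomes only an inequality when $w=a$. The final claim $\dd\mathcal C=\theta\,\dd m$ is then an envelope computation along that family.

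I expect the main obstacle to be this last group of steps: deriving the exact atom-matching identities and $\vartheta_I$. This requires choosing the right three independent feasible perturbations (share, $B$, and $w$) and carrying out the algebra with $\gamma(t)$ without error. I would first verify these identities symbolically at a numerical $t$ before writing the general derivation.
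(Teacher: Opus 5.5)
Your treatment of the continuation matches the paper's. With $q(z)=z$, the Euler equation gives $\varphi=(A+Q)/2$. Solving the linear system backward from $(A,Q,\varphi)=(1,1,1)$ gives the profiles. The bottom values, $M$, $K=(B-\varphi(0))^2=h^2t^2$, and, by direct differentiation, $\dd C=\theta\,\dd M-\xi\,\dd B$ all check. Two small fixes there:
\begin{itemize}
\item $T\le\pi$ comes from $\varphi\ge\theta$ (\Cref{lem:shadow-quality}) at the \emph{bottom}, $\eta=\chi$, where it reads $\cos\chi\ge0$. Nothing near the top restricts $\chi$.
\item The $(\theta,t)$ family is not generated by the compactly supported interior directions of \Cref{lem:binary-two-multipliers}. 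Rely on the direct verification of the payoff differential instead.
\end{itemize}

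The atom-matching part has a genuine error. The share derivative displayed in the proof of \Cref{lem:strict-attachment} was simplified using $\theta=q(z)$, which is the hypothesis being refuted there. In general (here $\beta=0$, $q(z)=z$), vary $z$ at fixed $w,B$ and restore the mean at marginal value $\theta$. This gives $K+(B-w)(z-\theta)=(w-z)(B-z)$, i.e.\ $K=r(B-z)^2+(1-r)(B-z)(\theta-z)$. Your relation $K=r(B-z)^2$ drops the last term. It fails at the actual solution, where $r(B-z)^2=h^2(t^2-4)\ne h^2t^2$.

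The condition you leave vague comes from changing $B$ at fixed $w,z$ by an interior direction with $\dd M=0$ and $\dd C=-\xi\,\dd B$, again restoring the mean at $\theta$. It gives $K+(B-w)(z-\theta)=(B-z)(z-\xi)$, hence $w-z=z-\xi$. This is the paper's $w-z=K_B-(\theta-z)$, since $K_B=\theta-\xi=h(t-1)$ at fixed $\theta$. Substituting into the first condition yields $(\theta-z)^2=h^2$. You then need the one-sided inequality $z\le\theta$ to select $z=\theta-h$, after which $w$ and $r$ follow.

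Two more claims in this part are wrong.
\begin{itemize}
\item $r>0$ does not make $t$ finite, since $r\to1$ as $t\to\infty$. Rather, the same two conditions hold at $T=\pi$, where $h=0$. They force $z=\theta=\varphi(0)$, contradicting the strict quality gap of \Cref{lem:strict-attachment}.
\item The fixed-curve inequality $q(z)\le\theta$ has no reverse when $w>a$. The atom's face is at capacity, so regularization can only raise values, and indeed $z=\theta-h<\theta$.
\end{itemize}
The equality $\theta=\vartheta_I(t)$ comes only from moving the whole atom. With $\mathcal A=\frac{w-z}{B-z}[K+(B-w)(z-\theta)]$, the mean-compensated derivative in $w$ is $\mathcal A_w-\theta=(1-\theta)j(t)-\theta$. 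It must be zero if $w>a$ and nonpositive if $w=a$.
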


\begin{proof}
The quadratic equations in \Cref{prop:separating-conditions}
are $\dot Q=Q-\varphi$ and $2\dot\varphi=Q-\theta$.
Solving backward from $Q=\varphi=1$ gives the profiles
and $\xi=2\varphi(0)-B$. Strict increase requires $\theta<1$;
\Cref{lem:shadow-quality} gives $\varphi\ge\theta$,
hence $T\le\pi$.
Conversely, the displayed profiles increase to one,
satisfy reconstruction, and remain nonnegative when
$\varphi(0)\ge0$.
Integration gives $M,C$; differentiation gives the payoff
identity and, for finite $t$ at fixed $\theta$,
\begin{equation}
 B_t=-\frac{2ht}{1+t^2}<0,
 \qquad K_B=h(t-1).
 \label{eq:mult-smooth-k}
\end{equation}

For an attached atom, contact gives $r=(w-z)/(B-z)$.
The full mean and surplus are $m=w+r(M-w)$ and
$\mathcal C=r[C+(B-w)z]$.
Subtracting $\theta(m-w)$ from full surplus gives
\[
 \mathcal A(w,z,B)
 =\frac{w-z}{B-z}[K(B)+(B-w)(z-\theta)].
\]
Keep $w,\theta$ fixed and vary $z$ or $B$.
Strict attachment permits either sign; the interior
direction in \eqref{eq:binary-remote-mean} restores
the mean at marginal surplus $\theta$.
Thus $\mathcal A_z=\mathcal A_B=0$, or
\begin{equation}
 \begin{gathered}
 K-(B-w)(\theta-z)=(w-z)(B-z),\\
 w-z=K_B-(\theta-z).
 \end{gathered}
 \label{eq:mult-atom-two-equations}
\end{equation}
Substitution gives $(\theta-z)^2=h^2$.
Since $\theta\ge z$, we obtain $z=\theta-h$ and
\eqref{eq:mult-atom-matching}; $w>z$ requires $t>2$.
At $T=\pi$, small duration changes in either direction
remain feasible by strict attachment and continuity.
The same equations in the limit would give
$z=\theta=\varphi(0)$, contradicting the strict quality gap.

Finally, varying $w$ and compensating its mean effect
has derivative $\mathcal A_w-\theta=(1-\theta)j(t)-\theta$.
This must vanish when $w>a$ and be nonpositive when
$w=a$, proving the last conditions.
Along the feasible interior family, these attachment conditions
and $\dd C=\theta\,\dd M-\xi\,\dd B$ leave
$\dd\mathcal C=\theta\,\dd m$; terms involving
$\dd\theta$ cancel because
$\mathcal A_\theta=-(m-w)$.
\end{proof}

\subsection{The two candidates}
\label{app:quadratic-global-bounds}

Put $k=e^{-\pi/2}$, $a_0=k/(1+k)$, and
$m_0=k(k+3)/(1+k)$. Define
\begin{equation}
 \theta_E(t)=\frac{\gamma(t)(t-1)}{1+\gamma(t)(t-1)},
 \qquad a_E(t)=\theta_E(t)\left[1-\frac1{(t-1)^2}\right],
 \qquad t>2.
 \label{eq:quad-exclusion-parameters}
\end{equation}
The formula for $\theta_E$ also applies for $0<t\le2$.
These impose the exclusion conditions $\xi=0$ and
$C=\theta(M-a)$ from \Cref{lem:strict-optimal-exclusion}.
Write $B_E(t),M_E(t)$ for its lower support point and mean.
At prior $(a_E(t),m)$, its trading probability is
$s_E=(m-a_E)/(M_E-a_E)$ and its consumer surplus is
$\theta_E(t)(m-a_E(t))$, provided the lower atom is excluded.

For inclusion, use the matched law at $\theta=\vartheta_I(t)$.
Write $m_I(t)$ and $\mathcal C_I(t)$ for its mean and consumer
surplus, obtained from
\begin{equation}
 \mathfrak m=(1-r)w+rM,\qquad
 \mathfrak c=r[C+(B-w)z].
 \label{eq:mult-matched-payoffs}
\end{equation}
The matching conditions and the continuation differential give
$\mathcal C_I'(t)=\vartheta_I(t)m_I'(t)$.

\begin{lemma}[The possible optima near the switch]
\label{lem:quadratic-global-comparison}
The functions $a_E,m_I,\vartheta_I$ strictly increase on $(2,\infty)$,
with $a_E$ ranging from zero to $a_0$ and $m_I$ from
$-\gamma(2)$ to $m_0$.
Let $0<a<a_0$ and $a<m<m_0$.
If the inclusion candidate at $m_I(t_I)=m$ is feasible with
$w_I>a$ and $z_I>0$, and the exclusion candidate at $a_E(t_E)=a$ is feasible
with a positive excluded share and strictly negative lower virtual
value, then every optimum is one of these two laws.
Their higher consumer surplus is $V_a(m)$; at equality, both
and only these laws are optimal.
\end{lemma}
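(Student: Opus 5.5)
The plan is to prove the monotonicity and range claims directly from the closed forms, and then to classify optima through \Cref{thm:binary-structure}, using \Cref{lem:mult-quadratic-atom-matching} and \Cref{lem:strict-optimal-exclusion} to pin each configuration to a one-parameter family.

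\emph{Monotonicity and ranges.} Write $\chi=\arctan t$.
\begin{itemize}
\item For $\vartheta_I=j/(1+j)$ it suffices to show $j'>0$ on $(2,\infty)$. We have $(\log\gamma)'=-(1+t)/(1+t^2)$, while $\frac{\dd}{\dd t}\log[(t-2)(t+3)/(t+2)]$ dominates this rate for $t>2$; this is an elementary rational inequality.
\item For $a_E=\theta_E[1-(t-1)^{-2}]$, the second factor increases. For the first, $\frac{\dd}{\dd t}\log[\gamma(t)(t-1)]=1/(t-1)-(1+t)/(1+t^2)=2/[(t-1)(1+t^2)]>0$, so $\theta_E$ increases as well.
\item Endpoints of $a_E$: at $t=2$, $a_E=0$. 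As $t\to\infty$, $\gamma(t)(t-1)\to e^{-\pi/2}=k$, so $a_E\to k/(1+k)=a_0$.
\item Endpoints of $m_I$: at $t=2$, $r=0$ and $\vartheta_I=0$, so $m_I=w=\theta+h(t-3)=-\gamma(2)$. As $t\to\infty$, \eqref{eq:mult-smooth-Q}, \eqref{eq:mult-matched-payoffs} and $T\to\pi$ give $m_0$.
\item Strict increase of $m_I$: I would differentiate \eqref{eq:mult-matched-payoffs} along the matched family. A cleaner route is to combine $\mathcal C_I'=\vartheta_I m_I'$ with a direct sign computation of $m_I'$ from the explicit $r,w,M$. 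I expect this to be the messiest calculus step.
\end{itemize}

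\emph{Classification.} Fix an optimum at $(a,m)$ with $0<a<a_0$ and $a<m<m_0$; it has one of the three lower parts in \Cref{thm:binary-structure}.
\begin{itemize}
\item \emph{Case (i), exclusion.} \Cref{lem:strict-optimal-exclusion} gives $\xi=0$ and $C=\theta(M-a)$. With the formulas of \Cref{lem:mult-quadratic-atom-matching} ($\xi=\theta+h(1-t)$, $K=h^2t^2$, $h=(1-\theta)\gamma$), $\xi=0$ gives $\theta=\theta_E(t)$. Substituting into $C=\theta(M-a)$ gives $a=a_E(t)$, and this forces $t>2$ through $B-a>0$. Strict monotonicity of $a_E$ then yields $t=t_E$. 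The trading share is fixed by the mean, so the optimum is the exclusion candidate.
\item \emph{Case (ii), lower trading atom with $w>a$.} The matching lemma gives $r,z,w$ as functions of $t$ with $\theta=\vartheta_I(t)$, so the mean equals $m_I(t)$. Monotonicity gives $t=t_I$, so the optimum is the inclusion candidate.
\item \emph{Case (ii) with $w=a$.} Here $\theta\ge\vartheta_I(t)$, and I would show this is incompatible with the existence of the feasible inclusion candidate with $w_I>a$. Along the family with $w=a$, the mean constraint and $\theta\ge\vartheta_I$ force $t\neq t_I$ with the wrong ordering relative to $a$. Alternatively, the derivative $\mathcal A_w-\theta\le0$ combined with $\dd\mathcal C=\theta\,\dd m$ shows that this configuration is dominated by the matched family. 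This step I expect to be the main obstacle, and I would attack it first by the explicit formulas.
\item \emph{Case (iii), no lower group.} The bottom is either at $a$ or above it. If $B>a$, free variation of $B$ gives $\xi=0$ together with the Euler equation at the bottom. Then \Cref{lem:top-atom-refinement}-type or adjacent-atom comparisons, or adding a small lower trading atom as in \Cref{lem:strict-attachment}, produce a strict improvement. If $B=a$, the regular law with bottom at $a$ is a degenerate limit of the inclusion family ($r\to1$), which has $t$ near $2$ and hence mean near $-\gamma(2)<a$, contradicting $m>a$.
\end{itemize}

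Once every optimum is shown to be one of the two candidates, $V_a(m)$ is their larger consumer surplus, and at equality both candidates, and only they, are optimal.
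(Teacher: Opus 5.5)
\textbf{The main gap is case (iii), regular full participation.} This is exactly where the hypothesis $m<m_0$ has to be used. Your classification never invokes $m<m_0$, yet without it the conclusion fails: regular optima with $B>a$ exist, for instance at high means by \Cref{prop:high-mean}. So no argument of the form ``a regular law with $B>a$ admits a strict improvement'' can be right as stated.

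\begin{itemize}
\item \Cref{lem:top-atom-refinement} and \Cref{lem:binary-adjacent-elimination} need two adjacent pure atoms, which a law with a separating interval and a single top atom does not have.
\item \Cref{lem:strict-attachment} only gives necessary conditions for an atom that is already present; it does not produce an improvement of a regular law.
\item Your $B=a$ argument is also unsound. $r=(t-2)/(t+2)\to1$ corresponds to $t\to\infty$, not $t\to2$. Moreover, a regular law with bottom $a$ need not satisfy the atom-matching conditions at all.
\end{itemize}

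The missing idea is a bound on the mean. Raising $B$ at fixed $M$ is feasible even when $B=a$, so $\xi\ge0$. Since $\xi=\theta+(1-\theta)\gamma(t)(1-t)$, this means $\theta\ge\theta_E(t)$. Next, $M=\theta+(1-\theta)[2t\gamma+e^{-T}]$ with $2t\gamma+e^{-T}<1$, so the mean increases in $\theta$. Hence $M$ is at least its value on $\xi=0$, namely $(3\sin\chi-\cos\chi+e^{-\chi})/(e^{\chi}+\sin\chi-\cos\chi)$. This expression decreases in $\chi$ to $m_0$ at $\chi=\pi/2$. Therefore every regular optimum has mean at least $m_0$, which the hypothesis excludes. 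In your $B>a$ subcase you already had $\xi=0$; evaluating the mean there would have finished it.

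\textbf{The atom at $w=a$ is also left unresolved.} Your alternative via $\mathcal A_w-\theta\le0$ and $\dd\mathcal C=\theta\,\dd m$ does not by itself give a contradiction. What works is the following. Write $w_0=(t-3)\gamma$, $p=r\gamma^2(1+t^2)$ and $A=w_0+j+p<1$. An attached law then has mean $A+\theta(1-A)$, so $\theta\ge\vartheta_I(t)$ gives $m\ge m_I(t)$, hence $t\le t_I$. At fixed mean, the atom's value is $w_m(t)=m-(1-m)(j+p)/(1-A)$, which is strictly decreasing in $t$. So $w\ge w_I>a$, a contradiction.

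\textbf{Smaller points.}
\begin{itemize}
\item The same decomposition, $m_I=(A+j)/(1+j)$ with $A<1$, $A'>0$ and $j'>0$, is what gives $m_I'>0$, which you leave uncomputed. The identity $\mathcal C_I'=\vartheta_Im_I'$ carries no sign information about $m_I'$.
\item In case (i), $B-a=h^2t^2/\theta$ is positive for every $t$, so it cannot be what forces $t>2$. That comes instead from $\theta>0$ and $a>0$ in $a=\theta_E(t)[1-(t-1)^{-2}]$.
\item Also in case (i), the endpoint $t=\infty$ gives $a=a_0$ and must be excluded separately using $a<a_0$.
\end{itemize}
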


\begin{proof}
\emph{Regular full participation and exclusion.}
At a regular optimum, increasing $B$ at fixed $M$ is feasible,
even if $B=a$, so $\xi\ge0$ and $\theta\ge\theta_E(t)$.
The mean increases with $\theta$. Along $\xi=0$, it is
\[
 M_E(\chi)=\frac{3\sin\chi-\cos\chi+e^{-\chi}}
                   {e^\chi+\sin\chi-\cos\chi},
\]
and decreases to $m_0$ at $\chi=\pi/2$.
Indeed, its derivative has the sign of
$e^\chi\cos\chi-\sin\chi\cosh\chi-1$; this expression
starts at zero and has negative derivative because
$\tan\chi>\tanh\chi$.
Thus regular full participation requires $m\ge m_0$.

With exclusion, \Cref{lem:strict-optimal-exclusion} gives
$\xi=0$, $\theta>0$, and $C=\theta(M-a)$, hence
\eqref{eq:quad-exclusion-parameters} with $t>2$.
Here
$\theta_E'=2\gamma(t)/\{(1+t^2)[1+(t-1)\gamma(t)]^2\}>0$,
so $a_E$ increases from zero to $a_0$.
The endpoint $t=\infty$ requires $a=a_0$.
Consequently $a<a_0$ identifies a unique exclusion candidate,
with $M_E>m_0$.

\emph{Inclusion.}
Write $\gamma=\gamma(t)$ and put
$w_0=(t-3)\gamma$, $p=r\gamma^2(1+t^2)$, and
$A=w_0+j+p$, where $j$ is defined above.
Then $\mathfrak m=A+\theta(1-A)$ and
$m_I=(A+j)/(1+j)$.
We have $A<1$: it is the convex combination of
$w_0<1$ and $2\gamma t+\gamma^2(1+t^2)<1$.
The latter inequality is equivalent to
$\sinh\chi>\sin\chi$.
Differentiation gives
\[
 w_0'=\frac{2(t+2)\gamma}{1+t^2}>0,\quad
 j'=\frac{10(t^2+2t+2)\gamma}{(1+t^2)(t+2)^2}>0,\quad
 p'=\frac{2(t^2+6)\gamma^2}{(t+2)^2}>0.
\]
Thus $A'>0$, $\vartheta_I'=j'/(1+j)^2>0$, and
$m_I'=A'/(1+j)+(1-A)j'/(1+j)^2>0$.
The stated limits follow from $t\gamma(t)\to k$.

An interior trading atom must therefore have $t=t_I$.
An atom at $w=a$ instead requires $\theta\ge\vartheta_I(t)$,
so $m\ge m_I(t)$ and $t\le t_I$.
But at fixed mean its value is
\[
 w_m(t)=m-(1-m)\frac{j+p}{1-A},
\]
which strictly decreases because $j+p$ increases and
$1-A$ decreases. Hence $w_m(t)\ge w_m(t_I)=w_I>a$,
a contradiction.
Existence and \Cref{thm:binary-structure} now leave only
the two feasible candidates, proving the comparison.
\end{proof}

\subsection{Proof of \Cref{prop:participation-switch}}
\label{app:local-quadratic-regime-selection}

\begin{proof}
Write $\mathcal C_I(t)$ for the inclusion candidate's surplus
and compare it with an exclusion candidate of phase parameter $u$:
\[
 \Delta(t,u)=\theta_E(u)[m_I(t)-a_E(u)]-\mathcal C_I(t).
\]
The payoff differentials established above give
$\mathcal C_I'=\vartheta_I m_I'$ and
$\theta_E a_E'=(M_E-a_E)\theta_E'$.
Consequently,
\[
 \frac{\dd}{\dd t}\Delta(t,t)
 =(\theta_E-\vartheta_I)m_I'
   +\theta_E'(m_I-M_E).
\]
Using $j=\gamma(t)(t-1)-4\gamma(t)/(t+2)$,
the candidate formulas give, as $t\to\infty$,
\[
 \begin{gathered}
 \theta_E-\vartheta_I=
 \frac{4k}{(1+k)^2t^2}+O(t^{-3}),\qquad
 m_I'=\frac{2k}{t^2}+O(t^{-3}),\\
 \theta_E'=\frac{2k}{(1+k)^2t^3}+O(t^{-4}),\qquad
 m_I-M_E=-\frac{4k}{t}+O(t^{-2}).
 \end{gathered}
\]
The leading terms cancel, so the diagonal derivative
is $O(t^{-5})$. Since $\Delta(t,t)\to0$, integration
gives $\Delta(t,t)=O(t^{-4})$.
Moreover,
\[
 \Delta_u(t,u)=\theta_E'(u)[m_I(t)-M_E(u)]
 =-\frac{8k^2}{(1+k)^2t^4}+O(t^{-5})
\]
uniformly for bounded $u-t$.
A sufficiently large fixed $L$ therefore gives
$\Delta(t,t-L)>0>\Delta(t,t+L)$ for all sufficiently large $t$.
The intermediate value theorem supplies
$u_t=t+O(1)$ with $\Delta(t,u_t)=0$.

Consider the prior $a=a_E(u_t)$, $m=m_I(t)$.
For large $t$, $0<a<a_0<m<m_0$.
The inclusion atom has probability
$1-r=4/t+O(t^{-2})$, value $w_I$ with
$w_I-a\to a_0>0$, and virtual value $z_I\to a_0>0$.
Its support and virtual-value gaps are $4h>0$ and $2h>0$.
For exclusion,
\[
 M_E(u_t)-m=\frac{4k}{t}+O(t^{-2}),\qquad
 s=1-\frac{4(1+k)}{(k+2)t}+O(t^{-2}).
\]
Thus $0<s<1$. Its lower virtual value is negative
because the numerator $a-sB_E$ tends to $-a_0<0$.
The continuation's bottom virtual value and its support
gap both tend to $a_0>0$.
Both candidates are therefore feasible, and
\Cref{lem:quadratic-global-comparison} identifies
them as the only optima.

The multiplier difference remains
$\theta_E(u_t)-\vartheta_I(t)
=4k/[(1+k)^2t^2]+O(t^{-3})>0$.
The continuation multipliers and durations differ by $O(t^{-2})$,
so their conditional profits differ
by $O(t^{-2})$. Their common limit is
$P^\circ=a_0^2(8+\pi)/4$, obtained by integrating
$\varphi^2/2$ under the limiting continuation.
Hence full profits satisfy
\[
 P_E-P_I
 =(s-r)P^\circ-(1-r)\frac{a_0^2}{2}+O(t^{-2})
 =\frac{a_0^2(4+\pi-2k)}{(k+2)t}+O(t^{-2})>0.
\]

Fix one such prior. Since $m_I'>0$ and $a_E'>0$,
the two candidates vary smoothly on a neighborhood;
their strict feasibility inequalities persist there.
Let $t_I(m),t_E(a)$ be the local inverses and put
$\rho_E(a)=\theta_E(t_E(a))$.
The comparison lemma therefore gives
$V_a(m)=\max\{C_I(m),C_E(a,m)\}$, with
$C_I(m)=\mathcal C_I(t_I(m))$ and
$C_E(a,m)=\rho_E(a)(m-a)$.
The inclusion atom remains above $a$, so its value
is independent of $a$.
Write $s_E=(m-a)/(M_E(t_E(a))-a)$.
The payoff differentials give
\[
 C_I'(m)=\vartheta_I(t_I(m)),\qquad
 (C_E)_m=\rho_E(a),\qquad
 (C_E)_a=-\rho_E(a)(1-s_E)<0.
\]
Thus $C_I-C_E$ decreases in $m$ and increases in $a$
at the tie. The implicit-function theorem gives
the increasing curve $m^\dagger(a)$ and
\eqref{eq:participation-boundary-slope}.
On a smaller rectangle, inclusion is uniquely optimal
below this curve, exclusion uniquely optimal above it,
and only the two candidates are optimal on it.
Their continuous payoffs and persistent strict profit
ordering give the stated one-sided increases in profit
and total surplus. Participation falls from one to
$s_E<1$, while consumer surplus, the maximum of the
two continuous values, remains continuous.
\end{proof}

\subsection{Proof of Corollary~\ref{cor:inclusion-convexity}}
\label{app:quadratic-convexity-proof}

\begin{proof}
Since $\vartheta_I'>0$ and $m_I'>0$,
the preceding derivative identity gives
\[C_I''(m)=\vartheta_I'(t_I(m))/m_I'(t_I(m)) >0.\]
On the inclusion side, $V_a(m)=C_I(m)$.
\end{proof}

\section{Market segmentation}
\label{app:segmentation}

In this section, we prove the results about valuation-independent and valuation-dependent segmentation and its joint design with learning when the prior is binary.
\subsection{Implementation and continuity}
\label{app:seg-preliminaries}

\subsubsection{Attainable payoffs and the number of groups}
\label{app:seg-geometry}

For fixed cost and prior $H$, $\mathcal G(H)$ contains the
point mass at the prior mean and is a weakly closed convex
subset of $\Delta([0,1])$, hence is weakly compact.
Payoff continuity
\citep[Proposition~A.7]{kyriazis2026markets}
makes $\mathcal P(H)$ compact. Since mixtures of feasible
distributions remain feasible and profit varies continuously
along them, the attainable profit set
$\mathcal I_H:=\{\Pi(G):G\in\mathcal G(H)\}$
is a compact interval.

By \eqref{eq:seg-payoff-hull}, the same convex-hull argument
as in \citet[proof of Theorem~3]{kyriazis2026markets}
shows that $\mathcal P^{\mathrm{seg}}(H)$ is compact and
every boundary payoff can be implemented with at most two
groups. This includes every consumer-surplus maximizer at a
given profit, and hence every Pareto-efficient pair, whether
or not it is exposed. In both payoff sets, compactness gives
attainment under any feasible fixed-profit or minimum-profit
restriction. These conclusions also hold with positive
initial marginal cost, using the continuity extension in
Appendix~\ref{app:neutral-screening-foundation}.

\subsubsection{Optimal state-independent segmentation}
\label{app:seg-support-gaps}

For $p\in\mathcal I_H$, define the maximal consumer surplus
under a minimum-profit requirement by
\[
 \widehat\Gamma_H(p)
 :=\max_{\substack{G\in\mathcal G(H)\\\Pi(G)\ge p}}\CS(G),
 \qquad
 \widehat\Gamma_H^{\mathrm{seg}}(p)
 :=\max_{\substack{(C,P)\in\mathcal P^{\mathrm{seg}}(H)\\P\ge p}}C.
\]
For a binary prior, $\widehat\Gamma_H=\widehat\Gamma_m^a$.
For each finite $\beta\ge0$, let
\[
 \mathcal O_H(\beta):=\argmax_{G\in\mathcal G(H)}J_\beta(G),
 \qquad
 \mathscr P_H(\beta):=\{\Pi(G):G\in\mathcal O_H(\beta)\}.
\]
Both sets are nonempty and compact. Unlike $\mathcal I_H$,
however, the profits attained by maximizers,
$\mathscr P_H(\beta)$, need not form an interval.

\begin{proposition}[Optimal state-independent segmentation]
\label{prop:independent-segmentation-characterization}
\label{lem:seg-supported-profit-gap}
\begin{enumerate}
 \setlength{\itemsep}{4pt}
 \setlength{\parsep}{0pt}

 \item[(i)] \emph{Weighted objectives.}
 For every finite $\beta\ge0$, the maximum of $C+\beta P$
 under state-independent segmentation is $W_H(\beta)$.
 A segmentation attains this value if and only if every
 positive-probability group uses a distribution in
 $\mathcal O_H(\beta)$.

 \item[(ii)] \emph{Supported payoffs and profit gaps.}
 Fix a finite $\beta>0$ and let
 $\underline P_\beta:=\min\mathscr P_H(\beta)$ and
 $\overline P_\beta:=\max\mathscr P_H(\beta)$.
 The segmented payoff pairs maximizing $C+\beta P$ form
 the segment
 \[
  \left\{\bigl(W_H(\beta)-\beta p,p\bigr):
       p\in[\underline P_\beta,\overline P_\beta]\right\}.
 \]
 Every pair on this segment is Pareto efficient.
 All can be implemented using the same minimum- and
 maximum-profit optimizers, varying only their probabilities.
 For each such $p$,
 \[
  \widehat\Gamma_H^{\mathrm{seg}}(p)=W_H(\beta)-\beta p,
  \qquad
  \widehat\Gamma_H(p)=\widehat\Gamma_H^{\mathrm{seg}}(p)
  \ \Longleftrightarrow\ p\in\mathscr P_H(\beta).
 \]
 One group suffices at profits in $\mathscr P_H(\beta)$.
 At the remaining profits, two groups are necessary and
 sufficient, and segmentation gives strictly more consumer
 surplus than any single-market policy earning at least $p$.

 \item[(iii)] \emph{The minimum-profit boundary.}
 On $\mathcal I_H$, $\widehat\Gamma_H^{\mathrm{seg}}$
 is the least concave nonincreasing majorant of
 $\widehat\Gamma_H$. Any strict gain in part~(ii)
 therefore implies that $\widehat\Gamma_H$ is nonconcave.
\end{enumerate}
\end{proposition}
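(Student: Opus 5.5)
The argument uses only the geometry of the compact payoff set $\mathcal P(H)$ and its convex hull $\mathcal P^{\mathrm{seg}}(H)=\operatorname{co}\mathcal P(H)$ from \eqref{eq:seg-payoff-hull}, together with compactness of $\mathcal O_H(\beta)$ and of the profit set $\mathscr P_H(\beta)$.

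For part (i), the upper bound is \eqref{eq:seg-support-values}: each group satisfies $J_\beta(G_k)\le W_H(\beta)$, so the average is at most $W_H(\beta)$. Equality in a convex combination with positive weights forces $J_\beta(G_k)=W_H(\beta)$ for every positive-probability group, so $G_k\in\mathcal O_H(\beta)$. Conversely, any mixture of maximizers attains $W_H(\beta)$.

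For part (ii), fix $\beta>0$. By (i), a segmented pair maximizes $C+\beta P$ exactly when it averages maximizer pairs. Each maximizer pair lies on the line $C=W_H(\beta)-\beta P$ with $P\in\mathscr P_H(\beta)\subset[\underline P_\beta,\overline P_\beta]$. Its convex hull is therefore the stated segment, and both endpoints are attained because $\mathscr P_H(\beta)$ is compact. Mixing the minimum-profit and maximum-profit maximizers with weight $(\overline P_\beta-p)/(\overline P_\beta-\underline P_\beta)$ on the former gives every point of the segment.

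Pareto efficiency follows because an improvement would exceed $W_H(\beta)$ when $\beta>0$. For the same reason, any segmented pair with $P\ge p$ has $C\le W_H(\beta)-\beta P\le W_H(\beta)-\beta p$, which gives $\widehat\Gamma^{\mathrm{seg}}_H(p)=W_H(\beta)-\beta p$. The same inequality applies to single-market pairs. Equality requires $P=p$ and optimality, hence $p\in\mathscr P_H(\beta)$. Conversely, if $p\in\mathscr P_H(\beta)$, the maximizer with that profit attains the bound in one group.

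Strictness at the remaining profits uses attainment of $\widehat\Gamma_H(p)$: the attaining $G$ cannot achieve equality, so $\widehat\Gamma_H(p)<W_H(\beta)-\beta p$. One group then cannot reach the segmented value, and two groups suffice by the construction above, so two are necessary and sufficient.

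For part (iii), let $\Gamma^*$ denote the least concave nonincreasing majorant of $\widehat\Gamma_H$ on $\mathcal I_H$. I will prove both inequalities between $\Gamma^*$ and $\widehat\Gamma_H^{\mathrm{seg}}$.

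To show $\widehat\Gamma_H^{\mathrm{seg}}\ge\Gamma^*$, first note that $\widehat\Gamma_H^{\mathrm{seg}}$ majorizes $\widehat\Gamma_H$, since single markets are segmentations. It is nonincreasing because its constraint set shrinks as $p$ rises. It is concave: given attaining pairs at $p_1,p_2$, their $\lambda$-mixture lies in the convex set $\mathcal P^{\mathrm{seg}}(H)$ and has profit at least $\lambda p_1+(1-\lambda)p_2$.

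To show $\widehat\Gamma_H^{\mathrm{seg}}\le\Gamma^*$, take an attaining segmented pair $(C,P)$ with $P\ge p$, written by Appendix~\ref{app:seg-geometry} as $\sum_k\lambda_k(C_k,P_k)$ with $(C_k,P_k)\in\mathcal P(H)$. Each $C_k\le\widehat\Gamma_H(P_k)\le\Gamma^*(P_k)$, so concavity of $\Gamma^*$ gives $C\le\Gamma^*(P)$, and monotonicity gives $\Gamma^*(P)\le\Gamma^*(p)$. A technical point is that $P_k$ may lie outside $\mathcal I_H$ only if it is infeasible, which cannot happen, so every $P_k$ is in the domain.

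Finally, if $\widehat\Gamma_H$ were concave, it would itself be concave and nonincreasing, hence equal to its majorant $\Gamma^*=\widehat\Gamma_H^{\mathrm{seg}}$. The strict gain in (ii) rules this out, so $\widehat\Gamma_H$ is nonconcave whenever such a gain occurs.

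The main obstacle is this last identification of $\widehat\Gamma_H^{\mathrm{seg}}$ with $\Gamma^*$. It requires checking carefully that $\widehat\Gamma_H$ is nonincreasing and that its majorant stays defined on the whole interval $\mathcal I_H$.
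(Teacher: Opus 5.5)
Your proposal is correct and follows the paper's proof essentially step for step. It uses the same averaging bound for (i), the same two-optimizer mixture and attainment argument for (ii), and the same concave-nonincreasing-majorant sandwich for (iii). The only small difference is the final nonconcavity step: you deduce it from the identity $\Gamma^*=\widehat\Gamma_H^{\mathrm{seg}}$, which would force $\widehat\Gamma_H=\widehat\Gamma_H^{\mathrm{seg}}$ if $\widehat\Gamma_H$ were concave. The paper instead reads off from (ii) that $\widehat\Gamma_H$ meets the supporting line at $\underline P_\beta$ and $\overline P_\beta$ but lies strictly below it in between, hence below its own chord. Both arguments are valid.
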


\begin{proof}
For any state-independent segmentation,
$\sum_k\lambda_kJ_\beta(G_k)
 \le\sum_k\lambda_kW_H(\beta)=W_H(\beta)$.
A single-market optimizer attains this bound.
Equality holds if and only if every positive-probability
group is itself a maximizer, proving part~(i).

For part~(ii), each maximizing group has payoff
$\bigl(W_H(\beta)-\beta\Pi(G_k),\Pi(G_k)\bigr)$.
Averages of these payoffs therefore lie on the stated
segment. Conversely, when the endpoint profits differ,
placing probability
\[
 \lambda(p)=\frac{\overline P_\beta-p}
                  {\overline P_\beta-\underline P_\beta}
\]
on a minimum-profit optimizer and the remaining probability
on a maximum-profit optimizer attains the pair at $p$.
If the endpoints coincide, one group suffices.
Part~(i) characterizes all implementations.
Both objective coefficients are positive, so every pair
on this segment is Pareto efficient.

Any segmented payoff with $P\ge p$ satisfies
$C\le W_H(\beta)-\beta P\le W_H(\beta)-\beta p$.
The mixture above attains this bound.
A single market attains it if and only if both inequalities
bind: because $\beta>0$, this requires profit $p$ and
a distribution in $\mathcal O_H(\beta)$.
Since the single-market maximum is attained, its value
is strictly lower when $p\notin\mathscr P_H(\beta)$.
This also establishes the claims about the number of groups.

For part~(iii), mixing two segmented optima proves
concavity of $\widehat\Gamma_H^{\mathrm{seg}}$.
It is nonincreasing because raising the profit requirement
shrinks the feasible set, and it majorizes
$\widehat\Gamma_H$ because one-group policies are feasible.

Let $g$ be any concave nonincreasing majorant of
$\widehat\Gamma_H$ on $\mathcal I_H$.
For a segmentation with group profits $P_k$ and
aggregate profit $P=\sum_k\lambda_kP_k\ge p$,
\[
 \sum_k\lambda_k\CS(G_k)
 \le\sum_k\lambda_kg(P_k)
 \le g(P)\le g(p).
\]
The first inequality uses
$\CS(G_k)\le\widehat\Gamma_H(P_k)\le g(P_k)$;
the others use concavity and monotonicity.
Maximizing over segmentations proves
$\widehat\Gamma_H^{\mathrm{seg}}(p)\le g(p)$.

Finally, for any
$p\in[\underline P_\beta,\overline P_\beta]
       \setminus\mathscr P_H(\beta)$,
$\widehat\Gamma_H(p)$ lies strictly below the chord
joining its values at $\underline P_\beta$ and
$\overline P_\beta$.
It therefore cannot be concave.
\end{proof}

\subsubsection{Changing the primitive prior}
\label{app:seg-prior-continuity}

We first show that a small change in the prior permits a
comparably small adjustment to any feasible posterior
distribution. Let $d_1$ denote the first Wasserstein distance:
the minimum of $\E|X-\widetilde X|$ over joint distributions
with the specified marginals. On $[0,1]$, this distance
metrizes weak convergence
\citep[Theorem~6.9]{villani2009}.
For any priors $H,\widetilde H$ and any $G\preceq_\cx H$,
there exists $\widetilde G\preceq_\cx\widetilde H$ such that
\begin{equation}
 d_1(G,\widetilde G)\le d_1(H,\widetilde H).
 \label{eq:seg-prior-projection}
\end{equation}
To see this, take an implementation of $G$ under $H$:
$Y\sim G$, $V\sim H$, and $\E[V\mid Y]=Y$.
Choose a coupling of $H$ and $\widetilde H$ attaining
$d_1(H,\widetilde H)$, and draw $\widetilde V$ from its
conditional distribution given $V$, independently of $Y$
conditional on $V$. Set
$\widetilde Y=\E[\widetilde V\mid Y]$ and let
$\widetilde G$ be its distribution.
Jensen's inequality gives
$\widetilde G\preceq_\cx\widetilde H$. Moreover,
$d_1(G,\widetilde G)
 \le \E|Y-\widetilde Y|
 =\E\bigl|\E[V-\widetilde V\mid Y]\bigr|
 \le \E|V-\widetilde V|
 =d_1(H,\widetilde H)$.
Thus, the construction approximates any specified feasible
posterior distribution, whether or not it is optimal.

\begin{lemma}[Continuity in the prior and welfare weight]
\label{lem:seg-payoff-continuity}
Fix the maintained cost function. As $H$ varies weakly,
$\mathcal G(H)$ and $\mathcal P(H)$ vary continuously in
Hausdorff distance, using $d_1$ for posterior distributions
and Euclidean distance for payoff pairs.
For binary priors, $V_{a,\beta}(m)$ is continuous in
$(a,m,\beta)$ on $0\le a<m<1$ and finite $\beta\ge0$.
The optimizer correspondence
$\mathcal O_B(a,m,\beta)
 :=\argmax_{G\in\mathcal G_{a,m}}J_\beta(G)$
has nonempty compact values and is upper hemicontinuous.
In particular, if parameters converge within this domain,
every weak limit of corresponding optimal distributions
is optimal at the limiting parameters.
\end{lemma}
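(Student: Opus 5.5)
The plan is to prove the continuity of the feasible sets first, then transfer it to payoffs and values, and finally obtain upper hemicontinuity from a Berge-type argument. For the Hausdorff continuity of $\mathcal G(H)$ in $d_1$, the projection inequality \eqref{eq:seg-prior-projection} does almost all the work. Given priors $H,\widetilde H$ and any $G\in\mathcal G(H)$, it produces $\widetilde G\in\mathcal G(\widetilde H)$ with $d_1(G,\widetilde G)\le d_1(H,\widetilde H)$. Swapping the roles of the two priors bounds both one-sided excess distances, so
\[
 d_H\bigl(\mathcal G(H),\mathcal G(\widetilde H)\bigr)\le d_1(H,\widetilde H).
\]
This makes the correspondence $1$-Lipschitz, and $d_1$ metrizes weak convergence on $[0,1]$. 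For $\mathcal P(H)$, I would use uniform continuity of $G\mapsto(\CS(G),\Pi(G))$ on the compact space $\Delta([0,1])$. Weak continuity of both payoffs comes from \citet[Proposition~A.7]{kyriazis2026markets}, or from its extension in Appendix~\ref{app:neutral-screening-foundation}. The modulus of continuity of the payoffs then converts the Hausdorff bound on posteriors into a Hausdorff bound on payoff sets.

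For binary priors, the prior $(1-p)\delta_a+p\delta_1$ with $p=(m-a)/(1-a)$ varies continuously in $d_1$ as $(a,m)$ varies on $0\le a<m<1$, since its atoms and weights move continuously. The value $V_{a,\beta}(m)=\max\{C+\beta P:(C,P)\in\mathcal P^a_m\}$ is a maximum of a function jointly continuous in $((C,P),\beta)$ over a compact set that moves continuously in Hausdorff distance. Its continuity in $(a,m,\beta)$ then follows from Berge's maximum theorem. An elementary route gives the same conclusion: approximate a maximizer at the limit parameters using the projection, and pass to limits along maximizers at nearby parameters.

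The optimizer correspondence $\mathcal O_B$ has nonempty compact values because $\mathcal G_{a,m}$ is weakly compact and $J_\beta$ is weakly continuous. For upper hemicontinuity, the key step is closedness of the graph. Take parameters $(a_n,m_n,\beta_n)\to(a,m,\beta)$ in the domain, and optimizers $G_n\to G$ weakly. Feasibility of the limit follows because convex order is preserved under weak limits on $[0,1]$: integrals of continuous convex functions converge, and the priors converge. Optimality follows from the chain
\[
 J_\beta(G)=\lim J_{\beta_n}(G_n)=\lim V_{a_n,\beta_n}(m_n)=V_{a,\beta}(m),
\]
which uses joint continuity of $(G,\beta)\mapsto J_\beta(G)$ together with continuity of $V$. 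Since everything lives in the compact metric space $\Delta([0,1])$, a closed graph with compact values gives upper hemicontinuity.

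The only delicate point is the domain boundary. The parameters must stay in $0\le a<m<1$, and the last assertion is restricted to limits inside this domain. Where the argument needs a fixed compact feasible set, I would note that all the relevant sets lie in $\Delta([0,1])$. Otherwise the argument is routine, and the main content is the projection inequality, which has already been established.
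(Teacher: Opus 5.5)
Your proposal is correct and follows essentially the same route as the paper. The projection inequality applied in both directions gives the $1$-Lipschitz Hausdorff bound on $\mathcal G(H)$, uniform weak continuity of the payoffs transfers this to $\mathcal P(H)$, and Berge's maximum theorem yields continuity of $V_{a,\beta}(m)$ and upper hemicontinuity of $\mathcal O_B$. The differences are cosmetic: you apply Berge to the payoff sets and verify upper hemicontinuity by hand through a closed-graph argument in the compact space $\Delta([0,1])$, whereas the paper applies Berge directly to the continuous feasible correspondence $(a,m)\mapsto\mathcal G_{a,m}$.
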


\begin{proof}
Applying \eqref{eq:seg-prior-projection} in both directions
bounds the Hausdorff distance between $\mathcal G(H)$
and $\mathcal G(\widetilde H)$ by $d_1(H,\widetilde H)$.
This proves continuity of the feasible sets.

The payoff map $G\mapsto(\CS(G),\Pi(G))$ is continuous
by \citet[Proposition~A.7]{kyriazis2026markets}, with
the extension to positive initial marginal cost given
in Appendix~\ref{app:neutral-screening-foundation}.
Since $\Delta([0,1])$ is compact, this continuity is
uniform. The payoff images $\mathcal P(H)$ therefore
also vary continuously in Hausdorff distance.

Binary priors depend continuously on $(a,m)$, and
$J_\beta(G)$ is jointly continuous in $(G,\beta)$.
The feasible correspondence is continuous with nonempty
compact values, so Berge's maximum theorem
gives continuity of the value
and upper hemicontinuity of the nonempty, compact-valued
optimizer correspondence. The assertion about weak
limits follows.
\end{proof}

\subsection{Gains near the participation crossing}
\label{app:seg-gain-proof}

The next lemma shows that, at nearby lower means, a positive
weight on profit can make both low- and high-profit distributions
globally optimal. No optimizer has profit in a fixed interval
between the two groups.

To this end, fix the maintained cost function and $a<m_*<1$. Suppose
the buyer-optimal payoff set at $m_*$ is
$\{(C_*,P_L^*),(C_*,P_H^*)\}$, where $P_L^*<P_H^*$.
For every $m<m_*$ sufficiently close to $m_*$, suppose
the buyer-optimal payoff pair is unique and its profit
converges to $P_L^*$ as $m\uparrow m_*$.

\begin{lemma}[A gap in maximizing profits]
\label{lem:seg-separated-supports}

There exist $\eta>0$ and
$P_L^*<p_-<p_+<P_H^*$ such that, for every
$m\in(m_*-\eta,m_*)$, some $\beta_m\in(0,1)$
admits global maximizers with profits below $p_-$
and above $p_+$, but none with profit in $[p_-,p_+]$.
\end{lemma}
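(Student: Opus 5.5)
The argument combines continuity of the optimizer correspondence (\Cref{lem:seg-payoff-continuity}) with the monotone comparative statics in the weight, \eqref{eq:weight-payoff-monotonicity}. First fix the gap interval. Pick $P_L^*<p_-<p_+<P_H^*$ so that, at $m_*$, no maximizer of $J_\beta$ for small $\beta\ge0$ has profit in $[p_-,p_+]$. At $\beta=0$ this holds because the buyer-optimal payoffs are exactly $P_L^*,P_H^*$. I will extend it to a neighbourhood in $(m,\beta)$ by upper hemicontinuity. Suppose $(m_n,\beta_n)\to(m_*,0)$ and the maximizers $G_n$ had profits $p_n\in[p_-,p_+]$. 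A weak limit would be buyer-optimal at $m_*$ with profit in $[p_-,p_+]$, which is impossible. So there are $\eta_0,\bar\beta>0$ such that for all $m\in(m_*-\eta_0,m_*]$ and $\beta\in[0,\bar\beta]$, no maximizer has profit in $[p_-,p_+]$.

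Next, for each fixed $m$ near $m_*$, I track the profits of maximizers as $\beta$ rises from $0$. At $\beta=0$ and $m<m_*$, the buyer optimum is unique with profit $P_L(m)\to P_L^*$, so it is below $p_-$ once $m$ is close. I claim that at $\beta=\bar\beta$ every maximizer has profit above $p_+$ when $m$ is close enough to $m_*$. By the exclusion of the gap, it suffices to show some maximizer has profit above $p_-$, and then argue every one does (see below). Compare with the high-profit buyer optimum at $m_*$. By \Cref{lem:seg-payoff-continuity} (Hausdorff continuity of $\mathcal P$ in the prior), there is a feasible $G$ at $m$ with payoffs within $o(1)$ of $(C_*,P_H^*)$. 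The value of any candidate with profit $\le p_-$ is at most $C_*+o(1)+\beta p_-$, because consumer surplus cannot exceed the buyer-optimal value $V_a(m)\to C_*$ by continuity. Hence at weight $\bar\beta$ the approximating $G$ beats every distribution with profit $\le p_-$ by roughly $\bar\beta(P_H^*-p_-)>0$, provided $m$ is close enough to $m_*$. So all maximizers at $(m,\bar\beta)$ have profit $>p_-$, and therefore $>p_+$.

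Finally, define $\beta_m:=\inf\{\beta\in[0,\bar\beta]:\text{some maximizer at }(m,\beta)\text{ has profit}>p_+\}$. By \eqref{eq:weight-payoff-monotonicity}, profits of maximizers are ordered across weights: any maximizer at $\beta'>\beta$ earns at least as much as any maximizer at $\beta$. Every maximizer at $(m,\bar\beta)$ has profit $>p_+$. For $\beta<\beta_m$, all maximizers have profit $<p_-$, using the gap exclusion. Upper hemicontinuity then gives at $\beta_m$ a limit of low-profit maximizers from the left and of high-profit maximizers from the right. Both limits are maximizers at $\beta_m$, with profits $\le p_-$ and $\ge p_+$ respectively; by the gap exclusion the inequalities are strict. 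We have $\beta_m>0$ because at $\beta=0$ the unique optimum has profit below $p_-$, and upper hemicontinuity yields a neighbourhood of $\beta=0$ with only low-profit maximizers. Also $\beta_m\le\bar\beta<1$ after shrinking $\bar\beta$. Taking $\eta$ small enough completes the argument.

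The main obstacle is the second step: making the domination estimate uniform. The obvious route is joint continuity of $V_{a,\beta}(m)$ from \Cref{lem:seg-payoff-continuity}, applied to the low-profit bound $\CS\le V_a(m)$, with the order of quantifiers handled carefully. $\bar\beta$ is fixed first from the gap step, and then $\eta$ is shrunk. If this causes trouble, I would instead take $\bar\beta$ smaller and use the strict profit difference $P_H^*-p_->0$ directly.
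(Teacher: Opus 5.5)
Your proof is correct, and its first step (using upper hemicontinuity to exclude maximizing profits in $[p_-,p_+]$ on a neighbourhood of $(m_*,0)$) is the same as the paper's.

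The routes then diverge in two places.

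\emph{High-profit maximizers at $\bar\beta$.} The paper observes that at $m_*$ every maximizer for a positive weight has profit at least $P_H^*$: its consumer surplus is at most $C_*$, yet it must weakly beat $(C_*,P_H^*)$. It then applies upper hemicontinuity at $(a,m_*,\bar\beta)$. You instead give a direct domination estimate from continuity of $V_a$ and Hausdorff continuity of the payoff sets in \Cref{lem:seg-payoff-continuity}. This is equally valid, only slightly longer.

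\emph{The crossing weight.} The paper defines the restricted values $w_-(\beta)$ and $w_+(\beta)$, the maxima of $J_\beta$ over $\Pi\le p_-$ and over $\Pi\ge p_+$. These are Lipschitz because the feasible sets do not depend on $\beta$. The gap gives $V_{a,\beta}(m)=\max\{w_-,w_+\}$ on $[0,\bar\beta]$, and the intermediate value theorem applied to $w_+-w_-$ produces the weight. There both restricted maximizers are global maximizers, and strict signs at the endpoints place the weight in $(0,\bar\beta)$. This route needs neither the ordering \eqref{eq:weight-payoff-monotonicity} nor your separate check that $\beta_m>0$.

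Your infimum construction uses that ordering together with two-sided upper hemicontinuity in $\beta$. In exchange, it delivers more structure: every maximizer is low-profit for $\beta<\beta_m$ and high-profit for $\beta\in(\beta_m,\bar\beta]$. So $\beta_m$ is the unique weight in $[0,\bar\beta]$ at which both profit groups are optimal, a fact the paper's intermediate-value argument does not record.
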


\begin{proof}
Choose $P_L^*<p_-<p_+<P_H^*$.
By \Cref{lem:seg-payoff-continuity}, upper hemicontinuity
of the optimizer correspondence and continuity of profit
give a neighborhood of $m_*$ and $\bar\beta\in(0,1)$
such that
\begin{equation}
 \{\Pi(G):G\in\mathcal O_B(a,m,\beta)\}
 \cap[p_-,p_+]=\varnothing
 \quad\text{for nearby }m,\quad
 0\le\beta\le\bar\beta.
 \label{eq:seg-uniform-profit-gap}
\end{equation}
Otherwise, optimizers at parameters converging to
$(a,m_*,0)$ would have a buyer-optimal limit with
profit in $[p_-,p_+]$, contrary to the hypothesis.

At mean $m_*$, every optimizer for $\beta>0$ has
profit at least $P_H^*$. Indeed, its consumer surplus
cannot exceed $C_*$, and it must weakly outperform
the buyer-optimal pair $(C_*,P_H^*)$.
Upper hemicontinuity at $(a,m_*,\bar\beta)$ therefore
places all nearby optimizers above $p_+$.
At weight zero, the assumed convergence of buyer-optimal
profits places all optimizers below $p_-$ for means
just below $m_*$. Choose $\eta>0$ small enough that
both conclusions and \eqref{eq:seg-uniform-profit-gap}
hold on $(m_*-\eta,m_*)$.

Fix a mean $m$ in this interval. Define the best
objective values among distributions in the two
profit groups by
\[
 w_-(\beta)
 :=\max_{\substack{G\in\mathcal G_{a,m}\\\Pi(G)\le p_-}}
       J_\beta(G),
 w_+(\beta)
 :=\max_{\substack{G\in\mathcal G_{a,m}\\\Pi(G)\ge p_+}}
       J_\beta(G).
\]
Both feasible sets are compact and nonempty, as they
contain optimizers at weights zero and $\bar\beta$,
respectively. They are fixed as $\beta$ varies, and
profit is bounded, so both value functions are
Lipschitz continuous.
By \eqref{eq:seg-uniform-profit-gap}, every global
optimizer belongs to one of these two sets. Hence
$V_{a,\beta}(m)=\max\{w_-(\beta),w_+(\beta)\}$
for $0\le\beta\le\bar\beta$.
At weight zero all optimizers belong to the first
group, whereas at $\bar\beta$ all belong to the
second. Since the restricted maxima are attained,
$w_-(0)>w_+(0)$ and $w_-(\bar\beta)<w_+(\bar\beta)$.
Continuity gives a weight $\beta_m\in(0,\bar\beta)$
at which the two values coincide. Both restricted
maxima then equal the global maximum.
Equation~\eqref{eq:seg-uniform-profit-gap} places
their profits strictly below $p_-$ and strictly above
$p_+$, respectively, and rules out every global
optimizer with profit between these bounds.
\end{proof}

\paragraph{The participation regimes persist.}
Under quadratic cost, the two profit ranges correspond to
distorted inclusion and exclusion. Payoff continuity alone
does not establish this correspondence, since positive
qualities can converge to zero. The next lemma uses the
structure of optimal distributions to rule this out.

\begin{lemma}[Persistence of the two participation regimes]
\label{lem:seg-participation-persistence}
Under quadratic cost, fix a crossing $(a_*,m_*,0)$ from
\Cref{prop:participation-switch}, and let $P_I^*<P_E^*$
be the profits of its inclusion and exclusion optima.
Fix $P_I^*<p_-<p_+<P_E^*$. In a sufficiently small neighborhood of $(a_*,m_*,0)$,
every optimum with $0\le\beta<1$ either has distorted
inclusion and profit below $p_-$, or has exclusion and
profit above $p_+$. The excluded share in the latter
case is uniformly bounded away from zero.
\end{lemma}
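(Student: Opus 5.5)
We argue by contradiction with a compactness argument, using the structure theorem to classify every optimum near the crossing. By \Cref{thm:binary-structure}, every optimum at $(a,m,\beta)$ with $0\le\beta<1$ has one of three lower parts: an excluded atom at $a$, a lower trading atom, or neither. Suppose the claim fails. Then there is a sequence $(a_n,m_n,\beta_n)\to(a_*,m_*,0)$ with optima $G_n$ that either (a) have exclusion but profit at most $p_+$ or excluded share tending to zero, or (b) have full participation but profit at least $p_-$. By \Cref{lem:seg-payoff-continuity}, a subsequence converges weakly to an optimum $G_*$ at $(a_*,m_*,0)$. By \Cref{prop:participation-switch}, $G_*$ is either the inclusion law $G_I^*$ or the exclusion law $G_E^*$, with profits $P_I^*$ and $P_E^*$. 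Payoff continuity then gives $\Pi(G_n)\to P_I^*$ or $P_E^*$. So it suffices to show that the participation regime passes to the limit in both directions, with a uniform bound on the excluded share.

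\emph{Exclusion along the sequence.} If $G_n$ excludes, write $G_n=(1-s_n)\delta_{a_n}+s_nG_{A,n}$ with $G_{A,n}$ regular. I will show that the trading lower support $B_n$ stays bounded away from $a_n$. \Cref{lem:strict-optimal-exclusion} applies at every $\beta_n<1$ and gives $B_n-z_n=(1-\beta_n)q(z_n)/q'(z_n)$ and $z_n\ge a_n$. Under quadratic cost this reads $B_n=z_n+(1-\beta_n)z_n\ge(2-\beta_n)a_n$, which is bounded away from $a_*$ since $a_*>0$. Strict exclusion also gives $s_nB_n>a_n$, and the mean identity is $m_n-a_n=s_n(M_n-a_n)$.

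Passing to the limit, the atom at $a_n$ converges to an atom at $a_*$ of mass $\lim(1-s_n)$. The trading part stays supported on $[B_n,1]$ with $\liminf B_n>a_*$, so the limit law has a gap above $a_*$ and mass $\lim(1-s_n)$ at $a_*$. The limit cannot be $G_I^*$: that law has no atom at $a_*$, since its trading atom lies at $w_I>a_*$. Hence $G_*=G_E^*$, and $\Pi(G_n)\to P_E^*>p_+$. Because the atom masses converge, $1-s_n\to1-s_E^*>0$. This gives the uniform lower bound on the excluded share, and in particular rules out shares tending to zero.

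\emph{Full participation along the sequence.} This is the main obstacle: positive qualities at a lower trading atom could vanish in the limit, letting a fully participating sequence converge to $G_E^*$. I will use the profit bound from the proof of \Cref{lem:small-productive-floor}. For a fully participating law, ironed virtual values are nonnegative and $\Pi(G)\le b\,\pi(1)$, where $b$ is the lowest posterior value. That bound alone does not exclude convergence to an atom at $a_*$, so I will work directly with the trading atom. If $G_n$ has a lower trading atom, the conditions of \Cref{lem:mult-quadratic-smooth-tail} apply (I expect they extend to small $\beta$ with the same argument). Its virtual value is $z_n=\theta_n-h_n$ and its continuation weight is $r_n=(t_n-2)/(t_n+2)$.

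Suppose $G_n\to G_E^*$. Then the atom mass $1-r_n$ must converge to $1-s_E^*$, which is bounded away from zero, so $t_n$ stays bounded. With $t_n$ bounded, the parameters $(\theta_n,t_n)$ converge, and the matched formulas give a limit trading atom with $z=\theta-h$. The structure of the limit then forces $z\le0$, that is, exclusion in the limit, which would require $\theta=h$. I will show this is incompatible with the matching condition: if $w>a$, then $\theta=\vartheta_I(t)$, which is bounded away from $h=(1-\theta)\gamma(t)$ on bounded $t$, because $j(t)>\gamma(t)$ for $t>2$. If $w=a$, the inequality $\theta\ge\vartheta_I(t)$ gives the same conclusion. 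So $z>0$ is preserved in the limit, and the limit is a trading-atom law, hence $G_I^*$.

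If instead $G_n$ is regular and fully participating, the argument in \Cref{lem:quadratic-global-comparison} requires $m_n\ge m_0$ up to an $O(\beta_n)$ perturbation. This contradicts $m_*<m_0$. In both cases $\Pi(G_n)\to P_I^*<p_-$, which completes the contradiction.
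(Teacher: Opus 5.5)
Your overall plan is the paper's: extract a convergent subsequence, identify the limit as $G_I^*$ or $G_E^*$, and show that the participation regime passes to the limit. Both transfer steps have gaps, however. The ingredient missing from each is that the regular laws $Q_\varphi$, $\varphi\in\Phi$, form a weakly closed class with continuous quantiles. Every weak limit of regular laws is therefore regular with connected support, whereas $G_I^*$ and $G_E^*$ both have support gaps.

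\textbf{Exclusion sequences.} Your argument is circular when the excluded share vanishes. If $1-s_n\to0$, the limit has no mass at $a_*$. The observation that $G_I^*$ has no atom at $a_*$ then does not exclude $G_n\to G_I^*$. You derive $1-s_n\to1-s_E>0$ only after concluding that the limit is $G_E^*$. What rules this case out is closedness: the regular trading parts $G_{A,n}$ would converge to $G_I^*$, which is not regular. Your bound $B_n\ge(2-\beta_n)a_n$ is correct. It would close this case only if you also showed that the inclusion atom $w_I$ lies below $2a_*$.

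\textbf{Fully participating sequences.} You identify the right obstacle, but your resolution fails for three reasons.
\begin{enumerate}
\item \Cref{lem:mult-quadratic-smooth-tail} is derived at $\beta=0$. For $\beta>0$ the quadratic separating system becomes $(2-\beta)\dot\varphi=Q-\theta$. The formulas $r=(t-2)/(t+2)$, $z=\theta-h$, and $\vartheta_I$ are therefore unavailable along $\beta_n>0$.
\item Even at $\beta=0$, the claim $j(t)>\gamma(t)$ for $t>2$ is false, since $j-\gamma=\gamma(t^2-8)/(t+2)$. Bounded $t_n\to2\sqrt2$ would give limiting lower virtual value $\theta-h=0$, and your argument then yields no contradiction.
\item The claim $1-r_n\to1-s_E$ presupposes that the regular parts contribute no mass at $a_*$ in the limit. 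That requires the connected-support argument you have not given.
\end{enumerate}

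None of the explicit formulas is needed. Revenue contact with $z_n\ge0$ gives $w_n=(1-r_n)z_n+r_nB_n\ge r_nB_n$, where $B_n=\int\varphi_n$. Along a subsequence, $G_E^*=(1-r)\delta_w+rG_\varphi$ with $rB\le w\le B$. Here $r=0$ would give a point mass and $r=1$ a regular law, so $0<r<1$ and $w=a_*$. Connectedness of $\supp G_\varphi$ then forces $r=s_E$, $G_\varphi=G_E^A$, and $B=B_E$. The inequality becomes $a_*\ge s_EB_E$, contradicting the strict exclusion in \Cref{lem:strict-optimal-exclusion}.

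You were one step from this: strict exclusion forces the limiting lower virtual value to be strictly negative, not merely $\le0$, which contradicts $z_n\ge0$ directly. The same closedness also disposes of regular fully participating optima near the crossing, since neither limit law is regular. Your unproven claim $m_n\ge m_0-O(\beta_n)$ is therefore unnecessary.
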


\begin{proof}
By \Cref{prop:participation-switch}, the only optimal
distributions at the crossing are $G_I^*$ and $G_E^*$.
The inclusion law has a positive lower atom strictly
above $a_*$, separated from its upper support by a gap.
The exclusion law is
$G_E^*=(1-s_E)\delta_{a_*}+s_EG_E^A$, where
$0<s_E<1$ and the active distribution has bottom
value $B_E>a_*$. Strict exclusion gives
$a_*<s_EB_E$ by \Cref{lem:strict-optimal-exclusion}.

The regular laws reconstructed from $\Phi$ form a
weakly compact class, by compactness of $\Phi$ and
continuity of reconstruction
(Appendix~\ref{app:neutral-active-analysis}).
Their quantiles extend continuously to $[0,1]$,
so their supports are connected. Neither reference
law belongs to this class, because both have a
support gap.

Consider a sequence of optimal distributions whose
parameters converge to $(a_*,m_*,0)$, and pass to a
weakly convergent subsequence.
By \Cref{lem:seg-payoff-continuity}, its limit is
either $G_I^*$ or $G_E^*$.
We show that the participation regime along this
subsequence eventually agrees with that of its limit.

First suppose the limit is $G_I^*$. A sequence of
regular optima without a lower atom is impossible
by closedness of the regular class. Along an exclusion
sequence, the mass at the varying floor $a_n\to a_*$
would have to vanish, since $G_I^*$ has no mass at
$a_*$. The remaining regular component would then
converge to $G_I^*$, again contradicting closedness.
The binary characterization therefore implies that
the sequence eventually has distorted inclusion.

Next suppose the limit is $G_E^*$, and assume along
a subsequence that every buyer trades. Write
$G_n=(1-r_n)\delta_{w_n}+r_nG_{\varphi_n}$,
$B_n=\int_0^1\varphi_n(u)\,\dd u$ and $w_n\le B_n$.
When there is no lower atom, take $r_n=1$ and
$w_n=B_n$. Otherwise, revenue contact and the
nonnegative lower virtual value $z_n$ give
$w_n=(1-r_n)z_n+r_nB_n\ge r_nB_n$.

Passing to a further subsequence yields
\[
 G_E^*=(1-r)\delta_w+rG_\varphi,
 \qquad
 w\le B:=\int_0^1\varphi(u)\,\dd u,
 \qquad w\ge rB.
\]
Here $r=0$ would make the limit a point mass, while
$r=1$ would make it regular. Thus $0<r<1$.
Since $w\le B$, the lowest support point is
$w=a_*$. The regular component must account for
all posterior values above the gap. Its connected
support therefore cannot also contain $a_*$.
The mass at $a_*$ identifies $r=s_E$, and the
remaining distribution is $G_\varphi=G_E^A$, with
$B=B_E$. But the limiting contact inequality now
gives $a_*\ge s_EB_E$, contradicting strict exclusion.
Thus every sequence converging to $G_E^*$ eventually
has exclusion.

Finally, write an exclusion sequence converging to
$G_E^*$ as
$(1-s_n)\delta_{a_n}+s_nG_{\varphi_n}$.
Compactness and the same support argument identify
every subsequential limit of $s_n$ as $s_E$.
Hence $s_n\to s_E$.

Together with payoff continuity, these conclusions
give the claimed profit ranges and participation
regimes in a sufficiently small neighborhood.
Shrinking it further ensures that every exclusion
optimum has excluded share at least $(1-s_E)/2>0$.
\end{proof}

\subsection{Gains with full participation}
\label{app:seg-full-participation}

Throughout this subsection, let $c(x)=\kappa x^\nu/\nu$, where
$\kappa>0$, $\nu\ge2$, and $\bar Q>\kappa^{-1/(\nu-1)}$.
Put $r=1/(\nu-1)$ and $A=\kappa^{-r}$. Then
\[
 q(z)=Az^r,\qquad \pi(z)=\frac{Az^{r+1}}{r+1},\qquad
 \frac{c(q(z))}{q(z)}=\frac z\nu,\quad0<z\le1.
\]
In particular, $q$ is concave and the maintained curvature condition
holds at every $\beta<1$.

\begin{lemma}[Full participation under power costs]
\label{lem:seg-always-full-participation}
If $1/\nu\le a<m<1$, every maximizer of $J_\beta$
over $\mathcal G_{a,m}$ has full participation for
every finite $\beta\ge0$.
\end{lemma}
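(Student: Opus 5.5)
We argue by contradiction, reusing the marginal comparison behind \Cref{lem:strict-optimal-exclusion}, but now for an arbitrary finite $\beta\ge0$. The curvature condition holds for every $\beta<1$ because $q$ is concave. For $\beta\ge1$ we cannot invoke \Cref{thm:binary-structure}, so we give a direct argument that does not use the structural theorem.

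Suppose an optimum $G$ excludes a positive mass. By monotonicity of quality, the trading ranks form an initial interval of upper-tail ranks $(0,s)$ with $0<s<1$; positive trade comes from a two-point spread around $m$, or from $\delta_m$ when $\beta$ is large. Every excluded buyer has posterior value $y\ge a\ge1/\nu$.

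The key observation is that, under power cost, serving a type with virtual value $z>0$ at quality $q(z)$ yields a per-unit price at least average cost. Indeed, $c(q(z))/q(z)=z/\nu\le a\le y$, so such a buyer generates nonnegative surplus. We therefore construct a deviation that extends service to the excluded group. Pool all excluded ranks together with a thin block just above the cutoff; alternatively, keep the distribution fixed and simply examine the seller's problem.

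The cleanest version is to show that exclusion is never seller-optimal for \emph{any} posterior distribution on $[a,1]$. Fix $G$ and consider the seller's menu. The excluded ranks lie on a face of $r$ whose slope $\nu_0$ is the ironed virtual value of the excluded block. By contact at the cutoff, the block average gives $\nu_0=(\bar a-sB)/(1-s)$, where $\bar a\ge a$ is the excluded mean and $B$ is the bottom trading value. We will instead offer the excluded block a single contract at quality $x$ and price $ax$. This keeps all trading buyers' rents fixed provided the contract does not attract them; it does attract them, so we must account for the resulting rent increase $s(B-a)x$.

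The net effect on profit is
\[
(1-s)\bigl(ax-c(x)\bigr)-s(B-a)x.
\]
This is exactly the marginal comparison of Figure~\ref{fig:exclusion-incentive}, but it can be negative, so a seller-level argument alone does not suffice. The designer deviation must therefore be used. Here we apply \Cref{lem:binary-exclusion-floor} when $\beta<1$, or its proof, which uses only \Cref{ass:cost}: this gives excluded values at $a$ and $B>a$. We then compare $G$ with a designer deviation that moves a sliver of the active distribution down to $a$, merging it with the atom so that the atom's virtual value becomes positive. We track the welfare change using $q(z)=Az^r$ and $\pi(z)=zq(z)/(r+1)$, and use $a\ge1/\nu$ to bound the loss of active rents by the gain in trade surplus $\pi$ plus rents.

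The main obstacle is exactly this comparison, which must hold uniformly in $\beta$, including $\beta\ge1$. Our first attempt will be a global inequality. At any optimum with exclusion, the objective satisfies $sJ\le\ldots$, and we compare it with a fully participating distribution that pools the excluded atom with the lowest active types at their conditional mean. That distribution is a mean-preserving contraction, hence feasible. Using the power identities, and the fact that the virtual value of the pooled group is at least $a-(\text{rent term})\ge a(1-1/\nu)$-scaled, we will try to show that both $\CS$ and $\Pi$ weakly rise, with one strictly, which handles all $\beta$ simultaneously.
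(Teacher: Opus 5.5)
Your proposal does not yet contain a proof. It cycles through three deviations and ends with one you only ``will try'' to analyze.

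\begin{itemize}
\item You correctly discard the seller-level argument: exclusion is seller-optimal whenever the excluded atom's virtual value $(a-sB)/(1-s)$ is nonpositive.
\item The ``sliver'' deviation lowers the mean without compensation and is never evaluated.
\item The final contraction, pooling the excluded atom with the lowest active types at their conditional mean, is not shown to do anything. A thin pooled block barely moves the pooled value or $sB$, so the pooled group's virtual value stays nonpositive and it remains excluded. A thick block can change the concavification over the remaining active types, so you lose control of the seller's response there.
\item The target ``both $\CS$ and $\Pi$ weakly rise'' is stronger than needed, and natural contractions need not deliver it. Compressing the active market toward the excluded mean $e$ lowers consumer surplus at ranks whose virtual value exceeds $e$. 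For $\beta\ge1$ it suffices that $\TS$ and $\Pi$ rise, since $J_\beta=\TS+(\beta-1)\Pi$.
\end{itemize}

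For $\beta<1$ you gesture at \Cref{lem:strict-optimal-exclusion} but never use it, although it closes that case at once. It applies because $q$ is concave and $a\ge1/\nu>0$. Write $q(z)=Az^r$ with $r=1/(\nu-1)$, so $q/q'=z/r$ and $\pi/q=z/(r+1)$. Its boundary conditions then give $B=z(r+1-\beta)/r$ and $a\le z(r+1-\beta)/(r+1)=B/\nu<1/\nu$, a contradiction.

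The missing idea for $\beta\ge1$ is a specific garbling. Keep every trading signal, and replace every excluded signal by an independent draw from the conditional distribution of trading signals.

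\begin{itemize}
\item With $e$ the mean excluded posterior and $Y_A$ the posterior conditional on trade, the new posterior is $(1-s)e+sY_A$.
\item The trading cutoff is a revenue contact, so the conditional envelope is $R_A$. Concavification commutes with positive scaling and with adding an affine function. The new envelope is therefore $(1-s)e(1-u)+sR_A(u)$, and virtual values become $(1-s)e+sZ_A>0$, so everyone trades.
\item Compare payoffs rank by rank along $w_t=e+t(z-e)$ and $y_t=e+t(y-e)$ for $t\in[s,1]$. With power costs, $t^2\frac{\dd}{\dd t}[\pi(w_t)/t]=q(w_t)(w_t/\nu-e)<0$. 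This is exactly where your average-cost identity $c(q(z))/q(z)=z/\nu$ and $e\ge a\ge1/\nu$ enter.
\item For $S_t=\pi(w_t)+(y_t-w_t)q(w_t)$, the case $w_t\le e$ is immediate. When $w_t>e$, bound $y_t-w_t\le1-w_t$ and use $e\ge r/(r+1)$ to get $t^2\frac{\dd}{\dd t}[S_t/t]\le-\frac{Ar^2}{r+1}w_t^{r-1}(1-w_t)^2<0$.
\end{itemize}

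Hence $\pi(w_s)>s\pi(z)$ and $S_s>sS_1$. The garbling therefore strictly raises both profit and total surplus, and so raises $J_\beta$ for every $\beta\ge1$.
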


\begin{proof}
First suppose $0\le\beta<1$. An optimum with exclusion
would have a regular active distribution with bottom
posterior value $B<1$ and bottom virtual value $z>0$.
By \Cref{lem:strict-optimal-exclusion},
\[
 B=z\frac{r+1-\beta}{r},
 \qquad
 a\le z\left(1-\frac{\beta}{r+1}\right)
   =\frac{B}{\nu}<\frac1\nu,
\]
contrary to the assumed lower bound on $a$.

For $\beta\ge1$, we show that any feasible distribution
with exclusion admits a garbling that strictly raises
both profit and total surplus. Let $s<1$ be its trading
share, $e$ the mean posterior value among excluded
buyers, and $Y_A$ the posterior conditional on trade.
The seller can earn $\pi(a)>0$ by offering quality
$q(a)$ at price $a q(a)$, so $s>0$.
Monotonicity of quality gives $a\le e\le m<1$.

The trading cutoff $u_0=1-s$ is a revenue contact.
Otherwise it would lie inside an interval of strict
concavification, where virtual value and quality are
constant, and could not separate trade from exclusion.
Conditioning on trade therefore gives the revenue
envelope
$R_A(u)=R_Q(u_0+su)/s$, with the same virtual values
$Z_A>0$ as on the original trading ranks.
The original profit and total surplus are consequently
$s$ times the corresponding conditional payoffs.

Now retain each trading signal and replace every
excluded signal by an independent draw from the
conditional distribution of trading signals.
The resulting posterior distribution is the law of
$\widetilde Y=(1-s)e+sY_A$.
This is a garbling of the original information policy
and hence remains feasible.
Concavification commutes with positive scaling and
addition of an affine function, so the new revenue
envelope and virtual values are
$\widetilde R(u)=(1-s)e(1-u)+sR_A(u)$ and $\widetilde Z=(1-s)e+sZ_A>0$.
Thus, the garbling serves everyone.

It remains to compare payoffs. For an original active
pair $0<z\le y\le1$, define
$w_t=e+t(z-e)$, $y_t=e+t(y-e)$, and
$S_t=\pi(w_t)+(y_t-w_t)q(w_t)$.
At $t=1$ these give the original conditional payoffs;
at $t=s$ they give the new payoffs.
We show that $\pi(w_t)/t$ and $S_t/t$ strictly decrease
on $[s,1]$. This accounts for the original trading
share $s$ in the comparison.

Since $e<1$, we have $0<w_t<1$ for $t<1$.
For such $t$, the power cost formulas give
\[
 t^2\frac{\dd}{\dd t}\frac{\pi(w_t)}{t}
 =q(w_t)\left(\frac{w_t}{\nu}-e\right)<0,
\]
where the strict sign follows from $e\ge1/\nu$.
For total surplus,
\[
 t^2\frac{\dd}{\dd t}\frac{S_t}{t}
 =(w_t-e)\{q(w_t)+(y_t-w_t)q'(w_t)\}
   -\pi(w_t).
\]
This is strictly negative when $w_t\le e$.
When $w_t>e$, use $y_t-w_t\le1-w_t$,
$e\ge r/(r+1)$, and the power-cost formulas to obtain
\[
 t^2\frac{\dd}{\dd t}\frac{S_t}{t}
 \le A w_t^{r-1}
 \left[
 \left(w_t-\frac{r}{r+1}\right)
       \{w_t+r(1-w_t)\}
       -\frac{w_t^2}{r+1}
 \right]
 =-\frac{Ar^2}{r+1}w_t^{r-1}(1-w_t)^2<0.
\]
Hence $\pi(w_s)>s\pi(z)$ and $S_s>sS_1$.
Averaging over the conditional active distribution
shows that the garbling strictly raises both profit
and total surplus. It therefore strictly raises
$J_\beta=\TS+(\beta-1)\Pi$ for every $\beta\ge1$,
ruling out exclusion at an optimum.
\end{proof}

\begin{proposition}[Revelation and no learning under power costs]
\label{prop:seg-power-full-participation}
Fix $1/\nu<a<2/(\nu+1)$. For every $m>a$ sufficiently
close to $a$, there is a weight $\beta^*\in(1,2)$
at which the only maximizers of $J_{\beta^*}$ over
$\mathcal G_{a,m}$ are full revelation $H$ and no
learning $\delta_m$. Full revelation yields strictly
positive consumer surplus and strictly less profit
than no learning. These conclusions continue to hold under small changes in
$(a,m)$, with $\beta^*$ allowed to vary, and hence
hold on a nonempty open set of priors. At these
priors, every optimum has full participation for
every finite $\beta\ge0$.
\end{proposition}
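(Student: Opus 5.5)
The plan is to reduce the global comparison to a supporting-hyperplane argument in the $(y,z)$ variables. The two candidates are then compared explicitly in the small parameter $\tau:=(m-a)/(1-a)$, the prior probability of the high valuation. First, compute the two candidate payoffs. Under no learning the seller extracts everything, so $(\CS,\Pi)=(0,\pi(m))$. Under full revelation $H=(1-\tau)\delta_a+\tau\delta_1$, the low type's virtual value is $z_a=(a-\tau)/(1-\tau)$, which is positive for small $\tau$. The high type gets $\bar q$ and rent $(1-a)q(z_a)$. Hence
\[
\CS(H)=\tau(1-a)q(z_a),\qquad \Pi(H)=(1-\tau)\pi(z_a)+\tau\pi(1).
\]
Set $f(\beta):=J_\beta(H)-\beta\pi(m)$, which is affine in $\beta$. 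Expanding to first order in $\tau$ with $q(z)=Az^r$ gives
\[
\CS(H)\approx\tau(1-a)q(a),\qquad \pi(m)-\Pi(H)\approx\tau\bigl[q(a)(1-a)-\pi(1)+\pi(a)\bigr].
\]
The condition $a<2/(\nu+1)$ should be exactly what makes the ratio $\beta^*=\CS(H)/[\pi(m)-\Pi(H)]$ land in $(1,2)$ in the limit $\tau\to0$. The condition $a>1/\nu$ (via $\pi(1)-\pi(a)<q(1)(1-a)$ compared with the profit term) should make the profit gap positive. I will verify both with the power formulas. Since $f$ is affine, $\beta^*$ is then the unique weight at which the two candidates tie, and $\Pi(H)<\pi(m)$ with $\CS(H)>0$.

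The main step is showing that at $\beta^*$ no other $G\in\mathcal G_{a,m}$ does as well. By \Cref{lem:seg-always-full-participation}, it suffices to consider fully participating $G$, whose ironed virtual values lie in $[0,1]$ and satisfy $z\le y$. I will use two linear identities valid for every such $G$:
\[
\int_0^1\phi_Q\,\dd u=R_Q(0)=b:=\min\supp G\in[a,m],\qquad \int_0^1 Q\,\dd u=m.
\]
Then I look for an affine function $L(y,z)=\lambda_0+\lambda_1 y+\lambda_2 z$ with $F_{y,\beta^*}(z)\le L(y,z)$ on the relevant set $\{(y,z):a\le y\le1,\ 0\le z\le y\}$. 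Its coefficients should be chosen so that equality holds exactly at the points used by the two candidates: $(a,z_a)$ and $(1,1)$ for $H$, and $(m,m)$ for $\delta_m$. Integrating gives
\[
J_{\beta^*}(G)\le\lambda_0+\lambda_1 m+\lambda_2 b.
\]
With $\lambda_2\ge0$ this is at most the value at $b=m$. Because $F$ is not jointly concave, I expect the bound to hold only locally, near these contact points, and this is the main obstacle. To handle it, I would first use the smallness of $\tau$: since every feasible $G$ has mean within $\tau$ of $a$, all but $O(\tau)$ of its mass lies near $a$. I would then split $F$ into a part near $y=a$, where strict concavity of $F_{y,\beta}$ in $z$ (\Cref{ass:curvature} holds because $q$ is concave) together with a tangent-plane bound suffices, and a part for large $y$, bounded crudely by $F_{y,\beta}(z)\le(y-z)\bar q+\beta\pi(z)$ and compared with the linear cost $\lambda_1(y-a)$. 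If the single-hyperplane bound fails, my fallback is to apply \Cref{prop:mean-accounting} together with the atom structure of \Cref{thm:binary-structure}, which holds for $\beta<1$ only; for $\beta>1$ this would have to be replaced by direct expansion of $J_\beta$ to first order in $\tau$ over a compact family of normalized upper distributions.

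For uniqueness, the equality cases in the hyperplane bound force $\mu$-almost every $(y,z)$ to lie on the contact set. Any mixture of the two contact configurations other than the pure candidates violates the revenue-contact relation that links $z$ at $a$ to the upper mass. This leaves only $H$ and $\delta_m$. For openness, the inequalities defining $\beta^*\in(1,2)$, the profit gap, and the strict slack of the bound away from the contact set are all strict and vary continuously in $(a,m)$ by \Cref{lem:seg-payoff-continuity}. Re-solving $f(\beta^*)=0$ then gives a nearby weight, so the conclusions persist on a neighborhood of priors. Full participation at every finite $\beta$ follows directly from \Cref{lem:seg-always-full-participation}, since $a>1/\nu$.
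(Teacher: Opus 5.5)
\textbf{There is a genuine gap in your main step.} No affine majorant $L(y,z)=\lambda_0+\lambda_1y+\lambda_2z$ of $F_{y,\beta^*}$, whether global or local, can touch at the points used by both candidates.

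\emph{Why a single hyperplane fails.} At $y=a$ the point $z_a\in(0,a)$ is interior in the $z$-direction. Tangency there therefore forces
$\lambda_2=\partial_zF_{a,\beta^*}(z_a)=(a-z_a)q'(z_a)+(\beta^*-1)q(z_a)>0$.
Indeed, for $\beta^*>1$ this derivative is positive on all of $(0,a)$. Integrating as you propose then gives
$J_{\beta^*}(H)=\lambda_0+\lambda_1m+\lambda_2a<\lambda_0+\lambda_1m+\lambda_2m=J_{\beta^*}(\delta_m)$.
This contradicts the tie that defines $\beta^*$. The obstruction sits at the contact point itself, not in the nonconcavity of $F$ away from it, so localizing near $y=a$ does not help.

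\emph{The missing idea.} The multiplier on $\int_0^1\phi_Q\,\dd u=b$ must depend on the lowest posterior value $b$, and the bound in $y$ must use $\supp G\subset[b,1]$. The paper fixes $b$, sets $\lambda(b)=(b-z_b)q'(z_b)+(\beta-1)q(z_b)$, and shows that $g(y)=\max_{0\le z\le y}\{F_{y,\beta}(z)-\lambda(b)z\}$ is strictly convex. This holds for $1<\beta<2$ under $(b-z_b)q'(z_b)\le(\beta-1)[q(1)-q(z_b)]$. Bounding $\E_G[g(Y)]$ by the chord of $g$ between $b$ and one gives $J_\beta(G)\le J_\beta(F_b)$ for $F_b=(1-\tau_b)\delta_b+\tau_b\delta_1$. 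Equality holds only at $F_b$.

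In your language, this is the affine majorant $\lambda(b)z+\ell_b(y)$, with $\ell_b$ that chord. It depends on $b$ and need only hold for $y\ge b$. Uniqueness comes from this strict chord inequality, not from a revenue-contact argument.

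\textbf{A second, one-dimensional comparison over $b\in[a,m]$ is then needed, and your plan does not supply it.} This is where $a<2/(\nu+1)$ enters. It gives $f_{bb}(a;a,\beta_0)=2(\beta_0-1)[2q'(a)-q(a)/(1-a)]>0$ for $f(b;m,\beta)=J_\beta(F_b)$. So at the tying weight the interior $F_b$ lie strictly below the two endpoints. They fall short only by $O(\tau^2)$, however. Your fallback of expanding $J_\beta$ to first order in $\tau$ therefore cannot exclude them.

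\textbf{You also misassign the hypotheses and slip in the expansion.}
\begin{itemize}
\item The correct profit gap is $\pi(m)-\Pi(H)=\tau(1-a)[2q(a)-\bar q_a]+O(\tau^2)$, with $\bar q_a=[\pi(1)-\pi(a)]/(1-a)$. Your expression equals $\tau(1-a)[q(a)-\bar q_a]<0$, which would make $\beta^*$ negative.
\item With the correct gap, $\beta^*\to\beta_0=q(a)/[2q(a)-\bar q_a]$. This lies in $(1,2)$ because $q(a)<\bar q_a\le q(a)+\tfrac12(1-a)q'(a)$ by concavity, and $(1-a)q'(a)<q(a)$ exactly when $a>1/\nu$.
\item The bound $a<2/(\nu+1)$ plays no role in placing $\beta^*$; its role is the curvature comparison over $b$ described above.
\end{itemize}

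Your appeal to full participation is fine, as is the continuity argument for openness once these strict inequalities are established uniformly in $b\in[a,m]$.
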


\begin{proof}
By \Cref{lem:seg-always-full-participation}, every
optimizer has full participation. We first compare
such distributions with a family supported on two
posterior values.

For $a\le b\le m$, define
\begin{equation}
 \tau_b=\frac{m-b}{1-b},\qquad
 z_b=\frac{b-\tau_b}{1-\tau_b},\qquad
 F_b=(1-\tau_b)\delta_b+\tau_b\delta_1.
 \label{eq:seg-two-posterior-family}
\end{equation}
The distribution $F_b$ is feasible because its mean
is $m$ and its support lies in $[a,1]$.
The endpoints are $F_a=H$ and $F_m=\delta_m$.
When $z_b>0$, its virtual values are $z_b$ and one,
and its payoffs are
\begin{equation}
 C_b=(m-b)q(z_b),\qquad
 P_b=(1-\tau_b)\pi(z_b)+\tau_b\pi(1),\qquad
 P_N=\pi(m).
 \label{eq:seg-two-posterior-payoffs}
\end{equation}

\paragraph{Reduction to two posterior values.}
Let $G$ be any fully participating distribution with
mean $m$ and lowest posterior value $b$.
If $b=m$, then $G=\delta_m$. Suppose $b<m$,
$z_b>0$, $1<\beta<2$, and
\begin{equation}
 (b-z_b)q'(z_b)
 \le(\beta-1)[q(1)-q(z_b)].
 \label{eq:seg-two-posterior-condition}
\end{equation}
We show that $J_\beta(G)\le J_\beta(F_b)$,
with equality only for $G=F_b$.

Full participation gives $0<\phi_Q(u)\le Q(u)$
almost everywhere, while the revenue endpoints imply
$\int_0^1\phi_Q(u)\,\dd u=b$. Define
\[
 \lambda=(b-z_b)q'(z_b)+(\beta-1)q(z_b),
 \qquad
 g(y)=\max_{0\le z\le y}
 \{(y-z)q(z)+\beta\pi(z)-\lambda z\}.
\]
The expression maximized in $g$ is strictly concave
in $z$: its second derivative is
$(y-z)q''(z)-(2-\beta)q'(z)<0$ for $0<z<y$.
The choice of $\lambda$ makes $z_b$ its unique
maximizer at $y=b$, while
\eqref{eq:seg-two-posterior-condition} makes one
its unique maximizer at $y=1$.

Concavity of $q$ and $\beta<2$ imply
$\lambda>(\beta-1)q(b)$, whereas
\eqref{eq:seg-two-posterior-condition} gives
$\lambda\le(\beta-1)q(1)$.
There is therefore a unique $y_0\in(b,1]$ satisfying
$\lambda=(\beta-1)q(y_0)$.
The maximizing virtual value $z(y)$ is positive
and interior for $b\le y<y_0$, and equals $y$
for $y\ge y_0$. Implicit differentiation and the
envelope theorem give
\[
 g''(y)=
 \begin{cases}
 \displaystyle
 \frac{[q'(z(y))]^2}
 {(2-\beta)q'(z(y))-(y-z(y))q''(z(y))}>0,
       & b<y<y_0,\\[6pt]
 \beta q'(y)>0,
       & y_0<y<1.
 \end{cases}
\]
The values and first derivatives agree at $y_0$,
where $g'(y_0)=q(y_0)$. Thus $g$ is strictly convex.

Using $\int_0^1\phi_Q=b$ and bounding $g$ by the
chord joining its values at $b$ and one gives
\[
 J_\beta(G)
 \le \lambda b+\E_G[g(Y)]
 \le \lambda b+(1-\tau_b)g(b)+\tau_b g(1)
 =J_\beta(F_b).
\]
The last equality uses
$(1-\tau_b)z_b+\tau_b=b$.
Strict convexity makes the second inequality strict
unless $G$ is supported on $\{b,1\}$, in which case
its mean requires $G=F_b$.
This comparison covers arbitrary supports, including
distributions that induce ironing.

\paragraph{A weight that makes the endpoints tie.}
For $t\in(0,1)$, write
$\bar q_t=[\pi(1)-\pi(t)]/(1-t)$, and set
$\beta_0=q(a)/[2q(a)-\bar q_a]$.
Since $a>1/\nu$, we have
$(1-a)q'(a)<q(a)$. Since $q$ is increasing and concave,
\[
 q(a)<\bar q_a
 \le q(a)+\tfrac12(1-a)q'(a)
 <\tfrac32q(a),
\]
so $1<\beta_0<2$.
Writing $b=m-h$ gives
$z_b=m-2h-h^2/(1-m)$.
For small $h\ge0$, uniformly for $m$ near $a$,
\[
 C_{m-h}=h q(m)+O(h^2),\qquad
 P_N-P_{m-h}
 =h[2q(m)-\bar q_m]+O(h^2).
\]
Taking $h=m-a$ shows that, for $m>a$ sufficiently
close to $a$, full revelation has $C_a>0$ and
$P_a<P_N$. The weight
\[
 \beta^*:=\frac{C_a}{P_N-P_a}
\]
then makes $F_a$ and $F_m$ yield the same objective
value, and $\beta^*\to\beta_0$ as $m\downarrow a$.

Uniformly over $b\in[a,m]$, we also have
$z_b\to a>0$ and $b-z_b\to0$.
For weights near $\beta_0$,
\eqref{eq:seg-two-posterior-condition} therefore
holds strictly throughout this interval.
The preceding bound consequently applies to every
fully participating distribution.

\paragraph{Ruling out interior members of the family.}
The payoff formula
$f(b;m,\beta):=J_\beta(F_b)$ has a $C^2$ extension
near $(a,a,\beta_0)$. Differentiation gives
\[
 f_{bb}(a;a,\beta_0)
 =2(\beta_0-1)
   \left[2q'(a)-\frac{q(a)}{1-a}\right]>0,
\]
where the strict inequality is equivalent to
$a<2/(\nu+1)$.
Thus, for nearby means and weights, $f$ is strictly
convex in $b$ on $[a,m]$.
At $\beta^*$, the endpoint values coincide.
Strict convexity therefore places every interior
$F_b$ strictly below them. Combining this with the
bound for arbitrary fully participating distributions
and \Cref{lem:seg-always-full-participation} gives
$J_{\beta^*}(G)\le\beta^*\pi(m)$,
with equality if and only if $G\in\{H,\delta_m\}$.

For any prior selected above, $z_b>0$, strictness
of \eqref{eq:seg-two-posterior-condition}, and
$f_{bb}>0$ hold uniformly over $b\in[a,m]$.
These properties persist under small changes in
$(a,m)$, as do $C_a>0$, $P_N>P_a$, and
$\beta^*\in(1,2)$, using the continuous formula
for $\beta^*$.
The same argument therefore applies throughout
an open neighborhood of that prior.
Keeping $a>1/\nu$ in this neighborhood also
preserves full participation at every finite
weight by \Cref{lem:seg-always-full-participation}.
\end{proof}

\subsection{Proof of \Cref{thm:segmentation-gain}}

\begin{proof}
By \Cref{lem:seg-supported-profit-gap}, it suffices to
find a positive weight and two global optimizers with
profits $P_L<P_H$, together with
$P_L<p_-<p_+<P_H$, such that no optimizer has profit
in $[p_-,p_+]$. Mixing the two fixed distributions
then gives the claimed efficient segmented pairs
and strict consumer-surplus gains, with only the
group probabilities changing.
Moreover, $\widehat\Gamma^a_m$ equals the supporting
line at $P_L$ and $P_H$ but lies strictly below it
on $[p_-,p_+]$. It therefore lies below the chord
joining its values at $P_L$ and $P_H$, violating
concavity. We construct the required gaps on open
sets of priors.

\paragraph{Weights above one.}
For every power cost in the theorem,
\Cref{prop:seg-power-full-participation} gives a
nonempty open set of priors at which the only
maximizers, at some $\beta^*\in(1,2)$, are full
revelation $H$ and no learning $\delta_m$.
Their profits are distinct, so any nondegenerate
closed interval strictly between them contains
no maximizing profit. The same proposition gives
full participation under every optimum at every
finite nonnegative weight throughout this open set.

\paragraph{Weights below one.}
First take $c(x)=x^2/2$. Fix $a\in\mathcal A$ and
$m_*=m^\dagger(a)$.
By \Cref{prop:participation-switch}, the crossing
has two buyer-optimal distributions, with strictly
higher profit under exclusion. The unique inclusion
optimum just below $m_*$ converges to the lower-profit
crossing pair. Hence
\Cref{lem:seg-separated-supports} gives the required
profit gap at a weight in $(0,1)$ for every mean
sufficiently close to and below $m_*$.
In that construction, choose the auxiliary upper
weight $\bar\beta<1$ and the mean neighborhood
small enough that
\Cref{lem:seg-participation-persistence} also applies.
The low-profit optima then have distorted inclusion,
and the high-profit optima have exclusion.

To obtain an open set of priors, fix one such lower
mean $m_0$. At $(a,m_0)$, all optimizers at weight
zero have profit below $p_-$, all optimizers at
$\bar\beta$ have profit above $p_+$, and no optimizer
has profit in $[p_-,p_+]$ at any weight in
$[0,\bar\beta]$.

By \Cref{lem:seg-payoff-continuity}, the strict profit comparisons at weights zero and $\bar\beta$ continue to hold for small changes in $a$ and $m$. The gap also persists throughout $[0,\bar\beta]$ since, if maximizing profits entered $[p_-,p_+]$ along priors approaching $(a,m_0)$, compactness and the same continuity lemma would give an optimizer at the original prior with profit in that interval.

Fix a nearby prior and let $w_-$ and $w_+$ be the restricted values from \Cref{lem:seg-separated-supports}. Their difference is continuous, positive at zero, and negative at $\bar\beta$. It therefore vanishes at some interior weight. At that weight, both restricted maximizers are global optimizers, one earning less than $p_-$ and the other more than $p_+$. These arguments hold throughout an open neighborhood of $(a,m_0)$. Taking this neighborhood small enough also preserves the participation regimes by \Cref{lem:seg-participation-persistence}.

Finally, replacing $c(x)=x^2/2$ by
$c(x)=\kappa x^2/2$ divides both payoff functions
by $\kappa$. It leaves the optimizing distributions
and weights unchanged, so the conclusion holds
for every quadratic coefficient.
\end{proof}

\subsection{State-dependent segmentation}
\label{app:state-dependent}

\subsubsection{Efficiency and full surplus extraction}
\label{app:sd-full-extraction}

\begin{proof}[Proof of \Cref{prop:sd-full-extraction}]
Let $K$ denote the public group and $Y$ the buyer's posterior
mean after observing her group and private signal. For every
joint policy and $\beta\ge1$,
\begin{equation}
 \CS+\beta\Pi
 =\TS+(\beta-1)\Pi
 \le\beta\TS
 \le\beta\E[\pi(Y)]
 \le\beta\E_H[\pi(V)].
 \label{eq:sd-full-extraction-bound}
\end{equation}
The first inequality follows from individual rationality,
the second bounds total surplus by first-best surplus,
and the last follows from Jensen's inequality,
since $Y=\E[V\mid Y,K]$. The function $\pi$ is strictly
convex because $\pi'=q$ is strictly increasing.
Full public revelation attains the bound: the seller supplies
$q(V)$ and charges $Vq(V)$, leaving zero rent.

Every maximizing policy must therefore attain equality
throughout \eqref{eq:sd-full-extraction-bound}.
Equality in Jensen's inequality requires full learning by
the buyer, while equality in the first-best bound requires
efficient quality. Thus,
$Y=V$ and $x_K(V)=q(V)$ almost surely.

We next show that these efficient qualities cannot be
seller-optimal in a group containing two distinct positive
valuations. In such a group, write the conditional support
as $w_1<\cdots<w_L$, with probabilities $p_i>0$.
Let $w_i$ be the lowest positive value and put
$S_{i+1}:=\sum_{\ell>i}p_\ell>0$.
Reduce its quality from $q(w_i)$ to $q(w_i)-\epsilon$,
holding all other qualities fixed. For sufficiently small
$\epsilon>0$, this preserves monotonicity: any lower type
receives zero quality, and $q(w_i)<q(w_{i+1})$.

A profit-maximizing seller leaves only the rents required
by incentive compatibility. Adjusting transfers after this
quality reduction lowers every higher type's rent by
$(w_{i+1}-w_i)\epsilon$.
The loss in total surplus is only $o(\epsilon)$, since
$c'(q(w_i))=w_i$. Profit therefore rises by
$S_{i+1}(w_{i+1}-w_i)\epsilon+o(\epsilon)>0$.
This contradicts seller optimality. Hence every
positive-probability group contains at most one positive
primitive valuation.

Conversely, suppose $Y=V$ and each group contains at most
one positive valuation. In a group whose positive value
is $v$, the menu consisting of $(0,0)$ and
$(q(v),vq(v))$ extracts the group's entire first-best
surplus. A group containing only zero has zero payoffs.
Since profit cannot exceed first-best surplus, every
seller-optimal menu in these groups attains that surplus
and leaves zero consumer surplus. These conditions are
therefore sufficient, and every maximizing joint policy
satisfies $\CS=0$ and $\Pi=\E_H[\pi(V)]$, including at
$\beta=1$.

If $a>0$, each group contains a single primitive valuation,
so its public label reveals $V$. If zero is possible,
it may share a group with one positive valuation, provided
the buyer still learns her valuation privately.
\end{proof}

\subsubsection{Implementing conditional priors}
\label{app:sd-implementation}

For a finite prior $H=\sum_i p_i\delta_{v_i}$, assign a buyer
in state $v_i$ to group $k$ with probability
$\lambda_kH_k(\{v_i\})/p_i$.
The identity $\sum_k\lambda_kH_k=H$ ensures that these
probabilities sum to one in every state. Bayes' rule then
gives group probability $\lambda_k$ and conditional prior
$H_k$. Within each group, a private information policy
can induce any $G_k\preceq_\cx H_k$, proving sufficiency
of \eqref{eq:sd-prior-feasibility}.
Conversely, every joint policy satisfies the prior identity
by the law of total probability and
$G_k\preceq_\cx H_k$ by Bayes plausibility within each group.
For a binary prior, write $p=(m-a)/(1-a)$ and
$p_k=(m_k-a)/(1-a)$.
Any finite split satisfying $\sum_k\lambda_km_k=m$
can be implemented by
\begin{equation}
 \Pr(K=k\mid V=1)=\frac{\lambda_kp_k}{p},\qquad
 \Pr(K=k\mid V=a)=\frac{\lambda_k(1-p_k)}{1-p}.
 \label{eq:sd-assignment-kernel}
\end{equation}
These probabilities are nonnegative and sum to one
in each state because $\sum_k\lambda_kp_k=p$.
Within group $k$, a posterior distribution is feasible
precisely when it is supported on $[a,1]$ and has mean
$m_k$. Optimizing private learning at each conditional
mean and then choosing the finite split therefore gives
the least concave majorant in
\eqref{eq:sd-concavification}
\citep{kamenicagentzkow2011}.

\subsubsection{Splitting an interior lower atom}
\label{app:sd-local-proof}

\begin{lemma}[A local split at an interior lower atom]
\label{lem:sd-interior-atom-split}
Fix $0\le\beta<1$ and a binary prior on $\{a,1\}$ with
mean $m$. Under \Cref{ass:cost,ass:curvature}, suppose
a single-market maximizer has a lower trading atom at
a posterior value strictly above $a$. Then a two-group
state-dependent policy strictly raises both $J_\beta$
and profit relative to that distribution. The conditional
means can be chosen arbitrarily close to $m$, and both
groups retain full participation. In particular, at
$\beta=0$, the policy strictly raises both consumer
surplus and profit.
\end{lemma}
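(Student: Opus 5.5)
The idea is to split the single market into two public groups that share the same upper distribution. Only one group contains the lower atom. Because that atom is more heavily represented in the group that contains it, it can be pushed down toward $a$, which releases prior mean elsewhere.

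Start from the maximizer $G=(1-r)\delta_w+rG_A$ given by \Cref{thm:binary-structure}(ii). Here $a<w<B$, the atom has virtual value $z>0$, and $G_A$ is the regular continuation with bottom $B$, mean $M$ and multiplier $\theta$. By \Cref{lem:strict-attachment}, $q(z)\le\theta\le q(\bar z)$ and $z<\bar z$. Revenue contact gives $w=(1-r)z+rB$.

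The split is as follows.
\begin{itemize}
\item Group $H$ has probability $\lambda$ and posterior distribution $G_A$ (after a small adjustment), so its conditional mean is close to $M$.
\item Group $L$ has probability $1-\lambda$ and posterior distribution $(1-r')\delta_{w'}+r'G_A$.
\item The parameters satisfy $(1-\lambda)(1-r')=1-r$, which keeps the aggregate mass on the lower atom fixed.
\end{itemize}
Taking $\lambda$ small keeps both conditional means close to $m$.

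Each group is screened separately. Group $H$ has no lower contract, so its buyers pay more, and the rents that the atom's contract generates for $G_A$-types are now borne only by the group-$L$ share $(1-\lambda)r'$ of the $G_A$ buyers. In group $L$, contact gives $w'=(1-r')z'+r'B$. The construction follows \eqref{eq:sd-joint-markets}: keep $z'=z$, so the atom's quality is unchanged, and lower the atom's value to $w'<w$, which is possible because $r'<r$. The aggregate mean then falls by $(1-r)(w-w')$. This freed mean is returned to the upper distribution through a direction $\eta$ from \eqref{eq:binary-remote-mean}, applied to both groups' $G_A$. That direction keeps $B$ and the endpoint virtual values fixed and gains $\theta$ per unit of mean.

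With this choice, the per-buyer accounting for the atom is as follows. Its own payoff is $F_{w',\beta}(z)$. The rents it creates for higher types total $r'(B-w')q(z)=(1-r')(w'-z)q(z)$. Profit on the upper types is unchanged in group $L$ and strictly higher in group $H$, because rents there fall by $(B-w)q(z)$ times the $G_A$-mass moved out of the market that offers the basic product. The problem therefore reduces to one parameter $\lambda$. Setting $g(\lambda)$ equal to the aggregate $J_\beta$ of the split, the plan is to compute $g'(0^+)$. The first-order gain should equal the saved rents $(B-w)q(z)$ times the mass moved, weighted by $(1-\beta)$ in $J_\beta$ and by $\beta$, together with the mean-restoration term $\theta\cdot$(freed mean), net of the rent change.

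The step I expect to be the main obstacle is signing both components at once: profit must rise strictly, and $J_\beta$ must rise strictly. Profit is the easier part. Group $L$'s menu is unchanged on $G_A$ ranks, group $H$ extracts more, and the $\eta$-adjustment changes profit only by $O(\lambda)\cdot DM$ terms. If necessary, I would instead restore mean by partially regularizing at the top of the separating region, where the quality is close to $\bar q$ and profit is only mildly affected.

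For $J_\beta$ I would use the optimality conditions already obtained. The interior-atom condition $\theta=\vartheta$ (the analogue of $\mathcal A_w=\theta$ in \Cref{thm:notrade-trading-exclusive}) makes the derivative in $w$, holding the single market fixed, vanish. The only remaining first-order term is then the decoupling gain $(1-\beta)(B-w)q(z)\cdot r\,\lambda/(1-\lambda)$ type, together with its profit counterpart, and both are strictly positive since $B>w$ and $q(z)>0$.

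Because every change is $O(\lambda)$, full participation is preserved:
\begin{itemize}
\item group $L$ keeps $z>0$;
\item group $H$ is regular;
\item the strict attachment inequalities persist.
\end{itemize}
At $\beta=0$, a strict rise in $J_0=\CS$ combined with a strict rise in profit gives the final claim.
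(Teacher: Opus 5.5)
Your first-order ``decoupling gain'' is a miscalculation, and the correct computation exposes a case your construction cannot reach. Write $s=1-r$ for the atom's mass and $\ell=B-z$. Consider a group whose lower atom has share $\rho$ and virtual value $z$. By contact its atom sits at $B-\rho\ell$. Its mean is $(1-\rho)M+\rho B-\rho^2\ell$, its profit is $(1-\rho)P_A+\rho\pi(z)$, which is affine in $\rho$, and its objective is $(1-\rho)J_\beta(G_A)+\rho(1-\rho)\ell q(z)+\beta\rho\pi(z)$.

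Your split puts weight $\lambda$ on $\rho=0$ and weight $1-\lambda$ on $\rho'=s/(1-\lambda)$. This preserves aggregate atom mass, so every term linear in $\rho$ cancels. In particular, profit is exactly unchanged. The extra payments by upper buyers in group $H$ are offset one-for-one by two changes in group $L$: the atom's lower payment $w'q(z)$, and the larger rents $(B-w')q(z)$ of its upper buyers. So profit on upper types in group $L$ is not unchanged, as you claim. Meanwhile the mean falls by $s^2\ell\lambda/(1-\lambda)$ and $J_\beta$ falls by $q(z)$ times that amount. After restoring the mean at marginal value $\theta$, the net effect is therefore
\[
 s(B-w)\,\frac{\lambda}{1-\lambda}\,\bigl[\theta-q(z)\bigr]+o(\lambda),
\]
not a positive term proportional to $(1-\beta)q(z)$.

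\Cref{lem:strict-attachment} gives only $q(z)\le\theta$. When $\theta=q(z)$, the function $J_\beta-\theta\mu$ is exactly affine in $\rho$ at fixed $z$ and stationary at $\rho=s$. The pre-restoration average is then exactly unchanged, and the restoration adds only an unsigned $o(\lambda)$ remainder. So nothing in a construction that holds $z'=z$ produces a strict gain. This is the missing idea. The paper handles this case by moving the atom's virtual value together with its share, along $(s+t,\,z-ht)$. Stationarity in the virtual value gives the cross-partial $-s\ell q'(z)<0$. This makes the second derivative of $J_\beta-\theta\mu$ along that path strictly positive for small $h>0$.

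Two further steps fail as written. First, profit is unchanged before mean restoration, so its sign is decided entirely by the profit effect of the $\eta$-step. That effect is of the same order $\lambda$ as everything else, and your argument does not sign it. It can be signed for the specific direction built in \Cref{lem:binary-two-multipliers}: that direction shifts virtual value from lower to higher separating ranks with zero integral, which gives $\int q(\varphi)\eta>0$. But you would have to show this. Your fallback of partially regularizing near the top is unavailable, because the upper component is regular and its terminal face has zero capacity.

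The paper uses no $\eta$ in the split at all. It mixes $G_{s+\epsilon,z}$ and $G_{s-\epsilon,z}$, whose means are $m\pm D\epsilon-\ell\epsilon^2$ with $D=B-M-2s\ell<0$. The weights are chosen to balance the aggregate mean, which lowers aggregate atom mass by $\ell\epsilon^2/|D|$. Profit is affine in the share with slope $\pi(z)-P_A<0$, since $z<\bar z$, so profit rises.

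Second, your group $H$ has conditional mean near $M$, which exceeds $m$ by $s(M-w)$ regardless of $\lambda$. So your construction does not deliver the claim that both conditional means can be taken arbitrarily close to $m$. The paper's symmetric local split does.
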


\begin{proof}
Write the specified optimum as
$G^*=s\delta_v+(1-s)G_A$, where $s\in(0,1)$ is the
lower atom's probability, $a<v<B:=\min\supp G_A$,
and $G_A$ is the normalized regular upper distribution.
Let $z>0$ be the atom's virtual value, $\bar z$ the
bottom virtual value of $G_A$, and $\theta$ its mean
multiplier. Write $M,C_A,P_A$ for the mean, consumer
surplus, and profit of $G_A$.
By \Cref{lem:strict-attachment},
$z<\bar z$ and $q(z)\le\theta\le q(\bar z)$.
In particular, $P_A>\pi(z)$.

\paragraph{Varying the lower atom.}
Keep $G_A$ fixed and, for $(r,\zeta)$ near $(s,z)$, define
\[
 v(r,\zeta)=r\zeta+(1-r)B,\qquad
 G_{r,\zeta}=r\delta_{v(r,\zeta)}+(1-r)G_A.
\]
The formula for $v(r,\zeta)$ makes the revenue chord
from $(0,v(r,\zeta))$ to $(r,(1-r)B)$ have slope
$-\zeta$. The strict support and virtual-value gaps
persist nearby, so this chord joins the rescaled upper
curve concavely and gives the lower part of the new
revenue envelope.

The seller therefore supplies quality $q(\zeta)$ to
the lower atom and leaves all qualities within $G_A$
unchanged. Each buyer in the upper component receives
her internal rent plus $(B-v(r,\zeta))q(\zeta)$.
Consequently, the mean and payoffs are
\[
\begin{aligned}
 \mu(r,\zeta)
   &=(1-r)M+rB-r^2(B-\zeta),\\
 C(r,\zeta)
   &=(1-r)C_A+r(1-r)(B-\zeta)q(\zeta),\\
 P(r,\zeta)
   &=(1-r)P_A+r\pi(\zeta).
\end{aligned}
\]
Write $J=C+\beta P$ and $L=J-\theta\mu$, holding
$\theta$ fixed. At $(s,z)$, these expressions reproduce
the original mean and payoffs.

\paragraph{Stationarity at the original distribution.}
The nearby laws need not have mean $m$. To apply
single-market optimality, temporarily offset their mean
change within the upper component.
By \Cref{lem:binary-two-multipliers}, an interior
perturbation can change its mean by $\delta$ while
preserving $B$ and the endpoint virtual values, with
conditional welfare change $\theta\delta+o(|\delta|)$.
Choosing $\delta=[m-\mu(r,\zeta)]/(1-r)$ restores the
full mean.

The corrected objective is therefore
$J(r,\zeta)+\theta[m-\mu(r,\zeta)]
+o(|r-s|+|\zeta-z|)$.
All support and attachment inequalities remain strict,
so both signs of small parameter changes are feasible.
Optimality of $G^*$ gives
$L_r(s,z)=L_\zeta(s,z)=0$.

\paragraph{The case $\theta>q(z)$.}
Put $\ell=B-z>0$ and
$D=\mu_r(s,z)=B-M-2s\ell<0$.
Keep $\zeta=z$ and use the two laws
$G_{s+\epsilon,z}$ and $G_{s-\epsilon,z}$.
Their means are $m\pm D\epsilon-\ell\epsilon^2$,
which lie on opposite sides of $m$ for small
$\epsilon>0$. Assign them probabilities
\[
 \lambda_+=\frac12+\frac{\ell\epsilon}{2D},
 \qquad \lambda_-=1-\lambda_+.
\]
These probabilities average their means to $m$.

At fixed $z$, $L$ is quadratic in $r$, with
$L_{rr}=2\ell[\theta-q(z)]$, while $P$ is affine.
Using $L_r(s,z)=0$, the aggregate gains are
\begin{equation}
 \Delta J=\ell[\theta-q(z)]\epsilon^2>0,
 \qquad
 \Delta P=\frac{\ell\epsilon^2}{D}
                  [\pi(z)-P_A]>0.
 \label{eq:sd-strict-weighted-gain}
\end{equation}
The profit gain is positive because both $D$ and
$\pi(z)-P_A$ are negative. The split reduces the
aggregate probability of the lower atom, which earns
less profit per buyer than the upper component.

\paragraph{The case $\theta=q(z)$.}
Changing the atom's probability alone now gives no
weighted-welfare gain. We also reduce its virtual
value as its probability increases.
Stationarity in $\zeta$ gives
$(1-s)\ell q'(z)=(1-\beta)q(z)$, and hence
$L_{rr}(s,z)=0$ and
$L_{r\zeta}(s,z)=-s\ell q'(z)<0$.

For a small fixed $h>0$, consider
$(r,\zeta)=(s+t,z-ht)$ and denote the resulting
functions by $\mu_h,J_h,L_h,P_h$. At $t=0$,
\[
 \mu_h'(0)=D-hs^2<0,\qquad L_h'(0)=0,\qquad
 L_h''(0)=2hs\ell q'(z)+h^2L_{\zeta\zeta}(s,z).
\]
Thus $L_h''(0)>0$ for all sufficiently small $h>0$.

The means at $t=\epsilon$ and $t=-\epsilon$ again
lie on opposite sides of $m$. The probability on the
first law that preserves the aggregate mean is
\[
 \lambda_+
 =\frac{m-\mu_h(-\epsilon)}
        {\mu_h(\epsilon)-\mu_h(-\epsilon)}
 =\frac12-\frac{\mu_h''(0)}{4\mu_h'(0)}
                  \epsilon+o(\epsilon).
\]
Since $L_h'(0)=0$, Taylor expansion gives
\[
 \Delta J
=\frac12L_h''(0)\epsilon^2+o(\epsilon^2)\qquad\text{and}\qquad
 \Delta P
   =\frac12\left[
       P_h''(0)
       -\frac{\mu_h''(0)}{\mu_h'(0)}P_h'(0)
     \right]\epsilon^2+o(\epsilon^2).
\]
The coefficient in the profit gain is continuous in
$h$. At $h=0$, it equals
$\ell[\pi(z)-P_A]/D>0$.
Choose $h>0$ small enough that both coefficients are
positive, and then choose $\epsilon>0$ sufficiently
small. Both weighted welfare and profit strictly rise.

In either case, the two posterior distributions remain
supported on $[a,1]$, and their conditional means
average to $m$. Equation~\eqref{eq:sd-assignment-kernel}
therefore implements them through state-dependent
assignment and private learning.
Their lowest virtual values remain positive, so both
groups have full participation. Their means converge
to $m$ as $\epsilon\downarrow0$.
At $\beta=0$, the weighted-welfare gain is a
consumer-surplus gain, giving the claimed Pareto
improvement.
\end{proof}

\begin{proof}[Proof of \Cref{prop:state-dependent-gain}]
Apply \Cref{lem:sd-interior-atom-split} at $\beta=0$.
\end{proof}

\subsubsection{A common multiplier across markets}
\label{app:sd-joint-price}

Throughout Appendices~\ref{app:sd-joint-price}--\ref{app:sd-joint-proof},
fix $0\le\beta<1$ and maintain \Cref{ass:cost,ass:curvature}
at that weight. We first treat $a>0$ and return to $a=0$
below. The argument establishes the binary result directly.
Write $\mu(G)=\E_G[Y]$. For $\theta\in\R$, define
\begin{equation}
 L_\theta(G):=J_\beta(G)-\theta[\mu(G)-a],\qquad
 \Psi_a(\theta):=\max_{G\in\Delta([a,1])}L_\theta(G).
 \label{eq:sd-priced-problem}
\end{equation}
The maximum is attained by weak compactness and payoff continuity. At a fixed mean, maximizing $L_\theta$ is equivalent to maximizing $J_\beta$. Every maximizer therefore solves the private-learning problem at its own mean, so \Cref{thm:binary-structure,prop:separating-conditions} apply when that mean lies in $(a,1)$.

Public assignment allows us to average the pairs $(\mu(G),J_\beta(G))$ and \eqref{eq:sd-assignment-kernel} implements any finite mixture at its aggregate mean. These pairs form a compact set with a compact convex hull. At each mean, the joint problem therefore selects the highest point in this hull. Its upper boundary is $V^{\mathrm{SD}}_{a,\beta}$, and a feasible policy attains the value at every mean. At an interior mean $m$, a
supporting slope $\theta$ gives
\begin{equation}
 V^{\mathrm{SD}}_{a,\beta}(m)
 =\theta(m-a)+\Psi_a(\theta).
 \label{eq:sd-supporting-price}
\end{equation}

This multiplier determines which markets can appear
together in an optimum. Since group means average to
$m$, aggregate welfare equals
$\theta(m-a)+\sum_k\lambda_kL_\theta(G_k)$.
Each term $L_\theta(G_k)$ is bounded above by
$\Psi_a(\theta)$. Thus, \eqref{eq:sd-supporting-price} can hold with equality only if every group used by the segmentation maximizes $L_\theta$. Conversely, any finite mixture of these distributions attains the joint optimum at its aggregate mean.

Choose a maximizer with the smallest mean and one with the largest. Compactness ensures that both exist. Any optimal segmentation at $m$ averages the means of $L_\theta$-maximizers, so $m$ liLes between these two extremes. Mixing the two chosen distributions appropriately therefore attains the optimum at $m$. At $m=a$ or $m=1$, it suffices to use $\delta_a$ or $\delta_1$ alone. Existence and the two-group bound thus require only payoff continuity and binary feasibility, so they hold for every finite $\beta\ge0$.

The lower-quality choice will also use the pointwise
maximization in \eqref{eq:sd-rent-function}.
For $0<y\le1$, write $f_y=F_{y,\beta}$.
This function is continuous on $[0,y]$ and strictly
concave, since, for $0<z<y$,
\[
 f_y'(z)
   =(y-z)q'(z)-(1-\beta)q(z)\qquad\text{and}\qquad
 f_y''(z)
   =(y-z)q''(z)-(2-\beta)q'(z)<0.
\]
Its maximizer cannot be zero because $f_y(0)=0$ and
$f_y(z)>0$ for $0<z<y$. Nor can it equal $y$, since
$f_y'(y)=-(1-\beta)q(y)<0$.
Thus $z_{y,\beta}^{\dagger}$ is unique and lies in
$(0,y)$; no derivative at zero is needed.

Implicit differentiation of the first-order condition
and the envelope theorem give
\begin{equation}
\begin{aligned}
 \frac{\dd z_{y,\beta}^{\dagger}}{\dd y}
 &=
 \frac{q'(z_{y,\beta}^{\dagger})}
 {(2-\beta)q'(z_{y,\beta}^{\dagger})
  -(y-z_{y,\beta}^{\dagger})q''(z_{y,\beta}^{\dagger})}
 >0,\\
 \mathcal R_\beta'(y)
 &=q(z_{y,\beta}^{\dagger}).
\end{aligned}
 \label{eq:sd-rent-derivatives}
\end{equation}
Since both $q$ and $z_{y,\beta}^{\dagger}$ are strictly
increasing, $\mathcal R_\beta$ is strictly convex on
$(0,1]$. Moreover, the interior maximizer gives
$\mathcal R_\beta(a)>f_a(a)=\beta\pi(a)$ for $a>0$.
We use this strict inequality below when considering
the degenerate market at $a$.

\subsubsection{Attaching and serving the lower group}
\label{app:sd-joint-lower}

A \emph{regular upper distribution} has a nondegenerate
separating interval and an efficiently served atom at one,
with no lower atom. We first describe the seller's response
when a trading atom at $a$ is added to such a distribution.

Let $F$ have bottom posterior $B>a$, bottom virtual value
$\bar z>0$, mean $M_F$, and payoffs $C_F,P_F$.
For $0<z<\min\{a,\bar z\}$, define
\[
 r=\frac{a-z}{B-z}\in(0,1),\qquad
 G=(1-r)\delta_a+rF.
\]
The identity $a=(1-r)z+rB$ makes the chord from $(0,a)$
to $(1-r,rB)$ have slope $-z$. This chord lies above
the atom's raw revenue and joins the rescaled revenue
curve of $F$ concavely because $z<\bar z$.
Its endpoints are contacts, so any concave majorant
must lie above it. The chord and the rescaled upper
curve therefore form the concavified revenue curve of $G$.

Buyers at $a$ receive quality $q(z)$ and zero rent.
Buyers in the upper component retain their original
qualities, but each receives an additional rent
$(B-a)q(z)$. Hence
\begin{align}
 \mu(G)&=a+r(M_F-a),\qquad
 \CS(G)=rC_F+r(B-a)q(z),\notag\\
 \Pi(G)&=rP_F+(1-r)\pi(z),\qquad
 L_\theta(G)=rL_\theta(F)+(1-r)f_a(z).
 \label{eq:sd-lower-splice}
\end{align}
The last identity uses $r(B-a)=(1-r)(a-z)$.
It expresses $L_\theta(G)$ as a weighted average of
$L_\theta(F)$ and the pointwise contribution $f_a(z)$.

\begin{lemma}[Participation and the lower atom]
\label{lem:sd-priced-structure}
Let $a>0$ and $\theta\in\R$. Every maximizer of
\eqref{eq:sd-priced-problem} other than $\delta_a$
consists of a regular upper distribution and, possibly,
a lower trading atom at $a$. All maximizers have full
participation. The upper component's mean multiplier
equals $\theta$.
\end{lemma}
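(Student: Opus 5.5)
Let $G\neq\delta_a$ maximize $L_\theta$ over $\Delta([a,1])$, and write $\mu(G)$ for its mean. The plan is to reduce to the fixed-mean problem and then use the lower-splice identity \eqref{eq:sd-lower-splice} to exclude anything beyond a trading atom at $a$.

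\textbf{Step 1: reduce to fixed mean.} At its own mean, $G$ maximizes $J_\beta$. If $\mu(G)=1$ then $G=\delta_1$. This is ruled out because the unit-profile perturbation of \Cref{lem:neutral-price-basics} lowers virtual values on a short initial block. That perturbation uses only \Cref{ass:cost} and $\beta<1$, and it strictly improves $L_\theta$ for every finite $\theta$ while keeping support in $[a,1]$. Hence $\mu(G)\in(a,1)$, and \Cref{thm:binary-structure} applies. So $G$ has one of the three lower parts, sits above a regular upper component, and the separating conditions of \Cref{prop:separating-conditions} hold for that component with some multiplier $\theta_U$.

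\textbf{Step 2: the upper multiplier equals $\theta$.} In $L_\theta$ the overall mean is free. Take the compensating direction $\eta$ of \Cref{lem:binary-two-multipliers}; it raises the upper mean by one while fixing $B$ and the endpoint virtual values, and it is feasible with both signs. Applying it with either sign changes $L_\theta$ by $s(\theta_U-\theta)\epsilon+o(\epsilon)$, where $s$ is the upper component's probability. Optimality therefore forces $\theta_U=\theta$.

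\textbf{Step 3: the lower part is a trading atom at $a$ or empty.} Two configurations must be excluded.
\begin{enumerate}
\item[(a)] \emph{Trading atom at some $v>a$.} Lowering $v$ toward $a$ at a fixed chord slope $z$ keeps the contact with the upper curve. It changes the mean and is feasible in both directions because $v>a$ strictly. Using the stationarity in $(v,z)$ from \Cref{lem:strict-attachment} and the reduced objective \eqref{eq:binary-coexistence-reduced} with $\lambda=\theta$, the Hessian argument of \Cref{thm:notrade-trading-exclusive} should give $\mathcal A_{vv}=2(\theta-q(z))/(B-z)$. Since $\theta\ge q(z)$, this is nonnegative. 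If it is positive, the second-order condition fails. If it is zero, the mixed-derivative term makes the Hessian indefinite. Either way the atom cannot sit strictly above $a$.
\item[(b)] \emph{Exclusion.} By \Cref{lem:binary-exclusion-floor} the excluded mass sits at $a$ with virtual value $\nu\le0$. Compare with the splice \eqref{eq:sd-lower-splice}, i.e.\ $(1-r)\delta_a+rF$, where $F$ is the upper component. With $z=0$ this splice gives $rL_\theta(F)$, which is the current value. For $z>0$ small, $f_a(z)>0$ while the share $r$ moves only continuously. Because $\mathcal R_\beta(a)>0$, choosing $z=z^\dagger_{a,\beta}$ and re-optimizing $r$ through the chord should strictly improve $L_\theta$. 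The mean is free here, so no compensation is needed.
\end{enumerate}

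\textbf{Step 4: full participation.} Once exclusion is ruled out, positive virtual values everywhere give full participation. The upper component has its multiplier equal to $\theta$ by Step 2.

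\textbf{Main obstacle.} Step 3(b) is the delicate one. The identity \eqref{eq:sd-lower-splice} expresses $L_\theta(G)$ as an average of $L_\theta(F)$ and $f_a(z)$ with weights depending on $z$. When $L_\theta(F)>f_a(z)$, raising $z$ lowers $r$, so the improvement is not automatic and needs care. My first attempt is to vary $z$ only slightly from $0$ and show the derivative is positive. Since $q(z)\ge kz$, the gain from $f_a(z)\approx a\,q(z)$ has order at least $z$. The loss from the weight shift is $L_\theta(F)\,\dd r$, also of order $z$, so the comparison reduces to a first-order inequality. That inequality would follow from the exclusion conditions of \Cref{lem:strict-optimal-exclusion} adapted to free mean, where $\xi=0$ and $L_\theta(F)$ is tied to $B$.
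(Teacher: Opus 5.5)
Your Steps 1, 2 and 4 are sound. Step 3(a) is a legitimate alternative to the paper's route. The paper rules out an atom above $a$ by observing that the strictly improving mean-preserving split of \Cref{lem:sd-interior-atom-split} would push average $L_\theta$ above $\Psi_a(\theta)$. You instead use that the mean is free in $L_\theta$. Then $(v,z)$ can be varied in both directions with $F$ fixed, so $L_\theta=\mathcal A(v,z)-\theta(v-a)$ must have a local maximum. Since $\mathcal A_{vv}=2(\theta-q(z))/(B-z)\ge0$, and $\mathcal A_{vz}=(B-v)q'(z)/(B-z)>0$ at $\mathcal A_z=0$ and $\theta=q(z)$, the Hessian argument goes through. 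The stationarity comes from this two-sided feasibility, not from \Cref{lem:strict-attachment}, which only supplies $q(z)\le\theta$.

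The gap is Step 3(b): the $z=0$ splice is not the current distribution. The maximizer is $G=(1-s)\delta_a+sF$ with $sB\ge a$ (strictly, by \Cref{lem:strict-optimal-exclusion}). The splice at $z=0$ has upper share $a/B$, and for $z>0$ the share is $r=(a-z)/(B-z)<a/B\le s$. So $L_\theta(G)=sL_\theta(F)$, not $rL_\theta(F)$. Comparing splices with one another therefore says nothing about $G$ unless you control the sign of $L_\theta(F)$. Your ``main obstacle'' paragraph leaves exactly this open. The first-order expansion you sketch near $z=0$ carries $L_\theta(F)$ with the negative coefficient $r'(0)=-(B-a)/B^2$, so its sign is undetermined.

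The missing step is to use the free trading share. Raising $s$ at fixed $F$ keeps the atom at $a$ excluded and changes $L_\theta$ by $(\Delta s)L_\theta(F)$, so optimality gives $L_\theta(F)\le0$. With strict exclusion both signs are feasible and $L_\theta(F)=0$; this is the identity $J=\theta(M-a)$ of \Cref{lem:strict-optimal-exclusion} once your Step 2 identifies the multiplier with the price. Now take any $0<z<a\le\bar z$. The splice is feasible, and $rL_\theta(F)+(1-r)f_a(z)\ge sL_\theta(F)+(1-r)f_a(z)>L_\theta(G)$, because $r<s$ and $f_a(z)>0$. No appeal to $z^\dagger_{a,\beta}$ or $\mathcal R_\beta(a)$ is needed.

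The paper reaches the same contradiction by writing $G=\lambda[(1-r)\delta_a+rF]+(1-\lambda)F$ with $\lambda=(1-s)/(1-r)$. This preserves the aggregate posterior, raises consumer surplus by $(1-s)(a-z)q(z)$ and profit by $(1-s)\pi(z)$, and so pushes average $L_\theta$ above $\Psi_a(\theta)$. Finally, the statement's participation claim also covers $\delta_a$, which trades with everyone at quality $q(a)>0$ since $a>0$.
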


\begin{proof}
Every maximizer with mean in $(a,1)$ satisfies the
binary characterization. Moreover, a mean-preserving
split that strictly raises weighted welfare cannot
start from a maximizer of $L_\theta$: its average
$L_\theta$ would exceed $\Psi_a(\theta)$, although
each component is bounded above by $\Psi_a(\theta)$.
Thus \Cref{lem:sd-interior-atom-split} rules out a
lower trading atom strictly above $a$.

We next rule out exclusion. Suppose a maximizer has
the form $G=(1-s)\delta_a+sF$.
By \Cref{lem:strict-optimal-exclusion},
$sB>a$ and $\bar z\ge a>0$.
Choose $0<z<a$ and set
\[
 r=\frac{a-z}{B-z}<\frac aB<s,\qquad
 G_L=(1-r)\delta_a+rF,\qquad G_H=F,\qquad
 \lambda=\frac{1-s}{1-r}\in(0,1).
\]
Then $\lambda G_L+(1-\lambda)G_H=G$.
The split therefore preserves the aggregate posterior
distribution while changing which buyers share a menu.
The previously excluded buyers now trade in $G_L$,
alongside some buyers from $F$; the remaining buyers
face the menu for $F$ alone.

The original payoffs are $sC_F$ and $sP_F$.
Applying \eqref{eq:sd-lower-splice} to the new groups
gives
\begin{equation}
 \Delta\CS=(1-s)(a-z)q(z)>0,\qquad
 \Delta\Pi=(1-s)\pi(z)>0.
 \label{eq:sd-exclusion-improvement}
\end{equation}
Since the aggregate mean is unchanged, this also raises
average $L_\theta$, contradicting maximality.

It remains to consider the endpoint means.
At mean one, the only distribution is $\delta_1$.
For $0<d<1-a$ and small $\epsilon>0$, replace it by
a distribution assigning probability $\epsilon$ to
$1-\epsilon d$ and the remaining probability to one.
Its mean is $1-\epsilon^2d$ and its lower virtual
value is $1-d$. The change in $L_\theta$ is
\[
 \epsilon\{d q(1-d)+\beta[\pi(1-d)-\pi(1)]\}
 +\epsilon^2d[\theta-q(1-d)].
\]
The expression in braces, divided by $d$, converges
to $(1-\beta)q(1)>0$ as $d\downarrow0$.
Choose $d$ to make it positive, then choose
$\epsilon$ sufficiently small. Thus $\delta_1$
cannot maximize $L_\theta$ for any finite $\theta$.
At mean $a$, the only distribution is $\delta_a$,
which serves everyone at quality $q(a)>0$.
This proves the structural and participation claims.

Finally, let $\vartheta$ be the upper component's
mean multiplier from \Cref{prop:separating-conditions}.
The direction in \eqref{eq:binary-remote-mean}
changes its conditional mean by $t$, preserves its
bottom posterior and endpoint virtual values, and
changes its conditional welfare by
$\vartheta t+o(|t|)$.
Hold the component probabilities and any lower atom
fixed. The derivative of $L_\theta$ is then the upper
component's probability times $\vartheta-\theta$.
Small increases and decreases in the upper component's
mean are therefore both feasible. Optimality requires
$\vartheta=\theta$. 
\end{proof}

\subsubsection{Uniqueness of the lower attachment at a common price}
\label{app:sd-joint-uniqueness}

Fix $\theta$. By
\Cref{lem:sd-priced-structure,prop:separating-conditions},
the upper component of every nondegenerate maximizer
of $L_\theta$ satisfies the same system.
Using $\tau=-\log(1-u)$, with dots denoting
derivatives with respect to $\tau$, this system is
\begin{equation}
 \begin{gathered}
 A+(Q-\varphi)q'(\varphi)-(1-\beta)q(\varphi)=0,\qquad
 \dot A=q(\varphi)-\theta,\qquad \dot Q=Q-\varphi,\\
 (A,Q,\varphi)=((1-\beta)\bar q,1,1)
 \quad\text{at the start of the top atom}.
 \end{gathered}
 \label{eq:sd-common-trajectory}
\end{equation}
Along the separating region, $Q>\varphi>0$ and
$q(\varphi)\ge\theta$.

Backward uniqueness places these continuations on
one trajectory. Since $Q$ increases strictly, its value
$B$ identifies where a continuation begins.
Let $F_B$ denote the corresponding upper tail,
including the top atom and normalized to have mass one.
Write $\zeta_B,A_B$ for the bottom values of $\varphi,A$,
and $M_B,J_B$ for its mean and weighted welfare.
Define $S_\theta(B):=J_B-\theta(M_B-a)$.
Raising $B$ therefore truncates the same distribution
from below and renormalizes the remaining tail.

\paragraph{The value of a continuation.}
Every feasible continuation satisfying
\eqref{eq:sd-common-trajectory}, whether or not it is
optimal, satisfies
\begin{equation}
 J_B-\theta(M_B-B)
 =f_B(\zeta_B)-(B-\zeta_B)A_B
 =(B-\zeta_B)^2q'(\zeta_B)
   +\beta(B-\zeta_B)q(\zeta_B)+\beta\pi(\zeta_B).
 \label{eq:sd-tail-value}
\end{equation}
Bottom expressions are understood as limits from
within the separating region when necessary.

To prove the identity, let $M,J$ denote the mean
and weighted welfare of the normalized tail beginning
at the current rank. Differentiating these conditional
averages gives $\dot M=M-Q$ and
$\dot J=J-f_Q(\varphi)$.
Put $W=Q-\varphi$ and define
$K=J-\theta(M-Q)$ and $T=f_Q(\varphi)-WA$.
The static equation gives $f_Q'(\varphi)=-A$.
Together with the two differential equations, this yields
\[
 \dot K=K-f_Q(\varphi)+\theta W,\qquad
 \dot T=W(\theta-A)=T-f_Q(\varphi)+\theta W.
\]
Thus $K-T$ satisfies $\frac{\dd}{\dd\tau}(K-T)=K-T$.
At the start of the top atom, the remaining distribution
is $\delta_1$, so $K=T=\beta\pi(1)$.
Hence $K=T$ throughout the continuation.
Substituting the static equation for $A$ gives the
second equality in \eqref{eq:sd-tail-value}.

Along the separating trajectory,
$\dot S_\theta=\dot J-\theta\dot M
=K-f_Q(\varphi)=-WA$.
Dividing this identity and the equation for $\dot A$
by $\dot Q=W>0$ gives
\begin{equation}
 S_\theta'(B)=-A_B,\qquad
 \frac{\dd A_B}{\dd B}
 =\frac{q(\zeta_B)-\theta}{B-\zeta_B}.
 \label{eq:sd-stopping-derivatives}
\end{equation}

\paragraph{Regular candidates.}
Consider a maximizer with no lower atom.
The directions in \Cref{lem:binary-two-multipliers}
vary its bottom posterior independently of its mean.
The first-order change in $L_\theta$ is
$-\xi\,\dd B=-A_B\,\dd B$.
Small increases and decreases in $B$ are feasible
when $B>a$, while small increases are feasible
when $B=a$. Optimality therefore requires
\begin{equation}
 A_B=0,\quad B>a\qquad\text{and}\qquad
 A_a\ge0,\quad B=a.
 \label{eq:sd-regular-bottom}
\end{equation}
We call a feasible regular continuation a
\emph{regular candidate} at $\theta$ if it satisfies
\eqref{eq:sd-common-trajectory},
$q(\varphi)\ge\theta$, and
\eqref{eq:sd-regular-bottom}.

Every regular candidate has a positive bottom
virtual value. Reconstruction first gives
$B=\int_0^1\varphi(u)\,\dd u>0$.
Suppose instead that $\zeta_B=0$.
Since $f_B(0)=0$ and $f_B$ is strictly concave and
positive on $(0,B)$, there is an $\eta>0$ such that
$f_B'(z)\ge\eta$ for all sufficiently small $z>0$.
Because $Q\ge B$ and $q'>0$, the static equation then gives
$A=-f_Q'(\varphi)\le-f_B'(\varphi)\le-\eta$
near the bottom. Taking the bottom limit contradicts
$A_B\ge0$ in \eqref{eq:sd-regular-bottom}.
This argument requires no derivative of $q$ at zero.

\paragraph{Candidates with a lower atom.}
For $0<z<a$, define
\[
 g_a(z):=(1-\beta)q(z)-(a-z)q'(z)=-f_a'(z),
 \qquad
 g_a'(z)=(2-\beta)q'(z)-(a-z)q''(z)>0.
\]
By \eqref{eq:sd-lower-splice}, the distribution
$(1-r)\delta_a+rF_B$, with $r=(a-z)/(B-z)$,
has value
\[\ell(B,z)=rS_\theta(B)+(1-r)f_a(z).\]

At a maximizer with a lower atom,
$B>a$ and $0<z<\min\{a,\zeta_B\}$.
These strict inequalities allow small increases and
decreases in $z$. To vary $B$, shorten or extend the
continuation at its bottom and renormalize.
The strict support and quality gaps persist under
small changes. Moreover,
\eqref{eq:continuation-phi} gives a strictly positive
derivative of virtual value at the bottom, even if
$q(\zeta_B)=\theta$, because
$(B-\zeta_B)q'(\zeta_B)>0$.
A short backward extension therefore remains increasing.
The bound $q(\varphi)\ge\theta$ is an optimality
condition and need not hold on the added portion.

Both $B$ and $z$ can consequently vary locally in
either direction, so $\ell_B=\ell_z=0$.
Using \eqref{eq:sd-stopping-derivatives},
\begin{align*}
 \ell_B
 =-\frac r{B-z}
   \{S_\theta(B)-f_a(z)+(B-z)A_B\},\qquad
 \ell_z
 =\frac{B-a}{(B-z)^2}
   \{-(S_\theta(B)-f_a(z))+(B-z)f_a'(z)\}.
\end{align*}
The first-order conditions are therefore equivalent to
\begin{equation}
 A_B=g_a(z),\qquad
 S_\theta(B)-f_a(z)+(B-z)A_B=0.
 \label{eq:sd-atom-bottom}
\end{equation}

An \emph{atom candidate} is a feasible distribution
$(1-r)\delta_a+rF_B$ with
$B>a$ and $0<z<\min\{a,\zeta_B\}$, satisfying
\eqref{eq:sd-atom-bottom}, whose upper continuation
satisfies \eqref{eq:sd-common-trajectory} and
$q(\varphi)\ge\theta$.
Every nondegenerate maximizer of $L_\theta$ is
therefore either a regular candidate or an atom candidate.

\begin{lemma}[Uniqueness within each candidate class]
\label{lem:sd-price-uniqueness}
Fix $\theta$. For $a>0$, there is at most one regular
candidate and at most one atom candidate. If a regular
candidate has bottom $a$, no atom candidate exists at
the same $\theta$. Moreover, if $L_\theta$ has a
nondegenerate maximizer, then $\delta_a$ is not a
maximizer. Regular-candidate uniqueness also holds
when $a=0$.
\end{lemma}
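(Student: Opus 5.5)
The plan is to reduce everything to the common backward trajectory in \eqref{eq:sd-common-trajectory}. For fixed $\theta$, backward uniqueness (\Cref{prop:separating-conditions}) places every candidate continuation on a single curve. That curve is parametrized by its bottom posterior $B$, with $A_B$, $\zeta_B$ and $S_\theta(B)$ as functions of $B$. Each candidate class is therefore pinned down by scalar equations in $B$, or in $(B,z)$, and the task is monotonicity of these equations along the trajectory.

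\emph{Regular candidates.} By \eqref{eq:sd-stopping-derivatives}, $dA_B/dB=(q(\zeta_B)-\theta)/(B-\zeta_B)\ge0$ on the admissible region. By \eqref{eq:continuation-phi}, $\zeta_B$ is strictly increasing in $B$, so $q(\zeta_B)-\theta$ is strictly positive except possibly at the single lowest admissible point. Hence $A_B$ is strictly increasing in $B$. Condition \eqref{eq:sd-regular-bottom} then has at most one solution.
\begin{itemize}
\item If $A_B=0$ with $B>a$, then $B$ is the unique zero of $A$.
\item If $A_a\ge0$ with $B=a$, then $A_{B'}>0$ for all $B'>a$, so no interior zero exists.
\item Two distinct bottoms are impossible, since one of them would violate the sign requirement.
\end{itemize}
This argument never uses $a>0$, which gives regular-candidate uniqueness at $a=0$.

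\emph{Atom candidates.} From \eqref{eq:sd-atom-bottom}, $z$ is determined by $B$ through $g_a(z)=A_B$. Since $g_a$ is strictly increasing, $z=g_a^{-1}(A_B)=:z(B)$, which is increasing in $B$. Substituting gives $\Lambda(B):=S_\theta(B)-f_a(z(B))+(B-z(B))A_B=0$. I would differentiate $\Lambda$ using $S_\theta'=-A_B$, $f_a'(z)=-g_a(z)=-A_B$ and $dA_B/dB$:
\[
\Lambda'(B)=(B-z)\,\frac{dA_B}{dB}=\frac{(B-z)(q(\zeta_B)-\theta)}{B-\zeta_B}.
\]
This is strictly positive on the admissible range, so $\Lambda$ has at most one zero, and hence there is at most one atom candidate.

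\emph{Regular bottom at $a$ excludes an atom candidate.} Suppose a regular candidate starts at $a$ with $A_a\ge0$. Any atom candidate starts at some $B>a$ on the same trajectory, so it truncates that regular law. Then $A_B>A_a\ge0$, and $z=g_a^{-1}(A_B)$. Since $g_a(0^+)\le0$, this forces $z$ to be positive. I would then compare $\Lambda(B)$ with its limit at $B\downarrow a$. In that limit $r\to1$, the lower atom has vanishing mass, and $\Lambda(a)=S_\theta(a)-f_a(z)+(a-z)A_a$, where $S_\theta(a)=f_a(\zeta_a)-(a-\zeta_a)A_a$ by \eqref{eq:sd-tail-value}. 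Strict concavity of $f_a$, with $f_a'(\zeta_a)=-A_a$, shows that this expression is $\ge0$, with strictness unless $z=\zeta_a$. Since $\Lambda$ is strictly increasing, $\Lambda(B)>0$, which contradicts \eqref{eq:sd-atom-bottom}. I expect this endpoint comparison to be the main obstacle. It needs care with $z(B)$ possibly exceeding $\zeta_a$, and with the admissibility constraint $z<\min\{a,\zeta_B\}$ along the way.

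\emph{$\delta_a$ is not a maximizer.} Let $G$ be a nondegenerate maximizer, with upper component $F_B$. I would use the splice \eqref{eq:sd-lower-splice}, $L_\theta((1-r)\delta_a+rF_B)=rS_\theta(B)+(1-r)f_a(z)$. Optimality over $z$ and $r$ would let me compare with $L_\theta(\delta_a)=\beta\pi(a)=f_a(a)<\mathcal R_\beta(a)$. Taking $r\downarrow0$ with $z=z^\dagger_{a,\beta}$ gives values approaching $\mathcal R_\beta(a)>\beta\pi(a)$, so $\Psi_a(\theta)>L_\theta(\delta_a)$. The only caveat is that this requires $z^\dagger_{a,\beta}<\zeta_B$ for attachment. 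If that fails, I would instead use $z$ slightly below $\zeta_B$ and invoke strict concavity of $f_a$ on $(0,a)$.
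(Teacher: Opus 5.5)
Your arguments for the two candidate classes, for regular uniqueness at $a=0$, and for excluding an atom candidate when a regular candidate has bottom $a$ follow the paper's proof. In the last of these, the endpoint comparison you flag is not actually an obstacle. The static Euler equation at the regular bottom reads $A_a=g_a(\zeta_a)$, so $z(a)=\zeta_a$ and $\Lambda(a)=0$ exactly by \eqref{eq:sd-tail-value}. Along the way $z(B)\in[\zeta_a,z_*]\subset(0,a)$, and feasibility of the intermediate splices is never needed. Your fixed-$z$ version also works, since $\partial_B[S_\theta(B)-f_a(z)+(B-z)A_B]=(B-z)\,\dd A_B/\dd B$.

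The gap is in the final claim, that $\delta_a$ is not a maximizer. In the splice $(1-r)\delta_a+rF_B$, revenue contact ties $r$ to $z$ through $a=(1-r)z+rB$. You therefore cannot let $r\downarrow0$ while holding $z=z^\dagger_{a,\beta}$: sending $r\downarrow0$ forces $z\uparrow a$, and the value tends to $f_a(a)=\beta\pi(a)$, which is exactly what you must beat. With $r$ pinned down, the splice only gives $\Psi_a(\theta)\ge rS_\theta(B)+(1-r)\mathcal R_\beta(a)$. This is useful only if you control $S_\theta(B)$.

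The missing idea is to use the splice only when $F_B$ is itself the maximizer, that is, a regular maximizer with $B>a$. Then $S_\theta(B)=\Psi_a(\theta)$, and since $r<1$ the inequality rearranges to $\Psi_a(\theta)\ge\mathcal R_\beta(a)>f_a(a)$. Your attachment caveat does not arise there: $A_B=0$ and the static equation give $\zeta_B=z^\dagger_{B,\beta}>z^\dagger_{a,\beta}$ by \eqref{eq:sd-rent-derivatives}. Your proposed fallback would fail anyway. When $\beta>0$, $f_a(0)=0<\beta\pi(a)$, so strict concavity does not give $f_a(z)>f_a(a)$ for small $z$.

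The splice cannot handle the other two kinds of nondegenerate maximizer. A regular maximizer with bottom $a$ forces $r=1$, leaving no room for an atom. For an atom maximizer, the upper component need not maximize $L_\theta$. In both cases you need the first-order conditions together with the strict tangent inequality $f_a(t)+(a-t)f_a'(t)>f_a(a)$ for $0<t<a$. By \eqref{eq:sd-tail-value} and $A_a=-f_a'(\zeta_a)$, $L_\theta(F_a)=f_a(\zeta_a)+(a-\zeta_a)f_a'(\zeta_a)$. By \eqref{eq:sd-atom-bottom} and $r(B-z)=a-z$, an atom maximizer has value $f_a(z)+(a-z)f_a'(z)$. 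Both exceed $f_a(a)$.
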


\begin{proof}
The backward-uniqueness argument in
\Cref{prop:separating-conditions} uses only the
differential system and its terminal condition.
It therefore places all candidates on the same
trajectory. For any candidate,
\eqref{eq:continuation-phi} and
$q(\varphi)\ge\theta$ imply that virtual values
increase strictly. On the continuation with the
lower bottom, we consequently have
$q(\varphi)>\theta$ above that bottom.
By \eqref{eq:sd-stopping-derivatives}, $A_B$ therefore
increases strictly between any two distinct
candidate bottoms.

\paragraph{Regular candidates.}
A regular candidate requires $A_B=0$ when $B>a$
and $A_a\ge0$ when $B=a$.
Strict increase of $A_B$ rules out both two zeros
and a nonnegative value at $a$ followed by a zero.
Thus two regular candidates cannot have different
bottoms. A common bottom determines the same normalized
distribution along the common trajectory, including
the mass of its top atom.

This also proves uniqueness when $a=0$: reconstruction
gives $B>0$ for every regular candidate, so each must
satisfy $A_B=0$.

\paragraph{Atom candidates.}
At a given bottom $B$, strict increase of $g_a$
makes $g_a(z)=A_B$ determine at most one $z$.
The share $r=(a-z)/(B-z)$ is then also determined,
so two atom candidates with the same bottom coincide.

Suppose instead that two atom candidates have bottoms
$B_1<B_2$ and lower virtual values $z_1,z_2$.
Strict increase of $A_B$ gives $z_1<z_2$.
For $B\in[B_1,B_2]$, define
$z(B)=g_a^{-1}(A_B)\in[z_1,z_2]\subset(0,a)$.
This function is continuously differentiable because
$g_a'>0$. Evaluate the second condition in
\eqref{eq:sd-atom-bottom} along these values:
$H(B)=S_\theta(B)-f_a(z(B))+(B-z(B))A_B$.
Using $S_\theta'(B)=-A_B$ and
$f_a'(z(B))=-A_B$, differentiation gives
\begin{equation}
 H'(B)=(B-z(B))\frac{\dd A_B}{\dd B}>0,
 \qquad B_1<B<B_2.
 \label{eq:sd-atom-residual-monotonicity}
\end{equation}
Here $B-z(B)>0$ because $z(B)<a<B$.
Both candidates require $H(B_1)=H(B_2)=0$,
contradicting strict increase.
The function $H$ is well defined throughout the
interval; feasibility of the intermediate mixtures
is not required.

\paragraph{A regular candidate with bottom $a$.}
Suppose such a candidate exists together with an
atom candidate whose bottom is $B_*>a$.
The static equation at the regular bottom gives
$g_a(\zeta_a)=A_a$, with $0<\zeta_a<a$.
The same inverse construction defines
$z(B)=g_a^{-1}(A_B)$ on $[a,B_*]$, starting at
$z(a)=\zeta_a$ and ending at the atom candidate's
lower virtual value.

By \eqref{eq:sd-tail-value},
$S_\theta(a)=f_a(\zeta_a)-(a-\zeta_a)A_a$,
so $H(a)=0$. The derivative calculation in
\eqref{eq:sd-atom-residual-monotonicity} then gives
$H(B_*)>0$, contradicting the atom candidate's
bottom condition.

\paragraph{The degenerate distribution $\delta_a$.}
Its value is $L_\theta(\delta_a)=f_a(a)=\beta\pi(a)$.
We show that every nondegenerate maximizer has a
strictly higher value. We use the strict tangent
inequality
$f_a(t)+(a-t)f_a'(t)>f_a(a)$ for $0<t<a$,
which follows from strict concavity of $f_a$.

For a regular maximizer with bottom $a$,
\eqref{eq:sd-tail-value} and
$A_a=-f_a'(\zeta_a)$ give
$L_\theta(F_a)
 =f_a(\zeta_a)+(a-\zeta_a)f_a'(\zeta_a)>f_a(a)$.
For an atom maximizer,
\eqref{eq:sd-atom-bottom} gives
$S_\theta(B)-f_a(z)=(B-z)f_a'(z)$.
Since $r=(a-z)/(B-z)$, we have
$\ell(B,z)=f_a(z)+(a-z)f_a'(z)>f_a(a)$.

Finally, consider a regular maximizer with $B>a$.
The condition $A_B=0$ and the static equation imply
$\zeta_B=z_{B,\beta}^{\dagger}
>z_{a,\beta}^{\dagger}$.
Set $z=z_{a,\beta}^{\dagger}$ and
$r=(a-z)/(B-z)$.
The inequalities $0<z<\min\{a,\zeta_B\}$ make
$(1-r)\delta_a+rF_B$ feasible.
Since $F_B$ maximizes $L_\theta$,
\eqref{eq:sd-lower-splice} implies
$\Psi_a(\theta)
 \ge r\Psi_a(\theta)+(1-r)\mathcal R_\beta(a)$.
Because $r<1$, it follows that
$\Psi_a(\theta)\ge\mathcal R_\beta(a)>f_a(a)$.
Thus $\delta_a$ cannot be optimal whenever a
nondegenerate maximizer exists.
\end{proof}

\subsubsection{Proof of \Cref{thm:joint-consumer-segmentation}}
\label{app:sd-joint-proof}

\begin{proof}
Existence and implementation with at most two groups
follow from the supporting-line argument leading to
\eqref{eq:sd-supporting-price}.
Fix $a>0$ and $m\in(a,1)$, and choose a supporting
multiplier $\theta$. Every positive-probability group
in an optimum maximizes $L_\theta$.

Since $m>a$, some group differs from $\delta_a$.
By \Cref{lem:sd-priced-structure}, this group is
nondegenerate, so \Cref{lem:sd-price-uniqueness}
rules out $\delta_a$ as a maximizer at the same
multiplier. Thus every optimal group has the
nondegenerate structure and full participation
described in \Cref{lem:sd-priced-structure}.
This proves part~(i).

\paragraph{The two markets share the same upper distribution.}
Suppose two distinct distributions maximize
$L_\theta$. By \Cref{lem:sd-price-uniqueness},
one is a regular distribution $G_A$ with bottom
$B>a$, and the other has a lower atom at $a$.
Write $\Psi=\Psi_a(\theta)$.

For $G_A$, the condition $A_B=0$ gives
$\zeta_B=z_{B,\beta}^{\dagger}
>z_{a,\beta}^{\dagger}$.
We can therefore add a trading atom at $a$ with
virtual value $z_{a,\beta}^{\dagger}$, using the
upper share
$r=(a-z_{a,\beta}^{\dagger})/
   (B-z_{a,\beta}^{\dagger})$.
By \eqref{eq:sd-lower-splice}, the resulting
distribution has value
$r\Psi+(1-r)\mathcal R_\beta(a)$.
Since this cannot exceed $\Psi$, we obtain
$\Psi\ge\mathcal R_\beta(a)$.

Now write the atom maximizer as
$(1-\widehat r)\delta_a+\widehat r F$, with lower
virtual value $z$ and $0<\widehat r<1$.
The same identity gives
$\Psi
 =\widehat r L_\theta(F)+(1-\widehat r)f_a(z)
 \le\widehat r\Psi+(1-\widehat r)\mathcal R_\beta(a)$.
Hence $\Psi\le\mathcal R_\beta(a)$.
Combining the two bounds forces equality in both
terms of this last comparison. Uniqueness of the
pointwise maximizer then gives
\begin{equation}
 \Psi_a(\theta)=\mathcal R_\beta(a),\qquad
 z=z_{a,\beta}^{\dagger},\qquad
 L_\theta(F)=\Psi_a(\theta).
 \label{eq:sd-coexistence-equality}
\end{equation}
Thus $F$ is itself a regular maximizer.
Regular-candidate uniqueness implies $F=G_A$,
and revenue contact gives $\widehat r=r$.
The two distributions are therefore precisely
$G_L$ and $G_H=G_A$ in \eqref{eq:sd-joint-markets}.

Let $M=\mu(G_A)$. Since $A_B=0$,
\eqref{eq:sd-tail-value} gives
$J_\beta(G_A)-\theta(M-B)=\mathcal R_\beta(B)$.
Subtracting
$J_\beta(G_A)-\theta(M-a)=\mathcal R_\beta(a)$
yields
\begin{equation}
 \theta
 =\frac{\mathcal R_\beta(B)-\mathcal R_\beta(a)}{B-a}>0,
 \qquad
 q(z_{a,\beta}^{\dagger})
 <\theta<q(z_{B,\beta}^{\dagger}).
 \label{eq:sd-secant-condition}
\end{equation}
The inequalities follow from strict convexity of
$\mathcal R_\beta$ and
$\mathcal R_\beta'(y)=q(z_{y,\beta}^{\dagger})$.
In particular, the basic product has strictly lower
quality than the bottom of the separating region.

By \eqref{eq:sd-lower-splice}, buyers with posterior
types in $G_A$ receive the same qualities in both
groups, but pay less in $G_L$ by
$(B-a)q(z_{a,\beta}^{\dagger})$.
Moreover, $L_\theta(G)\le\mathcal R_\beta(a)$ for
every $G\in\Delta([a,1])$.
The line $\theta(m-a)+\mathcal R_\beta(a)$ therefore
bounds both single-market and segmented welfare at
every mean, with equality at the means of $G_L$
and $G_A$.

\paragraph{The interval of strict gains.}
There is at most one multiplier at which two distinct
maximizers coexist. Suppose otherwise that
$\theta_1<\theta_2$ both have this property.
Choose a maximizer $G_2$ at $\theta_2$, whose mean
exceeds $a$. Then,
$\Psi_a(\theta_1)
 \ge L_{\theta_1}(G_2)
 =\Psi_a(\theta_2)
   +(\theta_2-\theta_1)[\mu(G_2)-a]
 >\Psi_a(\theta_2)$.
This contradicts \eqref{eq:sd-coexistence-equality},
which makes both values equal to $\mathcal R_\beta(a)$.

When this multiplier exists, denote it by
$\theta_{a,\beta}$. Its only maximizing distributions
are $G_L$ and $G_A$, with means
$a<m_-=a+r(M-a)<m_+=M<1$.
For every $m\in[m_-,m_+]$, assigning probability
$\lambda(m)=(M-m)/(M-m_-)$ to $G_L$ and the remainder
to $G_A$ attains the supporting line.
This proves \eqref{eq:sd-joint-value}.

The gain is strict at every $m\in(m_-,m_+)$.
Otherwise, a single-market optimizer at that mean
would also maximize $L_{\theta_{a,\beta}}$.
It would differ from both $G_L$ and $G_A$, whose
means are the endpoints, contradicting the
characterization of the maximizing distributions.
Every optimal segmentation must likewise use only
$G_L$ and $G_A$, with their total probabilities
determined by the mean. At either endpoint, every group must use the corresponding distribution.

Conversely, segmentation can give a strict gain only if its groups use different distributions. Otherwise, one market would give the same payoff. These distributions must maximize $L_\theta$ at a common supporting multiplier, so the multiplier and pair are those identified above. Both distributions must receive positive weight, placing the aggregate mean strictly between their means. Thus, there are no strict gains
outside $(m_-,m_+)$. If no multiplier has two
distinct maximizers, there are no strict gains.
At the degenerate prior means, equality follows
because every group has the same degenerate prior.
This proves parts~(ii)--(iii), including uniqueness
of the pair, multiplier, and interval.

\paragraph{The payoff-neutral case.}
Set $a=0$. The two-point perturbation in the proof
of \Cref{lem:sd-priced-structure} also applies here
and rules out $\delta_1$ for every finite $\theta$.
Every maximizer other than $\delta_0$ therefore has
an interior mean and is a single-market optimum
at that mean. By \Cref{lem:neutral-regularization},
it has the form
$G=(1-s)\delta_0+sF$, with $0<s\le1$ and $F$
regular and fully trading. Its mean and both
payoffs are $s$ times those of $F$, so
$\Psi_0(\theta)=L_\theta(G)=sL_\theta(F)
 \le s\Psi_0(\theta)$ and $\Psi_0(\theta)\ge0$.
If $s<1$, these inequalities imply
$\Psi_0(\theta)=L_\theta(F)=0$.
If $s=1$, then $F=G$.
Thus $F$ is a maximizer in either case.

The binary characterization makes $F$ a regular
upper distribution. Reconstruction gives its bottom
$B>0$. The interior variations used above identify
its mean multiplier as $\theta$ and give $A_B=0$.
It is therefore a regular candidate, and the
$a=0$ assertion in \Cref{lem:sd-price-uniqueness}
makes it unique.

At a supporting multiplier for an interior mean,
an optimal segmentation has at least one nonzero
group. Hence this common regular distribution $F$
exists, and every group can be written as
$G_k=(1-s_k)\delta_0+s_kF$, allowing $s_k=0$.
Put $\bar s=\sum_k\lambda_ks_k$.
The single-market distribution
$G=(1-\bar s)\delta_0+\bar sF$ has the original
mean and is feasible under the original binary prior.
Adding excluded zero types leaves the active
virtual values unchanged, so its payoffs satisfy
\[
 (\CS(G),\Pi(G))
 =\bar s(\CS(F),\Pi(F))
 =\sum_k\lambda_k(\CS(G_k),\Pi(G_k)).
\]
Thus, every joint optimum can be matched by a single
market, proving
$V^{\mathrm{SD}}_{0,\beta}(m)=V_{0,\beta}(m)$.
\end{proof}

\subsection{Proof of Corollary~\ref{cor:seg-neutral}}
\label{app:seg-neutral-proof}
\label{app:state-dependent-neutral}

\begin{proof}
Consider any state-dependent segmentation. Write
$m_k=\E_{G_k}[Y]$ for group $k$'s mean and $(C,P)$
for aggregate payoffs. For $0\le\beta<1$,
\Cref{prop:neutral-weighted-value} gives concavity of
$V_{a,\beta}$ in mean, including its continuous extension
to $[a,1]$. Since $\sum_k\lambda_km_k=m$,
\[
 C+\beta P
 =\sum_k\lambda_kJ_\beta(G_k)
 \le\sum_k\lambda_kV_{a,\beta}(m_k)
 \le V_{a,\beta}(m)
 =C_\beta+\beta P_\beta.
\]
At $\beta=1$, averaging the bound
\eqref{eq:seg-neutral-surplus-bound}, which is linear
in the conditional mean, gives $C+P\le P_1$.
Thus segmented payoffs satisfy the path's supporting
inequalities for every $0\le\beta\le1$. In particular,
$C\le C_0$ and, since consumer surplus is nonnegative,
$P\le P_1$.

For any $p\in(P_0,P_1]$, continuity of the path in
\Cref{thm:neutral-frontier} gives a weight
$\beta\in(0,1]$ with $P_\beta=p$. Every segmented pair
with $P\ge p$ then satisfies
$C\le C_\beta-\beta(P-p)\le C_\beta$.
The single-market pair $(C_\beta,p)$ attains this bound.
For $p\le P_0$, the buyer-optimal pair attains the
bound $C_0$. Hence segmentation leaves the maximum
consumer surplus under each feasible profit requirement
unchanged. Taking the requirement to equal a segmented
pair's profit also shows that every such pair is weakly
dominated by a single-market path point.

Every positive-weight path point remains Pareto efficient.
For the buyer-optimal endpoint, suppose a segmented pair
has $C=C_0$. The supporting inequalities and
$C_\beta\le C_0$ imply $P\le P_\beta$ for every
$\beta>0$. Letting $\beta\downarrow0$ gives $P\le P_0$,
so this endpoint remains efficient as well. The Pareto
frontier is therefore unchanged. These conclusions also
hold under state-independent assignment, which is a
special case.
\end{proof}

\subsection{Proof of Corollary~\ref{cor:seg-high-mean}}
\label{app:seg-high-mean-proof}

\begin{proof}
Fix $m\in(\bar m,1)$ and $\beta\in[0,\bar\beta]$
as in \Cref{prop:high-mean}. Its proof constructs
$\theta<0$ such that $G_{m,\beta}^*$ uniquely maximizes
$J_\beta(G)-\theta\E_G[Y]$ over $\Delta([0,1])$.
Consequently,
\begin{equation}
 J_\beta(F)
 \le V_{a,\beta}(m)
       +\theta\bigl(\E_F[Y]-m\bigr)
 \qquad\text{for every }F\in\Delta([a,1]),
 \label{eq:high-mean-global-support}
\end{equation}
with equality if and only if $F=G_{m,\beta}^*$.
The bound therefore applies to every public group,
including those whose conditional means lie outside
the high-mean range.

For any feasible state-dependent segmentation, write
$m_k=\E_{G_k}[Y]$. Bayes plausibility gives
$\sum_k\lambda_km_k=m$, so averaging
\eqref{eq:high-mean-global-support} yields
\[
 \sum_k\lambda_kJ_\beta(G_k)
 \le V_{a,\beta}(m)
      +\theta\left(\sum_k\lambda_km_k-m\right)
 =V_{a,\beta}(m).
\]
The single-market optimum attains this bound, proving
\eqref{eq:seg-high-mean-value}. Equality requires equality
in the bound for every positive-probability group.
Thus every such group uses $G_{m,\beta}^*$ and has
mean $m$. Since a binary conditional prior is determined
by its mean, every group also has the original prior.
These conclusions hold under either assignment rule.

For $p\in(P_0,P_{\bar\beta}]$, continuity of the path
in \Cref{cor:high-mean-frontier} gives a weight
$\beta\in(0,\bar\beta]$ with $P_\beta=p$.
The bound just proved implies
$C\le C_\beta-\beta(P-p)\le C_\beta$
whenever $P\ge p$, and the single-market path point
attains this bound. For $p\le P_0$, the weight-zero
bound gives $C\le C_0$, attained by the buyer optimum.

Positive-weight path points remain Pareto efficient.
If $C=C_0$, equality in the weight-zero bound requires
every group to use $G_{m,0}^*$, so aggregate profit
is $P_0$. The buyer-optimal endpoint therefore remains
efficient too. Every other segmented pair with profit
in $[P_0,P_{\bar\beta}]$ is dominated by a path point
with the same profit. Thus the path is exactly this
portion of the segmented Pareto frontier. When
$\bar\beta=0$, it consists only of the buyer-optimal pair.
\end{proof}

\section{Optimal learning with general finite priors}
\label{app:finite-priors}

The arguments used for binary priors remain feasible under a
finite prior when all altered posterior means stay within the
same interval between adjacent primitive valuations and the
mean is preserved. Throughout, maintain $0\le\beta<1$ and
\Cref{ass:cost,ass:curvature}.

\subsection{Feasibility and the order of trading atoms}
\label{app:finite-feasibility}
\label{app:finite-order}
\label{app:finite-localization}

Write $C_F(t):=\int(y-t)_+F(\dd y)$ for the stop-loss
transform, also called the call function. We refer to $C_F(t)$ as the \emph{call at $t$}. As shown in
Section~\ref{subsec:finite-constraints}, once support and mean
are fixed, feasibility requires checking this function only
at the primitive valuations. The following observation
therefore allows us to apply the binary arguments locally.

\begin{lemma}[Preserving the information constraints]
\label{lem:finite-interval-feasibility}
Suppose two quantiles agree outside a measurable rank set
$E$, both take values in $[v_i,v_{i+1}]$ on $E$, and have
the same integral over $E$. Their distributions have the
same mean and the same value of every $D_j$. In particular,
if one is feasible under the prior, so is the other.
\end{lemma}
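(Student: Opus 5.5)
The plan is to reduce everything to the distributional identity between the two quantiles on $E$ and the affine structure of each excess function on a single primitive interval. Write the two quantiles as $Q_1,Q_2$ on $(0,1)$. Both the mean and each $D_j$ are integrals over ranks of a fixed function of the quantile:
\[
 \E_{G}[Y]=\int_0^1 Q(u)\,\dd u,\qquad D_j(G)=\int_0^1\bigl(Q(u)-v_j\bigr)_+\,\dd u .
\]
The first step is to split each integral into the part over $E^c$ and the part over $E$. Off $E$ the quantiles agree, so those contributions coincide. On $E$ the mean contributions agree by the hypothesis that $\int_EQ_1=\int_EQ_2$.

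The second step handles the excess functions on $E$. Fix $j$. Since both quantiles take values in $[v_i,v_{i+1}]$ on $E$, we have two cases.
\begin{itemize}
 \item If $j\le i$, then $v_j\le v_i\le Q_k(u)$, so $(Q_k-v_j)_+=Q_k-v_j$ on $E$. Hence $\int_E(Q_k-v_j)_+=\int_EQ_k-v_j|E|$, which is the same for $k=1,2$.
 \item If $j\ge i+1$, then $Q_k(u)\le v_{i+1}\le v_j$, so $(Q_k-v_j)_+=0$ on $E$.
\end{itemize}
In either case the $E$-contributions coincide, so $D_j(G_1)=D_j(G_2)$ for every $2\le j<N$.

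The third step concludes feasibility. Both distributions are supported in $[v_1,1]$, since quantile values on $E$ lie in $[v_i,v_{i+1}]\subset[v_1,1]$ and agree elsewhere. By the characterization \eqref{eq:finite-feasibility}, membership in $\mathcal G(H)$ depends only on the mean and the values $D_j$, $2\le j<N$. These are shared, so feasibility transfers from one distribution to the other.

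There is no real obstacle. The only point needing care is that the relevant quantities are rank integrals, so agreement almost everywhere off $E$ suffices and the choice of left- or right-continuous version is irrelevant. A secondary check is that the boundary values $v_i$ and $v_{i+1}$ are harmless: at those points $(y-v_j)_+$ is still given by the same affine or zero formula, so the case split covers them.
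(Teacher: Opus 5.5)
Your proof is correct and follows the same route as the paper: split the mean and each $D_j$ into contributions from $E$ and its complement, and use that $(y-v_j)_+$ is affine on $[v_i,v_{i+1}]$ for $j\le i$ and zero there for $j\ge i+1$. The paper then concludes feasibility from \eqref{eq:finite-feasibility} exactly as you do; your additional check that both distributions remain supported in $[v_1,1]$ is a detail the paper leaves implicit.
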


\begin{proof}
For $j\le i$, the function $(y-v_j)_+$ equals $y-v_j$
throughout $[v_i,v_{i+1}]$. Its integral over $E$ is therefore
unchanged because the two quantiles have the same integral
there. For $j\ge i+1$, the function is zero throughout this
interval. Since the quantiles also agree outside $E$, their
means and all $D_j$ coincide. Feasibility then follows from
\eqref{eq:finite-feasibility}.
\end{proof}

By \Cref{lem:active-face-purity}, every nonterminal trading
face $I=[\ell,h]$ contains a single posterior atom at
$U_r(h)$, followed in increasing posterior value by the
support gap $(U_r(h),U_r(\ell))$.
The constructions in
\Cref{lem:fixed-envelope-fillings,lem:exact-chord-collapse,lem:binary-adjacent-elimination}
keep altered posterior values between the relevant contact
values and determine the new concavified revenue curve.
We can therefore use their welfare comparisons under a finite
prior whenever the entire mean-preserving change, including
compensation, stays within one primitive interval.
By \Cref{lem:finite-interval-feasibility}, every information
constraint then remains unchanged.

\begin{lemma}[Order within a primitive interval]
\label{lem:finite-band-order}
Under \Cref{ass:cost,ass:curvature}, an optimum has at most
one nonterminal trading face whose atom and entire following
gap lie in a given primitive interval $[v_i,v_{i+1}]$.
Below such a face in posterior value, the set of regular
active contacts with values in the same interval has
measure zero.
\end{lemma}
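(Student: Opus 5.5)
We localize the binary arguments of \Cref{thm:no-reentry-rigorous,lem:binary-successor,lem:binary-adjacent-elimination} to the primitive interval $[v_i,v_{i+1}]$. Whenever a mean-preserving modification keeps every altered posterior value, including the compensating changes, inside this interval, \Cref{lem:finite-interval-feasibility} leaves every $D_j$ unchanged, so feasibility under $H$ is automatic and the binary welfare comparisons apply verbatim. In reversed ranks, let $I=[\ell,h]$ be a nonterminal trading face whose atom $U_r(h)$ and following gap $(U_r(h),U_r(\ell))$ lie in $[v_i,v_{i+1}]$. By \Cref{lem:active-face-purity}, $v=U_r(h)$ a.e.\ on $I$ and $d_I=\kappa_I>0$.

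For the second claim, we follow \Cref{thm:no-reentry-rigorous}. Suppose the regular contacts below $I$ with values in $[v_i,v_{i+1}]$ have positive measure. Take a left density point $\gamma$ of this set and a nearby contact $\alpha<\gamma$. Pooling $(\alpha,\gamma)$ at $b=U_r(\gamma)$ lowers values only within $[b,U_r(\alpha)]$. Choosing $\alpha$ close to $\gamma$ keeps this range inside $[v_i,v_{i+1}]$, because $b\ge v_i$ and the block's values lie below $U_r(h)\le v_{i+1}$. The mean falls by $D<\kappa_I$.

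We then restore the mean by partially regularizing $I$, raising the atom's values toward $U_r$ on $I$. These new values lie in $[U_r(h),U_r(\ell)]$, which is contained in $[v_i,v_{i+1}]$ precisely because the whole gap is assumed to lie there. By \Cref{lem:exact-chord-collapse}, $I$ remains a face of the new curve. \Cref{lem:binary-chord-cost} bounds the pooling loss strictly below $q(z_I)D$, while the regularization gains exactly $q(z_I)D$, and this contradicts optimality. If the pooled slopes are nonpositive, the loss is zero and the argument is the same. The combined change is confined to $[v_i,v_{i+1}]$ and preserves the total integral there, so \Cref{lem:finite-interval-feasibility} applies.

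For the first claim, suppose two such faces $I_A$ above $I_B$ (in posterior value) both have their atom and gap in $[v_i,v_{i+1}]$. By the second claim, almost no regular contacts lie between them inside the interval. Faces between them also have values in $[U_r(h_B),U_r(h_A)]\subset[v_i,v_{i+1}]$. The accumulation argument of \Cref{lem:binary-successor}, run with pooling blocks that stay inside this range, then produces an immediately adjacent pair of trading faces sharing a contact, both within the interval.

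\Cref{lem:binary-adjacent-elimination} then gives a modification that raises values on part of the upper face up to at most $U_r$ of that face, which lies inside its gap, and pools the remainder at the lower atom's value. Its changes stay within $[w,U_r(\ell_{\mathrm{upper}})]\subset[v_i,v_{i+1}]$ and preserve the mean, hence every $D_j$, and it strictly raises $J_\beta$. This is a contradiction.

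The main obstacle is this middle step. We must verify that the successor and accumulation construction in \Cref{lem:binary-successor} never pushes values outside $[v_i,v_{i+1}]$, and that intermediate faces between $I_B$ and $I_A$ inherit the containment even though their own gaps might a priori cross $v_{i+1}$. We handle this by noting that all such values are sandwiched between $U_r(h_B)\ge v_i$ and $U_r(\ell_A)\le v_{i+1}$ by monotonicity of $U_r$.
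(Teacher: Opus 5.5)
Your proposal is correct and follows essentially the same route as the paper. You pool a short block of regular active contacts at a left density point below the face and restore the mean by partially regularizing that face's atom; because the whole gap lies in $[v_i,v_{i+1}]$, every altered value stays in the interval and \Cref{lem:finite-interval-feasibility} preserves every information constraint. You then use that claim to reduce any pair of such faces to accumulation, which the construction of \Cref{lem:binary-successor} rules out, or to adjacency, which \Cref{lem:binary-adjacent-elimination} rules out.
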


\begin{proof}
Fix $[L,H]=[v_i,v_{i+1}]$ and a nonterminal trading face
$I_A=[\ell_A,h_A]$ satisfying
$[U_r(h_A),U_r(\ell_A)]\subset[L,H]$.
Its slope $z_A$ and intercept are positive. By
\Cref{lem:active-face-purity}, it is filled by the constant
$U_r(h_A)$, with $d_{I_A}=\kappa_{I_A}>0$.

First suppose that regular active contacts below $I_A$,
with values in $[L,H]$, form a set $E$ of positive measure.
Discard face endpoints, quantile discontinuities, and
nondifferentiability points of $r$.
Contacts at $L$ or $H$ can also be discarded: a
positive-measure set of contacts at one value would imply
a zero-intercept face, which can occur only at the top.
Choose a left density point $\gamma$ of the remaining set.
Concavity and maximality of $I_A$ give
$0<z(\gamma)<z_A<U_r(h_A)$ and $z(\gamma)<U_r(\gamma)\in(L,H)$.
The contact selection in the proof of
\Cref{thm:no-reentry-rigorous} gives
$\alpha\in E$, arbitrarily close to $\gamma$, with
$|E\cap(\alpha,\gamma)|>0$, such that all posterior
values on $[\alpha,\gamma]$ remain in $[L,H]$ and
$0<z(s)<\min\{z_A,U_r(\gamma)\}$ almost everywhere.
On this block,
$U_r'=(z-U_r)/s<0$ almost everywhere.
Thus $v(s)=U_r(s)>U_r(\gamma)$ on
$E\cap(\alpha,\gamma)$.

Pool the block at $b=U_r(\gamma)$.
Monotonicity gives $v\ge b$ throughout the block,
and the strict inequality on the contact set gives
$D=\int_\alpha^\gamma[v(s)-b]\,\dd s>0$. By \Cref{lem:binary-chord-cost}, the welfare loss
is strictly below $q(z_A)D$.
Choose the block short enough that $D<\kappa_{I_A}$.
Partial regularization of $I_A$ then restores exactly
$D$ to the mean and raises welfare by $q(z_A)D$.
The two changes together preserve the mean and keep all affected posterior values in $[L,H]$, so feasibility follows from \Cref{lem:finite-interval-feasibility}. They yield a strict gain, contradicting optimality.

Now suppose two trading faces have their atoms and entire
following gaps in $[L,H]$. Between them, all slopes are
positive and all contact values lie in $[L,H]$.
Any intervening interval without a trading face would
consist almost everywhere of regular active contacts,
contrary to the first part.

If the upper face $I_A$ has no immediate lower successor,
trading faces must therefore accumulate at $h_A$.
Use the construction in the proof of
\Cref{lem:binary-successor}: pool a sufficiently short
block just below $I_A$ containing two lower faces and
satisfying $U_r>z_A$. This reduces the mean by
$0<D<\kappa_{I_A}$ at a welfare cost strictly below
$q(z_A)D$. Restoring $D$ on $I_A$ gives a strict gain.
The mean is unchanged and all altered values remain in $[L,H]$, so this is a feasible improvement. Thus, accumulation is impossible. If there is an immediate successor, it must begin at $h_A$. Any interval between the two faces would contain a positive-measure set of regular active contacts, which we have already ruled out. The faces are therefore adjacent, and \Cref{lem:binary-adjacent-elimination} gives a strict improvement. 
\end{proof}

For a nonterminal trading face $I=[\ell,h]$ with
$r(s)=A+zs$, the following gap has positive width because
\[
 U_r(h)=z+\frac{A}{h}
 <z+\frac{A}{\ell}=U_r(\ell),
 \qquad A>0,\quad \ell<h.
\]
By \Cref{lem:finite-band-order}, at most $N-1$ such faces
have their atom and entire gap within a single primitive
interval. Every remaining gap must contain an interior
primitive valuation strictly between its endpoints;
otherwise both endpoints would lie in one closed
primitive interval. Since the gaps are disjoint, they
cannot contain the same primitive valuation. There are
only $N-2$ interior primitive valuations, so there are
at most $N-2$ further faces. Altogether, there are at
most $2N-3$ nonterminal trading faces.

Remove these faces and the possible zero-intercept face
at the top. On each remaining open interval of active
ranks, the quantile equals the continuous function $U_r$.
Virtual value is strictly increasing in lower rank
$u=1-s$: by monotonicity, equality at two distinct ranks
would make it constant between them and hence define
another trading face.

\subsection{Separation and the highest type}
\label{app:finite-separation}
\label{app:finite-local-equations}
\label{app:finite-top}

\begin{lemma}[Separation inside a primitive interval]
\label{lem:finite-separating}
Under \Cref{ass:cost,ass:curvature}, let $I$ be an open
interval of regular active ranks outside the affine faces,
with $Q(I)\subset(v_i,v_{i+1})$. On $I$, $Q$ and
$\varphi=\phi_Q$ are $C^1$ and satisfy
\eqref{eq:finite-local-equations} for constants
$\theta_i,\xi_i$, where $\xi_i$ is the integration
constant in $A_i$. Moreover,
\begin{equation}
 q(\varphi)\ge\theta_i,\qquad
 \varphi'=
 \frac{q(\varphi)-\theta_i+(Q-\varphi)q'(\varphi)}
 {(1-u)\{(2-\beta)q'(\varphi)-(Q-\varphi)q''(\varphi)\}}>0.
 \label{eq:finite-local-phi}
\end{equation}
\end{lemma}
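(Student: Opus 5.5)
The plan is to reproduce the binary separating argument of \Cref{lem:binary-two-multipliers,lem:binary-interior-dynamics,lem:shadow-quality}, with every variation confined to ranks whose posterior values stay inside $(v_i,v_{i+1})$. \Cref{lem:finite-interval-feasibility} then guarantees that each $D_j$ is unchanged. Since $I$ consists of regular active contacts outside the faces, $Q=U_r$ there, and by the remark preceding the lemma, $\varphi$ is strictly increasing on $I$.

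The first step is the local reconstruction. A monotone perturbation $\varphi+\epsilon h\circ\varphi$ with $h$ supported in $\varphi(K)$, for a compact $K\subset I$, changes $Q$ at all lower ranks, including ranks outside $I$ where values may lie in other primitive intervals. To avoid this, I will use pairs of directions whose combined effect on $\int_K\varphi$ vanishes. By \eqref{eq:continuation-quantile}, $\Delta Q(u)=(1-u)^{-1}\int_u^1\eta$, so $\Delta Q=0$ for $u<\inf K$ exactly when $\int\eta=0$. Below $K$ the quantile and the concave envelope are then unchanged, because the raw curve at lower ranks depends only on tail sums. Ranks above $K$ are unaffected. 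The perturbation therefore moves posterior values only on $K$, where they remain in $(v_i,v_{i+1})$ for small $\epsilon$.

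Within this class I impose one further linear constraint, that the mean change $\int_K w\eta$ be offset. Using three bumps at ordered ranks in $K$, a two-dimensional family of directions keeps $\int\eta=0$ and trades mean. Preserving $\int\eta$ plays the role of preserving $B$, and preserving the mean plays the role of $M$. Exactly as in \Cref{lem:binary-two-multipliers}, this yields constants $\theta_i,\xi_i$ with $H_\beta-\theta_i w+\xi_i=0$ a.e.\ on $K$, where $H_\beta$ is the first variation \eqref{eq:binary-cited-first-variation} written in full-distribution ranks. The cumulative term from ranks below $\inf K$ is absorbed into $\xi_i$. Taking an exhausting sequence of compacts shows that the constants are the same on all of $I$, because any two overlapping compacts share a subinterval where the constants are uniquely determined (as $w$ is strictly increasing). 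This gives the static equation in \eqref{eq:finite-local-equations}. Smoothness then follows from \Cref{lem:binary-interior-dynamics}: \Cref{ass:curvature} makes the left side strictly decreasing in $\varphi$ for $0<\varphi<Q$, so the implicit-function theorem gives a $C^1$ dependence of $\varphi$ on $(A_i,Q)$. The ODEs for $A_i$ and $Q$ then make all three functions $C^1$, and differentiation yields the formula for $\varphi'$ with a positive denominator.

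The inequality $q(\varphi)\ge\theta_i$ comes from the pooling argument of \Cref{lem:shadow-quality}. I pool a short block $[u_0,u_0+h]\subset I$ at $Q(u_0)$; by \Cref{lem:binary-chord-cost}, this costs at most $q(\varphi(u_0+h))D_h$. I then restore $D_h$ of mean using the mean-raising direction, supported in a fixed subinterval of $I$ below $u_0$, at welfare gain $\theta_iD_h+o(D_h)$. All altered values stay in $(v_i,v_{i+1})$ and the total integral is preserved, so \Cref{lem:finite-interval-feasibility} keeps the combined change feasible. Optimality then forces $\theta_i\le q(\varphi(u_0))$. The one gap is that $u_0$ must have room below it inside $I$, which I handle by continuity of $\varphi$ up to the lower end. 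Positivity of the numerator of \eqref{eq:finite-local-phi} follows, hence $\varphi'>0$.

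I expect the main obstacle to be confining the variations. The binary proof freely shifts mean through all lower ranks, whereas here I need the zero-integral compensation to remain compatible with monotonicity and the envelope. The key point is that $\int\eta=0$ leaves $\widehat R_Q$ untouched outside $K$, while the contacts at both ends of $K$ stay contacts.
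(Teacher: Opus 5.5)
Your proposal is correct and follows essentially the same route as the paper. Both arguments use directions $h\circ\varphi$ confined to a compact subinterval, with $\int h=0$ to preserve the lower contact (so $Q$ and the envelope are unchanged outside it) and $\int wh=0$ to preserve the mean; both then glue the two multipliers across overlapping compacts, obtain smoothness from the curvature and implicit-function argument of \Cref{lem:binary-interior-dynamics}, and derive $q(\varphi)\ge\theta_i$ by pooling a short block and restoring the mean inside the same primitive interval. The ``gap'' you flag is not one: $I$ is open, and a zero-integral direction moves $Q$ only on the rank interval spanned by its support, so the compensation can sit on either side of $u_0$ (the paper simply places it ``away from $u_0$'').
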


\begin{proof}
Fix a compact interval $J=[\alpha,\gamma]$ inside $I$
and put $w(u)=-\log(1-u)$.
Use directions $h=k\circ\varphi$, where
$k\in C_c^1((0,1))$ is chosen so that $h$ has compact
rank support in the interior of $J$.
For sufficiently small $|\varepsilon|$, the map
$z\mapsto z+\varepsilon k(z)$ is an increasing
self-map of $[0,1]$.
Thus the perturbed virtual-value profile remains
monotone and respects its bounds, without any
smoothness assumption on $\varphi$. Impose
$\int_J h(u)\,\dd u=0$ and $\int_J w(u)h(u)\,\dd u=0$.

Reconstruct the revenue curve on $J$ from the perturbed
virtual values, keeping its value at $\gamma$ fixed.
The quantile change is
\[
 \Delta Q(u)=\frac{\varepsilon}{1-u}
                  \int_u^\gamma h(t)\,\dd t,
 \qquad u\in J.
\]
The first restriction preserves the contact at $\alpha$
as well, so the curve and quantile can remain unchanged
outside $J$. The two restrictions together give
\[
 \int_J\Delta Q(u)\,\dd u
 =\varepsilon\int_J[w(t)-w(\alpha)]h(t)\,\dd t=0.
\]
Both endpoint contacts and the mean are unchanged. For a small enough step, all affected posterior values stay in $(v_i,v_{i+1})$, so every $D_j$ is unchanged by \Cref{lem:finite-interval-feasibility}. The reconstructed revenue curve is concave because virtual values remain nondecreasing. It coincides with the raw revenue curve on $J$, while the original contacts remain in place outside $J$. These contacts identify it as the new concavified revenue curve.

Because $Q$ and $\varphi$ are unchanged outside $J$,
the change in the aggregate objective is obtained by integrating
over $J$. The calculation in
\eqref{eq:binary-cited-first-variation} gives first
variation $\int_J H_J(u)h(u)\,\dd u$, where
\[
 H_J(u)=\int_\alpha^u
          \frac{q(\varphi(t))}{1-t}\,\dd t
       +(Q(u)-\varphi(u))q'(\varphi(u))
       -(1-\beta)q(\varphi(u)).
\]
The construction in \Cref{lem:binary-two-multipliers}
provides two directions with independent effects on
the two restrictions. Optimality under compensated
variations therefore gives multipliers satisfying
$H_J-\theta_iw+\xi_J=0$ almost everywhere on $J$,
by the same rank-indicator argument.

So far, the Euler equation has been derived separately on compact subintervals of $I$. We must check that the multipliers obtained on these subintervals agree. First, write all cumulative integrals from the same reference rank. Changing the lower limit adds only a constant, which we include in the constant term of $A_i$. On an overlap, both equations describe the same profile, so the terms involving $Q$ and $\varphi$ cancel when we subtract them. If the two multiplier pairs are $(\theta,\xi)$ and $(\widetilde\theta,\widetilde\xi)$, this leaves $(\theta-\widetilde\theta)w(u)=\xi-\widetilde\xi$ almost everywhere. The right-hand side is constant, while $w$ is strictly increasing. Hence $\theta=\widetilde\theta$ and $\xi=\widetilde\xi$. The same pair therefore works across overlapping subintervals, giving \eqref{eq:finite-local-equations} almost everywhere throughout $I$.

On this active interval, $\varphi>0$, and reconstruction gives $Q>\varphi$. The integral formulas for $Q$ and $A_i$ show that both are locally absolutely continuous.
Curvature makes the static Euler equation strictly
decreasing in virtual value. The argument in
\Cref{lem:binary-interior-dynamics} therefore rules out
jumps in $\varphi$ and, by the implicit-function theorem,
gives $C^1$ regularity.

Finally, fix $u_0\in I$ and pool a short contact block around it. Restore the mean on a fixed subinterval of $I$ away from $u_0$, using the two-multiplier construction. This adjustment leaves the endpoint contacts unchanged and has a first-order payoff effect of $\theta_i$ per unit of mean restored. The two changes preserve the mean together, and all affected posterior values stay in the same primitive interval. Every information constraint is therefore unchanged.

The calculation in the proof of
\Cref{lem:shadow-quality} shows that, as the pooled
block shrinks to $u_0$, the loss in the objective per unit
of mean reduction tends to $q(\varphi(u_0))$.
Optimality therefore requires
$q(\varphi(u_0))\ge\theta_i$.
Since $u_0$ was arbitrary, this holds throughout $I$.
Differentiating the Euler equation now gives
\eqref{eq:finite-local-phi}. Its denominator is positive
by curvature, and its numerator is positive because
$q(\varphi)\ge\theta_i$, $Q>\varphi$, and $q'>0$.
\end{proof}

\begin{lemma}[Continuous fit at a primitive value]
\label{lem:finite-primitive-fit}
Let $\bar u$ lie in the interior of a regular active
component of an optimal posterior distribution, with
$Q(\bar u)=v_j$ for an interior primitive value $v_j$.
Suppose $Q$ crosses $v_j$ continuously and strictly
increasingly, without an atom or support gap.
Then $\varphi(\bar u-)=\varphi(\bar u+)$.
Consequently, quality and the derivative of the posterior
quantile fit continuously:
\begin{equation}
 q(\varphi(\bar u-))=q(\varphi(\bar u+)),
 \qquad
 Q'(\bar u-)=Q'(\bar u+).
 \label{eq:finite-primitive-quality-fit}
\end{equation}
The one-sided limits of $A_{j-1}$ and $A_j$ also agree.
The one-sided derivatives of virtual value and quality,
however, need not agree.
\end{lemma}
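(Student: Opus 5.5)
We argue by contradiction, following the description in Section~\ref{subsec:finite-structure}. Since $\varphi$ is nondecreasing, $\varphi(\bar u-)\le\varphi(\bar u+)$, so suppose $\varphi(\bar u-)<\varphi(\bar u+)$. Because $Q$ is continuous and strictly increasing at $\bar u$, on a short block $[\bar u-h,\bar u+h]$ all ranks are regular active contacts. Posterior values lie within $O(h)$ of $v_j$, while virtual values jump by a fixed amount $\Delta>0$ across $\bar u$. By \Cref{lem:finite-separating}, $\varphi$ is $C^1$ on each side, so the one-sided limits exist and the block is a separating block with a strict virtual-value gap.

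The first step is a pooling deviation. Pool the block at its lowest value $b=Q(\bar u-h)$, which lies below $v_j$ and within $O(h)$ of it. By \Cref{lem:exact-chord-collapse}, the revenue curve becomes the chord on the block, so virtual values are replaced by their average $\bar z$, and posterior values drop by $O(h)$. This reduces the mean by $D=O(h^2)$ and costs at most $\bar q D=O(h^2)$ from lowering posterior values.

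The replacement of virtual values by their average gains, by strict concavity of $F_{y,\beta}$ (\Cref{ass:curvature}, Jensen as in \Cref{lem:binary-chord-cost}), at least $c\Delta^2 h$ for some $c>0$ uniform in small $h$. This holds because half the block sits near $\varphi(\bar u-)$ and half near $\varphi(\bar u+)$, and $F_{y,\beta}$ has curvature bounded away from zero on the relevant compact range of $(y,z)$ with $0<z<y$. The pooling therefore yields a net gain of order $h$.

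The second step restores the mean and preserves feasibility. All pooled values now equal $b<v_j$ and moved within $[v_{j-1},v_j]$ on the lower half, but values originally above $v_j$ moved down across $v_j$. This lowers $D_j$ and so cannot violate its constraint, yet it does change the mean. We restore the mean by $D=O(h^2)$ using the compensating direction of \Cref{lem:binary-two-multipliers}, as in \Cref{lem:finite-separating}, applied on a fixed subinterval of the separating region strictly below $\bar u$ inside $(v_{j-1},v_j)$. That direction preserves both endpoint contacts, so by \Cref{lem:finite-interval-feasibility} it leaves every $D_k$ unchanged. Its first-order payoff effect is $\theta_{j-1}D=O(h^2)$, with remainder $o(h^2)$.

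The combined change is then feasible. For $D_k$ with $k\ne j$, the changes occur within intervals where the call functions are affine, and the mean is preserved. For $k=j$, the call can only decrease, because pooling contracts mass at or above $v_j$ downward, and we must check this directly: the call at $v_j$ decreases by $\int(Q-v_j)_+$ on the block while the lower compensation does not touch it. The objective rises by at least $c\Delta^2h-O(h^2)>0$ for small $h$, which contradicts optimality.

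The main obstacle is making the Jensen gain quantitatively of order $h$. This requires one-sided continuity of $\varphi$ at $\bar u$ (from \Cref{lem:finite-separating}) and a uniform lower bound on $-F''_{y,\beta}$ near $(v_j,\varphi(\bar u\pm))$. It also requires checking that \Cref{lem:exact-chord-collapse} applies even though the block straddles the primitive value.

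Given continuity of $\varphi$, the consequences follow directly. Continuity of $q$ gives continuity of quality, and \eqref{eq:finite-local-equations} gives $Q'=(Q-\varphi)/(1-u)$ from both sides, with $Q$ continuous. For $A_{j-1}$ and $A_j$, each equals $(1-\beta)q(\varphi)-(Q-\varphi)q'(\varphi)$ by the static Euler equation on its side, and the right-hand side is continuous at $\bar u$. Differentiating \eqref{eq:finite-local-phi}, however, involves $\theta_{j-1}$ versus $\theta_j$, so the derivatives need not match.
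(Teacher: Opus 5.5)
Your proposal is correct and follows the paper's proof essentially step for step. You pool the block $[\bar u-h,\bar u+h]$ at $Q(\bar u-h)$ via \Cref{lem:exact-chord-collapse}, obtain a Jensen gain of order $h$ against an $O(h^2)$ mean reduction and surplus loss, restore the mean on a fixed regular subinterval with values in $(v_{j-1},v_j)$, check that only the call at $v_j$ changes (and only downward), and then read off the fit of quality, $Q'$, and $A$ from the local equations.

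One small slip: the compensating direction by itself raises the mean by $D$, so it does change the calls at $v_k$ for $k\le j-1$, and \Cref{lem:finite-interval-feasibility} does not apply to it alone. Only the combined change leaves those calls unchanged, which is what your subsequent feasibility paragraph correctly argues.
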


\begin{proof}
Write $\varphi_-=\varphi(\bar u-)$ and
$\varphi_+=\varphi(\bar u+)$.
Monotonicity, interior activity, and regular reconstruction
give $0<\varphi_-\le\varphi_+<v_j$.
The last inequality is strict because virtual value
increases on the regular interval above $\bar u$.
Suppose, towards a contradiction, that
$\varphi_-<\varphi_+$.

For sufficiently small $h>0$, the block
$I_h=[\bar u-h,\bar u+h]$ lies in the regular component,
its posterior values lie in $(v_{j-1},v_{j+1})$, and
$\sup_{I_h}\varphi<b_h:=Q(\bar u-h)<v_j$.
Pool all posterior values on this block at $b_h$.
By \Cref{lem:exact-chord-collapse}, expressed in lower
ranks, the new concavified revenue curve is the chord
through the block's endpoint contacts and is unchanged
elsewhere. Its virtual value on the block and the
reduction in mean are
\[
 \bar\varphi_h=\frac1{2h}\int_{I_h}\varphi(u)\,\dd u,
 \qquad
 D_h=\int_{I_h}[Q(u)-b_h]\,\dd u>0.
\]
On either side of $\bar u$,
\Cref{lem:finite-separating} gives
$Q'=(Q-\varphi)/(1-u)$.
This derivative is uniformly bounded nearby.
Together with continuity of $Q$, this gives
$D_h=O(h^2)$.

Since the quantile and virtual values outside $I_h$
are unchanged, the welfare gain is
\begin{align}
 \Delta J_h
 &=2hF_{b_h,\beta}(\bar\varphi_h)
   -\int_{I_h}F_{Q(u),\beta}(\varphi(u))\,\dd u
   \notag\\
 &=
 \left[
  2hF_{b_h,\beta}(\bar\varphi_h)
  -\int_{I_h}F_{b_h,\beta}(\varphi(u))\,\dd u
 \right]
 -\int_{I_h}[Q(u)-b_h]q(\varphi(u))\,\dd u.
 \label{eq:finite-primitive-pooling-gain}
\end{align}
The magnitude last term is bounded by $\bar qD_h=O(h^2)$.
The bracket is positive by strict concavity.
Dividing it by $2h$ and letting $h\downarrow0$ gives
\[
 F_{v_j,\beta}\!\left(
       \frac{\varphi_-+\varphi_+}{2}\right)
 -\frac12F_{v_j,\beta}(\varphi_-)
 -\frac12F_{v_j,\beta}(\varphi_+)>0.
\]
Thus, for all sufficiently small $h$, pooling raises
welfare by at least a positive multiple of $h$,
while reducing the mean by only $O(h^2)$.

We restore this mean below $v_j$.
Fix a compact regular subinterval below $\bar u$,
with posterior values in $(v_{j-1},v_j)$ and disjoint
from $I_h$ for all small $h$.
The compensating directions constructed in the proof of
\Cref{lem:finite-separating}, using
\Cref{lem:binary-two-multipliers}, allow its mean to
increase by exactly $D_h$ while preserving its endpoint
contacts and monotonicity.
Because the direction is fixed and has a finite
first-order payoff, its welfare effect is
$O(D_h)=O(h^2)$.
For small $h$, all posterior values changed by this
compensation remain in $(v_{j-1},v_j)$.

The combined change remains feasible under every
information constraint.
The call at each primitive threshold below $v_j$ is unchanged since all affected posterior values remain above the threshold, and the two mean changes cancel. Pooling lowers the call at $v_j$, while compensation leaves it unchanged because it affects only posterior values below $v_j$.
At thresholds above $v_j$, both changes occur below
$v_{j+1}$ and have no effect.
Support and the overall mean are also preserved.
The combined change is therefore feasible and retains
a strictly positive welfare gain for small $h$,
contradicting optimality.

Hence $\varphi_-=\varphi_+=:\varphi_*$.
Continuity of $q$ gives quality fit, and the formula
for $Q'$ gives
$Q'(\bar u\pm)=(v_j-\varphi_*)/(1-\bar u)$,
proving \eqref{eq:finite-primitive-quality-fit}.
Taking one-sided limits in the static equation in
\eqref{eq:finite-local-equations} also gives
\[
 A_{j-1}(\bar u-)=A_j(\bar u+)
 =(1-\beta)q(\varphi_*)
   -(v_j-\varphi_*)q'(\varphi_*).
\]
The derivatives of virtual value and quality may still
differ because \eqref{eq:finite-local-phi} depends on
the local mean multiplier, which can change across $v_j$.
\end{proof}

\begin{lemma}[Separation and efficient trade at the top]
\label{lem:finite-terminal}
Under \Cref{ass:cost,ass:curvature}, every optimum has a
regular upper component consisting of a nonempty separating
interval followed by an atom of positive mass at one.
Virtual value converges to one at the upper end of separation,
and the atom receives the efficient quality $q(1)$.
\end{lemma}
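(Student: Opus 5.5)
We will mirror the binary proof (\Cref{lem:binary-nonempty-continuation,lem:binary-top-fit}), taking care that every perturbation respects the call constraints $D_j\le\kappa_j$. Throughout we lean on \Cref{lem:finite-interval-feasibility}: a change preserves feasibility whenever all altered posterior values stay inside one closed primitive interval and the change preserves their integral. The one exception is a mean-preserving contraction, which is feasible because it only lowers every $D_j$.

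\textbf{Step 1: the top of the support is a pure atom or a separating region ending in one.} The topmost face of $r$ is the zero-intercept face at $s=0$ (possibly empty). Below it, either a regular active interval of positive measure exists, or the next piece is a positive-intercept trading face. By \Cref{lem:active-face-purity}, such a face is a pure atom adjacent to the top atom. Suppose the latter holds, or suppose no nonempty separating interval attaches below the top atom. Then the top two groups are adjacent pure atoms with values $v<p$. \Cref{lem:top-atom-refinement} gives a mean-preserving contraction that strictly raises $J_\beta$, so it is feasible under any finite prior. If instead the whole distribution is one atom $\delta_m$, we use the small two-point spread around $m$ from the binary proof. This spread can be kept inside the primitive interval containing $m$, or inside the two intervals adjacent to $m$ if $m=v_j$; in that case we shift the spread slightly to one side so that the call at $v_j$ does not rise. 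Either way, a nonempty separating interval sits directly below the top atom. By \Cref{lem:finite-separating}, on the part of it inside the top primitive interval $(v_{N-1},1)$, the local Euler system \eqref{eq:finite-local-equations} holds with multiplier $\theta_{N-1}$, and $q(\varphi)\ge\theta_{N-1}$.

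\textbf{Step 2: the separation does not reach rank one, and the atom sits at one.} Suppose separation extended to $u=1$. Then $q(\varphi)-\theta_{N-1}$ would be bounded below near one, so $A_{N-1}\to\infty$, contradicting the Euler bound $A<(1-\beta)\bar q$, exactly as in \Cref{lem:binary-top-fit}. Hence the component ends at rank $b<1$, and reconstruction gives a constant value $p$ on the top atom. To show $p=1$, we raise $\varphi$ by $\epsilon$ on the terminal atom and compensate with the two interior directions of \Cref{lem:binary-two-multipliers}. These directions must be placed inside the top separating portion, so that the mean and bottom contact are restored within $(v_{N-1},1]$. All altered posterior values then lie in $[v_{N-1},1]$, and \Cref{lem:finite-interval-feasibility} gives feasibility. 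The first-order gain $\epsilon\int_t^1[A-(1-\beta)q(p)]>0$ then contradicts optimality. The atom has positive mass because $b<1$.

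\textbf{Step 3: continuous fit.} Here $\varphi(b-)=1$, proved by the lowering perturbation on an initial block of the top atom, again with compensation confined to the top primitive interval. This yields efficient quality $q(1)$ on the atom and convergence of virtual value to one.

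The main obstacle is Step 1's localization. The separating portion just below the atom might be arbitrarily short within $(v_{N-1},1)$, or separation might cross $v_{N-1}$ immediately. I would first show that the separating interval adjacent to the top atom has a nondegenerate piece above $v_{N-1}$. This holds because the atom is at $1>v_{N-1}$ and $Q$ is continuous up to $b$ (\Cref{lem:finite-separating} and \Cref{lem:finite-primitive-fit} rule out jumps), so the compensation directions have room there.
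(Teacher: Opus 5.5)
Your plan follows the paper's proof in outline: a top-atom contraction and a two-point spread force a separating interval, and then the binary terminal variations are run with localized compensation. Two steps, however, fail as written.

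In Step~1, your dichotomy below the top face omits the case in which the top atom is the only trading type and all remaining mass is excluded. There you assert that ``the top two groups are adjacent pure atoms,'' but nothing makes an excluded group pure under a finite prior. \Cref{lem:active-face-purity} covers only faces with positive slope and intercept, and concentration of excluded mass at $a$ (\Cref{lem:binary-exclusion-floor}) relies on binary feasibility, which is unavailable here. The paper first pools all excluded types at their conditional mean. This is a convex-order contraction, hence feasible. It weakly raises $J_\beta$ because the original payoffs are $(0,s\pi(p))$ and the top atom keeps virtual value $p$. Only then does \Cref{lem:top-atom-refinement}, which permits a nonpositive lower virtual value, apply. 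Your dichotomy also needs the finite bound on nonterminal trading faces to exclude faces accumulating at the top face. Finally, your treatment of $\delta_m$ with $m=v_j$ is incorrect: every nondegenerate mean-preserving spread of $\delta_{v_j}$ raises the call at $v_j$. The right reason the spread is feasible is that all internal constraints are strictly slack at no learning, $D_j(\delta_m)=(m-v_j)_+<\kappa_j$.

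Step~2 is circular. You place the compensating directions in a separating portion inside $(v_{N-1},1)$ and justify its existence by ``the atom is at $1>v_{N-1}$,'' but $p=1$ is exactly what Step~2 must prove. A priori the top support point $p$ may lie at or below $v_{N-1}$, and then no such portion exists. More seriously, if $p=v_j$ for an interior primitive value, raising the top atom necessarily moves posterior values across $v_j$. In that case \Cref{lem:finite-interval-feasibility} cannot certify feasibility wherever you compensate.

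The missing idea is to localize at $p$ rather than at $v_{N-1}$. Pick a contact value $B<p$ above every primitive value below $p$, obtain the Euler system on the separating part above $B$, and compensate there. Calls at thresholds below $p$ are then unchanged by mean preservation. Every internal constraint at a threshold $v_j\ge p$ is strictly slack, $D_j(G)=0<\kappa_j$, and remains so under the small upward perturbation, which also stays below one because $p<1$. Once $p=1$ is established, your Step~3 localization in $[v_{N-1},1]$ is legitimate and matches the paper.
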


\begin{proof}
At the no-learning distribution $\delta_m$, every internal
information constraint is strictly slack:
$D_j(\delta_m)=(m-v_j)_+<\kappa_j$ for $2\le j<N$.
The inequality is strict because the prior puts positive
mass on both sides of $v_j$.
Thus, for sufficiently small $\epsilon>0$, the
equal-probability spread to $m-\epsilon$ and $m+\epsilon$
is feasible and remains in $[a,1]$.
For $N=2$, there are no internal constraints to check.
The calculation in the proof of
\Cref{lem:binary-nonempty-continuation} gives a welfare gain
of $(1-\beta)\epsilon q(m)+o(\epsilon)>0$.
Hence $\delta_m$ is not optimal, and the optimal value
is strictly positive. In particular, some buyers trade.

The finite bound on nonterminal trading faces leaves
a regular upper tail, which could at this stage
consist only of a single atom.
Suppose this component has no separating interval.
If another posterior type trades, the top two faces must
be adjacent pure trading atoms.
The contraction in \Cref{lem:top-atom-refinement} then
strictly raises $J_\beta$ and preserves feasibility
under the original prior.

If the top atom is the only trading type, there must be
excluded mass, since no learning has already been ruled out.
Write $s$ for the top atom's probability and $p$ for its
value. The original payoffs are $(0,s\pi(p))$.
Pool all excluded types at their conditional mean.
This is a contraction in convex order, so it preserves
feasibility. In the resulting two-atom law, the top
virtual value remains $p$, so profit is at least
$s\pi(p)$, while consumer surplus is nonnegative.
Pooling therefore weakly raises $J_\beta$.
The resulting law admits the strict improvement in
\Cref{lem:top-atom-refinement}, which also allows a
nonpositive lower virtual value.
This again contradicts optimality.
The regular upper component must therefore contain
a nonempty separating interval.

Let $p=\sup\supp G$.
Choose a contact rank in this interval whose posterior
value $B<p$ exceeds every primitive value below $p$.
Normalize the tail above $B$ to mass one and use $u$
for its rank.
Its separating part lies within a single primitive
interval, so \Cref{lem:finite-separating} supplies
multipliers $\theta,\xi$ and the local Euler equations.

Suppose separation continues to rank one.
Strictly increasing quality and $q(\varphi)\ge\theta$
imply that $q(\varphi)-\theta$ is bounded away from zero
on a sufficiently high tail.
The equation $A'=[q(\varphi)-\theta]/(1-u)$ then gives
$A\to+\infty$ as $u\uparrow1$.
But the static Euler equation gives
$A<(1-\beta)q(\varphi)\le(1-\beta)q(p)$,
a contradiction.
Thus the upper component ends in an atom of positive
mass at $p$. Write $b\in(0,1)$ for its starting rank,
so $Q=\varphi=p$ on $[b,1]$.

Suppose $p<1$.
Since $q(p)>\theta$, the upward terminal variation in
the proof of \Cref{lem:binary-top-fit} gives a strictly
positive first-order gain after two interior directions
restore the bottom contact and mean.
Choose both compensating directions in the separating
part above $B$.
The total change in virtual values has zero integral,
so posterior values below $B$ remain unchanged.

It remains to check the information constraints.
At every primitive threshold below $p$, all altered
posterior values remain above the threshold.
Mean preservation therefore leaves its call unchanged.
At each internal threshold $v_j\ge p$, the original
constraint is strictly slack:
$D_j(G)=0<\kappa_j$.
Continuity preserves these inequalities for small
perturbations, and $p<1$ allows the perturbed support
to remain below one.
The improvement is therefore feasible, forcing $p=1$.

Finally, apply the downward variation at the start of
the top atom from the proof of
\Cref{lem:binary-top-fit}.
Choose both compensating directions in the separating
region above $v_{N-1}$.
The combined change preserves the mean and keeps all
altered posterior values in $[v_{N-1},1]$, so
\Cref{lem:finite-interval-feasibility} preserves every
information constraint.
The same one-sided comparison therefore rules out
$\varphi(b-)<1$.
Hence virtual value converges to one along separation
and equals one on the terminal atom, which receives
quality $q(1)$.
\end{proof}

\subsection{Proof of \Cref{thm:finite-structure}}
\label{app:finite-assembly}

\begin{proof}
Existence follows from Section~\ref{subsec:screening}.
Fix an optimal distribution.
Since quality is nondecreasing in rank, excluded buyers,
if present, occupy a single lower interval.

A constant positive quality over a rank interval
corresponds to an affine face of the revenue curve.
By \Cref{lem:active-face-purity}, every nonterminal
trading face contains a single posterior value.
A zero-intercept face, which can occur only at the top,
also has a constant posterior value.
Thus distinct trading posterior values receive
distinct qualities.

For a nonterminal trading face $I=[\ell,h]$ in upper-tail
ranks, purity places its atom at $w=U_r(h)$.
Its positive intercept gives $y^+:=U_r(\ell)>w$,
and the following support gap is $(w,y^+)$.
A gap between trading values creates a jump in the quantile. Since the concavified revenue curve is continuous, it must lie strictly above raw revenue over an interval on the lower-value side. This interval belongs to a trading face whose buyers all have the same posterior value, so the gap begins at that atom.

By \Cref{lem:finite-band-order}, each primitive interval
contains at most one trading atom together with its entire
following gap. Such an atom lies below every separating
portion in that interval.
The counting argument in Appendix~\ref{app:finite-feasibility}
therefore gives at most $2N-3$ nonterminal trading atoms.
Crossing gaps have disjoint open spans.
Moreover, a crossing gap starting in $[v_i,v_{i+1})$
must extend strictly above $v_{i+1}$.
If two such gaps started at $w_1<w_2$ in this interval,
the second atom would lie inside the first gap,
which is impossible.
Thus at most one crossing gap starts in each
$[v_i,v_{i+1})$.
A gap ending at an interior primitive value $v_j$
does not cross that value. If an atom is present at
$v_j$, it belongs to the next face, with the two
faces meeting at the endpoint contact.

Outside the finitely many trading faces, including
the terminal face, the active quantile equals the
continuous regularized quantile.
The remaining open intervals of active ranks are
therefore regular.
Suppose two distinct regular portions had posterior
values in the same open primitive interval
$(v_i,v_{i+1})$.
A trading face between them would have both its atom
and its entire following gap in that interval.
The lower regular portion would then contradict
\Cref{lem:finite-band-order}.
Hence each open primitive interval contains at most
one separating portion.
Every nondegenerate separating region contains such
a portion, so there are at most $N-1$ connected
separating regions.

Within each open primitive interval,
\Cref{lem:finite-separating} gives $C^1$ posterior
and virtual-value profiles and strictly increasing
quality. With $u$ denoting rank in the full posterior
distribution,
$Q'=(Q-\varphi)/(1-u)>0$.
The inverse quantile therefore has a positive continuous
derivative, giving the posterior density
$g(Q(u))=1/Q'(u)$.

When a separating region crosses a primitive value
without an atom or support gap,
\Cref{lem:finite-primitive-fit} gives continuity of
virtual value, quality, and $Q'$ across the threshold.
The portions on either side consequently form one
separating region, with posterior values and qualities
continuous and strictly increasing throughout.
The local mean multiplier and the one-sided derivatives
of virtual value and quality may nevertheless change
at the primitive value.

Finally, \Cref{lem:finite-terminal} supplies a nonempty
upper separating interval ending at an atom of positive
mass at one. Virtual value converges to one along this
interval, and the terminal atom receives the efficient
quality $q(1)$.
\end{proof}


\section{Joint learning and valuation-dependent segmentation}
\label{app:finite-joint-design}

We prove \Cref{thm:finite-joint-design}. Fix the finite prior $H$
in \eqref{eq:primitive-prior}, write $a=v_1$, and maintain
$0\le\beta<1$ and \Cref{ass:cost,ass:curvature} at that weight.

\subsection{Aggregate feasibility, existence, and the segment bound}
\label{app:finite-joint-feasibility}

\begin{lemma}[Aggregate implementation]
\label{lem:finite-joint-implementation}
A finite collection of segment probabilities and posterior
distributions $\{(\lambda_k,G_k)\}$ is implementable under $H$
if and only if $\overline G:=\sum_k\lambda_kG_k\preceq_\cx H$.
The joint attainable payoff set is compact and convex, and
some joint optimum uses at most $N$ positive-probability segments.
\end{lemma}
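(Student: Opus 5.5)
The argument has three parts: implementability via the aggregate convex-order condition, compactness and convexity of the payoff set, and a Carathéodory count for the number of segments.

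\emph{Implementability.} For necessity, take any joint policy with public label $K$ and posterior mean $Y=\E[V\mid K,S]$. The aggregate posterior distribution $\overline G=\sum_k\lambda_kG_k$ is the law of $Y$. Since $Y$ is a conditional expectation of $V$ given the buyer's full information $(K,S)$, Jensen's inequality gives $\overline G\preceq_\cx H$.

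For sufficiency, start from $\overline G\preceq_\cx H$. By the Gentzkow--Kamenica characterization there is a joint distribution of $(V,Y)$ with $V\sim H$, $Y\sim\overline G$ and $\E[V\mid Y]=Y$. Enlarge the probability space with a label $K$ drawn conditionally on $Y$ with $\Pr(K=k\mid Y\in\dd y)=\lambda_kG_k(\dd y)/\overline G(\dd y)$, a Radon--Nikodym density that exists because each $\lambda_kG_k\le\overline G$. Make $K$ conditionally independent of $V$ given $Y$.

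Then $\Pr(K=k)=\lambda_k$, and the law of $Y$ given $K=k$ is $G_k$. Set $H_k:=\mathcal L(V\mid K=k)$, so that $\sum_k\lambda_kH_k=H$. Conditional independence gives $\E[V\mid Y,K]=\E[V\mid Y]=Y$. Hence, within group $k$, $Y$ is a posterior mean of $V$, and $G_k\preceq_\cx H_k$. Condition \eqref{eq:sd-prior-feasibility} then holds, and Appendix~\ref{app:sd-implementation} implements the collection through state-dependent assignment plus a private signal equal to $Y$ in each group.

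\emph{Compactness and convexity.} Convexity follows because mixing two joint policies amounts to public randomization over their groups: the aggregate posterior distributions average, the convex order is preserved under mixtures, and payoffs average. For compactness, write the payoff set as the set of $\sum_k\lambda_k(\CS(G_k),\Pi(G_k))$ with $\sum_k\lambda_kG_k\preceq_\cx H$.

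Here is the step I expect to be the main obstacle. The number of segments is unbounded a priori, so compactness cannot be read off directly from a finite-dimensional parametrization. I would first establish the $N$-segment bound for every attainable payoff pair, and then obtain compactness as a continuous image of the compact set of $N$-tuples $(\lambda_k,G_k)$ satisfying the closed constraint. That constraint is closed because convex order is preserved under weak limits on $[a,1]$, and the payoffs are continuous by \citet[Proposition~A.7]{kyriazis2026markets}.

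\emph{The segment bound.} For a fixed policy, consider the vectors $\bigl(G_k(\cdot),\CS(G_k),\Pi(G_k)\bigr)$. The aggregate condition depends on $\overline G$ only through its mean and the calls $D_j(\overline G)$, $2\le j<N$ (equation \eqref{eq:finite-feasibility}), together with the support restriction. So each group can be summarized by the vector
\[
\bigl(\E_{G_k}[Y],\,D_2(G_k),\dots,D_{N-1}(G_k),\,\CS(G_k),\,\Pi(G_k)\bigr)\in\R^{N+1}.
\]
Feasibility and payoffs are linear in the mixture of these vectors. For a joint optimum, only the weighted objective matters rather than both payoff coordinates, so the relevant summary is
\[
\bigl(\E_{G_k}[Y],\,D_j(G_k)\,(2\le j<N),\,J_\beta(G_k)\bigr)\in\R^{N}.
\]
The optimal point lies on the boundary of the convex hull of these vectors, being maximal in the last coordinate given the constraint coordinates. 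Carathéodory's theorem for boundary points (dimension $N$, hence at most $N$ points) yields an optimal policy with at most $N$ groups. Replacing the original groups by these $N$ keeps the mean and every $D_j$ at values no larger than before, so feasibility and the value are preserved.

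Existence of a joint optimum then follows: maximize the continuous $J_\beta$ over the compact set of $N$-tuples, and observe that no policy with more groups does better. The only real care needed is in the boundary version of Carathéodory, with dimension counted as $N$, together with the observation that inequality constraints may be replaced by equalities at the chosen aggregate values.
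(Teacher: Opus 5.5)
Your implementability argument is the paper's. Drawing $K$ with $\Pr(K=k\mid Y)=\lambda_k\,\dd G_k/\dd\overline G$, independently of $V$ given $Y$, produces exactly the conditional priors $H_k=\int\nu_y\,G_k(\dd y)$ used there.

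Your segment bound takes a different but valid route. The paper attaches to each segment its conditional primitive prior $h_k\in\Delta_N$. With more than $N$ segments these vectors are linearly dependent, and a dependence $\alpha$ gives reweightings $\lambda+t\alpha$ that preserve $\sum_k\lambda_kh_k=p$ exactly, so each segment stays feasible under its own prior. Optimality then forces $\sum_k\alpha_kJ_\beta(G_k)=0$, and weights can be pushed to zero. You instead summarize each group by moments of its own posterior distribution (mean and the $N-2$ interior calls) and apply boundary Carath\'eodory in $\R^N$. This relies on \eqref{eq:finite-feasibility} and the first part of the lemma to certify the reweighted aggregate, rather than on segment-wise feasibility. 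It is correct and gives the same count.

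The compactness step is misstated, though. You plan to prove an $N$-segment bound for \emph{every} attainable payoff pair and then take a continuous image of $N$-tuples. Your boundary argument cannot give that. It uses maximality in the objective coordinate, which holds for maximizers of a linear functional of $(\CS,\Pi)$, not for interior points of the payoff set. By ordinary Carath\'eodory, your own summary vector in $\R^{N+1}$ yields at most $N+2$ groups for every attainable pair. This is the same number the paper obtains from Carath\'eodory on the set of triples $(h,\CS(G),\Pi(G))$. Any finite bound suffices, so run the compactness argument with $(N+2)$-tuples.

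This also repairs a circularity in your existence step, where the $N$-bound was derived for ``a joint optimum'' before existence was known. Compactness gives attainment of the weighted maximum, and your boundary argument then reduces any maximizer to $N$ groups. Alternatively, apply the boundary argument to an arbitrary finite policy after optimizing its weights over its fixed groups; this is a finite linear program with an attained maximum.
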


\begin{proof}
In any implementation, $Y=\E[V\mid Y,K]$, so Jensen's inequality
gives $\overline G\preceq_\cx H$.

Conversely, suppose $\overline G\preceq_\cx H$. By
Section~\ref{subsec:feasible-learning}, there is a coupling
of $V\sim H$ and $Y\sim\overline G$ with $\E[V\mid Y]=Y$.
Let $\nu_y$ be the conditional law of $V$ given $Y=y$ and,
for each $\lambda_k>0$, set $H_k=\int\nu_y\,G_k(\dd y)$.
Since $G_k\ll\overline G$, the beliefs $\nu_y$ have mean $y$
for $G_k$-almost every $y$. Hence $G_k\preceq_\cx H_k$ and
$\sum_k\lambda_kH_k=H$. The assignment and learning policies
in Appendix~\ref{app:sd-implementation} implement this collection.

For existence, write $p=(p_1,\ldots,p_N)$ and let $\Delta_N$
be the simplex of probabilities on the primitive states.
For $h\in\Delta_N$, put $H_h=\sum_i h_i\delta_{v_i}$.
The set
\[
 \mathcal A
 :=\{(h,\CS(G),\Pi(G)):
       h\in\Delta_N,\ G\in\Delta([a,1]),\ G\preceq_\cx H_h\}
\]
is compact. Indeed, convex-order feasibility is closed, including
when some coordinates of $h$ vanish, and both screening payoffs
are weakly continuous \citep[Proposition~A.7]{kyriazis2026markets}.
All points of $\mathcal A$ satisfy $\sum_i h_i=1$.
Carath\'eodory's theorem therefore represents every point of
$\operatorname{co}\mathcal A$ by at most $N+2$ points, and this
convex hull is compact
\citep[Propositions~1.2.1--1.2.2]{bertsekas2009}.
Its slice with first $N$ coordinates equal to $p$ is nonempty,
compact, and convex. By the implementation argument above, its
projection onto the two payoff coordinates is exactly the joint
attainable payoff set. The weighted objective therefore attains
its maximum.

Take an optimal finite policy, and write $h_k$ for segment $k$'s
conditional primitive probabilities. They satisfy
$\sum_k\lambda_kh_k=p$.
If more than $N$ segments have positive probability, choose
$\alpha\ne0$ with $\sum_k\alpha_kh_k=0$.
Summing coordinates gives $\sum_k\alpha_k=0$.
Both signs of a small change $\lambda_k+t\alpha_k$ preserve
feasibility. Optimality thus requires
$\sum_k\alpha_kJ_\beta(G_k)=0$.
Move in either direction until a probability first reaches zero.
This preserves every primitive-state balance and the weighted
payoff. Repeating leaves at most $N$ segments.
\end{proof}

\subsection{Formulation of the joint program}
\label{app:finite-joint-program}

Let $\mathcal T_a=\{(y,z):a\le y\le1,\ 0\le z\le y\}$,
and let $\mu_Y$ be the posterior-mean marginal of
$\mu\in\Delta(\mathcal T_a)$.

\begin{proposition}[Value of the joint program]
\label{prop:finite-joint-program}
The optimal joint value satisfies
\begin{equation}
 \begin{aligned}
 W_H^{\mathrm{SD}}(\beta)
 &=\max_{\mu\in\Delta(\mathcal T_a)}
       \int F_{y,\beta}(z)\,\mu(\dd y,\dd z)\\
 &\text{subject to}\quad \mu_Y\preceq_\cx H,\\
 &\hspace{18mm}
   \int_{\{y\ge t\}}(z-t)\,\mu(\dd y,\dd z)\ge0
   \quad\text{for every }t\in[a,1].
 \end{aligned}
 \label{eq:finite-joint-LP}
\end{equation}
The maximum is attained by a measure induced by an optimal
policy with at most $N$ segments. Every optimal joint policy
induces a maximizing measure.
\end{proposition}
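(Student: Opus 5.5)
We prove the identity as two inequalities. For the upper bound, we map any feasible joint policy to a feasible measure of \eqref{eq:finite-joint-LP} with the same objective value. For the lower bound, we implement any feasible measure with finite support as a joint policy, and then extend to general measures by approximation.

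\emph{Upper bound.} Fix a feasible joint policy $\{(\lambda_k,G_k)\}$ with seller-optimal menus. In segment $k$, let $z_k(u)=\max\{\phi_{Q_k}(u),0\}$ and push forward Lebesgue measure on ranks under $u\mapsto(Q_k(u),z_k(u))$. This gives a measure $\mu_k$ on $\mathcal T_a$; note that $z_k\le Q_k$ almost everywhere by the properties of ironed virtual values. Let $\mu=\sum_k\lambda_k\mu_k$.

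By \eqref{eq:seller-profit}--\eqref{eq:consumer-surplus}, and because replacing negative virtual values by zero leaves $q$ and $\pi$ unchanged, $\int F_{y,\beta}(z)\,\dd\mu$ equals the policy's weighted payoff. By \Cref{lem:finite-joint-implementation}, $\mu_Y=\overline G\preceq_\cx H$.

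For the screening inequality at threshold $t$, we show within each segment that $\int_{\{Q_k\ge t\}}(\phi_{Q_k}-t)\,\dd u\ge0$. The set $\{Q_k\ge t\}$ is an upper rank interval $[u_t,1]$. Since $\phi=-R'$, the integral equals $R(u_t)-t(1-u_t)$, and $R(u_t)\ge\widehat R(u_t)\ge(1-u_t)t$. Replacing $\phi$ by $\max\{\phi,0\}$ only raises the integrand, so the inequality holds for $z_k$ as well. Some care is needed when $t$ lies in a gap or at an atom, where the right-continuous version $Q^+$ must be used. Averaging across segments gives the aggregate constraint, so $W^{\mathrm{SD}}_H(\beta)$ is at most the program's value.

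\emph{Lower bound (main obstacle).} First take a feasible $\mu$ with finite support whose aggregate posterior marginal has finitely many atoms. We may first replace $z$ by its conditional mean given $y$. Under \Cref{ass:curvature}, $F_{y,\beta}$ is concave in $z$, so this weakly raises the objective, and the constraints are linear in $z$, so they are preserved. We then may also assume $z$ is nondecreasing in $y$; this step must be checked.

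Next we build segments greedily, as sketched after \Cref{thm:finite-joint-design}.

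\begin{itemize}
\item Each segment receives a top type served at $z=y$ and, below it, types $y_1<\dots<y_K$ with prescribed virtual values.
\item The masses $m_i$ are chosen to solve the discrete revenue identities $z_i=y_i-(y_{i+1}-y_i)\,S_{i+1}/m_i$, where $S_{i+1}$ is the mass above $y_i$.
\item This makes the prescribed virtual values exactly the seller's (ironed) marginal revenues, so that $q(z_i)$ is seller-optimal; this requires that the implied polygon be concave.
\end{itemize}

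We peel off the largest multiple of such a segment that the remaining measure permits. The key claim, and the step I expect to be hardest, is that the aggregate screening inequalities $\int_{y\ge t}(z-t)\,\dd\mu\ge0$ are exactly what guarantee nonnegative masses at every stage, and that they remain valid for the residual measure after a segment is removed. I would verify this by writing each segment's contribution as an equality at its own thresholds. Subtracting it from an inequality then leaves an inequality, using a suitably chosen scaling, namely the maximal scaling keeping some constraint tight. Induction on the number of support points would then terminate. The resulting segments average to $\mu$, so \Cref{lem:finite-joint-implementation} implements them and the value is attained.

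For general $\mu$, I would discretize $y$ and $z$ on grids. I would preserve $\mu_Y\preceq_\cx H$ by merging mass within cells at the cell mean, which is a convex-order contraction. The screening constraints would be restored with slack by slightly lowering $z$ at low ranks, which costs $o(1)$. Continuity of $F$ then gives convergence of values.

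\emph{Attainment.} Combining the two bounds, the program's value equals $W^{\mathrm{SD}}_H(\beta)$. The joint optimum from \Cref{lem:finite-joint-implementation}, which uses at most $N$ segments, induces a feasible $\mu$ with this value, so the maximum is attained. Conversely, any optimal joint policy induces a feasible measure whose value equals the maximum, so it is a maximizer.
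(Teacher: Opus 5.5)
Your upper bound is sound and takes a different route from the paper. You obtain the screening inequality from the revenue curve: with $u_t=G_k(t-)$ one has $Q_k^+(u_t)\ge t$, so $\int_{u_t}^1\phi_{Q_k}\,\dd u=R_{Q_k}(u_t)\ge\widehat R_{Q_k}(u_t)\ge(1-u_t)\,t$. The paper instead uses a seller deviation that adds quality $\varepsilon$ to every contract above $t$ at price $t$ per unit. Your attainment step matches the paper's. The problems are in the lower bound.

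\textbf{Monotonicity and the peeling construction.} First, monotonicity of $z$ in $y$ cannot be assumed for an arbitrary feasible $\mu$, and without it the pointwise qualities $q(z_i)$ need not be seller-optimal in a segment containing a non-monotone pair. The paper does not implement arbitrary feasible measures. It fixes the finite posterior law and maximizes over wedges $d_i=y_i-z_i$; this is a compact finite-dimensional problem whose value weakly exceeds that of your conditionally averaged $\mu$. It then derives the needed structure from optimality:
\begin{itemize}
\item capping wedges at $y_i-z^\dagger_{y_i,\beta}$ gives $z_i\ge z^\dagger_{y_i,\beta}>0$ at positive types;
\item let $g_{y,\beta}(z)=(1-\beta)q(z)-(y-z)q'(z)$ be the marginal value of wedge; any pair $y_i<y_j$ with $z_i\ge z_j$ then admits a transfer of wedge mass from $j$ to $i$;
\item there is room for this transfer since $z_i>0$ and $d_j\ge y_j-y_i>0$; it relaxes every screening tail and gains, because $g_{y_i,\beta}(z_i)\ge g_{y_i,\beta}(z_j)>g_{y_j,\beta}(z_j)$ under \Cref{ass:curvature}.
\end{itemize}

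Second, your peeling omits the two choices that make it work. The top of each segment must be the \emph{lowest} remaining zero-wedge type. Then every other member has $d_i>0$, and the masses $b_i=(y_j-y_i)\sum_{\ell\ge j}b_\ell/d_i$ are finite, where $y_j$ is the next member above $y_i$. The segment must also contain \emph{every} remaining type below that top. This is what preserves residual feasibility: the segment then has zero slack at every occupied threshold, so removing any multiple of it leaves the residual slack there unchanged. A segment that skips a remaining type $y_k\in(y_i,y_j)$ has slack $(y_j-y_k)\sum_{\ell\ge j}b_\ell>0$ at $y_k$, and subtracting it can violate a binding residual constraint. The scaling is simply the largest one keeping residual masses nonnegative; contrary to your sketch, it is not chosen to make a constraint tight.

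\textbf{The approximation step.} This fails as written. Merging mass at cell means does preserve $\mu_Y\preceq_\cx H$, and averaging $z$ within cells keeps $0\le z\le y$ automatically. Write the screening constraint as $\mathsf E([t,1]):=\int_{\{y\ge t\}}(y-z)\,\dd\mu\le C_G(t):=\int(y-t)_+\,G(\dd y)$. Merging lowers $C_G(t)$ and moves wedge mass from below $t$ to a cell mean above it, so the constraint can fail by an additive $O(\delta)$.

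Lowering $z$ cannot repair this, because every constraint $\int_{\{y\ge t\}}(z-t)\,\dd\mu\ge0$ is nondecreasing in $z$. Slack requires \emph{raising} $z$, that is, shrinking wedges. Even then, shrinking by $1-\varepsilon$ creates only slack $\varepsilon C_G(t)$, which vanishes at $T=\max\supp G$. Indeed, at $t$ equal to the mean of the top cell, the merged call is zero while any wedge in that cell survives.

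The paper therefore also sets wedges to zero above $T-\eta$. This costs $O(\eta)$ in welfare, because $\mathsf E([T-\eta,1])\le C_G(T-\eta)\le\eta$, and it ensures that rounded wedge reaches only thresholds where $C_G\ge C_G(T-\eta/2)>0$. The paper also rounds by a mean-preserving spread onto a grid containing every primitive value. That preserves each $D_j$ and raises, rather than lowers, the call function.
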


We first derive the screening constraints from a quality upgrade
available to the seller. We then implement an optimal wedge rule
for any fixed finite posterior distribution and approximate
arbitrary feasible distributions. Until attainment is established,
write $S$ for the supremum in \eqref{eq:finite-joint-LP}.
It is finite because the objective is continuous on the compact
set $\mathcal T_a$. Its feasible set contains $\delta_{(m,m)}$.

\paragraph{Screening constraints.}
Consider a market with posterior law $G$ and an optimal direct
menu $(x_G,T_G)$. Set $z_G(y)=c'(x_G(y))$.
By the screening characterization, this is the ironed virtual
value with negative values replaced by zero. Thus
$0\le z_G(y)\le y$ almost surely and
\begin{equation}
 x_G(y)=q(z_G(y)),\qquad
 J_\beta(G)=\int F_{y,\beta}(z_G(y))\,G(\dd y).
 \label{eq:finite-joint-one-market}
\end{equation}

Fix a threshold $t$. Add quality $\varepsilon$ to every contract
intended for a type $y\ge t$, and raise its transfer by
$t\varepsilon$:
\[
 x_\varepsilon(y)=x_G(y)+\varepsilon1_{\{y\ge t\}},
 \qquad
 T_\varepsilon(y)=T_G(y)+t\varepsilon1_{\{y\ge t\}}.
\]
Small positive $\varepsilon$ respects the quality bound because
$x_G\le q(1)<\bar Q$. For a true type $y$ and any report
$\widehat y$, the change in truthful utility relative to that
report is
$\varepsilon(y-t)(1_{\{y\ge t\}}-1_{\{\widehat y\ge t\}})\ge0$.
Incentive compatibility is therefore preserved. Truthful utility
increases by $\varepsilon(y-t)_+$, so participation is preserved
as well, including at an atom at $t$.

The seller cannot profit from this upgrade. Differentiating at
$\varepsilon=0+$ gives
$\int_{[t,1]}[t-c'(x_G(y))]\,G(\dd y)\le0$, or
\begin{equation}
 \int_{[t,1]}(z_G(y)-t)\,G(\dd y)\ge0
 \qquad(a\le t\le1).
 \label{eq:finite-joint-one-tail}
\end{equation}
Differentiation under the integral is valid because $c'$ is
bounded on a compact neighborhood of the optimal quality range.

Averaging the laws of $(Y,z_{G_k}(Y))$ across segments gives a
measure $\mu$ with the same weighted payoff. It satisfies the
screening inequalities above, and its posterior marginal is
dominated by $H$ by \Cref{lem:finite-joint-implementation}.
Hence $W_H^{\mathrm{SD}}(\beta)\le S$.

\paragraph{Conditional averaging and wedges.}
For any feasible $\mu$, put $G=\mu_Y$.
Replacing $Z$ by $\E_\mu[Z\mid Y]$ preserves every constraint
and weakly raises the objective by concavity of $F_{y,\beta}$.
At $y>0$, strict concavity makes the gain strict unless $Z$
is already constant conditional on $Y=y$, almost surely.
At $y=0$, feasibility forces $Z=0$.

Write $d(y)=y-\E_\mu[Z\mid Y=y]$,
$\mathsf E(\dd y)=d(y)G(\dd y)$, and
$u_\beta(y,d)=d q(y-d)+\beta\pi(y-d)$.
The wedge $d(y)$ is the difference between posterior mean and
virtual value; $\mathsf E(A)$ is its total over types in $A$.
With $C_G(t)=\int(y-t)_+G(\dd y)$, the aggregate supremum is
equivalently
\begin{equation}
 S=\sup_{G,\mathsf E}\int u_\beta(y,d(y))\,G(\dd y),
 \qquad d=\frac{\dd\mathsf E}{\dd G},
 \label{eq:finite-joint-perspective}
\end{equation}
subject to
\begin{equation}
 G\preceq_\cx H,\qquad
 0\le\mathsf E\le yG,\qquad
 \mathsf E([t,1])\le C_G(t)
 \quad(a\le t\le1).
 \label{eq:finite-joint-wedge-constraints}
\end{equation}
The measure bound gives $0\le d(y)\le y$ almost surely.
Conversely, a feasible pair defines a feasible aggregate measure
by setting $Z=Y-d(Y)$.

\paragraph{Implementation with finitely many posterior means.}
\label{implementation_joint}
Fix
\begin{equation}
 G=\sum_{i=1}^M w_i\delta_{y_i}\preceq_\cx H,
 \qquad a\le y_1<\cdots<y_M\le1,\quad w_i>0.
 \label{eq:finite-joint-grid-learning}
\end{equation}
Holding $G$ fixed, choose the total wedge $e_i$ at each type:
\begin{equation}
 \max_e\sum_{i=1}^M w_i u_\beta(y_i,e_i/w_i)
 \label{eq:finite-joint-grid-objective}
\end{equation}
subject to
\begin{equation}
 0\le e_i\le y_iw_i,\qquad
 \sum_{j\ge i}e_j\le\sum_{j>i}w_j(y_j-y_i),
 \quad1\le i\le M.
 \label{eq:finite-joint-grid-tails}
\end{equation}
Zero wedges are feasible, and compactness gives a maximizer.
It is enough to check the screening constraints at occupied
posterior values. Between two such values, the total wedge at
or above the threshold is the same as at the upper value,
while the call is at least as large. The same comparison covers
thresholds below $y_1$; above $y_M$, both sides are zero.

\begin{lemma}[Finite posterior implementation]
\label{lem:finite-joint-grid}
For the fixed law $G$ in \eqref{eq:finite-joint-grid-learning},
the value of \eqref{eq:finite-joint-grid-objective} equals maximal
joint welfare among policies with aggregate posterior law $G$.
Every maximizing wedge rule can be implemented with at most $M$
segments, with each type receiving its prescribed virtual value
in every segment where it appears.
\end{lemma}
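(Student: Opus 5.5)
The plan is to prove the two inequalities separately and then count segments. For the upper bound, take any finite joint policy whose aggregate posterior law is the $G$ in \eqref{eq:finite-joint-grid-learning}. I would use \eqref{eq:finite-joint-one-market} and the one-segment screening inequality \eqref{eq:finite-joint-one-tail} in each segment, and average across segments. Replacing $Z$ by $\E[Z\mid Y]$ weakly raises the objective, by the conditional-averaging step above. Setting $e_i:=w_i\,(y_i-\E[Z\mid Y=y_i])$ then yields a vector satisfying \eqref{eq:finite-joint-grid-tails}, with the tail constraints checked at occupied values exactly as in the remark preceding the lemma. Hence joint welfare with aggregate law $G$ is at most the value of \eqref{eq:finite-joint-grid-objective}.

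For the converse, fix a maximizing wedge vector $e$, put $d_i=e_i/w_i$ and $z_i=y_i-d_i$. I would first extract structural facts from optimality of $e$, using strict concavity of $u_\beta(y,\cdot)$ (equivalently of $F_{y,\beta}$ under \Cref{ass:curvature}):
\begin{itemize}
 \item the constraint at $i=M$ forces $e_M=0$, so $z_M=y_M$;
 \item $z_i>0$ whenever $y_i>0$, since raising any $z$ toward $z^\dagger_{y_i,\beta}$ relaxes all tail constraints and raises the objective;
 \item $z_1\le\cdots\le z_M$.
\end{itemize}
For the last fact, if $z_i>z_{i+1}$ I would shift wedge mass from type $i+1$ to type $i$ (or the reverse) along a direction that keeps every partial tail sum feasible, and show by a Jensen-type exchange that this weakly improves the objective, choosing a maximizer with this monotonicity when ties occur.

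Next I would build segments by a top-down peeling. A segment $k$ contains a chain of types $y_{i_1}<\dots<y_{i_L}=y_M$ with masses $n^k_{i}$. Within it, the discrete virtual value of type $i_\ell$ is
\[
y_{i_\ell}-\frac{N^k_{>i_\ell}}{n^k_{i_\ell}}\,(y_{i_{\ell+1}}-y_{i_\ell}),
\]
where $N^k_{>i}$ is the segment's mass above $i$. Start the segment with a share of the top type and move downward. At each type $i$, set the mass $n_i$ so that this formula returns exactly $z_i$, i.e.\ $n_i=N_{>i}(y_{\mathrm{next}}-y_i)/d_i$, and cap it at the remaining mass of type $i$. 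If the cap binds, that type is exhausted; otherwise the segment absorbs all the wedge it needs there. Because the $z_i$ are nondecreasing and positive, the seller's ironed virtual values in the segment coincide with these discrete values. So the prescribed qualities $q(z_i)$ are profit maximizing, and in particular no ironing occurs.

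I would then remove the segment and check that the residual masses and wedges still satisfy \eqref{eq:finite-joint-grid-tails}. This is the "removing a segment leaves the screening inequalities satisfied" claim. It should follow because the segment uses, at each threshold, exactly the rent it generates: its own tail-wedge total equals its own call, since its inequalities bind along the chain. Subtracting therefore lowers both sides of each residual inequality by the same amount. Each round exhausts at least one type, so at most $M$ segments arise. Summing over segments reproduces $G$ and every $e_i$, hence every payoff coordinate, and \Cref{lem:finite-joint-implementation} supplies the conditional priors.

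The main obstacle is the peeling step. It has to respect the caps and keep the residual feasible, in particular when a chain must skip a type whose mass ran out in an earlier segment. Skipping changes the gap $y_{\mathrm{next}}-y_i$ used in the formula for $n_i$. I would handle this by ordering the peeling so that each segment runs through all currently remaining types, and by invoking the binding tail constraints to show that the residual slack is never negative.
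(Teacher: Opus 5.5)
Your outline follows the paper's route, and the upper-bound half is fine. That route is: the one-segment upgrade inequalities plus conditional averaging for the upper bound, then a top-down peeling in which segment masses solve a discrete hazard recursion so that each type's discrete virtual value is $z_i$, with residual feasibility coming from zero own slack at occupied thresholds.

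\textbf{The segment construction fails as you state it.} Suppose you fix a share of the top type and cap type $i$'s mass at its remaining mass $r_i<n_i$. Its discrete virtual value in that segment then becomes $\phi_i=y_i-N_{>i}(y_{\mathrm{next}}-y_i)/r_i<z_i$. Three things go wrong:
\begin{itemize}
\item that type does not receive its prescribed quality;
\item the segment absorbs more than $r_id_i$ of wedge at $y_i$;
\item the segment's slack computed with the prescribed values, $\sum_{\ell\ge k}n_\ell(z_\ell-y_k)$, is no longer zero: it picks up $r_i(z_i-\phi_i)>0$ at every threshold $y_k\le y_i$. Removing the segment can therefore make the residual infeasible.
\end{itemize}
A segment has to be shrunk as a whole, not capped type by type. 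The paper first computes unnormalized proportions $b$: mass $1$ at the top, then $b_id_i=(y_j-y_i)\sum_{\ell\ge j}b_\ell$ down through all remaining types. It then removes the largest multiple $\alpha=\min_{i:b_i>0}r_i/b_i$. This exhausts at least one type and keeps every virtual value exactly at $z_i$. Writing $B_i(r)=\sum_{\ell\ge i}r_\ell(z_\ell-y_i)$, it also gives $B_i(r')=B_i(r)-\alpha B_i(b)=B_i(r)$ at every occupied threshold.

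\textbf{Two further steps are missing.}

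First, the recursion divides by $d_i$, so every type below a segment's top needs $d_i>0$. You only show $z_i>0$, i.e.\ $d_i<y_i$. Topping every segment at $y_M$ therefore requires a separate argument that no positive type below $y_M$ carries zero wedge; this is true at a maximizer, but it needs its own exchange. It also breaks outright at a type $y_1=0$, possible when $a=0$, where $d_1=0$ is forced. The paper avoids both problems by taking as top the lowest remaining type with zero wedge. One always exists, because the highest remaining type's constraint reads $-r_id_i\ge0$, and a zero type then forms a singleton segment.

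Second, the lemma covers every maximizing wedge rule, so you cannot select a monotone maximizer. A merely weakly improving exchange leaves non-monotone maximizers uncovered, and at those the seller would iron. The exchange is in fact strict. If $0<y_i<y_j$ and $z_i\ge z_j$, moving wedge from $j$ to $i$ relaxes every tail and has marginal gain $g_{y_i,\beta}(z_i)-g_{y_j,\beta}(z_j)>0$. Here $g_{y,\beta}(z)=(1-\beta)q(z)-(y-z)q'(z)$, with $\partial_z g_{y,\beta}>0$ by \Cref{ass:curvature} and $\partial_y g_{y,\beta}=-q'(z)<0$. Virtual values therefore strictly increase across positive types at every maximizer.
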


\begin{proof}
The screening inequalities and conditional averaging give the
upper bound. Take a maximizing wedge vector and put
$d_i=e_i/w_i$ and $z_i=y_i-d_i$.
At every positive posterior value,
\begin{equation}
 0\le d_i\le y_i-z_{y_i,\beta}^{\dagger}<y_i.
 \label{eq:finite-joint-wedge-cap}
\end{equation}
Otherwise, reducing the wedge to this bound would strictly
raise its contribution to welfare and relax every screening
constraint. In particular, $z_i>0$ whenever $y_i>0$.

The marginal return from adding wedge mass at a type is
\begin{equation}
 \begin{gathered}
 g_{y,\beta}(z)=(1-\beta)q(z)-(y-z)q'(z),\\
 \partial_zg_{y,\beta}
 =(2-\beta)q'(z)-(y-z)q''(z)>0,\qquad
 \partial_yg_{y,\beta}=-q'(z)<0
 \quad(0<z\le y).
 \end{gathered}
 \label{eq:finite-joint-wedge-marginal}
\end{equation}
Suppose $0<y_i<y_j$ but $z_i\ge z_j$.
Move a small amount of wedge mass from $j$ to $i$.
The pointwise bounds permit this: $z_i>0$, while
$d_j=y_j-z_j\ge y_j-y_i>0$.
Every screening constraint either stays unchanged or relaxes.
The marginal gain is positive because
$g_{y_i,\beta}(z_i)\ge g_{y_i,\beta}(z_j)>
g_{y_j,\beta}(z_j)$, a contradiction.
Thus virtual values strictly increase across positive types.
At a zero posterior mean, the virtual value is zero.

We now divide the population into screening markets.
At each step, let $r_i$ be the remaining mass at $y_i$, initially
$r_i=w_i$, and write
$B_i(r)=\sum_{\ell\ge i}r_\ell(z_\ell-y_i)$
for the screening slack. The remaining masses satisfy
$B_i(r)\ge0$. In choosing a market, ignore types whose remaining
mass is zero.

The highest remaining type must have zero wedge, since its
constraint is $-r_i d_i\ge0$. Choose the \emph{lowest} remaining
type with zero wedge, indexed by $h$, as the top of the next
market. This market will contain every remaining type below it.
Its top will therefore receive efficient quality, and every
lower type has $d_i>0$.

Give the top type unnormalized mass $b_h=1$, set $b_i=0$
outside this market, and work downward. If $j$ is the next
remaining type above $i<h$, choose
\begin{equation}
 b_i=\frac{y_j-y_i}{d_i}\sum_{\ell\ge j}b_\ell,
 \qquad\text{so that}\qquad
 b_i d_i=(y_j-y_i)\sum_{\ell\ge j}b_\ell.
 \label{eq:finite-joint-discrete-hazard}
\end{equation}
These masses are positive and finite. Their proportions make
the discrete virtual value at $y_i$ equal to
$y_i-(y_j-y_i)\sum_{\ell\ge j}b_\ell/b_i=y_i-d_i=z_i$.
At the top it is $y_h=z_h$. After normalizing the masses,
the seller's objective is the sum of
$z_i x_i-c(x_i)$ weighted by those probabilities.
Its coordinatewise maximizers $x_i=q(z_i)$ are nondecreasing,
so they also solve the seller's screening problem.

The constructed market has zero slack at every currently
occupied threshold. This is immediate above $h$ and at $h$.
At consecutive occupied types below $h$,
$B_i(b)-B_j(b)=-b_i d_i+(y_j-y_i)\sum_{\ell\ge j}b_\ell=0$
by \eqref{eq:finite-joint-discrete-hazard}.
Descending induction gives $B_i(b)=0$ at all such thresholds.

Remove the largest multiple of this market that the remaining
population permits: set
$\alpha=\min_{i:b_i>0}r_i/b_i$ and $r'_i=r_i-\alpha b_i$.
The extracted segment has probability
$\lambda=\alpha\sum_i b_i$ and conditional probabilities
$b_i/\sum_\ell b_\ell$.
All residual masses are nonnegative, at least one type is
exhausted, and every previously occupied threshold still satisfies
$B_i(r')=B_i(r)-\alpha B_i(b)=B_i(r)\ge0$.
If no mass remains, the construction is complete.
Otherwise, take any other threshold $t$ at or below the residual
maximum, and let $y_j$ be the next occupied value. Its slack is
\[
 \sum_{\ell:y_\ell\ge t}r'_\ell(z_\ell-t)
 =B_j(r')+(y_j-t)\sum_{\ell\ge j}r'_\ell\ge0.
\]
Above the residual maximum the constraint is zero.
The remaining population is therefore feasible for the next step.

Each step exhausts at least one type, so the construction ends
after at most $M$ segments. A zero posterior type causes no
exception: it is selected first and forms a singleton segment.
The segment probabilities sum to one, their posterior laws
average to $G$, and every type keeps its prescribed virtual
value. Consequently, aggregate profit and consumer surplus are
$\int\pi(z(y))G(\dd y)$ and
$\int(y-z(y))q(z(y))G(\dd y)$, respectively.
They give exactly the value in \eqref{eq:finite-joint-grid-objective}.
Since $G\preceq_\cx H$, \Cref{lem:finite-joint-implementation}
implements these markets through assignment and private learning.
\end{proof}

\paragraph{Passage to unrestricted posterior means.}
\label{app:finite-joint-limit}
Rounding posterior means preserves the information constraints,
but can move wedge mass above a threshold and violate a screening
constraint. The following approximation leaves enough room for
this movement by reducing the wedges before rounding.

\begin{lemma}[Finite approximation]
\label{lem:finite-joint-approximation}
For every pair $(G,\mathsf E)$ satisfying
\eqref{eq:finite-joint-wedge-constraints} and every $\rho>0$,
there is a feasible pair with finite posterior support whose
objective is at least $\int u_\beta(y,d(y))G(\dd y)-\rho$.
\end{lemma}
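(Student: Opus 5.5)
We fix $(G,\mathsf E)$ satisfying \eqref{eq:finite-joint-wedge-constraints} and $\rho>0$, and proceed in three steps: shrink the wedges, round the posterior means inside primitive intervals, and transport the wedges along the rounding.

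\textbf{Step 1: shrink the wedges.} Fix $\delta\in(0,1)$ and replace $\mathsf E$ by $(1-\delta)\mathsf E$, so the wedge density becomes $(1-\delta)d$. The pointwise bounds continue to hold, and the screening inequalities now have slack,
\[
 (1-\delta)\mathsf E([t,1])\le C_G(t)-\delta\,\mathsf E([t,1]).
\]
The function $u_\beta(y,d)$ is continuous on the compact set $\{0\le d\le y\le1\}$. Hence uniform continuity gives an objective loss that vanishes as $\delta\downarrow0$. We choose $\delta$ so that this loss is at most $\rho/3$.

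Shrinking alone gives slack only where $\mathsf E([t,1])$ is bounded away from zero. Near the top of the support, where both sides are small, we therefore also use the call itself. On the region where $C_G(t)$ is positive, the inequality $\mathsf E([t,1])\le C_G(t)$ combined with shrinking gives slack of at least $\delta\,\mathsf E([t,1])$. Where $C_G(t)=0$, we have $\mathsf E([t,1])=0$ as well. This case is handled by the construction in Step 3, which never moves wedge mass upward.

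\textbf{Step 2: round the means.} Take a fine partition of $[a,1]$ that includes every primitive value $v_j$ and has mesh $\epsilon$. Replace $G$ restricted to each cell by an atom at the cell's conditional mean. This is a mean-preserving contraction, so $\widetilde G\preceq_\cx G\preceq_\cx H$ and Bayes plausibility is preserved. Since each cell lies in one primitive interval, \Cref{lem:finite-interval-feasibility} also applies directly.

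Define $\widetilde{\mathsf E}$ by assigning each cell's total wedge $(1-\delta)\mathsf E(\text{cell})$ to that cell's atom. The pointwise bound holds because the atom's value is the cell's conditional mean, which is at least the average of $d$ over the cell.

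\textbf{Step 3: the screening constraints (main obstacle).} The screening inequalities are the delicate part. For $t$ inside a cell, wedge mass from below $t$ in that cell may jump to an atom above $t$, while $C_{\widetilde G}(t)\le C_G(t)$. We plan to prove
\[
 \widetilde{\mathsf E}([t,1])\le\mathsf E([t',1])
\]
with $t'$ the left endpoint of the cell containing $t$. We also need $C_G(t')-C_{\widetilde G}(t)=O(\epsilon)\,G([t',1])$.

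To close the gap, we use the $\delta$-slack in the form $\delta\,\mathsf E([t',1])$. Where $\mathsf E([t',1])$ is small relative to $\epsilon\,G([t',1])$, a crude bound is not enough. There, first try an extra device: cap the wedge at $\max\{d-\eta,0\}$, with $\eta>\epsilon$ chosen small, or directly $(1-\delta)d-\eta$. Because the call satisfies $C(t)\ge(y-t)G(\dd y)$ near occupied values, subtracting $\eta G$ from $\mathsf E$ frees $\eta\,G([t,1])$ of room. This dominates the $O(\epsilon)\,G([t',1])$ error once $\epsilon\ll\eta$. The corresponding objective loss is again small by uniform continuity.

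With $\delta$, $\eta$, and then $\epsilon$ chosen in that order, the rounded pair $(\widetilde G,\widetilde{\mathsf E})$ is feasible. Its objective is at least $\int u_\beta(y,d)\,\dd G-\rho$. For the rounding, the objective error is controlled by weak convergence and uniform continuity, using concavity of $u_\beta(y,\cdot)$ (equivalently, of $F_{y,\beta}$) so that averaging wedges within a cell only helps.
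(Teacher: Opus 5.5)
Your Step 3 does not go through. The failure is at the top of the support, not in the within-cell bookkeeping you flag.

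\textbf{The counterexample.} Let $\bar y^\ast$ be the atom produced by the top cell. Under the contracted law, $C_{\widetilde G}(\bar y^\ast)=0$, so the screening constraint at $t=\bar y^\ast$ forces that cell to carry no wedge at all. Your rule gives it $(1-\delta)\mathsf E(\text{cell})$, reduced by at most $\eta$ per unit of mass. Subtracting $\eta G$ and truncating at zero frees only $\int_{[t,1]}\min\{d,\eta\}\,\dd G$, not $\eta\,G([t,1])$. Nothing in \eqref{eq:finite-joint-wedge-constraints} forces $d\le\eta$ near $T:=\max\supp G$. For instance, take the following pair:
\begin{itemize}
\item an atom of mass $p$ at $1$ with zero wedge;
\item atoms of mass $p2^{-k}$ at $1-2^{-k}$, for $k\ge k_0$, each with $d=1/2$;
\item the remaining mass at lower values with zero wedge, feasible under a suitable binary prior.
\end{itemize}
For $t\in(1-2^{-n},1-2^{-n-1}]$ the wedge tail is $p2^{-n-1}\le p(1-t)\le C_G(t)$, so the pair is admissible. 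Yet for every mesh the top cell contains infinitely many of these atoms. So for $\eta<1/2$ and small $\delta$ it carries positive wedge at $\bar y^\ast$, which is infeasible. Contraction pushes the call to zero strictly below $T$ while that cell's wedge remains. Your remark that the construction never moves wedge upward does not address this, and it is false anyway: contraction lifts values below a cell's mean, with their wedges, to that mean.

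\textbf{The missing idea.} This is how the paper proceeds: truncate by \emph{location}, not by wedge size. Set the wedge to zero on $(T-\eta,1]$. This costs at most $(1+\beta)q(1)\,\mathsf E([T-\eta,1])\le(1+\beta)q(1)\,C_G(T-\eta)\le(1+\beta)q(1)\eta$.

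Once the mesh is below $\eta/2$, every threshold with a positive rounded wedge tail lies below $T-\eta/2$. There the shrinking slack satisfies $C_G(t)-(1-\delta)\mathsf E([t,1])\ge\delta C_G(t)\ge\delta C_G(T-\eta/2)>0$. Note the slack is $\delta C_G(t)$, not merely $\delta\,\mathsf E([t,1])$ as you wrote. This fixed positive slack absorbs all $O(\epsilon)$ errors, and your $\eta$-device becomes unnecessary.

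\textbf{How the paper's rounding differs.} The paper rounds by mean-preserving spreads onto the two neighbouring grid points. These preserve every $D_j$ because the grid contains the primitive values, and the call can only rise: $C_{G_\delta}\ge C_G$. The price is that a type rounded down may fall below its wedge. The paper handles this by dropping wedges below a level $\gamma$ and taking the mesh below $\varepsilon\gamma$.

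With the top truncation, your contraction route also works. It gives $0\le\widetilde{\mathsf E}\le y\widetilde G$ for free, and the call drops by at most the mesh. Your final objective comparison needs similar care near zero, however. Jensen in $d$ at the cell mean $\bar y_c$ requires every individual wedge in the cell to be at most $\bar y_c$. This can fail for $y$ of order $\epsilon/\delta$. Truncating small types fixes it cheaply, since $u_\beta(y,d)\le yq(y)+\beta\pi(y)\to0$ as $y\to0$.
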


\begin{proof}
Use a grid on $[a,1]$ containing both endpoints and every
primitive value, with maximum spacing $\delta$.
Replace each $Y=y$ by a lottery over the two neighboring grid
points with mean $y$, leaving grid points fixed.
Write $Y_\delta$ for the result and $G_\delta$ for its law.
Then $\E[Y_\delta\mid Y]=Y$ and $|Y_\delta-Y|\le\delta$.

Because the grid contains every primitive value, each function
$(y-v_j)_+$ is affine between consecutive grid points.
Rounding therefore preserves the mean and every $D_j(G)$,
so $G_\delta\preceq_\cx H$ by \eqref{eq:finite-feasibility}.
Jensen's inequality also gives $C_{G_\delta}(t)\ge C_G(t)$
at every threshold.

If $\mathsf E=0$, keep zero wedges after rounding; continuity
of $u_\beta(y,0)$ gives the conclusion. Otherwise put
$T=\max\supp G$. We have $T>a$, since a point mass at $a$
cannot carry a positive wedge.
For small $\gamma>0$, $0<\eta<T-a$, and
$\varepsilon\in(0,1)$, set
\[
 \widehat d(y)
 =(1-\varepsilon)d(y)1_{\{\gamma\le y\le T-\eta\}}.
\]
The wedge mass removed below $\gamma$ is at most $\gamma$.
At the top, $\mathsf E([T-\eta,1])\le C_G(T-\eta)\le\eta$.
The bound
$|u_\beta(y,d)-u_\beta(y,0)|\le(1+\beta)q(1)d$
therefore makes the absolute welfare change from these removals
at most $(1+\beta)q(1)(\gamma+\eta)$.
Reducing the remaining wedges by the factor $1-\varepsilon$
has a vanishing welfare effect as $\varepsilon\downarrow0$,
by uniform continuity of $u_\beta$ on $0\le d\le y\le1$.

Carry the adjusted wedges with the rounded posterior values
and average them at each grid point:
\[
 d_\delta(y')=\E[\widehat d(Y)\mid Y_\delta=y'],
 \qquad
 \mathsf E_\delta(\dd y')=d_\delta(y')G_\delta(\dd y').
\]
Where $\widehat d(Y)>0$, we have
$\widehat d(Y)\le(1-\varepsilon)Y\le Y-\varepsilon\gamma$.
Thus $\delta<\varepsilon\gamma$ ensures
$\widehat d(Y)\le Y_\delta$, and conditional averaging gives
$0\le d_\delta(y')\le y'$.

For any threshold $t$, rounding by at most $\delta$ gives
\begin{equation}
 \begin{aligned}
 \mathsf E_\delta([t,1])
 &\le(1-\varepsilon)
       \mathsf E([\max\{a,t-\delta\},1])\\
 &\le(1-\varepsilon)\{C_G(t)+\delta\}.
 \end{aligned}
 \label{eq:finite-joint-rounding-tail}
\end{equation}
The second inequality uses the original screening constraint
and the fact that lowering a threshold by at most $\delta$
increases its call by at most $\delta$.
If the left-hand side is positive, a retained wedge has been
rounded from a value no greater than $T-\eta$ to one at least $t$.
Hence $t\le T-\eta+\delta$. For $\delta<\eta/2$, this implies
$C_G(t)\ge C_G(T-\eta/2)=:c_\eta>0$.
The strict sign follows because $T$ is the highest support point.
Choose the grid fine enough that
$(1-\varepsilon)\delta\le\varepsilon c_\eta$.
Equation~\eqref{eq:finite-joint-rounding-tail} then gives
$\mathsf E_\delta([t,1])\le C_G(t)\le C_{G_\delta}(t)$.
If the left-hand side is zero, the constraint holds automatically.
Thus the rounded pair is feasible.

Finally, averaging wedges at each grid point weakly raises the
objective by concavity. With $\gamma,\eta,\varepsilon$ fixed,
uniform continuity gives
\[
 \E[u_\beta(Y_\delta,d_\delta(Y_\delta))]
 \ge\E[u_\beta(Y_\delta,\widehat d(Y))]
 =\E[u_\beta(Y,\widehat d(Y))]+o(1).
\]
Choose $\gamma,\eta,\varepsilon$ first to make the wedge
adjustments cost arbitrarily little, and then choose $\delta$
to satisfy the preceding bounds and make the rounding error small.
The total loss is less than $\rho$. No continuity of $d$ is needed.
\end{proof}

\begin{proof}[Proof of \Cref{prop:finite-joint-program}]
The quality-upgrade argument gives
$W_H^{\mathrm{SD}}(\beta)\le S$.
For the reverse inequality, take any feasible aggregate measure
$\mu$ and any $\rho>0$. Conditional averaging weakly raises its
objective. Apply \Cref{lem:finite-joint-approximation} to the
resulting wedge pair, losing at most $\rho$. For the resulting
finite posterior law, optimize wedges and apply
\Cref{lem:finite-joint-grid}. This gives an implementable policy
with value at least $\int F_{y,\beta}(z)\mu(\dd y,\dd z)-\rho$.
Letting $\rho\downarrow0$ and taking the supremum over $\mu$
proves $S\le W_H^{\mathrm{SD}}(\beta)$.

By \Cref{lem:finite-joint-implementation}, an optimal joint policy
exists with at most $N$ segments. Its aggregate measure attains
this common value, as does the measure induced by any other
optimal policy. This proves the value statement of
\Cref{thm:finite-joint-design} and the upper bound in part~(iii).
\end{proof}

\subsection{Common quality, participation, and ordering}
\label{app:finite-joint-allocation}

Every optimal policy induces a maximizer of the aggregate program.
Jensen's inequality therefore requires $Z=z(Y)$ almost surely:
otherwise averaging virtual values at a given posterior mean
would strictly raise welfare. Thus every segment uses the same
quality rule $x(y)=q(z(y))$, almost surely.
The wedge reduction used in \eqref{eq:finite-joint-wedge-cap}
also applies here: raising $z(y)$ to $z_{y,\beta}^{\dagger}$
where it is smaller raises welfare and relaxes every screening tail.
Hence $z_{y,\beta}^{\dagger}\le z(y)\le y$ at almost every $y>0$.
This gives the stated quality bounds and serves every positive
posterior type. At zero, virtual value and quality are zero.

We next show that higher posterior means receive strictly higher
virtual values, up to null sets. Suppose instead that
$y<y'$ and $z(y)\ge z(y')$ on a set of positive
$\overline G\otimes\overline G$ measure.
The participation bound gives
$0<z(y')\le z(y)\le y<y'$, so
\eqref{eq:finite-joint-wedge-marginal} implies
\[
 g_{y,\beta}(z(y))
 \ge g_{y,\beta}(z(y'))
 >g_{y',\beta}(z(y')).
\]
Adding wedge at the lower posterior mean therefore gains more
than removing the same amount at the higher one costs. There is room
for both changes: $z(y)>0$ and
$d(y')=y'-z(y')\ge y'-y>0$.

Choose a violating pair in the support of the law of $(Y,z(Y))$.
By continuity of $g$ where $z>0$, sufficiently small neighborhoods
give two sets $L,U$ of positive probability with
$\sup L<\inf U$ and
$\inf_L g_{y,\beta}(z(y))>\sup_U g_{y,\beta}(z(y))$.
Virtual values on both sets and wedges on $U$ can be bounded
away from zero. Let $G_L,G_U$ be the conditional posterior laws.
For small $\tau>0$, replace $\mathsf E$ by
$\mathsf E+\tau G_L-\tau G_U$, keeping $\overline G$ fixed.
The pointwise bounds are preserved, and every screening tail
weakly decreases because $G_L([t,1])\le G_U([t,1])$.
The first-order payoff gain is
\[
 \tau\left\{\int g_{y,\beta}(z(y))\,G_L(\dd y)
             -\int g_{y,\beta}(z(y))\,G_U(\dd y)\right\}>0.
\]
The remainder is $o(\tau)$ since the affected virtual values
stay away from zero. This contradicts optimality and proves
$z(y)<z(y')$ for almost every pair $y<y'$.

In particular, a lower posterior mean cannot have a higher
virtual value on sets of positive probability. For each rational
$c\in[0,1]$, the set $\{y:z(y)>c\}$ therefore agrees almost surely with
an upper interval of posterior means. Otherwise some cutoff would
leave positive mass from this set below positive mass from its
complement. At each $y$, take the supremum of the levels whose
upper interval contains $y$, or zero if there are none.
This defines a nondecreasing representative of $z$, agreeing
with it almost surely because there are only countably many levels.
Strict increase of $q$ gives the same
ordering of qualities, proving part~(i).

On a connected interval where two segments both separate,
their private learning must be optimal under their conditional
priors. Applying \Cref{lem:finite-separating,lem:finite-primitive-fit},
with the normalization below, gives positive densities and
continuous allocations, including across primitive values.
Almost-sure common quality thus identifies the continuous
quality rules on the whole interval. The IC envelope gives
$U_k'=U_\ell'=x$ almost everywhere, so $U_k-U_\ell$ is constant.
Since $t_k(y)=yx(y)-U_k(y)$, transfers differ by the opposite
constant. The constant may change across a gap in either
segment's support.

\subsection{Posterior atoms and revelation}
\label{app:finite-joint-atoms}

Write $G=\overline G$ and $\mathsf E=dG$ for an optimal policy.
A distorted atom at $w>0$ has mass $r>0$ and wedge $0<d(w)<w$,
since every positive posterior type trades.
We first consider $a<w<1$. Writing $z=y-d$, direct differentiation
gives
\[
 \begin{aligned}
 (u_\beta)_{yy}&=d q''(z)+\beta q'(z),\\
 (u_\beta)_{yd}&=(1-\beta)q'(z)-d q''(z),\\
 (u_\beta)_{dd}&=-(2-\beta)q'(z)+d q''(z),
 \end{aligned}
 \qquad \det D^2u_\beta=-q'(z)^2<0.
\]
Put $d=d(w)$, $z=w-d$, and
$\tau=-(u_\beta)_{yd}/(u_\beta)_{dd}$ at $(w,d)$.
Replace the atom by equal masses at
$(w-\varepsilon,d-\tau\varepsilon)$ and
$(w+\varepsilon,d+\tau\varepsilon)$.
Small $\varepsilon>0$ keeps both pairs in $0<d<y$ and
preserves the posterior mean and total wedge. First-order changes
cancel, while curvature gives
\begin{equation}
 (u_\beta)_{yy}+2\tau(u_\beta)_{yd}
                  +\tau^2(u_\beta)_{dd}
 =\frac{q'(z)^2}{(2-\beta)q'(z)-d q''(z)}>0.
 \label{eq:finite-joint-atom-spread}
\end{equation}
The payoff gain is $r\varepsilon^2/2$ times this expression,
plus $o(\varepsilon^2)$. If a new posterior value already carries
mass, averaging the wedges there preserves all constraints and
weakly raises welfare.

The screening constraints remain satisfied. The posterior spread
weakly raises $C_G(t)$ at every threshold. The wedge tail is
unchanged outside $(w-\varepsilon,w+\varepsilon]$ and falls
by $r(d-\tau\varepsilon)/2$ when
$w-\varepsilon<t\le w$.
For $w<t\le w+\varepsilon$, it instead rises by
$r(d+\tau\varepsilon)/2$. To cover this increase, note that
the original slack $S(t)=C_G(t)-\mathsf E([t,1])$ satisfies
$S(w+)=S(w)+rd\ge rd$: immediately above $w$, the entire
wedge atom leaves the tail. The added wedge tends to $rd/2$.
Thus $\inf_{w<t\le w+\varepsilon}S(t)>r(d+\tau\varepsilon)/2$
for small $\varepsilon$, so these thresholds remain feasible too.

If $w$ is not primitive, choose $\varepsilon$ so both new
posterior means stay between the same adjacent primitive values.
Their mean is unchanged, so every $D_j$ is unchanged.
The resulting feasible improvement contradicts optimality.
If $w=v_j$ is interior and $D_j(G)<\kappa_j$, choose
$\varepsilon$ smaller than the distances to the neighboring
primitive values. Only $D_j$ changes, increasing by
$r\varepsilon/2$. For small $\varepsilon$ this is less than
its slack, giving the same contradiction. Hence every distorted
atom is primitive, and an interior one requires $D_j(G)=\kappa_j$.

At a binding constraint, Jensen's inequality gives
$\E[(V-v_j)_+\mid Y,K]\ge(Y-v_j)_+$ with zero expected slack.
Equality therefore holds almost surely. On $Y=v_j$, it excludes
$V>v_j$; the posterior mean $v_j$ then forces $V=v_j$.
At the endpoint means $a$ and one, the support bounds give the
same conclusion directly. Thus every distorted atom reveals
its primitive valuation to the buyer.

For efficient atoms, apply the private-learning results within
each segment. Its conditional prior is supported on a subset
of the primitive values, and its learning must be optimal for
that prior. A degenerate prior already reveals its value.
For a nondegenerate prior with highest value $b>0$, divide
valuations, transfers, and cost by $b$.
This scales both payoffs by $1/b$ and gives
$\widetilde q(z)=q(bz)$, with
\[
 (2-\beta)\widetilde q'(z)-(y-z)\widetilde q''(z)
 =b\{(2-\beta)q'(bz)-(by-bz)q''(bz)\}>0
 \quad(0<z<y\le1).
\]
The other cost assumptions also hold, and $q(b)<\bar Q$.
After omitting primitive values of zero conditional probability,
\Cref{thm:finite-structure} therefore applies.
Every nonterminal trading atom is strictly distorted:
if its value is $w$, its mass is $p$, and the mass $\sigma$
above it begins at $B>w$, revenue contact gives virtual value
$w-\sigma(B-w)/p<w$.
The only efficient positive atom is consequently the terminal
atom at the conditional prior's highest value, which it reveals.

An aggregate atom is an atom in some positive-probability segment.
Common quality ensures that, if it is efficient, it is efficient
in every segment placing mass there. The preceding argument
therefore gives revelation in all such segments. An atom at zero
also reveals its value by the support bound.
Every aggregate atom is thus primitive and revealing.
Each segment's nonterminal trading atoms lie below its highest
primitive value, so there are at most $N-1$ of them.
This proves part~(ii).

\subsection{A global bound and its equality conditions}
\label{app:finite-joint-certificate}

We will use the following bound to construct priors that require
all $N$ segments. Let $k:[a,1]\to\R_+$ be continuous and
nondecreasing, with $k(a)=0$, and define
\[
 K(y)=\int_a^y k(t)\,\dd t,\qquad
 r_{k,\beta}(y)=\max_{0\le d\le y}
                    \{u_\beta(y,d)-k(y)d\}.
\]

\begin{lemma}[An upper bound for joint design]
\label{lem:finite-joint-certificate}
If $h$ is continuous and convex on $[a,1]$ and
$h\ge K+r_{k,\beta}$, then
\begin{equation}
 W_H^{\mathrm{SD}}(\beta)\le\int h(v)\,H(\dd v).
 \label{eq:finite-joint-certificate}
\end{equation}
For a feasible policy with aggregate posterior law $G$, let
$d(y)$ be its conditional mean wedge and put $\mathsf E=dG$.
The policy attains the bound if and only if wedges are common
conditional on posterior mean, $d(y)$ maximizes
$u_\beta(y,d)-k(y)d$ almost surely, $h=K+r_{k,\beta}$ on $\supp G$,
$\int h\,\dd G=\int h\,\dd H$, and
$\mathsf E([t,1])=C_G(t)$ for $\dd k$-almost every $t$.
\end{lemma}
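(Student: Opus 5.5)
The plan is to work with the aggregate program of \Cref{prop:finite-joint-program} in its wedge form \eqref{eq:finite-joint-perspective}--\eqref{eq:finite-joint-wedge-constraints}, and to build a weak-duality certificate. The function $k$ acts as a price on screening tails and $h$ as a price on the Bayes-plausibility constraint. It suffices to bound $\int u_\beta(y,d(y))\,G(\dd y)$ for every feasible pair $(G,\mathsf E)$, because the joint value equals the supremum of the aggregate program. Conditional averaging of wedges at each posterior mean weakly raises the objective, so we may take wedges that are common conditional on $Y$.

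\textbf{Step 1 (pointwise bound).} For each $y$ and admissible $d\in[0,y]$, the definition of $r_{k,\beta}$ gives $u_\beta(y,d)\le r_{k,\beta}(y)+k(y)d$. Integrating against $G$ yields
\[\int u_\beta\,\dd G\le\int r_{k,\beta}\,\dd G+\int k\,\dd\mathsf E.\]

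\textbf{Step 2 (layer-cake for the wedge term).} Since $k$ is nondecreasing and continuous with $k(a)=0$, write $k(y)=\int_{[a,y]}\dd k(t)$ and apply Fubini:
\[\int k\,\dd\mathsf E=\int_{[a,1]}\mathsf E([t,1])\,\dd k(t)\le\int_{[a,1]}C_G(t)\,\dd k(t).\]
The inequality uses the screening constraints and $\dd k\ge0$. Applying Fubini in reverse, $\int C_G\,\dd k=\int\!\int_a^y k(t)\,\dd t\,G(\dd y)=\int K\,\dd G$, because $C_G(t)=\int(y-t)_+\,G(\dd y)$ and $\int_{[a,y]}(y-t)\,\dd k(t)=K(y)$ by integration by parts with $k(a)=0$. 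I need to take care with a possible atom of $\dd k$ at $a$ and with the endpoint convention at $t=y$. Since $k$ is continuous, $\dd k$ has no atoms, so these points are harmless.

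\textbf{Step 3 (convex order).} Combining Steps 1 and 2 gives
\[\int u_\beta\,\dd G\le\int(K+r_{k,\beta})\,\dd G\le\int h\,\dd G\le\int h\,\dd H.\]
The last inequality holds because $h$ is continuous and convex and $G\preceq_\cx H$. Taking the supremum proves \eqref{eq:finite-joint-certificate}.

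\textbf{Step 4 (equality conditions).} A policy attains the bound if and only if every inequality in the chain is an equality. Equality in the conditional averaging step requires common wedges given $Y$, by the strict concavity discussed before \eqref{eq:finite-joint-perspective}; at $y=0$ the wedge is forced to zero anyway. Equality in Step 1 holds $G$-a.s. exactly when $d(y)$ is a maximizer, because the integrand gap is nonnegative. Equality in Step 2 requires $\mathsf E([t,1])=C_G(t)$ for $\dd k$-a.e.\ $t$, because the gap is nonnegative and is integrated against $\dd k$. The middle equality requires $h=K+r_{k,\beta}$ $G$-a.s. Continuity of both sides upgrades this to all of $\supp G$; this needs $r_{k,\beta}$ continuous, which follows from Berge's theorem on the compact feasible set $0\le d\le y$. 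The final equality is $\int h\,\dd G=\int h\,\dd H$.

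For the converse, if all of these conditions hold, the chain collapses to equalities. The policy's welfare equals its aggregate integral by the representation in \eqref{eq:finite-joint-one-market}, so it attains the bound.

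\textbf{Main obstacle.} I expect the obstacle to be the measure-theoretic bookkeeping in Step 2: justifying Fubini with the half-open tail sets $[t,1]$ versus $(t,1]$, and handling a possible atom of $\mathsf E$ at points where $\dd k$ charges. Continuity of $k$ resolves this, since $\dd k$ then has no atoms and the two tail conventions differ only on a countable, $\dd k$-null set.
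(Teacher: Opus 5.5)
Your proof is correct and follows the paper's argument essentially step for step: conditional averaging of wedges via Jensen, the pointwise bound $u_\beta(y,d)\le r_{k,\beta}(y)+k(y)d$, the Tonelli layer-cake identity against $\dd k$ that converts the screening constraints into $\int K\,\dd G$ (using $k(a)=0$), and convex order, with the equality conditions read off from the nonnegative slacks. The only cosmetic difference is that you obtain continuity of $r_{k,\beta}$ from Berge's theorem on the correspondence $y\mapsto[0,y]$, whereas the paper substitutes $d=ys$ and maximizes over the fixed interval $s\in[0,1]$.
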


\begin{proof}
Jensen's inequality bounds the policy's welfare by
$\int u_\beta(y,d(y))G(\dd y)$, with equality exactly when
wedges are common. Pointwise maximization then gives
$u_\beta(y,d(y))\le r_{k,\beta}(y)+k(y)d(y)$.
Integrating the screening constraints against the nonnegative
Stieltjes measure $\dd k$ gives
\[
 \int k(y)\,\mathsf E(\dd y)
 =\int\mathsf E([t,1])\,\dd k(t)
 \le\int C_G(t)\,\dd k(t)
 =\int K(y)\,G(\dd y).
\]
The identities follow from Tonelli and
$\int_a^y(y-t)\,\dd k(t)=K(y)$, using $k(a)=0$.
Thus
\[
 \int u_\beta(y,d(y))\,G(\dd y)
 \le\int(K+r_{k,\beta})\,\dd G
 \le\int h\,\dd G
 \le\int h\,\dd H.
\]
The last inequality uses convex order.
All slacks are nonnegative, so the stated conditions are
necessary and sufficient for equality.
The contact condition holds on the entire support because
$r_{k,\beta}$ is continuous: writing $d=ys$ expresses it as
a maximum over the fixed compact interval $s\in[0,1]$.
\end{proof}

\subsection{Sharpness of the segment bound}
\label{app:finite-joint-sharpness}

We construct a finite family of segments attaining a common upper
bound, show that these are the only segments that attain it, and
choose a prior that requires $N$ of them.
Fix $N\ge2$, put $n=N-1$, and take $c(x)=x^2/2$,
$\bar Q>1$, and $\beta=0$. Set
\[
 \chi=\arctan3,\qquad
 \gamma_3=\frac{e^{-\chi}}{\sqrt{10}},\qquad
 b=\theta=\frac{2\gamma_3}{1+2\gamma_3},\qquad B=3b.
\]
Since $(1-\theta)\gamma_3=b/2$, the converse in
\Cref{lem:mult-quadratic-smooth-tail} gives a regular distribution
$F$ with bottom posterior $B$, bottom virtual value $B/2>\theta$,
and a top atom of mass $e^{-2\chi}>0$.
Its posterior and virtual-value profiles increase continuously
to one. It has no atom at $B$, and
$b<B<1$, with $\theta=(B+b)/4$.

Choose $0<v_1<\cdots<v_n=b$ and $v_N=1$, and put $a=v_1$.
Write $z(y)$ for the virtual value under $F$ and $d(y)=y-z(y)$.
The separating equations and the positive density of $F$ give
\[
 z'(y)=\frac{y-\theta}{2d(y)},\qquad
 d(y)F(\dd y)=F((y,1])\,\dd y
 \quad(B<y<1).
\]
Set $z(1)=1$ and $d(1)=0$.
Define $k=0$ on $[a,B]$ and $k(y)=2z(y)-y$ on $[B,1]$.
It is continuous, with $k(B)=0$, $k(1)=1$, and
\begin{equation}
 k'(y)=\frac{z(y)-\theta}{y-z(y)}>0
 \qquad(B<y<1).
 \label{eq:finite-joint-sharpness-k}
\end{equation}
Thus $k$ is nondecreasing and $0\le k(y)\le y$.
If $M$ is the mean of $F$, then $B<M<1$ and $F$ is feasible
under the binary prior
$H_F=(1-\alpha_F)\delta_b+\alpha_F\delta_1$, where
$\alpha_F=(M-b)/(1-b)$.

\paragraph{Segments attaining the bound.}
Let $h$ interpolate $y^2/4$ linearly at $v_1,\ldots,v_n$,
and set $h(y)=b^2/4+\theta(y-b)$ for $b\le y\le1$.
For $N=2$, the latter formula defines $h$ on the whole domain.
Its slopes $(v_i+v_{i+1})/4$, followed by $\theta=(B+b)/4$,
strictly increase at each interior primitive value, so $h$ is convex.

Completing the square in the bound of
\Cref{lem:finite-joint-certificate} gives
$r_{k,0}(y)=(y-k(y))^2/4$, with unique maximizing wedge
$d^*(y)=(y-k(y))/2$.
Below $B$, $K+r_{k,0}=y^2/4$, and $h$ is its chord between
successive points of $\{v_1,\ldots,v_n,B\}$.
Above $B$, \eqref{eq:finite-joint-sharpness-k} gives
$(K+r_{k,0})'=z-dk'=\theta$.
The levels agree at $B$ because
$B^2/4=b^2/4+\theta(B-b)$.
Hence $h\ge K+r_{k,0}$, with contact set exactly
$\{v_1,\ldots,v_n\}\cup[B,1]$.
The upper bound applies under every prior on these primitive values.

For each subset $S\subseteq\{1,\ldots,n\}$, start with $F$
and add the atoms indexed by $S$ in descending order.
If the current upper distribution $L$ begins at $w>v_i$,
replace it by
\begin{equation}
 (1-r)\delta_{v_i}+rL,\qquad
 r=\frac{v_i}{2w-v_i}\in(0,1).
 \label{eq:finite-joint-sharpness-splice}
\end{equation}
The new atom's revenue chord has virtual value
$(v_i-rw)/(1-r)=v_i/2<w/2$, where $w/2$ is the bottom
virtual value of $L$. The chord therefore joins the upper
revenue curve concavely. Each added atom receives quality $v_i/2$,
and the part with distribution $F$ retains its original allocation.
Denote the resulting law by $G_S$, including $G_\varnothing=F$.
Write $G_S=\sum_{i\in S}m_i\delta_{v_i}+sF$ and set
$H_S=\sum_{i\in S}m_i\delta_{v_i}+sH_F$.
Then $G_S\preceq_\cx H_S$: its lower atoms reveal their states,
while $F$ is implemented under $H_F$.

These segments attain every step of the upper bound.
Their wedges equal $d^*$ and their support lies in the contact set.
For $B<t<1$, integrating
$d(y)F(\dd y)=F((y,1])\,\dd y$ gives
$\int_{[t,1]}d(y)F(\dd y)=C_F(t)$.
The same equality holds for the upper part $sF$, and these
are the only thresholds charged by $\dd k$.
Finally, $h$ is affine on $[b,1]$, so
$\int h\,\dd G_S=\int h\,\dd H_S$.
Every mixture of these segments is therefore jointly optimal
under the corresponding mixture of their conditional priors.

\paragraph{All segments attaining the bound.}
Consider any policy attaining the bound. The equality conditions
restrict every segment's support to
$\{v_1,\ldots,v_n\}\cup[B,1]$ and fix virtual values at $v_i/2$
on the lower atoms and at $z(y)$ on $[B,1]$.
The aggregate screening constraint binds for $\dd k$-almost
every threshold. Each segment's slack is nonnegative, so it
also vanishes in every positive-probability segment.
Since $k'>0$ on $(B,1)$, equality holds almost everywhere there.
The inclusive wedge tail is left continuous and the call function
is continuous. Approaching a threshold from below therefore
extends equality to every $t\in(B,1)$.

Fix one such segment, with posterior law $G$ and wedge measure
$\mathsf E_G=d^*G$.
Differentiating its tail equality as measures gives
$d(y)G(\dd y)=G((y,1])\,\dd y$ on $(B,1)$.
Since $d$ is locally positive and continuous, $G$ has a density
there and its survival function satisfies $S_G'=-S_G/d$.
The survival function of $F$ is positive and solves the same
equation. Their ratio is constant, so
$G|_{(B,1]}=sF$ for some $s\ge0$; the limit at one also
identifies the top atom.
If $s=0$, the segment has finite support in
$\{v_1,\ldots,v_n,B\}$. Its highest posterior then has a
positive prescribed wedge, violating the screening constraint
at that value. Thus $s>0$.
Also, letting $t\downarrow B$ gives
$\mathsf E_G((B,1])=C_G(B)$.
The constraint at $B$ forces $\mathsf E_G(\{B\})=0$,
so the prescribed wedge $B/2>0$ rules out an atom there.

It remains to determine the lower atom masses.
At an included atom $v_i$, let $p$ be its mass,
$\sigma$ the mass above it, and $w$ the bottom of that upper
support. Its quality $x_i=v_i/2$ is strictly separated from
the neighboring qualities. Small changes of either sign are
therefore feasible for the seller.
Under revenue-maximizing IC and IR transfers, increasing $x_i$ by
$\varepsilon$ raises this atom's transfer by $v_i\varepsilon$
and lowers every upper transfer by $(w-v_i)\varepsilon$.
Indeed, the minimal-utility extension keeps quality at $x_i$
across the empty interval $[v_i,w)$.
This also applies when $w=B$, even though there is no atom at $B$.
Seller optimality gives
$p(v_i-x_i)-\sigma(w-v_i)=0$, or
\[
 v_i-\frac{\sigma}{p}(w-v_i)=\frac{v_i}{2},
 \qquad
 \frac{\sigma}{p+\sigma}=\frac{v_i}{2w-v_i}.
\]
This is precisely \eqref{eq:finite-joint-sharpness-splice}.
Starting with $sF$ and working downward, these ratios and
normalization identify the segment as one of the $G_S$.

Equality also fixes its conditional primitive prior.
The function $h$ is affine plus positive multiples of the hinges
at $v_2,\ldots,v_n$. Equality in its convex-order bound therefore
forces the corresponding aggregate information constraints to bind.
Jensen's inequality has nonnegative slack given $(Y,K)$, so
the revelation argument in Appendix~\ref{app:finite-joint-atoms}
applies in every segment. Each lower atom reveals its state;
the lowest one does so by the support bound.
For a posterior mean in $[B,1]$, equality at the threshold $b$
excludes primitive values below $b$. Only $b$ and one remain,
with probabilities determined by the posterior mean.
When $N=2$, this follows directly from the primitive support.
Thus the conditional prior of $G_S$ must be $H_S$.

\paragraph{A prior requiring $N$ segments.}
Consider $H_F$ and the $n$ priors obtained by adding one lower atom:
\[
 H_i=(1-r_i)\delta_{v_i}+r_iH_F,\qquad
 r_i=\frac{v_i}{2B-v_i},\qquad 1\le i\le n.
\]
These $N$ priors are affinely independent.
Their differences from $H_F$ are positive multiples of
$\delta_{v_i}-H_F$. In a vanishing linear combination of
these vectors, the coordinate at one first forces the sum of
coefficients to vanish, since $H_F(\{1\})>0$.
The lower coordinates then force every coefficient to vanish.
Their convex hull is an $(N-1)$-dimensional simplex.

There are only $2^n$ priors $H_S$. The convex hull of any
$N-1$ or fewer has dimension at most $N-2$, and the finite
union of these hulls cannot cover the simplex's relative interior.
Choose $H$ in that interior and outside the union.
All its primitive probabilities are positive. Its mixture of
$H_F,H_1,\ldots,H_n$ is implemented by the corresponding
$N$ segments and attains the upper bound.
Every optimal policy must attain the same bound, so its segments
must be among the $G_S$, with conditional priors $H_S$.
Fewer than $N$ distinct segments would put $H$ in one of
the excluded convex hulls. Thus every optimum requires at
least $N$ segments, also when $N=2$.
Together with \Cref{lem:finite-joint-implementation}, this proves
\Cref{thm:finite-joint-design}(iii).

\subsection{The joint Pareto frontier}
\label{app:finite-joint-frontier}

Let $\mathcal P_H^{\mathrm{SD}}$ denote the set of consumer-surplus
and profit pairs attainable through joint private learning and
state-dependent segmentation.

\begin{lemma}[The joint payoff set]
\label{lem:finite-joint-frontier}
The set $\mathcal P_H^{\mathrm{SD}}$ is compact and convex.
Every Pareto-efficient pair maximizes $\CS+\beta\Pi$ for some
$\beta\in[0,1]$. Conversely, every maximizer for
$0<\beta\le1$ is Pareto efficient. At $\beta=0$, a maximizer
is Pareto efficient if and only if it has the highest profit
among all consumer-surplus maximizers.
\end{lemma}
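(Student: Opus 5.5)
Convexity and compactness come from the structure of joint design; efficiency then follows from supporting-hyperplane arguments, with the weight range cut down to $[0,1]$ by \Cref{prop:sd-full-extraction}.

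\emph{Step 1: compactness and convexity.} \Cref{lem:finite-joint-implementation} already shows that $\mathcal P_H^{\mathrm{SD}}$ is compact and convex. It is the projection of a compact slice of $\operatorname{co}\mathcal A$, and public randomization over segment collections averages both the conditional priors and the payoffs. I take this as given.

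\emph{Step 2: every efficient pair is supported with $\beta\in[0,1]$.} Fix an efficient pair $(C^*,P^*)$. Separate it from $\{(C,P):C\ge C^*,P\ge P^*\}\setminus\{(C^*,P^*)\}$, which is disjoint from the convex set $\mathcal P_H^{\mathrm{SD}}$. This yields a nonzero normal $(\lambda_C,\lambda_P)\ge0$ supporting $\mathcal P_H^{\mathrm{SD}}$ at $(C^*,P^*)$.

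If $\lambda_C>0$, normalize so that the pair maximizes $C+\beta P$ with $\beta=\lambda_P/\lambda_C\ge0$. If $\beta>1$, \Cref{prop:sd-full-extraction} says every maximizer is the full-extraction pair $(0,\E_H[\pi(V)])$. The same corollary shows this pair also maximizes at $\beta=1$, so the weight can be taken to be $1$.

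If $\lambda_C=0$, the pair maximizes profit. Profit is bounded by $\TS\le\E_H[\pi(V)]$, and full public revelation attains that bound. So $P^*=\E_H[\pi(V)]$, which forces $\TS=P^*$ and hence $C^*=0$. The pair is again the full-extraction pair, supported at $\beta=1$.

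\emph{Step 3: maximizers with $0<\beta\le1$ are efficient.} If a maximizer at such a $\beta$ were Pareto dominated, the dominating pair would strictly raise $C+\beta P$, because both coefficients are positive. That contradicts maximality.

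\emph{Step 4: the endpoint $\beta=0$.} Let $\mathcal M_0$ be the set of consumer-surplus maximizers. It is compact, so a maximizer of profit within $\mathcal M_0$ exists.

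Suppose a $\beta=0$ maximizer is efficient. Any other element of $\mathcal M_0$ has the same $C$. If it had higher profit, it would dominate, so the efficient maximizer must have the highest profit in $\mathcal M_0$.

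Conversely, take the highest-profit element of $\mathcal M_0$. Any pair dominating it has $C\ge C_{\max}$, so it has $C=C_{\max}$ and lies in $\mathcal M_0$. Its profit is then at most the maximal profit, so it cannot strictly improve anything.

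The main obstacle is Step 2, the $\lambda_C=0$ and $\beta>1$ cases. I would lean entirely on \Cref{prop:sd-full-extraction} to identify the unique full-extraction pair and to show that it is already optimal at $\beta=1$. The remaining steps are routine convex geometry.
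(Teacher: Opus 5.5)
Your proof is correct and follows essentially the same route as the paper: compactness and convexity from \Cref{lem:finite-joint-implementation}, a supporting line at each efficient pair, reduction of normals weighting profit more than consumer surplus to the full-extraction pair $(0,\E_H[\pi(V)])$ (supported at $\beta=1$), positivity of both coefficients for $0<\beta\le1$, and the tie-breaking argument at $\beta=0$. The only difference is how that reduction is done. The paper handles every normal with $\gamma>\alpha$ in one step, via $\alpha C'+\gamma P'=\gamma(C'+P')-(\gamma-\alpha)C'\le\gamma\E_H[\pi(V)]$. You instead split into the case $\lambda_C>0$, citing \Cref{prop:sd-full-extraction}, and the case $\lambda_C=0$, using the same first-best bound directly.
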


\begin{proof}
Compactness and convexity follow from the moment-payoff
construction in the proof of
\Cref{lem:finite-joint-implementation}.
Let $(C,P)$ be Pareto efficient. Separating
$\mathcal P_H^{\mathrm{SD}}$ from
$(C,P)+\operatorname{int}\R_+^2$ gives a nonzero
$(\alpha,\gamma)\in\R_+^2$ such that $(C,P)$ maximizes
$\alpha C'+\gamma P'$ over the attainable set.
If $\gamma\le\alpha$, then $\alpha>0$, and normalization
gives $\beta=\gamma/\alpha\in[0,1]$.
If $\gamma>\alpha$, put $\overline P=\E_H[\pi(V)]$.
Every attainable pair satisfies $C'\ge0$ and
$C'+P'\le\overline P$, so
$\alpha C'+\gamma P'
 =\gamma(C'+P')-(\gamma-\alpha)C'
 \le\gamma\overline P$.
Equality holds only at $(0,\overline P)$, which full public
revelation attains. Thus $(C,P)=(0,\overline P)$.
This pair also maximizes $\CS+\Pi$, so $\beta=1$ suffices. For $0<\beta\le1$, both objective coefficients are positive,
so domination would contradict optimality. At $\beta=0$, a consumer surplus maximizing policy is Pareto efficient if and only if its profit is maximal among all consumer surplus maximizing policies. Compactness ensures
that this maximum is attained.
\end{proof}

\end{document}